\newcommand{\redefineacronyms}{%
    \glsresetall
    \glsunset{pki}
    \glsunset{varint}
}
\newcommand{\bcastname}{Draft\xspace}
\newcommand{\assignname}{Dibs\xspace}
\newcommand{\sysname}{Carbon\xspace}
\newcommand{\eg}{e.g.\xspace}
\newcommand{\ie}{i.e.\xspace}
\newcommand{\para}[1]{\vspace{.5\baselineskip}\noindent\textbf{#1.}}
\definecolor{mygreen}{rgb}{0.254,0.572,0.294}
\definecolor{mygray}{rgb}{0.5,0.5,0.5}
\definecolor{myorange}{rgb}{1,0.35,0}
\definecolor{mymauve}{rgb}{0.58,0,0.82}
\definecolor{myblue}{rgb}{0.2,0.4,0.6}
\definecolor{rakos4orange}{RGB}{255,165,0}
\definecolor{rakos4blue}{RGB}{14,48,173}
\definecolor{rakos4lblue}{RGB}{92,172,238}
\definecolor{rakos4dgray}{RGB}{77,77,77}
\definecolor{plainred}{RGB}{211,63,63}
\definecolor{plainorange}{RGB}{221,105,41}
\lstdefinelanguage{Golang}%
  {morekeywords=[1]{package,import,struct,defer,panic,%
     recover,select,var,const,iota, class},%
   morekeywords=[2]{string,uint,uint8,uint16,uint32,uint64,int,int8,int16,%
     int32,int64,bool,float32,float64,complex64,complex128,byte,rune,uintptr,%
     error,interface,node},%
   morekeywords=[3]{map,slice,make,new,nil,len,cap,copy,close,%
     delete,append,real,imag,complex,chan,},%
   morekeywords=[4]{break,continue,goto,switch,case,fallthrough,%
    default,},%
   morekeywords=[5]{Println,Printf,Error,Send},%
   sensitive=true,%
   morecomment=[l]{//},%
   morecomment=[s]{/*}{*/},%
   morestring=[b]",%
   morestring=[s]{`}{`},%
   }
\small\color{mygray}\textnormal,
\scriptsize\color{mygray}, 
\newcommand{\code}[1]{\lstinline{#1}}
\theoremstyle{definition}
\newtheorem{theorem}{Theorem}
\theoremstyle{definition}
\newtheorem{lemma}{Lemma}
\theoremstyle{definition}
\theoremstyle{definition}
\newtheorem{invariant}{Invariant}
\theoremstyle{definition}
\newtheorem{assumption}{Assumption}
\theoremstyle{definition}
\newtheorem{definition}{Definition}
\theoremstyle{definition}
\newtheorem{corollary}{Corollary}
\renewcommand{\Cref}[1]{\cref{#1}} 
\renewcommand{\autoref}[1]{\cref{#1}} 
\newcommand{\secref}{$\S\mkern-4mu$}
\crefname{section}{\secref}{$\S\S\mkern-4mu$}
\crefname{figure}{Fig.}{Figs.}
\crefname{table}{Tab.}{Tabs.}
\crefname{appendix}{Appx.}{Apps.}
\crefname{assumption}{Assumption}{Assumptions}
\crefname{invariant}{Invariant}{Invariants}
\crefname{observation}{Observation}{Observations}
\crefname{lstlisting}{Listing}{Listings}
\crefname{code}{Line}{Lines}
\newcommand{\step}[1]{\##1}
\newcommand{\formatmsg}[1]{\texttt{#1}\xspace}
\date{} 
\title{Carbon: Scaling Trusted Payments with Untrusted~Machines}
\begin{document}

\author{%
    Martina~Camaioni, Rachid~Guerraoui, Jovan~Komatovic, Matteo~Monti, Pierre-Louis~Roman, Manuel~Vidigueira, and Gauthier~Voron%
    \IEEEcompsocitemizethanks{%
        \IEEEcompsocthanksitem The authors are with Ecole Polytechnique F\'{e}d\'{e}rale de Lausanne (EPFL), Lausanne CH-1015, Switzerland. E-mails: 
        martina.camaioni@epfl.ch, rachid.guerraoui@epfl.ch, jovan.komatovic@epfl.ch, matteo.monti@epfl.ch, pierre-louis.roman@epfl.ch, manuel.ribeirovidigueira@epfl.ch, gauthier.voron@epfl.ch.
    }
    \thanks{This paper extends the one with DOI \href{https://doi.org/10.1109/TDSC.2024.3428617}{10.1109/TDSC.2024.3428617}, appearing at TDSC 2024, with formal definitions, pseudocode, and proofs in appendices.}
}

\IEEEtitleabstractindextext{
    \redefineacronyms 
    \begin{abstract}
    	This paper introduces \sysname, a high-throughput system enabling asynchronous (safe) and consensus-free (efficient) payments and votes within a dynamic set of clients.
\sysname is operated by a dynamic set of validators that may be reconfigured asynchronously, offering its clients eclipse resistance as well as lightweight bootstrap.
\sysname offers clients the ability to select validators by voting them in and out of the system thanks to its novel asynchronous and stake-less voting mechanism.
\sysname relies on an asynchronous and deterministic implementation of \acl{brb} that uniquely leverages a permissionless set of untrusted servers, brokers, to slash the cost of client authentication inherent to Byzantine fault tolerant systems.
\sysname is able to sustain a throughput of one million payments per second in a geo-distributed environment, outperforming the state of the art by three orders of magnitude with equivalent latencies.

    \end{abstract}
    \redefineacronyms 

    \begin{IEEEkeywords}
        Distributed system, Payment system, Cryptocurrency, Asynchrony, Reliable broadcast
    \end{IEEEkeywords}
}

\maketitle
\IEEEdisplaynontitleabstractindextext

\section{Introduction}
\label{sec:introduction}

Since their emergence with Bitcoin~\cite{bitcoin}, cryptocurrencies have received tremendous attention from academia~\cite{bitcoin-ng_nsdi16,algorand_sosp17,teechain_sosp19,prism_ccs19,red_belly_sp21} and industry~\cite{smr_libra_report19,stellar_sosp19,quorum}.
These payment systems are managed by validators that process operations issued by clients which may be user-driven.
A vast body of work has been relying on consensus among validators~\cite{pbft_osdi99,hbbft_ccs16,hotstuff_podc19,dagrider_podc21,aardvark_nsdi09,raynal_podc14} leading to payments being totally-ordered across all clients within a single log~\cite{algorand_sosp17,prism_ccs19,stellar_sosp19,quorum,omniledger_sp18,dfinity-icc-podc22}.

However, payments do not require consensus~\cite{consensus_number_crypto_dn21}.
Instead, it suffices to use reliable broadcast~\cite{bracha_brb_87,cachin_verifiable_bcast_crypto01,scalable_brb_disc19,partially-connected_brb_icdcs21} which is both more stable than consensus, as it can run deterministically in an asynchronous environment unlike consensus~\cite{flp85}, and more efficient, as it reduces ordering constraints on client operations hence enables parallelism.
Recent approaches have used reliable (or consistent) broadcast as alternatives to consensus to boost the throughput of these systems~\cite{astro_dsn20,fastpay_aft20,brick_fc21}.

\para{Insight}
Regardless of their theoretical efficiency, all these approaches still only exploit a fraction of the available resources in the system.
A key observation underlying our work is that a crucial strength of permissionless cryptocurrencies---the secure use of a large amount of untrusted resources to operate a secured system---has not been embraced by permissioned systems.
In this paper, we show how to leverage untrusted resources to scale the performance of trusted payment operations.
We propose a new hybrid model composed of  (a) clients, (b) validators---trusted, permissioned, stateful servers---and (c) \emph{brokers}---trustless, permissionless, stateless servers whose sole purpose is to assist validators.

Untrusted servers are typically used inefficiently in permissionless cryptocurrencies that makes them compete against one another via mining~\cite{bitcoin,bitcoin-ng_nsdi16,algorand_sosp17,wood_ethereum_2014,ouroboros_crypto17} to ensure Sybil resilience.
Instead, brokers in our system work in symbiosis with validators and are not forced to wastefully compete.
This allows us to efficiently tap into the vast pool of untrusted resources on the Internet to boost the performance of the proposed permissioned payment system.

\para{Goal}
We aim to implement a system that supports multiple types of operations, notably payments and votes.
Our system is composed of (1) a set of clients that issue operations, and of (2) the validators that commit said operations.
Each client operation is associated with (1) a client in whose name the operation was issued, and (2) a sequence number which, together with its client, uniquely identifies the operation.
The guarantees that such a system must ensure are:
\begin{itemize}
    \item \emph{Liveness:} If a correct client that never leaves the system issues an operation, that operation is eventually committed.
    
    \item \emph{Safety:} No two operations are committed for the same pair of client and sequence number.

    \item \emph{Integrity:} If an operation associated with a correct client is committed, then the client previously issued that operation.
\end{itemize}
For brevity, we relegate the formal definition of the problem solved in this paper to \cref{appx:problem_definition}.

\para{Carbon}
We propose the \sysname system that ensures the aforementioned guarantees for payments and votes even under asynchrony.
\sysname is the first system using reliable broadcast for efficiency, untrusted resources for performance, and user-driven reconfiguration for dynamic deployments.

\sysname is operated by a dynamic set of validators that may be reconfigured asynchronously~\cite{aguilera2011dynamic,alchieri2018efficient,async_byz_reconf_disc20,guerraoui2020dynamic} to allow for secure and long-lasting deployments.
Each reconfiguration is subject to client voting.
This voting procedure allows clients to both approve the joining of beneficial validators and initiate the removal of misbehaving validators, thus implementing user-driven accountability~\cite{peerreview_sosp07}.
Voting in \sysname is not based on staking~\cite{snow-white-fc19,async-pos-sss21} and thus does not require users to lock their funds away for that purpose.
Instead, \sysname's asynchronous voting uses clients' balance to determine their voting power in a Sybil-resilient fashion.

\sysname enables any client to register into the system to issue payment and vote operations.
Thanks to the asynchronous reconfiguration of validators, clients in \sysname can bootstrap their knowledge of the system in a secured manner that thwarts eclipse attacks~\cite{routing_attacks_cacm21,heilman_eclipse_bitcoin_usenixsec15,erebus_eclipse_sp20}.
Once bootstrapped, clients can register using a fitting Sybil-resilient scheme~\cite{dwork_pricing_crypto92,back_hashcash_2002,rem_mining_usenixsec_17,popersonhood_eurospw17} to obtain an account identifier used to reduce the size of the messages that carry client operations.
For scalability, \sysname ensures that each account incurs a storage footprint on validators that is bounded by a constant regardless of the number of operations related to that account.
Similarly, \sysname enables lightweight clients that only require minimal storage to function.

Brokers in \sysname act as middlemen between clients and validators and, hence, shield validators from clients' spams and DDoS attacks.
Additionally, brokers serve as caches for \sysname's state and for certificates emitted by validators---upon client registrations, payments and votes---to help the latter free resources.
Brokers also store the validators' reconfiguration information to help clients bootstrap.

Clients in \sysname use \bcastname, an asynchronous and deterministic \acl{brb}~\cite{oracular-brb-disc22}, to send their authenticated operations to the validators.
\bcastname boosts throughput by minimizing the processing time of client authentication on validators.
In effect, \bcastname implements an aggregate signature scheme~\cite{agg-bls-sig-eurocrypt03}, orchestrated by brokers, that transforms many client signatures into a single rapidly-verifiable signature.
This aggregation slashes both the CPU footprint and network bandwidth footprint of client authentication on validators.
\bcastname is further described in \cref{sec:bcast}.

\sysname only requires simple financial incentives to encourage useful behaviors.
In particular, the high throughput of \sysname simplifies the design of the payment fee policy as payment congestion is unlikely to happen.
Thanks to its throughput and simple incentive requirements, \sysname is a prime candidate for \acp{cbdc}~\cite{design_cbdc_nber_2020,chaum_cbdc_snb_2021} and global payment systems.

\para{Evaluation}
We have implemented \sysname in 26,000 Rust LOCs and evaluated it on a global AWS deployment.
Our evaluation focuses on
(1) exploring how \sysname performs under various workloads,
(2) dissecting the benefits of \sysname's internal mechanisms, and
(3) assessing the efficiency of \bcastname's aggregation scheme.

\sysname sustains a 1M~tx/s throughput with provable delivery latencies around 10~s.
\sysname's throughput outperforms that of the evaluated state of the art, namely Algorand, Quorum, and Fastpay, by several orders of magnitude.
We further prove the benefits of signature aggregation and parallel memory accesses on the throughput of \sysname.
Additionally, we show that \sysname's performance scales almost perfectly with the number of used validators, unlike classic consensus.
Finally, we demonstrate the efficiency of \bcastname's aggregate signature scheme compared to Ed25519~\cite{curve25519_pkc06} and BLS12-381 multi-signatures~\cite{bls_jcrypt04,bdn-asiacrypt18} schemes.

As a comparison, to the best of our knowledge, the only approach that reaches numbers similar to \sysname is the RLN system~\cite{rln} by Amazon AWS and SETL.
RLN aims to be a worldwide interbank exchange platform for \acp{cbdc} and achieves 1M~tx/s in a simulated environment.
Compared to \sysname, RLN only supports \emph{unsigned} payments.
We show with \sysname how to concretize the goal of RLN and secure it in the face of asynchrony and Byzantine failures.

\para{Roadmap}
We discuss background and related work in \cref{sec:background-related}.
We overview the model and the \bcastname and \assignname protocols used by \sysname in \cref{sec:model-bcast}.
We detail our main contribution \sysname in \cref{sec:currency}.
We overview our implementation and evaluation results in \cref{sec:evaluation}, then conclude in \cref{sec:conclusion}.
The appendices provide pseudocode and proofs of correctness for \sysname's mechanisms presented in \cref{sec:currency}.

\section{Background \& Related Work}
\label{sec:background-related}

This section presents a refresher of \sysname's underlying cryptographic primitives and discusses some related work.

\para{BLS multi-signatures}\label{sec:background-multisig}
\sysname relies on asymmetric cryptography for message authentication, particularly on BLS multi-signatures~\cite{bls_jcrypt04,bdn-asiacrypt18}.
In brief: a set of BLS signatures that sign the same statement can be quickly aggregated in a constant-sized multi-signature, which can be checked against the constant-sized aggregation of all signing public keys.
Once aggregated, a BLS multi-signature can be verified in constant time regardless of the number of signatures it aggregates~\cite{bdn-asiacrypt18}.
Whenever the set of signers of a statement is not fixed, metadata is required to identify the set of public keys signing the message (unlike threshold signatures).
The permissioned nature of \sysname allows us to do so efficiently: to each validator-generated multi-signature we attach an $n$-bit bitmask to identify which of the $n$ validators have contributed to the signature.

\para{Authenticated data structures}\label{sec:background-merkle-tree}
\sysname extensively uses authenticated data structures, namely \acp{mt} and \acp{mpt}. 

An \ac{mt} organizes a set of items in a binary hash tree.
The inclusion of an object in an \ac{mt} can be publicly proven against the corresponding \ac{mr} using a logarithmic-sized \ac{mp}.

\Acp{mpt} are a variant of \acp{mt} that also enables proofs of exclusion.
In essence: an \ac{mpt} stores elements only on its leaves; an element in an \ac{mpt} can only appear along the path determined by the bit-representation of its hash.
As such, the exclusion of an element can be proven by an \ac{mp} showing the absence of the element from the only leaf that could hold it.
Unlike \acp{mt}, \acp{mpt} are generally unbalanced, but can be optimized to minimize depth.

\begin{figure*}[tb]
    \centering
    \includegraphics[width=.9\textwidth]{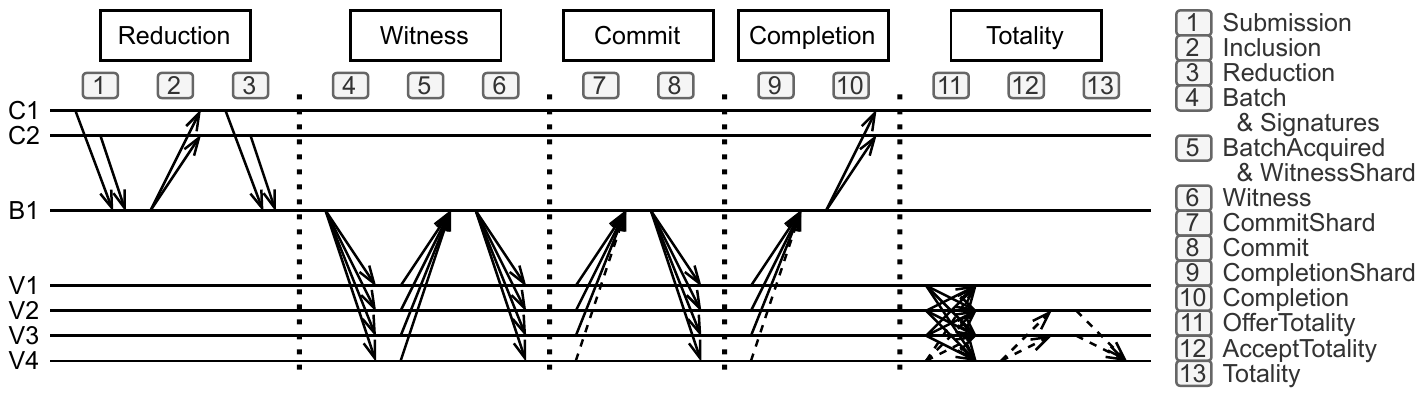}
    \caption{\textbf{Overview of the \bcastname \acs{brb} protocol~\cite{oracular-brb-disc22} totaling 13 steps (\step{1}--\step{13}) between 2 clients (C1--C2), 1 broker (B1) and 4 validators (V1--V4).}
    Clients broadcast a \sysname operation in \step{1} and receive confirmation of delivery in \step{10}.
    Compared to the original algorithm, \sysname merges two steps in both \step{4} and \step{5} for efficiency.
    Arrows of optional steps are dashed.}
    \label{fig:bcast-phases}
\end{figure*}

\para{Related work}\label{sec:related-work}
In the seminal Bitcoin paper~\cite{bitcoin}, Nakamoto defined the asset-transfer problem---the core of a payment system---and proposed a solution that relies on agreement of approved transactions secured by \ac{pow}.
Inspired by Bitcoin's approach, most cryptocurrencies rely on consensus~\cite{cachin_keynote_disc17}, whose throughput scales poorly~\cite{algorand_sosp17,hbbft_ccs16,hotstuff_podc19,ouroboros_crypto17}.
Recent approaches have shown that consensus throughput can be improved, to an extent, by running instances in parallel~\cite{red_belly_sp21,omniledger_sp18,ohie_sp20} or in a hierarchy~\cite{stellar_sosp19,steward_tdsc10}.

However, consensus is not necessary to perform payments~\cite{consensus_number_crypto_dn21}: partial order suffices.
This theoretical result has inspired a new line of research: \emph{consensus-less} payment systems, relying instead on \ac{brb}~\cite{bracha_brb_87}.
Intuitively, this communication primitive ensures that either 
(1) all processes in the system deliver the same message, or 
(2) no process delivers any message.
Crucially, reliable broadcast can be implemented deterministically in an asynchronous setting~\cite{cachin2011introduction}, unlike consensus~\cite{flp85}.

The first proposed consensus-less payment system was Astro~\cite{astro_dsn20}.
In contrast to \sysname, Astro assumes a static set of validators (dynamicity aspects are only sketched~\cite{astro-report}).
Moreover, each client in Astro has a single corresponding representative through which it issues payments; there is no guarantee that a client's payment will ever be processed if its representative is faulty.
Then, Pastro~\cite{kuznetsov2021permissionless} introduced an approach that enriches that of Astro by allowing reconfiguration of validators.
Compared to \sysname, Pastro allows a server to be promoted to a validator even if the server does not enjoy ``sufficiently strong'' support of clients: balance-based voting mechanism is not incorporated into Pastro.

FastPay~\cite{fastpay_aft20} and Brick~\cite{brick_fc21} rely on an even weaker primitive than \ac{brb}, \ie consistent broadcast, to disseminate payments asynchronously in a payment system.
However, both approaches are built atop an already-existing payment system, somehow leaving out of their scope the issues of validator replication and validator set dynamicity.

\para{Reconfiguration}
Reconfiguration is an important feature of long-lived distributed systems as it allows correct components to take the place of faulty ones.
Concretely, reconfiguration of synchronous or partially-synchronous~\cite{dls88} payment systems can be deterministically implemented using consensus~\cite{pbft_osdi99,bft-smart_dsn14,lamport2009vertical}: all validators simply agree on the next set of validators.
This deterministic approach is not possible in asynchrony~\cite{flp85}.
Instead, \sysname relies on asynchronous lattice agreement~\cite{attiya1995atomic,faleiro2012generalized,di2020byzantine,zheng_bla_async_opodis20} for a weaker, eventual notion of agreement for reconfiguration~\cite{aguilera2011dynamic,alchieri2018efficient,async_byz_reconf_disc20,guerraoui2020dynamic}.

\section{Model \& Broadcast}
\label{sec:model-bcast}

This section presents the model of \sysname as well as the \bcastname and \assignname protocols~\cite{oracular-brb-disc22} that it uses.

\subsection{Model}
\label{sec:model}

We consider a system of asynchronous processes, each proceeding at its own arbitrary speed.
A process is either a \emph{client} or a \emph{server}.
Those processes that execute the protocol assigned to them are correct, those that do not are Byzantine and fail arbitrarily~\cite{byz_generals_toplas82}.
Processes communicate over an asynchronous network with unbounded (but finite) message delays.
We assume that messages between correct processes which never leave the system are eventually received to ensure forward progress, \ie liveness.
\sysname is asynchronous and dynamic, hence it cannot ensure a stronger guarantee which states that messages between any correct processes (even those which might eventually leave) are received.

The set of clients is dynamic.
We make no assumption on the number of faulty clients.
We assume the existence of a Sybil-resilient mechanism for clients as detailed in \cref{sec:client-registration}.
For voting safety, we assume that more than 50\% of the total money supply is held by correct clients as explained in \cref{sec:currency-votes}.

Servers are comprised of \emph{validators}, that process operations in \sysname, and \emph{brokers}, that assist validators.
Both sets of validators and brokers are dynamic.
Since only validators are critical to the safety of the system, we only detail the mechanism by which servers become validators, in \cref{sec:reconfiguration-validator}, and we do not restrict how servers become brokers (examples are given in \cref{sec:currency-incentives}).
As in other dynamic systems~\cite{async_byz_reconf_disc20,guerraoui2020dynamic}, the set of validators evolves in \emph{views}, with a fixed set of servers per view.
We assume that more than two-thirds of validators are correct in any given view, as required for safety~\cite{byz_generals_toplas82}.
We assume that at least one broker is correct, as required for liveness.
We assume a computationally bounded adversary unable to violate the correctness of cryptographic primitives.

\subsection{\bcastname for Byzantine Reliable Broadcast}
\label{sec:bcast}

A payment system can be safely replicated without total ordering, \ie without consensus~\cite{consensus_number_crypto_dn21}.
Instead, to safely and efficiently distribute payments among validators, it suffices to pair a \acf{brb}~\cite{bracha_brb_87} with \emph{source ordering} whereby each client orders its own messages using a simple sequence number.
\Ac{brb} ensures that a message broadcast by a correct client is eventually delivered by all correct validators~\cite[\secref~3.11]{cachin2011introduction} while source order ensures that messages from a correct client are delivered in the order that they were broadcast.
Compared to consensus, source ordered \ac{brb} offers deterministic guarantees even in an asynchronous environment~\cite{flp85} and validators can trivially deliver messages from different clients in parallel.

\para{Draft}
We build \sysname on top of the \bcastname \ac{brb}~\cite{oracular-brb-disc22} to maximize \sysname's throughput.
\bcastname alleviates the predominant bottleneck in high-throughput Byzantine broadcast protocols, namely the cost of client authentication on validators' CPU~\cite{mirbft-jsys22}, thanks to its novel aggregate signature scheme.
An aggregate signature scheme~\cite{agg-bls-sig-eurocrypt03} enables the compaction of any set of signatures on any set of messages.
The aggregate signature scheme of \bcastname combines \emph{rapidly-verifiable multi-signatures}~\cite{bdn-asiacrypt18} with an interactive protocol between brokers and clients to achieve \emph{rapidly-verifiable aggregate signatures}.
\Cref{fig:bcast-phases} overviews the \bcastname protocol composed of 5 phases (reduction, witness, commit, completion and totality), totaling 13 steps (noted \step{1}--\step{13}) as implemented in \sysname.

\para{Reduction phase}
In the first phase, brokers gather client messages into batches (\step{1}) and request clients to multi-sign the digest of the batch (\step{2}) using their BLS key.
In the best case, all clients are correct and promptly respond to the query (\step{3}).
Since the same digest is BLS multi-signed by all correct clients, the broker can aggregate all these multi-signatures into an aggregate signature that is rapidly verifiable.

However, the broker does not wait forever for the clients' answer since some clients may be slow or faulty.
The broker only waits for a given amount of time and aggregate the multi-signatures it receives after a timeout triggers.
For the remaining clients that did not respond in time, the broker uses the signatures that they sent in \step{1} as a fallback mechanism.
\sysname employs Ed25519 signatures~\cite{curve25519_pkc06} in \step{1}.

Thanks to this reduction phase, validators receive batches of authenticated client messages that each contains a single already-aggregated signature for correct clients.

\para{Witness phase}
The witness phase saves a majority of validators from performing the costly task of authenticating clients.
As proposed in Mir-BFT~\cite{mirbft-jsys22} and Red Belly~\cite{red_belly_sp21}, \bcastname requires the minimum number of validators, \ie $f+1$, to verify the signature(s) in batches.
The rationale is that if $f+1$ validators sign the same statement then at least one of these validators is correct since at most $f$ may fail.

Therefore, the broker in \bcastname sends the client messages to all validators (\formatmsg{Batch}, \step{4}) but the signature(s) to only $f+1$ validators (\formatmsg{Signatures}, \step{4}).
Correct validators acknowledge that they received the client messages (\formatmsg{BatchAcquired}, \step{5}), and, for the $f+1$ who were asked, verify the received client signature(s) and respond with a signed attestation that the clients have been correctly authenticated (\formatmsg{WitnessShard}, \step{5}).
The broker aggregates these $f+1$ signatures into a \emph{witness certificate} that it sends to all validators (\step{6}).
In turn, validators check the correctness of the witness certificate to not have to verify the client signature(s) themselves.

\para{Commit and totality phases}
The commit phase ensures the delivery of a batch by at least $2f+1$ validators such that all correct validators may deliver a batch either in the commit phase, or, later in the totality phase for slow validators.

To do so, all validators first ensure that the batch contains no \emph{equivocations}, \ie no conflicting client messages that violate the safety of the \ac{brb}.
Every correct client message is uniquely identifiable by a pair $(\textit{id}, \textit{seqnum})$ where $\textit{id}$ is the client identifier and $\textit{seqnum}$ is the sequence number used to ensure source order as described at the beginning of \cref{sec:bcast}.
A correct validator detects an equivocation when the same pair $(\textit{id}, \textit{seqnum})$ is associated to two different messages.
After checking for equivocations, validators send a signed attestation that they are ready to deliver the batch except for the equivocated messages (\step{7}).
The broker aggregates $2f+1$ of these attestations into a \emph{commit certificate} that it forwards back to all validators (\step{8}).
Upon verification of this certificate, a correct validator now trusts that a quorum of validators agree on the set of messages to deliver and proceeds to deliver these messages.

Correct validators may have missed some of the messages since the system may be asynchronous or since the orchestrating broker may be faulty.
To ensure the delivery of all correct messages by all correct validators, validators periodically communicate with each others in to: synchronize the digests of the batches that they have delivered (\step{11}), query the missing batches and certificates (\step{12}) such that they may eventually receive them and deliver them (\step{13}).

\para{Completion phase}
The completion phase is used to prove to clients the correct delivery of their messages, despite the asynchronous nature of the network, by a quorum of validators via the use of \emph{completion certificates}.
\sysname relies on \bcastname's completion certificates for forward progress.

A validator that delivers a batch---some messages may be excluded---also sends to the broker a signed attestation of the delivery (\step{9}).
The broker aggregates $2f+1$ of such attestations in order to send to each correct client a personalized completion certificate (\step{10}) that contains: (1) the digest of the delivered batch, (2) the indexes of the excluded messages in the batch, (3) the aggregated validator signatures, and (4) a \ac{mp} of inclusion of their message in the batch.

\subsection{\assignname for Asynchronous Identifier Assignment}
\label{sec:identifier-assignment}

\bcastname relies on \assignname~\cite{oracular-brb-disc22} to assign short identifiers to clients in an asynchronous and consensus-less manner.
Each client must execute \assignname before they may first broadcast a message using \bcastname: a client must include its \emph{assignment certificate} in the first message sent to a broker.
\assignname uses brokers to facilitate communication between clients and validators but we omit their trivial involvement in the following for clarity.
\assignname operates in two phases: an id must first be reserved before being assigned.
In short, (1) the reservation phase forces each validator to commit to (reserve) at most one id per client, while (2) the assignment phase forces the client to commit to (get assigned) at most one of the reserved ids.

Identifiers are composed of a \emph{domain} and an \emph{index}.
Each validator is responsible for a distinct domain.
A validator assigns a unique id by concatenating its domain with its index that it increments for each assignment request.

\para{Reservation phase}
A client initiates \assignname by sending its public keys to a random validator; the client retries with different validators until it receives a correct reply.
Upon receiving a request, a validator $v$ (1) reserves from its domain a new id $i$ to the requesting client, (2) associates the client's public keys to $i$, and (3) uses FIFO broadcast to inform all validators about $i$'s reservation.
The ordering of the FIFO broadcast matches that of the index.
Upon FIFO delivery of the reservation of $i$ by $v$, each validator replies back to the client with a BLS multi-signature attesting that $v$ reserved $i$ for that client.
The client gathers the first $f+1$ attestations into a \emph{reservation certificate} for id $i$.

\para{Assignment phase}
The client then broadcasts this reservation certificate to all validators to confirm its intent to be assigned this id.
In response, each correct validator checks that (1) this id is reserved but yet unassigned, that (2) the public keys have no other assigned id, and that (3) it did not sign a conflicting assignment attestation.
Correct validators then send to the client a BLS multi-signature attesting correct id assignment.
The aggregation of $2f+1$ of these signatures becomes an \emph{assignment certificate} that proves the unique assignment of this id.
Finally, \assignname completes by having the client broadcast this assignment certificate to all validators.

\begin{table}[t]
    \small 
    \setlength{\tabcolsep}{4pt}
    \caption{\textbf{Field sizes of a payment received by a validator (in B).}
    \assignname and \bcastname can reduce a payment $8\times$ in size.}
    \label{tab:payment-msg-size}
    \begin{threeparttable}
    \begin{tabularx}{\columnwidth}{Xrrrrr}
        \toprule
        \bf Approach & \bf From & \bf To & \bf Amount & \bf Signature & \bf Total \\
        \midrule
        Strawman\tnote{a}                & 32 & 32 & 4--8 & 32 & 100--104 \\
        $+$ \assignname ids\tnote{b}   & \bf 4--8 & \bf 4--8 & 4--8 & 32 & 44--56 \\ 
        $+$ \bcastname BRB\tnote{c}      & \bf 4--8 & \bf 4--8 & 4--8 & \bf $\approx$0 & 12--24 \\
        \bottomrule
    \end{tabularx}
    \begin{tablenotes}
        \item[a] From and to are Ed25519 public keys~\cite{rfc8032_ed25519}, amount is a \ac{varint} of 4 to 8~B of length, signature uses Ed25519. \\
        \item[b] From and to are \acp{varint}, the first $2^{32}$ ids take 4~B (cf. \cref{sec:client-registration}). \\
        \item[c] The Ed25519 signature is replaced by a BLS12-381 multi-signature~\cite{rfc_bls_wip2020-09} that is aggregated before it reaches validators, hence the amortized signature footprint (cf. \cref{sec:bcast}).
    \end{tablenotes}
    \end{threeparttable}
\end{table}

\section{The \sysname Payment System}
\label{sec:currency}

In this section, we outline the main characteristics of our asynchronous and consensus-less payment system \sysname.
\sysname relies on \assignname to assign identifiers to clients (cf. \cref{sec:identifier-assignment}) and \bcastname to disseminate the client operations (cf. \cref{sec:bcast}).

\Cref{tab:payment-msg-size} summarizes the benefits of combining \assignname and \bcastname to greatly reduce the size of payment messages received by validators.
However, the payment format $\langle \text{from}, \text{to}, \text{amount}, \text{signature} \rangle$ as proposed in \cref{tab:payment-msg-size} is unfit for \sysname since such an operation modifies two accounts at once, which implies synchronization between user accounts, hence consensus.
We further detail how we adapt the payment format for an asynchronous consensus-less protocol.
\Cref{fig:system-overview} overviews the system and the payment protocol.

We first describe the asynchronous validator reconfiguration protocol that enables users to select which servers are validators (\cref{sec:reconfiguration-validator}).
We then detail how clients bootstrap their knowledge of the validator set in an efficient and eclipse-resilient manner (\cref{sec:client-bootstrap}) and how they register (\cref{sec:client-registration}) in order to perform operations.
Next, we detail asynchronous payment (\cref{sec:currency-payments}) and voting (\cref{sec:currency-votes}) protocols.
We finally discuss broker membership (\cref{sec:broker-membership}) and argue for \sysname's adaptivity regarding game-theoretical
incentives (\cref{sec:currency-incentives}).

The appendices contain correctness proofs of the reconfiguration (\cref{sec:reconfiguration-validator}), payment (\cref{sec:currency-payments}) and voting (\cref{sec:currency-votes}) modules in \cref{appx:reconfiguration_module}, \cref{appx:transaction_module} and \cref{appx:voting}, respectively.

\begin{figure}[t]
    \centering
    \includegraphics[width=\columnwidth]{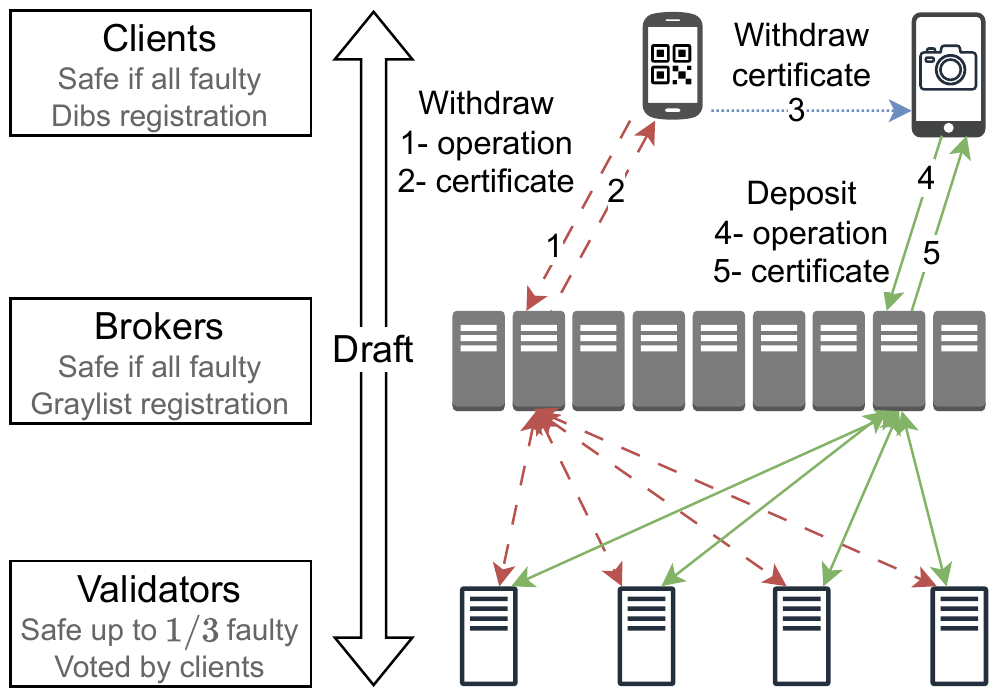}
    \caption{\textbf{\sysname system and payment protocol overview.}
    \sysname is composed of clients, validators, and brokers assisting validators, each with unique joining and resilience models.}
    \label{fig:system-overview}
\end{figure}

\subsection{User-Driven Validator Reconfiguration}
\label{sec:reconfiguration-validator}

\sysname lets users choose the set of validators running the system by pairing an asynchronous reconfiguration protocol with a novel asynchronous voting scheme.

\para{Voting for validators}
\sysname allows clients to select which validators get to run the system via its asynchronous voting mechanism (cf. \cref{sec:currency-votes}).
The product of a voting procedure is a \emph{voting certificate} proving that sufficiently many clients have supported the addition or removal of a validator.
Any change to the set of validators must be justified by an appropriate voting certificate.

\para{Problem definition}
Informally, \sysname's user-driven reconfiguration protocol ensures the following guarantees:
\begin{itemize}
    \item \emph{Join safety:} If a (correct or faulty) validator joins the system, the joining action is supported by a correct voting certificate.

    \item \emph{Leave safety:} If a correct validator leaves the system, then the validator requested to leave, or the leaving action is supported by a correct voting certificate.

    \item \emph{Join liveness:} If a correct validator requests to join with a correct voting certificate, the validator eventually joins.

    \item \emph{Leave liveness:} If a correct validator requests to leave, the validator eventually leaves.
    (Note that no voting certificate is required to leave.)

    \item \emph{Removal liveness:} If a correct process (client or server) obtains a voting certificate for the removal of a (correct or faulty) validator, the validator eventually leaves.
\end{itemize}
A correct validator joins (resp., leaves) the system \emph{only after} it has fully completed the joining (resp., leaving) subprotocol.
The fully-formal definition of the problem is relegated to \cref{appx:problem_definition}.
Besides the aforementioned reconfiguration-specific guarantees, \sysname also needs to ensure operation-specific guarantees (cf. \cref{sec:introduction}).

\para{Implementation overview}
We define a \emph{view} as a set of validators.
The genesis view is known by all processes, \eg it can be embedded in the code.
Each process captures in its \emph{current view} the current set of validators as seen from its perspective.
Due to the lack of a consensus-style agreement, current views of processes might differ.
Whenever a process learns about a new view, it updates its current view to reflect the newly obtained knowledge of the validator set.

\sysname orders all views into a \emph{totally-ordered} list: if a view $v_1$ precedes a view $v_2$ in the list, $v_2$ is more up-to-date than $v_1$.
Note that \sysname's reconfiguration protocol does not guarantee that all processes transit from one view to the same next view as such a behavior amounts to consensus.

\para{Operation liveness under dynamicity}
We assume that the list of all views is finitely long.\footnote{
    Even a (much simpler) regular register cannot be deterministically implemented in a reconfigurable manner if the membership changes infinitely often~\cite{baldoni2009implementing}.
}
Importantly, it is ensured that all processes, \ie both clients and servers, \emph{eventually converge} on the last view of the list.
Once the system stabilizes and all processes converge, every operation issued by a correct non-leaving client gets committed as the system's membership never changes again---there is ``enough time'' for protocol messages to be exchanged, leading to a commitment for each operation.
Note that \sysname cannot guarantee a commitment of operations issued by clients that leave as these clients might leave before ``reaching'' the final view of the list, which is the only one guaranteeing operation-commitments.

\para{Operation safety under dynamicity}
\sysname ensures safety both (1) within a \emph{single} view, and (2) across \emph{multiple} views.

To ensure safety within a view, \bcastname in \sysname (cf. \cref{sec:bcast}) relies on a standard technique of distributed computing: quorum intersection.
Namely, for an operation to be committed, a quorum of at least $2 / 3$ of the validators of a view must approve the operation.
Therefore, no two conflicting operations can be committed in the same view as the quorum intersection technique would imply that a correct validator has approved conflicting transactions, which a correct validator never does.

\sysname ``instantiates'' a more up-to-date view from a less up-to-date view.
In other words, if there exists a view $v_2$ in the list, a quorum of validators of some less up-to-date view $v_1$ has previously approved $v_2$.
To ensure safety across views, once a correct validator of $v_1$ approves $v_2$, the validator vouches to stop processing any operations intended for $v_1$.
Therefore, once any more up-to-date view is ``instantiated'' in \sysname, no operation intended for any less up-to-date view is ever committed.
\Cref{fig:reconfiguration_example} depicts an example of how safety across views is satisfied.

\Cref{appx:reconfiguration_module} contains the full details on the liveness and safety of operations under dynamicity.

\begin{figure}[h]
    \centering
    \includegraphics[width=\columnwidth]{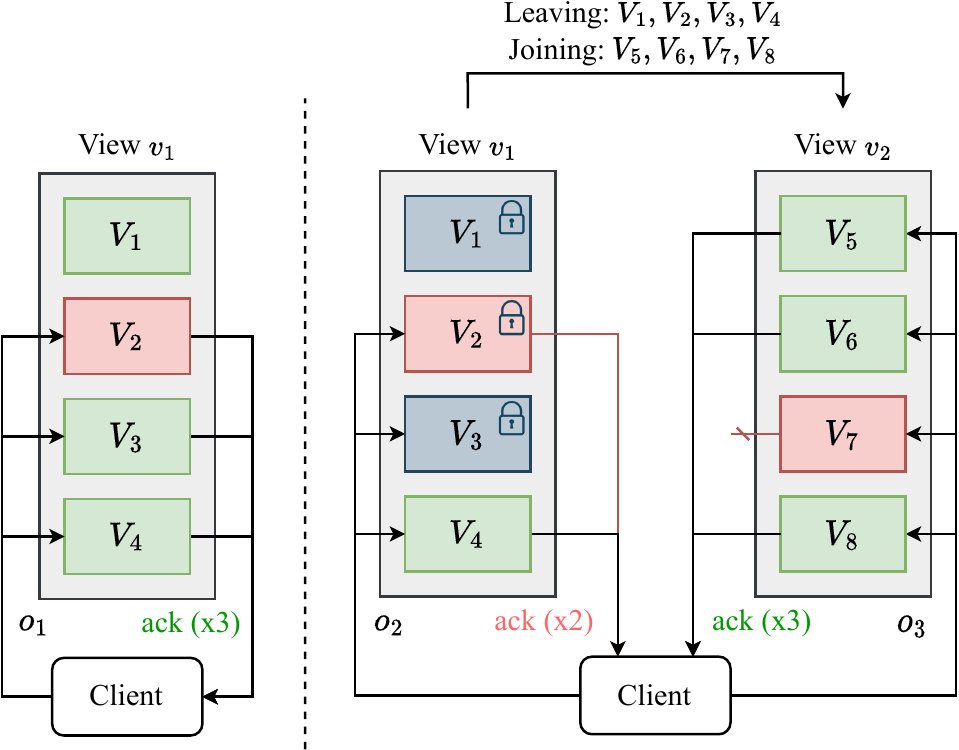}
    \caption{\textbf{Example of operation safety during reconfiguration.}
    Consider two views: (1) the genesis view $v_1$, with validators $V_1, V_2$, $V_3$ and $V_4$, and (2) a view $v_2$, with validators $V_5$, $V_6$, $V_7$ and $V_8$.
    All validators are correct, except for $V_2$ and $V_7$.
    Initially, a faulty client successfully commits its operation $o_1$ intended for $v_1$ (the left part of the figure).
    Then, the system reconfigures and view $v_2$ is ``produced'' (the right part of the figure).
    At this point, the client tries to commit two conflicting operations: $o_2$ intended for $v_1$, and $o_3$ intended for $v_2$.
    However, as a quorum of three validators of $v_1$ stopped processing operations intended for $v_1$, the client collects at most $2$ approvals for $o_2$: one from the faulty validator $V_2$, and one from the correct validator $V_4$ which did not approve the system's transition to $v_2$.
    As at least three approvals are needed, $o_2$ does not get committed, which preserves safety.}
    \label{fig:reconfiguration_example}
\end{figure}

\begin{table}[t]
    \centering
    \small 
    \setlength{\tabcolsep}{2.9pt}
    \caption{\textbf{\sysname client operations for payments and votes.}}
    \label{tab:client-operations}
    \begin{tabularx}{\columnwidth}{ccX}
        \toprule
        \bf Operation & \bf Parameter & \bf Description \\
        \midrule
        \formatmsg{Withdraw} & amount, to, epoch & Withdraw coins \\
        \formatmsg{Deposit} & withdraw certificate, & Deposit coins \\
            & deposit \ac{mp}, & \\
            & collect flag & \\
        \midrule
        \formatmsg{Support} & motion hash & Support motion \\
        \formatmsg{Abandon} & motion hash & Abandon supported motion \\
        \bottomrule
    \end{tabularx}
\end{table}

\subsection{Client Bootstrap}
\label{sec:client-bootstrap}

As discussed in \cref{sec:reconfiguration-validator}, the set of validators in \sysname is subject to churn.
As a result, before anything else, clients must learn the current validator set.
Doing so in \sysname is simple and secured since the genesis view is hard-coded in \sysname's client software, and thus known by every client, and since each subsequent view is signed by a quorum of a previous view.
A client verifies the authenticity of a learned view by verifying a chain of validations starting from the genesis.
This chain of validations of asynchronously-selected views ensures that a client operates free from eclipse attacks.

Since the authenticity of a view is publicly verifiable, a client can discover new views using numerous mechanisms.
For instance, views can be discovered via gossip, publicly known but untrusted discovery servers, or out of band.

\subsection{Client Registration}
\label{sec:client-registration}

Traditional cryptocurrencies require clients to attach (hashes of) their public keys to the operations they issue.
Instead, before \sysname clients can issue the operations listed in \cref{tab:client-operations} with \bcastname, they are first required to register their public keys into the system: a BLS12-381 key~\cite{rfc_bls_wip2020-09} for \bcastname's aggregation of multi-signatures, and an Ed25519 public key~\cite{rfc8032_ed25519} for \bcastname's fallback mechanism for slow clients.
BLS keys are challenged during registration to prevent rogue key attacks~\cite{proofs-possession-eurocrypt07}.
Upon registration, a client receives a 4--8~B numeric account id that it uses in its operations in lieu of its public key, thus greatly reducing the size of its messages.

\para{Protocol}
A client first executes \assignname (cf. \cref{sec:identifier-assignment}) to obtain an assignment certificate for a specific id, of 4 to 8~B of length, in an asynchronous manner.
Clients broadcast their assignment certificate to all validators in the last step of \assignname.

Upon receipt of a correct assignment certificate, a validator locally allocates an account for the given id.
As shown in \cref{tab:account-state}, the size of the state of an account is bounded by a constant for better scalability.
We describe the storage impact of payments and votes in their respective sections.

\para{Sybil resilience}
Attackers can create Sybil clients~\cite{douceur_sybil_iptps02} to pollute validators' storage and to frontrun the registration of correct clients.
Various schemes can defend against such attacks, such as \ac{pow}~\cite{dwork_pricing_crypto92,back_hashcash_2002,rem_mining_usenixsec_17,borisov_computational_p2p06}, \aclp{vdf}~\cite{vdf_crypto18}, or proof of personhood~\cite{popersonhood_eurospw17}.

\sysname's design is agnostic to the chosen Sybil-resilience mechanism.
For instance, users may be required to reveal their identities upon registration, \eg for a \ac{cbdc} deployment.
Alternatively, clients may be asked to perform a short \ac{pow} challenge to be allowed to register.
Even a challenge taking only 1~s to complete requires $10^{11}$ years of computation to fill the $2^{64}$ bits of account id space.

\begin{table}[t]
    \centering
    \small 
    \setlength{\tabcolsep}{1.3pt}
    \caption{\textbf{State stored by a validator per client account (in~B).}
    The size of an account state is bounded by a constant.}
    \label{tab:account-state}
    \begin{tabularx}{\columnwidth}{lcl}
        \toprule
        \bf Name & \bf Size & \bf Description \\
        \midrule
        Identifier  & 4--8 & Unique account id given by \assignname (\ac{varint}) \\
        Seq. num.   & 4--8 & Monotonic counter for \bcastname (\ac{varint}) \\
        \midrule
        EdDSA key  & 32 & Signing Ed25519~\cite{rfc8032_ed25519} public key \\
        BLS key & 48 & Multi-signing BLS12-381~\cite{rfc_bls_wip2020-09} public key \\
        \midrule
        Balance  & 4--8 & Tally of spendable coins (\ac{varint}) \\
        Deposits & $\langle \text{4--8}, \text{32} \rangle$ & Deposit epoch ($\langle \text{\ac{varint}}, \text{\ac{mr}} \rangle$) (cf. \cref{sec:currency-payments}) \\
        Votes & $\{ 32 \}$ & Set of supported motion hashes \\
            & & ~~~(using BLAKE3~\cite{blake3_report2021-11}) (cf. \cref{sec:currency-votes}) \\
        \bottomrule
    \end{tabularx}
\end{table}

\subsection{Payment}
\label{sec:currency-payments}

The asynchronous and consensus-less nature of the broadcast poses unique challenges when employed for payments.
\sysname splits each payment in two operations: a \formatmsg{Withdraw} issued by the payer, and a \formatmsg{Deposit} by the payee.
In short, a \formatmsg{Withdraw} is a commitment performed by the payer to transfer coins to the payee, while a \formatmsg{Deposit} makes the committed coins usable by the payee.

\para{Protocol}
As depicted in \cref{fig:system-overview}, (1) the payer client first issues a \formatmsg{Withdraw} operation to the validators, via \bcastname, to signal its intent to transfer some coins to the payee client.
(2) The payer obtains in return a \emph{withdrawal certificate} attesting that the payer withdrew coins from its balance and designated them for the payee. 
(3) The payer then forwards the withdrawal certificate to the payee out-of-band, \eg via remote messaging, bluetooth, QR code. 
Finally, (4) the payee uses \bcastname to send a \formatmsg{Deposit} operation to the validators alongside the withdrawal certificate and (5) receives in return a \emph{deposit certificate} that finalizes the payment.

\para{Payment safety}
The safety of a payment is violated if a faulty client double spends a coin~\cite{bitcoin}, \ie uses the same coin in two withdrawals or two deposits.

Since \bcastname ensures a source order of operations via sequence numbers, a double withdrawal amounts to two \formatmsg{Withdraw} operations using the same sequence number.
\bcastname's consistency property prevents such equivocations.

Preventing double deposits amounts to preventing clients from using the same withdrawal certificate twice.
This constraint is similar to Bitcoin's way of preventing double spends whereby validators are expected to store all spendable coins, \ie UTXOs.
However, doing so results in an unbounded storage footprint for validators.
Instead, to bound by a constant the account state held by validators (cf. \cref{tab:account-state}), \sysname clients provide a proof in their \formatmsg{Deposit} operation that a withdrawal certificate has never been used, as explained further.

\para{Storage efficiency for validators}\label{sec:currency-gc-payments}
Each client maintains an \ac{mpt} of its used withdrawal certificates.
Validators store and update the root of each client's \ac{mpt} to check the suitability of the withdrawal certificates.
The storage footprint per account for validators is thus bounded by a constant as only one \ac{mr} is stored per account (cf. \cref{tab:account-state}).

To deposit a coin, a client must attach a fitting \ac{mp} of exclusion (cf. \cref{sec:background-merkle-tree}), in the \code{deposit MProof} parameter, from its deposits to prove that the sent withdrawal certificate has never been used before.
Validators verify the received \ac{mp} against the client's \ac{mr} that they store.
If the \ac{mp} is correct, the stored \ac{mr} is updated by adding the hash of the withdrawal certificate as a new leaf of the partial \ac{mpt} encapsulated in the received \ac{mp} of exclusion.
Similarly, the client adds the withdrawal certificate to its own \ac{mpt} and updates the associated \ac{mr}.
Thanks to this proof system, a client cannot reuse the same withdrawal certificate for multiple deposits since only a single \ac{mp} of exclusion can be issued for a given certificate.

\para{Storage efficiency for clients}
Optionally, \sysname enables clients to garbage collect old withdrawal certificates and reset the \ac{mpt} used to store the withdrawal certificates deposited on their account.
This feature saves space on the client's device but is not safe in asynchrony as it may lead to the loss of coins for the client---it never leads to double spends.
A client should therefore make sure it is aware of all incoming payments before garbage collecting.

As such, deposits are organized in \emph{epochs}.
A payer must specify the payee's \code{epoch} in the \formatmsg{Withdraw} operation (cf. \cref{tab:client-operations}).
In turn, a payee can only deposit withdrawal certificates with an epoch that matches the current epoch of the payee as stored by the validators (cf. \cref{tab:account-state}).
This check prevents double deposits across epochs but may lead to withdrawal certificates that cannot be deposited if their epoch is incorrect, which may happen because of asynchrony.

A client sets the \code{collect flag} of a \formatmsg{Deposit} operation to inform validators of the garbage collection; validators in turn increment the stored epoch counter and empty the associated deposit \ac{mr}.
Infrequent users of \sysname can garbage collect their \ac{mpt} after each deposit. 
This garbage collection also reduces the network complexity that would be otherwise required to convey larger \acp{mp} of exclusion.

\begin{figure}[t]
    \centering
    \includegraphics[width=\columnwidth]{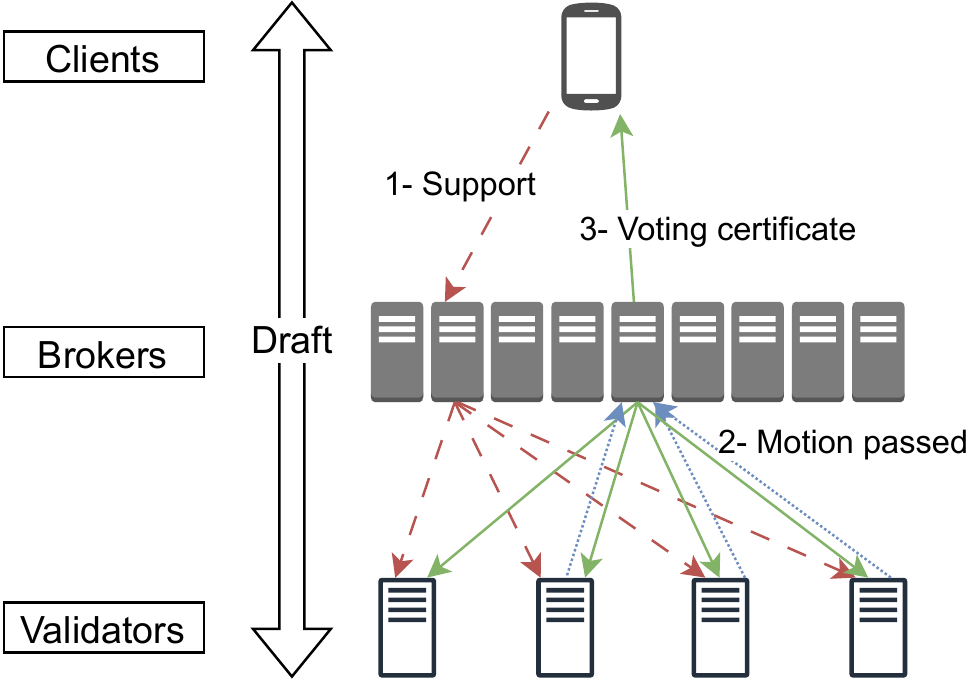}
    \caption{\textbf{\sysname's voting protocol.}
    Clients issue \formatmsg{Support} operations to indicate a vote for a motion.
    A motion eventually passes when the summed balance of all supporting clients' accounts reaches more than 50\% of the total supply.}
    \label{fig:voting}
\end{figure}

\subsection{Voting}
\label{sec:currency-votes}

\sysname clients, and \emph{only} clients, vote for \emph{motions} that represent any enforceable resolution that can be voted upon.
For instance, a motion may express the wish to update the incentive policy, akin to Tezos~\cite{tezos-whitepaper}, or to update the validator set.
Clients vote for a motion by issuing a \formatmsg{Support} operation and can retract a previous vote via an \formatmsg{Abandon} operation at any time.
These operations are issued and disseminated to the validators via \bcastname (in a ``closed-box'' manner).
Validators store the set of supported motions for each account (cf. \cref{tab:account-state}) to determine when a motion has passed.
To bound the storage footprint of each account by a constant, a client can only support a limited number of motions at any given time, \eg 5 in our implementation.

Since \sysname operates in an asynchronous setting, a motion is either \emph{``passed''} or \emph{``not yet passed''} but it never ``fails to pass''.
The latter requires a terminating dissemination primitive which is incompatible with \bcastname's asynchronous and consensus-less dissemination.
A motion passes when its tally crosses a given threshold, after which a \emph{voting certificate} is signed by a quorum of validators.

In our implementation, the weight of a vote is proportional to the balance of the voter.
A motion \emph{may pass}, from the perspective of an individual validator, when the sum of the balances of all the accounts supporting the motion crosses 50\% of the total money supply. 
This mechanism contrasts with stake-based approaches~\cite{algorand_sosp17,ouroboros_crypto17,snow-white-fc19,async-pos-sss21} that require clients to lock away the money with which they vote, forcing users to choose between payments and voting power.
\sysname's voting can easily adapt to a democratic personhood-based voting given a fitting Sybil-resilience mechanism upon client registration~\cite{popersonhood_eurospw17}.

\para{Protocol}
As depicted in \cref{fig:voting}, (1) a client first uses \bcastname to issue a \formatmsg{Support} operation for a motion $m$ to the validators.
Upon delivery, \ie commitment, each validator (2a) adds $m$'s hash to the set of supported motions by the client's account if space permits.
Each validator then 
(2b) updates $m$'s tally with the account balance and (2c) checks whether the tally crosses the 50\% threshold.
If so, each validator (2d) emits a signed \formatmsg{motion-passed} message for $m$.
(3) A voting certificate for a motion $m$ is aggregated
by any process from $2f+1$ such messages, and is broadcast to all processes.

Upon delivery of an \formatmsg{Abandon} for $m$, validators revert the effect of the related \formatmsg{Support} on their state if they have not yet signed a \formatmsg{motion-passed} message for $m$.
Upon delivery of a \formatmsg{Withdraw} or \formatmsg{Deposit}, steps (b)--(d) are repeated for all of the client's supported motions.

\para{Voting guarantees}
Regarding liveness, a motion is only guaranteed to pass if it is supported by more than 50\% of the total money supply \emph{forever}.
Due to asynchronous nature of the network, there is no guarantee that correct validators notice that a motion has sufficient support at a time unless it is supported forever.
In practice, a motion passes if it is supported for a sufficiently long period of time to ensure that a quorum of correct validators observe its support.

As for safety, a motion passes only if more than 50\% of the total money supply supported the motion \emph{at some point in time} after the motion was issued.
In the same way that payment safety must prevent the same coin from being used in two conflicting \formatmsg{Withdraw} or \formatmsg{Deposit} operations, \sysname must also prevent the same coin from being used in two conflicting \formatmsg{Support} operations.
Consider the example where
(1) Alice supports motion $m$,
(2) withdraws her coin for Bob, who, in turn,
(3) deposits the coin, and
(4) supports $m$. 
In this example, to guarantee safety, the voting power of that coin towards $m$'s support must not increase despite the change of hands.
\sysname prevents such scenarios by ensuring that, before a \formatmsg{Deposit} operation is delivered, a validator always delivers its corresponding \formatmsg{Withdraw} operation using the withdraw certificate in the deposit.
In the example scenario, a validator first decrements $m$'s tally (Alice's withdrawal) and then increments it (Bob's deposit), thus ensuring voting safety.
\Cref{appx:voting} contains full details on \sysname's voting mechanism.

\subsection{Broker Membership}
\label{sec:broker-membership}

Brokers do not impact the safety of \sysname.
As such, they do not have to be trusted by neither validators nor clients but can instead remain untrusted as any process outside of the system.
This lack of trust allows brokers to easily join and leave the system without requiring a well-defined protocol unlike validators (\cref{sec:reconfiguration-validator}) and clients (\cref{sec:client-registration}).
Therefore, brokers are never ``whitelisted'' by validators but instead remain on a \emph{``graylist''} and can be removed from it at any time in case of misbehavior, \eg spamming validators, not completing a \bcastname broadcast, building inefficient batches in \bcastname.

In this paper, we pair a simple fixed fee policy to reward correct brokers with a staking mechanism to render brokers accountable should they harm \sysname's liveness.
Alternatively, one could design a broker reputation system among validators or enforce that a joining validator must add brokers to the system with tied reputation.

\subsection{Incentives}
\label{sec:currency-incentives}

Thanks to its model, \sysname untangles the design of possibly complex game-theoretic incentives~\cite{sok_game_theory_arxiv20,roughgarden_fee_ec21,pos_compounding_fc19} from the inner workings of its protocols.
As a result, \sysname can easily be adapted to a wide-range of incentive policies and thus to equally numerous usecases, \eg \acp{cbdc}, global payment systems akin to Libra.
\sysname even enables users to vote for the incentive policy of their choice.

We note that the incentive policy may be simplified thanks to \sysname's performance.
In classic low-throughput payment systems, high-fee payments are naturally prioritized over those with low fees, which leads to a rapid fee escalation in the face of congestion.
On the other hand, we do not expect \sysname to face such congestion since it can sustain 1~M withdraws/s (cf. \cref{sec:eval-rq1}), thus congestion should likely not be accounted for in the fee design.

\section{Evaluation}
\label{sec:evaluation}

We evaluate \sysname focusing on the following questions:

\noindent
(\cref{sec:eval-rq1}) What workload can \sysname sustain?

\noindent
(\cref{sec:eval-rq2}) How much are \sysname's internal components contributing to its scalability?

\noindent
(\cref{sec:eval-rq3}) How costly is client authentication in \sysname?

\noindent
We first detail our implementation and experimental setup.

\subsection{Implementation Aspects}
\label{sec:implementation}

We implemented \sysname and its components in 26,000 Rust LOCs~\cite{code_tokei}.

\para{Cryptography}
Our implementation uses the following Rust libraries for secure and efficient cryptographic primitives:
\newline\noindent (i)
BLAKE3 hash function~\cite{blake3_report2021-11} official library~\cite{code_blake3};
\newline\noindent (ii)
Ed25519 signature scheme~\cite{curve25519_pkc06,rfc8032_ed25519} libraries~\cite{code_ed25519,code_x25519};
\newline\noindent (iii)
BLS12-381 multi-signature~\cite{bls_jcrypt04,rfc_bls_wip2020-09} C bindings~\cite{code_bls};
\newline\noindent (iv)
ChaCha20Poly1305 authenticated encryption~\cite{poly1305_fse05,chacha_sacs08,rfc8439_chacha20poly1305} library~\cite{code_chacha20poly1305} for authenticated channels between validators (for \bcastname's totality phase and \assignname's FIFO broadcast).

\subsection{Experimental Setup}
\label{sec:eval-setup}

We compare \sysname's performance with Algorand~\cite{algorand_sosp17}, Quorum~\cite{quorum}, and FastPay~\cite{fastpay_aft20} on global AWS deployments.

\para{Bucketed storage}
\sysname uses an authenticated in-memory \ac{kv} store with multi-threaded parallel memory accesses.
The store partitions the key space into buckets that contain mutually exclusive subsets of the \ac{kv} pairs.
As such, buckets may be accessed concurrently by different threads for increased throughput.

The results reported for \sysname in RQ1 and RQ3 use as many buckets as there are hardware threads.
RQ2 in \cref{sec:eval-rq2} studies the impact of this store on throughput.
In the future, we plan to rely on RainBlock~\cite{rainblock_atc21} as a fully-fledged \ac{kv} store.

\para{Baselines}
Algorand~\cite{algorand_sosp17} is a permissionless blockchain that supports a dynamic set of validators using proof of stake and weighted sortition.
Blocks are disseminated by gossip and agreed upon using the \textit{BA$\star$} consensus.
We uniformly distribute stake among a static set of validators to obtain a uniformly random sortition for fair comparison with other systems.
Algorand claims a throughput of 1,000--46,000~tx/s with a lower block latency of 2.5--4.5~s.

Quorum~\cite{quorum} is a permissioned blockchain derived from Ethereum~\cite{wood_ethereum_2014}.
Quorum uses IBFT~\cite{ibft_spec,ibft_blockchain19}, based on PBFT~\cite{pbft_osdi99}, for block agreement among a dynamic set of validators chosen via consensus.
We use a static set of validators encoded in the chain's first block.
IBFT achieves 200~tx/s for a latency of 8--10~s on 16 LAN nodes~\cite{quorumperf_blockchain21}.

FastPay~\cite{fastpay_aft20} is a payment system based on consistent broadcast with reduced latency and increased throughput compared to a blockchain.
However, FastPay is not a stand-alone system.
FastPay is meant to complement a less-performant underlying system that provides totality to its payments, \eg via \ac{brb} as in \sysname and Astro~\cite{astro_dsn20}, or via consensus as in Algorand and Quorum.
Despite not being equivalent to \sysname, we evaluate FastPay as it is a state-of-the-art consensus-less payment system easy to deploy.
We use a static set of validators, TCP due to geo-distribution, and 64 FastPay shards per instance for parallelization.

\para{Hardware \& software}
We use a mix of \code{c5a.16xlarge} and \code{c5ad.16xlarge} AWS EC2 KVM virtual machines each with: an AMD EPYC 7R32 CPU with 64 hardware threads (32 cores) running at 2.8~GHz, 128~GB of RAM, and a 20~Gb/s network card.
We measured 4.5~Gb/s of effective bandwidth with \code{iperf}~\cite{iperf} between two intra-region instances.
Instances run an Ubuntu 20.04 LTS distribution with a Linux 5.11 kernel.
\sysname and FastPay are compiled with \code{rustc} 1.59.0 and 1.65.0, respectively, both in release mode.
Algorand and Quorum are compiled with \code{go} 1.14.3 and 1.13.8, respectively, and are deployed using Diablo~\cite{diablo-eurosys23}.

\para{Geo-distribution}
We use up to 64 instances for validators spread among 20 regions.
We set:
(i) 2 instances in Paris and Tokyo;
(ii) 3 instances in Ohio, Oregon, North Virginia, North California, S\~ao Paulo, Ireland, Stockholm, Milan, Frankfurt, Sydney, Singapore and Seoul;
(iii) 4 instances in Cape Town, Hong Kong, Mumbai, Canada, London and Bahrain.

\para{Algorand \& Quorum deployments}
We add an instance per region to emulate clients generating workload.
Clients in a region access the validators of that region for optimal performances.
Workloads are pre-generated and pre-signed by clients to not bottleneck the system in-benchmark.

\para{FastPay deployments}
We add a \code{c6i.4xlarge} instance in Paris and in Tokyo to emulate clients generating workload.

\begin{figure}[tb]
    \centering
    \includegraphics[width=\columnwidth]{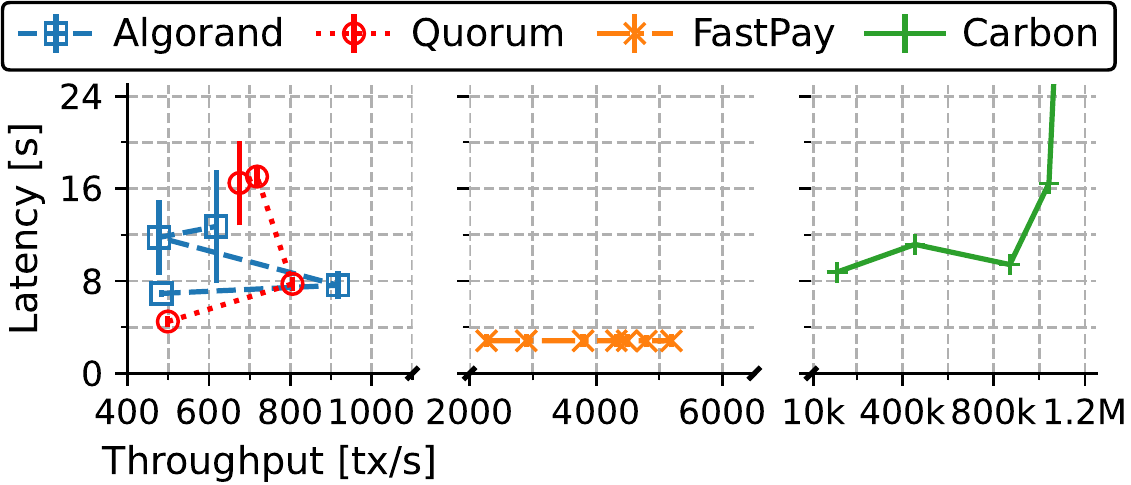}
    \caption{\textbf{Load handling of \sysname and its baselines.}
    \sysname sustains the delivery of 1M~tx/s without contention.
    With an equivalent latency of 8--15~s, the throughput of Algorand and Quorum remains below 1,000~tx/s, while FastPay reaches up to 5,200~tx/s with a latency consistently below 3~s.}
    \label{fig:eval-throughput-latency}
\end{figure}

\para{\sysname deployments}
In addition to validators, we also deploy brokers for \sysname.
We deploy two kinds of brokers to fulfill the throughput potential of \sysname: regular brokers and load brokers.
Regular brokers interact normally, per specification, with emulated clients on separated instances.
Regular brokers are used to accurately measure payment latencies.
To avoid the cost of deploying hundreds of thousands of clients, we use load brokers that pre-generate batches of payments, and hence single-handedly simulate many clients.
Load brokers are only used for throughput and do not bias the reported latencies.
Since brokers are permissionless, the use of load brokers lets us cheaply simulate many brokers and clients with no impact on the failure model.

We deploy an equal number of load brokers and validators, each on their dedicated instance, on each region listed above.
Two regions differ:
(1) Frankfurt contains 1 regular broker and 4 client emulators, for the purpose of latency measurements, instead of 3 load brokers;
(2) Stockholm contains 5 load brokers instead of 3 due to AWS restrictions.

We parameterize \bcastname as follows.
A broker waits for \formatmsg{Submission} messages for 1~s (\step{1} in \cref{fig:bcast-phases}), \ie a broker builds a batch every second.
Once a broker has sent all \formatmsg{Inclusion} messages (\step{2}), it waits for the corresponding \formatmsg{Reduction} messages for 1~s (\step{3}).
Batches contain 50,000 operations.

\para{Plots}
Each run lasts at least 120~s once warm-up and cool-down times are removed.
Since a withdrawal in \sysname effectively transfers coins from the payer to the payee (cf. \cref{sec:currency-payments}), we measure the throughput of \sysname as the number of withdrawal operations processed per time unit by the servers.
The latency reported for all systems represents the end-to-end latency from operation issuance by a client to the receipt of the acknowledgment, \eg the withdrawal certificate in \sysname (steps 1 and 2 in \cref{fig:system-overview}).

\begin{table}[t]
    \centering
    \small 
    \setlength{\tabcolsep}{3.5pt}
    \caption{\textbf{Load handling of Algorand and Quorum.}
    Each cell shows mean $\pm$ standard deviation of 5 runs.
    The ideal throughput is the closest to its requested workload.
    Both approaches can efficiently sustain a workload of 500~req/s but only handle less than 50\% of the requested workload for 1,500 and 2,000~req/s.}
    \label{tab:eval-baselines}
    \begin{tabular}{ccccc}
        \toprule
          & \multicolumn{4}{c}{\bf Input workload [req/s]} \\
          & \bf 500 & \bf 1,000 & \bf 1,500  & \bf 2,000 \\
        \midrule
          & \multicolumn{4}{c}{\bf Throughput [tx/s]} \\
        Algorand     & 483\,$\pm$\,32.0 & 917\,$\pm$\,36.4 & 477\,$\pm$\,170.5 & 618\,$\pm$\,54.2 \\
        Quorum       & 499\,$\pm$\,0.5 & 805\,$\pm$\,44.3 & 718\,$\pm$\,18.4 & 675\,$\pm$\,174.3 \\
        \midrule
          & \multicolumn{4}{c}{\bf Latency [s]} \\
        Algorand     & 6.92\,$\pm$\,0.34 & 7.63\,$\pm$\,1.21 & 11.77\,$\pm$\,3.21 & 12.73\,$\pm$\,4.89 \\
        Quorum       & 4.49\,$\pm$\,0.46 & 7.75\,$\pm$\,0.63 & 17.04\,$\pm$\,0.95 & 16.49\,$\pm$\,3.67 \\
        \bottomrule
    \end{tabular}
\end{table}

\subsection{RQ1 -- Load Handling}
\label{sec:eval-rq1}

We expose \sysname, Algorand, Quorum and FastPay to growing workloads until contention is reached and performances are degraded.
We ran Algorand, Quorum and FastPay five times and \sysname once, due to deployment cost.
\cref{fig:eval-throughput-latency} depicts the mean throughput and mean latency for each system under different workloads, while \cref{tab:eval-baselines} exhibits exact values for Algorand and Quorum.

\sysname maintains a throughput of 1M~tx/s before reaching contention.
\sysname's latency in steady state is capped at 11.1~s, making it a suitable daily payment system.

On the other hand, neither Algorand nor Quorum cross 1k tx/s; 
while FastPay manages to reach 5,200 tx/s with consistent latencies of 2.8~s since it drops requests when congested.
\sysname's throughput dwarfs theirs by 3 orders of magnitude for equivalent latencies.
We explain the relatively poor performance of Algorand and Quorum by
(1) the need for most validators to verify all client signatures, and
(2) the use of unscalable consensus protocols, $BA\star$ and IBFT.

\subsection{RQ2 -- \sysname's Design for Scalability}
\label{sec:eval-rq2}

We investigate the performance impact of \sysname's internal designs and what makes \sysname scale to greater number of validators for greater resilience thresholds.
We compare FastPay's scalability to \sysname's by deploying FastPay and three versions of \sysname with key features disabled:

\noindent
(i) \code{All} is the default version of \sysname;

\noindent
(ii) \code{No-Agg} bypasses and nullifies the gains of aggregate signatures in \bcastname, every client uses only Ed25519 signatures that validators have to verify without aggregation, as would happen if brokers were to only send non-reduced batches;

\noindent
(iii) \code{No-Buckets} is \code{No-Agg} with a single memory bucket in the \ac{kv} store which cancels any I/O parallelism benefit.

We deploy these versions of \sysname on a varying number $n$ of validators and brokers, with $n \in \{16, 32, 64\}$, each on its dedicated instance.
For $n = 16$, compared to the list in \cref{sec:eval-setup}, we deploy a pair of validator and broker in each region except Cape Town, S\~ao Paulo, Seoul and Tokyo.
For $n = 32$, we deploy: (i) a pair of validator and broker in Tokyo, Seoul, Singapore, Sydney, Frankfurt, Ireland, Paris and Milan; (ii) two pairs in Ohio, North Virginia, North California, Oregon, Cape Town, Hong Kong, Mumbai, Canada, London, Stockholm, Bahrain and S\~ao Paulo.

\begin{figure}[tb]
    \centering
    \includegraphics[width=\columnwidth]{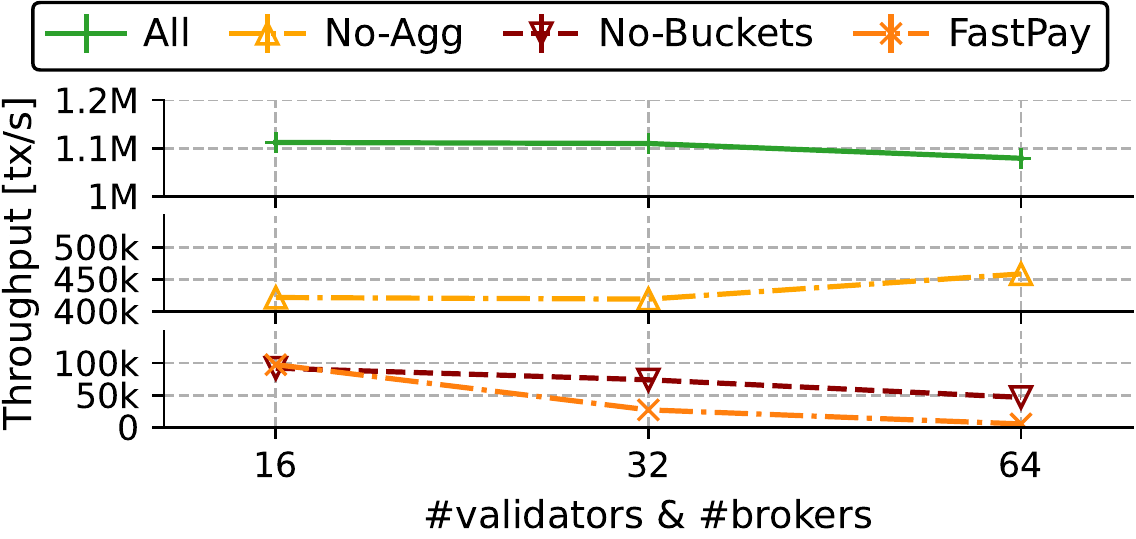}
    \caption{\textbf{Scalability of \sysname with various components deactivated, and of FastPay.}
    \sysname performs far better with (\code{All}) than without \bcastname's aggregation of signatures (\code{No-Agg}) or without multi-threaded memory accesses (\code{No-Buckets}).
    While \sysname's throughput remains constant from 16 to 64 validators, FastPay's throughput drops as the system scales.}
    \label{fig:eval-throughput-components}
    \vspace{-0.3em}
\end{figure}

\Cref{fig:eval-throughput-components} depicts the throughput observed in a run for each version of \sysname.
We first observe that both signature aggregation and parallel memory accesses hugely impact the throughput of the system.
Compared to the default version of \sysname, disabling signature aggregation reduces the throughput $2.6\times$ and disabling multi-threaded memory accesses reduces throughput yet $9.8\times$ more.
These performance gaps demonstrates the need to optimize both CPU and memory accesses for high-throughput payment systems.

Regarding \sysname's scalability, even with different number of validators and brokers, the throughput of \code{All} and \code{No-Agg} remains stable ($\pm10\%$) at 1.1M~tx/s and 420k~tx/s, respectively.
The throughput of \code{No-Buckets} is halved from a setup with $n=16$ (92k~tx/s) to $n=64$ (47k~tx/s).

The advantages of \sysname's design compared to that of the consensus-less FastPay are best observed with $n=16$.
With $n=16$, FastPay performs as well as \sysname with signature aggregation and I/O parallelization deactivated (\code{No-Buckets}).
The throughput of FastPay drops $3.6\times$ from a system with $n=16$ (97k~tx/s) to $n=32$ (27k~tx/s), and another $5.2\times$ from a system with $n=32$ to $n=64$ (5k~tx/s).

\begin{figure}[tb]
  \centering
  \includegraphics[width=\columnwidth]{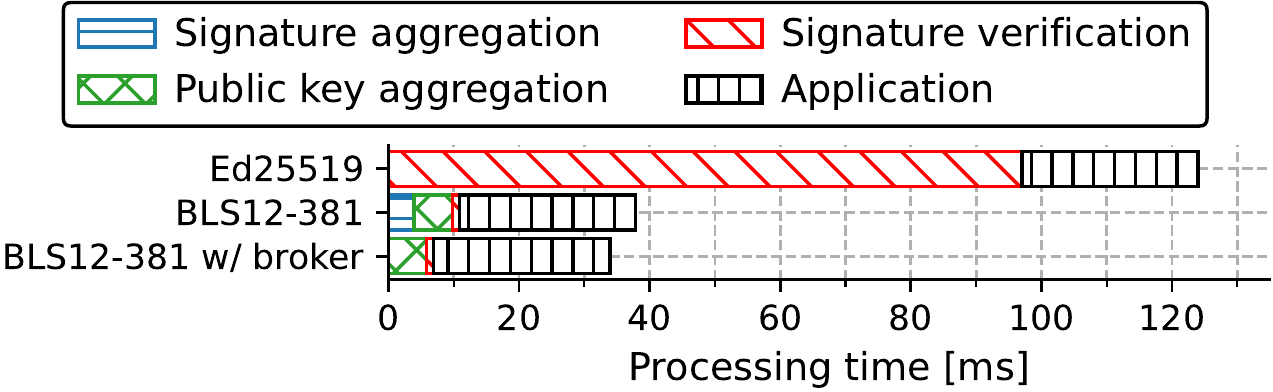}
  \caption{\textbf{Validator processing time of a large payment batch.}
  Verifying Ed25519 signatures is the main CPU utilization bottleneck.
  With BLS12-381 multi-signatures, most of \sysname's runtime is spent on application-level processing, \eg payment validity checks, memory accesses.}
  \label{fig:message-processing-time}
\end{figure}

\subsection{RQ3 -- Client Authentication Overhead}
\label{sec:eval-rq3}

Lastly, we evaluate the benefits of \bcastname's signature aggregation on the processing time experienced by validators in \sysname.
We compare \bcastname's BLS multi-signature scheme, with and without signature aggregation done by the brokers, with its fallback mechanism utilizing the efficient but non-aggregatable Ed25519 scheme.

\Cref{fig:message-processing-time} depicts the runtimes required for a validator to process a batch of 50,000 withdrawals, averaged over 40 runs.
Using Ed25519, validators spend the majority (78\%) of their processing time for the purpose of client authentication.
This number is reduced to 29\% for the BLS scheme and 20\% for the BLS scheme with signature aggregation performed by the brokers as in \bcastname.
\bcastname's scheme therefore reduces by $3.7\times$ the cost of client authentication for \sysname validators.

\section{Conclusion}
\label{sec:conclusion}

\sysname is an asynchronous and consensus-less payment system operated by a dynamic set of validators.
Any client may register into the system to issue payments and votes to the validators.
Clients are protected against eclipse attacks thanks to the asynchronous reconfiguration mechanism used for validators.
\sysname is designed to incur minimal storage footprint on both validators and clients.
As conveyed by our evaluation, \sysname exhibits a throughput of 1M tx/s with latencies of 11~s.
To reach such performances, \sysname leverages a \acl{brb} that is able to uniquely exploit permissionless and untrusted brokers to greatly reduce the processing time due to client authentication on validators.

In the future, we plan to shard \bcastname to execute parallel protocol instances, increasing throughput further.
We also plan to apply the concept of trustless and permissionless brokers to other 
Byzantine total order broadcast with the aim of accelerating general purpose smart contracts.
Finally we plan to explore broker-assisted privacy-preserving techniques for \sysname operations~\cite{zerocash_sp14,zkledger_nsdi18,prifi_popets20,dandelion_journal18}. 

\section*{Acknowledgments}

We thank the anonymous reviewers and Antoine Murat for their helpful feedback, as well as Athanasios Xygkis and Alberto Sonnino for their deployment scripts.
This work has been supported in part by the Interchain Foundation, the Hasler Foundation (21084), and Innosuisse (46752.1 IP-ICT).

\bibliographystyle{IEEEtran}
\bibliography{bibliography.bib}

\begin{thebibliography}{100}
\providecommand{\url}[1]{#1}
\csname url@samestyle\endcsname
\providecommand{\newblock}{\relax}
\providecommand{\bibinfo}[2]{#2}
\providecommand{\BIBentrySTDinterwordspacing}{\spaceskip=0pt\relax}
\providecommand{\BIBentryALTinterwordstretchfactor}{4}
\providecommand{\BIBentryALTinterwordspacing}{\spaceskip=\fontdimen2\font plus
\BIBentryALTinterwordstretchfactor\fontdimen3\font minus
  \fontdimen4\font\relax}
\providecommand{\BIBforeignlanguage}[2]{{%
\expandafter\ifx\csname l@#1\endcsname\relax
\typeout{** WARNING: IEEEtran.bst: No hyphenation pattern has been}%
\typeout{** loaded for the language `#1'. Using the pattern for}%
\typeout{** the default language instead.}%
\else
\language=\csname l@#1\endcsname
\fi
#2}}
\providecommand{\BIBdecl}{\relax}
\BIBdecl

\bibitem{bitcoin}
\BIBentryALTinterwordspacing
S.~Nakamoto, ``{Bitcoin: A Peer-to-Peer Electronic Cash System},'' 2008.
  [Online]. Available:
  \url{https://assets.pubpub.org/d8wct41f/31611263538139.pdf}
\BIBentrySTDinterwordspacing

\bibitem{bitcoin-ng_nsdi16}
I.~Eyal, A.~E. Gencer, E.~G. Sirer, and R.~V. Renesse, ``{Bitcoin-NG: A
  Scalable Blockchain Protocol},'' in \emph{{USENIX} Symposium on Networked
  Systems Design and Implementation (NSDI)}, 2016.

\bibitem{algorand_sosp17}
Y.~Gilad, R.~Hemo, S.~Micali, G.~Vlachos, and N.~Zeldovich, ``{Algorand:
  Scaling Byzantine Agreements for Cryptocurrencies},'' in \emph{{ACM Symposium
  on Operating Systems Principles} (SOSP)}, 2017.

\bibitem{teechain_sosp19}
J.~Lind, O.~Naor, I.~Eyal, F.~Kelbert, E.~G. Sirer, and P.~Pietzuch,
  ``{Teechain: A Secure Payment Network with Asynchronous Blockchain Access},''
  in \emph{ACM Symposium on Operating Systems Principles (SOSP)}, 2019.

\bibitem{prism_ccs19}
V.~Bagaria, S.~Kannan, D.~Tse, G.~Fanti, and P.~Viswanath, ``{Prism:
  Deconstructing the Blockchain to Approach Physical Limits},'' in \emph{ACM
  SIGSAC Conference on Computer and Communications Security (CCS)}, 2019.

\bibitem{red_belly_sp21}
T.~Crain, C.~Natoli, and V.~Gramoli, ``Red {Belly}: {A} {Secure}, {Fair} and
  {Scalable} {Open} {Blockchain},'' in \emph{{IEEE} {Symposium} on {Security}
  and {Privacy} (SP)}, 2021.

\bibitem{smr_libra_report19}
M.~Baudet, A.~Ching, A.~Chursin, G.~Danezis, F.~Garillot, Z.~Li, D.~Malkhi,
  O.~Naor, D.~Perelman, and A.~Sonnino, ``{State Machine Replication in the
  Libra Blockchain},'' \emph{The Libra Assiociation, Technical Report}, 2019.

\bibitem{stellar_sosp19}
M.~Lokhava, G.~Losa, D.~Mazi\`{e}res, G.~Hoare, N.~Barry, E.~Gafni, J.~Jove,
  R.~Malinowsky, and J.~McCaleb, ``{Fast and Secure Global Payments with
  Stellar},'' in \emph{ACM Symposium on Operating Systems Principles (SOSP)},
  2019.

\bibitem{quorum}
``{Consensys Quorum},'' \url{https://docs.goquorum.consensys.net}.

\bibitem{pbft_osdi99}
M.~Castro and B.~Liskov, ``Practical {Byzantine} {Fault} {Tolerance},'' in
  \emph{{Symposium} on {Operating} {Systems} {Design} and {Implementation}
  (OSDI)}, 1999.

\bibitem{hbbft_ccs16}
A.~Miller, Y.~Xia, K.~Croman, E.~Shi, and D.~Song, ``{The Honey Badger of BFT
  Protocols},'' in \emph{{ACM} {SIGSAC} {Conference} on {Computer} and
  {Communications} {Security} (CCS)}, 2016.

\bibitem{hotstuff_podc19}
M.~Yin, D.~Malkhi, M.~K. Reiter, G.~G. Gueta, and I.~Abraham, ``{HotStuff}:
  {BFT} {Consensus} with {Linearity} and {Responsiveness},'' in \emph{{ACM}
  {Symposium} on {Principles} of {Distributed} {Computing} (PODC)}, 2019.

\bibitem{dagrider_podc21}
I.~Keidar, E.~Kokoris-Kogias, O.~Naor, and A.~Spiegelman, ``All {You} {Need} is
  {DAG},'' in \emph{{ACM} {Symposium} on {Principles} of {Distributed}
  {Computing} (PODC)}, 2021.

\bibitem{aardvark_nsdi09}
A.~Clement, E.~Wong, L.~Alvisi, M.~Dahlin, and M.~Marchetti, ``{Making
  Byzantine Fault Tolerant Systems Tolerate Byzantine Faults},'' in
  \emph{USENIX Symposium on Networked Systems Design and Implementation
  (NSDI)}, 2009.

\bibitem{raynal_podc14}
A.~Mostefaoui, H.~Moumen, and M.~Raynal, ``{Signature-free Asynchronous
  Byzantine Consensus with T \textless N/3 and O(N2) Messages},'' in
  \emph{{ACM} {Symposium} on {Principles} of {Distributed} {Computing} (PODC)},
  2014.

\bibitem{omniledger_sp18}
E.~Kokoris-Kogias, P.~Jovanovic, L.~Gasser, N.~Gailly, E.~Syta, and B.~Ford,
  ``{OmniLedger}: {A} {Secure}, {Scale}-{Out}, {Decentralized} {Ledger} via
  {Sharding},'' in \emph{{IEEE} {Symposium} on {Security} and {Privacy} (SP)},
  2018.

\bibitem{dfinity-icc-podc22}
J.~Camenisch, M.~Drijvers, T.~Hanke, Y.-A. Pignolet, V.~Shoup, and D.~Williams,
  ``{Internet Computer Consensus},'' in \emph{ACM Symposium on Principles of
  Distributed Computing (PODC)}, 2022.

\bibitem{consensus_number_crypto_dn21}
R.~Guerraoui, P.~Kuznetsov, M.~Monti, M.~Pavlovic, and D.-A. Seredinschi,
  ``{The Consensus Number of a Cryptocurrency},'' \emph{Distributed Computing
  (DC)}, vol.~35, no.~1, 2021.

\bibitem{bracha_brb_87}
G.~Bracha, ``Asynchronous {Byzantine} agreement protocols,'' \emph{Information
  and Computation}, vol.~75, no.~2, 1987.

\bibitem{cachin_verifiable_bcast_crypto01}
C.~Cachin, K.~Kursawe, F.~Petzold, and V.~Shoup, ``Secure and {Efficient}
  {Asynchronous} {Broadcast} {Protocols},'' in \emph{Advances in {Cryptology}
  -- {CRYPTO}}, 2001.

\bibitem{scalable_brb_disc19}
R.~Guerraoui, P.~Kuznetsov, M.~Monti, M.~Pavlovic, and D.-A. Seredinschi,
  ``Scalable {Byzantine} {Reliable} {Broadcast},'' in \emph{{International}
  {Symposium} on {Distributed} {Computing} (DISC)}, 2019.

\bibitem{partially-connected_brb_icdcs21}
S.~Bonomi, J.~Decouchant, G.~Farina, V.~Rahli, and S.~Tixeuil, ``Practical
  {Byzantine} {Reliable} {Broadcast} on {Partially} {Connected} {Networks},''
  in \emph{{IEEE} {International} {Conference} on {Distributed} {Computing}
  {Systems} (ICDCS)}, 2021.

\bibitem{flp85}
M.~J. Fischer, N.~A. Lynch, and M.~S. Paterson, ``{Impossibility of distributed
  consensus with one faulty process},'' \emph{Journal of the ACM (JACM)},
  vol.~32, no.~2, 1985.

\bibitem{astro_dsn20}
D.~Collins, R.~Guerraoui, J.~Komatovic, P.~Kuznetsov, M.~Monti, M.~Pavlovic,
  Y.-A. Pignolet, D.-A. Seredinschi, A.~Tonkikh, and A.~Xygkis, ``{Online
  Payments by Merely Broadcasting Messages},'' in \emph{{IEEE}/{IFIP}
  {International} {Conference} on {Dependable} {Systems} and {Networks} (DSN)},
  2020.

\bibitem{fastpay_aft20}
M.~Baudet, G.~Danezis, and A.~Sonnino, ``{FastPay: High-Performance Byzantine
  Fault Tolerant Settlement},'' in \emph{ACM Conference on Advances in
  Financial Technologies (AFT)}, 2020.

\bibitem{brick_fc21}
Z.~Avarikioti, E.~Kokoris-Kogias, R.~Wattenhofer, and D.~Zindros, ``{Brick:
  Asynchronous Incentive-Compatible Payment Channels},'' in \emph{Financial
  Cryptography and Data Security (FC)}, 2021.

\bibitem{wood_ethereum_2014}
\BIBentryALTinterwordspacing
G.~Wood, ``{Ethereum: A secure decentralised generalised transaction ledger},''
  \emph{Ethereum Project Yellow Paper}, 2014. [Online]. Available:
  \url{http://bitcoinaffiliatelist.com/wp-content/uploads/ethereum.pdf}
\BIBentrySTDinterwordspacing

\bibitem{ouroboros_crypto17}
A.~Kiayias, A.~Russell, B.~David, and R.~Oliynykov, ``Ouroboros: {A} {Provably}
  {Secure} {Proof}-of-{Stake} {Blockchain} {Protocol},'' in \emph{Advances in
  {Cryptology} -- CRYPTO}, 2017.

\bibitem{aguilera2011dynamic}
M.~K. Aguilera, I.~Keidar, D.~Malkhi, and A.~Shraer, ``{Dynamic Atomic Storage
  without Consensus},'' \emph{Journal of the ACM (JACM)}, vol.~58, no.~2, 2011.

\bibitem{alchieri2018efficient}
E.~Alchieri, A.~Bessani, F.~Greve, and J.~da~Silva~Fraga, ``{Efficient and
  Modular Consensus-Free Reconfiguration for Fault-Tolerant Storage},'' in
  \emph{International Conference on Principles of Distributed Systems
  (OPODIS)}, 2018.

\bibitem{async_byz_reconf_disc20}
P.~Kuznetsov and A.~Tonkikh, ``Asynchronous {Reconfiguration} with {Byzantine}
  {Failures},'' in \emph{{International} {Symposium} on {Distributed}
  {Computing} (DISC)}, 2020.

\bibitem{guerraoui2020dynamic}
R.~Guerraoui, J.~Komatovic, P.~Kuznetsov, Y.-A. Pignolet, D.-A. Seredinschi,
  and A.~Tonkikh, ``Dynamic {Byzantine} {Reliable} {Broadcast},'' in
  \emph{{International} {Conference} on {Principles} of {Distributed} {Systems}
  (OPODIS)}, 2020.

\bibitem{peerreview_sosp07}
A.~Haeberlen, P.~Kouznetsov, and P.~Druschel, ``{PeerReview: Practical
  Accountability for Distributed Systems},'' in \emph{ACM SIGOPS Symposium on
  Operating Systems Principles (SOSP)}, 2007.

\bibitem{snow-white-fc19}
P.~Daian, R.~Pass, and E.~Shi, ``{Snow White: Robustly Reconfigurable Consensus
  and Applications to Provably Secure Proof of Stake},'' in \emph{Financial
  Cryptography and Data Security (FC)}, 2019.

\bibitem{async-pos-sss21}
J.~Sliwinski and R.~Wattenhofer, ``{Asynchronous Proof-of-Stake},'' in
  \emph{Stabilization, Safety, and Security of Distributed Systems (SSS)},
  2021.

\bibitem{routing_attacks_cacm21}
Y.~Sun, M.~Apostolaki, H.~Birge-Lee, L.~Vanbever, J.~Rexford, M.~Chiang, and
  P.~Mittal, ``{Securing Internet Applications from Routing Attacks},''
  \emph{Communications of the ACM (CACM)}, vol.~64, no.~6, 2021.

\bibitem{heilman_eclipse_bitcoin_usenixsec15}
E.~Heilman, A.~Kendler, A.~Zohar, and S.~Goldberg, ``{Eclipse Attacks on
  Bitcoin{\textquoteright}s Peer-to-Peer Network},'' in \emph{{USENIX} Security
  Symposium}, 2015.

\bibitem{erebus_eclipse_sp20}
M.~Tran, I.~Choi, G.~J. Moon, A.~V. Vu, and M.~S. Kang, ``{A Stealthier
  Partitioning Attack against Bitcoin Peer-to-Peer Network},'' in \emph{{IEEE}
  {Symposium} on {Security} and {Privacy} (SP)}, 2020.

\bibitem{dwork_pricing_crypto92}
C.~Dwork and M.~Naor, ``{Pricing via Processing or Combatting Junk Mail},'' in
  \emph{Advances in Cryptology -- CRYPTO}, 1992.

\bibitem{back_hashcash_2002}
A.~Back, ``{Hashcash - A Denial of Service Counter-Measure},''
  \url{http://hashcash.org/hashcash.pdf}, 2002.

\bibitem{rem_mining_usenixsec_17}
F.~Zhang, I.~Eyal, R.~Escriva, A.~Juels, and R.~V. Renesse, ``{REM:
  Resource-Efficient Mining for Blockchains},'' in \emph{{USENIX} Security
  Symposium}, 2017.

\bibitem{popersonhood_eurospw17}
M.~Borge, E.~Kokoris-Kogias, P.~Jovanovic, L.~Gasser, N.~Gailly, and B.~Ford,
  ``{Proof-of-Personhood: Redemocratizing Permissionless Cryptocurrencies},''
  in \emph{IEEE European Symposium on Security and Privacy Workshops
  (EuroSPW)}, 2017.

\bibitem{oracular-brb-disc22}
M.~Camaioni, R.~Guerraoui, M.~Monti, and M.~Vidigueira, ``{Oracular Byzantine
  Reliable Broadcast},'' in \emph{International Symposium on Distributed
  Computing (DISC)}, 2022.

\bibitem{agg-bls-sig-eurocrypt03}
D.~Boneh, C.~Gentry, B.~Lynn, and H.~Shacham, ``{Aggregate and Verifiably
  Encrypted Signatures from Bilinear Maps},'' in \emph{Advances in Cryptology
  --- EUROCRYPT}, 2003.

\bibitem{design_cbdc_nber_2020}
S.~Allen, S.~{\v{C}}apkun, I.~Eyal, G.~Fanti, B.~Ford, J.~Grimmelmann,
  A.~Juels, K.~Kostiainen, S.~Meiklejohn, A.~Miller, E.~Prasad, K.~Wüst, and
  F.~Zhang, ``Design {Choices} for {Central} {Bank} {Digital} {Currency}:
  {Policy} and {Technical} {Considerations},'' National Bureau of Economic
  Research, Tech. Rep. w27634, Aug. 2020,
  \url{http://www.nber.org/papers/w27634.pdf}.

\bibitem{chaum_cbdc_snb_2021}
D.~Chaum, C.~Grothoff, and T.~Moser, ``{How to issue a central bank digital
  currency},'' Swiss National Bank, Working Papers, Mar. 2021,
  \url{https://ideas.repec.org/p/snb/snbwpa/2021-03.html}.

\bibitem{curve25519_pkc06}
D.~J. Bernstein, ``Curve25519: {New} {Diffie}-{Hellman} {Speed} {Records},'' in
  \emph{Public {Key} {Cryptography} -- {PKC}}, 2006.

\bibitem{bls_jcrypt04}
D.~Boneh, B.~Lynn, and H.~Shacham, ``{Short Signatures from the Weil
  Pairing},'' \emph{Journal of Cryptology (JCrypt)}, vol.~17, no.~4, 2004.

\bibitem{bdn-asiacrypt18}
D.~Boneh, M.~Drijvers, and G.~Neven, ``{Compact Multi-signatures for Smaller
  Blockchains},'' in \emph{Advances in Cryptology -- ASIACRYPT}, 2018.

\bibitem{rln}
``{The Regulated Liability Network (RLN) Whitepaper on Scalability and
  Performance},''
  \url{https://setl.io/the-regulated-liability-network-rln-whitepaper-on-scalability-and-performance}.

\bibitem{cachin_keynote_disc17}
C.~Cachin and M.~Vukolic, ``Blockchain {Consensus} {Protocols} in the {Wild}
  ({Keynote} {Talk}),'' in \emph{{International} {Symposium} on {Distributed}
  {Computing} (DISC)}, 2017.

\bibitem{ohie_sp20}
H.~Yu, I.~Nikolic, R.~Hou, and P.~Saxena, ``{OHIE}: {Blockchain} {Scaling}
  {Made} {Simple},'' in \emph{{IEEE} {Symposium} on {Security} and {Privacy}
  (SP)}, 2020.

\bibitem{steward_tdsc10}
Y.~Amir, C.~Danilov, D.~Dolev, J.~Kirsch, J.~Lane, C.~Nita-Rotaru, J.~Olsen,
  and D.~Zage, ``Steward: {Scaling} {Byzantine} {Fault}-{Tolerant}
  {Replication} to {Wide} {Area} {Networks},'' \emph{IEEE Transactions on
  Dependable and Secure Computing (TDSC)}, vol.~7, no.~1, 2010.

\bibitem{cachin2011introduction}
C.~Cachin, R.~Guerraoui, and L.~Rodrigues, \emph{{Introduction to Reliable and
  Secure Distributed Programming}}.\hskip 1em plus 0.5em minus 0.4em\relax
  Springer Science, 2011.

\bibitem{astro-report}
D.~Collins, R.~Guerraoui, J.~Komatovic, M.~Monti, A.~Xygkis, M.~Pavlovic,
  P.~Kuznetsov, Y.-A. Pignolet, D.-A. Seredinschi, and A.~Tonkikh, ``{Online
  Payments by Merely Broadcasting Messages (Extended Version)},''
  \emph{arXiv:2004.13184}, 2020.

\bibitem{kuznetsov2021permissionless}
P.~Kuznetsov, Y.-A. Pignolet, P.~Ponomarev, and A.~Tonkikh, ``{Permissionless
  and Asynchronous Asset Transfer},'' in \emph{International Symposium on
  Distributed Computing (DISC)}, 2021.

\bibitem{dls88}
C.~Dwork, N.~A. Lynch, and L.~J. Stockmeyer, ``{Consensus in the presence of
  partial synchrony},'' \emph{Journal of the ACM (JACM)}, vol.~35, no.~2, 1988.

\bibitem{bft-smart_dsn14}
A.~Bessani, J.~Sousa, and E.~E. Alchieri, ``State {Machine} {Replication} for
  the {Masses} with {BFT}-{SMART},'' in \emph{{IEEE}/{IFIP} {International}
  {Conference} on {Dependable} {Systems} and {Networks} (DSN)}, 2014.

\bibitem{lamport2009vertical}
L.~Lamport, D.~Malkhi, and L.~Zhou, ``{Vertical Paxos and Primary-Backup
  Replication},'' in \emph{ACM Symposium on Principles of Distributed Computing
  (PODC)}, 2009.

\bibitem{attiya1995atomic}
H.~Attiya, M.~Herlihy, and O.~Rachman, ``{Atomic Snapshots Using Lattice
  Agreement},'' \emph{Distributed Computing (DC)}, vol.~8, no.~3, 1995.

\bibitem{faleiro2012generalized}
J.~M. Faleiro, S.~Rajamani, K.~Rajan, G.~Ramalingam, and K.~Vaswani,
  ``{Generalized Lattice Agreement},'' in \emph{ACM symposium on Principles of
  distributed computing (PODC)}, 2012.

\bibitem{di2020byzantine}
G.~A. Di~Luna, E.~Anceaume, and L.~Querzoni, ``{Byzantine Generalized Lattice
  Agreement},'' in \emph{IEEE International Parallel and Distributed Processing
  Symposium (IPDPS)}, 2020.

\bibitem{zheng_bla_async_opodis20}
X.~Zheng and V.~Garg, ``{Byzantine Lattice Agreement in Asynchronous
  Systems},'' in \emph{International Conference on Principles of Distributed
  Systems (OPODIS)}, 2021.

\bibitem{byz_generals_toplas82}
L.~Lamport, R.~Shostak, and M.~Pease, ``The {Byzantine} {Generals} {Problem},''
  \emph{ACM Transactions on Programming Languages and Systems (TOPLAS)},
  vol.~4, no.~3, 1982.

\bibitem{mirbft-jsys22}
C.~Stathakopoulou, D.~Tudor, M.~Pavlovic, and M.~Vukolić, ``[{Solution}]
  {Mir}-{BFT}: {Scalable} and {Robust} {BFT} for {Decentralized} {Networks},''
  \emph{Journal of Systems Research (JSys)}, vol.~2, no.~1, 2022.

\bibitem{rfc8032_ed25519}
S.~Josefsson and I.~Liusvaara, ``{Edwards-Curve Digital Signature Algorithm
  (EdDSA)},'' RFC 8032, 2017.

\bibitem{rfc_bls_wip2020-09}
\BIBentryALTinterwordspacing
D.~Boneh, S.~Gorbunov, R.~S. Wahby, H.~Wee, and Z.~Zhang, ``{BLS Signatures},''
  Internet Engineering Task Force, Internet-Draft
  draft-irtf-cfrg-bls-signature-04, Sep. 2020, work in Progress. [Online].
  Available:
  \url{https://datatracker.ietf.org/doc/html/draft-irtf-cfrg-bls-signature-04}
\BIBentrySTDinterwordspacing

\bibitem{baldoni2009implementing}
R.~Baldoni, S.~Bonomi, A.-M. Kermarrec, and M.~Raynal, ``{Implementing a
  Register in a Dynamic Distributed System},'' in \emph{IEEE International
  Conference on Distributed Computing Systems (ICDCS)}, 2009.

\bibitem{proofs-possession-eurocrypt07}
T.~Ristenpart and S.~Yilek, ``{The Power of Proofs-of-Possession: Securing
  Multiparty Signatures against Rogue-Key Attacks},'' in \emph{Advances in
  Cryptology -- EUROCRYPT}, 2007.

\bibitem{douceur_sybil_iptps02}
J.~R. Douceur, ``{The Sybil Attack},'' in \emph{International Workshop on
  Peer-to-Peer Systems (IPTPS)}, 2002.

\bibitem{borisov_computational_p2p06}
N.~Borisov, ``{Computational Puzzles as Sybil Defenses},'' in \emph{{IEEE}
  {International} {Conference} on {Peer}-to-{Peer} {Computing} (P2P)}, 2006.

\bibitem{vdf_crypto18}
D.~Boneh, J.~Bonneau, B.~Bünz, and B.~Fisch, ``Verifiable {Delay}
  {Functions},'' in \emph{Advances in {Cryptology} -- {CRYPTO}}, 2018.

\bibitem{blake3_report2021-11}
J.~O’Connor, J.-P. Aumasson, S.~Neves, and Z.~Wilcox-O’Hearn, ``{BLAKE3 one
  function, fast everywhere},'' 2021,
  \url{https://raw.githubusercontent.com/BLAKE3-team/BLAKE3-specs/master/blake3.pdf}.

\bibitem{tezos-whitepaper}
L.~M. Goodman, ``{Tezos: A Self-Amending Crypto-Ledger},''
  \url{https://tezos.com/position-paper.pdf}, 2014.

\bibitem{sok_game_theory_arxiv20}
S.~Azouvi and A.~Hicks, ``{SoK: Tools for Game Theoretic Models of Security for
  Cryptocurrencies},'' \emph{arXiv:1905.08595}, 2020.

\bibitem{roughgarden_fee_ec21}
T.~Roughgarden, ``{Transaction Fee Mechanism Design},'' in \emph{ACM Conference
  on Economics and Computation (EC)}, 2021.

\bibitem{pos_compounding_fc19}
G.~Fanti, L.~Kogan, S.~Oh, K.~Ruan, P.~Viswanath, and G.~Wang, ``{Compounding
  of Wealth in Proof-of-Stake Cryptocurrencies},'' in \emph{Financial
  Cryptography and Data Security (FC)}, 2019.

\bibitem{code_tokei}
``{Rust crate tokei -- Count your code, quickly},''
  \url{https://lib.rs/crates/tokei}.

\bibitem{code_blake3}
``{Rust crate blake3 -- The BLAKE3 hash function},''
  \url{https://lib.rs/crates/blake3}.

\bibitem{code_ed25519}
``{Rust crate curve25519-dalek -- Implementation of group operations on
  ristretto255 and Curve25519},'' \url{https://lib.rs/crates/curve25519-dalek}.

\bibitem{code_x25519}
``{Rust crate x25519-dalek -- Rust implementation of X25519 elliptic curve
  Diffie-Hellman key exchange, using curve25519-dalek},''
  \url{https://lib.rs/crates/x25519-dalek}.

\bibitem{code_bls}
``{Rust crate blst -- Bindings for blst BLS12-381 C library},''
  \url{https://lib.rs/crates/blst}.

\bibitem{poly1305_fse05}
D.~J. Bernstein, ``The {Poly1305}-{AES} {Message}-{Authentication} {Code},'' in
  \emph{Fast {Software} {Encryption} (FSE)}, 2005.

\bibitem{chacha_sacs08}
------, ``{ChaCha}, a variant of {Salsa20},'' in \emph{The {State} of the {Art}
  of {Stream} {Ciphers} (SACS)}, vol.~8, 2008.

\bibitem{rfc8439_chacha20poly1305}
Y.~Nir and A.~Langley, ``{ChaCha20 and Poly1305 for IETF Protocols},'' RFC
  8439, 2018.

\bibitem{code_chacha20poly1305}
``{Rust crate chacha20poly1305 -- Rust implementation of ChaCha20Poly1305
  Authenticated Encryption with Additional Data Cipher (RFC 8439)},''
  \url{https://lib.rs/crates/chacha20poly1305}.

\bibitem{rainblock_atc21}
S.~Ponnapalli, A.~Shah, S.~Banerjee, D.~Malkhi, A.~Tai, V.~Chidambaram, and
  M.~Wei, ``{RainBlock: Faster Transaction Processing in Public Blockchains},''
  in \emph{{USENIX} Annual Technical Conference (ATC)}, 2021.

\bibitem{ibft_spec}
``{Specification of Istanbul BFT (IBFT)},''
  \url{https://github.com/ethereum/EIPs/issues/650}.

\bibitem{ibft_blockchain19}
R.~Saltini, ``{IBFT Liveness Analysis},'' in \emph{IEEE International
  Conference on Blockchain (ICBC)}, 2019.

\bibitem{quorumperf_blockchain21}
M.~Mazzoni, A.~Corradi, and V.~Di~Nicola, ``{Performance Evaluation of
  Permissioned Blockchains for Financial applications: The ConsenSys Quorum
  case study},'' \emph{Blockchain: Research and Applications}, 2021.

\bibitem{iperf}
``{iPerf - The ultimate speed test tool for TCP, UDP and SCTP},''
  \url{https://iperf.fr/}.

\bibitem{diablo-eurosys23}
V.~Gramoli, R.~Guerraoui, A.~Lebedev, C.~Natoli, and G.~Voron, ``{Diablo: A
  Benchmark Suite for Blockchains},'' in \emph{European Conference on Computer
  Systems (EuroSys)}, 2023.

\bibitem{zerocash_sp14}
E.~Ben~Sasson, A.~Chiesa, C.~Garman, M.~Green, I.~Miers, E.~Tromer, and
  M.~Virza, ``Zerocash: {Decentralized} {Anonymous} {Payments} from
  {Bitcoin},'' in \emph{{IEEE} {Symposium} on {Security} and {Privacy} (SP)},
  2014.

\bibitem{zkledger_nsdi18}
N.~Narula, W.~Vasquez, and M.~Virza, ``{zkLedger}: {Privacy-Preserving}
  auditing for distributed ledgers,'' in \emph{USENIX Symposium on Networked
  Systems Design and Implementation (NSDI)}, 2018.

\bibitem{prifi_popets20}
L.~Barman, I.~Dacosta, M.~Zamani, E.~Zhai, A.~Pyrgelis, B.~Ford, J.~Feigenbaum,
  and J.-P. Hubaux, ``{PriFi}: {Low}-{Latency} {Anonymity} for {Organizational}
  {Networks},'' \emph{Proceedings on Privacy Enhancing Technologies (PoPETS)},
  vol. 2020, no.~4, 2020.

\bibitem{dandelion_journal18}
G.~Fanti, S.~B. Venkatakrishnan, S.~Bakshi, B.~Denby, S.~Bhargava, A.~Miller,
  and P.~Viswanath, ``Dandelion++: {Lightweight} {Cryptocurrency} {Networking}
  with {Formal} {Anonymity} {Guarantees},'' \emph{Proceedings of the ACM on
  Measurement and Analysis of Computing Systems (POMACS)}, vol.~2, no.~2, 2018.

\bibitem{auvolat2020money}
A.~Auvolat, D.~Frey, M.~Raynal, and F.~Taïani, ``{Money Transfer Made Simple:
  a Specification, a Generic Algorithm, and its Proof},'' \emph{Bulletin of
  European Association for Theoretical Computer Science (BEATCS)}, vol. 132,
  2020.

\bibitem{buchman2018latest}
E.~Buchman, J.~Kwon, and Z.~Milosevic, ``{The latest gossip on BFT
  consensus},'' \emph{arXiv:1807.04938}, 2018.

\bibitem{jehl2015smartmerge}
L.~Jehl, R.~Vitenberg, and H.~Meling, ``{Smartmerge: A New Approach to
  Reconfiguration for Atomic Storage},'' in \emph{International Symposium on
  Distributed Computing (DISC)}, 2015.

\bibitem{gafni2015elastic}
E.~Gafni and D.~Malkhi, ``{Elastic Configuration Maintenance via a Parsimonious
  Speculating Snapshot Solution},'' in \emph{International Symposium on
  Distributed Computing (DISC)}, 2015.

\bibitem{kermarrec2007gossiping}
A.-M. Kermarrec and M.~Van~Steen, ``{Gossiping in Distributed Systems},''
  \emph{ACM SIGOPS Operating Systems Review (OSR)}, vol.~41, no.~5, 2007.

\bibitem{di2020synchronous}
G.~A. Di~Luna, E.~Anceaume, S.~Bonomi, and L.~Querzoni, ``{Synchronous
  Byzantine Lattice Agreement in O(log(f) Rounds},'' in \emph{IEEE
  International Conference on Distributed Computing Systems (ICDCS)}, 2020.

\bibitem{zheng2019byzantine}
X.~Zheng and V.~Garg, ``{Byzantine Lattice Agreement in Synchronous Message
  Passing Systems},'' in \emph{International Symposium on Distributed Computing
  (DISC)}, 2020.

\end{thebibliography}

\clearpage
\onecolumn

\tableofcontents

\newpage
\appendices
\section{Overview} \label{appx:introduction}

This appendix defines the problem \sysname solves, its implementation and formal proofs of correctness.

\para{Problem overview}
In its core, \sysname solves the asset-transfer problem~\cite{bitcoin,consensus_number_crypto_dn21,astro_dsn20,wood_ethereum_2014,kuznetsov2021permissionless,auvolat2020money} among its users.
A user of \sysname is able to issue a payment transaction transferring its money to another user, thus decreasing its balance and increasing the balance of the receiving user.
In a nutshell, \sysname guarantees the following two properties:
\begin{compactitem}
    \item \emph{Liveness:} A payment issued by a user is eventually processed, thus decreasing the balance of the issuer and increasing the balance of the receiver.
    
    \item \emph{Safety:} No user can successfully issue multiple payments using the ``same'' money.
\end{compactitem}
The violation of the safety property is traditionally known as \emph{double-spending}~\cite{bitcoin} and represents the core problem in implementing a cryptocurrency.

Since the goal of \sysname is to serve millions of users in an efficient way, users themselves cannot be responsible for processing payments.
To this end, we introduce \emph{servers}: members of the system that are actually responsible for processing payments issued by \sysname's users.
Since \sysname is designed to be a long-lived BFT system, set of servers that actually process transactions must change over time (e.g., failed servers should be replaced with new ones).
Thus, \sysname supports \emph{reconfiguration}: the feature of changing the set of servers running the system while still processing payments.

Importantly, the set of servers running the system is \emph{chosen} by users of \sysname.
Specifically, users select which servers run \sysname through an asynchronous, balance-based \emph{voting} mechanism: a user can support ``candidacy'' of a server for a spot in the ``running set'' by issuing a special vote transaction.
A server eventually joins the set of running servers if money owned by users that have supported its candidacy accounts for more than half of the total amount of money in \sysname.

\para{Inspiration}
\sysname is influenced by a variety of previously developed ideas.
This paragraph briefly presents the work which has inspired the development of \sysname.

The problem of a cryptocurrency rose to prominence with Bitcoin~\cite{bitcoin}.
Bitcoin solves the problem by ensuring that its users agree on a sequence of processed transactions (i.e., payments), i.e., all transactions are totally-ordered.
Since then, many protocols have followed the same approach in solving the problem~\cite{bitcoin-ng_nsdi16,algorand_sosp17,stellar_sosp19,omniledger_sp18,wood_ethereum_2014,buchman2018latest}.

Only recently has it been shown that total order of transactions is not necessary to solve the problem of asset-transfer~\cite{consensus_number_crypto_dn21}, which lies at the core of any cryptocurrency.
Consequently, the reliable broadcast primitive~\cite{cachin2011introduction}, which can be implemented in a completely asynchronous distributed system, suffices for asset-transfer~\cite{astro_dsn20,auvolat2020money}.
\sysname builds upon this approach by adopting the reliable broadcast primitive for processing transactions.

On the other hand, reconfiguration of distributed systems have been studied in both crash-stop~\cite{aguilera2011dynamic,alchieri2018efficient,jehl2015smartmerge,gafni2015elastic} and Byzantine~\cite{async_byz_reconf_disc20,guerraoui2020dynamic} failure model.
Since \sysname solves the cryptocurrency problem, which implies that \sysname assumes the Byzantine failure model, the reconfiguration logic of \sysname relies on~\cite{async_byz_reconf_disc20,guerraoui2020dynamic}.
More concretely, \sysname's reconfiguration mechanism is identical to the one of \textsc{dbrb}~\cite{guerraoui2020dynamic}.

\para{Roadmap}
We start by introducing the system model and preliminaries in \Cref{appx:model}.
\Cref{appx:problem_definition} provides the formal definition of the problem \sysname solves.
Then, we give the implementation of \sysname.
First, we present the view generator primitive, which is the crucial building block of the reconfiguration mechanism of \sysname (\Cref{appx:view_generator}).
\Cref{appx:storage_module} is devoted to the storage module of a server: the module used for storing information.
In \Cref{appx:reconfiguration_module}, we introduce the reconfiguration module of a server: the module that contains the reconfiguration logic of a server.
Next, we introduce the implementation of a \sysname user in \Cref{appx:client}.
\Cref{appx:transaction_module} presents the transaction module of a server: the module used for processing payments issued by users of \sysname.
Finally, \Cref{appx:voting} is devoted to the voting module of a server: the module responsible for ensuring the validity of \sysname's voting mechanism.
\section{System Model \& Preliminaries} \label{appx:model}

\para{Clients \& servers}
We consider a system of \emph{asynchronous} processes: a process proceeds at its own arbitrary (and non-deterministic) speed.
Each process is either a \emph{client} or a \emph{server}.
We denote by $\mathcal{C}$ the (possibly infinite) set of clients and by $\mathcal{R}$ the (possibly infinite) set of servers.
Each process in the system is assigned its \emph{protocol} to follow.

A process is in one of the four states at all times:
\begin{compactitem}
    \item \emph{inactive:} The process has not yet performed a computational step.
    
    \item \emph{obedient:} The process has performed a computational step and followed its protocol.
    
    \item \emph{disobedient:} The process has performed a computational step and (at some point) deviated from its protocol. 
    
    \item \emph{halted:} The process has stopped executing computational steps.
\end{compactitem}
Every process starts its execution in the inactive state.
Once the special $\mathtt{start}$ event triggers at the process (e.g., the process turns on), the process transits to the obedient state.
The process remains in the obedient state as long as it respects the assigned protocol and does not execute the special $\mathtt{stop}$ command.
Once the process deviates from the assigned protocol, the process transits to the disobedient state and stays there permanently.
Finally, once the special $\mathtt{stop}$ command is executed by the process (e.g., the process shuts down) which is in the obedient state, the process halts and stays in the halted state.
If the process performs a computational step afterwards, it transits to the faulty state (and remains in that state forever).
The state diagram of a process is given in \Cref{fig:state_diagram}.

\begin{figure}[th]
    \centering
    \includegraphics[width = 8cm]{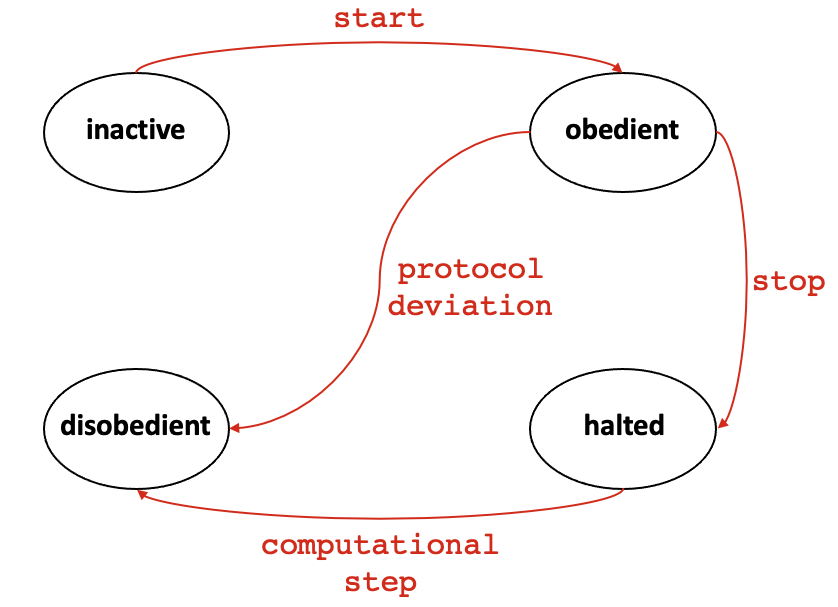}
    \caption{State diagram of a process.}
    \label{fig:state_diagram}
\end{figure}

Once the $\mathtt{start}$ event triggers at a process, we say that the process \emph{starts}.
We assume that only finitely many servers start in any execution.
On the other hand, infinitely many clients are allowed to start.
Any process that starts and does not transit to the disobedient state is said to be \emph{correct}.
Any correct process that never halts is \emph{forever-correct}.
If, at any point, a process transits to the disobedient state, the process is \emph{faulty}.

\para{Cryptographic primitives}
We make use of standard cryptographic primitives.
Specifically, we assume an idealized \emph{public-key infrastructure} (PKI): each process is associated with its own public/private key pair that is used to sign messages and verify signatures of other processes.
Each message has its sender; we denote by $m.\mathtt{sender}$ the sender of message $m$.
All messages are signed by their senders.
We assume that faulty processes cannot forge signatures of correct processes.
Moreover, a message that is not properly signed is immediately discarded by a correct process.
We assume that processes can forward messages to other processes and include messages in other messages; forwarded and included messages can also be authenticated, i.e., their signatures can be verified.\footnote{If an included or forwarded message is not properly signed, the entire message is discarded by a correct process.}

\para{Execution paradigm}
We assume that each process is single-threaded, i.e., the execution of the protocol assigned to each process is single-threaded.
The protocol of a process consists of a set of $\mathtt{upon}$ rules; if a rule guarding a part of the protocol is permanently active at a correct process, we assume that the part of the protocol guarded by that rule is eventually executed.

A discrete global clock is assumed and the range of clock's ticks is the set $\{0\} \cup \mathbb{N}$. 
No process has access to the discrete global clock.
Computation proceeds in atomic steps.
Each step happens at a clock's tick and a process may: (1) send a message, (2) receive a message, (3) get an external input (e.g., trigger of the special $\mathtt{start}$ event), or (4) perform an internal computation.

\para{Communication}
The processes are connected pair-wise by \emph{asynchronous} authenticated channels~\cite{cachin2011introduction}: there does not exist an upper bound on message delays.
The only requirement is that message delays are finite.

We assume the existence of the following communication primitives:
\begin{compactenum}
    \item \emph{Perfect links:} A process can directly send a message $m$ to a process $q$; we write ``\textcolor{blue}{$\mathtt{send}$} $m \text{ } \mathtt{to} \text{ } q$''.
    Perfect links provide the following properties:
    \begin{compactitem}
        \item \emph{Validity:} If a forever-correct process sends a message to a forever-correct process, then the message is eventually received.
        
        \item \emph{Integrity:} A message is received at most once by a correct process and, if the sender is correct, only if the message was sent by the sender.
    \end{compactitem}
    Note that perfect links present a generalization of the perfect links primitive~\cite{cachin2011introduction} to a dynamic environment in which correct processes can halt.
    
    \item \emph{Best-effort broadcast:} This primitive allows a process to send the same message $m$ to a fixed set of processes; we write ``\textcolor{blue}{$\mathtt{broadcast}$} $m \text{ } \mathtt{to} \text{ } S$'', where $S$ is a set of processes.
    The best-effort broadcast ensures:
    \begin{compactitem}
        \item \emph{Validity:} If a forever-correct process broadcasts a message to a set $S$ of processes and a process $q \in S$ is forever-correct, then $q$ eventually receives the message.
        
        \item \emph{Integrity:} A message is received at most once by a correct process and, if the sender is correct, only if the message was broadcast by the sender.
    \end{compactitem}
    The best-effort primitive represents the generalization of the primitive~\cite{cachin2011introduction} to dynamic environments.
    It consists of sending the broadcast message to all processes from the set of receivers using the perfect links.
    
    \item \emph{Gossip:}
    This primitive~\cite{kermarrec2007gossiping} allows a process to disseminate a message $m$ to all processes without specifying the particular set of receivers; we write ``\textcolor{blue}{$\mathtt{gossip}$} $m$''.
    The gossip primitive ensures the following properties:
    \begin{compactitem}
        \item \emph{Validity:} If a forever-correct process gossips a message, then every forever-correct process receives the message infinitely many times.
        
        \item \emph{Integrity:} If a message is received by a correct process and the sender of the message is correct, then the message was previously gossiped by the sender.
    \end{compactitem}
\end{compactenum}

\para{Preliminaries}
We now introduce the concepts of changes, views and sequences.
A \emph{change} $c \in \{+, -\} \times \mathcal{R}$ is a tuple that expresses an intent of a server either to \emph{join} the set of servers running \sysname or to \emph{leave} it (see \Cref{appx:introduction}).
For instance, $c = (+, r)$ denotes the intent of server $r \in \mathcal{R}$ to join, whereas $c' = (-, r')$ denotes the intent of server $r' \in \mathcal{R}$ to leave.

\begin{lstlisting}[
  caption={Change},
  label={lst:change},
  escapechar=?]
?\textbf{Change:}?
    instance ?$c =(s, r)$?, with ?$s \in \{+, -\}$? and ?$r$? is Server
\end{lstlisting}

A \emph{view} is a set of changes.
We associate three functions with any view $v$:
\begin{compactitem}
    \item $\mathtt{members}$: returns \emph{members} of the view $v$, i.e., returns all the servers $r \in \mathcal{R}$ such that $(+, r) \in v$ and $(-, r) \notin v$.
    
    \item $\mathtt{quorum}$: returns the \emph{quorum size} of the view $v$, i.e., it returns $n - \lfloor \frac{n - 1}{3} \rfloor$, where $n = |v.\mathtt{members}()|$.
    
    \item $\mathtt{plurality}$: returns the \emph{plurality} of the view $v$, i.e., it returns $\lfloor \frac{n - 1}{3} \rfloor + 1$, where $n = |v.\mathtt{members}()|$.
\end{compactitem}
We say that views $v_1$ and $v_2$ are \emph{comparable} if and only if $v_1 \subseteq v_2$ or $v_1 \supset v_2$.
If $v_1 \subset v_2$, we say that $v_1$ is \emph{smaller than} $v_2$ and $v_2$ is \emph{greater than} $v_1$.
Finally, we render some views as \emph{valid} (we give the formal definition in \Cref{definition:valid_view}).
 
\begin{lstlisting}[
  caption={View},
  label={lst:view},
  escapechar = ?]
?\textbf{View:}?
    instance ?$v$?, with ?$v$? is Set(Change)

    ?\textbf{function}? (View ?$v$?).members():
        ?\textbf{return}? ?$\{r$? ?$|$? ?$(+, r) \in v$? and ?$(-, r) \notin v\}$?
        
    ?\textbf{function}? (View ?$v$?).quorum():
        ?$n = |v$?.members()?$|$?
        ?\textbf{return}? ?$n - \lfloor \frac{n - 1}{3} \rfloor$?
    
    ?\textbf{function}? (View ?$v$?).plurality():
        ?$n = |v$?.members()?$|$?
        ?\textbf{return}? ?$\lfloor \frac{n - 1}{3} \rfloor + 1$?
        
    ?\textbf{function}? comparable(View ?$v$?, View ?$w$?):
        ?\textbf{return}? ?$v \subseteq w$? or ?$v \supset w$?
        
    ?\textbf{function}? valid(View ?$v$?) ?$ \to \{\top, \bot\}$?
\end{lstlisting}

Lastly, a \emph{sequence} is a set of views with any two views of the sequence being comparable.
We say that a sequence $\mathit{seq}$ \emph{follows} a view $v$ if all views contained in $\mathit{seq}$ are greater than $v$.
Moreover, we define the \emph{first} (resp., \emph{last}) view of a sequence as the smallest (resp., greatest) view that belongs to the sequence.

\begin{lstlisting}[
  caption={Sequence},
  label={lst:sequence},
  escapechar = ?]
?\textbf{Sequence:}?
    instance ?$\mathit{seq}$?, with ?$\mathit{seq}$? is Set(View) such that comparable(?$v, w$?) ?$= \top$?, for all ?$v, w \in \mathit{seq}$?
    
    ?\textbf{function}? (Sequence ?$\mathit{seq}$?).follows(View ?$v$?):
        ?\textbf{return}? for every ?$w \in \mathit{seq}$?: ?$w \supset v$?
    
    ?\textbf{function}? (Sequence ?$\mathit{seq}$?).first():
        ?\textbf{return}? ?$w \in \mathit{seq}$? such that ?$w \subseteq v$?, for every ?$v \in \mathit{seq}$?
        
    ?\textbf{function}? (Sequence ?$\mathit{seq}$?).last():
        ?\textbf{return}? ?$w \in \mathit{seq}$? such that ?$w \supseteq v$?, for every ?$v \in \mathit{seq}$?
\end{lstlisting}
If $\mathit{seq} = \{v_1, ..., v_x\}$ is a sequence, we write $\mathit{seq} = v_1 \to ... \to v_x$, where $v_i \subset v_{i+1}$, for every $1 \leq i < x$.

In summary, a change is a tuple $c = (\mathit{sym} \in \{+, -\}, r \in \mathcal{R})$ and it symbolizes the intent of server $r$ to either join (if $\mathit{sym} = +$) or leave (if $\mathit{sym} = -$).
A view is a set of changes, whereas a sequence is a set of comparable views.

\para{Failure model}
We assume that at most $\lfloor \frac{n - 1}{3} \rfloor$ members of $v$ are faulty, where $n = |v.\mathtt{members()}|$, for every valid view $v$.\footnote{Recall that a view $v$ is valid if and only if $\mathtt{valid}(v) = \top$ (see \Cref{lst:view}).}

\para{Certificates}
Throughout the implementation of \sysname, we often use \emph{certificates}.
A certificate is a construct used to prove that a value is indeed ``produced'' by a specific instance of a distributed primitive.
Therefore, certificates prevent faulty processes to ``lie'' about obtained values.
If $\mathtt{verify\_output}(\mathit{value}, \mathit{instance}, \omega) = \top$, then we say that $\mathit{value}$ is \emph{produced} by $\mathit{instance}$, where $\mathit{instance}$ is an instance of a distributed primitive.

\begin{lstlisting}[
  caption={The $\mathtt{verify\_output}$ function},
  label={lst:verify_output},
  escapechar = ?]
?\textbf{function}? verify_output(Value ?$\mathit{value}$?, Distributed_Primitive_Instance ?$\mathit{instance}$?, Certificate ?$\omega$?) ?$\to \{\top, \bot\}$?
\end{lstlisting}

\para{Constants}
We assume the existence of a specific view known by all processes; we denote this view by $\mathit{genesis}$.
The $\mathit{genesis}$ view does not contain $(-, r)$ changes, i.e., $(-, r) \notin \mathit{genesis}$, for any server $r$.
Moreover, $\mathit{genesis}$ has only finitely many members.
Finally, we assume that the $\mathtt{start}$ event eventually triggers at all members of the $\mathit{genesis}$ view.

\begin{lstlisting}[
  caption={Constants},
  label={lst:constants},
  escapechar=?]
?\textbf{Constants:}?
    View ?$\mathit{genesis}$?
\end{lstlisting}

\section{Problem Definition} \label{appx:problem_definition}

As briefly mentioned in \Cref{appx:introduction}, the core problem solved by \sysname is the asset-transfer problem~\cite{bitcoin,consensus_number_crypto_dn21,astro_dsn20,wood_ethereum_2014,kuznetsov2021permissionless,auvolat2020money} among clients (i.e., users of \sysname).
We present the formal definition of the asset-transfer problem in \Cref{subsection:core_problem}.

Since our goal is to allow \sysname to efficiently serve millions of clients, the burden of transaction processing must be taken away from clients.
We put this burden on a set of servers - \emph{validators}.
In order to allow clients to select validators that actually process payments in \sysname, we develop \emph{asynchronous balance-based voting}: the mechanism that allows clients to vote a server in or out of the set of validators.
We formally define the asynchronous balance-based voting problem in \Cref{subsection:balance_based_voting}.

Finally, the definitive formulation of the problem consists of the asset-transfer problem (\Cref{subsection:core_problem}) enriched by the problem of changing the set of validators based on our voting mechanism (\Cref{subsection:balance_based_voting}).
We merge these in \Cref{subsection:carbon_properties}, thus painting the full picture of the problem \sysname solves.

\subsection{Core Problem: Asset-Transfer} \label{subsection:core_problem}

The asset-transfer problem allows clients to issue transactions in order to (1) transfer some of their money to another clients, thus decreasing their balance, and (2) claim money transferred to them by another clients, thus increasing their balance.
We proceed to formally define the asset-transfer problem.

\para{Transactions}
Clients can issue transactions of the following three types:
\begin{compactitem}
    \item \emph{withdrawal:} A client issues a withdrawal transaction once it aims to transfer some of its money to another client.
    
    \item \emph{deposit:} A client issues a deposit transaction in order to claim money transferred to it by another client.
    
    \item \emph{minting:} A client issues a minting transaction to increase the amount of money owned by it.
\end{compactitem}

\begin{center}
\hfill
\begin{minipage}{.25\textwidth}
\begin{lstlisting}[
  label={lst:withdraw_transaction},
  escapechar=?]
?\textbf{Withdrawal\_Transaction:}?
    Client ?$\mathit{issuer}$?
    Client ?$\mathit{receiver}$?
    Integer ?$\mathit{amount}$?
    Integer ?$\mathit{sn}$?
\end{lstlisting}
\end{minipage}
\hfill
\begin{minipage}{.4\textwidth}
\begin{lstlisting}[
  label={lst:deposit_transaction},
  escapechar=?]
?\textbf{Deposit\_Transaction:}?
    Client ?$\mathit{issuer}$?
    Withdrawal_Transaction? $\mathit{withdrawal}$?
    Integer ?$\mathit{sn}$?
\end{lstlisting}
\end{minipage}
\hfill
\begin{minipage}{.25\textwidth}
\begin{lstlisting}[
  label={lst:mint_transaction},
  escapechar=?]
?\textbf{Minting\_Transaction:}?
    Client ?$\mathit{issuer}$?
    Integer ?$\mathit{amount}$?
    Integer ?$\mathit{sn}$?
\end{lstlisting}
\end{minipage}
\hfill
\hfill
\hfill
\end{center}

Each transaction is parameterized with its issuer (i.e., the client issuing the transaction) and its sequence number.
Withdrawal transactions specify the client that receives the transferred money and the amount of the money being transferred.
Each deposit transaction specifies a corresponding withdrawal transaction, i.e., the transaction that allows the issuer to claim the money.
A minting transaction specifies the amount of money the issuer obtains.
For the sake of simplicity, we assume that sequence numbers of all transactions are greater than $0$ and, if $\mathit{tx}$ is a deposit transaction, then $\mathit{tx}.\mathit{widthdrawal}.\mathit{receiver} = \mathit{tx}.\mathit{issuer}$.
We denote by $\mathcal{T}$ the set of all possible transactions.

\para{Client's balance}
Each client has its \emph{balance}, i.e., the amount of money owned by the client at a given time.
The balance of a client is the sum of claimed (by deposit or minting transactions) money subtracted by the sum of transferred money.
The initial balance of a client $c$ is denoted by $\mathtt{initial\_balance}(c)$.

\para{Commitment proofs}
We denote by $\Sigma_c$ the set of \emph{commitment proofs}.
Intuitively, a commitment proof shows that a specific transaction is succesfully processed in \sysname.
Formally, we define the $\mathtt{verify\_commit}\text{: } \mathcal{T} \times \Sigma_c \to \{\top, \bot\}$ function that maps a transaction and a commitment proof into a boolean value.
A transaction $\mathit{tx} \in \mathcal{T}$ is \emph{commited} if and only if a (correct or faulty) process obtains (i.e., stores in its local memory) a commitment proof $\sigma_c \in \Sigma_c$ such that $\mathtt{verify\_commit}(\mathit{tx}, \sigma_c) = \top$; if the first attainment of such commitment proof occurs at time $t$, we say that $\mathit{tx}$ is committed at time $t' \geq t$.

\para{Logs}
A \emph{log} is a set of transactions.
We say that a log $\mathit{log}$ is \emph{admissible} if and only if:
\begin{compactitem}
    \item for every deposit transaction $\mathit{tx} \in \mathit{log}$, $\mathit{tx}.\mathit{withdrawal} \in \mathit{log}$, i.e., a deposit transaction is ``accompanied'' by the corresponding withdrawal transaction, and
    
    \item for each client $c$, let $\mathit{log}_c = \{\mathit{tx} \,|\, \mathit{tx} \in \mathit{log} \text{ and } \mathit{tx}.\mathit{issuer} = c\}$.
    Then, the following holds:
    \begin{compactitem}
        \item $\mathit{log}_c$ does not contain conflicting transactions, i.e., different transactions with the same sequence number, and
        
        \item all transactions in $\mathit{log}_c$ must have adjacent sequence numbers, i.e., no ``gaps'' can exist, and
        
        \item a single withdrawal transaction is referenced by at most one deposit transaction, i.e., the client cannot claim the ``same'' money multiple times, and
        
        \item the client has enough money for every issued withdrawal transaction.
    \end{compactitem}
\end{compactitem}
More formally, a log $\mathit{log}$ is admissible if and only if $\mathtt{admissible}(\mathit{log}) = \top$, where the $\mathtt{admissible}$ function is defined in \Cref{lst:admissible_transaction_set}.

Before introducing the $\mathtt{admissible}$ function, we define the notion of \emph{precedence} in a set of transactions.
Let $X$ be a set of transactions.
We say that $\mathit{tx}_1 \in X$ precedes $\mathit{tx}_2 \in X$ in $X$ if :
\begin{compactitem}
    \item $\mathit{tx}_1.\mathit{issuer} = \mathit{tx}_2.\mathit{issuer}$ and $\mathit{tx}_1.\mathit{sn} < \mathit{tx}_2.\mathit{sn}$, or
    
    \item $\mathit{tx}_1$ is a withdrawal transaction, $\mathit{tx}_2$ is a deposit transaction and $\mathit{tx}_2.\mathit{withdrawal} = \mathit{tx}_1$, or
    
    \item there exists a transaction $\mathit{tx}$ such that $\mathit{tx}_1$ precedes $\mathit{tx}$ in $X$ and $\mathit{tx}$ precedes $\mathit{tx}_2$ in $X$.
\end{compactitem}

Finally, we give the definition of the $\mathtt{admissible}$ function.
\begin{lstlisting}[
  caption={Log admissibility},
  label={lst:admissible_transaction_set},
  escapechar=?]
?\textbf{function}? admissible(Log ?$\mathit{log}$?):
    // check whether there are cycles
    if ?exists? Transaction ?$\mathit{tx} \in \mathit{log}$? such that ?$\mathit{tx}$? precedes ?$\mathit{tx}$? ?\textcolor{black}{in}? ?$\mathit{log}$? ?\label{line:no_cycles}?
        ?\textbf{return}? ?$\bot$?

    // check whether the withdrawals of all deposit transactions are in ?\textcolor{gray}{$\mathit{log}$}?
    for each Deposit_Transaction ?$\mathit{tx} \in \mathit{log}$?:
        if ?$\mathit{tx}.\mathit{withdrawal} \notin \mathit{log}$?: ?\label{line:check_withdrawal_admissible}?
            ?\textbf{return}? ?$\bot$?

    for each Client ?$c$?:
        Log ?$\mathit{log}_c = \{\mathit{tx} \,|\, \mathit{tx} \in \mathit{log} \text{ and } \mathit{tx}.\mathit{issuer} = c\}$? 
        if admissible_client_log(?$c$?, ?$\mathit{log}_c$?) ?$ = \bot$?: // the client log is not admissible ?\label{line:admissible_client_log}?
            ?\textbf{return}? ?$\bot$?
    ?\textbf{return}? ?$\top$?
  
?\textbf{function}? admissible_client_log(Client ?$c$?, Log ?$\mathit{log}_c$?):
    // if ?\textcolor{gray}{$\mathit{log}_c$}? is empty, then return ?\textcolor{gray}{$\top$}?
    if ?$\mathit{log}_c = \emptyset$?:
        ?\textbf{return}? ?$\top$?
    
    // if there exist conflicting transactions, return ?\textcolor{gray}{$\bot$}?
    if ?exist? Transaction ?$\mathit{tx}_1, \mathit{tx}_2 \in \mathit{log}_c$? such that ?$\mathit{tx}_1 \neq \mathit{tx}_2$? and ?$\mathit{tx}_1.\mathit{sn} = \mathit{tx}_2.\mathit{sn}$?: ?\label{line:check_conflicting_well_formed}?
        ?\textbf{return}? ?$\bot$?
    
    // sequence numbers of transactions must be adjacent
    if ?exists? Transaction ?$\mathit{tx} \in \mathit{log}_c$? and ?exists? Integer ?$\mathit{num}$?, where ?$1 \leq \mathit{num} < \mathit{tx}.\mathit{sn}$?, such that does not exist Transaction ?$\mathit{tx}' \in \mathit{log}_c$? with ?$\mathit{tx}'.\mathit{sn} = \mathit{num}$?: ?\label{line:no_gaps_admissible}?
        ?\textbf{return}? ?$\bot$?
        
    // the client log contains non-conflicting transactions and transactions have adjacent sequence numbers
    // sort the log by the sequence numbers of the transactions in the ascending order
    Array ?$\mathit{array\_log}_c = \mathit{log}_c$?.sort_by(?$\mathit{sn}$?)
    
    // introduce the balance of the client before executing any transaction from its log
    Integer ?$\mathit{balance} = $? initial_balance(?$c$?)
    Set(Withdrawal_Transaction) ?$\mathit{withdrawals} = \emptyset$? // observed withdrawal transactions
    
    // iterate through all transactions in the increasing order of their sequence numbers
    for each Transaction ?$\mathit{tx} \in \mathit{array\_log}_c$?:
        if ?$\mathit{tx}$? is Withdrawal_Transaction: // withdrawal transaction ?\label{line:check_admissible_client_start}?
            if ?$\mathit{tx}.\mathit{amount} > \mathit{balance}$?: // not enough money
                ?$\textbf{return}$? ?$\bot$?
            ?$\mathit{balance} = \mathit{balance} - \mathit{tx}.\mathit{amount}$?
        else if ?$\mathit{tx}$? is Deposit_Transaction: // deposit transaction
            if ?$\mathit{tx}.\mathit{withdrawal} \in \mathit{withdrawals}$?: // withdrawal already used ?\label{line:check_admissible_client_stop}?
                ?\textbf{return}? ?$\bot$?
            else:
                ?$\mathit{withdrawals} = \mathit{withdrawals} \cup \{\mathit{tx}.\mathit{withdrawal}\}$?
            ?$\mathit{balance} = \mathit{balance} + \mathit{tx}.\mathit{withdrawal}.\mathit{amount}$?
        else: // minting transaction
            ?$\mathit{balance} = \mathit{balance} + \mathit{tx}.\mathit{amount}$?
                
    ?\textbf{return}? ?$\top$?
\end{lstlisting}

\para{Interface}
Interface of a client consists of:
\begin{compactitem}
    \item invocation $\mathtt{issue} \text{ } \mathit{tx}$: the client issues a transaction $\mathit{tx}$,
    
    \item invocation $\mathtt{query} \text{ } \mathtt{market}$: the client requests to learn the total amount of money in \sysname,
    
    \item indication $\mathtt{committed} \text{ } \mathit{tx}$: the client learns that a transaction $\mathit{tx}$ is committed, and
    
    \item indication $\mathtt{total} \text{ } \mathtt{money} \text{ } \mathit{money}$: the client learns that the total amount of money in \sysname is $\mathit{money}$.
\end{compactitem}

\para{Rules}
The following is assumed about a correct client $c$:
\begin{compactitem}
    \item The client does not invoke an operation (i.e., does not issue a transaction or does not query about the total amount of money) before the previous invocation is completed.

    \item If a transaction $\mathit{tx}$ is issued by $c$, then $\mathit{tx}.\mathit{issuer} = c$.
    
    \item Client $c$ does not issue a transaction before it learns that all previously issued transactions by $c$ are committed.
    
    \item Let $\mathit{tx}$ be the $i$-th transaction issued by $c$.
    Then, $\mathit{tx}.\mathit{sn} = i$.
    
    \item Client $c$ does not issue a withdrawal transaction $\mathit{tx}$ unless the balance of $c$ at the moment of issuing $\mathit{tx}$ is at least $\mathit{tx}.\mathit{amount}$.
    
    \item Client $c$ does not issue a deposit transaction $\mathit{tx}$ before it learns that $\mathit{tx}.\mathit{withdrawal}$ is committed.
    
    \item Client $c$ does not issue conflicting transactions, i.e., different transactions with the same sequence number.
    
    \item Client $c$ does not issue multiple deposit transactions that reference the same withdrawal transaction.
\end{compactitem}

\para{Properties}
Finally, we introduce the properties of the asset-transfer problem:
\begin{compactitem}
    \item \emph{Commitment Validity:} If a forever-correct client issues a transaction $\mathit{tx}$, then $\mathit{tx}$ is eventually committed.
    
    \item \emph{Commitment Integrity:} If a transaction $\mathit{tx}$ is committed and $\mathit{tx}.\mathit{issuer}$ is correct, then $\mathit{tx}$ was issued by $\mathit{tx}.\mathit{issuer}$.
    
    \item \emph{Commitment Learning:} Let a transaction $\mathit{tx}$ be committed.
    If $\mathit{tx}.\mathit{issuer}$ is forever-correct, then $\mathit{tx}$ eventually learns that $\mathit{tx}$ is committed.
    If $\mathit{tx}$ is a withdrawal transaction and $\mathit{tx}.\mathit{receiver}$ is forever-correct, then $\mathit{tx}.\mathit{receiver}$ eventually learns that $\mathit{tx}$ is committed.
    
    \item \emph{Commitment Admissibility:} Let $\mathit{log}_t$ denote the set of all committed transactions at time $t$.
    For all times $t$, $\mathit{log}_{t}$ is admissible.
    
    \item \emph{Query Validity:} If a forever-correct client requests to learn the total amount of money in \sysname, the client eventually learns.
    
    \item \emph{Query Safety:} Let $\mathit{mints}_{\infty}$ denote the set of all committed minting transactions.
    If a correct client learns that the total amount of money is $\mathit{money}$, then $\mathit{money} \leq \sum\limits_{\mathit{tx} \in \mathit{mints}_{\infty}} \mathit{tx}.\mathit{amount}$.
    
    \item \emph{Query Liveness:} Let $\mathit{mints}$ denote a finite set of committed minting transactions.
    If a forever-correct client requests infinitely many times to learn the total amount of money in \sysname, the client eventually learns $\mathit{money} \geq \sum\limits_{\mathit{tx} \in \mathit{mints}} \mathit{tx}.\mathit{amount}$.
\end{compactitem}

\subsection{Asynchronous Balance-Based Voting} \label{subsection:balance_based_voting}

The asynchronous balance-based voting mechanism revolves around the clients being able to express their support for a specific \emph{motion} by issuing \emph{vote} transactions.
Intuitively, a motion \emph{passes} if ``enough'' clients (i.e., clients that own ``enough'' money) have expressed their support for the motion.
We formally define the problem below.

\para{Vote transactions}
As withdrawal and deposit transactions, each vote transaction is parametrized with its issuer and its sequence number.
Moreover, each vote transaction specifies the motion for which the issuer votes by issuing the transaction.

\begin{center}
\begin{minipage}{.25\textwidth}
\begin{lstlisting}[
  label={lst:vote_transaction},
  escapechar=?]
?\textbf{Vote\_Transaction:}?
    Client ?$\mathit{issuer}$?
    Motion ?$\mathit{motion}$?
    Integer ?$\mathit{sn}$?
\end{lstlisting}
\end{minipage}
\end{center}

\para{Motions}
A motion is any statement that can be voted for by clients.
We say that a motion $\mathit{mot}$ is \emph{proposed} at time $t$ if and only if a correct process (either client or a server) obtains a vote transaction for $\mathit{mot}$ at time $t$ and no correct process has obtained a vote transaction for $\mathit{mot}$ before $t$.
Intuitively, a motion is proposed at time $t$ if and only if the first correct process to observe a client voting for the motion observes such voting action at time $t$.

A log \emph{supports} a motion if a ``majority'' of the money votes for the motion.
In order to formally define when a log supports a motion, we first define how the balance of each client is calculated in a log.
Moreover, we define the total amount of money in a log.
Finally, we define the amount of money in a log ``voting'' for a motion.

\begin{lstlisting}[
  caption={Log},
  label={lst:log},
  escapechar = ?]
?\textbf{Log:}?
    instance ?$\mathit{log}$?, with ?$\mathit{log}$? is Set(Transaction)
    
    ?\textbf{function}? (Log ?$\mathit{log}$?).?balance?(Client ?$c$?):
        Log ?$\mathit{log}_c = \{\mathit{tx} \,|\, \mathit{tx} \in \mathit{log} \text{ and } \mathit{tx}.\mathit{issuer} = c\}$?
        ?$\mathit{balance} = $? initial_balance(?$c$?)
        Array ?$\mathit{array\_log}_c = \mathit{log}_c$?.sort_by(?$\mathit{sn}$?) // sort the log by the sequence numbers
        for each Transaction ?$\mathit{tx} \in \mathit{array\_log}_c$?:
            if ?$\mathit{tx}$? is Withdrawal_Transaction:
                ?$\mathit{balance} = \mathit{balance} - \mathit{tx}.\mathit{amount}$?
            else if ?$\mathit{tx}$? is Deposit_Transaction:
                ?$\mathit{balance} = \mathit{balance} + \mathit{tx}.\mathit{witdrawal}.\mathit{amount}$?
            else if ?$\mathit{tx}$? is Minting_Transaction:
                ?$\mathit{balance} = \mathit{balance} + \mathit{tx}.\mathit{amount}$?
        ?\textbf{return}? ?$\mathit{balance}$?
        
    ?\textbf{function}? (Log ?$\mathit{log}$?).total_money():
        Set(Transaction) ?$\mathit{mints} = \{\mathit{tx} \,|\, \mathit{tx} \in \mathit{log} \text{ and } \mathit{tx} \text{ is Minting\_Transaction}\}$?
        ?\textbf{return}? ?$\sum\limits_{\mathit{tx} \in \mathit{mints}} \mathit{tx}.\mathit{amount}$? // the sum of all minting transactions
        
    ?\textbf{function}? (Log ?$\mathit{log}$?).voted_for(Motion ?$\mathit{mot}$?):
        Set(Client) ?$\mathit{voted\_for} = \{c \,|\, \mathit{tx} \in \mathit{log} \text{ and } \mathit{tx} \text{ is Vote\_Transaction and } \mathit{tx}.\mathit{motion} = \mathit{mot} \text{ and } \mathit{tx}.\mathit{issuer} = c\}$?
        ?\textbf{return}? ?$\sum\limits_{c \in \mathit{voted\_for}} \mathit{log}\text{.balance(}c\text{)}$?
\end{lstlisting}

Finally, we are ready to formally define when a log supports a motion.
A log $\mathit{log}$ supports a motion $\mathit{mot}$ if and only if $\mathtt{support}(\mathit{log}, \mathit{mot}) = \top$, where the $\mathtt{support}$ function is defined in the following manner:

\begin{lstlisting}[
  caption={Motion support},
  label={lst:log_supports_motion},
  escapechar=?]
?\textbf{function}? support(Log ?$\mathit{log}$?, Motion ?$\mathit{mot}$?):
    ?\textbf{return}? ?$\mathit{log}$?.voted_for(?$\mathit{mot}$?)?$> 0$? and ?$\mathit{log}\text{.voted\_for(}\mathit{mot}\text{)} \geq \frac{\mathit{log}\text{.total\_money()}}{2}$?
\end{lstlisting}

\para{Voting proofs}
We denote by $\Sigma_v$ the set of \emph{voting proofs}.
Intuitively, a voting proof shows that a specific motion passes.
Formally, a motion $\mathit{mot}$ \emph{passes} if and only if a (correct or faulty) process obtains a voting proof $\sigma_v \in \Sigma_v$ such that\\ $\mathtt{verify\_voting}(\mathit{mot}, \sigma_v) = \top$.
We use voting proofs in \Cref{subsection:carbon_properties} in order to abstract away the asynchronous balance-based voting problem (for the sake of simplicity).

\para{Rules}
We assume that a correct client does not vote for the same motion more than once.

\para{Properties}
Finally, we define the properties of the asynchronous balance-based voting primitive:
\begin{compactitem}
    \item \emph{Voting Safety:} Let a motion $\mathit{mot}$ be proposed at time $t$ and let $\mathit{mot}$ pass.
    Let a correct client learn at time $t$ that the total amount of money in \sysname is $\mathit{money}$ and let $\mathit{log}_{\infty}$ denote the set of all committed transactions.
    There exists an admissible log $\mathit{log}_{\mathit{pass}} \subseteq \mathit{log}_{\infty}$ such that $\mathit{log}_{\mathit{pass}}.\mathtt{voted\_for}(\mathit{mot}) \geq \frac{money}{2}$.
    
    \item \emph{Voting Liveness:} Let a motion $\mathit{mot}$ be proposed and let $\mathit{log}_{\infty}$ denote the set of all committed transactions.
    If there exists an admissible log $\mathit{log}_{\mathit{pass}} \subseteq \mathit{log}_{\infty}$ such that $\mathit{log}_{\mathit{greater}}$ supports $\mathit{mot}$, for every admissible log $\mathit{log}_{\mathit{greater}}$, where $\mathit{log}_{\mathit{pass}} \subseteq \mathit{log}_{\mathit{greater}} \subseteq \mathit{log}_{\infty}$, then $\mathit{mot}$ passes.
\end{compactitem}
Intuitively, the voting safety property guarantees that, if a motion passes, more than half of the money held in the system ``at the moment'' of proposing the motion has ``voted'' for the motion to pass.
Voting liveness guarantees that a motion passes if, after some time, the motion is ``forever-supported''. 

\subsection{\sysname = Asset-Transfer + Voting-Driven Reconfiguration} \label{subsection:carbon_properties}

Finally, we formally define the ``complete'' problem solved by \sysname.

\para{Validators}
A correct server \emph{joins} once the server triggers the special $\mathtt{joined}$ event; if the special $\mathtt{joined}$ event is triggered at time $t$, the server joins at time $t$.
Importantly, a correct server joins (i.e., triggers the special $\mathtt{joined}$ event) \emph{only after} it has fully executed the joining subprotocol (see line~\ref{line:joined} of \Cref{lst:reconfiguration_view_transition}) initiated by the invocation of the $\mathtt{join} \text{ } \mathtt{with} \text{ } \mathtt{voting} \text{ } \mathtt{proof} \text{ } \sigma_v$ operation (see the ``Server's interface'' paragraph below).
(If a correct server is a member of the $\mathit{genesis}$ view, the server triggers joins upon starting Carbon at line~\ref{line:join_init} of \Cref{lst:reconfiguration_initialization}.)
A faulty server $r$ joins if a correct server triggers the special $r \text{ } \mathtt{joined}$ event and no correct server has previously triggered the special $r \text{ } \mathtt{left}$ event; if the first $r \text{ } \mathtt{joined}$ event that satisfies the aforementioned condition is triggered at time $t$, the server joins at time $t$.

A correct server \emph{leaves} once the server triggers the special $\mathtt{left}$ event; if the special $\mathtt{left}$ event is triggered at time $t$, the server leaves at time $t$.
Importantly, a correct server leaves (i.e., triggers the special $\mathtt{left}$ event) \emph{only after} it has fully executed the leaving subprotocol (see line~\ref{line:left} of \Cref{lst:reconfiguration_view_transition}) initiated by the invocation of the $\mathtt{leave}$ operation (see the ``Server's interface'' paragraph below).

A faulty server $r$ leaves if a correct server triggers the special $r \text{ } \mathtt{left}$ event; if the first $r \text{ } \mathtt{left}$ event triggered by correct servers is triggered at time $t$, the server leaves at time $t$.
In summary, a correct server joins (resp., leaves) once the server triggers the special $\mathtt{joined}$ (resp., $\mathtt{left}$) event; a faulty server $r$ joins (resp., leaves) once a correct server triggers the special $r \text{ } \mathtt{joined}$ (resp., $r \text{ } \mathtt{left}$) event.

A server $r$ is a \emph{validator} at time $t$ if and only if the server joins by time $t$ and does not leave by time $t$.
Moreover, a server is a \emph{forever-validator} if and only if it joins and never leaves.

\para{Commitment proofs - revisited}

Since the notion of a validator is introduced, we aim to ensure that transactions are ``processed'' only by validators of \sysname.
Intuitively, if a transaction $\mathit{tx}$ is issued at time $t$ and a server $r$ left before time $t$, our goal is to guarantee that $r$ cannot be a server that helps $\mathit{tx}$ to be processed.
In order to formally define such property of \sysname, we introduce a set of \emph{signers} of a commitment proof: the set of servers that have ``created'' a commitment proof.
We denote by $\sigma_c.\mathtt{signers}$ the set of signers of a commitment proof $\sigma_c$.

\para{Server's interface}
Interface of a server consists of two invocations:
\begin{compactitem}
    \item Invocation ``$\mathtt{join} \text{ } \mathtt{with} \text{ } \mathtt{voting} \text{ } \mathtt{proof} \text{ } \sigma_v$'':
    The server requests to join (i.e., to become a validator) and provides the voting proof $\sigma_v$ such that $\mathtt{verify\_voting}(\text{``add server } r\text{''}, \sigma_v) = \top$.
    
    \item Invocation ``$\mathtt{leave}$'':
    The server requests to leave (i.e., to stop being a validator).
\end{compactitem}

\para{Server's rules}
The following is assumed about a correct server $r$:
\begin{compactenum}
    \item If $r \in \mathit{genesis}.\mathtt{members()}$, then $r$ does not request to join.

    \item Server $r$ requests to join at most once.

    \item If $r$ requests to join and provides the voting proof $\sigma_v$, then $\mathtt{verify\_voting}(\text{``add server } r\text{''}, \sigma_v) = \top$.
    
    \item If $r$ requests to join, then $r$ does not halt before it leaves.

    \item If $r$ requests to leave, $r$ has previously joined.
    
    \item If $r$ leaves, then $r$ immediately halts.
\end{compactenum}

\para{Properties of \sysname}
Finally, we introduce the complete list of properties satisfied by \sysname.
We say that a transaction $\mathit{tx}$ is \emph{issued} at time $t$ if and only if:
\begin{compactitem}
    \item if $\mathit{tx}.\mathit{issuer}$ is correct, then $\mathtt{issue} \text{ } \mathit{tx}$ was invoked at time $t$ (see \Cref{subsection:core_problem}),
    
    \item if $\mathit{tx}.\mathit{issuer}$ is not correct, then the first correct process that obtains $\mathit{tx}$ obtains $\mathit{tx}$ at time $t$.
\end{compactitem}
If the motion ``add server $r$'' passes, we say that $r$ is \emph{voted in}.
Similarly, if the motion ``remove server $r$'' passes, we say that $r$ is \emph{voted out}.
We are ready to define the properties of \sysname:
\begin{compactitem}
    \item \emph{Commitment Validity:} If a forever-correct client issues a transaction $\mathit{tx}$, then $\mathit{tx}$ is eventually committed.
    
    \item \emph{Commitment Integrity:} If a transaction $\mathit{tx}$ is committed and $\mathit{tx}.\mathit{issuer}$ is correct, then $\mathit{tx}$ was issued by $\mathit{tx}.\mathit{issuer}$.
    
    \item \emph{Commitment Learning:} Let a transaction $\mathit{tx}$ be committed.
    If $\mathit{tx}.\mathit{issuer}$ is forever-correct, then $\mathit{tx}$ eventually learns that $\mathit{tx}$ is committed.
    If $\mathit{tx}$ is a withdrawal transaction and $\mathit{tx}.\mathit{receiver}$ is forever-correct, then $\mathit{tx}.\mathit{receiver}$ eventually learns that $\mathit{tx}$ is committed.
    
    \item \emph{Commitment Admissibility:} Let $\mathit{log}_t$ denote the set of all committed transactions at time $t$.
    At all times $t$, $\mathit{log}_t$ is admissible.    
    
    \item \emph{Commitment Signing:} Let a transaction $\mathit{tx}$ be issued at time $t$.
    Let a commitment proof $\sigma_c$ be obtained at time $t'$, where $\mathtt{verify\_commit}(\mathit{tx}, \sigma_c) = \top$.
    If $a \in \sigma_c.\mathtt{signers}$, then $a$ is a validator at time $t_{\mathit{val}}$, where $t \leq t_{\mathit{val}} \leq t'$.
    
    \item \emph{Query Validity:} If a forever-correct client requests to learn the total amount of money in \sysname, the client eventually learns.
    
    \item \emph{Query Safety:} Let $\mathit{mints}_{\infty}$ denote the set of all committed minting transactions.
    If a correct client learns that the total amount of money is $\mathit{money}$, then $\mathit{money} \leq \sum\limits_{\mathit{tx} \in \mathit{mints}_{\infty}} \mathit{tx}.\mathit{amount}$.
    
    \item \emph{Query Liveness:} Let $\mathit{mints}$ denote a finite set of committed minting transactions.
    If a forever-correct client requests infinitely many times to learn the total amount of money in \sysname, the client eventually learns $\mathit{money} \geq \sum\limits_{\mathit{tx} \in \mathit{mints}} \mathit{tx}.\mathit{amount}$.
    
    \item \emph{Join Safety:} If a server $r$ joins, then $r \in \mathit{genesis}.\mathtt{members()}$ or $r$ is voted in.
        
    \item \emph{Leave Safety:} If a correct server leaves, then the server requested to leave or is voted out.
     
    \item \emph{Join Liveness:} If a correct server requests to join, the server eventually joins (i.e., becomes a validator).
        
    \item \emph{Leave Liveness:} If a correct server requests to leave, the server eventually leaves (i.e., stops being a validator).
        
    \item \emph{Removal Liveness:} If a correct forever-validator obtains a voting proof $\sigma_v$ such that $\mathtt{verify\_voting}(\text{``remove server } r\text{''}, \sigma_v) = \top$ and $r$ is a validator at some time (i.e., $r$ has joined), then $r$ eventually leaves (i.e., stops being a validator).
\end{compactitem}

\section{View Generator} \label{appx:view_generator}

The first primitive we present, which is used in the implementation of \sysname, is the \emph{view generator} primitive~\cite{alchieri2018efficient}.
The view generator primitive is used by the servers exclusively.

Each instance of the view generator is parameterized with a view.
We denote by $\mathit{vg}(v)$ the instance of the view generator primitive parameterized with view $v$.
Servers can (1) start $\mathit{vg}(v)$ with or without a proposal, (2) stop $\mathit{vg}(v)$, (3) propose a set of views (along with some additional information) to $\mathit{vg}(v)$, and (4) decide a set of views from $\mathit{vg}(v)$.
Finally, we assume that (at least) a quorum of members of view $v$ are correct.

We assume an existence of \emph{evidences}, which are (for now) abstract constructs.
The input of a view generator instance is a tuple $(v, \mathit{set}, \epsilon)$, where $v$ is a view, $\mathit{set}$ is a set of views and and $\epsilon$ is a proof.
For every tuple $\mathit{input} = (v, \mathit{set}, \epsilon)$ and every view $v'$, we define $\mathtt{valid}(\mathit{input}, v')$ function that returns either $\top$ or $\bot$.
Again, the concrete implementation of the $\mathtt{valid}$ function is not given for now.

\begin{lstlisting}[
  caption={(View, Set, Evidence) tuple},
  label={lst:view_set_proof},
  escapechar = ?]
?\textbf{View, Set, Evidence:}?
    instance ?$(v, \mathit{set}, \epsilon)$?, with ?$v$? is View, ?$\mathit{set}$? is Set(View) and ?$\epsilon$? is Evidence
    
    ?\textbf{function}? valid(View, Set, Evidence ?$(v, \mathit{set}, \epsilon)$?, View ?$v'$?) ?$\to \{\top, \bot\}$?
\end{lstlisting}

We are now ready to introduce the interface of an instance of the view generator primitive.

\begin{lstlisting}[
  caption={View generator - interface},
  label={lst:view_generator_interface},
  escapechar = ?]
?\textbf{View Generator:}?
    instance ?$\mathit{vg}(v)$?, with ?$v$? is View
    
    Interface?\textcolor{plainorange}{:}?
        ?\textcolor{blue}{Requests:}?
            ?$\bullet$? <?$\mathit{vg}(v)$?.Start | View, Set, Evidence ?$(v', \mathit{set}', \epsilon')$?>: starts ?$\mathit{vg}(v)$?; if ?$(v', \mathit{set}', \epsilon') = \bot$?, then ?$\mathit{vg}(v)$? is started without a proposal. If ?$(v', \mathit{set}', \epsilon') \neq \bot$?, then the server has proposed to ?$\mathit{vg}(v)$?.
            
            ?$\bullet$? <?$\mathit{vg}(v)$?.Stop>: stops ?$\mathit{vg}(v)$?.
            
            ?$\bullet$? <?$\mathit{vg}(v)$?.Propose | View, Set, Evidence ?$(v', \mathit{set}', \epsilon')$?>: proposes ?$(v', \mathit{set}', \epsilon')$? to ?$\mathit{vg}(v)$?.
            
        ?\textcolor{blue}{Indications:}?
            ?$\bullet$? <?$\mathit{vg}(v)$?.Decide | Set(View) ?$\mathit{set}'$?, Certificate ?$\omega'$?>: indicates that ?$\mathit{set}'$? is decided with the certificate ?$\omega'$?.
            
        ?\textcolor{blue}{Rules:}?
            1) No correct server ?$r \notin v$?.members() invokes any request to ?$\mathit{vg}(v)$?.
            2) Every correct server invokes each request at most once.
            3) If a correct server invokes <?$\mathit{vg}(v)$?.Start | ?$(v', \mathit{set}', \epsilon') \neq \bot$?>, then the server does not invoke <?$\mathit{vg}(v)$?.Propose>.
            4) No correct server invokes <?$\mathit{vg}(v)$?.Stop> or <?$\mathit{vg}(v)$?.Propose> unless it has already invoked?\\?<?$\mathit{vg}(v)$?.Start>.
            5) A correct server invokes <?$\mathit{vg}(v)$?.Stop> before halting.
            6) If a correct server invokes <?$\mathit{vg}(v)$?.Start | ?$(v', \mathit{set}', \epsilon') \neq \bot$?> or <?$\mathit{vg}(v)$?.Propose | ?$(v', \mathit{set}', \epsilon')$?>, then valid(?$(v', \mathit{set}', \epsilon')$?, ?$v$?) ?$ = \top$?. ?\label{line:rule_seq_sequence}?
\end{lstlisting}

We denote by $E_t^v$ the set of evidences such that, for every $\epsilon \in E_t^v$, the following holds: (1) $\epsilon$ is obtained (i.e., stored in the local memory) by time $t$ by a (correct or faulty) process\footnote{Once an evidence is obtained by a (correct or faulty) process, the evidence is obtained by any time in the future. In other words, an attainment of an evidence is \emph{irrevocable}.}, and (2) there exists a view $v'$ and a set of views $\mathit{set}'$ that satisfy $\mathtt{valid}((v', \mathit{set}', \epsilon), v) = \top$.
By definition, $E_t^v \subseteq E_{t'}^v$, for any times $t, t'$ with $t' > t$.

Now, let $\Lambda_t^v = \{(v', \mathit{set}', \epsilon') \,|\, \mathtt{valid}((v', \mathit{set}', \epsilon'), v) = \top \text{ and } \epsilon' \in E_t^v\}$.
Due to the fact that $E_t^v \subseteq E_{t'}^{v}$, for any times $t, t'$ with $t' > t$, we have that $\Lambda_t^v \subseteq \Lambda_{t'}^v$.

Moreover, we define the $\mathit{preconditions}(v, t)$ function such that $\mathit{preconditions}(v, t) = \top$ if and only if:
\begin{compactitem}
    \item for every $(v', \mathit{set}', \epsilon') \in \Lambda_t^v$, $\mathit{set}'$ is a sequence, and 
    
    \item for every $(v', \mathit{set}' = \emptyset, \epsilon') \in \Lambda_t^v$, $v \subset v'$, and
    
    \item for every $(v', \mathit{set}' \neq \emptyset, \epsilon') \in \Lambda_t^v$, $\mathit{set}'.\mathtt{follows}(v) = \top$ and $v' \in \mathit{set}'$, and
    
    \item if $(+, r) \notin v$, for some server $r$, then, for every $(v', \mathit{set}', \epsilon') \in \Lambda_t^v$, $(-, r) \notin v'$ and, if $\mathit{set}' \neq \emptyset$, $(-, r) \notin \mathit{set}'.\mathtt{last}()$, and
    
    \item for any pair $(v', \mathit{set}', \epsilon'), (v'', \mathit{set}'', \epsilon'') \in \Lambda_t^v$, either $\mathit{set}' \subseteq \mathit{set}''$ or $\mathit{set}' \supset \mathit{set}''$.
\end{compactitem}
Since $\Lambda_t^v \subseteq \Lambda_{t'}^v$, for any times $t, t'$ with $t' > t$, we have that $\mathit{preconditions}(v, t') = \top \implies \mathit{preconditions}(v, t) = \top$.

Finally, if a (correct or faulty) process obtains (i.e., stores in its local memory) a certificate $\omega$, where $\mathtt{verify\_output}(\mathit{set}, \mathit{vg}(v), \omega) = \top$, for some set $\mathit{set}$ of views, then we say that $\mathit{set}$ is \emph{committed} by $\mathit{vg}(v)$.
Moreover, $\mathit{set}$ is committed at time $t$ if and only if the certificate attainment happens at time $t' \leq t$.
Note that we assume that an attainment of a certificate is irrevocable; hence, if $\mathit{set}$ is committed at time $t$, then $\mathit{set}$ is committed at any time $t' > t$.

We are now ready to define the properties of $\mathit{vg}(v)$.

\begin{lstlisting}[
  caption={View generator - properties},
  label={lst:view_generator_properties},
  escapechar = ?]
?\textbf{View Generator:}?
    instance ?$\mathit{vg}(v)$?, with ?$v$? is View
    
    ?\textcolor{plainorange}{Properties:}?
        ?\textbf{- Integrity:}? No correct server ?$r \notin v$?.members() receives any indication from ?$\mathit{vg}(v)$?.
        ?\textbf{- Comparability:}? If ?$\mathit{set}_1$? and ?$\mathit{set}_2$? are committed by ?$\mathit{vg}(v)$?, then either ?$\mathit{set}_1 \subseteq \mathit{set}_2$? or ?$\mathit{set}_1 \supset \mathit{set}_2$?.
        ?\textbf{- Validity:}? If ?$\mathit{set}$? is committed by ?$\mathit{vg}(v)$? at time ?$t$? and ?$\mathit{preconditions}(v, t - 1) = \top$?, then (1) ?$\mathit{set} \neq \emptyset$?, (2) ?$\mathit{set}$? is a sequence, and (3) ?$\mathit{set}$?.follows(?$v$?) = ?$\top$?.
        ?\textbf{- Membership Validity:}? If ?$\mathit{set}$? is committed by ?$\mathit{vg}(v)$? at time ?$t$?, ?$(+, r) \notin v$?, for some server ?$r$?, and ?$\mathit{preconditions}(v, t - 1) = \top$?, then ?$(-, r) \notin \mathit{set}$?.last().
        ?\textbf{- Safety:}? Let ?$\mathit{set}$? be committed by ?$\mathit{vg}(v)$? at time ?$t$? and ?let? ?$\mathit{preconditions}(v, t - 1) = \top$?. Then, ?$(v^*, \emptyset, \epsilon^*) \in \Lambda_{t-1}^v$? or ?$(v^*, \mathit{set}, \epsilon^*) \in \Lambda_{t-1}^v$?, for some view ?$v^*$? and some evidence ?$\epsilon^*$?.
        ?\textbf{- Decision Certification:}? If a correct server decides ?$\mathit{set}$? with a certificate ?$\omega$? from ?$\mathit{vg}(v)$?, then verify_output(?$\mathit{set}$?, ?$\mathit{vg}(v)$?, ?$\omega$?) ?$= \top$?.
        ?\textbf{- Decision Permission:}? If a set of views is committed by ?$\mathit{vg}(v)$?, then at least ?$v$?.plurality() of correct members of ?$v$? have previously started ?$\mathit{vg}(v)$?.
        ?\textbf{- Bounded Decisions:}? Let ?$\mathit{SET} = \{\mathit{set} \,|\, \mathit{set} \text{ is committed by } \mathit{vg}(v)\}$?. Then, ?$|\mathit{SET}| < \infty$?.
        ?\textbf{- Liveness:}? If all correct servers ?$r \in v$?.members() start ?$\mathit{vg}(v)$?, at least one correct server proposes to ?$\mathit{vg}(v)$? and no correct server stops ?$\mathit{vg}(v)$?, then all correct servers ?$r \in v$?.members() eventually decide from ?$\mathit{vg}(v)$?.
\end{lstlisting}

In order to present the view generator primitive, we start by introducing \emph{reconfiguration lattice agreement} (RLA).

\subsection{Reconfiguration Lattice Agreement (RLA)}

The RLA primitive is extremely similar to the Byzantine lattice agreement~\cite{di2020byzantine,di2020synchronous,zheng2019byzantine} primitive.
In the Byzantine lattice agreement (BLA) primitive, each process $p$ starts with its proposal $\mathit{pro}_p \in \mathcal{V}$ and eventually decides $\mathit{dec}_p \in \mathcal{V}$.
Values from the $\mathcal{V}$ set form a join semi-lattice $L = (\mathcal{V}, \oplus)$ for a commutative join operation $\oplus$.
That is, for any two $v_1, v_2 \in \mathcal{V}$, $v_1 \leq v_2$ if and only if $v_1 \oplus v_2 = v_2$.
If $V = \{v_1, v_2, ..., v_n\} \subseteq \mathcal{V}$, $\bigoplus(V) = v_1 \oplus v_2 \oplus ... \oplus v_n$.
BLA that tolerates up to $f$ Byzantine failures ensures:
\begin{compactitem}
    \item \emph{Liveness:} Each correct process eventually outputs its decision value $\mathit{dec} \in \mathcal{V}$.
    
    \item \emph{Stability:} Each correct process outputs a unique decision value.
    
    \item \emph{Comparability:} Given any two correct processes $p, q$, either $\mathit{dec}_p \leq \mathit{dec}_q$ or $\mathit{dec}_q \leq \mathit{dec}_p$.
    
    \item \emph{Inclusivity:} Given any correct process $p$, we have that $\mathit{pro}_p \leq \mathit{dec}_p$.
    
    \item \emph{Non-Triviality:} Given any correct process $p$, we have that $\mathit{dec}_p \leq \bigoplus(X \cup B)$, where $X$ is the set of proposed values of all correct processes ($X = \{\mathit{pro}_q \,|\, q \text{ is correct}\}$) and $B \subseteq \mathcal{V}$ satisfies $|B| \leq f$.
\end{compactitem}

RLA considers a specific semi-lattice over sets, i.e., $\mathcal{V}$ is a set of sets in RLA and $\oplus$ is the union operation.
More specifically, $\mathcal{V}$ is a set of sets of objects of abstract type $\mathtt{X}$, where the function $\mathtt{valid}$ is defined for each object of type $X$ and each view.
Finally, in contrast to BLA, not all correct processes are required to propose in RLA.

Each instance of the RLA primitive is associated with a single view.
However, an RLA instance is not completely defined by its view (in contrast to instances of the view generator primitive), i.e., there can exist multiple instances of the RLA primitive associated with the same view.
Therefore, each instance of the RLA primitive is also associated with a unique identifier.
In summary, one instance of the RLA primitive is completely defined by its view and its identifier; the instance associated with view $v$ and identifier $\mathit{id}$ is denoted by $\mathit{rla}(v, \mathit{id})$.
As for the failure model assumed by RLA, (at least) a quorum of members of view $v$ are assumed to be correct.

We now introduce the interface of an RLA instance.

\begin{lstlisting}[
  caption={RLA - interface},
  label={lst:rla_interface},
  escapechar = ?]
?\textbf{Reconfiguration Lattice Agreement:}?
    instance ?$\mathit{rla}(v, \mathit{id})$?, with ?$v$? is View and ?$\mathit{id}$? is Integer
    
    ?\textcolor{plainorange}{Interface:}?
        ?\textcolor{blue}{Requests:}?
            ?$\bullet$? <?$\mathit{rla}(v, \mathit{id})$?.Start | Set(X) ?$\mathit{pro}$?>: starts ?$\mathit{rla}(v, \mathit{id})$? with proposal ?$\mathit{pro}$?; if ?$\mathit{pro} = \bot$?, then ?$\mathit{rla}(v, \mathit{id})$? is started without a proposal. If ?$\mathit{pro} \neq \bot$?, then the server has proposed to ?$\mathit{rla}(v, \mathit{id})$?.            
            ?$\bullet$? <?$\mathit{rla}(v, \mathit{id})$?.Stop>: stops ?$\mathit{rla}(v, \mathit{id})$?.
            ?$\bullet$? <?$\mathit{rla}(v, \mathit{id})$?.Propose | Set(X) ?$\mathit{pro}$?>: proposes ?$\mathit{pro}$? to ?$\mathit{rla}(v, \mathit{id})$?.
            
        ?\textcolor{blue}{Indications:}?
            ?$\bullet$? <?$\mathit{rla}(v, \mathit{id})$?.Decide | Set(X) ?$\mathit{dec}$?, Certificate ?$\omega$?>: indicates that ?$\mathit{dec}$? is decided with the certificate ?$\omega$?.
            
        ?\textcolor{blue}{Rules:}?
            1) No correct server ?$r \notin v$?.members() invokes any request to ?$\mathit{rla}(v, \mathit{id})$?. ?\label{line:rule_requests_from_members_only}?
            2) Every correct server invokes each request at most once. ?\label{line:rule_request_at_most_once}?
            3) If a correct server invokes <?$\mathit{rla}(v, \mathit{id})$?.Start | ?$\mathit{pro} \neq \bot$?>, then the server does not invoke?\\?<?$\mathit{rla}(v, \mathit{id})$?.Propose>. ?\label{line:rule_no_propose_after_start_with_proposal}?
            4) No correct server invokes <?$\mathit{rla}(v, \mathit{id})$?.Stop> or <?$\mathit{rla}(v, \mathit{id})$?.Propose> unless it has already invoked <?$\mathit{rla}(v, \mathit{id})$?.Start>.
            5) A correct server invokes <?$\mathit{rla}(v, \mathit{id})$?.Stop> before halting.
            6) If a correct server invokes  <?$\mathit{rla}(v, \mathit{id})$?.Start | ?$\mathit{pro} \neq \bot$?> or <?$\mathit{rla}(v, \mathit{id})$?.Propose | ?$\mathit{pro}$?>, then (1) ?$\mathit{pro} = \{x\}$\footnote{The proposal contains a single element.}?, and (2) valid(?$x, v$?) = ?$\top$?. ?\label{line:rule_valid_proposals}?
\end{lstlisting}

If a (correct or faulty) process obtains (i.e., stores in its local memory) a certificate $\omega$, where $\mathtt{verify\_output}(\mathit{dec}, \mathit{rla}(v, \mathit{id}), \omega) = \top$, for some set $\mathit{dec}$ of objects of type $\mathtt{X}$, then we say that $\mathit{dec}$ is \emph{committed} by $\mathit{rla}(v, \mathit{id})$.
Moreover, $\mathit{dec}$ is committed at time $t$ if and only if the certificate attainment happens at time $t' \leq t$.
Note that we assume that an attainment of a certificate is irrevocable; hence, if $\mathit{dec}$ is committed at time $t$, then $\mathit{dec}$ is committed at any time $t' \geq t$.
Finally, we present the properties of an RLA instance.

\begin{lstlisting}[
  caption={RLA - properties},
  label={lst:rla_properties},
  escapechar = ?]
?\textbf{Reconfiguration Lattice Agreement:}?
    instance ?$\mathit{rla}(v, \mathit{id})$?, with ?$v$? is View and ?$\mathit{id}$? is Integer

    Properties?\textcolor{plainorange}{:}?
        ?\textbf{- Integrity:}? No correct server ?$r \notin v$?.members() receives any indication from ?$\mathit{rla}(v, \mathit{id})$?.
        ?\textbf{- Decision Certification:}? If a correct server decides ?$\mathit{dec}$? with a certificate ?$\omega$? from ?$\mathit{rla}(v, \mathit{id})$?, then verify_output(?$\mathit{dec}$?, ?$\mathit{rla}(v, \mathit{id})$?, ?$\omega$?) ?$= \top$?.
        ?\textbf{- Comparability:}? If ?$\mathit{dec}$? and ?$\mathit{dec}'$? are committed by ?$\mathit{rla}(v, \mathit{id})$?, then either ?$\mathit{dec} \subseteq \mathit{dec}'$? or ?$\mathit{dec} \supset \mathit{dec}'$?.
        ?\textbf{-}? ?\textbf{Validity:}? If ?$\mathit{dec}$? is committed by ?$\mathit{rla}(v, \mathit{id})$?, ?$\mathit{dec} \neq \emptyset$?, and, for every ?$x \in \mathit{dec}$?, valid(?$x, v$?) = ?$\top$?.
        ?\textbf{- Decision Permission:}? If a set is committed by ?$\mathit{rla}(v, \mathit{id})$?, then at least ?$v$?.plurality() of correct members of ?$v$? have previously started ?$\mathit{rla}(v, \mathit{id})$?.
        ?\textbf{- Bounded Decisions:}? Let ?$\mathit{DEC} = \{\mathit{dec} \,|\, \mathit{dec} \text{ is committed by } \mathit{rla}(v, \mathit{id})\}$?. Then, ?$|\mathit{DEC}| < \infty$?.
        ?\textbf{- Liveness:}? If all correct servers ?$r \in v$?.members() start ?$\mathit{rla}(v, \mathit{id})$?, at least one correct server proposes to ?$\mathit{rla}(v, \mathit{id})$? and no correct server stops ?$\mathit{rla}(v, \mathit{id})$?, then all correct servers ?$r \in v$?.members() eventually decide from ?$\mathit{rla}(v, \mathit{id})$?.
\end{lstlisting}

\vspace{-\baselineskip} 
\para{Implementation}
We now give an implementation of the $\mathit{rla}(v, \mathit{id})$ instance of the RLA primitive.
Recall that the failure model of $\mathit{rla}(v, \mathit{id})$ assumes (at least) a quorum of correct members of view $v$.
Since $\mathit{rla}(v, \mathit{id})$ is associated with view $v$, a correct server $r \in v.\mathtt{members()}$ discards all messages sent by processes that are not members of $v$; for brevity, this check is omitted from \cref{lst:rla_proposer,lst:rla_acceptor}.

Moreover, we assume that there exists the reliable broadcast primitive that allows processes to ``reliably'' broadcast a message $m$ to a fixed set of processes; we write ``\textcolor{blue}{$\mathtt{reliably} \text{ } \mathtt{broadcast}$} $m \text{ } \mathtt{to} \text{ } S$'', where $S$ is a set of processes.
The reliable broadcast primitive can be implemented in static systems~\cite{cachin2011introduction}, i.e., in systems in which correct processes never halt.
However, any implementation of the primitive in a static environment implements the primitive in a dynamic system (in which correct processes might halt) if we assume that the properties of the primitive need to be ensured only if no correct process halts.

The presented implementation is highly inspired by the protocol given in~\cite{di2020byzantine}; the changes we introduce account for subtle differences between RLA and BLA.
For the presentational purposes, we distinguish two roles of servers implementing $\mathit{rla}(v, \mathit{id})$: \emph{proposer} and \emph{acceptor}.
However, the following implementation assumes that each server takes \emph{both} roles.

\vspace{-\baselineskip} 
\begin{lstlisting}[
  caption={RLA - proposer implementation},
  label={lst:rla_proposer},
  escapechar = ?]
?\textbf{Reconfiguration Lattice Agreement:}?
    instance ?$\mathit{rla}(v, \mathit{id})$?, with ?$v$? is View and ?$\mathit{id}$? is Integer

    ?\textcolor{plainorange}{Proposer Implementation:}?
        upon <?$\mathit{rla}(v, \mathit{id})$?.Init>: // initialization of ?\textcolor{gray}{$\mathit{rla}(v, \mathit{id})$}? for the proposer
            Set(X) ?$\mathit{proposed\_value} = \bot$?
            Integer ?$\mathit{init\_counter} = 0$?
            Set(X) ?$\mathit{Proposed\_Set} = \emptyset$?
            Set(Server) ?$\mathit{Ack\_Set} = \emptyset$?
            Set(Message) ?$\omega = \emptyset$? ?\label{line:init_omega}?
            Set(X) ?$\mathit{Safe\_Set} = \emptyset$?
            Set(Message) ?$\mathit{Waiting\_Msgs} = \emptyset$?
            String ?$\mathit{state} = $? "disclosing"
        
        // Disclosure Phase
        upon <?$\mathit{rla}(v, \mathit{id})$?.Start | Set(X) ?$\mathit{proposal}$?>:
            if ?$\mathit{proposal} \neq \bot$?:
                ?$\mathit{proposed\_value} = \mathit{proposal}$? ?\label{line:set_proposed_value_start_rla}? ?\label{line:success_propose_1}?
                ?$\mathit{Safe\_Set} = \mathit{Safe\_Set} \cup \mathit{proposed\_value}$? ?\label{line:modify_svs_start_rla}?
                ?$\mathit{Proposed\_Set} = \mathit{Proposed\_Set} \cup \mathit{proposed\_value}$? ?\label{line:modify_proposed_set_start_rla}?
                ?\textcolor{blue}{reliably broadcast}? ?$[$?DISCLOSURE(?$v$?, ?$\mathit{id}$?), ?$\mathit{proposed\_value}]$? to ?$v$?.members() ?\label{line:reliable_broadcast_start_rla}?
            start processing protocol messages of ?$\mathit{rla}(v, \mathit{id})$? ?\label{line:start_processing_rla}?
            
        upon <?$\mathit{rla}(v, \mathit{id})$?.Propose | Set(X) ?$\mathit{proposal}$?>:
            if ?$\mathit{proposed\_value} = \bot$?:
                ?$\mathit{proposed\_value} = \mathit{proposal}$? ?\label{line:set_proposed_value_propose_rla}?
                ?$\mathit{Safe\_Set} = \mathit{Safe\_Set} \cup \mathit{proposed\_value}$? ?\label{line:modify_svs_propose_rla}?
                ?$\mathit{Proposed\_Set} = \mathit{Proposed\_Set} \cup \mathit{proposed\_value}$? ?\label{line:modify_proposed_set_propose_rla}?
                ?\textcolor{blue}{reliably broadcast}? ?$[$?DISCLOSURE(?$v$?, ?$\mathit{id}$?), ?$\mathit{proposed\_value}]$? to ?$v$?.members() ?\label{line:reliable_broadcast_propose_rla}?
                
        upon <?$\mathit{rla}(v, \mathit{id})$?.Stop>:
            stop processing protocol messages of ?$\mathit{rla}(v, \mathit{id})$?
                
        upon reliably deliver Message ?$m =[$?DISCLOSURE(?$v$?, ?$\mathit{id}$?), Set(X) ?$\mathit{proposal}]$?: ?\label{line:reliable_deliver_rla}?
            if ?$|\mathit{proposal}| = 1$? and for all ?$x \in \mathit{proposal}: \mathtt{valid}(x, v) = \top$?: ?\label{line:check_reliable_deliver_rla}?
                ?$\mathit{Safe\_Set} = \mathit{Safe\_Set} \cup \mathit{proposal}$? ?\label{line:modify_svs_reliable_deliver_rla}?
                ?$\mathit{init\_counter} = \mathit{init\_counter} + 1$?
                if ?$\mathit{state} =$? "disclosing": ?\label{line:check_state_disclosing_la1}?
                    if ?$\mathit{proposed\_value} = \bot$?: ?\label{line:check_proposed_value_bot}?
                        ?$\mathit{proposed\_value} = \mathit{proposal}$? ?\label{line:set_proposed_value_reliable_deliver_rla}?
                        ?\textcolor{blue}{reliably broadcast}? ?$[$?DISCLOSURE(?$v$?, ?$\mathit{id}$?), ?$\mathit{proposed\_value}]$? to ?$v$?.members()
                    ?$\mathit{Proposed\_Set} = \mathit{Proposed\_Set} \cup \mathit{proposal}$? ?\label{line:modify_proposed_set_reliable_deliver_rla}?
        
        // Deciding Phase
        upon ?$\mathit{init\_counter} \geq v$?.quorum() and ?$\mathit{state} =$? "disclosing": ?\label{line:quorum_rla}?
            ?$\mathit{state} =$? "proposing"
            broadcast ?$[$?ACK_REQ(?$v$?, ?$\mathit{id}$?), ?$\mathit{Proposed\_Set}]$? to ?$v$?.members() ?\label{line:broadcast_ack_req_1}?
        
        upon receipt of Message ?$m$? such that ?$m =[$?ACK_CON(?$v$?, ?$\mathit{id}$?), Set(X) ?$\mathit{proposed}]$? or ?$m =[$?NACK(?$v$?, ?$\mathit{id}$?), Set(X) ?$\mathit{update}$?, Set(X) ?$\mathit{proposed}]$?:
            ?$\mathit{Waiting\_Msgs} = \mathit{Waiting\_Msgs} \cup \{m\}$?

        upon ?exists? Message ?$m \in \mathit{Waiting\_Msgs}$? such that Safe(?$m$?) ?$ = \top$? and ?$\mathit{state} =$? "proposing" and ?$m =[$?ACK_CON(?$v$?, ?$\mathit{id}$?), Set(X) ?$\mathit{proposed}]$? and ?$\mathit{proposed} = \mathit{Proposed\_Set}$?: ?\label{line:process_ack_con_rla}?
            ?$\mathit{Waiting\_Msgs} = \mathit{Waiting\_Msgs} \setminus \{m\}$?
            if ?$m$?.sender ?$\notin \mathit{Ack\_Set}$?:
                ?$\mathit{Ack\_Set} = \mathit{Ack\_Set} \cup \{m$?.sender?$\}$?
                ?$\omega = \omega \cup \{m\}$?
            
        upon ?exists? Message ?$m \in \mathit{Waiting\_Msgs}$? such that Safe(?$m$?) ?$= \top$? and ?$\mathit{state} =$? "proposing" and ?$m =[$?NACK(?$v$?, ?$\mathit{id}$?), Set(X) ?$\mathit{update}$?, Set(X) ?$\mathit{proposed}]$? and ?$\mathit{proposed} = \mathit{Proposed\_Set}$?: ?\label{line:nack_rule}?
            ?$\mathit{Waiting\_Msgs} = \mathit{Waiting\_Msgs} \setminus \{m\}$?
            if ?$\mathit{update} \cup \mathit{Proposed\_Set} \neq \mathit{Proposed\_Set}$?: ?\label{line:check_nack_2_la1}?
                ?$\mathit{Proposed\_Set} = \mathit{Proposed\_Set} \cup \mathit{update}$? ?\label{line:modify_proposed_set_nack_rla}?
                ?$\mathit{Ack\_Set} = \emptyset$?; ?$\omega = \emptyset$? ?\label{line:flush}?
                broadcast ?$[$?ACK_REQ(?$v$?, ?$\mathit{id}$?), ?$\mathit{Proposed\_Set}]$? to ?$v$?.members() ?\label{line:broadcast_ack_req}?
                    
          upon ?$|\mathit{Ack\_Set}| \geq v$?.quorum() and ?$\mathit{state} =$? "proposing": ?\label{line:decide_rule_la1}?
            ?$\mathit{state} =$? "decided" ?\label{line:modify_state_decided_la1}?
            ?\textbf{trigger}? <?$\mathit{rla}(v, \mathit{id})$?.Decide | ?$\mathit{Proposed\_Set}$?, ?$\omega$?> ?\label{line:decide_la1}?
        
        ?\textbf{function}? Safe(Message ?$m$?):
            if ?$m =[$?ACK_CON(?$v$?, ?$\mathit{id}$?), Set(X) ?$\mathit{proposed}]$?:
                ?\textbf{return}? ?$\mathit{proposed} \subseteq \mathit{Safe\_Set}$?
            if ?$m =[$?NACK(?$v$?, ?$\mathit{id}$?), Set(X) ?$\mathit{update}$?, Set(X) ?$\mathit{proposed}]$?:
                ?\textbf{return}? ?$\mathit{update} \subseteq \mathit{Safe\_Set}$? ?\label{line:nack_update_safe_set}?
            ?\textbf{return}? ?$\bot$?
\end{lstlisting}

Next, we provide the acceptor implementation.

\vspace{-\baselineskip} 
\begin{lstlisting}[
  caption={RLA - acceptor implementation},
  label={lst:rla_acceptor},
  escapechar = ?]
?\textbf{Reconfiguration Lattice Agreement:}?
    instance ?$\mathit{rla}(v, \mathit{id})$?, with ?$v$? is View and ?$\mathit{id}$? is Integer

    ?\textcolor{plainorange}{Acceptor Implementation:}?
        upon <?$\mathit{rla}(v, \mathit{id})$?.Init>: // initialization of ?\textcolor{gray}{$\mathit{rla}(v, \mathit{id})$}? for the acceptor
            Set(X) ?$\mathit{Accepted\_Set} = \emptyset$?
            Set(Message) ?$\mathit{Waiting\_Msgs} = \emptyset$?
            Set(X) ?$\mathit{Safe\_Set}$? // reference to ?\textcolor{gray}{$\mathit{Safe\_Set}$}? in the corresponding proposer
            
        upon receipt of Message ?$m$? such that ?$m =[$?ACK_REQ(?$v$?, ?$\mathit{id}$?), Set(X) ?$\mathit{proposal}]$?:
            ?$\mathit{Waiting\_Msgs} = \mathit{Waiting\_Msgs} \cup \{m\}$?
        
        upon ?exists? Message ?$m \in \mathit{Waiting\_Msgs}$? such that Safe(?$m$?) ?$= \top$? and ?$m =[$?ACK_REQ(?$v$?, ?$\mathit{id}$?), Set(X) ?$\mathit{proposal}]$? and ?$\mathit{proposal} \neq \emptyset$?: ?\label{line:ack_req_check}?
            ?$\mathit{Waiting\_Msgs} = \mathit{Waiting\_Msgs} \setminus \{m\}$?
            if ?$\mathit{Accepted\_Set} \subseteq \mathit{proposal}$?: ?\label{line:check_accepted_set_la1}?
                ?$\mathit{Accepted\_Set} = \mathit{proposal}$? ?\label{line:modify_accepted_set_1_rla}?
                send ?$[$?ACK_CON(?$v$?, ?$\mathit{id}$?), ?$\mathit{Accepted\_Set}]$? to ?$m$?.sender ?\label{line:send_ack_con}?
            else:
                send ?$[$?NACK(?$v$?, ?$\mathit{id}$?), ?$\mathit{Accepted\_Set}$?, ?$\mathit{proposal}]$? to ?$m$?.sender ?\label{line:send_nack}?
                ?$\mathit{Accepted\_Set} = \mathit{Accepted\_Set} \cup \mathit{proposal}$? ?\label{line:modify_accepted_set_2_rla}?
        
        ?\textbf{function}? Safe(Message ?$m$?): ?\label{line:safe_function_acceptor}?
            if ?$m =[$?ACK_REQ(?$v$?, ?$\mathit{id}$?), Set(X) ?$\mathit{proposal}]$?:
                ?\textbf{return}? ?$\mathit{proposal} \subseteq \mathit{Safe\_Set}$? ?\label{line:return_safe_function_acceptor}?
            ?\textbf{return}? ?$\bot$?
\end{lstlisting}

In order to conclude the implementation of $\mathit{rla}(v, \mathit{id})$, we need to define when $\mathtt{verify\_output}(\mathit{dec}, \mathit{rla}(v, \mathit{id}), \omega) = \top$.

\begin{lstlisting}[
  caption={The $\mathtt{verify\_output}$ function for $\mathit{rla}(v, \mathit{id})$},
  label={lst:verify_output_rla},
  escapechar = ?]
?\textbf{function}? verify_output(Set(X) ?$\mathit{dec}$?, Distributed_Primitive_Instance ?$\mathit{rla}(v, \mathit{id})$?, Certificate ?$\omega$?):
    if ?$\omega$? is not Set(Message): ?\textbf{return}? ?$\bot$?
        
    Integer ?$\mathit{senders\_num} = |\{r\,|\, m \in \omega$? and ?$m$?.sender ?$ = r \text{ and } r \in v\text{.members()}\}|$?
    if ?$\mathit{senders\_num} < $? ?$v$?.quorum(): ?\textbf{return}? ?$\bot$?
    
    ?\textbf{return}? for every Message ?$m \in \omega$?, ?$m = [$?ACK_CON(?$v$?, ?$\mathit{id}$?), ?$\mathit{dec}]$?
\end{lstlisting}

\noindent \textit{Proof of correctness.}
We are now ready to prove the correctness of the implementation given in \cref{lst:rla_proposer,lst:rla_acceptor}.
We start by proving the safety properties of $\mathit{rla}(v, \mathit{id})$.

First, we prove the integrity property.

\begin{theorem} [Integrity]
Algorithm given in \cref{lst:rla_proposer,lst:rla_acceptor} satisfies integrity.
\end{theorem}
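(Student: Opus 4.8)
The plan is to reduce the claim to a simple reachability chase through the proposer code. First I would observe that the protocol in \cref{lst:rla_proposer,lst:rla_acceptor} contains exactly one \textbf{trigger} statement: the $\langle \mathit{rla}(v,\mathit{id}).\mathtt{Decide} \rangle$ indication at line~\ref{line:decide_la1} of \cref{lst:rla_proposer}. Hence integrity holds as soon as we show that a correct server $r \notin v.\mathtt{members}()$ never executes that line.

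Next I would walk the guard of line~\ref{line:decide_la1} backwards. That line fires only under the rule at line~\ref{line:decide_rule_la1}, which additionally requires $\mathit{state} = \text{``proposing''}$. The variable $\mathit{state}$ is set to ``disclosing'' in the $\mathtt{Init}$ handler and is changed to ``proposing'' only by the rule at line~\ref{line:quorum_rla}, whose guard requires $\mathit{init\_counter} \geq v.\mathtt{quorum}()$. The counter $\mathit{init\_counter}$ starts at $0$ and is incremented only inside the handler for a reliably delivered \textsc{disclosure} message (line~\ref{line:reliable_deliver_rla}); and, crucially, every handler for protocol messages of $\mathit{rla}(v,\mathit{id})$—including that one—is inactive until the server executes line~\ref{line:start_processing_rla}, i.e., until it has processed a $\langle \mathit{rla}(v,\mathit{id}).\mathtt{Start} \rangle$ request.

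Finally I would invoke Rule~1 of the interface (line~\ref{line:rule_requests_from_members_only} of \cref{lst:rla_interface}): no correct server $r \notin v.\mathtt{members}()$ invokes any request to $\mathit{rla}(v,\mathit{id})$, in particular not $\mathtt{Start}$. Consequently such a server never reaches line~\ref{line:start_processing_rla}, never activates the protocol-message handlers, keeps $\mathit{init\_counter} = 0$ and $\mathit{state} = \text{``disclosing''}$ forever, and therefore never executes line~\ref{line:decide_la1}. Since that is the algorithm's only indication, $r$ receives no indication from $\mathit{rla}(v,\mathit{id})$, which is exactly integrity.

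I do not expect a genuine obstacle here: the only thing that needs care is to verify that \emph{every} execution path that could emit an indication is covered, which reduces to the two observations that the protocol issues its sole indication at line~\ref{line:decide_la1} and that all message processing is gated behind the $\mathtt{Start}$ request that non-members never issue.
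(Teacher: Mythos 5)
Your proof is correct and follows essentially the same route as the paper: both arguments reduce integrity to the observation that a correct server $r \notin v.\mathtt{members}()$ never invokes $\mathtt{Start}$ (Rule~1), hence never executes line~\ref{line:start_processing_rla} of \cref{lst:rla_proposer}, so the decide rule at line~\ref{line:decide_rule_la1} can never fire. Your version merely spells out the intermediate state-machine chase ($\mathit{state}$, $\mathit{init\_counter}$) that the paper leaves implicit.
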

\begin{proof}
Since a server $r \notin v.\mathtt{members()}$ never starts $\mathit{rla}(v, \mathit{id})$, it never starts processing protocol messages of $\mathit{rla}(v, \mathit{id})$ (i.e., $r$ never executes line~\ref{line:start_processing_rla} of \Cref{lst:rla_proposer}).
Hence, the rule at line~\ref{line:decide_rule_la1} of \Cref{lst:rla_proposer} is never active at server $r$.
Thus, $r$ never decides from $\mathit{rla}(v, \mathit{id})$ and the theorem holds.
\end{proof}

Next, we prove decision certification.

\begin{theorem} [Decision Certification]
Algorithm given in \cref{lst:rla_proposer,lst:rla_acceptor} satisfies decision certification.
\end{theorem}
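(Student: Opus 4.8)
The plan is to follow the lifecycle of the certificate variable $\omega$ inside the proposer code of \Cref{lst:rla_proposer} and to check that, at the moment the $\mathtt{Decide}$ indication fires at line~\ref{line:decide_la1}, the pair $(\mathit{Proposed\_Set}, \omega)$ it carries meets every requirement of $\mathtt{verify\_output}$ from \Cref{lst:verify_output_rla}. Unfolding that function, three conditions must hold: (i) $\omega$ is a set of messages; (ii) the number of distinct $r \in v.\mathtt{members()}$ that occur as the sender of some message of $\omega$ is at least $v.\mathtt{quorum()}$; and (iii) every $m \in \omega$ equals $[\text{ACK\_CON}(v, \mathit{id}), \mathit{Proposed\_Set}]$.

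First I would establish a structural invariant linking $\omega$ and $\mathit{Ack\_Set}$ that holds throughout any execution of a correct proposer: $\omega$ is a set of $\text{ACK\_CON}$ messages, the map $m \mapsto m.\mathtt{sender}$ restricts to a bijection between $\omega$ and $\mathit{Ack\_Set}$, and $\mathit{Ack\_Set} \subseteq v.\mathtt{members()}$. This is a routine induction over the steps that modify $\omega$ or $\mathit{Ack\_Set}$: both start empty (line~\ref{line:init_omega}); the only step that enlarges $\mathit{Ack\_Set}$ is the handler at line~\ref{line:process_ack_con_rla}, which---guarded by $m.\mathtt{sender} \notin \mathit{Ack\_Set}$ and, thanks to the implicit discarding of messages from non-members noted just before \Cref{lst:rla_proposer}, fed only $\text{ACK\_CON}$ messages from members of $v$---inserts $m.\mathtt{sender}$ into $\mathit{Ack\_Set}$ and the very same message $m$ into $\omega$ in one atomic step; and the only step that shrinks $\mathit{Ack\_Set}$ is the NACK handler at line~\ref{line:flush}, which resets both to $\emptyset$ simultaneously. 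Since the decide rule at line~\ref{line:decide_rule_la1} triggers only when $|\mathit{Ack\_Set}| \geq v.\mathtt{quorum()}$, the invariant immediately gives conditions (i) and (ii).

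Next I would pin down the payload of the messages in $\omega$, yielding condition (iii). The handler at line~\ref{line:process_ack_con_rla} admits a message $m = [\text{ACK\_CON}(v, \mathit{id}), \mathit{proposed}]$ only if the guard $\mathit{proposed} = \mathit{Proposed\_Set}$ holds, so every message added to $\omega$ carries, as payload, the value of $\mathit{Proposed\_Set}$ at the instant of insertion. It then remains to see that this common payload equals the value of $\mathit{Proposed\_Set}$ at decision time, which is the value $\mathit{dec}$ passed to $\mathtt{Decide}$. The key point is that once the proposer enters the \texttt{proposing} state the variable $\mathit{Proposed\_Set}$ is modified only by the NACK handler (line~\ref{line:modify_proposed_set_nack_rla}), and every such modification is accompanied, in the same atomic step, by the flush $\omega = \emptyset$ at line~\ref{line:flush}. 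Hence between the last such flush and the decision $\mathit{Proposed\_Set}$ is constant, and every message still residing in $\omega$ at decision time was inserted while $\mathit{Proposed\_Set}$ held that constant value---which is $\mathit{dec}$. This proves condition (iii) and completes the argument.

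The step I expect to be the main obstacle is ruling out \emph{stale} $\text{ACK\_CON}$ messages: a message from $\mathit{Waiting\_Msgs}$ whose payload equals an \emph{earlier} value of $\mathit{Proposed\_Set}$ must never slip into $\omega$ after a flush. This is handled by combining the guard $\mathit{proposed} = \mathit{Proposed\_Set}$ at line~\ref{line:process_ack_con_rla} with the fact that every NACK update strictly enlarges $\mathit{Proposed\_Set}$---enforced by the test $\mathit{update} \cup \mathit{Proposed\_Set} \neq \mathit{Proposed\_Set}$ at line~\ref{line:check_nack_2_la1}---so a stale payload can never coincide with the post-flush value of $\mathit{Proposed\_Set}$. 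Making that monotonicity-and-guard interplay precise is the only delicate part; the remainder is a straightforward induction and a final appeal to the definition of $\mathtt{verify\_output}$ in \Cref{lst:verify_output_rla}.
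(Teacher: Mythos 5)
Your proof is correct and follows essentially the same route as the paper's: read off conditions (i) and (ii) from the initialization of $\omega$ and the quorum guard at line~\ref{line:decide_rule_la1}, and obtain condition (iii) from the payload check at line~\ref{line:process_ack_con_rla} together with the flush of $\omega$ at line~\ref{line:flush} whenever $\mathit{Proposed\_Set}$ changes. Your extra care about the $\omega$--$\mathit{Ack\_Set}$ bijection and about stale \texttt{ACK\_CON} messages only elaborates details the paper's three-sentence proof leaves implicit (and the guard $\mathit{proposed} = \mathit{Proposed\_Set}$ alone already disposes of staleness, without needing the monotonicity of $\mathit{Proposed\_Set}$).
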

\begin{proof}
Let a correct server $r$ decide $\mathit{dec}$ with a certificate $\omega$.
We know that $\omega$ is a set of messages (by line~\ref{line:init_omega} of \Cref{lst:rla_proposer}) and that the number of distinct senders ``contained'' in $\omega$ is greater than or equal to $v.\mathtt{quorum()}$ (by line~\ref{line:decide_rule_la1} of \Cref{lst:rla_proposer}).
Moreover, all messages that belong to $\omega$ are for $\mathit{Proposed\_Set} = \mathit{dec}$ due to the check at line~\ref{line:process_ack_con_rla} of \Cref{lst:rla_proposer} and due to the ``flush'' of the $\mathit{Ack\_Set}$ variable (at line~\ref{line:flush} of \Cref{lst:rla_proposer}) once the value of $\mathit{Proposed\_Set}$ is updated.
\end{proof}

We now prove comparability.

\begin{theorem} [Comparability]
Algorithm given in \cref{lst:rla_proposer,lst:rla_acceptor} satisfies comparability.
\end{theorem}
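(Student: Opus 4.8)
The plan is to reduce Comparability to a standard quorum-intersection argument combined with the monotonicity of each acceptor's local state. First I would unfold the definition of ``committed'': if $\mathit{dec}$ and $\mathit{dec}'$ are both committed by $\mathit{rla}(v, \mathit{id})$, then some (correct or faulty) processes hold certificates $\omega, \omega'$ with $\mathtt{verify\_output}(\mathit{dec}, \mathit{rla}(v, \mathit{id}), \omega) = \top$ and $\mathtt{verify\_output}(\mathit{dec}', \mathit{rla}(v, \mathit{id}), \omega') = \top$. By the definition of $\mathtt{verify\_output}$ in \Cref{lst:verify_output_rla}, $\omega$ (resp.\ $\omega'$) consists of $\mathrm{ACK\_CON}(v, \mathit{id})$ messages carrying payload $\mathit{dec}$ (resp.\ $\mathit{dec}'$) signed by at least $v.\mathtt{quorum}()$ distinct members of $v$.

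Next I would invoke quorum intersection. Since $v$ is a valid view, at most $v.\mathtt{plurality}() - 1 = \lfloor \frac{n-1}{3} \rfloor$ of its $n$ members are faulty, while a quorum has $n - \lfloor \frac{n-1}{3} \rfloor$ members; hence any two quorums of $v$ overlap in at least one correct member $a$. Because faulty processes cannot forge $a$'s signature and a correct process discards improperly signed messages, the appearance of $a$'s signature in both $\omega$ and $\omega'$ means that $a$ itself sent an $\mathrm{ACK\_CON}(v, \mathit{id})$ message with payload $\mathit{dec}$ and (at another point) one with payload $\mathit{dec}'$.

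Then I would argue from $a$'s acceptor code in \Cref{lst:rla_acceptor}. Every $\mathrm{ACK\_CON}$ message $a$ sends carries the value of its local $\mathit{Accepted\_Set}$ at the instant of sending: the send at line~\ref{line:send_ack_con} immediately follows the assignment at line~\ref{line:modify_accepted_set_1_rla}. The variable $\mathit{Accepted\_Set}$ is touched only at lines~\ref{line:modify_accepted_set_1_rla} and~\ref{line:modify_accepted_set_2_rla}: in the first it is overwritten by $\mathit{proposal}$ under the guard $\mathit{Accepted\_Set} \subseteq \mathit{proposal}$, hence by a superset of its old value; in the second it is replaced by $\mathit{Accepted\_Set} \cup \mathit{proposal}$, again a superset. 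Thus the values ever taken by $a$'s $\mathit{Accepted\_Set}$ form a chain under $\subseteq$, so the two payloads are comparable: $\mathit{dec} \subseteq \mathit{dec}'$ or $\mathit{dec}' \subseteq \mathit{dec}$, i.e.\ $\mathit{dec} \subseteq \mathit{dec}'$ or $\mathit{dec} \supset \mathit{dec}'$, which is exactly Comparability.

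The hard part will be making the monotonicity claim airtight rather than any real mathematical difficulty: one has to verify by inspection of \Cref{lst:rla_proposer,lst:rla_acceptor} that no other code path (including the $\mathit{Safe\_Set}$ reference shared with the proposer) mutates $\mathit{Accepted\_Set}$, and that every $\mathrm{ACK\_CON}$ emission indeed snapshots the current $\mathit{Accepted\_Set}$. The quorum-intersection step is the usual $n \geq 3f+1$ arithmetic and needs no new idea.
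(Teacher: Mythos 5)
Your proof is correct and follows essentially the same route as the paper's: quorum intersection over the two sets of $\mathtt{ACK\_CON}$ senders yields a common correct acceptor, and the monotonicity of that acceptor's $\mathit{Accepted\_Set}$ under $\subseteq$ (guaranteed by the guard at line~\ref{line:check_accepted_set_la1} and the union at line~\ref{line:modify_accepted_set_2_rla} of \Cref{lst:rla_acceptor}) forces the two snapshotted payloads to be comparable. The paper phrases the last step as an induction on the invariant $\mathit{dec}_1 \subseteq \mathit{Accepted\_Set}$ from the first send onward, but this is the same argument.
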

\begin{proof}
Let $\mathit{dec}_1$ and let $\mathit{dec}_2$ be committed by $\mathit{rla}(v, \mathit{id})$.
Hence, at least $v.\mathtt{quorum()}$ acceptors have sent the $\mathtt{ACK\_CON}(v, \mathit{id})$ message for $\mathit{dec}_1$.
Similarly, $v.\mathtt{quorum()}$ acceptors have sent the $\mathtt{ACK\_CON}(v, \mathit{id})$ message for $\mathit{dec}_2$.
Hence, there exists at least a single correct acceptor that has sent both messages (due to the quorum intersection and the fact that at least $v.\mathtt{quorum()}$ members of $v$ are correct); let that acceptor be $k$.

Without loss of generality, we assume that $k$ has first sent the $\mathtt{ACK\_CON}(v, \mathit{id})$ message for $\mathit{dec}_1$.
At the moment $t$ of sending the $\mathtt{ACK\_CON}(v, \mathit{id})$ message for $\mathit{dec}_1$, we have that $\mathit{Accepted\_Set} = \mathit{dec}_1$ at server $k$ (line~\ref{line:modify_accepted_set_1_rla} of \Cref{lst:rla_acceptor}).
On the other hand, at the moment $t' \geq t$ of sending the $\mathtt{ACK\_CON}(v, \mathit{id})$ message for $\mathit{dec}_2$, we have that $\mathit{Accepted\_Set} = \mathit{dec}_2$ at $k$ (line~\ref{line:modify_accepted_set_1_rla} of \Cref{lst:rla_acceptor}).
Hence, we investigate how $\mathit{Accepted\_Set}$ changes after time $t$ at acceptor $k$.

Initially, at time $t$, $\mathit{Accepted\_Set} = \mathit{dec}_1$.
Hence, we set the induction hypothesis $\mathit{dec}_1 \subseteq \mathit{Accepted\_Set}$.
We investigate all possibilities for $k$ to modify its $\mathit{Accepted\_Set}$ variable:
\begin{compactitem}
    \item line~\ref{line:modify_accepted_set_1_rla} of \Cref{lst:rla_acceptor}: 
    The hypothesis is preserved because of the check at line~\ref{line:check_accepted_set_la1} of \Cref{lst:rla_acceptor}.
    
    \item line~\ref{line:modify_accepted_set_2_rla} of \Cref{lst:rla_acceptor}:
    The hypothesis is preserved because of the fact that the new value of $\mathit{Accepted\_Set}$ contains the old value of $\mathit{Accepted\_Set}$ and $\mathit{dec}_1$ is included in the old value of $\mathit{Accepted\_Set}$.
\end{compactitem}
Therefore, all values of the $\mathit{Accepted\_Set}$ variable at acceptor $k$ after time $t$ contain $\mathit{dec}_1$.
Thus, we have that $\mathit{dec}_1 \subseteq \mathit{dec}_2$, which concludes the theorem.
\end{proof}

The following lemma proves that, if $x \in \mathit{Safe\_Set}$ at a correct server, then $\mathtt{valid}(x, v) = \top$.

\begin{lemma} \label{lemma:safe_set_valid}
Let $r$ be a correct server.
If $x \in \mathit{Safe\_Set}$ at $r$, then $\mathtt{valid}(x, v) = \top$.
\end{lemma}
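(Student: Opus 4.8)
The plan is to prove the statement as an execution invariant: at every point in any run, every element of $\mathit{Safe\_Set}$ at a correct server $r$ satisfies $\mathtt{valid}(\cdot, v) = \top$. The base case is immediate, since $\mathit{Safe\_Set}$ is initialized to $\emptyset$ in \Cref{lst:rla_proposer}. I would also record up front that the acceptor code in \Cref{lst:rla_acceptor} never writes $\mathit{Safe\_Set}$ — its local $\mathit{Safe\_Set}$ is declared as a reference to the $\mathit{Safe\_Set}$ of the co-located proposer — so it suffices to inspect the three assignments to $\mathit{Safe\_Set}$ in the proposer implementation.

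For the inductive step I would do a case analysis over those three write sites. (i) In the handler for $\langle \mathit{rla}(v,\mathit{id}).\mathtt{Start} \mid \mathit{proposal}\rangle$, the update at \cref{line:modify_svs_start_rla} adds $\mathit{proposed\_value} = \mathit{proposal}$; by Rule~6 of the RLA interface (\cref{line:rule_valid_proposals} of \Cref{lst:rla_interface}), a correct server only invokes $\mathtt{Start}$ with a proposal of the form $\{x\}$ where $\mathtt{valid}(x,v) = \top$, so the added element is valid. (ii) The same argument applies verbatim to the update at \cref{line:modify_svs_propose_rla} in the $\mathtt{Propose}$ handler, again via Rule~6. (iii) In the handler for reliable delivery of a $[\mathtt{DISCLOSURE}(v,\mathit{id}), \mathit{proposal}]$ message, the update at \cref{line:modify_svs_reliable_deliver_rla} is guarded by the explicit test at \cref{line:check_reliable_deliver_rla}, which requires $|\mathit{proposal}| = 1$ and $\mathtt{valid}(x,v) = \top$ for the unique $x \in \mathit{proposal}$; hence the added element is valid. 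Since these exhaust the ways $\mathit{Safe\_Set}$ can grow, the invariant is maintained, and the lemma follows.

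The one point that deserves care — the closest thing to an obstacle — is case (iii): the $\mathtt{DISCLOSURE}$ message may have been reliably broadcast by a \emph{faulty} server, so one cannot invoke Rule~6, which constrains only correct senders. The argument must instead rely on the receiving correct server re-evaluating the deterministic predicate $\mathtt{valid}$ locally before the update, which is exactly what \cref{line:check_reliable_deliver_rla} does. Everything else is a routine enumeration of assignment statements, so I expect no real difficulty; this small structural fact will later feed into the proof of the Validity property of $\mathit{rla}(v,\mathit{id})$, where elements of a committed $\mathit{dec}$ are traced back through the acceptors' $\mathit{Accepted\_Set}$ and the $\mathtt{ACK\_REQ}$/$\mathtt{ACK\_CON}$ exchange to some correct server's $\mathit{Safe\_Set}$.
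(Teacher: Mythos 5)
Your proof is correct and follows essentially the same route as the paper's: a case analysis over the three assignments to $\mathit{Safe\_Set}$ in \Cref{lst:rla_proposer}, using Rule~6 of the interface for the $\mathtt{Start}$ and $\mathtt{Propose}$ handlers and the explicit check at \cref{line:check_reliable_deliver_rla} for the reliable-delivery handler. Your added remark about why the $\mathtt{DISCLOSURE}$ case cannot rely on Rule~6 (the sender may be faulty) is a correct and worthwhile clarification that the paper leaves implicit.
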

\begin{proof}
We examine all possible ways for a correct server to modify its $\mathit{Safe\_Set}$ variable:
\begin{compactitem}
    \item line~\ref{line:modify_svs_start_rla} of \Cref{lst:rla_proposer}: By rule 6 of $\mathit{rla}(v, \mathit{id})$ (line~\ref{line:rule_valid_proposals} of \Cref{lst:rla_interface}), a single element $x$ is added to $\mathit{Safe\_Set}$ and $\mathtt{valid}(x, v) = \top$.
    
    \item line~\ref{line:modify_svs_propose_rla} of \Cref{lst:rla_proposer}: 
    As in the previous case, a single element $x$ is added to $\mathit{Safe\_Set}$ and $\mathtt{valid}(x, v) = \top$ (by rule 6 of $\mathit{rla}(v, \mathit{id})$).
    
    \item line~\ref{line:modify_svs_reliable_deliver_rla} of \Cref{lst:rla_proposer}:
    A single element $x$ is added to $\mathit{Safe\_Set}$ and $\mathtt{valid}(x, v) = \top$ (by the check at line~\ref{line:check_reliable_deliver_rla} of \Cref{lst:rla_proposer}).
\end{compactitem}
The fact that, at all times, for every $x \in \mathit{Safe\_Set}$, $\mathtt{valid}(x, v) = \top$ concludes the theorem.
\end{proof}

The following theorem proves validity.

\begin{theorem} [Validity]
Algorithm given in \cref{lst:rla_proposer,lst:rla_acceptor} satisfies validity.
\end{theorem}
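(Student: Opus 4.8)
The plan is to unwind the definition of ``committed'' and then chase the $\mathtt{ACK\_CON}$ messages back to a single correct acceptor. By definition, $\mathit{dec}$ being committed by $\mathit{rla}(v, \mathit{id})$ means some (correct or faulty) process holds a certificate $\omega$ with $\mathtt{verify\_output}(\mathit{dec}, \mathit{rla}(v, \mathit{id}), \omega) = \top$; by \Cref{lst:verify_output_rla} this certificate is then a set of messages of the form $[\mathtt{ACK\_CON}(v, \mathit{id}), \mathit{dec}]$ whose senders include at least $v.\mathtt{quorum()}$ distinct members of $v$. Since at least $v.\mathtt{quorum()}$ members of $v$ are correct, the usual quorum-intersection counting argument (identical to the one used in the proof of comparability) shows that at least one of these senders, call it $k$, is a correct acceptor, and $k$ has at some time $t$ sent $[\mathtt{ACK\_CON}(v, \mathit{id}), \mathit{dec}]$.

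Next I would read off the acceptor pseudocode of \Cref{lst:rla_acceptor}. The only line at which a correct acceptor emits an $\mathtt{ACK\_CON}$ message is line~\ref{line:send_ack_con}, and it is reached only after dequeuing a waiting message $m = [\mathtt{ACK\_REQ}(v, \mathit{id}), \mathit{proposal}]$ that passes the guard at line~\ref{line:ack_req_check}, namely $\mathtt{Safe}(m) = \top$ and $\mathit{proposal} \neq \emptyset$; the acceptor then sets $\mathit{Accepted\_Set} = \mathit{proposal}$ and sends $[\mathtt{ACK\_CON}(v, \mathit{id}), \mathit{Accepted\_Set}]$. Matching message payloads forces $\mathit{proposal} = \mathit{dec}$, and since $\mathit{proposal} \neq \emptyset$ we immediately obtain $\mathit{dec} \neq \emptyset$, the first conjunct of validity.

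For the second conjunct I would unfold the guard $\mathtt{Safe}(m) = \top$ through the acceptor's $\mathtt{Safe}$ function (line~\ref{line:safe_function_acceptor}): it amounts to $\mathit{proposal} \subseteq \mathit{Safe\_Set}$ at $k$ at time $t$, where $\mathit{Safe\_Set}$ is the variable maintained by $k$'s proposer role. Hence $\mathit{dec} = \mathit{proposal} \subseteq \mathit{Safe\_Set}$ at the correct server $k$, and \Cref{lemma:safe_set_valid} --- already established --- gives $\mathtt{valid}(x, v) = \top$ for every $x \in \mathit{Safe\_Set}$, in particular for every $x \in \mathit{dec}$. Combining the two conjuncts yields validity.

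The only step I anticipate needing care is the quorum-intersection observation --- checking that a certificate carrying $v.\mathtt{quorum()}$ member-senders must contain a correct acceptor, given that at least $v.\mathtt{quorum()}$ members of $v$ are correct --- but this is exactly the argument already spelled out in the comparability proof, so it transfers verbatim. Everything else is a direct trace through the guards of \Cref{lst:rla_acceptor} together with the invocation of \Cref{lemma:safe_set_valid}.
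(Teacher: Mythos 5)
Your proof is correct and follows essentially the same route as the paper's: both arguments locate a correct acceptor $k$ that sent $[\mathtt{ACK\_CON}(v,\mathit{id}),\mathit{dec}]$, read off $\mathit{dec}\neq\emptyset$ and $\mathit{dec}\subseteq\mathit{Safe\_Set}$ from the guard at line~\ref{line:ack_req_check} of \Cref{lst:rla_acceptor}, and conclude via \Cref{lemma:safe_set_valid}. Your version merely spells out the quorum-intersection step that the paper leaves implicit.
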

\begin{proof}
Let $\mathit{dec}$ be committed by $\mathit{rla}(v, \mathit{id})$.
Since no correct acceptor ever sends the $[\mathtt{ACK\_CON}(v, \mathit{id}), \emptyset]$ message (due to the check at line~\ref{line:ack_req_check} of \Cref{lst:rla_acceptor}), $\mathit{dec} \neq \emptyset$.

Since $\mathit{dec}$ is committed by $\mathit{rla}(v, \mathit{id})$, there exists a correct acceptor $k \in v.\mathtt{members}()$ that has sent the $\mathtt{ACK\_CON}(v, \mathit{id})$ message for $\mathit{dec}$.
At that time, we know that $\mathit{dec} \subseteq \mathit{Safe\_Set}$ at server $k$ (by line~\ref{line:ack_req_check} of \Cref{lst:rla_acceptor}).
Therefore, the theorem conclusion follows from \Cref{lemma:safe_set_valid}.
\end{proof}

The next safety property we have to prove is the bounded decisions property.
In order to do so, we first prove that $\mathit{Accepted\_Set} \subseteq \mathit{Safe\_Set}$ at a correct server at all times.

\begin{lemma} \label{lemma:accepted_set_svs}
At all times, $\mathit{Accepted\_Set} \subseteq \mathit{Safe\_Set}$ at a correct acceptor.
\end{lemma}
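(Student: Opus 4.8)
The plan is to prove the invariant $\mathit{Accepted\_Set} \subseteq \mathit{Safe\_Set}$ by induction over the execution of an arbitrary correct acceptor $k \in v.\mathtt{members}()$, tracking every write to either of the two variables. A key preliminary observation is that $\mathit{Safe\_Set}$ is \emph{monotonically non-decreasing}: the only statements that modify it are the three union assignments in \Cref{lst:rla_proposer} (lines~\ref{line:modify_svs_start_rla}, \ref{line:modify_svs_propose_rla} and~\ref{line:modify_svs_reliable_deliver_rla}), each of which only inserts elements. Since the acceptor's $\mathit{Safe\_Set}$ is a reference to the $\mathit{Safe\_Set}$ of the corresponding proposer, this monotonicity carries over to the acceptor's view: once an element belongs to $\mathit{Safe\_Set}$, it belongs to it forever.

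For the base case, at $\langle \mathit{rla}(v,\mathit{id}).\mathtt{Init}\rangle$ we have $\mathit{Accepted\_Set} = \emptyset$, so the inclusion holds trivially. For the inductive step, I would assume the invariant holds up to some point and consider each step that could falsify it. A write to $\mathit{Safe\_Set}$ only enlarges it, hence preserves the inclusion. The only writes to $\mathit{Accepted\_Set}$ are at lines~\ref{line:modify_accepted_set_1_rla} and~\ref{line:modify_accepted_set_2_rla} of \Cref{lst:rla_acceptor}, both executed inside the rule guarded at line~\ref{line:ack_req_check}, whose precondition $\mathtt{Safe}(m) = \top$ for $m = [\mathtt{ACK\_REQ}(v,\mathit{id}), \mathit{proposal}]$ means, by the acceptor's $\mathtt{Safe}$ function (line~\ref{line:return_safe_function_acceptor}), that $\mathit{proposal} \subseteq \mathit{Safe\_Set}$ at the moment the rule fires. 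In the first case $\mathit{Accepted\_Set}$ is overwritten with $\mathit{proposal} \subseteq \mathit{Safe\_Set}$; in the second (the \texttt{NACK} branch) it becomes $\mathit{Accepted\_Set} \cup \mathit{proposal}$, and since $\mathit{Accepted\_Set} \subseteq \mathit{Safe\_Set}$ by the induction hypothesis and $\mathit{proposal} \subseteq \mathit{Safe\_Set}$ by the guard, the union remains contained in $\mathit{Safe\_Set}$. This exhausts all cases.

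The main obstacle, though a mild one, is making the timing argument airtight: because each process is single-threaded and the guard $\mathtt{Safe}(m) = \top$ is re-evaluated when the rule actually fires (messages sit in $\mathit{Waiting\_Msgs}$ until then, rather than being processed on receipt), the check and the subsequent assignment occur within the same atomic step, so no interleaving can delete elements from $\mathit{Safe\_Set}$ between the check and the write; combined with the monotonicity observation, this rules out any window in which some $x \in \mathit{proposal}$ with $x \notin \mathit{Safe\_Set}$ could slip into $\mathit{Accepted\_Set}$. I would state the monotonicity of $\mathit{Safe\_Set}$ explicitly (possibly as a one-line sub-claim) before the induction, since it is also reused to justify that $\mathtt{Safe}$ predicates, once true, stay true — a fact the liveness argument will need later.
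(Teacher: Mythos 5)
Your proof is correct and follows essentially the same route as the paper's: an induction over all writes to $\mathit{Accepted\_Set}$ and $\mathit{Safe\_Set}$, using the $\mathtt{Safe}()$ guard at line~\ref{line:ack_req_check} to conclude $\mathit{proposal} \subseteq \mathit{Safe\_Set}$ for both assignment sites and the induction hypothesis for the union case. Your explicit monotonicity remark is a harmless (and slightly tidier) addition, but the substance of the argument is identical.
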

\begin{proof}
Initially, $\mathit{Accepted\_Set} = \mathit{Safe\_Set} = \emptyset$.
Thus, initially the invariant holds.

We now introduce the induction hypothesis $\mathit{Accepted\_Set} \subseteq \mathit{Safe\_Set}$.
Let us investigate all places at which the $\mathit{Accepted\_Set}$ variable or the $\mathit{Safe\_Set}$ variable is changed:
\begin{compactitem}
    \item line~\ref{line:modify_svs_start_rla} of \Cref{lst:rla_proposer}:
    Holds because of the induction hypothesis.
    
    \item line~\ref{line:modify_svs_propose_rla} of \Cref{lst:rla_proposer}:
    Holds because of the induction hypothesis.
    
    \item line~\ref{line:modify_svs_reliable_deliver_rla} of \Cref{lst:rla_proposer}:
    Holds because of the induction hypothesis.

    \item line~\ref{line:modify_accepted_set_1_rla} of \Cref{lst:rla_acceptor}: 
    Holds because $\mathit{proposal} \subseteq \mathit{Safe\_Set}$ (because of the $\mathtt{Safe}()$ function; line~\ref{line:return_safe_function_acceptor} of \Cref{lst:rla_acceptor}).
    
    \item line~\ref{line:modify_accepted_set_2_rla} of \Cref{lst:rla_acceptor}:
    Holds because of the induction hypothesis and the fact that $\mathit{proposal} \subseteq \mathit{Safe\_Set}$ (because of the $\mathtt{Safe}()$ function).
\end{compactitem}
Since the induction hypothesis is always preserved, the lemma holds.
\end{proof}

We are now ready to prove the bounded decisions property.

\begin{theorem} [Bounded Decisions] \label{theorem: bounded_decision_rla}
Algorithm given in \cref{lst:rla_proposer,lst:rla_acceptor} satisfies bounded decisions.
\end{theorem}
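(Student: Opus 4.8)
The plan is to reduce the statement to a purely combinatorial fact: every set that $\mathit{rla}(v,\mathit{id})$ can commit is a subset of one fixed \emph{finite} universe of objects, and a finite set has only finitely many subsets. So the work splits into (i) attaching to each committed $\mathit{dec}$ a correct acceptor whose $\mathit{Safe\_Set}$ contains $\mathit{dec}$, and (ii) showing that the union of all $\mathit{Safe\_Set}$ values ever held by correct servers is finite.

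For (i) I would start from the certificate. If $\mathit{dec}$ is committed by $\mathit{rla}(v,\mathit{id})$, then by the definition of $\mathtt{verify\_output}$ in \Cref{lst:verify_output_rla} at least $v.\mathtt{quorum}()$ distinct members of $v$ sent $[\mathtt{ACK\_CON}(v,\mathit{id}),\mathit{dec}]$; since at most $\lfloor\frac{n-1}{3}\rfloor < v.\mathtt{quorum}()$ members of $v$ are faulty (with $n=|v.\mathtt{members}()|$), at least one correct acceptor $k$ did so, say at time $t_{\mathit{dec}}$. Inspecting \Cref{lst:rla_acceptor}, a correct acceptor only sends $\mathtt{ACK\_CON}$ at line~\ref{line:send_ack_con}, immediately after executing line~\ref{line:modify_accepted_set_1_rla}, so $\mathit{Accepted\_Set}=\mathit{dec}$ at $k$ at time $t_{\mathit{dec}}$; by \Cref{lemma:accepted_set_svs}, $\mathit{dec}=\mathit{Accepted\_Set}\subseteq\mathit{Safe\_Set}$ at $k$ at that time. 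Hence every committed set is contained in some correct server's $\mathit{Safe\_Set}$ at some time.

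For (ii), let $D$ be the union over all correct servers $r$ and all times $t$ of the contents of $\mathit{Safe\_Set}$ at $r$ at $t$. Inspecting \Cref{lst:rla_proposer}, $\mathit{Safe\_Set}$ only grows, and each growth step (lines~\ref{line:modify_svs_start_rla}, \ref{line:modify_svs_propose_rla}, \ref{line:modify_svs_reliable_deliver_rla}) adds either the server's own $\mathit{proposed\_value}$ (a singleton, and added at most once per correct server since these paths are guarded, directly or via $\mathit{proposed\_value}=\bot$ at line~\ref{line:check_proposed_value_bot}) or the payload of a reliably delivered $\mathtt{DISCLOSURE}(v,\mathit{id})$ message whose payload has size $1$ (line~\ref{line:check_reliable_deliver_rla}); moreover whenever a correct server sets $\mathit{proposed\_value}$ it also reliably broadcasts a $\mathtt{DISCLOSURE}(v,\mathit{id})$ carrying that value, and it does so at most once. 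Invoking the no-creation and consistency guarantees of the (Byzantine) reliable broadcast primitive of \cite{cachin2011introduction} used for the disclosure phase — a correct server delivers a $\mathtt{DISCLOSURE}$ from a correct sender only if that sender broadcast it, and correct servers never deliver two different $\mathtt{DISCLOSURE}$ messages from the same sender — at most $|v.\mathtt{members}()|$ distinct $\mathtt{DISCLOSURE}(v,\mathit{id})$ payloads are ever delivered by any correct server, so $|D|\le|v.\mathtt{members}()|<\infty$. Combining with (i): every committed $\mathit{dec}$ satisfies $\mathit{dec}\subseteq D$, hence $\mathit{DEC}\subseteq 2^{D}$ and $|\mathit{DEC}|\le 2^{|D|}<\infty$.

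I expect the main obstacle to be pinning down exactly that second step: the whole argument hinges on a faulty member being unable to inflate correct servers' $\mathit{Safe\_Set}$ without bound, i.e., on the underlying reliable broadcast delivering at most one message per sender to correct processes. I would therefore make this assumption about the primitive explicit (it holds for Bracha-style Byzantine reliable broadcast and for the disclosure mechanism of the BLA protocol of \cite{di2020byzantine} on which \cref{lst:rla_proposer,lst:rla_acceptor} is modeled), after which the remaining bookkeeping about $\mathit{proposed\_value}$ being set at most once is routine.
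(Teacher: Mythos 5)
Your proposal is correct and follows essentially the same route as the paper's proof: both locate a correct acceptor with $\mathit{Accepted\_Set}=\mathit{dec}$, invoke \Cref{lemma:accepted_set_svs} to get $\mathit{dec}\subseteq\mathit{Safe\_Set}$, and then bound the union of all $\mathit{Safe\_Set}$ contents by $|v.\mathtt{members}()|$ via the one-$\mathtt{DISCLOSURE}$-per-sender guarantee of reliable broadcast, concluding by finiteness of the power set. Your write-up is merely more explicit than the paper about which properties of the reliable broadcast primitive make the cardinality bound go through.
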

\begin{proof}
Let $\mathit{dec}$ be committed by $\mathit{rla}(v, \mathit{id})$.
We have that $\mathit{Accepted\_Set} = \mathit{dec}$ at some correct server $r$ (line~\ref{line:modify_accepted_set_1_rla} of \Cref{lst:rla_acceptor}).
By \Cref{lemma:accepted_set_svs}, we know that $\mathit{dec} \subseteq \mathit{Safe\_Set}$ at server $r$ at that time.

Let $\mathit{proposals} = \{\mathit{proposal} \,|\, \mathit{proposal} \subseteq \mathit{Safe\_Set} \text{ at a correct server at some time and } |\mathit{proposal}| = 1\}$.
We know that $|\mathit{proposals}| \leq |v.\mathtt{members}()|$ due to the fact that each $\mathit{proposal} \in \mathit{proposals}$ is broadcast using the reliable broadcast primitive.
Now, let $\mathit{SAFE\_SET} = \bigcup\limits_{\mathit{proposal} \in \mathit{proposals}} \mathit{proposal}$.
Hence, at any point in time, we have that $\mathit{Safe\_Set} \subseteq \mathit{SAFE\_SET}$ at a correct server.
That means that $\mathit{dec} \subseteq \mathit{SAFE\_SET}$, for any committed $\mathit{dec}$.
Since $\mathit{SAFE\_SET}$ is finite, it has finitely many subsets.
Therefore, the theorem holds.
\end{proof}

The last safety property we need to prove is decision permission.

\begin{theorem} [Decision Permission]
Algorithm given in \cref{lst:rla_proposer,lst:rla_acceptor} satisfies decision permission.
\end{theorem}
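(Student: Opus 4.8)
The plan is to chase a committed decision down to the local states of the acceptors that endorsed it, and then observe that each such (correct) acceptor must already have started $\mathit{rla}(v, \mathit{id})$. First I would unfold $\mathtt{verify\_output}$ for $\mathit{rla}(v, \mathit{id})$ (\Cref{lst:verify_output_rla}): if $\mathit{dec}$ is committed, it is witnessed by a certificate $\omega$ that is a set of messages, each of the form $[\mathtt{ACK\_CON}(v, \mathit{id}), \mathit{dec}]$, coming from at least $v.\mathtt{quorum()}$ distinct members of $v$. Hence at least $v.\mathtt{quorum()}$ members of $v$ have sent an $\mathtt{ACK\_CON}(v, \mathit{id})$ message.

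Next I would trade this quorum of senders for a \emph{plurality} of correct senders. With $n = |v.\mathtt{members()}|$, the failure model of $\mathit{rla}(v, \mathit{id})$ gives at most $\lfloor\frac{n-1}{3}\rfloor = v.\mathtt{plurality()} - 1$ faulty members of $v$, so at least $v.\mathtt{quorum()} - \lfloor\frac{n-1}{3}\rfloor = n - 2\lfloor\frac{n-1}{3}\rfloor$ of the $\mathtt{ACK\_CON}$-senders are correct. Since $3\lfloor\frac{n-1}{3}\rfloor \leq n - 1$, this is $\geq \lfloor\frac{n-1}{3}\rfloor + 1 = v.\mathtt{plurality()}$. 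Thus at least $v.\mathtt{plurality()}$ correct members of $v$ have sent an $\mathtt{ACK\_CON}(v, \mathit{id})$ message.

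The crux is then to show that a correct member of $v$ cannot send $\mathtt{ACK\_CON}(v, \mathit{id})$ without having first invoked $\langle \mathit{rla}(v, \mathit{id}).\mathtt{Start} \rangle$. A correct acceptor sends $\mathtt{ACK\_CON}$ only at line~\ref{line:send_ac_con_placeholder} --- i.e., line~\ref{line:send_ack_con} of \Cref{lst:rla_acceptor} --- which fires only under the rule at line~\ref{line:ack_req_check} requiring $\mathtt{Safe}(m) = \top$ and $\mathit{proposal} \neq \emptyset$; by the acceptor's $\mathtt{Safe}$ function this forces $\emptyset \neq \mathit{proposal} \subseteq \mathit{Safe\_Set}$, so $\mathit{Safe\_Set} \neq \emptyset$ at that server. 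It then remains to argue that a correct server's $\mathit{Safe\_Set}$ becomes non-empty only after $\mathtt{Start}$ has been invoked: the three assignments to $\mathit{Safe\_Set}$ (lines~\ref{line:modify_svs_start_rla}, \ref{line:modify_svs_propose_rla}, and~\ref{line:modify_svs_reliable_deliver_rla} of \Cref{lst:rla_proposer}) sit, respectively, in the $\mathtt{Start}$ handler, in the $\mathtt{Propose}$ handler (which by Rule~4 of $\mathit{rla}(v, \mathit{id})$ (\Cref{lst:rla_interface}) can fire only after $\mathtt{Start}$), and in the handler triggered by reliable delivery of a $\mathtt{DISCLOSURE}$ message (which is enabled only once the server has executed line~\ref{line:start_processing_rla}, itself inside the $\mathtt{Start}$ handler). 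Combining this with the second paragraph, at least $v.\mathtt{plurality()}$ correct members of $v$ invoked $\langle \mathit{rla}(v, \mathit{id}).\mathtt{Start} \rangle$ before $\mathit{dec}$ was committed, which is exactly decision permission.

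I expect the main obstacle to be precisely this last step: making it airtight that the handler for reliably delivered $\mathtt{DISCLOSURE}$ messages is gated by line~\ref{line:start_processing_rla} (hence by $\mathtt{Start}$), so that no value can slip into the $\mathit{Safe\_Set}$ of a correct but not-yet-started server. Everything else --- the unfolding of $\mathtt{verify\_output}$ and the correct-vs-faulty count --- is routine quorum arithmetic, essentially the same bookkeeping already used in the proofs of comparability and validity.
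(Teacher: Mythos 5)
Your proof is correct and follows essentially the same route as the paper's: unfold $\mathtt{verify\_output}$ to get a quorum of distinct $\mathtt{ACK\_CON}(v, \mathit{id})$ senders, convert that into a plurality of correct senders via the failure model, and argue that a correct server cannot have sent $\mathtt{ACK\_CON}$ before invoking $\mathtt{Start}$. The paper closes the final step more directly---the $\mathtt{ACK\_REQ}$-processing rule is itself a protocol-message rule and therefore cannot fire before the server executes line~\ref{line:start_processing_rla} inside the $\mathtt{Start}$ handler---so your detour through the non-emptiness of $\mathit{Safe\_Set}$ is sound but unnecessary.
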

\begin{proof}
Assume that $\mathit{dec}$ is committed by $\mathit{rla}(v, \mathit{id})$.
Hence, at least $v.\mathtt{plurality()}$ of correct members of $v$ have sent the $[\mathtt{ACK\_CON}(v, \mathit{id}), \mathit{dec}]$ message (by \Cref{lst:verify_output_rla}).
Since a correct server does not send any $\mathtt{ACK\_CON}(v, \mathit{id})$ message before starting $\mathit{rla}(v, \mathit{id})$ (ensured by the fact that a correct server starts processing protocol messages of $\mathit{rla}(v, \mathit{id})$ only once it starts $\mathit{rla}(v, \mathit{id})$; line~\ref{line:start_processing_rla} of \Cref{lst:rla_proposer}), the decision permission property is ensured.
\end{proof}

Lastly, we prove liveness.
Recall that liveness is guaranteed only if all correct servers start $\mathit{rla}(v, \mathit{id})$, a correct server proposes to $\mathit{rla}(v, \mathit{id})$ and no correct server stops $\mathit{rla}(v, \mathit{id})$.
We implicitly assume that these hold in the following lemmas.

First, we show that $\mathit{Safe\_Set}$ variables of correct servers are eventually identical.
This is ensured because of the properties of the reliable broadcast primitive.

\begin{lemma} \label{lemma:svs_identical}
Eventually, $\mathit{Safe\_Set}_r = \mathit{Safe\_Set}_{r'}$, where $r$ and $r'$ are correct servers, $\mathit{Safe\_Set}_r$ denotes the value of $\mathit{Safe\_Set}$ at server $r$ and $\mathit{Safe\_Set}_{r'}$ denotes the value of $\mathit{Safe\_Set}$ at server $r'$.
\end{lemma}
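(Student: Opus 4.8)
The plan is to show that every element that ever enters the $\mathit{Safe\_Set}$ of a correct server is carried by some DISCLOSURE message that is reliably broadcast, and then to use the agreement and validity guarantees of the reliable broadcast primitive to conclude that all correct servers eventually agree on the (finite) union of the contents of the DISCLOSURE messages that get reliably delivered. Concretely, I would first inspect the three places where a correct server mutates $\mathit{Safe\_Set}$ --- lines~\ref{line:modify_svs_start_rla}, \ref{line:modify_svs_propose_rla} and \ref{line:modify_svs_reliable_deliver_rla} of \Cref{lst:rla_proposer} --- and observe that in the first two cases the server also reliably broadcasts $[\mathtt{DISCLOSURE}(v, \mathit{id}), x]$ for the element $x$ it adds, while in the third case the element is added only because the server has just reliably delivered such a message and it passed the check at line~\ref{line:check_reliable_deliver_rla}. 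So, modulo a server's own immediate insertion of its proposal, $\mathit{Safe\_Set}$ is fed exactly by the reliably delivered, check-passing DISCLOSURE messages.

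Next I would let $D$ be the set of messages $[\mathtt{DISCLOSURE}(v, \mathit{id}), x]$ that some correct server reliably delivers and that satisfy the line~\ref{line:check_reliable_deliver_rla} condition ($|x| = 1$ and $\mathtt{valid}(x, v) = \top$), and establish three facts. (i) $D$ is finite: a correct server discards messages sent by non-members of $v$, so every message of $D$ has a sender in the finite set $v.\mathtt{members}()$, and --- exactly as in the argument for \Cref{theorem: bounded_decision_rla} --- the reliable broadcast primitive lets at most one DISCLOSURE message per sender be delivered. (ii) $D$ is delivered by \emph{every} correct server: if one correct server reliably delivers $m$, then by the agreement property of reliable broadcast --- whose guarantees hold here because no correct server halts (a correct server invokes the $\mathtt{Stop}$ request before halting and, by the liveness hypothesis, no correct server stops $\mathit{rla}(v, \mathit{id})$) --- every correct server eventually reliably delivers $m$, and the check at line~\ref{line:check_reliable_deliver_rla} depends only on $m$. (iii) Conversely, any $x$ added at line~\ref{line:modify_svs_start_rla} or \ref{line:modify_svs_propose_rla} yields a message of $D$: the server (correct, non-halting) reliably broadcasts $[\mathtt{DISCLOSURE}(v, \mathit{id}), x]$, which by validity of reliable broadcast is eventually delivered by some correct server and passes the check by rule~6 of $\mathit{rla}(v, \mathit{id})$ (line~\ref{line:rule_valid_proposals} of \Cref{lst:rla_interface}). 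Setting $S = \bigcup_{[\mathtt{DISCLOSURE}(v, \mathit{id}), x] \in D} x$, a finite set, facts (i)--(iii) give that for every correct $r$ the value $\mathit{Safe\_Set}_r$ is always contained in $S$ and eventually contains all of $S$; hence $\mathit{Safe\_Set}_r = S$ from some point on, which is the claim.

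I expect the main obstacle to be the bookkeeping around these ``direct'' insertions: an element can be put into a server's $\mathit{Safe\_Set}$ strictly before any DISCLOSURE message carrying it has been delivered anywhere (namely when the server is itself the proposer of that value), so one cannot simply identify $\mathit{Safe\_Set}$ with ``delivered DISCLOSURE contents'' at every instant --- the closure must be argued asymptotically, leaning on reliable-broadcast validity to push each server's own proposal out to all correct servers and on finiteness of $D$ to turn ``converging'' into ``eventually equal''. A second point worth stating explicitly is that reliable broadcast's totality and validity here are contingent on the no-halting assumption, which is precisely why this lemma is proved only under the liveness preconditions; I would also make sure to invoke the standing assumption that only finitely many servers start (and that $|v.\mathtt{members}()|$ is finite) so that both $D$ and $S$ are genuinely finite.
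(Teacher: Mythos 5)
Your proof is correct and follows essentially the same route as the paper's: inspect the three mutation sites of $\mathit{Safe\_Set}$, observe that each insertion is backed by a reliably broadcast $\mathtt{DISCLOSURE}$ message, and invoke the validity/agreement/totality of reliable broadcast to propagate every insertion to all correct servers. Your additional explicit finiteness argument (bounding the delivered $\mathtt{DISCLOSURE}$ messages by $|v.\mathtt{members}()|$ to turn convergence into eventual equality) is a point the paper leaves implicit, but it is not a different approach.
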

\begin{proof}
A correct server updates its $\mathit{Safe\_Set}$ variable at following places in \Cref{lst:rla_proposer}:
\begin{compactitem}
    \item line~\ref{line:modify_svs_start_rla}: The update is reliably broadcast at line~\ref{line:reliable_broadcast_start_rla} of \Cref{lst:rla_proposer}.
    By rule 6 of $\mathit{rla}(v, \mathit{id})$, we know that $\mathit{proposed\_value} = \{x\}$ and $\mathtt{valid}(x, v) = \top$.
    Hence, whenever another correct server receives the $\mathtt{DISCLOSURE}(v, \mathit{id})$ message for $\mathit{proposed\_value}$, the check at line~\ref{line:check_reliable_deliver_rla} of \Cref{lst:rla_proposer} passes and $\mathit{proposed\_value}$ is included in $\mathit{Safe\_Set}$.
    
    \item line~\ref{line:modify_svs_propose_rla}: The update is reliably broadcast at line~\ref{line:reliable_broadcast_propose_rla} of \Cref{lst:rla_proposer}.
    By rule 6 of $\mathit{rla}(v, \mathit{id})$, we know that $\mathit{proposed\_value} = \{x\}$ and $\mathtt{valid}(x, v) = \top$.
    Hence, whenever another correct server receives the $\mathtt{DISCLOSURE}(v, \mathit{id})$ message for $\mathit{proposed\_value}$, the check at line~\ref{line:check_reliable_deliver_rla} of \Cref{lst:rla_proposer} passes and $\mathit{proposed\_value}$ is included in $\mathit{Safe\_Set}$.
    
    \item line~\ref{line:modify_svs_reliable_deliver_rla}: The update is reliably delivered (at line~\ref{line:reliable_deliver_rla} of \Cref{lst:rla_proposer}). 
    Due to the check at line~\ref{line:check_reliable_deliver_rla} of \Cref{lst:rla_proposer}, we know that $\mathit{proposal} = \{x\}$ and $\mathtt{valid}(x, v) = \top$.
    Hence, whenever another correct server receives the $\mathtt{DISCLOSURE}(v, \mathit{id})$ message for $\mathit{proposal}$ (which happens because of the totality and agreement properties of the reliable broadcast primitive), the check at line~\ref{line:check_reliable_deliver_rla} of \Cref{lst:rla_proposer} passes and $\mathit{proposal}$ is included in $\mathit{Safe\_Set}$.
\end{compactitem}
Thus, every update to $\mathit{Safe\_Set}$ of a correct server eventually reaches every other correct server due to the properties of the reliable broadcast primitive.
The lemma holds.
\end{proof}

Next, we prove that $\mathit{Proposed\_Set} \subseteq \mathit{Safe\_Set}$ at a correct server.

\begin{lemma} \label{lemma:proposed_set_svs}
At all times, $\mathit{Proposed\_Set} \subseteq \mathit{Safe\_Set}$ at a correct proposer.
\end{lemma}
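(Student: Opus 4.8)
The plan is to prove this as an invariant by induction on the steps of a correct proposer's execution, exactly in the style of the proof of \Cref{lemma:accepted_set_svs}. The base case is immediate: upon $\langle \mathit{rla}(v, \mathit{id}).\mathtt{Init}\rangle$ both $\mathit{Proposed\_Set}$ and $\mathit{Safe\_Set}$ are initialized to $\emptyset$, so $\mathit{Proposed\_Set} \subseteq \mathit{Safe\_Set}$ holds vacuously. For the inductive step, I assume $\mathit{Proposed\_Set} \subseteq \mathit{Safe\_Set}$ and inspect every line of \Cref{lst:rla_proposer} that modifies either variable.

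First I would dispatch the modifications of $\mathit{Safe\_Set}$ (lines~\ref{line:modify_svs_start_rla}, \ref{line:modify_svs_propose_rla}, \ref{line:modify_svs_reliable_deliver_rla} of \Cref{lst:rla_proposer}): each of these only \emph{adds} elements to $\mathit{Safe\_Set}$, and enlarging the superset cannot violate a subset relation, so the hypothesis is trivially preserved. Next I handle the modifications of $\mathit{Proposed\_Set}$. In the three ``disclosure'' cases (lines~\ref{line:modify_proposed_set_start_rla}, \ref{line:modify_proposed_set_propose_rla}, \ref{line:modify_proposed_set_reliable_deliver_rla}) the code adds the very set ($\mathit{proposed\_value}$ or $\mathit{proposal}$) that was inserted into $\mathit{Safe\_Set}$ on the immediately preceding line (lines~\ref{line:modify_svs_start_rla}, \ref{line:modify_svs_propose_rla}, \ref{line:modify_svs_reliable_deliver_rla} respectively), so after the update $\mathit{Proposed\_Set} \subseteq \mathit{Safe\_Set}$ still holds by combining the hypothesis with this freshly added element.

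The one case that takes a short argument is the NACK rule at line~\ref{line:modify_proposed_set_nack_rla} of \Cref{lst:rla_proposer}, where $\mathit{Proposed\_Set}$ grows by $\mathit{update}$ without a corresponding preceding insertion into $\mathit{Safe\_Set}$. Here I would observe that this rule only fires when $\mathtt{Safe}(m) = \top$ for the processed $\mathtt{NACK}$ message $m = [\mathtt{NACK}(v, \mathit{id}), \mathit{update}, \mathit{proposed}]$, and that the $\mathtt{Safe}$ function returns $\mathit{update} \subseteq \mathit{Safe\_Set}$ for NACK messages (line~\ref{line:nack_update_safe_set}). Hence at the moment of the update $\mathit{update} \subseteq \mathit{Safe\_Set}$, and together with the induction hypothesis the new $\mathit{Proposed\_Set} = \mathit{Proposed\_Set} \cup \mathit{update}$ remains a subset of $\mathit{Safe\_Set}$. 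Since the invariant is preserved by every step, the lemma follows. I do not expect a genuine obstacle here; the only thing to be careful about is making sure the case analysis is exhaustive (all assignments to both variables are accounted for) and that in the NACK case the $\mathtt{Safe}$-guard is the one on the \emph{proposer} side (the function defined at the end of \Cref{lst:rla_proposer}, not the acceptor's).
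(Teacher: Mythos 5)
Your proof is correct and follows essentially the same route as the paper's: induction on the proposer's steps with an exhaustive case analysis of every assignment to $\mathit{Proposed\_Set}$ or $\mathit{Safe\_Set}$, using the fact that the corresponding set was just added to $\mathit{Safe\_Set}$ in the disclosure cases and the $\mathtt{Safe}()$ guard (line~\ref{line:nack_update_safe_set} of \Cref{lst:rla_proposer}) in the $\mathtt{NACK}$ case. The only nitpick is that in the reliable-delivery case the insertion into $\mathit{Safe\_Set}$ at line~\ref{line:modify_svs_reliable_deliver_rla} is not literally on the immediately preceding line, but it does occur earlier in the same handler, so the argument stands.
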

\begin{proof}
Initially, $\mathit{Proposed\_Set} = \mathit{Safe\_Set} = \emptyset$.
Thus, initially the invariant holds.

We now introduce the induction hypothesis $\mathit{Proposed\_Set} \subseteq \mathit{Safe\_Set}$.
Let us investigate all places at which either $\mathit{Proposed\_Set}$ or $\mathit{Safe\_Set}$ variables are changed in \Cref{lst:rla_proposer}:
\begin{compactitem}
    \item line~\ref{line:modify_svs_start_rla}:
    Holds because of the induction hypothesis.
    
    \item line~\ref{line:modify_proposed_set_start_rla}:
    Holds because of the induction hypothesis and the fact that $\mathit{proposed\_value} \subseteq \mathit{Safe\_Set}$.
    
    \item line~\ref{line:modify_svs_propose_rla}:
    Holds because of the induction hypothesis.
    
    \item line~\ref{line:modify_proposed_set_propose_rla}:
    Holds because of the induction hypothesis and the fact that $\mathit{proposed\_value} \subseteq \mathit{Safe\_Set}$.
    
    \item line~\ref{line:modify_svs_reliable_deliver_rla}:
    Holds because of the induction hypothesis.
    
    \item line~\ref{line:modify_proposed_set_reliable_deliver_rla}:
    Holds because of the induction hypothesis and the fact that $\mathit{proposal} \subseteq \mathit{Safe\_Set}$.
    
    \item line~\ref{line:modify_proposed_set_nack_rla}:
    Holds because of the induction hypothesis and the fact that $\mathit{update} \subseteq \mathit{Safe\_Set}$ (because of the $\mathtt{Safe}()$ function; line~\ref{line:nack_update_safe_set} of \Cref{lst:rla_proposer}).
\end{compactitem}
Since the induction hypothesis is always preserved, the lemma holds.
\end{proof}

Now, we show that every message sent by a correct server is eventually ``safe'' at all other correct servers.

\begin{lemma} \label{lemma:sent_message_received}
Any $\mathtt{ACK\_CON}(v, \mathit{id})$, $\mathtt{NACK}(v, \mathit{id})$ or $\mathtt{ACK\_REQ}(v, \mathit{id})$ message sent by a correct server is eventually safe (i.e., the $\mathtt{Safe}()$ function returns $\top$) for any other correct server.
\end{lemma}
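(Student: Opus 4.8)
The plan is to treat the three message types one at a time. In each case I will (i) identify the set $S$ whose inclusion in $\mathit{Safe\_Set}$ the relevant clause of the $\mathtt{Safe}()$ function tests, (ii) show that $S$ is already contained in the \emph{sender's} $\mathit{Safe\_Set}$ at the moment the message is sent, and (iii) conclude that $S$ is eventually --- and thereafter permanently --- contained in the $\mathit{Safe\_Set}$ of every correct server, so that $\mathtt{Safe}()$ returns $\top$ there.

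The only facts needed beyond the earlier lemmas are two easy observations. First, $\mathit{Safe\_Set}$ is monotone: every assignment to it in \Cref{lst:rla_proposer} (lines~\ref{line:modify_svs_start_rla}, \ref{line:modify_svs_propose_rla}, \ref{line:modify_svs_reliable_deliver_rla}) is a union, and the acceptor of \Cref{lst:rla_acceptor} shares the same variable, so once $x \in \mathit{Safe\_Set}$ at a correct server it holds thereafter; in particular $\mathit{Safe\_Set}$ is finite at every finite time. Second, by the propagation argument inside the proof of \Cref{lemma:svs_identical}, every element $x$ ever placed into the $\mathit{Safe\_Set}$ of a correct server was reliably broadcast inside a $\mathtt{DISCLOSURE}(v, \mathit{id})$ message carrying exactly $\{x\}$ with $\mathtt{valid}(x, v) = \top$; hence, by totality and agreement of reliable broadcast, every correct server eventually reliably delivers it, passes the check at line~\ref{line:check_reliable_deliver_rla}, and adds $x$ to its own $\mathit{Safe\_Set}$.

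Now I would identify $S$ in each case. An $\mathtt{ACK\_REQ}(v, \mathit{id})$ message is sent by a correct proposer (lines~\ref{line:broadcast_ack_req_1} and~\ref{line:broadcast_ack_req} of \Cref{lst:rla_proposer}) with payload its current $\mathit{Proposed\_Set}$, and the acceptor-side $\mathtt{Safe}()$ tests $\mathit{Proposed\_Set} \subseteq \mathit{Safe\_Set}$; by \Cref{lemma:proposed_set_svs} this inclusion holds at the sender at send time. An $\mathtt{ACK\_CON}(v, \mathit{id})$ message is sent by a correct acceptor (line~\ref{line:send_ack_con} of \Cref{lst:rla_acceptor}) with payload its current $\mathit{Accepted\_Set}$, and the proposer-side $\mathtt{Safe}()$ tests $\mathit{Accepted\_Set} \subseteq \mathit{Safe\_Set}$; by \Cref{lemma:accepted_set_svs} this holds at the sender at send time. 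A $\mathtt{NACK}(v, \mathit{id})$ message is sent by a correct acceptor (line~\ref{line:send_nack} of \Cref{lst:rla_acceptor}) with $\mathit{update}$ equal to its current $\mathit{Accepted\_Set}$, and the proposer-side $\mathtt{Safe}()$ tests $\mathit{update} \subseteq \mathit{Safe\_Set}$; again \Cref{lemma:accepted_set_svs} applies. In every case $S \subseteq \mathit{Safe\_Set}$ at the sender at send time and $S$ is finite, so by the second observation every element of $S$ eventually enters each correct server's $\mathit{Safe\_Set}$, and by monotonicity after finitely long $S \subseteq \mathit{Safe\_Set}$ holds permanently at every correct server, which is exactly the claim.

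This is routine bookkeeping rather than a real obstacle; the only care needed is to pair each message type with the correct side ($\mathtt{ACK\_REQ}$ is inspected by acceptors, $\mathtt{ACK\_CON}$ and $\mathtt{NACK}$ by proposers), to note that it is the $\mathit{update}$ component of a $\mathtt{NACK}$ --- not its echoed $\mathit{proposal}$ component --- that $\mathtt{Safe}()$ looks at, and to actually invoke monotonicity of $\mathit{Safe\_Set}$ so that ``contained at the sender at send time'' upgrades to ``eventually permanently contained at every correct server''.
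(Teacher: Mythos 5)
Your proof is correct and follows essentially the same route as the paper's: a case analysis on the three message types, using \Cref{lemma:proposed_set_svs} for $\mathtt{ACK\_REQ}$, \Cref{lemma:accepted_set_svs} for $\mathtt{NACK}$, and the eventual convergence of the $\mathit{Safe\_Set}$ variables (\Cref{lemma:svs_identical}) together with monotonicity to upgrade ``contained in the sender's $\mathit{Safe\_Set}$ at send time'' to ``eventually safe at every correct server.'' The only (harmless) divergence is the $\mathtt{ACK\_CON}$ case, where the paper notes that the echoed set equals the receiving proposer's own $\mathit{Proposed\_Set}$ and is therefore already safe at the receiver by \Cref{lemma:proposed_set_svs}, whereas you treat it uniformly with $\mathtt{NACK}$ via \Cref{lemma:accepted_set_svs} plus propagation --- both arguments are valid.
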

\begin{proof}
Let us first consider a message $m = [\mathtt{ACK\_CON}(v, \mathit{id}), \mathit{set}]$ sent by a correct server $r$ to a correct server $r'$.
Since $r$ has previously received the $\mathtt{ACK\_REQ}$ message for $\mathit{set}$ from $r'$, we know that $\mathit{Proposed\_Set} = \mathit{set}$ at server $r'$ at the moment of broadcasting this message (line~\ref{line:broadcast_ack_req_1} or line~\ref{line:broadcast_ack_req} of \Cref{lst:rla_proposer}).
Hence, $\mathit{set} \subseteq \mathit{Safe\_Set}$ at server $r'$ (by \Cref{lemma:proposed_set_svs}), which ensures that the lemma holds in this case.

Now, let $m = [\mathtt{NACK}(v, \mathit{id}), \mathit{set}, \mathit{proposed}]$ be sent by a correct server $r$.
At the moment of sending $m$, we know that $\mathit{Accepted\_Set} = \mathit{set}$ at server $r$ (by line~\ref{line:send_nack} of \Cref{lst:rla_acceptor}).
By \Cref{lemma:accepted_set_svs}, we know that $\mathit{set} \subseteq \mathit{Safe\_Set}$ at server $r$.
Eventually, by \Cref{lemma:svs_identical}, we know that $\mathit{set} \subseteq \mathit{Safe\_Set}$ at any other correct server.
Hence, $m$ eventually becomes safe.

Finally, let $m = [\mathtt{ACK\_REQ}(v, \mathit{id}), \mathit{set}, \mathit{number}]$ be broadcast by a correct server $r$.
We know that $\mathit{Proposed\_Set} = \mathit{set}$ (by lines~\ref{line:broadcast_ack_req_1} and~\ref{line:broadcast_ack_req} of \Cref{lst:rla_proposer}).
By \Cref{lemma:proposed_set_svs}, we know that $\mathit{set} \subseteq \mathit{Safe\_Set}$ at server $r$.
Eventually, by \Cref{lemma:svs_identical}, we know that $\mathit{set} \subseteq \mathit{Safe\_Set}$ at any other correct server.
Hence, $m$ eventually becomes safe, which concludes the lemma.
\end{proof}

Now, we show that a correct server modifies its proposal at most $\lfloor \frac{n - 1}{3} \rfloor$ times, where $n = |v.\mathtt{members}()|$.

\begin{lemma} \label{lemma:refine_f}
A correct server $r$ refines its proposal (i.e., executes line~\ref{line:broadcast_ack_req} of \Cref{lst:rla_proposer}) at most $\lfloor \frac{n - 1}{3} \rfloor$ times, where $n = |v.\mathtt{members}()|$.
\end{lemma}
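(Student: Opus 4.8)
\emph{Proof proposal.}
The plan is to bound the number of refinements by the number $f := \lfloor \frac{n-1}{3} \rfloor$ of potentially faulty members of $v$, by tracking which elements ever enter $\mathit{Proposed\_Set}$ at $r$. First I would observe that $r$ executes line~\ref{line:broadcast_ack_req} of \Cref{lst:rla_proposer} only from within the rule at line~\ref{line:nack_rule}, i.e., while $\mathit{state} = \text{``proposing''}$ and upon processing a $\mathtt{NACK}$ whose $\mathit{update}$ passes the test at line~\ref{line:check_nack_2_la1}; hence each such execution is immediately preceded by line~\ref{line:modify_proposed_set_nack_rla} and \emph{strictly} enlarges $\mathit{Proposed\_Set}_r$, and every newly added element lies in $\mathit{Safe\_Set}_r$ (the $\mathtt{Safe}$ test for a $\mathtt{NACK}$ returns $\mathit{update} \subseteq \mathit{Safe\_Set}$, line~\ref{line:nack_update_safe_set}). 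Moreover, once $r$ has first executed line~\ref{line:broadcast_ack_req_1} (transition to ``proposing''), $\mathit{Proposed\_Set}_r$ can only be modified at line~\ref{line:modify_proposed_set_nack_rla} (the $\mathtt{Start}$ and $\mathtt{Propose}$ handlers either have already fired or find $\mathit{proposed\_value} \neq \bot$). So the refinements generate a strictly increasing chain $P_0 \subsetneq P_1 \subsetneq \dots$, where $P_0$ is the value of $\mathit{Proposed\_Set}_r$ at the moment line~\ref{line:broadcast_ack_req_1} is executed, and the sets $P_i \setminus P_{i-1}$ are nonempty and pairwise disjoint.

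Next I would pin down $P_0$. Reaching the rule at line~\ref{line:quorum_rla} (which causes the transition to ``proposing'') requires $\mathit{init\_counter} \geq v.\mathtt{quorum}() = n - f$, and every increment of $\mathit{init\_counter}$ stems from reliably delivering a \emph{valid singleton} $\mathtt{DISCLOSURE}$ message. Since reliable broadcast delivers at most one message per sender and a correct server discards messages from non-members of $v$, these deliveries come from at least $n - f$ distinct members of $v$; and while the state is still ``disclosing'' each delivered singleton is also added to $\mathit{Proposed\_Set}$ (line~\ref{line:modify_proposed_set_reliable_deliver_rla}). Therefore $P_0$ contains the (unique) disclosed value of every member in some set $M$ with $|M| \geq n - f$.

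I would then reuse the argument from the proof of \Cref{theorem: bounded_decision_rla}: let $\mathit{SAFE\_SET}$ be the union of all valid singletons ever disclosed. Because at most one value is delivered per sender and only members' $\mathtt{DISCLOSURE}$ messages are processed, $\mathit{SAFE\_SET}$ is covered by at most $|v.\mathtt{members}()| = n$ singletons, one per member of $v$, and the $\geq n-f$ of them indexed by $M$ already belong to $P_0$; hence $\mathit{SAFE\_SET} \setminus P_0$ is covered by the singletons indexed by the $\leq f$ members outside $M$, so $|\mathit{SAFE\_SET} \setminus P_0| \leq f$. Combining this with \Cref{lemma:proposed_set_svs} (so that $\mathit{Proposed\_Set}_r \subseteq \mathit{Safe\_Set}_r \subseteq \mathit{SAFE\_SET}$ at all times), each $P_i \setminus P_{i-1}$ is a nonempty subset of $\mathit{SAFE\_SET} \setminus P_0$, and these subsets are pairwise disjoint, so the number of refinements is at most $|\mathit{SAFE\_SET} \setminus P_0| \leq f = \lfloor \frac{n-1}{3} \rfloor$, as claimed.

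The step I expect to be the main obstacle is the bookkeeping in the middle: arguing rigorously that by the time $r$ leaves the disclosing phase its $\mathit{Proposed\_Set}$ has already ``absorbed'' a full quorum's worth of \emph{distinct members'} disclosures, so that only $\leq f$ fresh singletons remain to fuel later refinements, and in particular that no single (even Byzantine) member can inject more than one distinct safe singleton into the system. This is exactly where the ``at most one delivered message per sender'' guarantee of reliable broadcast, together with the discarding of non-member messages, is indispensable; without it the naive count only yields the weaker bound $n-1$.
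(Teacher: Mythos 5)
Your proposal is correct and follows essentially the same route as the paper's proof: the first \texttt{ACK\_REQ} already carries the disclosures of at least $v.\mathtt{quorum}() = n - \lfloor \frac{n-1}{3} \rfloor$ members, the pool of safe singletons is indexed by the at most $n$ members of $v$ (one per sender, by reliable broadcast), and each refinement strictly enlarges $\mathit{Proposed\_Set}$ within that pool, yielding the bound $n - v.\mathtt{quorum}() = \lfloor \frac{n-1}{3} \rfloor$. Your version is merely more explicit than the paper's about strict growth at line~\ref{line:check_nack_2_la1} and about counting by sender rather than by value, which correctly handles duplicate disclosures.
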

\begin{proof}
The first proposal of server $r$ includes proposals from (at least) $v.\mathtt{quorum()}$ servers (lines~\ref{line:quorum_rla} and~\ref{line:broadcast_ack_req_1} of \Cref{lst:rla_proposer}).
Since $|\mathit{Safe\_Set}|$ at $r$ is at most $n$ (due to rule 6 and the check at line~\ref{line:check_reliable_deliver_rla} of \Cref{lst:rla_proposer}) and $n - v.\mathtt{quorum}() = \lfloor \frac{n - 1}{3} \rfloor$, $r$ can execute line~\ref{line:broadcast_ack_req} of \Cref{lst:rla_proposer} at most $\lfloor \frac{n - 1}{3} \rfloor$ times.
\end{proof}

The next lemma shows that a correct proposer never sends an $\mathtt{ACK\_REQ}$ message for $\emptyset$.

\begin{lemma} \label{lemma:no_empty_ack_req}
Let a correct server $r$ send the $[\mathtt{ACK\_REQ}(v, \mathit{id}), \mathit{proposal}]$ message.
Then, $\mathit{proposal} \neq \emptyset$.
\end{lemma}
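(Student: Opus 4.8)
The plan is to trace the only two sites in \Cref{lst:rla_proposer} where a correct proposer emits an $\mathtt{ACK\_REQ}(v,\mathit{id})$ message and check that the payload it carries is non-empty at that instant. Both sites—line~\ref{line:broadcast_ack_req_1} (reached when $\mathit{init\_counter} \ge v.\mathtt{quorum()}$ and $\mathit{state} = $ ``disclosing''), and line~\ref{line:broadcast_ack_req} (reached inside the $\mathtt{NACK}$ handler while $\mathit{state} = $ ``proposing'')—send exactly the current value of the local variable $\mathit{Proposed\_Set}$. So it suffices to show $\mathit{Proposed\_Set} \neq \emptyset$ when line~\ref{line:broadcast_ack_req_1} runs, and then lift this to line~\ref{line:broadcast_ack_req} by monotonicity.

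First I would record that $\mathit{Proposed\_Set}$ never shrinks: every write to it—lines~\ref{line:modify_proposed_set_start_rla},~\ref{line:modify_proposed_set_propose_rla},~\ref{line:modify_proposed_set_reliable_deliver_rla}, and~\ref{line:modify_proposed_set_nack_rla} of \Cref{lst:rla_proposer}—is of the form $\mathit{Proposed\_Set} \leftarrow \mathit{Proposed\_Set} \cup (\cdot)$. Hence once $\mathit{Proposed\_Set}$ is non-empty it stays non-empty. The $\mathtt{NACK}$-handler broadcast at line~\ref{line:broadcast_ack_req} occurs with $\mathit{state} = $ ``proposing'', which can only be entered by executing the rule at line~\ref{line:quorum_rla}, which in turn executes the first $\mathtt{ACK\_REQ}$ broadcast at line~\ref{line:broadcast_ack_req_1}; so the second site is strictly later than the first and inherits non-emptiness of $\mathit{Proposed\_Set}$.

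It remains to handle the first site. Since $r \in v.\mathtt{members()}$ by rule~1 of $\mathit{rla}(v,\mathit{id})$, we have $n := |v.\mathtt{members()}| \ge 1$, so $v.\mathtt{quorum()} = n - \lfloor (n-1)/3 \rfloor \ge 1$; thus the guard of the rule at line~\ref{line:quorum_rla} forces $\mathit{init\_counter} \ge 1$, i.e. $r$ executed the body of the $\mathtt{DISCLOSURE}$-delivery handler (lines~\ref{line:reliable_deliver_rla}–\ref{line:modify_proposed_set_reliable_deliver_rla}) at least once past the $|\mathit{proposal}| = 1$ check. The key invariant to pin down is that $r$'s $\mathit{state}$ equals ``disclosing'' during all such increments preceding the firing of the rule at line~\ref{line:quorum_rla}: $\mathit{state}$ is initialized to ``disclosing'', advances only ``disclosing'' $\to$ ``proposing'' $\to$ ``decided'', and leaves ``disclosing'' solely via that rule, which has not yet fired. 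Consequently, every $\mathit{init\_counter}$ increment so far ran inside an $\mathit{state} = $ ``disclosing'' block and therefore also executed $\mathit{Proposed\_Set} \leftarrow \mathit{Proposed\_Set} \cup \mathit{proposal}$ with $|\mathit{proposal}| = 1$ (line~\ref{line:modify_proposed_set_reliable_deliver_rla}), so $\mathit{Proposed\_Set}$ already contains a singleton by the time line~\ref{line:broadcast_ack_req_1} runs. The main obstacle is exactly this bookkeeping—verifying that no $\mathit{init\_counter}$ increment can occur with $\mathit{state} \neq $ ``disclosing'' before the first $\mathtt{ACK\_REQ}$, together with the (easy but necessary) arithmetic fact $v.\mathtt{quorum()} \ge 1$ so that the guard genuinely witnesses such an increment; the rest is the one-line monotonicity argument above.
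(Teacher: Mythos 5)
Your proof is correct and follows essentially the same route as the paper's: the first $\mathtt{ACK\_REQ}$ inherits non-emptiness of $\mathit{Proposed\_Set}$ from the singleton check guarding the $\mathtt{DISCLOSURE}$ handler, and all later $\mathtt{ACK\_REQ}$ messages follow by monotonicity of $\mathit{Proposed\_Set}$ under unions. You in fact make explicit a detail the paper's proof elides --- that $\mathit{state}$ must still be ``disclosing'' whenever $\mathit{init\_counter}$ is incremented before the quorum rule fires, so the update $\mathit{Proposed\_Set} \leftarrow \mathit{Proposed\_Set} \cup \mathit{proposal}$ with $|\mathit{proposal}| = 1$ is genuinely executed before the first broadcast.
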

\begin{proof}
The first $\mathtt{ACK\_REQ}$ message is sent for $\mathit{proposal}$ at line~\ref{line:broadcast_ack_req_1} of \Cref{lst:rla_proposer}, where $\mathit{proposal} = \mathit{Proposed\_Set}$.
By the check at line~\ref{line:check_reliable_deliver_rla} of \Cref{lst:rla_proposer}, we know that $\mathit{Proposed\_Set} \neq \emptyset$.

After the first sent $\mathtt{ACK\_REQ}$, other $\mathtt{ACK\_REQ}$ messages are sent at line~\ref{line:broadcast_ack_req} of \Cref{lst:rla_proposer} for $\mathit{Proposed\_Set}$.
Since $\mathit{Proposed\_Set}$ is never updated to $\emptyset$ and elements of $\mathit{Proposed\_Set}$ are never removed, we conclude the lemma.
\end{proof}

The next lemma shows that, if a correct proposer stops refining its proposal, then the proposer eventually decides.

\begin{lemma} \label{lemma:no_refine_decide}
If there exists a time $t$ after which a correct server $r$, which is in the $\mathtt{proposing}$ state, does not execute line~\ref{line:broadcast_ack_req} of \Cref{lst:rla_proposer}, then $r$ eventually decides.
\end{lemma}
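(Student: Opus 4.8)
The plan is a proof by contradiction: assume that $r$ never decides. Since the only way a correct proposer leaves the $\mathtt{proposing}$ state is by deciding (line~\ref{line:modify_state_decided_la1} of \Cref{lst:rla_proposer}), $r$ then stays in the $\mathtt{proposing}$ state forever. First I would isolate the stable configuration of $r$. An $\mathtt{ACK\_REQ}$ message is broadcast by $r$ only at line~\ref{line:broadcast_ack_req_1} (the one-time transition into $\mathtt{proposing}$) and at line~\ref{line:broadcast_ack_req}; by the hypothesis the latter is not executed after $t$, so $r$ performs a \emph{last} $\mathtt{ACK\_REQ}$ broadcast at some finite time $t_0$, of the value $P$ of $\mathit{Proposed\_Set}$ at that moment. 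In the $\mathtt{proposing}$ state $\mathit{Proposed\_Set}$ changes only at line~\ref{line:modify_proposed_set_nack_rla}, which is immediately followed by a flush of $\mathit{Ack\_Set}$ and $\omega$ (line~\ref{line:flush}) and an $\mathtt{ACK\_REQ}$ broadcast (line~\ref{line:broadcast_ack_req}); hence after $t_0$ the value $\mathit{Proposed\_Set} = P$ never changes, $\mathit{Ack\_Set}$ and $\omega$ are never flushed, and no further $\mathtt{ACK\_REQ}$ is broadcast. By \Cref{lemma:no_empty_ack_req}, $P \neq \emptyset$.

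Next I would show that every correct member of $v$ eventually sends $[\mathtt{ACK\_CON}(v, \mathit{id}), P]$ to $r$. The failure model of $\mathit{rla}(v, \mathit{id})$ guarantees at least $v.\mathtt{quorum}()$ correct members of $v$; fix such a member $k$. By best-effort broadcast validity $k$ eventually receives the $[\mathtt{ACK\_REQ}(v, \mathit{id}), P]$ message sent at $t_0$ and places it in $\mathit{Waiting\_Msgs}$; by \Cref{lemma:sent_message_received} this message eventually becomes $\mathtt{Safe}$ at $k$, and since $P \neq \emptyset$ the rule at line~\ref{line:ack_req_check} of \Cref{lst:rla_acceptor} eventually processes it. The crucial step is to exclude the $\mathtt{NACK}$ branch. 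Suppose instead that, when $k$ processes this message, it finds $\mathit{Accepted\_Set} \not\subseteq P$ and therefore sends $m = [\mathtt{NACK}(v, \mathit{id}), \mathit{update}, P]$ with $\mathit{update} = \mathit{Accepted\_Set} \not\subseteq P$ (line~\ref{line:send_nack} of \Cref{lst:rla_acceptor}). Then $m$ reaches $r$ after $t_0$, becomes $\mathtt{Safe}$ at $r$ (\Cref{lemma:sent_message_received}), and since $r$ is in the $\mathtt{proposing}$ state with $\mathit{Proposed\_Set} = P$ — which equals the $\mathit{proposed}$ field of $m$ — the rule at line~\ref{line:nack_rule} of \Cref{lst:rla_proposer} eventually fires for $m$ at a time strictly after $t_0$; because $\mathit{update} \cup P \neq P$, the check at line~\ref{line:check_nack_2_la1} passes and $r$ executes line~\ref{line:broadcast_ack_req} after $t_0$, contradicting that $t_0$ was $r$'s last $\mathtt{ACK\_REQ}$ broadcast. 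Hence $k$ takes the other branch, sets $\mathit{Accepted\_Set} = P$, and sends $[\mathtt{ACK\_CON}(v, \mathit{id}), P]$ to $r$.

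Finally I would close the loop at $r$: each of the (at least $v.\mathtt{quorum}()$) $\mathtt{ACK\_CON}$ messages for $P$ is sent after $t_0$, reaches $r$, enters $\mathit{Waiting\_Msgs}$, and becomes $\mathtt{Safe}$ (\Cref{lemma:sent_message_received}); since at $r$ the state is $\mathtt{proposing}$ and $\mathit{Proposed\_Set} = P$ matches the message's $\mathit{proposed}$ field, the rule at line~\ref{line:process_ack_con_rla} of \Cref{lst:rla_proposer} eventually fires and inserts the (distinct) sender into $\mathit{Ack\_Set}$, which is not flushed after $t_0$. Thus $|\mathit{Ack\_Set}| \geq v.\mathtt{quorum}()$ eventually holds while the state is still $\mathtt{proposing}$, so the rule at line~\ref{line:decide_rule_la1} fires and $r$ decides — contradicting the assumption. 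Therefore $r$ eventually decides.

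The main obstacle I expect is the middle paragraph: threading the timing so that the hypothetical $\mathtt{NACK}$, by the time it is processed at $r$, still carries $\mathit{proposed} = \mathit{Proposed\_Set}$ (guaranteed because everything in that sub-argument happens after $t_0$, where $\mathit{Proposed\_Set}$ is frozen at $P$) and forces a genuine refinement (because $\mathit{update} \not\subseteq P$). A related subtlety, handled uniformly via the execution model's fairness assumption, is that the guarding rules at lines~\ref{line:ack_req_check}, \ref{line:nack_rule}, \ref{line:process_ack_con_rla} and~\ref{line:decide_rule_la1} become \emph{permanently} active — which in turn uses both \Cref{lemma:sent_message_received} (messages become $\mathtt{Safe}$) and the standing ``$r$ never decides'' assumption (so the state stays $\mathtt{proposing}$).
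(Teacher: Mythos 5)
Your proof is correct and follows essentially the same route as the paper's: identify the last $\mathtt{ACK\_REQ}$ broadcast and its frozen proposal, rule out the $\mathtt{NACK}$ branch at correct acceptors by showing it would force a refinement after that point (contradiction), and then collect a quorum of $\mathtt{ACK\_CON}$ messages into the unflushed $\mathit{Ack\_Set}$ to trigger the decision rule. The outer proof-by-contradiction wrapper and the extra timing bookkeeping are presentational differences only.
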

\begin{proof}
Let $\mathit{proposal}$ be the last proposal for which $r$ sends an $\mathtt{ACK\_REQ}(v, \mathit{id})$ message in an execution.
Since $r$ does not execute line~\ref{line:broadcast_ack_req} of \Cref{lst:rla_proposer} after this message has been sent, either $r$ does not receive any $\mathtt{NACK}(v, \mathit{id})$ message or no received $\mathtt{NACK}(v, \mathit{id})$ message allows $r$ to pass the check at line~\ref{line:check_nack_2_la1} of \Cref{lst:rla_proposer}.
Since $r$ is correct, its $[\mathtt{ACK\_REQ}(v, \mathit{id}), \mathit{proposal}]$ message eventually reaches every correct acceptor (by \cref{lemma:sent_message_received,lemma:no_empty_ack_req}).
Then, each correct acceptor sends the $[\mathtt{ACK\_CON}(v, \mathit{id}), \mathit{proposal}]$ message to $r$.
By contradiction, suppose that a correct acceptor sends $[\mathtt{NACK}(v, \mathit{id}), \mathit{update}, \mathit{proposed}]$ to $r$, then:
\begin{compactitem}
    \item $\mathit{update} \not\subseteq \mathit{proposal}$ (by line~\ref{line:check_accepted_set_la1} of \Cref{lst:rla_acceptor}), and

    \item $m$ would eventually be ``safe'' at $r$ (due to \Cref{lemma:sent_message_received}) and, since $r$ does not update its $\mathit{Proposed\_Set}$ variable, line~\ref{line:nack_rule} of \Cref{lst:rla_proposer} eventually passes at $r$, and
        
    \item the check at line~\ref{line:check_nack_2_la1} of \Cref{lst:rla_proposer} passes at $r$ since $\mathit{update} \not\subseteq \mathit{proposal}$.
\end{compactitem}
Therefore, $r$ would refine its proposal, which contradicts the fact that $r$ never refines $\mathit{proposal}$.

Hence, once $r$ receives $\mathtt{ACK\_CON}(v, \mathit{id})$ messages from all correct servers, these messages are safe for $r$ (by \Cref{lemma:sent_message_received}) and, since $\mathit{Proposed\_Set}$ at server $r$ does not change, the rule at line~\ref{line:decide_rule_la1} of \Cref{lst:rla_proposer} is active, which implies that $r$ decides.
\end{proof}

Finally, we prove the liveness property of $\mathit{rla}(v, \mathit{id})$.

\begin{theorem}[Liveness]
Algorithm given in \Cref{lst:rla_proposer,lst:rla_acceptor} satisfies liveness.
\end{theorem}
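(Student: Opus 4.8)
The plan is to show that every correct member of $v$ first transits to the $\mathtt{proposing}$ state, and once there, it eventually decides; liveness then follows by universally quantifying over correct members. First I would establish that every correct server $r \in v.\mathtt{members}()$ eventually reaches the $\mathtt{proposing}$ state. By hypothesis at least one correct server proposes, hence (via line~\ref{line:reliable_broadcast_start_rla} or line~\ref{line:reliable_broadcast_propose_rla} of \Cref{lst:rla_proposer}) reliably broadcasts a $\mathtt{DISCLOSURE}(v,\mathit{id})$ message carrying a value that, by rule~6 of $\mathit{rla}(v,\mathit{id})$, is a singleton $\{x\}$ with $\mathtt{valid}(x,v)=\top$. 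Since the liveness hypothesis guarantees that no correct server stops (hence, in particular, no correct member halts), the totality and agreement properties of reliable broadcast apply, so every correct server eventually reliably delivers this message; a correct server still holding $\mathit{proposed\_value}=\bot$ then adopts it and reliably broadcasts its own $\mathtt{DISCLOSURE}$ (lines~\ref{line:check_proposed_value_bot}--\ref{line:set_proposed_value_reliable_deliver_rla} of \Cref{lst:rla_proposer}). Consequently every correct member eventually reliably broadcasts a valid singleton $\mathtt{DISCLOSURE}$, and since at least $v.\mathtt{quorum}()$ members of $v$ are correct and all of them start $\mathit{rla}(v,\mathit{id})$, every correct server eventually reliably delivers at least $v.\mathtt{quorum}()$ such messages, passing the check at line~\ref{line:check_reliable_deliver_rla} each time; thus $\mathit{init\_counter}$ reaches $v.\mathtt{quorum}()$ and the rule at line~\ref{line:quorum_rla} of \Cref{lst:rla_proposer} fires, moving the server to the $\mathtt{proposing}$ state (and broadcasting its first $\mathtt{ACK\_REQ}$).

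Next I would combine the bound on refinements with the ``no-refine implies decide'' lemma. By \Cref{lemma:refine_f}, each correct server executes line~\ref{line:broadcast_ack_req} of \Cref{lst:rla_proposer} — i.e., refines its proposal — at most $\lfloor\frac{n-1}{3}\rfloor$ times, where $n=|v.\mathtt{members}()|$. Hence for each correct server there is a time after which it never refines again. Since, by the previous step, every correct server is in the $\mathtt{proposing}$ state, \Cref{lemma:no_refine_decide} applies and the server eventually decides from $\mathit{rla}(v,\mathit{id})$. As this argument holds for every correct $r \in v.\mathtt{members}()$, all correct members eventually decide, which is exactly the liveness property.

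The main obstacle is the first step: one must verify that the ``adopt-and-rebroadcast'' cascade genuinely propagates a valid singleton disclosure to \emph{all} correct members, so that each correct $\mathit{init\_counter}$ really counts $v.\mathtt{quorum}()$ valid disclosures (the disclosures of the at least $v.\mathtt{quorum}()$ correct members suffice), and that the invocation of reliable broadcast's totality and agreement is legitimate precisely because the liveness premise rules out any correct server halting. The remainder of the proof is a direct assembly of \Cref{lemma:refine_f} and \Cref{lemma:no_refine_decide}, which already encapsulate the delicate arguments about $\mathtt{NACK}$-driven refinement and message safety (themselves resting on \Cref{lemma:sent_message_received,lemma:no_empty_ack_req,lemma:proposed_set_svs,lemma:svs_identical}).
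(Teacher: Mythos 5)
Your proof is correct and follows essentially the same route as the paper's: establish that every correct member reaches the $\mathtt{proposing}$ state (the paper states this in one line, you justify the disclosure adopt-and-rebroadcast cascade explicitly), then conclude via \Cref{lemma:refine_f} and \Cref{lemma:no_refine_decide}. Your version is merely more detailed on the first step; no gap.
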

\begin{proof}
Let a correct server $\mathit{proposer}$ propose to $\mathit{rla}(v, \mathit{id})$.
This means that every correct server $r$ eventually obtains proposals of $v.\mathtt{quorum()}$ servers (line~\ref{line:quorum_rla} of \Cref{lst:rla_proposer}) and sends an $\mathtt{ACK\_REQ}$ message (line~\ref{line:broadcast_ack_req_1} of \Cref{lst:rla_proposer}).
By \Cref{lemma:refine_f}, $r$ refines its proposal only finitely many times.
Once $r$ stops refining its proposal, $r$ decides (by \Cref{lemma:no_refine_decide}).
\end{proof}

\subsection{View Generator - Implementation} \label{subsubsection:view_generator_implementation}

Now that we have introduced RLA, we are ready to present our implementation of $\mathit{vg}(v)$.
Recall that the failure model of $\mathit{vg}(v)$ assumes (at least) a quorum of correct members of view $v$.

Our implementation of $\mathit{vg}(v)$ consists of two RLA instances: $\mathit{rla}(v, 1)$ and $\mathit{rla}(v, 2)$.
We provide the descriptions of both instances below:
\begin{compactenum}
    \item $\mathit{rla}(v, 1)$ description:
    Sets of tuples $(v', \mathit{set}', \epsilon')$, where $v'$ is a view, $\mathit{set}'$ is a set of views and $\epsilon'$ is an evidence, are processed by $\mathit{rla}(v, 1)$.
    Recall that such tuples are given as inputs to $\mathit{vg}(v)$.

    \item $\mathit{rla}(v, 2)$ description:
    Sets of tuples $(\mathit{set}', \epsilon')$ are processed by $\mathit{rla}(v, 2)$, where $\mathit{set}'$ is a set of views and $\epsilon'$ is an evidence (see \Cref{lst:set_proof}).

\begin{lstlisting}[
  caption={(Set, Evidence) tuple},
  label={lst:set_proof},
  escapechar = ?]
?\textbf{Set, Evidence:}?
    instance ?$(\mathit{set}', \epsilon')$?, with ?$\mathit{set}'$? is Set(View) and ?$\epsilon'$? is Evidence
    
    ?\textbf{function}? construct(Set(View, Set, Evidence) ?$\mathit{set}$?):
        Set(Set(View)) ?$\mathit{SET} = \{\mathit{set}^* \,|\, (v^*, \mathit{set}^*, \epsilon^*) \in \mathit{set} \text{ and } \mathit{set}^* \neq \emptyset\}$? ?\label{line:SEQ}?
        Set(View) ?$\mathit{VIEW} = \{v^* \,|\, (v^*, \mathit{set}^*, \epsilon^*) \in \mathit{set} \text{ and } \mathit{set}^* = \emptyset\}$? ?\label{line_VIEW}?
        Set(View) ?$\mathit{result\_set} = \bot$?
    
        if ?$|\mathit{VIEW}| = 0$?: // no ``stand-alone'' views
            ?$\mathit{result\_set} = $? max_cardinality?\footnote{We assume that the $\mathtt{max\_cardinality(\mathit{set})}$ deterministic function, where $\mathit{set}$ is a set of sets, returns the set with the greatest cardinality that belongs to $\mathit{set}$; if there are multiple sets that satisfy the condition, then any such set is returned.}?(?$\mathit{SET}$?) // get the set with the greatest cardinality ?\label{line:return_set_no_standalone_views}?
        else:
            Set(View) ?$\mathit{greatest\_set} = $? max_cardinality(?$\mathit{SET}$?) // get the set with the greatest cardinality
            View ?$\mathit{greatest\_view} = \emptyset$?
            if ?$\mathit{greatest\_set} \neq \bot$?:
                ?$\mathit{greatest\_view} =$? max_cardinality(?$\mathit{greatest\_set}$?) // get the view with the greatest cardinality?\footnote{Recall that a view is a set of changes.}? ?\label{line:view_from_greatest}?
            View ?$\mathit{union\_view} = \bigcup\limits_{v^* \in \mathit{VIEW}} v^*$?
            ?$\mathit{result\_set} = \{\mathit{greatest\_view} \cup \mathit{union\_view}\}$? ?\label{line:result_set_2}?
        ?\textbf{return}? ?$\mathit{result\_set}$?
    
    // we need to define the ?\textcolor{gray}{$\mathtt{valid}$}? function
    ?\textbf{function}? valid(Set, Evidence ?$(\mathit{set}', \epsilon')$?, View ?$v$?):
        if does not ?exist? Set(View, Set, Evidence) ?$\mathit{set}$? such that verify_output(?$\mathit{set}$?, ?$\mathit{rla}(v, 1)$?, ?$\epsilon'$?) ?$ = \top$?:
            ?\textbf{return}? ?$\bot$?
        ?\textbf{return}? construct(?$\mathit{set}$?) ?$= \mathit{set}'$?
\end{lstlisting}
\end{compactenum}

Finally, we give our implementation of $\mathit{vg}(v)$.

\begin{lstlisting}[
  caption={View generator - implementation},
  label={lst:view_generator_implementation},
  escapechar = ?]
?\textbf{View Generator:}?
    instance ?$\mathit{vg}(v)$?, with ?$v$? is View
    
    ?\textcolor{plainorange}{Implementation:}?
        upon <?$\mathit{vg}(v)$?.Init>: // initialization of ?\textcolor{gray}{$\mathit{vg}(v)$}?
            Bool ?$\mathit{started1} = \bot$?
            Bool ?$\mathit{started2} = \bot$?
    
        upon <?$\mathit{vg}(v)$?.Start | View, Set, Evidence ?$(v', \mathit{set}', \epsilon')$?>:
            ?$\mathit{started1} = \top$?
            if ?$(v', \mathit{set}', \epsilon') \neq \bot$?:
                ?\textbf{trigger}? <?$\mathit{rla}(v, 1)$?.Start | ?$\{(v', \mathit{set}', \epsilon')\}$?> ?\label{line:start_rla_1_1}?
            else:
                ?\textbf{trigger}? <?$\mathit{rla}(v, 1)$?.Start | ?$\bot$?> ?\label{line:start_rla_1_2}?
        
        upon <?$\mathit{vg}(v)$?.Propose | View, Set, Evidence ?$(v', \mathit{set}', \epsilon')$?>:
            ?\textbf{trigger}? <?$\mathit{rla}(v, 1)$?.Propose | ?$\{(v', \mathit{set}', \epsilon')\}$?>
        
        upon <?$\mathit{rla}(v, 1)$?.Decide | Set(View, Set, Evidence) ?$\mathit{dec}$?, Certificate ?$\omega$?>:
            Set(View) ?$\mathit{proposal} = $? construct(?$\mathit{dec}$?) // the function from ?\textcolor{gray}{\Cref{lst:set_proof}}?
            ?$\mathit{started2} = \top$?
            ?\textbf{trigger}? <?$\mathit{rla}(v, 2)$?.Start | ?$\{(\mathit{proposal}, \omega)\}$?> ?\label{line:start_rla_2}?
        
        upon <?$\mathit{rla}(v, 2)$?.Decide | Set(Set, Evidence) ?$\mathit{dec}$?, Certificate ?$\omega$?>: ?\label{line:vg_rule_decide}?
            Set(Set(View)) ?$\mathit{SET} = \{\mathit{set}^* \,|\, (\mathit{set}^*, \epsilon^*) \in \mathit{dec}\}$? ?\label{line: set}?
            ?\textbf{trigger}? <?$\mathit{vg}(v)$?.Decide | ?$\bigcup\limits_{\mathit{set}^* \in \mathit{SET}} \mathit{set}^*$?, ?$\omega$?> ?\label{line:union}?
        
        upon <?$\mathit{vg}(v)$?.Stop>:
            if ?$\mathit{started1} = \top$?:
                ?\textbf{trigger}? <?$\mathit{rla}(v, 1)$?.Stop>
            if ?$\mathit{started2} = \top$?:
                ?\textbf{trigger}? <?$\mathit{rla}(v, 2)$?.Stop>
\end{lstlisting}

In order to conclude the implementation of $\mathit{vg}(v)$, we need to define when $\mathtt{verify\_output}(\mathit{set}_{\mathit{res}}, \mathit{vg}(v), \omega) = \top$.

\begin{lstlisting}[
  caption={The $\mathtt{verify\_output}$ function for $\mathit{vg}(v)$},
  label={lst:verify_output_vg},
  escapechar = ?]
?\textbf{function}? verify_output(Set(View) ?$\mathit{set}_{\mathit{res}}$?, Distributed_Primitive_Instance ?$\mathit{vg}(v)$?, Certificate ?$\omega$?):
    if ?$\omega$? is not Set(Message): ?\textbf{return}? ?$\bot$? ?\label{line:check_omega_vg}?
    
    if does not ?exist? Set(Set, Evidence) ?$\mathit{set}$? such that verify_output(?$\mathit{set}$?, ?$\mathit{rla}(v, 2)$?, ?$\omega$?) ?$ = \top$?: ?\label{line:check_verify_vg}?
        ?\textbf{return}? ?$\bot$?
        
    Set(Set(View)) ?$\mathit{SET} = \{\mathit{set}^* \,|\, (\mathit{set}^*, \epsilon^*) \in \mathit{set}\}$?
    ?\textbf{return}? ?$\mathit{set}_{\mathit{res}} = \bigcup\limits_{\mathit{set}^* \in \mathit{SET}} \mathit{set}^*$?
\end{lstlisting}

\noindent \textit{Proof of correctness.}
Finally, we prove the correctness of $\mathit{vg}(v)$.
We start by proving the integrity property.

\begin{theorem} [Integrity]
Algorithm given in \Cref{lst:view_generator_implementation} satisfies integrity.
\end{theorem}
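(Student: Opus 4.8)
The plan is to reduce integrity of $\mathit{vg}(v)$ to the integrity of its two underlying RLA instances, which has already been established. The only indication emitted by $\mathit{vg}(v)$ is \texttt{<}$\mathit{vg}(v)$.Decide\texttt{>} (see the interface in \Cref{lst:view_generator_interface}), and inspecting \Cref{lst:view_generator_implementation} this indication is triggered \emph{exclusively} inside the rule guarded by \texttt{<}$\mathit{rla}(v,2)$.Decide\texttt{>} (line~\ref{line:vg_rule_decide}); the \texttt{Start}, \texttt{Propose}, and \texttt{Stop} rules of $\mathit{vg}(v)$ merely forward requests into $\mathit{rla}(v,1)$ and $\mathit{rla}(v,2)$ and never emit a $\mathit{vg}$ indication. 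Hence it suffices to argue that a correct server $r \notin v.\mathtt{members()}$ never receives a \texttt{Decide} indication from $\mathit{rla}(v,2)$.

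First I would apply the integrity property of RLA (the \textbf{Integrity} theorem for $\mathit{rla}(v,\mathit{id})$ proven earlier) to $\mathit{rla}(v,2)$: no correct server $r \notin v.\mathtt{members()}$ receives any indication from $\mathit{rla}(v,2)$. Consequently the rule at line~\ref{line:vg_rule_decide} is never active at such an $r$, so $r$ never triggers \texttt{<}$\mathit{vg}(v)$.Decide\texttt{>}; since that is the sole indication of $\mathit{vg}(v)$, integrity holds. If one prefers not to treat $\mathit{rla}(v,2)$'s guarantees as a black box (note $\mathit{rla}(v,2)$ is only ever started from within $\mathit{vg}(v)$, at line~\ref{line:start_rla_2}, which itself sits under \texttt{<}$\mathit{rla}(v,1)$.Decide\texttt{>}), I would instead chain through $\mathit{rla}(v,1)$: by RLA integrity a correct $r \notin v.\mathtt{members()}$ never receives $\mathit{rla}(v,1)$'s decision, hence never starts $\mathit{rla}(v,2)$, hence—by the structure of the RLA implementation, which decides only after being started and processing its own protocol messages—never receives $\mathit{rla}(v,2)$'s decision either. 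Either route yields the same conclusion.

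This is essentially bookkeeping over which guarded rule can fire at a non-member, so I do not anticipate a genuine obstacle; the only care needed is to confirm that \texttt{<}$\mathit{vg}(v)$.Decide\texttt{>} is the unique indication of the primitive and that no other code path in \Cref{lst:view_generator_implementation} can cause $r$ to emit it, both of which are immediate from the interface and implementation listings.
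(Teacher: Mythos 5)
Your proposal is correct and its primary route is exactly the paper's argument: integrity of $\mathit{rla}(v,2)$ implies the rule at line~\ref{line:vg_rule_decide} of \Cref{lst:view_generator_implementation} never becomes active at a correct server $r \notin v.\mathtt{members()}$, so $r$ never decides from $\mathit{vg}(v)$. The extra chaining through $\mathit{rla}(v,1)$ is unnecessary but harmless.
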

\begin{proof}
By integrity of $\mathit{rla}(v, 2)$, a correct server $r \notin v.\mathtt{members}()$ never decides from $\mathit{rla}(v, 2)$.
Hence, the rule at line~\ref{line:vg_rule_decide} of \Cref{lst:view_generator_implementation} is never active at server $r$, which means that $r$ never decides from $\mathit{vg}(v)$.
Therefore, integrity is satisfied.
\end{proof}

Next, we prove the comparability property.

\begin{theorem} [Comparability]
Algorithm given in \Cref{lst:view_generator_implementation} satisfies comparability.
\end{theorem}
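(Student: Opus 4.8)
The plan is to reduce the comparability of $\mathit{vg}(v)$ to the comparability of the underlying lattice‑agreement instance $\mathit{rla}(v, 2)$, exploiting the fact that the output of $\mathit{vg}(v)$ is obtained from that of $\mathit{rla}(v, 2)$ by a monotone map. First I would unfold the definition of ``committed by $\mathit{vg}(v)$'' via the $\mathtt{verify\_output}$ function of \Cref{lst:verify_output_vg}: if $\mathit{set}_1$ and $\mathit{set}_2$ are committed by $\mathit{vg}(v)$, there are certificates $\omega_1, \omega_2$ (sets of messages) and sets of tuples $\mathit{dec}_1, \mathit{dec}_2$ with $\mathtt{verify\_output}(\mathit{dec}_i, \mathit{rla}(v, 2), \omega_i) = \top$ and $\mathit{set}_i = \bigcup\{\mathit{set}^* \mid (\mathit{set}^*, \epsilon^*) \in \mathit{dec}_i\}$ for $i \in \{1, 2\}$. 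In particular $\mathit{dec}_1$ and $\mathit{dec}_2$ are both committed by $\mathit{rla}(v, 2)$.

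Next I would invoke the comparability property of $\mathit{rla}(v, 2)$ (\Cref{lst:rla_properties}), which gives $\mathit{dec}_1 \subseteq \mathit{dec}_2$ or $\mathit{dec}_2 \subseteq \mathit{dec}_1$; without loss of generality assume $\mathit{dec}_1 \subseteq \mathit{dec}_2$. Then every pair $(\mathit{set}^*, \epsilon^*) \in \mathit{dec}_1$ also belongs to $\mathit{dec}_2$, so each $\mathit{set}^*$ contributing to the union defining $\mathit{set}_1$ also contributes to the union defining $\mathit{set}_2$; hence $\mathit{set}_1 \subseteq \mathit{set}_2$. In the symmetric case we obtain $\mathit{set}_2 \subseteq \mathit{set}_1$. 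Either way $\mathit{set}_1$ and $\mathit{set}_2$ are comparable (i.e., $\mathit{set}_1 \subseteq \mathit{set}_2$ or $\mathit{set}_1 \supset \mathit{set}_2$), which is exactly the claim.

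The only point requiring a modicum of care — and the closest thing to an obstacle — is justifying that a certificate $\omega_i$ accepted by $\mathtt{verify\_output}(\cdot, \mathit{vg}(v), \cdot)$ genuinely corresponds to a value $\mathit{dec}_i$ committed by $\mathit{rla}(v, 2)$, and that $\mathit{set}_i$ is unambiguously determined by it. This is immediate: the $\mathtt{verify\_output}$ definition for $\mathit{vg}(v)$ explicitly demands the existence of such a $\mathit{dec}_i$ with $\mathtt{verify\_output}(\mathit{dec}_i, \mathit{rla}(v, 2), \omega_i) = \top$, and by the $\mathtt{verify\_output}$ definition for RLA (\Cref{lst:verify_output_rla}) the payload carried by a valid certificate is uniquely determined (it is the common argument of the $\mathtt{ACK\_CON}$ messages), so $\mathit{dec}_i$, and therefore $\mathit{set}_i = \bigcup\{\mathit{set}^* \mid (\mathit{set}^*, \epsilon^*) \in \mathit{dec}_i\}$, is well defined. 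No reasoning about the $\mathtt{construct}$ function, the $\mathit{rla}(v,1)$ layer, or the $\mathtt{Decide}$ handler in \Cref{lst:view_generator_implementation} is needed, since comparability speaks only about committed outputs of $\mathit{vg}(v)$.
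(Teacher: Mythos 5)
Your proposal is correct and follows essentially the same route as the paper's proof: unfold $\mathtt{verify\_output}$ for $\mathit{vg}(v)$ to obtain sets of $(\mathtt{Set}, \mathtt{Evidence})$ tuples committed by $\mathit{rla}(v,2)$, apply the comparability of $\mathit{rla}(v,2)$, and observe that taking the union over the tuples is monotone with respect to set inclusion. Your extra remark on the well-definedness of the certificate's payload is a small clarification the paper leaves implicit, but it does not change the argument.
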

\begin{proof}
Let $\mathit{set}_1$ and $\mathit{set}_2$ be committed by $\mathit{vg}(v)$.
Hence, a set $\mathit{set}_1^*$, where (1) $\mathit{set}_1^*$ is a set of tuples of the $\mathtt{Set, Evidence}$ type, and (2) $\mathit{set}_1 = \bigcup\limits_{(\mathit{set}^*, \epsilon^*) \in \mathit{set}_1^*} \mathit{set}^*$, is committed by $\mathit{rla}(v, 2)$ (by \Cref{lst:verify_output_vg}).
Similarly, a set $\mathit{set}_2^*$, where (1) $\mathit{set}_2^*$ is a set of tuples of the $\mathtt{Set, Evidence}$ type, and (2) $\mathit{set}_2 = \bigcup\limits_{(\mathit{set}^*, \epsilon^*) \in \mathit{set}_2^*} \mathit{set}^*$, is committed by $\mathit{rla}(v, 2)$ (by \Cref{lst:verify_output_vg}).
By comparability of $\mathit{rla}(v, 2)$, either $\mathit{set}_1^* \subseteq \mathit{set}_2^*$ or $\mathit{set}_1^* \supset \mathit{set}_2^*$.
We investigate all cases:
\begin{compactitem}
    \item $\mathit{set}_1^* = \mathit{set}_2^*$: In this case, $\mathit{set}_1 = \mathit{set}_2$ and the theorem holds.
    
    \item $\mathit{set}_1^* \subset \mathit{set}_2^*$: Therefore, every view $v_1 \in \mathit{set}_1$ belongs to $\mathit{set}_2$.
    Hence, the theorem holds.
    
    \item $\mathit{set}_1^* \supset \mathit{set}_2^*$: The case is symmetrical to the previous one.
\end{compactitem}
The theorem is satisfied in all possible cases, which concludes the proof.
\end{proof}

Next, we prove validity.
In order to do so, we prove that all sets committed by $\mathit{rla}(v, 1)$ at time $t$ are subsets of $\Lambda_{t-1}^v$.

\begin{lemma} \label{lemma:committed_sigma}
Let $\mathit{dec}$ be committed at $\mathit{rla}(v, 1)$ at time $t$.
Then, $\mathit{dec} \subseteq \Lambda_{t - 1}^v$.
\end{lemma}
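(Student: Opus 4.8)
The plan is to trace a committed decision back to a single correct acceptor and exploit the \emph{Safe} check that this acceptor must have passed before acknowledging, so that the evidences embedded in the decided tuples are seen to have been obtained early enough. Concretely, suppose $\mathit{dec}$ is committed by $\mathit{rla}(v, 1)$ at time $t$, i.e., some (correct or faulty) process obtains a certificate $\omega$ with $\mathtt{verify\_output}(\mathit{dec}, \mathit{rla}(v, 1), \omega) = \top$ at some time $t' \le t$. First I would invoke \Cref{lst:verify_output_rla}: $\omega$ is a set of $\mathtt{ACK\_CON}(v, 1)$ messages all carrying $\mathit{dec}$, with at least $v.\mathtt{quorum}()$ distinct senders from $v.\mathtt{members}()$. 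Since a quorum of members of $v$ is correct (the failure model of $\mathit{rla}(v,1)$), these two quorums intersect, so at least one correct acceptor $k$ genuinely sent $[\mathtt{ACK\_CON}(v, 1), \mathit{dec}]$ (it cannot be a forgery, as faulty processes cannot forge $k$'s signature). Let $s$ be the tick at which $k$ sent this message. Because the message must travel to whoever assembled $\omega$, and sending and receiving are distinct atomic steps, $s < t' \le t$, hence $s \le t - 1$.

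Next I would read off $k$'s state at time $s$. In \Cref{lst:rla_acceptor}, $k$ sends $[\mathtt{ACK\_CON}(v, 1), \mathit{Accepted\_Set}]$ at line~\ref{line:send_ack_con} only after the guard at line~\ref{line:ack_req_check} fires, which requires (via the $\mathtt{Safe}$ function, line~\ref{line:return_safe_function_acceptor}) that $\mathit{proposal} \subseteq \mathit{Safe\_Set}$ at $k$ and $\mathit{proposal} \neq \emptyset$, and $k$ then sets $\mathit{Accepted\_Set} = \mathit{proposal}$. Since the message sent carries $\mathit{dec}$, we conclude $\mathit{dec} = \mathit{proposal} \subseteq \mathit{Safe\_Set}$ at $k$ at time $s \le t - 1$ (and $\mathit{dec} \neq \emptyset$, consistent with \Cref{lst:verify_output_rla}).

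Finally I would argue membership in $\Lambda_{t-1}^v$ element by element. Fix $x = (v', \mathit{set}', \epsilon') \in \mathit{dec}$. By the Validity property of $\mathit{rla}(v, 1)$, $\mathtt{valid}(x, v) = \top$, which already witnesses clause~(2) in the definition of $E^v$ for the evidence $\epsilon'$ (taking the view $v'$ and set $\mathit{set}'$). For clause~(1), note that $x$ sits in $k$'s $\mathit{Safe\_Set}$ by time $s \le t - 1$, and every place a correct server inserts into $\mathit{Safe\_Set}$ (lines~\ref{line:modify_svs_start_rla}, \ref{line:modify_svs_propose_rla}, \ref{line:modify_svs_reliable_deliver_rla} of \Cref{lst:rla_proposer}) the inserted object is a tuple the server holds in local memory at that moment; hence $\epsilon'$ is obtained by time $t - 1$, so $\epsilon' \in E_{t-1}^v$. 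Therefore $x \in \Lambda_{t-1}^v$, and since $x$ was arbitrary, $\mathit{dec} \subseteq \Lambda_{t-1}^v$. The one delicate point is the off-by-one: the lemma claims the stronger $\Lambda_{t-1}^v$ rather than $\Lambda_t^v$, so the argument must genuinely push the acquisition of each $\epsilon'$ strictly before every attainment of the certificate, which is exactly what the one-tick message delay buys; everything else is a direct reading of RLA Validity and of the acceptor's send guard.
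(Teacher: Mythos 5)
Your proof is correct and follows essentially the same route as the paper's: identify a correct acceptor behind the quorum of \texttt{ACK\_CON} messages in the certificate, observe that it must have held $\mathit{dec}$ (and hence every embedded evidence) in local memory strictly before the certificate could be assembled at time $t' \le t$, and use the validity property of $\mathit{rla}(v,1)$ to supply the $\mathtt{valid}$ clause of membership in $\Lambda_{t-1}^v$. Your version merely makes explicit a few steps the paper leaves implicit (the quorum-intersection argument, the non-forgeability of the acceptor's signature, and the one-tick message delay that yields the strict $t-1$ bound).
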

\begin{proof}
Suppose that there exists a tuple $(v', \mathit{set}', \epsilon') \in \mathit{dec}$ such that $\mathtt{valid}((v', \mathit{set}', \epsilon'), v) = \bot$.
In this case, the validity property of $\mathit{rla}(v, 1)$ is violated.
Hence, $\mathtt{valid}((v', \mathit{set}', \epsilon'), v) = \top$, for every $(v', \mathit{set}', \epsilon') \in \mathit{dec}$.
Let $E_{\mathit{dec}} = \{\epsilon' \,|\, (v', \mathit{set}', \epsilon') \in \mathit{dec}\}$.

Since $\mathit{dec}$ is committed at time $t$, there exists a (correct or faulty) process that obtains $\omega_{\mathit{dec}}$ at time $t' \leq t$ such that $\mathtt{verify\_output}(\mathit{dec}, \mathit{rla}(v, 1), \omega_{\mathit{dec}}) = \top$.
We know that $\omega_{\mathit{dec}}$ is a set of $[\mathtt{ACK\_CON}(v, 1), \mathit{dec}]$ messages with $|\omega_{\mathit{dec}}| \geq v.\mathtt{quorum()}$ (by \Cref{lst:verify_output_rla}).
Hence, a correct server $r_{\mathit{dec}} \in v.\mathtt{members()}$ sends its $\mathtt{ACK\_CON}$ message by time $t$ (since $\omega_{\mathit{dec}}$ is obtained by time $t$) at line~\ref{line:send_ack_con} of \Cref{lst:rla_acceptor}.
Therefore, $\mathit{dec}$ is obtained before time $t$ (at line~\ref{line:modify_accepted_set_1_rla} of \Cref{lst:rla_acceptor}), which implies that every $\epsilon' \in E_{\mathit{dec}}$ is obtained before time $t$.
Thus, the lemma holds.
\end{proof}

Since $\Lambda_{t - 1}^v \subseteq \Lambda_{t'}^v$, for any time $t' > t - 1$, we conclude that $\mathit{dec} \subseteq \Lambda_{t'}^v$, if $\mathit{dec}$ is committed by $\mathit{rla}(v, 1)$ at time $t$.
Next, we prove that all valid inputs of $\mathit{rla}(v, 2)$ ``contain'' sequences if the $\mathit{preconditions}$ predicate is satisfied.

\begin{lemma} \label{lemma:all_valid_views_input_comparable_1}
Let a (correct or faulty) process obtain $\epsilon$ at time $t$, such that $o = (\mathit{set}, \epsilon)$ is a tuple of the $\mathtt{Set, Evidence}$ type and $\mathtt{valid}(o, v) = \top$.
Let $\mathit{preconditions}(v, t - 1) = \top$.
Then, $\mathit{set}$ is a sequence.
\end{lemma}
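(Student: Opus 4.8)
The plan is to unfold the definition of the $\mathtt{valid}$ function for $\mathtt{Set, Evidence}$ tuples (\Cref{lst:set_proof}) and then argue by a two-way case split on the branch taken inside $\mathtt{construct}$. First I would note that $\mathtt{valid}((\mathit{set}, \epsilon), v) = \top$ forces the existence of a set $\mathit{set}_1$ of $\mathtt{View, Set, Evidence}$ tuples with $\mathtt{verify\_output}(\mathit{set}_1, \mathit{rla}(v, 1), \epsilon) = \top$ and $\mathtt{construct}(\mathit{set}_1) = \mathit{set}$; this $\mathit{set}_1$ is uniquely determined by $\epsilon$, since an $\mathit{rla}(v, 1)$ certificate (\Cref{lst:verify_output_rla}) carries its committed value as the common payload of its $\mathtt{ACK\_CON}$ messages. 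Because some process obtains $\epsilon$ at time $t$, the set $\mathit{set}_1$ is committed by $\mathit{rla}(v, 1)$ at time $t$, so \Cref{lemma:committed_sigma} gives $\mathit{set}_1 \subseteq \Lambda_{t-1}^v$, and the validity property of $\mathit{rla}$ gives $\mathit{set}_1 \neq \emptyset$.

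Next I would split on the two branches of $\mathtt{construct}(\mathit{set}_1)$, using the local names $\mathit{SET}$ and $\mathit{VIEW}$ of \Cref{lst:set_proof}. If $|\mathit{VIEW}| \geq 1$, then $\mathit{set}$ is the singleton $\{\mathit{greatest\_view} \cup \mathit{union\_view}\}$, a set containing a single view, hence trivially a sequence (the comparability requirement is vacuous). If $|\mathit{VIEW}| = 0$, then every tuple of $\mathit{set}_1$ has a nonempty second component, so $\mathit{SET}$ is nonempty (as $\mathit{set}_1 \neq \emptyset$), whence $\mathtt{max\_cardinality}(\mathit{SET})$ returns a genuine element $\mathit{set}^*$ of $\mathit{SET}$; thus $\mathit{set} = \mathit{set}^*$ for some tuple $(v^*, \mathit{set}^*, \epsilon^*) \in \mathit{set}_1 \subseteq \Lambda_{t-1}^v$, and the first clause of $\mathit{preconditions}(v, t-1) = \top$ tells us $\mathit{set}^*$ is a sequence. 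Either way, $\mathit{set}$ is a sequence, which is the claim.

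The argument is mostly bookkeeping, so the only delicate point I anticipate is ruling out $\mathtt{construct}(\mathit{set}_1) = \bot$: this rests on $\mathit{set}_1 \neq \emptyset$ forcing $\mathit{SET} \neq \emptyset$ in the $|\mathit{VIEW}| = 0$ branch, which is exactly why invoking the validity of $\mathit{rla}$ is not optional. Conceptually, the real content is already packaged in \Cref{lemma:committed_sigma}: a value committed by $\mathit{rla}(v, 1)$ at time $t$ must consist of tuples whose evidences were obtained before $t$, hence it lies inside $\Lambda_{t-1}^v$, so the assumed precondition predicate at time $t-1$ applies to each of its components — the single idea that makes the lemma go through.
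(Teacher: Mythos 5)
Your proposal is correct and follows essentially the same route as the paper's proof: invoke \Cref{lemma:committed_sigma} to place the set committed by $\mathit{rla}(v, 1)$ inside $\Lambda_{t-1}^v$, then case-split on the $|\mathit{VIEW}|$ branches of $\mathtt{construct}$, using the first clause of $\mathit{preconditions}$ in the $|\mathit{VIEW}| = 0$ case and the singleton observation otherwise. Your extra care in ruling out $\mathtt{construct} = \bot$ via the validity property of $\mathit{rla}(v,1)$ is a detail the paper leaves implicit, but it does not change the argument.
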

\begin{proof}
Since $\mathtt{valid}(o, v) = \top$, we conclude that a set $\mathit{set}^*$, where $\mathit{set}^*$ is a set of tuples of the $\mathtt{View, Set, Evidence}$ type and $\mathit{set} = \mathtt{construct}(\mathit{set}^*)$, is committed by $\mathit{rla}(v, 1)$ at time $t$ (by \Cref{lst:set_proof}).
Hence, $\mathit{set}^* \subseteq \Lambda_{t - 1}^v$ (by \Cref{lemma:committed_sigma}).
Let us take a closer look at the $\mathtt{construct}(\mathit{set}^*)$ function:
\begin{compactitem}
    \item Let $|\mathit{VIEW}| = 0$.
    Since $\mathit{preconditions}(v, t - 1) = \top$ and $\mathit{set}^* \subseteq \Lambda_{t - 1}^v$, we know that all sets from $\mathit{SET}$ are sequences.
    Hence, $\mathit{set}$ is a sequence.
    
    \item Let $|\mathit{VIEW}| \neq 0$.
    In this case, we have that $|\mathit{set}| = 1$.
    Hence, $\mathit{set}$ is trivially a sequence.
\end{compactitem}
The lemma holds.
\end{proof}

The next lemma builds upon \Cref{lemma:all_valid_views_input_comparable_1} by showing that
all views that belong to two valid inputs of $\mathit{rla}(v, 2)$ are comparable.

\begin{lemma} \label{lemma:all_valid_views_input_comparable}
Let a (correct or faulty) process obtain $\epsilon_1$ at time $t_1$, such that $o_1 = (\mathit{set}_1, \epsilon_1)$ is a tuple of the $\mathtt{Set, Evidence}$ type and $\mathtt{valid}(o_1, v) = \top$.
Moreover, let a (correct or faulty) process obtain $\epsilon_2$ at time $t_2 \geq t_1$, such that $o_2 = (\mathit{set}_2, \epsilon_2)$ is a tuple of the $\mathtt{Set, Evidence}$ type and $\mathtt{valid}(o_2, v) = \top$.
Let $v_1 \in \mathit{set}_1$, $v_2 \in \mathit{set}_2$ and $\mathit{preconditions}(v, t_2 - 1) = \top$.
Then, either $v_1 \subseteq v_2$ or $v_1 \supset v_2$.
\end{lemma}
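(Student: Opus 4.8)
The plan is to peel $v_1$ and $v_2$ back to the $\mathtt{construct}$ function of \Cref{lst:set_proof} and then propagate comparability downward from the level of $\mathit{rla}(v,1)$ decisions. Since $\mathtt{valid}(o_i,v)=\top$, there is, for each $i\in\{1,2\}$, a set $\mathit{set}_i^*$ of $\mathtt{View, Set, Evidence}$ tuples committed by $\mathit{rla}(v,1)$ with $\mathit{set}_i=\mathtt{construct}(\mathit{set}_i^*)$. By \Cref{lemma:committed_sigma} and $t_1\le t_2$ we get $\mathit{set}_1^*,\mathit{set}_2^*\subseteq\Lambda_{t_2-1}^v$, so $\mathit{preconditions}(v,t_2-1)=\top$ applies to every tuple occurring in either set: by its first bullet each second component $\mathit{set}'$ is a sequence, and by its last bullet any two such second components are comparable, hence the family of all of them is totally ordered by inclusion. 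Moreover $\mathit{set}_i^*\ne\emptyset$ by validity of $\mathit{rla}(v,1)$, so $\mathit{set}_i$ is never $\bot$: when its $\mathit{VIEW}$ set is empty its $\mathit{SET}$ set is non-empty (it is exactly $\mathit{set}_i^*$ with the second-component filter), and otherwise $\mathtt{construct}$ returns a singleton.

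Next I would isolate two elementary facts about $\mathtt{max\_cardinality}$ over a family $\mathcal F$ of sets totally ordered by $\subseteq$: (i) its value is the $\subseteq$-greatest member of $\mathcal F$, since comparable sets with $|A|\ge|B|$ satisfy $B\subseteq A$; and (ii) it is monotone, i.e. $\emptyset\ne\mathcal F_1\subseteq\mathcal F_2$ implies $\mathtt{max\_cardinality}(\mathcal F_1)\subseteq\mathtt{max\_cardinality}(\mathcal F_2)$. A special case of (i) is that inside a single sequence $S$ every view is a subset of $\mathtt{max\_cardinality}(S)$. These are the only ``calculations'' needed.

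By comparability of $\mathit{rla}(v,1)$, $\mathit{set}_1^*\subseteq\mathit{set}_2^*$ or $\mathit{set}_2^*\subseteq\mathit{set}_1^*$; once both tuple-sets are placed inside $\Lambda_{t_2-1}^v$ the remaining argument is symmetric in the subscripts, so it suffices to treat $\mathit{set}_1^*\subseteq\mathit{set}_2^*$, which gives $\mathit{VIEW}_1\subseteq\mathit{VIEW}_2$, $\mathit{SET}_1\subseteq\mathit{SET}_2$ and $\mathit{union\_view}_1\subseteq\mathit{union\_view}_2$. I then split on emptiness of the $\mathit{VIEW}$ sets. If $\mathit{VIEW}_2=\emptyset$ then $\mathit{VIEW}_1=\emptyset$, so by (ii) $\mathit{set}_1=\mathtt{max\_cardinality}(\mathit{SET}_1)\subseteq\mathtt{max\_cardinality}(\mathit{SET}_2)=\mathit{set}_2$; both are sequences and $v_1\in\mathit{set}_1\subseteq\mathit{set}_2\ni v_2$, so $v_1$ and $v_2$ lie in one sequence and are comparable. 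If $\mathit{VIEW}_1=\emptyset\ne\mathit{VIEW}_2$, then $\mathit{set}_1=\mathit{greatest\_set}_1\subseteq\mathit{greatest\_set}_2$ by (ii), and by the special case of (i) every view of the sequence $\mathit{greatest\_set}_2$ is contained in $\mathit{greatest\_view}_2$, so $v_1\subseteq\mathit{greatest\_view}_2\subseteq\mathit{greatest\_view}_2\cup\mathit{union\_view}_2=v_2$. If $\mathit{VIEW}_1\ne\emptyset$ (hence $\mathit{VIEW}_2\ne\emptyset$), then by (ii) $\mathit{greatest\_set}_1\subseteq\mathit{greatest\_set}_2$, so $\mathit{greatest\_view}_1$ lies in the sequence $\mathit{greatest\_set}_2$ and therefore $\mathit{greatest\_view}_1\subseteq\mathit{greatest\_view}_2$; combined with $\mathit{union\_view}_1\subseteq\mathit{union\_view}_2$ this yields $v_1=\mathit{greatest\_view}_1\cup\mathit{union\_view}_1\subseteq\mathit{greatest\_view}_2\cup\mathit{union\_view}_2=v_2$. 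In every case $v_1\subseteq v_2$ or $v_2\subseteq v_1$, which is exactly $\mathtt{comparable}(v_1,v_2)=\top$ from \Cref{lst:view}, i.e. $v_1\subseteq v_2$ or $v_1\supset v_2$; the mirror case $\mathit{set}_2^*\subseteq\mathit{set}_1^*$ is identical with the subscripts exchanged.

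The main obstacle is not conceptual but the care required around the degenerate branches of $\mathtt{construct}$: making sure $\mathit{greatest\_set}_i$ and $\mathit{greatest\_view}_i$ are well defined whenever they are used (using $\mathit{SET}_i\ne\emptyset$ in the branches where $\mathit{VIEW}_i=\emptyset$, and the $\bot\mapsto\emptyset$ convention for $\mathit{greatest\_view}$ otherwise), and stating facts (i)--(ii) about $\mathtt{max\_cardinality}$ precisely enough that the monotonicity steps above are airtight. Everything else is routine bookkeeping against the definitions of $\mathtt{construct}$ and the precondition bullets.
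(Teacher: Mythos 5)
Your proposal is correct and follows essentially the same route as the paper's proof: reduce $\mathit{set}_1,\mathit{set}_2$ to committed decisions $\mathit{set}_1^*,\mathit{set}_2^*$ of $\mathit{rla}(v,1)$, place them in $\Lambda_{t_2-1}^v$ via \Cref{lemma:committed_sigma}, invoke comparability of $\mathit{rla}(v,1)$, and case-split on the emptiness of the $\mathit{VIEW}$ sets inside $\mathtt{construct}$. Your factoring-out of the two $\mathtt{max\_cardinality}$ facts and your folding of the equality case into the $\subseteq$ branch are minor organizational improvements over the paper's three-way split, not a different argument.
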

\begin{proof}
Since $\mathtt{valid}(o_1, v) = \top$, we conclude that a set $\mathit{set}_1^*$, where $\mathit{set}_1^*$ is a set of tuples of the $\mathtt{View, Set, Evidence}$ type and $\mathit{set}_1 = \mathtt{construct}(\mathit{set}_1^*)$, is committed by $\mathit{rla}(v, 1)$ at time $t_1$ (by \Cref{lst:set_proof}).
Similarly, a set $\mathit{set}_2^*$, where $\mathit{set}_2^*$ is a set of tuples of the $\mathtt{View, Set, Evidence}$ type and $\mathit{set}_2 = \mathtt{construct}(\mathit{set}_2^*)$, is committed by $\mathit{rla}(v, 1)$ at time $t_2$ (by \Cref{lst:set_proof}).
By comparability of $\mathit{rla}(v, 1)$, we know that either $\mathit{set}_1^* \subseteq \mathit{set}_2^*$ or $\mathit{set}_1^* \supset \mathit{set}_2^*$.
Furthermore, we know that $\mathit{set}_1^* \subseteq \Lambda_{t_2 - 1}^v$ and $\mathit{set}_2^* \subseteq \Lambda_{t_2 - 1}^v$ (by \Cref{lemma:committed_sigma}).
We investigate all three possibilities:
\begin{compactitem}
    \item $\mathit{set}_1^* = \mathit{set}_2^*$:
    In this case, $\mathit{set}_1 = \mathit{set}_2$.
    Now, if $|\mathit{set}_1| = |\mathit{set}_2| = 1$, the lemma is satisfied.
    However, if $|\mathit{set}_1| = |\mathit{set}_2| \neq 1$, we need to show that $\mathit{set}_1 = \mathit{set}_2$ is a sequence.
    
    If $|\mathit{set}_1| = |\mathit{set}_2| \neq 1$, we know that $\mathit{set}_1 = \mathit{set}_2 = \mathtt{max\_cardinality}(\mathit{SET})$ in the $\mathtt{construct}$ function (line~\ref{line:return_set_no_standalone_views} of \Cref{lst:set_proof}).
    Since $\mathit{preconditions}(v, t_2 - 1) = \top$, we know that all sets from $\mathit{SET}$ are sequences.
    Therefore, $\mathit{set}_1 = \mathit{set}_2$ is a sequence.
    Thus, the lemma holds in this case.
    
    \item $\mathit{set}_1^* \subset \mathit{set}_2^*$:
    We further distinguish three cases:
    \begin{compactitem}
        \item $|\mathit{VIEW}_1| = |\mathit{VIEW}_2| = 0$, where $|\mathit{VIEW}_1|$ is the number of ``stand-alone'' views in $\mathit{set}_1^*$ (line~\ref{line_VIEW} of \Cref{lst:set_proof}) and $|\mathit{VIEW}_2|$ is the number of ``stand-alone'' views in $\mathit{set}_2^*$ (line~\ref{line_VIEW} of \Cref{lst:set_proof}):
        We know that $\mathit{set}_1^*, \mathit{set}_2^* \subseteq \Lambda_{t_2 - 1}^v$.
        Since $\mathit{preconditions}(v, t_2 - 1) = \top$, we know that $\mathit{set}_1$ is the greatest set of $\mathit{SET}_1$ with respect to the containment relation (line~\ref{line:SEQ} of \Cref{lst:set_proof}), i.e., $\mathit{set} \subseteq \mathit{set}_1$, for every $\mathit{set} \in \mathit{SET}_1$.
        Similarly, we know that $\mathit{set}_2$ is the greatest set of $\mathit{SET}_2$ with respect to the containment relation (line~\ref{line:SEQ} of \Cref{lst:set_proof}), i.e., $\mathit{set} \subseteq \mathit{set}_2$, for every $\mathit{set} \in \mathit{SET}_2$.
        Since $\mathit{preconditions}(v, t_2 - 1) = \top$ and $\mathit{set}_1^* \subset \mathit{set}_2^*$, we know that (1) $\mathit{set}_2$ is a sequence, and (2) $\mathit{set}_1 \subseteq \mathit{set}_2$.
        Finally, the fact that $v_1 \in \mathit{set}_2$ ensures that $\mathtt{comparable}(v_1, v_2) = \top$, which concludes the lemma in this case.

        \item $|\mathit{VIEW}_1| = 0$ and $|\mathit{VIEW}_2| \neq 0$:
        In this case, $\mathit{set}_2 = \{v_2\}$.
        Since $\mathit{set}_1^* \subseteq \Lambda_{t_2 - 1}^v$ and $\mathit{preconditions}(v, t_2 - 1) = \top$, we know that $\mathit{set}_1$ is the greatest set of $\mathit{SET}_1$ with respect to the containment relation (line~\ref{line:SEQ} of \Cref{lst:set_proof}), i.e., $\mathit{set} \subseteq \mathit{set}_1$, for every $\mathit{set} \in \mathit{SET}_1$.
        Moreover, we know that $\mathit{set}_1$ is a sequence.
        
        Since $\mathit{set}_1^* \subset \mathit{set}_2^*$, we know that $\mathit{set}_1 \in \mathit{SET}_2$.
        Let $\mathit{set}_2^{\mathit{max}}$ be the greatest set of $\mathit{SET}_2$ with respect to the containment relation (line~\ref{line:SEQ} of \Cref{lst:set_proof}), i.e., $\mathit{set} \subseteq \mathit{set}_2^{\mathit{max}}$, for every $\mathit{set} \in \mathit{SET}_2$; because $\mathit{set}_2^* \subseteq \Lambda_{t_2 - 1}^v$ and $\mathit{preconditions}(v, t_2 - 1) = \top$, the greatest set $\mathit{set}_2^{\mathit{max}}$ among $\mathit{SET}_2$ is indeed well defined.
        Moreover, $\mathit{set}_2^{\mathit{max}}$ is a sequence.
        We conclude that $\mathit{set}_1 \subseteq \mathit{set}_2^{\mathit{max}}$.
        Therefore, $\mathit{set}_1.\mathtt{last}() \subseteq \mathit{greatest\_view}$, where $\mathit{greatest\_view} = \mathtt{max\_cardinality}(\mathit{set}_2^{\mathit{max}}) = \mathit{set}_2^{\mathit{max}}.\mathtt{last()}$ (line~\ref{line:view_from_greatest} of \Cref{lst:set_proof}).
        This implies, since $\mathit{set}_1$ is a sequence, that $v_1 \subseteq v_2$, which concludes the lemma in this case.

        \item $|\mathit{VIEW}_1| \neq 0$ and $|\mathit{VIEW}_2| \neq 0$:
        Let $\mathit{max}_1$ be the greatest set of $\mathit{SET}_1$ with respect to the containment relation, i.e., $\mathit{set} \subseteq \mathit{max}_1$, for every $\mathit{set} \in \mathit{SET}_1$; since $\mathit{set}_1^* \subseteq \Lambda_{t_2 - 1}^v$ and $\mathit{preconditions}(v, t_2 - 1) = \top$, the greatest set $\mathit{max}_1$ among $\mathit{SET}_1$ is well defined. 
        Similarly, let $\mathit{max}_2$ be the greatest set of $\mathit{SET}_2$ with respect to the containment relation, i.e., $\mathit{set} \subseteq \mathit{max}_2$, for every $\mathit{set} \in \mathit{SET}_2$; since $\mathit{set}_2^* \subseteq \Lambda_{t_2 - 1}^v$ and $\mathit{preconditions}(v, t_2 - 1) = \top$, the greatest set $\mathit{max}_2$ among $\mathit{SET}_2$ is well defined. 
        
        If $\mathit{max}_1 \neq \bot$ and $\mathit{max}_2 \neq \bot$, we know that $\mathit{max}_1 \subseteq \mathit{max}_2$ (since $\mathit{set}_1^* \subset \mathit{set}_2^*$).
        Note that it is impossible that $\mathit{max}_1 \neq \bot$ and $\mathit{max}_2 = \bot$ due to the fact that $\mathit{set}_1^* \subset \mathit{set}_2^*$.
        Hence, $\mathit{greatest\_view}_1 \subseteq \mathit{greatest\_view}_2$, where $\mathit{greatest\_view}_1$ (resp., $\mathit{greatest\_view}_2$) is the value of the $\mathit{greatest\_view}$ variable at line~\ref{line:result_set_2} of \Cref{lst:set_proof} of the $\mathtt{construct}(\mathit{set}_1^*)$ (resp., $\mathtt{construct}(\mathit{set}_2^*)$) function.

        Similarly, we know that $\bigcup\limits_{v^* \in \mathit{VIEW}_1} v^* \subseteq \bigcup\limits_{v^{**} \in \mathit{VIEW}_2} v^{**}$ (by comparability of $\mathit{rla}(v, 1)$).
        Finally,\\$v_1 = \mathit{greatest\_view}_1 \cup \bigcup\limits_{v^* \in \mathit{VIEW}_1} v^*$ and $v_2 = \mathit{greatest\_view}_2 \cup \bigcup\limits_{v^{**} \in \mathit{VIEW}_2} v^{**}$, which proves that $v_1 \subseteq v_2$.
    \end{compactitem}
    
    \item $\mathit{set}_1^* \supset \mathit{set}_2^*$:
    This case is symmetrical to the previous one.
\end{compactitem}
The lemma holds.
\end{proof}

The last lemma we prove prior to proving the validity property shows that a set $\mathit{set}$ given as an input to $\mathit{rla}(v, 2)$ must satisfy $\mathit{set}.\mathtt{follows}(v) = \top$.

\begin{lemma} \label{lemma:follows}
Let a (correct or faulty) process obtain $\epsilon$ at time $t$, such that $o = (\mathit{set}, \epsilon)$ is a tuple of the $\mathtt{Set, Evidence}$ type and $\mathtt{valid}(o, v) = \top$.
Let $\mathit{preconditions}(v, t - 1) = \top$.
Then, $\mathit{set}.\mathtt{follows}(v) = \top$.
\end{lemma}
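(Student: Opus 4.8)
The plan is to unfold the definition of the $\mathtt{valid}$ predicate for $(\mathtt{Set}, \mathtt{Evidence})$ tuples (\Cref{lst:set_proof}) and then inspect the two branches of the $\mathtt{construct}$ function. Since $\mathtt{valid}(o, v) = \top$ with $o = (\mathit{set}, \epsilon)$, there exists a set $\mathit{set}^*$ of $(\mathtt{View}, \mathtt{Set}, \mathtt{Evidence})$ tuples with $\mathtt{verify\_output}(\mathit{set}^*, \mathit{rla}(v,1), \epsilon) = \top$ and $\mathit{set} = \mathtt{construct}(\mathit{set}^*)$. Because the process obtains $\epsilon$ at time $t$, the set $\mathit{set}^*$ is committed by $\mathit{rla}(v,1)$ at time $t$, and by the validity of $\mathit{rla}(v,1)$ it is non-empty. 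Applying \Cref{lemma:committed_sigma} gives $\mathit{set}^* \subseteq \Lambda_{t-1}^v$, and by hypothesis $\mathit{preconditions}(v, t-1) = \top$.

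Next I would split on the size of the set $\mathit{VIEW}$ of ``stand-alone'' views computed inside $\mathtt{construct}(\mathit{set}^*)$. If $|\mathit{VIEW}| = 0$, then since $\mathit{set}^* \neq \emptyset$ the set $\mathit{SET}$ is non-empty, so $\mathit{set} = \mathtt{max\_cardinality}(\mathit{SET})$ equals $\mathit{set}^{**}$ for some tuple $(v^{**}, \mathit{set}^{**} \neq \emptyset, \epsilon^{**}) \in \mathit{set}^* \subseteq \Lambda_{t-1}^v$. The clause of $\mathit{preconditions}(v, t-1)$ governing tuples with non-empty second component then yields directly $\mathit{set}^{**}.\mathtt{follows}(v) = \top$, i.e.\ $\mathit{set}.\mathtt{follows}(v) = \top$.

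If instead $|\mathit{VIEW}| \neq 0$, then $\mathit{set} = \{\mathit{greatest\_view} \cup \mathit{union\_view}\}$, where $\mathit{union\_view} = \bigcup_{v^{**} \in \mathit{VIEW}} v^{**}$. Each such $v^{**}$ occurs in a tuple $(v^{**}, \emptyset, \epsilon^{**}) \in \mathit{set}^* \subseteq \Lambda_{t-1}^v$, so the clause of $\mathit{preconditions}(v, t-1)$ governing tuples with empty second component gives $v \subset v^{**}$. Fixing any one such $v^{**}$, we get $v \subset v^{**} \subseteq \mathit{union\_view} \subseteq \mathit{greatest\_view} \cup \mathit{union\_view}$, so the single element of $\mathit{set}$ strictly contains $v$ and hence $\mathit{set}.\mathtt{follows}(v) = \top$, closing the argument.

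There is no deep obstacle here; the only point requiring care is the timestamp bookkeeping---checking that the commitment of $\mathit{set}^*$ by $\mathit{rla}(v,1)$ is witnessed at time $t$ so that \Cref{lemma:committed_sigma} produces exactly the index $t-1$ appearing in the $\mathit{preconditions}$ hypothesis---and invoking the correct (empty vs.\ non-empty second component) clause of $\mathit{preconditions}$ in each of the two cases. Once the definitions are unfolded, the claim follows from this direct two-case inspection of $\mathtt{construct}$.
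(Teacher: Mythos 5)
Your proof is correct and follows essentially the same route as the paper's: unfold $\mathtt{valid}$ for the $(\mathtt{Set}, \mathtt{Evidence})$ tuple to obtain a set $\mathit{set}^*$ committed by $\mathit{rla}(v,1)$ at time $t$, apply \Cref{lemma:committed_sigma} to place $\mathit{set}^* \subseteq \Lambda_{t-1}^v$, and then case-split on $|\mathit{VIEW}|$ inside $\mathtt{construct}$, invoking the empty- versus non-empty-second-component clauses of $\mathit{preconditions}$. The only difference is that you spell out the containment chain in the $|\mathit{VIEW}| \neq 0$ case slightly more explicitly than the paper does; the substance is identical.
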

\begin{proof}
Since $\mathtt{valid}(o, v) = \top$, we conclude that a set $\mathit{set}^*$, where $\mathit{set}^*$ is a set of tuples of the $\mathtt{View, Set, Evidence}$ type and $\mathit{set} = \mathtt{construct}(\mathit{set}^*)$, is committed by $\mathit{rla}(v, 1)$ at time $t$ (by \Cref{lst:set_proof}).

By validity of $\mathit{rla}(v, 1)$, we know that $\mathit{set}^* \neq \emptyset$.
Moreover, $\mathit{set}^* \subseteq \Lambda_{t - 1}^v$ (by \Cref{lemma:committed_sigma}).
Let us take a closer look at every tuple $(v', \mathit{set}', \epsilon') \in \mathit{set}^*$:
\begin{compactitem}
    \item If $\mathit{set}' = \emptyset$, we know that $v \subset v'$ (due to the fact that $\mathit{preconditions}(v, t - 1) = \top$).
    
    \item Otherwise, $\mathit{set}'.\mathtt{follows}(v) = \top$ and $v' \in \mathit{set}'$ (due to the fact that $\mathit{preconditions}(v, t - 1) = \top$).
\end{compactitem}

Finally, we consider the $\mathtt{construct}(\mathit{set}^*)$ function:
\begin{compactitem}
    \item Let $|\mathit{VIEW}| = 0$.
    Since $\mathit{preconditions}(v, t - 1) = \top$ and $\mathit{set}^* \subseteq \Lambda_{t - 1}^v$, we know that $\mathit{set}.\mathtt{follows}(v) = \top$, which concludes the lemma.
    
    \item Let $|\mathit{VIEW}| \neq 0$.
    In this case, $\mathit{set}$ contains a single view.
    Moreover, we have that $v \subset v'$, for every ``stand-alone'' view $v' \in \mathit{VIEW}$.
    Therefore, $\mathit{set}.\mathtt{follows}(v) = \top$.
\end{compactitem}
Hence, the lemma holds.
\end{proof}

Finally, we are able to prove the validity property of $\mathit{vg}(v)$.

\begin{theorem} [Validity]
Algorithm given in \Cref{lst:view_generator_implementation} satisfies validity.
\end{theorem}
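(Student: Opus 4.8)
The plan is to unwind the two RLA layers sitting under $\mathit{vg}(v)$ and then feed the result into the structural lemmas already proved. First I would apply \Cref{lst:verify_output_vg}: since $\mathit{set}$ is committed by $\mathit{vg}(v)$ at time $t$, there is a set $\mathit{dec}$ of $\mathtt{Set, Evidence}$ tuples committed by $\mathit{rla}(v, 2)$ at some time $t_2 \leq t$ with $\mathit{set} = \bigcup_{(\mathit{set}^*, \epsilon^*) \in \mathit{dec}} \mathit{set}^*$. The Validity property of $\mathit{rla}(v, 2)$ then gives $\mathit{dec} \neq \emptyset$ and $\mathtt{valid}((\mathit{set}^*, \epsilon^*), v) = \top$ for every $(\mathit{set}^*, \epsilon^*) \in \mathit{dec}$.

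Before invoking the helper lemmas I need a timing bridge, and this is the step I expect to be the main obstacle: every evidence $\epsilon^*$ occurring in a tuple of $\mathit{dec}$ must have been obtained by some (correct or faulty) process no later than time $t_2$. I would argue this exactly as in the proof of \Cref{lemma:committed_sigma}, now applied to $\mathit{rla}(v, 2)$: the certificate witnessing $\mathit{dec}$'s commitment is, by \Cref{lst:verify_output_rla}, a set of $\mathtt{ACK\_CON}$ messages for $\mathit{dec}$ from at least $v.\mathtt{quorum}()$ members of $v$, so at least one correct acceptor sent such a message before $t_2$; by \Cref{lst:rla_acceptor} it did so only after checking $\mathit{dec} \subseteq \mathit{Safe\_Set}$, and an element enters a correct server's $\mathit{Safe\_Set}$ only once its validity was verified, which in the $\mathit{rla}(v, 2)$ instantiation requires the certificate $\epsilon^*$ of $\mathit{rla}(v, 1)$ to be in hand. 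Combined with $t_2 \leq t$ and the monotonicity $\mathit{preconditions}(v, t') = \top \Rightarrow \mathit{preconditions}(v, t) = \top$ for $t' > t$, this shows that at each obtain-time $t^* \leq t$ of such an $\epsilon^*$ we have $\mathit{preconditions}(v, t^* - 1) = \top$, which is precisely the hypothesis that \Cref{lemma:all_valid_views_input_comparable_1,lemma:all_valid_views_input_comparable,lemma:follows} ask for.

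The three conclusions then follow routinely. For $\mathit{set} \neq \emptyset$: fix any $(\mathit{set}^*, \epsilon^*) \in \mathit{dec}$; by $\mathtt{valid}((\mathit{set}^*, \epsilon^*), v) = \top$ and \Cref{lst:set_proof}, $\mathit{set}^* = \mathtt{construct}(\mathit{set}')$ for some $\mathit{set}'$ committed by $\mathit{rla}(v, 1)$, hence $\mathit{set}' \neq \emptyset$ by the Validity of $\mathit{rla}(v, 1)$, and a quick case split on the $|\mathit{VIEW}| = 0$ branch (a $\mathtt{max\_cardinality}$ of a nonempty family of nonempty sets) versus the other branch (a singleton) shows $\mathtt{construct}$ never returns an empty set on nonempty input, so $\mathit{set} \supseteq \mathit{set}^* \neq \emptyset$. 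For ``$\mathit{set}$ is a sequence'': given $v_1, v_2 \in \mathit{set}$, choose $(\mathit{set}_1^*, \epsilon_1^*), (\mathit{set}_2^*, \epsilon_2^*) \in \mathit{dec}$ with $v_1 \in \mathit{set}_1^*$, $v_2 \in \mathit{set}_2^*$, order their obtain-times, and apply \Cref{lemma:all_valid_views_input_comparable} (its precondition hypothesis discharged above) to get $v_1 \subseteq v_2$ or $v_1 \supset v_2$; the case $\mathit{set}_1^* = \mathit{set}_2^*$ is subsumed by taking $\epsilon_1^* = \epsilon_2^*$, or alternatively by \Cref{lemma:all_valid_views_input_comparable_1}. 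Finally, for $\mathit{set}.\mathtt{follows}(v) = \top$: by \Cref{lemma:follows} each $\mathit{set}^*$ with $(\mathit{set}^*, \epsilon^*) \in \mathit{dec}$ satisfies $\mathit{set}^*.\mathtt{follows}(v) = \top$, i.e., every view it contains strictly contains $v$, and since $\mathit{set}$ is the union of these $\mathit{set}^*$, every view of $\mathit{set}$ strictly contains $v$, which is exactly $\mathit{set}.\mathtt{follows}(v) = \top$.
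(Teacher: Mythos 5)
Your proposal is correct and follows essentially the same route as the paper's proof: unwind $\mathtt{verify\_output}$ for $\mathit{vg}(v)$ to a set committed by $\mathit{rla}(v,2)$, apply its validity property, establish that each evidence $\epsilon^*$ is obtained by time $t$ so that $\mathit{preconditions}(v,\cdot)$ transfers by monotonicity, and then invoke \Cref{lemma:all_valid_views_input_comparable_1,lemma:all_valid_views_input_comparable,lemma:follows} for the three conclusions. The only differences are presentational: you justify the obtain-time claim explicitly (mirroring \Cref{lemma:committed_sigma}) where the paper merely asserts it, and you argue parts (2) and (3) directly rather than by contradiction.
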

\begin{proof}
Let $\mathit{set}$ be committed by $\mathit{vg}(v)$ at time $t$ and let $\mathit{preconditions}(v, t - 1) = \top$.
Since $\mathit{set}$ is committed at time $t$ by $\mathit{vg}(v)$, a set $\mathit{set}_{\mathit{rla}(v, 2)}$ is committed by $\mathit{rla}(v, 2)$ at time $t$, where (1) $\mathit{set}_{\mathit{rla}(v, 2)}$ is a set of tuples of the $\mathtt{Set, Evidence}$ type, and (2) $\mathit{set} = \bigcup\limits_{(\mathit{set}^*, \epsilon^*) \in \mathit{set}_{\mathit{rla}(v, 2)}} \mathit{set}^*$ (by \Cref{lst:verify_output_vg}).
By validity of $\mathit{rla}(v, 2)$, we know that $\mathit{set}_{\mathit{rla}(v, 2)} \neq \emptyset$.
Moreover, because of the $\mathtt{construct}$ function, we know that $\mathit{set} \neq \emptyset$.

Next, we show that $\mathit{set}$ is a sequence.
Suppose, by contradiction, that $v_1, v_2 \in \mathit{set}$ such that $\mathtt{comparable}(v_1, v_2) = \bot$.
Since $v_1 \in \mathit{set}$, we know that $v_1 \in \mathit{set}^*$, where $(\mathit{set}^*, \epsilon^*) \in \mathit{set}_{\mathit{rla}(v, 2)}$.
Similarly, we know that $v_2 \in \mathit{set}^{**}$, where $(\mathit{set}^{**}, \epsilon^{**}) \in \mathit{set}_{\mathit{rla}(v, 2)}$.
We distinguish two scenarios:
\begin{compactitem}
    \item Let $\mathit{set}^* = \mathit{set}^{**}$.
    We know that $\mathtt{valid}((\mathit{set}^*, \epsilon^*), v) = \top$ (by validity of $\mathit{rla}(v, 2)$).
    Moreover, a process obtains $\epsilon^*$ at time $t' \leq t$.
    Since $\mathit{preconditions}(v, t - 1) = \top$, we conclude that $\mathit{set}^* = \mathit{set}^{**}$ is a sequence (by \Cref{lemma:all_valid_views_input_comparable_1}).
    Thus, $\mathtt{comparable}(v_1, v_2)$ cannot return $\bot$.
    
    \item Let $\mathit{set}^* \neq \mathit{set}^{**}$.
    By validity of $\mathit{rla}(v, 2)$, we know that $\mathtt{valid}((\mathit{set}^*, \epsilon^*), v) = \mathtt{valid}((\mathit{set}^{**}, \epsilon^{**}), v) = \top$.
    Since $\mathit{set}$ is committed at time $t$, we conclude that a (correct or faulty) process obtains $\epsilon^*$ (resp., $\epsilon^{**}$) at time $t^* \leq t$ (resp., $t^{**} \leq t$).
    Finally, the fact that $\mathit{preconditions}(v, t - 1) = \top$ implies that it is impossible that $\mathtt{comparable}(v_1, v_2) = \bot$ (by \Cref{lemma:all_valid_views_input_comparable}).
\end{compactitem}
Therefore, $\mathit{set}$ is indeed a sequence.

Finally, we show that $\mathit{set}.\mathtt{follows}(v) = \top$.
Suppose, by contradiction, that there exists a view $v^* \in \mathit{set}$ such that $v \not\subset v^*$.
Hence, there exists $(\mathit{set}^*, \epsilon^*) \in \mathit{set}_{\mathit{rla}(v, 2)}$ such that (1) $v^* \in \mathit{set}^*$, and (2) $\mathtt{valid}((\mathit{set}^*, \epsilon^*), v) = \top$ (by validity of $\mathit{rla}(v, 2)$).
Moreover, $\epsilon^*$ is obtained at time $t^* \leq t$ by a server.
Since $\mathit{preconditions}(v, t - 1) = \top$, we know that $\mathit{set}^*.\mathtt{follows}(v) = \top$ (by \Cref{lemma:follows}).
Hence, $v^* \notin \mathit{set}^*$, which concludes the theorem.
\end{proof}

The next theorem proves membership validity of $\mathit{vg}(v)$.

\begin{theorem} [Membership Validity]
Algorithm given in \Cref{lst:view_generator_implementation} satisfies membership validity.
\end{theorem}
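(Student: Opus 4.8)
The plan is to mirror the structure of the validity proof just given, reducing the membership question to a statement about the inputs of $\mathit{rla}(v,2)$ and then tracing it through the $\mathtt{construct}$ function of \Cref{lst:set_proof}. First I would unwind the commitment: if $\mathit{set}$ is committed by $\mathit{vg}(v)$ at time $t$, then by \Cref{lst:verify_output_vg} some set $\mathit{set}_{\mathit{rla}(v,2)}$ of $\mathtt{Set, Evidence}$ tuples is committed by $\mathit{rla}(v,2)$ at time $t$ with $\mathit{set} = \bigcup_{(\mathit{set}^*, \epsilon^*) \in \mathit{set}_{\mathit{rla}(v,2)}} \mathit{set}^*$. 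From the already-proved validity property of $\mathit{vg}(v)$, $\mathit{set}$ is a non-empty sequence, so $\mathit{set}.\mathtt{last}()$ is its greatest view; pick the component $(\mathit{set}^*, \epsilon^*)$ with $\mathit{set}.\mathtt{last}() \in \mathit{set}^*$. By validity of $\mathit{rla}(v,2)$ we have $\mathtt{valid}((\mathit{set}^*, \epsilon^*), v) = \top$, and since $\mathit{set}$ is committed by time $t$ the evidence $\epsilon^*$ is obtained by some process by time $t$; by \Cref{lemma:all_valid_views_input_comparable_1} (applicable since $\mathit{preconditions}(v, t-1) = \top$) $\mathit{set}^*$ is itself a sequence, hence $\mathit{set}^* \subseteq \mathit{set}$ together with $\mathit{set}.\mathtt{last}() \in \mathit{set}^*$ forces $\mathit{set}.\mathtt{last}() = \mathit{set}^*.\mathtt{last}()$. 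So it suffices to show $(-, r) \notin \mathit{set}^*.\mathtt{last}()$.

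For that I would prove an auxiliary lemma in the style of \Cref{lemma:follows}: whenever a process obtains $\epsilon$ by time $t$ with $o = (\mathit{set}^*, \epsilon)$ of type $\mathtt{Set, Evidence}$, $\mathtt{valid}(o, v) = \top$, $\mathit{preconditions}(v, t-1) = \top$, and $(+, r) \notin v$, then $(-, r) \notin \mathit{set}^*.\mathtt{last}()$. The proof unfolds $\mathtt{valid}$: there is a set $\sigma^*$ of $\mathtt{View, Set, Evidence}$ tuples committed by $\mathit{rla}(v,1)$ at time $t$ with $\mathit{set}^* = \mathtt{construct}(\sigma^*)$, and by \Cref{lemma:committed_sigma} $\sigma^* \subseteq \Lambda_{t-1}^v$. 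Now $\mathit{preconditions}(v, t-1) = \top$ with $(+, r) \notin v$ gives, for every $(v', \mathit{set}', \epsilon') \in \Lambda_{t-1}^v$, that $(-, r) \notin v'$ and, if $\mathit{set}' \neq \emptyset$, $(-, r) \notin \mathit{set}'.\mathtt{last}()$; the preconditions also make every such $\mathit{set}'$ a sequence and the family of these $\mathit{set}'$ comparable. I would then case-split exactly as in \Cref{lst:set_proof}: if $|\mathit{VIEW}| = 0$, then $\mathit{set}^* = \mathtt{max\_cardinality}(\mathit{SET})$ is literally one of the $\mathit{set}'$ with $(v', \mathit{set}', \epsilon') \in \sigma^* \subseteq \Lambda_{t-1}^v$ and $\mathit{set}' \neq \emptyset$, so $(-, r) \notin \mathit{set}^*.\mathtt{last}()$ is immediate; if $|\mathit{VIEW}| \neq 0$, then $\mathit{set}^* = \{\mathit{greatest\_view} \cup \mathit{union\_view}\}$ is a single view, and I split its membership: $\mathit{greatest\_view}$ is either $\emptyset$ (when $\mathit{greatest\_set} = \bot$) or equals $\mathit{greatest\_set}.\mathtt{last}()$ for a sequence $\mathit{greatest\_set}$ drawn from $\Lambda_{t-1}^v$, so it contains no $(-, r)$; and $\mathit{union\_view} = \bigcup_{v^* \in \mathit{VIEW}} v^*$ where each $v^*$ comes from a stand-alone tuple $(v^*, \emptyset, \epsilon^*) \in \Lambda_{t-1}^v$, so no $v^*$ contains $(-, r)$. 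Hence $(-, r) \notin \mathit{greatest\_view} \cup \mathit{union\_view} = \mathit{set}^*.\mathtt{last}()$.

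Combining this lemma with the reduction of the first paragraph yields the theorem. The main obstacle I anticipate is not any single case of $\mathtt{construct}$, which is routine, but the bookkeeping of the first paragraph: justifying that the global last view $\mathit{set}.\mathtt{last}()$ of the union over all $\mathit{rla}(v,2)$-components coincides with $\mathit{set}^*.\mathtt{last}()$ of the component that contains it. This relies on $\mathit{set}^*$ being a sequence (\Cref{lemma:all_valid_views_input_comparable_1}) and on the union being a sequence (the validity proof of $\mathit{vg}(v)$), and it is precisely where the hypothesis $\mathit{preconditions}(v, t-1) = \top$ does the real work; it also matters that, under the comparability guaranteed by the preconditions, $\mathtt{max\_cardinality}$ returns a set containing all the others, so the ``$|\mathit{VIEW}| = 0$'' case genuinely picks out one of the $\mathit{set}'$s rather than a freshly synthesized set.
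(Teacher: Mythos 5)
Your proposal is correct and follows essentially the same route as the paper's proof: unwind the commitment through $\mathit{rla}(v,2)$ via \Cref{lst:verify_output_vg}, use validity of $\mathit{rla}(v,2)$ and \Cref{lemma:committed_sigma} to place the $\mathit{rla}(v,1)$-committed set inside $\Lambda_{t-1}^v$, and then case-split on $\mathtt{construct}$ ($|\mathit{VIEW}|=0$ versus $|\mathit{VIEW}|\neq 0$), letting the $\mathit{preconditions}$ predicate exclude $(-,r)$ in each case. The only difference is cosmetic — the paper argues by contradiction on the component containing the offending view, whereas you argue directly by identifying $\mathit{set}.\mathtt{last}()$ with the last view of its component — and both handle that identification with the same sequence/comparability facts.
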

\begin{proof}
Let $\mathit{preconditions}(v, t - 1) = \top$ and let $(+, r) \notin v$, for some server $r$.
Let $\mathit{set}$ be committed by $\mathit{vg}(v)$ at time $t$.
By contradiction, suppose that $(-, r) \in \mathit{set}.\mathtt{last()}$; recall that $\mathit{set}$ is a sequence because of the validity property of $\mathit{vg}(v)$.
Let $\mathit{set}.\mathtt{last()} = v^*$.

Since $\mathit{set}$ is committed at time $t$ by $\mathit{vg}(v)$, a set $\mathit{set}_{\mathit{rla}(v, 2)}$ is committed by $\mathit{rla}(v, 2)$ at time $t$, where (1) $\mathit{set}_{\mathit{rla}(v, 2)}$ is a set of tuples of the $\mathtt{Set, Evidence}$ type, and (2) $\mathit{set} = \bigcup\limits_{(\mathit{set}^*, \epsilon^*) \in \mathit{set}_{\mathit{rla}(v, 2)}} \mathit{set}^*$ (by \Cref{lst:verify_output_vg}).
We conclude that there exists a tuple $(\mathit{set}^*, \epsilon^*) \in \mathit{set}_{\mathit{rla}_{v, 2}}$ such that $v^* \in \mathit{set}^*$.
Moreover, we know that $\mathtt{valid}((\mathit{set}^*, \epsilon^*), v) = \top$ (by the validity property of $\mathit{rla}(v, 2)$).

Since $\epsilon^*$ is obtained by time $t$, a set $\mathit{set}_{\mathit{rla}(v, 1)}$ is committed by $\mathit{rla}(v, 1)$ at time $t$, where (1) $\mathit{set}_{\mathit{rla}(v, 1)}$ is a set of tuples of the $\mathtt{View, Set, Evidence}$ type, and (2) $\mathit{set^*} = \mathtt{construct}(\mathit{set}_{\mathit{rla}(v, 1)})$ (by \Cref{lst:set_proof}).
Hence, $\mathit{set}_{\mathit{rla}(v, 1)} \subseteq \Lambda_{t - 1}^v$ (by \Cref{lemma:committed_sigma}).
Finally, we consider the $\mathtt{construct}(\mathit{set}_{\mathit{rla}(v, 1)})$ function:
\begin{compactitem}
    \item Let $|\mathit{VIEW}| = 0$.
    Since $\mathit{preconditions}(v, t - 1) = \top$ and $\mathit{set}_{\mathit{rla}(v, 1)} \subseteq \Lambda_{t - 1}^v$, we know that $(-, r) \notin \mathit{set}^*.\mathtt{last()}$.
    Since $\mathit{set}^*$ is a sequence, that means that $v^* \notin \mathit{set}^*$.
    
    \item Let $|\mathit{VIEW}| \neq 0$.
    In this case, $\mathit{set}$ contains a single view.
    We know that $(-, r) \notin \mathit{greatest\_view}$ at line~\ref{line:result_set_2} of \Cref{lst:set_proof} since $\mathit{preconditions}(v, t - 1) = \top$.
    Moreover, we have that $(-, r) \notin v'$, for every  ``stand-alone'' view $v' \in \mathit{VIEW}$ (since $\mathit{preconditions}(v, t - 1) = \top$).
    Therefore, it is impossible that $v^* \in \mathit{set}^*$.
\end{compactitem}
The theorem holds.
\end{proof}

Next, we prove the safety property of $\mathit{vg}(v)$.

\begin{theorem} [Safety]
Algorithm given in \Cref{lst:view_generator_implementation} satisfies safety.
\end{theorem}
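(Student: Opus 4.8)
The plan is to unfold the two-layer construction of $\mathit{vg}(v)$ --- $\mathit{rla}(v, 1)$ feeding into $\mathit{rla}(v, 2)$ through the $\mathtt{construct}$ function of \Cref{lst:set_proof} --- and at each layer invoke the timing argument of \Cref{lemma:committed_sigma} to control which evidences are already obtained by time $t - 1$, then distinguish the two branches that $\mathtt{construct}$ can take.

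First I would peel off $\mathit{rla}(v, 2)$: if $\mathit{set}$ is committed by $\mathit{vg}(v)$ at time $t$, then by \Cref{lst:verify_output_vg} some nonempty set $\mathit{D}_2$ of $\mathtt{(Set, Evidence)}$ tuples is committed by $\mathit{rla}(v, 2)$ at time $t$ with $\mathit{set} = \bigcup_{(\mathit{set}^*, \epsilon^*) \in \mathit{D}_2} \mathit{set}^*$, and by validity of $\mathit{rla}(v, 2)$ every $(\mathit{set}^*, \epsilon^*) \in \mathit{D}_2$ is $\mathtt{valid}$ with respect to $v$. The timing argument inside the proof of \Cref{lemma:committed_sigma} applies equally to $\mathit{rla}(v, 2)$, so each such $\epsilon^*$ is obtained before time $t$, i.e., by time $t - 1$. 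Next I peel off $\mathit{rla}(v, 1)$: fixing $(\mathit{set}^*, \epsilon^*) \in \mathit{D}_2$, the definition of $\mathtt{valid}$ in \Cref{lst:set_proof} yields a nonempty set $\mathit{D}_1$ of $\mathtt{(View, Set, Evidence)}$ tuples with $\mathtt{verify\_output}(\mathit{D}_1, \mathit{rla}(v, 1), \epsilon^*) = \top$ and $\mathtt{construct}(\mathit{D}_1) = \mathit{set}^*$; since $\epsilon^*$ is obtained by time $t - 1$, $\mathit{D}_1$ is committed by $\mathit{rla}(v, 1)$ by time $t - 1$, hence $\mathit{D}_1 \subseteq \Lambda_{t-2}^v \subseteq \Lambda_{t-1}^v$ by \Cref{lemma:committed_sigma}.

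The proof then splits on the branch taken inside $\mathtt{construct}$. If for some $(\mathit{set}^*, \epsilon^*) \in \mathit{D}_2$ the corresponding $\mathit{D}_1$ contains a ``stand-alone'' tuple $(v^\dagger, \emptyset, \epsilon^\dagger)$ (the $|\mathit{VIEW}| \neq 0$ branch), then $(v^\dagger, \emptyset, \epsilon^\dagger) \in \mathit{D}_1 \subseteq \Lambda_{t-1}^v$ and the first disjunct of Safety holds. Otherwise every $\mathtt{construct}$ call takes the $|\mathit{VIEW}| = 0$ branch, so each $\mathit{set}^* = \mathtt{max\_cardinality}(\mathit{SET})$ with $\mathit{SET} = \{ s \mid (v', s, \epsilon') \in \mathit{D}_1,\ s \neq \emptyset \}$; in particular $\mathit{set}^*$ is the set component of some tuple $(v', \mathit{set}^*, \epsilon') \in \mathit{D}_1 \subseteq \Lambda_{t-1}^v$. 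Because $\mathit{preconditions}(v, t - 1) = \top$, the set components of all tuples in $\Lambda_{t-1}^v$ form a $\subseteq$-chain of sequences, so (over the finite RLA decisions) each $\mathtt{max\_cardinality}$ coincides with the $\subseteq$-maximum, and the family $\{ \mathit{set}^* \mid (\mathit{set}^*, \epsilon^*) \in \mathit{D}_2 \}$ is itself a $\subseteq$-chain; therefore $\mathit{set} = \bigcup \mathit{set}^*$ equals its greatest member, which is the set component of a tuple in $\Lambda_{t-1}^v$, i.e., $(v', \mathit{set}, \epsilon') \in \Lambda_{t-1}^v$ for some $v', \epsilon'$, the second disjunct.

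I expect the main obstacle to be this last case: two distinct aggregation steps --- the $\mathtt{max\_cardinality}$ inside each $\mathtt{construct}$ and the outer union in $\mathit{vg}(v)$'s decision rule --- must both be shown to collapse to ``take the largest candidate'', which is exactly where the pairwise comparability guaranteed by $\mathit{preconditions}(v, t - 1)$ is essential, and one must be careful that the relevant evidences are available by time $t - 1$ rather than merely by $t$, which is why \Cref{lemma:committed_sigma} has to be applied at both RLA layers.
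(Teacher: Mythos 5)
Your proposal is correct and follows essentially the same route as the paper's proof: unfold $\mathit{rla}(v,2)$ via \Cref{lst:verify_output_vg}, use validity of $\mathit{rla}(v,2)$ and \Cref{lst:set_proof} to reach a set committed by $\mathit{rla}(v,1)$, apply \Cref{lemma:committed_sigma} to place it in $\Lambda_{t-1}^v$, and then split on the $\mathtt{construct}$ branch, using the comparability clause of $\mathit{preconditions}(v,t-1)$ to collapse both the $\mathtt{max\_cardinality}$ choice and the outer union onto a single tuple of $\Lambda_{t-1}^v$. The only cosmetic difference is that the paper phrases the case split as assuming the first disjunct fails globally (no $(v',\emptyset,\epsilon') \in \Lambda_{t-1}^v$) and deducing $|\mathit{VIEW}|=0$, whereas you split directly on whether a stand-alone tuple appears in some $\mathit{D}_1$; these are the same argument.
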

\begin{proof}
Let $\mathit{set}$ be committed by $\mathit{vg}(v)$ at time $t$.
In the rest of the proof, let $(v', \emptyset, \epsilon') \notin \Lambda_{t - 1}^v$.
In order to prove the theorem, we prove that $(v_{\mathit{set}}, \mathit{set}, \epsilon_{\mathit{set}}) \in \Lambda_{t - 1}^v$ in this case, for some view $v_{\mathit{set}}$ and some proof $\epsilon_{\mathit{set}}$.

Since $\mathit{set}$ is committed by $\mathit{vg}(v)$ at time $t$, we know that $\mathit{set}_{\mathit{rla}(v, 2)}$ is committed by $\mathit{rla}(v, 2)$ at time $t$, where $\mathit{set} = \bigcup\limits_{(\mathit{set}^*, \epsilon^*) \in \mathit{set}_{\mathit{rla}(v, 2)}} \mathit{set}^*$ (by \Cref{lst:verify_output_vg}).

Let $(\mathit{set}^*, \epsilon^*) \in \mathit{set}_{\mathit{rla}(v, 2)}$.
We know that $\mathtt{valid}((\mathit{set}^*, \epsilon^*), v) = \top$ (by validity of $\mathit{rla}(v, 2)$).
Hence, a set $\mathit{set}_{\mathit{rla}(v, 1)}$ of tuples of the $\mathtt{View, Set, Evidence}$ type is committed by $\mathit{rla}(v, 1)$ at time $t$, where $\mathit{set}^* = \mathtt{construct}(\mathit{set}_{\mathit{rla}(v, 1)})$ (by \Cref{lst:set_proof}).
By \Cref{lemma:committed_sigma}, we know that $\mathit{set}_{\mathit{rla}(v, 1)} \subseteq \Lambda_{t - 1}^v$.

Let us now take a closer look at the $\mathtt{construct}(\mathit{set}_{\mathit{rla}(v, 1)})$ function.
Since $(v', \emptyset, \epsilon') \notin \Lambda_{t-1}^v$ and $\mathit{preconditions}(v, t - 1) = \top$, we conclude that $\mathit{set}^* \in \mathit{SET}$ (because $|\mathit{VIEW}| = 0$).
Hence, $(v^*, \mathit{set}^*, \epsilon^*) \in \Lambda_{t - 1}^v$.

Because $\mathit{preconditions}(v, t - 1) = \top$, either $\mathit{set}_1^* \subseteq \mathit{set}_2^*$ or $\mathit{set}_1^* \supset \mathit{set}_2^*$, where $(\mathit{set}_1^*, \epsilon_1^*), (\mathit{set}_2^*, \epsilon_2^*) \in \mathit{set}_{\mathit{rla}(v, 2)}$.
Hence, $\mathit{set} = \mathit{set}^{**}$, where $(\mathit{set}^{**}, \epsilon^{**}) \in \mathit{set}_{\mathit{rla}(v, 2)}$.
Therefore, $(v_{\mathit{set}}, \mathit{set}, \epsilon_{\mathit{set}}) \in \Lambda_{t-1}^v$, for some view $v_{\mathit{set}}$ and some evidence $\epsilon_{\mathit{set}}$, which concludes the proof of the theorem.
\end{proof}

The next theorem proves the decision certification property.

\begin{theorem} [Decision Certification]
Algorithm given in \Cref{lst:view_generator_implementation} satisfies decision certification.
\end{theorem}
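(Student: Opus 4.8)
The plan is to unwind both layers of indirection in the implementation of $\mathit{vg}(v)$ and invoke the already-established decision certification of $\mathit{rla}(v, 2)$. First I would note that a correct server triggers $\langle \mathit{vg}(v).\mathtt{Decide} \mid \mathit{set}, \omega \rangle$ only through the rule in \Cref{lst:view_generator_implementation} that fires upon $\langle \mathit{rla}(v, 2).\mathtt{Decide} \mid \mathit{dec}, \omega \rangle$; at that moment $\mathit{set} = \bigcup_{(\mathit{set}^*, \epsilon^*) \in \mathit{dec}} \mathit{set}^*$ and the certificate $\omega$ is forwarded unchanged. Hence it suffices to relate the outcome of $\mathtt{verify\_output}(\mathit{set}, \mathit{vg}(v), \omega)$ (\Cref{lst:verify_output_vg}) to the fact that $\omega$ certifies $\mathit{dec}$ at $\mathit{rla}(v, 2)$.

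Next I would apply the decision certification property of $\mathit{rla}(v, 2)$, proven earlier in this appendix: since the correct server decided $\mathit{dec}$ with certificate $\omega$ from $\mathit{rla}(v, 2)$, we have $\mathtt{verify\_output}(\mathit{dec}, \mathit{rla}(v, 2), \omega) = \top$. Inspecting $\mathtt{verify\_output}$ for $\mathit{rla}$ (\Cref{lst:verify_output_rla}), this forces $\omega$ to be a set of messages, each of the form $[\mathtt{ACK\_CON}(v, 2), \mathit{dec}]$, contributed by at least $v.\mathtt{quorum()} \geq 1$ distinct senders in $v.\mathtt{members()}$. Consequently the first two guards of $\mathtt{verify\_output}(\mathit{set}, \mathit{vg}(v), \omega)$ pass: $\omega$ is a set of messages, and the existential ``there exists a set of $(\mathtt{Set}, \mathtt{Evidence})$ tuples $\mathit{set}'$ with $\mathtt{verify\_output}(\mathit{set}', \mathit{rla}(v, 2), \omega) = \top$'' is witnessed by $\mathit{set}' = \mathit{dec}$.

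The only step that needs care is the final equality test $\mathit{set} = \bigcup_{\mathit{set}^* \in \mathit{SET}} \mathit{set}^*$, in which $\mathit{SET}$ is built inside $\mathtt{verify\_output}$ from the witness $\mathit{set}'$ rather than from the $\mathit{dec}$ the server actually decided. Here I would argue that the witness is uniquely determined: because $\omega$ is nonempty (it has $\geq v.\mathtt{quorum()} \geq 1$ senders) and $\mathtt{verify\_output}(\cdot, \mathit{rla}(v, 2), \omega)$ returns $\top$ only when \emph{every} message in $\omega$ equals $[\mathtt{ACK\_CON}(v, 2), \mathit{dec}]$ for that same second component, any $\mathit{set}'$ satisfying $\mathtt{verify\_output}(\mathit{set}', \mathit{rla}(v, 2), \omega) = \top$ must coincide with $\mathit{dec}$. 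Thus $\mathit{SET} = \{\mathit{set}^* \mid (\mathit{set}^*, \epsilon^*) \in \mathit{dec}\}$, its union is exactly the $\mathit{set}$ output by the server, and $\mathtt{verify\_output}(\mathit{set}, \mathit{vg}(v), \omega) = \top$. I expect this uniqueness observation, together with the trivial quorum lower bound $n - \lfloor \frac{n-1}{3} \rfloor \geq 1$, to be the only non-mechanical ingredients; the rest is a direct unfolding of \Cref{lst:view_generator_implementation,lst:verify_output_vg,lst:verify_output_rla}.
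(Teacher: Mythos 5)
Your proposal is correct and follows essentially the same route as the paper, which simply invokes the decision certification property of $\mathit{rla}(v, 2)$ together with the fact that the server decides the union of all sets at line~\ref{line:union} of \Cref{lst:view_generator_implementation}. Your additional observation that the witness in \Cref{lst:verify_output_vg} is uniquely determined by the $\mathtt{ACK\_CON}$ messages in $\omega$ is a valid detail the paper leaves implicit, but it does not change the structure of the argument.
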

\begin{proof}
The theorem follows from the decision certification property of $\mathit{rla}(v, 2)$ and the fact that a correct server decides the union of all sets (at line~\ref{line:union} of \Cref{lst:view_generator_implementation}).
\end{proof}

Next, we prove the decision permission property.

\begin{theorem} [Decision Permission]
Algorithm given in \Cref{lst:view_generator_implementation} satisfies decision permission.
\end{theorem}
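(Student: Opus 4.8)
The plan is to peel apart the two-layer structure of $\mathit{vg}(v)$ and reduce everything to the Decision Permission property of the inner instance $\mathit{rla}(v, 2)$. First I would unfold the definition of $\mathtt{verify\_output}$ for $\mathit{vg}(v)$ in \Cref{lst:verify_output_vg}: if a (correct or faulty) process obtains a certificate $\omega$ with $\mathtt{verify\_output}(\mathit{set}, \mathit{vg}(v), \omega) = \top$, then the very same $\omega$ witnesses that some set $\mathit{set}_{\mathit{rla}(v,2)}$ of $\mathtt{Set, Evidence}$ tuples is committed by $\mathit{rla}(v, 2)$, with $\mathit{set} = \bigcup_{(\mathit{set}^*,\epsilon^*)\in \mathit{set}_{\mathit{rla}(v,2)}} \mathit{set}^*$. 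Hence ``$\mathit{set}$ is committed by $\mathit{vg}(v)$'' immediately implies ``$\mathit{set}_{\mathit{rla}(v,2)}$ is committed by $\mathit{rla}(v, 2)$''.

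Next I would apply the Decision Permission property of $\mathit{rla}(v, 2)$ (\Cref{lst:rla_properties}): at least $v.\mathtt{plurality()}$ correct members of $v$ have previously started $\mathit{rla}(v, 2)$. The remaining work is to trace each such ``start of $\mathit{rla}(v, 2)$'' back to a ``start of $\mathit{vg}(v)$''. A correct server triggers $\langle \mathit{rla}(v, 2).\mathtt{Start}\rangle$ only at line~\ref{line:start_rla_2} of \Cref{lst:view_generator_implementation}, inside the handler for $\langle \mathit{rla}(v, 1).\mathtt{Decide}\rangle$, so it must previously have decided from $\mathit{rla}(v, 1)$. In turn, a correct server only decides from $\mathit{rla}(v, 1)$ after it has started $\mathit{rla}(v, 1)$: the decision rule at line~\ref{line:decide_rule_la1} of \Cref{lst:rla_proposer} is guarded by $\mathit{state} = $ ``proposing'', a state reachable only from ``disclosing'', which is only entered when the server begins processing protocol messages at line~\ref{line:start_processing_rla}, i.e.\ after executing $\langle \mathit{rla}(v, 1).\mathtt{Start}\rangle$ (Integrity of $\mathit{rla}(v, 1)$ already rules out non-members deciding). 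Finally, a correct server triggers $\langle \mathit{rla}(v, 1).\mathtt{Start}\rangle$ only at line~\ref{line:start_rla_1_1} or line~\ref{line:start_rla_1_2} of \Cref{lst:view_generator_implementation}, inside the handler for $\langle \mathit{vg}(v).\mathtt{Start}\rangle$, hence only after it has started $\mathit{vg}(v)$.

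Chaining these implications, the $\geq v.\mathtt{plurality()}$ correct members of $v$ that started $\mathit{rla}(v, 2)$ before $\mathit{set}_{\mathit{rla}(v,2)}$ was committed had each started $\mathit{vg}(v)$ strictly earlier still, which is precisely Decision Permission for $\mathit{vg}(v)$. I do not expect a genuine obstacle here; the only link demanding a sentence of care is the middle one --- ``decides from $\mathit{rla}(v, 1)$ $\Rightarrow$ started $\mathit{rla}(v, 1)$'' --- which I would dispatch as a one-line observation about the guard structure of the proposer code in \Cref{lst:rla_proposer}.
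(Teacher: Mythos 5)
Your proof is correct. Its first half---unfolding $\mathtt{verify\_output}$ for $\mathit{vg}(v)$ in \Cref{lst:verify_output_vg} to conclude that a set of $\mathtt{Set, Evidence}$ tuples is committed by $\mathit{rla}(v,2)$, then invoking decision permission of $\mathit{rla}(v,2)$ to obtain at least $v.\mathtt{plurality()}$ correct members of $v$ that started $\mathit{rla}(v,2)$ at line~\ref{line:start_rla_2} of \Cref{lst:view_generator_implementation}---is exactly the paper's argument. You diverge in the second half. The paper notes that those servers started $\mathit{rla}(v,2)$ inside the handler for $\langle\mathit{rla}(v,1).\mathtt{Decide}\rangle$, concludes via decision certification that a set is committed by $\mathit{rla}(v,1)$, and then applies decision permission of $\mathit{rla}(v,1)$ to obtain a (possibly different) plurality of correct members that started $\mathit{rla}(v,1)$ at line~\ref{line:start_rla_1_1} or~\ref{line:start_rla_1_2}, hence started $\mathit{vg}(v)$. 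You instead trace the \emph{same} plurality of servers all the way back, using the code-level observation that a correct proposer cannot decide from $\mathit{rla}(v,1)$ before starting it: the decision rule at line~\ref{line:decide_rule_la1} of \Cref{lst:rla_proposer} is reachable only after the state machine leaves ``disclosing'', which requires processing protocol messages, which begins only at line~\ref{line:start_processing_rla} inside the $\mathtt{Start}$ handler. Both routes are sound. Yours is slightly more elementary---it needs neither decision certification nor decision permission of $\mathit{rla}(v,1)$, only the guard structure of the proposer, which is the very observation the paper itself uses when proving decision permission of an RLA instance. The paper's version is more modular, staying entirely at the level of the abstract RLA properties rather than re-opening the proposer's code.
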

\begin{proof}
Let $\mathit{set}$ be committed by $\mathit{vg}(v)$.
Hence, $\mathit{set}_{\mathit{rla}(v, 2)}$ is committed by $\mathit{rla}(v, 2)$ (by line~\ref{line:check_verify_vg} of \Cref{lst:verify_output_vg}).
Hence, at least $v.\mathtt{plurality()}$ of correct members of $v$ have previously started $\mathit{rla}(v, 2)$ (by decision permission of $\mathit{rla}(v, 2)$).
All those servers have started $\mathit{rla}(v, 2)$ at line~\ref{line:start_rla_2} of \Cref{lst:view_generator_implementation}.

Therefore, $\mathit{set}_{\mathit{rla}(v, 1)}$ is committed by $\mathit{rla}(v, 1)$ (by the decision certification of $\mathit{rla}(v, 1)$).
Again, at least $v.\mathtt{plurality()}$ of correct members of $v$ have previously started $\mathit{rla}(v, 1)$ (by decision permission of $\mathit{rla}(v, 1)$).
All those servers have started $\mathit{rla}(v, 1)$ at line~\ref{line:start_rla_1_1} or at line~\ref{line:start_rla_1_2} of \Cref{lst:view_generator_implementation}.
Thus, all those servers have started $\mathit{vg}(v)$, which concludes the theorem.
\end{proof}

Next, we prove the bounded decisions property of $\mathit{vg}(v)$.

\begin{theorem} [Bounded Decisions]
Algorithm given in \Cref{lst:view_generator_implementation} satisfies bounded decisions.
\end{theorem}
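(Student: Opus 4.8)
The plan is to reduce the claim to the Bounded Decisions property of the inner RLA instance $\mathit{rla}(v, 2)$, which has already been established for the RLA implementation (\Cref{theorem: bounded_decision_rla}), and which therefore holds for the particular instance $\mathit{rla}(v, 2)$. First I would unfold the definition of what it means for a set to be committed by $\mathit{vg}(v)$: by the definition of $\mathtt{verify\_output}$ for $\mathit{vg}(v)$ in \Cref{lst:verify_output_vg}, if $\mathit{set}_{\mathit{res}}$ is committed by $\mathit{vg}(v)$ via some certificate $\omega$, then $\omega$ is a set of messages and there exists a set $\mathit{dec}$ of tuples of the $\mathtt{Set, Evidence}$ type such that $\mathit{dec}$ is committed by $\mathit{rla}(v, 2)$ (with the same certificate $\omega$) and $\mathit{set}_{\mathit{res}} = \bigcup_{(\mathit{set}^*, \epsilon^*) \in \mathit{dec}} \mathit{set}^*$. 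In other words, every $\mathit{vg}(v)$-committed set is the image of some $\mathit{rla}(v, 2)$-committed set under the deterministic map $\phi$ that sends a set of $\mathtt{Set, Evidence}$ tuples to the union of the first components of its elements.

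Next I would invoke Bounded Decisions of $\mathit{rla}(v, 2)$: the set $\mathit{DEC} = \{\mathit{dec} \mid \mathit{dec} \text{ is committed by } \mathit{rla}(v, 2)\}$ is finite. Since, by the previous paragraph, the collection $\mathit{SET}$ of sets committed by $\mathit{vg}(v)$ is contained in $\phi(\mathit{DEC}) = \{\phi(\mathit{dec}) \mid \mathit{dec} \in \mathit{DEC}\}$, and the image of a finite set under any map is finite, we get $|\mathit{SET}| \leq |\mathit{DEC}| < \infty$, which is exactly the claim. No appeal to $\mathit{rla}(v, 1)$, to the $\mathtt{construct}$ function, or to the $\mathit{preconditions}$ predicate is needed for this property.

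I expect no substantial obstacle here; the only care required is to state the reduction through $\mathtt{verify\_output}$ precisely, namely that the certificate $\omega$ witnessing a $\mathit{vg}(v)$-commitment of $\mathit{set}_{\mathit{res}}$ is exactly a certificate witnessing an $\mathit{rla}(v, 2)$-commitment of the corresponding tuple-set $\mathit{dec}$, so that $\phi$ genuinely maps $\mathit{rla}(v, 2)$-committed sets onto $\mathit{vg}(v)$-committed sets and conversely every $\mathit{vg}(v)$-committed set arises this way. One might worry that arbitrarily many distinct tuple-sets could be committed by $\mathit{rla}(v, 2)$, but that is precisely what its Bounded Decisions property rules out, which is why the reduction closes the argument.
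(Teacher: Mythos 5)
Your proposal is correct and takes essentially the same route as the paper, which proves this property by appealing directly to the bounded decisions property of $\mathit{rla}(v,2)$ via the $\mathtt{verify\_output}$ function in \Cref{lst:verify_output_vg}; you merely spell out the finite-image argument that the paper leaves implicit.
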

\begin{proof}
Follows directly from the bounded decisions property of $\mathit{rla}(v, 2)$ (by \Cref{lst:verify_output_vg}).
\end{proof}

Finally, we prove the liveness property of $\mathit{vg}(v)$.

\begin{theorem} [Liveness]
Algorithm given in \Cref{lst:view_generator_implementation} satisfies liveness.
\end{theorem}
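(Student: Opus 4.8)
The plan is to reduce the liveness of $\mathit{vg}(v)$ to the liveness properties of its two building blocks $\mathit{rla}(v,1)$ and $\mathit{rla}(v,2)$, chaining them through the two $\mathtt{Decide}$ events in \Cref{lst:view_generator_implementation}. First I would translate the hypotheses of the theorem into the hypotheses required by liveness of $\mathit{rla}(v,1)$. Since every correct member of $v$ starts $\mathit{vg}(v)$, each of them triggers either $\langle \mathit{rla}(v,1).\mathtt{Start} \mid \{(v',\mathit{set}',\epsilon')\}\rangle$ or $\langle \mathit{rla}(v,1).\mathtt{Start} \mid \bot\rangle$, so all correct members of $v$ start $\mathit{rla}(v,1)$. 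Since at least one correct server proposes to $\mathit{vg}(v)$ — either by invoking $\mathit{vg}(v).\mathtt{Start}$ with a non-$\bot$ input or by invoking $\mathit{vg}(v).\mathtt{Propose}$ — that server correspondingly triggers $\mathit{rla}(v,1).\mathtt{Start}$ with a non-$\bot$ proposal or $\mathit{rla}(v,1).\mathtt{Propose}$, hence proposes to $\mathit{rla}(v,1)$. Since no correct server invokes $\mathit{vg}(v).\mathtt{Stop}$, no correct server triggers $\mathit{rla}(v,1).\mathtt{Stop}$. By liveness of $\mathit{rla}(v,1)$, all correct members of $v$ eventually decide from $\mathit{rla}(v,1)$.

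Next I would push this through the second hop. When a correct server decides $\mathit{dec}$ with certificate $\omega$ from $\mathit{rla}(v,1)$, it computes $\mathit{proposal} = \mathtt{construct}(\mathit{dec})$ and triggers $\langle \mathit{rla}(v,2).\mathtt{Start} \mid \{(\mathit{proposal},\omega)\}\rangle$ (line~\ref{line:start_rla_2}). I would check that this is a legal input for $\mathit{rla}(v,2)$, i.e.\ a singleton whose element $(\mathit{proposal},\omega)$ satisfies $\mathtt{valid}((\mathit{proposal},\omega),v) = \top$: by decision certification of $\mathit{rla}(v,1)$ we have $\mathtt{verify\_output}(\mathit{dec},\mathit{rla}(v,1),\omega) = \top$, and $\mathtt{construct}(\mathit{dec}) = \mathit{proposal}$ by definition, so the $\mathtt{valid}$ predicate of \Cref{lst:set_proof} holds; moreover $\mathit{dec}\neq\emptyset$ by validity of $\mathit{rla}(v,1)$, and inspecting the two branches of $\mathtt{construct}$ ($|\mathit{VIEW}|=0$ versus $|\mathit{VIEW}|\neq 0$) shows the returned set is never $\bot$. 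Hence every correct member of $v$ starts $\mathit{rla}(v,2)$ with a valid, non-$\bot$ proposal — in particular at least one correct server proposes to $\mathit{rla}(v,2)$ — and, as before, no correct server stops $\mathit{rla}(v,2)$ since that would require invoking $\mathit{vg}(v).\mathtt{Stop}$. By liveness of $\mathit{rla}(v,2)$, all correct members of $v$ eventually decide from $\mathit{rla}(v,2)$.

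Finally, upon deciding from $\mathit{rla}(v,2)$ each correct server executes the rule at line~\ref{line:vg_rule_decide} of \Cref{lst:view_generator_implementation} and triggers $\langle \mathit{vg}(v).\mathtt{Decide} \mid \bigcup_{\mathit{set}^*\in\mathit{SET}}\mathit{set}^*,\ \omega\rangle$ (line~\ref{line:union}), which establishes the claim. I expect the only genuinely fiddly part to be the bookkeeping that each $\mathtt{upon}$ rule along the chain becomes permanently active and is therefore eventually executed — in particular that a correct server that has decided from $\mathit{rla}(v,1)$ indeed reaches the $\mathit{rla}(v,2).\mathtt{Start}$ trigger with no intervening $\mathtt{Stop}$ — together with the verification that $\mathtt{construct}$ applied to any committed decision of $\mathit{rla}(v,1)$ yields a non-$\bot$, $\mathtt{valid}$ proposal. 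Both are routine given the stated hypotheses and the previously proved properties of the two RLA instances, so there is no deep obstacle here.
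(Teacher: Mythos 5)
Your proposal is correct and follows essentially the same route as the paper's proof: chain liveness of $\mathit{rla}(v,1)$ to get all correct members to decide and hence start $\mathit{rla}(v,2)$ with a proposal, then apply liveness of $\mathit{rla}(v,2)$ and conclude at the final $\mathtt{Decide}$. The extra care you take in verifying that the proposal handed to $\mathit{rla}(v,2)$ is a valid singleton is a reasonable addition (it discharges rule 6 of the RLA interface) but does not change the argument.
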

\begin{proof}
If every correct server $r \in v.\mathtt{members}()$ starts $\mathit{vg}(v)$, a correct server proposes to $\mathit{vg}(v)$ and no correct server $r' \in v.\mathtt{members}()$ stops $\mathit{vg}(v)$, every correct server eventually decides from $\mathit{rla}(v, 1)$ (by liveness of $\mathit{rla}(v, 1)$).
Then, every correct server starts $\mathit{rla}(v, 2)$ with a proposal.
By liveness of $\mathit{rla}(v, 2)$, all correct servers eventually decide from $\mathit{rla}(v, 2)$.
Thus, all correct servers eventually decide from $\mathit{vg}(v)$ at line~\ref{line:union} of \Cref{lst:view_generator_implementation}.
\end{proof}

\section{Server's Modules: Storage Module} \label{appx:storage_module}

As servers carry the most of the logic of \sysname, we separate the protocol executed by servers into a multiple \emph{modules}.
Each module of a server contains its own logic.

The first module of a server we present is the \emph{storage module}.
The storage module, as the name suggests, has the responsibility of storing fundamental information used by other modules.
For example, the storage module keeps track of received messages.

\para{``Install'' messages}
Servers exchange different types of messages in \sysname.
The information about the evolution of the system membership is carried by $\mathtt{INSTALL}$ messages.
We define these messages below.

\begin{lstlisting}[
  caption={$\mathtt{INSTALL}$ message},
  label={lst:install_messages},
  escapechar=?]
?\textbf{Install\_Message:}?
    instance ?$[$?INSTALL, ?$\mathit{source}$?, ?$\mathit{set}$?, ?$\omega]$?, with ?$\mathit{source}$? is View, ?$\mathit{set}$? is Set(View) and ?$\omega$? is Certificate such that verify_output(?$\mathit{set}$?, ?$\mathit{vg}(\mathit{source})$?, ?$\omega$?) ?$ = \top$?
    
    ?\textbf{function}? (Install_Message ?$m$?).source():
        let ?$m = [$?INSTALL, ?$\mathit{source}$?, ?$\mathit{set}$?, ?$\omega]$?
        ?\textbf{return}? ?$\mathit{source}$?
        
    ?\textbf{function}? (Install_Message ?$m$?).destination():
        let ?$m = [$?INSTALL, ?$\mathit{source}$?, ?$\mathit{set}$?, ?$\omega]$?
        ?\textbf{return}? min_cardinality?\footnote{We assume that the $\mathtt{min\_cardinality(\mathit{set})}$ deterministic function, where $\mathit{set}$ is a set of sets, returns the set with the smallest cardinality that belongs to $\mathit{set}$; if there are multiple sets that satisfy the condition, then any such set is returned.}?(?$\mathit{set}$?) // get the view with the smallest cardinality?\footnote{Recall that a view is a set of changes.}?
        
    ?\textbf{function}? (Install_Message ?$m$?).tail():
        let ?$m = [$?INSTALL, ?$\mathit{source}$?, ?$\mathit{set}$?, ?$\omega]$?
        ?\textbf{return}? ?$\mathit{set} \setminus\{m$?.destination()?$\}$?
\end{lstlisting}

As it can be seen from \Cref{lst:install_messages}, $\mathit{set}$ is a set of views.
However, we prove in the following that $\mathit{set}$ is, actually, a sequence.
Until we prove this claim, we treat $\mathit{set}$ as a set of views (as written in \Cref{lst:install_messages}).

\para{View-paths}
Next, we introduce \emph{view-paths}: ordered sets of ``adjacent'' $\mathtt{INSTALL}$ messages.

\begin{lstlisting}[
  caption={View-path},
  label={lst:view_path},
  escapechar = ?]
?\textbf{View\_Path:}?
    instance ?$[m_1, m_2, ..., m_k]$? with ?$k \geq 1$? is Integer, ?$m_i$? is Install_Message, for every ?$i \in [1, k]$?,?\\??$m_1$?.source() ?$ = \mathit{genesis}$? and ?$m_j$?.source() ?$= m_{j - 1}$?.destination(), for every ?$j \in [2, k]$? ?\label{line:instance_view_path}?
    
    ?\textbf{function}? (View_Path ?$\mathit{path}$?).views():
        let ?$\mathit{path} = [m_1, ..., m_k]$?
        ?\textbf{return}? ?$\{v \,|\, v = \mathit{genesis} \text{ or } v = m_i$?.destination(), where ?$m_i \in \mathit{path}\}$? ?\label{line:views_path_views}?
    
    ?\textbf{function}? (View_Path ?$\mathit{path}$?).destination():
        let ?$\mathit{path} = [m_1, ..., m_k]$?
        ?\textbf{return}? ?$m_k$?.destination()
        
    ?\textbf{function}? (View_Path ?$\mathit{path}$?).tail():
        let ?$\mathit{path} = [m_1, ..., m_k]$?
        ?\textbf{return}? ?$m_k$?.tail()
\end{lstlisting}

By slightly abusing the notion, sometimes we write ``$\mathit{path} \subseteq \mathit{set}$'', where $\mathit{path}$ is a view-path and $\mathit{set}$ is a set of $\mathtt{INSTALL}$ messages, to mean that every message that belongs to $\mathit{path}$ belongs to $\mathit{set}$, as well.

\para{Storage module - implementation}
We now give the implementation of the storage module.

\begin{lstlisting}[
  caption={Storage module - initialization},
  label={lst:storage_initialization},
  escapechar=?]
?\textbf{Storage Module:}?
    ?\textcolor{plainorange}{Implementation:}?
        upon start: // initialization of the module; executed as soon as the ?\textcolor{gray}{$\mathtt{start}$}? event is triggered
            Set(Message) ?$\mathit{waiting\_messages} = \emptyset$? // all received messages that have not been processed
            
            Set(View) ?$\mathit{history} = \{\mathit{genesis}\}$? // we prove that ?\textcolor{gray}{$\mathit{history}$}? is actually a sequence; see ?\textcolor{gray}{\Cref{subsection:safety_properties}}? ?\label{line:init_history}?
            Set(Install_Message) ?$\mathit{install\_messages} = \emptyset$? ?\label{line:init_install_messages}?
            
            Map(View ?$\to$? View) ?$\mathit{source} = \{\mathit{genesis} \to \bot\}$?
            // as we show further, these are actually sequences and not ``just'' sets of views; see ?\textcolor{gray}{\Cref{subsection:safety_properties}}?
            Map(View ?$\to$? Set(View)) ?$\mathit{sequence} = \{\mathit{genesis} \to \bot\}$? ?\label{line:sequence_init}?
            Map(View ?$\to$? View_Path) ?$\mathit{view\_path} = \{\mathit{genesis} \to \emptyset\}$?
\end{lstlisting}

\begin{lstlisting}[
  caption={Storage module - message processing},
  label={lst:storage_processing},
  escapechar=?]
?\textbf{Storage Module:}?
    ?\textcolor{plainorange}{Implementation:}?
        upon receipt of Message ?$m$?:
            ?$\mathit{waiting\_messages} = \mathit{waiting\_messages} \cup \{m\}$?
            
        upon ?exists? Install_Message ?$m \in \mathit{waiting\_messages}$? such that ?$m = [$?INSTALL, ?$v$?, ?$\mathit{set}$?, ?$\omega]$? and ?$v \in \mathit{history}$? and ?$m \notin \mathit{install\_messages}$?:  ?\label{line:update_history_rule}?
            ?$\mathit{waiting\_messages} = \mathit{waiting\_messages} \setminus{\{m\}}$?
            ?$\mathit{install\_messages} = \mathit{install\_messages} \cup \{m\}$? ?\label{line:update_install_messages}?
            extract_requests_and_voting_proofs(?$m$?) // see ?\textcolor{gray}{\Cref{lst:reconfiguration_extract_requests_passing_proofs}}? ?\label{line:extract_invocation}?
            View ?$v' = \mathit{m}.$?destination()
            ?$\mathit{history} = \mathit{history} \cup \{v'\}$? ?\label{line:update_history}?
            gossip ?$m$? ?\label{line:gossip_install}? // gossip the message
            if ?$\mathit{source}[v'] = \bot$? or ?$\mathit{source}[v'] \supset v$?:
                ?$\mathit{source}[v'] = v$?
                ?$\mathit{sequence}[v'] = \mathit{set}$?
                ?$\mathit{view\_path}[v'] = \mathit{view\_path}[v] || m$? // ?\textcolor{gray}{$||$}? denotes the concatenation
\end{lstlisting}

\section{Server's Modules: Reconfiguration Module} \label{appx:reconfiguration_module}

This section is devoted to the \emph{reconfiguration module} of a server.
The reconfiguration module carries the reconfiguration logic of \sysname.

\para{Valid views}
Next, we give the formal definition of \emph{valid views} (see \Cref{lst:view}).

\begin{definition} [Valid View] \label{definition:valid_view}
Let $\mathcal{I}^*(\infty)$ denote the set of $\mathtt{INSTALL}$ messages ever obtained by (correct or faulty) processes.
A view $v$ is \emph{valid} if and only if:
\begin{compactitem}
    \item $v = \mathit{genesis}$, or
    
    \item there exists a view-path $\mathit{path}$ such that $\mathit{path} \subseteq \mathcal{I}^*(\infty)$ and $\mathit{path}.\mathtt{destination()} = v$.
\end{compactitem}
\end{definition}
Recall that we assume that at least a quorum of members of a valid view is correct (see \Cref{appx:model}, paragraph ``Failure model'').

\para{Forever-alive views}
Next important concept we introduce is a concept of \emph{forever-alive views}.
Intuitively, a view $v$ is forever-alive if there exists a forever-correct process that ``knows'' a view-path to $v$ and ``shares'' that information by gossiping it.
We give the formal definition below.

\begin{definition} [Forever-Alive View] \label{definition:forever_alive_view}
We say that a valid view $v$ is \emph{forever-alive} if and only if:
\begin{compactitem}
    \item $v = \mathit{genesis}$, or
    
    \item there exists a forever-correct process $p$ such that (1) $p$ obtains a view-path $\mathit{path}$ with $\mathit{path}.\mathtt{destination()} = v$, and (2) $p$ gossips all messages that belong to $\mathit{path}$.
\end{compactitem}
\end{definition}
Observe that every forever-alive view is valid.

\para{Properties}
We now introduce a few properties that are satisfied by the reconfiguration (and storage) module.
A correct process \emph{updates} its current view to a view $v$ if and only if it triggers the special $\mathtt{update} \text{ } \mathtt{current} \text{ } \mathtt{view} \text{ } \mathtt{to} \text{ } \mathtt{view} \text{ } v$ event.
Similarly, a correct process \emph{installs} a view $v$ if and only if it triggers the special $\mathtt{install} \text{ } \mathtt{view} \text{ } v$ event.
Finally, a correct process \emph{stops processing} in a view $v$ if and only if it triggers the special $\mathtt{stop} \text{ } \mathtt{processing} \text{ } \mathtt{in} \text{ } \mathtt{view} \text{ } v'$ event, where $v' \supseteq v$.

\begin{lstlisting}[
  caption={Reconfiguration and storage modules - properties},
  label={lst:reconfiguration_properties_new},
  escapechar = ?]
?\textbf{Reconfiguration and Storage Module:}?
    ?\textcolor{plainorange}{Properties:}?
        ?\textbf{- View Comparability:}? All valid views are comparable.
        ?\textbf{- Finitely Many Valid Views:}? There exist only finitely many valid views.
        ?\textbf{- Finality:}? Let ?$v_{\mathit{final}}$? be the greatest forever-alive view.?\footnote{Note that $v_{\mathit{final}}$ is well-defined due to the fact that $\mathit{genesis}$ is a forever-alive view, all forever-alive views are valid, there are finitely many valid views and all valid views are comparable (by the view comparability property).}? Then, (1) all correct members of ?$v_{\mathit{final}}$? update their current view to ?$v_{\mathit{final}}$?, (2) no correct member of ?$v_{\mathit{final}}$? updates its current view to any view?\\?after updating it to ?$v_{\mathit{final}}$?, (3) all correct members of ?$v_{\mathit{final}}$? install ?$v_{\mathit{final}}$?, (4) no correct member of ?$v_{\mathit{final}}$? leaves, and (5) no correct member of ?$v_{\mathit{final}}$? stops processing ?in? ?$v_{\mathit{final}}$?.
\end{lstlisting}

\subsection{Reconfiguration Module - Implementation} \label{subsection:reconfiguration_implementation}

We now give the implementation of the reconfiguration module.
We start by introducing the evidence used as the input to the view generator primitive (see \Cref{appx:view_generator}).

\begin{lstlisting}[
  caption={Evidence used as the input for the view generator primitive},
  label={lst:vg_evidence},
  escapechar = ?]
?\textbf{View\_Generator\_Evidence:}?
    instance ?$\epsilon_{\mathit{vg}} = $? {
        View_Path ?$\mathit{path}$?
        Map(Change ?$\to$? Message) ?$\mathit{requests}$?
        Map(Change ?$\to$? Voting_Proof) ?$\mathit{voting\_proofs}$?
    }
\end{lstlisting}

Next, we define when are inputs of the view generator primitive deemed valid.

\begin{lstlisting}[
  caption={(View, Set, Evidence) tuple - the $\mathtt{valid}$ function},
  label={lst:view_set_evidence_valid},
  escapechar = ?]
?\textbf{function}? valid(View, Set, Evidence ?$(v', \mathit{set}', \epsilon')$?, View ?$v$?):
    if ?$v = \mathit{genesis}$?:
        if ?$v \not\subset v'$? or ?$\mathit{set}' \neq \emptyset$? or ?$\epsilon'$? is not View_Generator_Evidence: ?\label{line:v_equal_genesis}?
            ?\textbf{return}? ?$\bot$?
            
        if ?$\epsilon'.\mathit{path} \neq \bot$?:
            ?\textbf{return}? ?$\bot$?
        
        if ?$(+, r) \notin v$? and ?$(-, r) \in v'$?, for some Server ?$r$?: ?\label{line:check_minus_r_1}?
            ?\textbf{return}? ?$\bot$?
        
        for each Change ?$c \in v'$?:
            if ?$c \notin \mathit{genesis}$?:
                if ?$c = (+, r)$?, for some Server ?$r$?:
                    Message ?$m = \mathit{requests}[c]$?
                    if ?$m \neq [$?JOIN, Voting_Proof ?$\sigma_v]$? or ?$m.$?sender ?$\neq r$? or?\\?verify_voting(?\textnormal{``add server $r$''}?, ?$\sigma_v$?) ?$\neq \top$?: ?\label{line:check_join_1}?
                        ?\textbf{return}? ?$\bot$?
                else:
                    let ?$c = (-, r)$?, for some Server ?$r$?
                    Message ?$m = \mathit{requests}[c]$?
                    Voting_Proof ?$\sigma_v = \mathit{voting\_proofs}[c]$?
                    if (?$m \neq [$?LEAVE?$]$? or ?$m.$?sender ?$\neq r$?) and verify_voting(?\textnormal{``remove server $r$''}?, ?$\sigma_v$?) ?$\neq \top$?: ?\label{line:check_leave_1}?
                        ?\textbf{return}? ?$\bot$?
        ?\textbf{return}? ?$\top$?
        
    // ?\textcolor{gray}{$v \neq \mathit{genesis}$}?
    if ?$\mathit{set}' \neq \emptyset$? and ?$v' \notin \mathit{set}'$?: ?\label{line:view_in_seq}?
        ?\textbf{return}? ?$\bot$?
        
    if ?$\mathit{set}' = \emptyset$? and ?$v \not\subset v'$?:
        ?\textbf{return}? ?$\bot$?
        
    if ?exists? View ?$v^* \in \mathit{set}'$? such that ?$v \not\subset v^*$?: ?\label{line:set_follows}?
        ?\textbf{return}? ?$\bot$?

    if ?$(+, r) \notin v$? and ?exists? View ?$v^* \in \mathit{set}'$? such that ?$(-, r) \in v^*$?, for some Server ?$r$?: ?\label{line:check_valid_view_set_proof}?
        ?\textbf{return}? ?$\bot$?
        
    if ?$(+, r) \notin v$? and ?$(-, r) \in v'$?, for some Server ?$r$?: ?\label{line:check_minus_r_2}?
        ?\textbf{return}? ?$\bot$?
    
    if ?$\epsilon'$? is not View_Generator_Evidence:
        ?\textbf{return}? ?$\bot$?
        
    if ?$\epsilon'.\mathit{path}$?.destination() ?$\neq v$? or ?$\epsilon'.\mathit{path}$?.tail() ?$\neq \mathit{set}'$?: ?\label{line:check_destination_tail}?
        ?\textbf{return}? ?$\bot$?
        
    for each Change ?$c \in v' \cup \bigcup\limits_{v^* \in \mathit{set}'} v^*$?:
            if ?$c \notin \mathit{genesis}$?:
                if ?$c = (+, r)$?, for some Server ?$r$?:
                    Message ?$m = \mathit{requests}[c]$?
                    if ?$m \neq [$?JOIN, Voting_Proof ?$\sigma_v]$? or ?$m.$?sender ?$\neq r$? or? ?verify_voting(?\textnormal{``add server $r$''}?, ?$\sigma_v$?) ?$\neq \top$?: ?\label{line:check_join_2}?
                        ?\textbf{return}? ?$\bot$?
                else:
                    let ?$c = (-, r)$?, for some Server ?$r$?
                    Message ?$m = \mathit{requests}[c]$?
                    Voting_Proof ?$\sigma_v = \mathit{voting\_proofs}[c]$?
                    if (?$m \neq [$?LEAVE?$]$? or ?$m.$?sender ?$\neq r$?) and verify_voting(?\textnormal{``remove server $r$''}?, ?$\sigma_v$?) ?$\neq \top$?: ?\label{line:check_leave_2}?
                        ?\textbf{return}? ?$\bot$?
    ?\textbf{return}? ?$\top$?
\end{lstlisting}

Finally, we give the implementation of the reconfiguration module.

\begin{lstlisting}[
  caption={Reconfiguration module - initialization},
  label={lst:reconfiguration_initialization},
  escapechar=?]
?\textbf{Reconfiguration Module:}?
    ?\textcolor{plainorange}{Implementation:}?
        upon start: // initialization of the module; executed as soon as the ?\textcolor{gray}{$\mathtt{start}$}? event is triggered
            ?$\mathit{current\_view} = $? {
                View ?$\mathit{view} = \bot$? ?\label{line:current_view_init}?
                Bool ?$\mathit{installed} = \bot$?
                Bool ?$\mathit{processing} = \bot$?
                View_Path ?$\mathit{proof} = \bot$? ?\label{line:current_view_proof_init}?
            }
            
            Bool ?$\mathit{joined} = \bot$? ?\label{line:joined_init}?
            View ?$\mathit{stop\_processing\_until} = \emptyset$?
            Bool ?$\mathit{proposed} = \bot$?
            
            ?$\mathit{reconfiguration} = $? {
                Set(Change) ?$\mathit{requested} = \emptyset$? 
                Map(Change ?$\to$? Message) ?$\mathit{requests} = \{c \to \bot\text{, for every Change } c\}$?
                Map(Change ?$\to$? Voting_Proof) ?$\mathit{voting\_proofs} = \{c \to \bot\text{, for every Change } c\}$?
            
                View ?$\mathit{source} = \bot$?
                View ?$\mathit{destination} = \bot$?
                Set(View) ?$\mathit{sequence} = \bot$? // we prove that this is actually a sequence; see ?\textcolor{gray}{\Cref{subsection:safety_properties}}? ?\label{line:reconfiguration_sequence_init}?
                
                Bool ?$\mathit{prepared} = \bot$?
                Bool ?$\mathit{discharged} = \bot$?
            }
            
            ?$\mathit{state\_transfer} = $? {
                Bool ?$\mathit{in\_progress} = \bot$?
                Map(View ?$\to$? Set(State)) ?$\mathit{states} = \{v \to \emptyset\text{, for every View } v\}$?
                Map(View ?$\to$? Set(Server)) ?$\mathit{states\_from} = \{v \to \emptyset\text{, for every View } v\}$?
            }
            
            ?$dischargement = $? {
                View ?$\mathit{dischargement\_view} = \bot$?
                Map(View ?$\to$? Set(Server)) ?$\mathit{dischargements} = \{v \to \emptyset\text{, for every View } v\}$?
            }
            
            if ?$r \in \mathit{genesis}$?.members():
                ?$\mathit{current\_view}.\mathit{view} = \mathit{genesis}$? ?\label{line:set_current_to_genesis}?
                ?\textbf{trigger}? update current view to ?$\mathit{genesis}$? // the server updates its current view to ?\textcolor{gray}{$\mathit{genesis}$}? ?\label{line:formal_update_view_1}?
                
                ?$\mathit{current\_view}.\mathit{installed} = \top$?
                ?\textbf{trigger}? install view ?$\mathit{genesis}$? // the server installs ?\textcolor{gray}{$\mathit{genesis}$}? ?\label{line:install_init}?
                installed(?$\mathit{genesis}$?) // see ?\textcolor{gray}{\Cref{lst:transaction_module}}?
                ?$\mathit{current\_view}.\mathit{processing} = \top$?
                
                ?$\mathit{joined} = \top$?
                ?\textbf{trigger}? joined // the server joins ?\label{line:join_init}?
                
                // triggering events for joined servers
                for each Server ?$r \in \mathit{genesis}$?.members():
                    ?\textbf{trigger}? ?$r$? joined // server ?\textcolor{gray}{$r$}? joins ?\label{line:trigger_for_others_init}?
                
                ?\textbf{trigger}? <?$\mathit{vg}(\mathit{genesis})$?.Start | ?$\bot$?>
\end{lstlisting}

\begin{lstlisting}[
  caption={Reconfiguration module - the $\mathtt{extract\_requests\_and\_voting\_proofs}$ function},
  label={lst:reconfiguration_extract_requests_passing_proofs},
  escapechar=?]
?\textbf{Reconfiguration Module:}?
    ?\textcolor{plainorange}{Implementation:}?
        ?\textbf{function}? extract_requests_and_voting_proofs(Install_Message ?$m$?):
            update ?$\mathit{reconfiguration}.\mathit{requests}$? with requests from ?$m$?
            update ?$\mathit{reconfiguration}.\mathit{voting\_proofs}$? with voting ?proofs? from ?$m$?
\end{lstlisting}

\begin{lstlisting}[
  caption={Reconfiguration module - joining},
  label={lst:reconfiguration_joining},
  escapechar=?]
?\textbf{Reconfiguration Module:}?
    ?\textcolor{plainorange}{Implementation:}?
        upon join with voting proof ?$\sigma_v$?: // the server requests to join
            gossip [JOIN, ?$\sigma_v$?] ?\label{line:gossip_join}?
            
    upon ?exists? Message ?$m \in \mathit{waiting\_messages}$? such that ?$m = [$?JOIN, Voting_Proof ?$\sigma_v]$? and?\\?verify_voting(?\textnormal{``add server $m$}.sender\textnormal{''}?, ?$\sigma_v$?) ?$ = \top$? and ?$(+, m$?.sender?$) \notin \mathit{reconfiguration}.\mathit{requested}$?: ?\label{line:join_message}?
            ?$\mathit{waiting\_messages} = \mathit{waiting\_messages} \setminus{\{m\}}$?
            ?$\mathit{reconfiguration}$?.?$\mathit{requested} = \mathit{reconfiguration}$?.?$\mathit{requested} \cup \{(+, m.$?sender?$)\}$?
            ?$\mathit{reconfiguration}.\mathit{requests}[(+, m.\text{sender})] = m$? ?\label{line:store_join_message}?
\end{lstlisting}

\begin{lstlisting}[
  caption={Reconfiguration module - leaving},
  label={lst:reconfiguration_leaving},
  escapechar=?]
?\textbf{Reconfiguration Module:}?
    ?\textcolor{plainorange}{Implementation:}?
        upon leave: // the server requests to leave
            gossip [LEAVE] ?\label{line:gossip_leave}?
            
        upon ?exists? Message ?$m \in \mathit{waiting\_messages}$? such that ?$m = [$?LEAVE?$]$? and?\\$(-, m$?.sender?$) \notin \mathit{reconfiguration}.\mathit{requested}$?: ?\label{line:leave_message}?
            ?$\mathit{waiting\_messages} = \mathit{waiting\_messages} \setminus{ \{m\}}$?
            ?$\mathit{reconfiguration}.\mathit{requested} = \mathit{reconfiguration}.\mathit{requested} \cup \{(-, m$?.sender?$)\}$?
            ?$\mathit{reconfiguration}.\mathit{requests}[(-, m.\text{sender})] = m$? ?\label{line:store_leave_message}?
            
        upon obtaining Voting_Proof ?$\sigma_v$? such that?\\?verify_voting(?\textnormal{``remove server $r$''}?, ?$\sigma_v$?) ?$ = \top$? and ?$(-, r) \notin \mathit{reconfiguration}.\mathit{requested}$?:
            ?$\mathit{reconfiguration}.\mathit{requested} = \mathit{reconfiguration}.\mathit{requested} \cup \{(-, r)\}$?
            ?$\mathit{reconfiguration}.\mathit{voting\_proofs}[(-, r)] = \sigma_v$? ?\label{line:store_passing_proof}?
\end{lstlisting}

\vspace{-\baselineskip} 
\begin{lstlisting}[
  caption={Reconfiguration module - view generation},
  label={lst:reconfiguration_view_generation},
  escapechar=?]
?\textbf{Reconfiguration Module:}?
    ?\textcolor{plainorange}{Implementation:}?
    ?\textbf{function}? to_propose(): ?\label{line:to_propose_function}?
        Set(Change) ?$\mathit{proposal} = \emptyset$? 
        for each Change ?$\mathit{change} \in \mathit{reconfiguration}.\mathit{requested}$? such that ?$\mathit{change} \notin \mathit{current\_view}.\mathit{view}$?: ?\label{line:iteration}?
            if ?$\mathit{change} = (+, r)$?, for some Server ?$r$?:
                ?$\mathit{proposal} = \mathit{proposal} \cup \{\mathit{change}\}$?
            else if ?$\mathit{change} = (-, r)$?, for some Server ?$r$?:
                if ?$(+, r) \in \mathit{current\_view}.\mathit{view}$?:
                    ?$\mathit{proposal} = \mathit{proposal} \cup \{\mathit{change}\}$?
        ?\textbf{return}? ?$\mathit{proposal}$?
    
    upon to_propose() ?$ = \mathit{proposal} \neq \emptyset$? and ?$\mathit{current\_view}.\mathit{installed} = \top$? and ?$\mathit{proposed} = \bot$?: ?\label{line:new_proposal}?
        View_Generator_Evidence ?$\epsilon_{\mathit{vg}}$?
        ?$\epsilon_{\mathit{vg}}.\mathit{path} = \mathit{current\_view}.\mathit{proof}$?
        ?$\epsilon_{\mathit{vg}}.\mathit{requests} = \mathit{reconfiguration}.\mathit{requests}$?
        ?$\epsilon_{\mathit{vg}}.\mathit{voting\_proofs} = \mathit{reconfiguration}.\mathit{voting\_proofs}$?
        ?\textbf{trigger}? <?$\mathit{vg}(\mathit{current\_view}.\mathit{view})$?.Propose | ?$(\mathit{current\_view}.\mathit{view} \cup \mathit{proposal}, \emptyset, \epsilon_{\mathit{vg}})$?> ?\label{line:propose_installed}?
        ?$\mathit{proposed} = \top$?
        
    upon <?$\mathit{vg}(\mathit{current\_view}.\mathit{view})$?.Decide | Set(View) ?$\mathit{set}$?, Certificate ?$\omega$?>: ?\label{line:decide_from_vg}?
        Install_Message ?$m = [$?INSTALL, ?$\mathit{current\_view}.\mathit{view}$?, ?$\mathit{set}$?, ?$\omega]$? // create the INSTALL message
        ?$\mathit{waiting\_messages} = \mathit{waiting\_messages} \cup \{m\}$?
\end{lstlisting}

\vspace{-\baselineskip} 
\begin{lstlisting}[
  caption={Reconfiguration module - view transition},
  label={lst:reconfiguration_view_transition},
  escapechar=?]
?\textbf{Reconfiguration Module:}?
    ?\textcolor{plainorange}{Implementation:}?
        upon ?$\{$?(?$\mathit{joined} = \bot$? and ?exists? View ?$v \in \mathit{history}$? such that ?$r \in v$?.members() and ?$r \notin \mathit{source}[v]$?.members()) or (?$\mathit{joined} = \top$? and ?exists? View ?$v \in \mathit{history}$? such that ?$\mathit{current\_view}.\mathit{view} \subset v$? and ?$\mathit{source}[v] \subseteq \mathit{current\_view}.\mathit{view}$?)?$\}$? and ?$\mathit{reconfiguration}.\mathit{destination} = \bot$?: ?\label{line:updated_discovery_exists}?
            ?$\mathit{reconfiguration}.\mathit{destination} = v$? ?\label{line:set_reconfiguration_destination}?
            ?$\mathit{reconfiguration}.\mathit{source} = \mathit{source}[v]$?
            ?$\mathit{reconfiguration}.\mathit{sequence} = \mathit{sequence}[v]$?
            View_Path ?$\mathit{path} = \mathit{view\_path}[v]$?
            
            if ?$r \notin \mathit{reconfiguration}.\mathit{destination}$?.members():
                ?$\mathit{reconfiguration}.\mathit{prepared} = \top$? ?\label{line:prepared_free}?
            else:
                ?$\mathit{state\_transfer}.\mathit{in\_progress} = \top$?
                gossip ?$[$?STATE-REQUEST, ?$\mathit{reconfiguration}.\mathit{source}$?, ?$\mathit{reconfiguration}.\mathit{sequence}]$? // ask for state ?\label{line:gossip_state_request}?
                
            if ?$r \in \mathit{reconfiguration}.\mathit{destination}$?.members():
                ?$\mathit{reconfiguration}.\mathit{discharged} = \top$?
            else:
                ?$\mathit{dischargement}.\mathit{dischargement\_view} = \mathit{reconfiguration}.\mathit{destination}$?    
                gossip ?$[$?DISCHARGEMENT-REQUEST, ?$\mathit{reconfiguration}.\mathit{destination}]$? ?\label{line:gossip_dischargement_request}?
            
            ?\textbf{wait until}? ?$\mathit{reconfiguration}.\mathit{prepared} = \top$? and ?$\mathit{reconfiguration}.\mathit{discharged} = \top$? ?\label{line:wait_for}?
            
            // stop the view generator of the previous view
            if ?$\mathit{joined} = \top$?:
                ?\textbf{trigger}? <?$\mathit{vg}(\mathit{current\_view}.\mathit{view})$?.Stop> ?\label{line:stop_vg_reconfiguration}?
            
            if ?$r \in \mathit{reconfiguration}.\mathit{destination}$?.members():
                ?$\mathit{current\_view}.\mathit{view} = \mathit{reconfiguration}.\mathit{destination}$? ?\label{line:update_current_view}?
                ?\textbf{trigger}? update current view to ?$\mathit{reconfiguration}.\mathit{destination}$? 
                ?$\mathit{current\_view}.\mathit{proof} = \mathit{path}$? ?\label{line:update_view_formal_2}?
            if ?$\mathit{joined} = \bot$?:
                ?$\mathit{joined} = \top$?
                ?\textbf{trigger}? joined // the server joins ?\label{line:joined}?
            
            ?$\mathit{reconfiguration}.\mathit{source} = \bot$?
            ?$\mathit{reconfiguration}.\mathit{destination} = \bot$?
            
            // reset state transfer
            ?$\mathit{state\_transfer}.\mathit{states}[v] = \emptyset$?, for every View ?$v$?
            ?$\mathit{state\_transfer}.\mathit{states\_from}[v] = \emptyset$?, for every View ?$v$?
            
            // reset dischargement
            ?$\mathit{dischargement}.\mathit{dischargement\_view} = \bot$?
            ?$\mathit{dischargement}.\mathit{dischargements}[v] = \emptyset$?, for every View ?$v$?
        
            // reset prepared and discharged
            ?$\mathit{reconfiguration}.\mathit{prepared} = \bot$?
            ?$\mathit{reconfiguration}.\mathit{discharged} = \bot$?
            
            // reset proposed
            ?$\mathit{proposed} = \bot$?
            
            if ?$r \notin \mathit{reconfiguration}.\mathit{destination}$?.members(): ?\label{line:check_members_current_view}?
                ?\textbf{trigger}? left // the server leaves ?\label{line:left}?
                ?\textbf{return}? // we assume that the server executes the special ?\textcolor{gray}{$\mathtt{stop}$}? command as soon as it leaves
                
            // triggering events for joined servers
            for each Server ?$r \in \mathit{reconfiguration}.\mathit{destination}$?.members()?$\setminus{\mathit{reconfiguration}.\mathit{source}}$?.members():
                ?\textbf{trigger}? ?$r$? joined // server ?\textcolor{gray}{$r$}? joins ?\label{line:trigger_for_others_2}?
                
            // triggering events for left servers
            for each Server ?$r \in \mathit{reconfiguration}.\mathit{source}$?.members()?$\setminus{\mathit{reconfiguration}.\mathit{destination}}$?.members():
                ?\textbf{trigger}? ?$r$? left // server ?\textcolor{gray}{$r$}? leaves ?\label{line:leave_others}?
            
            // ?\textcolor{gray}{$\mathit{reconfiguration}.\mathit{sequence}$}? has a single element
            if ?$\mathit{reconfiguration}.\mathit{sequence} = \{\mathit{current\_view}.\mathit{view}\}$?:
                ?$\mathit{current\_view}.\mathit{installed} = \top$?
                ?\textbf{trigger}? install view ?$\mathit{current\_view}.\mathit{view}$? // the server installs ?\textcolor{gray}{$\mathit{current\_view}.\mathit{view}$}? ?\label{line:install}?
                installed(?$\mathit{genesis}$?) // see ?\textcolor{gray}{\Cref{lst:transaction_module}}?
                
                if ?$\mathit{current\_view}.\mathit{view} \supset \mathit{stop\_processing\_until}$?:
                    ?$\mathit{stop\_processing\_until} = \emptyset$?
                    ?$\mathit{current\_view}.\mathit{processing} = \top$?
                else:  
                    ?$\mathit{current\_view}.\mathit{processing} = \bot$?
                ?\textbf{trigger}? <?$\mathit{vg}(\mathit{current\_view}.\mathit{view})$?.Start | ?$\bot$?>
            else:
                View_Generator_Evidence ?$\epsilon_{\mathit{vg}}$?
                ?$\epsilon_{\mathit{vg}}.\mathit{path} = \mathit{current\_view}.\mathit{proof}$?
                ?$\epsilon_{\mathit{vg}}.\mathit{requests} = \mathit{reconfiguration}.\mathit{requests}$?
                ?$\epsilon_{\mathit{vg}}.\mathit{voting\_proofs} = \mathit{reconfiguration}.\mathit{voting\_proofs}$?
                ?\textbf{trigger}? <?$\mathit{vg}(\mathit{current\_view}.\mathit{view})$?.Start | ?$($?min_cardinality(?$\mathit{current\_view}.\mathit{proof}$?.tail()), ?$ \mathit{current\_view}.\mathit{proof}$?.tail(), ?$\epsilon_{\mathit{vg}})$?> ?\label{line:start_with_proposal}?
            ?$\mathit{reconfiguration}.\mathit{sequence} = \bot$?
\end{lstlisting}

\vspace{-\baselineskip} 
\begin{lstlisting}[
  caption={Reconfiguration module - state transfer},
  label={lst:reconfiguration_state_transfer},
  escapechar=?]
?\textbf{Reconfiguration Module:}?
    ?\textcolor{plainorange}{Implementation:}?
        upon ?exists? Message ?$m \in \mathit{waiting\_messages}$? such that ?$m = [$?STATE-REQUEST, View ?$\mathit{source}'$?, Set(View) ?$\mathit{set}']$? such that ?$\mathit{source}' \in \mathit{history}$? and ?$[$?INSTALL, ?$\mathit{source}'$?, ?$\mathit{set}'$?, Certificate ?$\omega'] \in \mathit{install\_messages}$?: ?\label{line:state-request_active}?
            ?$\mathit{waiting\_messages} = \mathit{waiting\_messages} \setminus{\{
            m\}}$?
            let ?$m' = [$?INSTALL, ?$\mathit{source}'$?, ?$\mathit{set}'$?, ?$\omega']$?
            if ?$m$?.sender ?$\in m'$?.destination().members(): // otherwise, the message can safely be discarded
                // the server does not need to have joined
                if ?$\mathit{self} \in \mathit{source}'$?.members() and (?$\mathit{current\_view}.\mathit{view} \subseteq \mathit{source}'$? or ?$\mathit{current\_view}.\mathit{view} = \bot$?): ?\label{line:check_member_source}?
                    ?$\mathit{current\_view}.\mathit{processing} = \bot$?
                    ?\textbf{trigger}? stop processing ?in? view ?$\mathit{source}'$? // the server stops processing in ?\textcolor{gray}{$\mathit{source}'$}? ?\label{line:stop_processing}?
                    if ?$\mathit{source}' \supset \mathit{stop\_processing\_until}$?:
                        ?$\mathit{stop\_processing\_until} = \mathit{source}'$? ?\label{line:stop_processing_until}?
                // ?\textcolor{gray}{$\mathit{state}$}? is a variable defined in ?\textcolor{gray}{\Cref{lst:transaction_module}}?
                send ?$[$?STATE-UPDATE, ?$\mathit{source}'$?, ?$\mathit{set}'$?, ?$\mathit{state}$?, ?$\mathit{current\_view.\mathit{view}}]$? to ?$m$?.sender ?\label{line:send_state_update}? 
                
        upon ?exists? Message ?$m \in \mathit{waiting\_messages}$? such that ?$m = [$?STATE-UPDATE, View ?$\mathit{source}'$?, Set(View) ?$\mathit{set}'$?, State_Representation ?$\mathit{state}$?, View ?$\mathit{view}]$? such that ?$\mathit{source}' = \mathit{reconfiguration}.\mathit{source}$? and ?$\mathit{set}' = \mathit{reconfiguration}.\mathit{sequence}$? and ?$\mathit{state}$?.verify() ?$= \top$? and ?$\mathit{state\_transfer}.\mathit{in\_progress} = \top$?: ?\label{line:state_update_received}?
            ?$\mathit{waiting\_messages} = \mathit{waiting\_messages} \setminus{\{
            m\}}$?
            Server ?$r' = m$?.sender
            if ?$r' \notin \mathit{state\_transfer}.\mathit{states\_from}[\mathit{view}]$? and ?$r' \in \mathit{view}$?.members():
                ?$\mathit{state\_transfer}.\mathit{states}[\mathit{view}] = \mathit{state\_transfer}.\mathit{states}[\mathit{view}] \cup \{\mathit{state}\}$? // store the state
                ?$\mathit{state\_transfer}.\mathit{states\_messages}[\mathit{view}] = \mathit{state\_transfer}.\mathit{states\_messages}[\mathit{view}] \cup \{m\}$? // store the sender
                ?$\mathit{state\_transfer}.\mathit{states\_from}[\mathit{view}] = \mathit{state\_transfer}.\mathit{states\_from}[\mathit{view}] \cup \{r'\}$? // store the sender
                
            if ?$r' \notin \mathit{state\_transfer}.\mathit{states\_from}[\mathit{source}']$? and ?$r' \in \mathit{source}'$?.members():
                ?$\mathit{state\_transfer}.\mathit{states}[.\mathit{source}'] = \mathit{state\_transfer}.\mathit{states}[\mathit{source}'] \cup \{\mathit{state}\}$? // store the state
                ?$\mathit{state\_transfer}.\mathit{states\_messages}[\mathit{source}'] = \mathit{state\_transfer}.\mathit{states\_messages}[\mathit{source}'] \cup \{m\}$? // store the sender
                ?$\mathit{state\_transfer}.\mathit{states\_from}[\mathit{source}'] = \mathit{state\_transfer}.\mathit{states\_from}[\mathit{source}'] \cup \{r'\}$? // store the sender
                
        ?\textbf{function}? enough_states_received(View ?$v$?):
            Set(State_Representation) ?$\mathit{states} = \emptyset$?
            for each Server ?$\mathit{rep} \in v$?.members():
                if ?exists? View ?$\mathit{view}$? such that ?$\mathit{view} \supseteq v$? and ?$\mathit{rep} \in \mathit{state\_transfer}.\mathit{states\_from}[\mathit{view}]$?:
                    ?$\mathit{states} = \mathit{states} \cup \{\mathit{state}\}$?, where ?$m = [$?STATE-UPDATE, ?$\mathit{reconfiguration}.\mathit{source}$?, ?$\mathit{reconfiguration}.\mathit{sequence}$?, ?$\mathit{state}$?, ?$v] \in \mathit{state\_transfer}.\mathit{states}[v]$? and ?$m$?.sender ?$= \mathit{rep}$?
            if ?$|\mathit{states}| < v$?.plurality():
                ?\textbf{return}? ?$\bot$?
            else:
                ?\textbf{return}? ?$\mathit{states}$?
        
        // state received from a quorum of members of ?\textcolor{gray}{$\mathit{reconfiguration}.\mathit{source}$}?        
        upon ?$|\mathit{state\_transfer}.\mathit{states\_from}[\mathit{reconfiguration}.\mathit{source}]| \geq \mathit{reconfiguration}.\mathit{source}$?.quorum() and ?$\mathit{state\_transfer}.\mathit{in\_progress} = 
        \top$?: ?\label{line:prepared_from_source}?
            ?$\mathit{reconfiguration}.\mathit{prepared} = \top$? ?\label{line:prepared_1}?
            ?$\mathit{state\_transfer}.\mathit{in\_progress} = \bot$?
            refine_state(?$\mathit{state\_transfer}.\mathit{states}[\mathit{reconfiguration}.\mathit{source}]$?)
                
        // state received from at least one correct member of a view ``greater'' than ?\textcolor{gray}{$\mathit{reconfiguration}.\mathit{source}$}?
        upon ?exists? View ?$v \in \mathit{history}$? such that ?$\mathit{reconfiguration}.\mathit{source} \subset v$? and ?$\mathit{states} = $? enough_states_received(?$v$?) and ?$\mathit{states} \neq \bot$? and ?$\mathit{state\_transfer}.\mathit{in\_progress} = \top$?: ?\label{line:prepared_from_bigger}?
            ?$\mathit{reconfiguration}.\mathit{prepared} = \top$? ?\label{line:prepared_2}?
            ?$\mathit{state\_transfer}.\mathit{in\_progress} = \bot$?
            refine_state(?$\mathit{states}$?)
\end{lstlisting}

\vspace{-\baselineskip} 
\begin{lstlisting}[
  caption={Reconfiguration module - view dischargement},
  label={lst:reconfiguration_view_dischargement},
  escapechar=?]
?\textbf{Reconfiguration Module:}?
    ?\textcolor{plainorange}{Implementation:}?
        upon ?exists? Message ?$m \in \mathit{waiting\_messages}$? such that ?$m = [$?DISCHARGEMENT-REQUEST, View ?$v]$? and ?$\mathit{joined} = \top$? and ?$\mathit{current\_view}.\mathit{view} \supseteq v$?:
            ?$\mathit{waiting\_messages} = \mathit{waiting\_messages} \setminus{\{m\}}$?
            send ?$[$?DISCHARGEMENT-CONFIRM, ?$\mathit{current\_view}.\mathit{view}]$?
            
        upon ?exists? Message ?$m \in \mathit{waiting\_messages}$? such that ?$m = [$?DISCHARGEMENT-CONFIRM, View ?$v]$? and ?$v \in \mathit{history}$? and ?$m$?.sender ?$\in v$?.members():
            ?$\mathit{waiting\_messages} = \mathit{waiting\_messages} \setminus{\{m\}}$?
            ?$\mathit{dischargement}.\mathit{dischargements}[v] = \mathit{dischargement}.\mathit{dischargements}[v] \cup \{m$?.sender?$\}$?
            
        upon ?exists? View ?$v$? such that ?$\mathit{dischargement}.\mathit{dischargement\_view} \neq \bot$? and ?$v \supseteq \mathit{dischargement}.\mathit{dischargement\_view}$? and ?$|\mathit{dischargement}.\mathit{dischargements}[v]| \geq v$?.quorum(): ?\label{line:discharged_rule}?
            ?$\mathit{reconfiguration}.\mathit{discharged} = \top$?
\end{lstlisting}

\subsection{Proof of Correctness} \label{subsection:correctness_proof}

We now prove the properties presented in \Cref{lst:reconfiguration_properties_new}.
We start by proving some intermediate results that play the crucial role in the proof (\Cref{subsubsection:intermediate_results}).

\subsubsection{Intermediate Results} \label{subsubsection:intermediate_results}

First, we show that a correct server that updates its $\mathit{current\_view}.\mathit{view}$ variable to $v$ has $v$ in its $\mathit{history}$ variable.

\begin{lemma} \label{lemma:current_view_view_path}
Let $\mathit{current\_view}.\mathit{view} = v$ at a correct server $r$ at time $t$.
Then, $v \in \mathit{history}$ at server $r$ at time $t$.
\end{lemma}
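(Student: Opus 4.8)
The plan is to argue by a straightforward case analysis on the points in the code at which $\mathit{current\_view}.\mathit{view}$ is written, together with the observation that the $\mathit{history}$ variable of the storage module is append-only. Concretely, $\mathit{history}$ is initialized to $\{\mathit{genesis}\}$ at line~\ref{line:init_history} of \Cref{lst:storage_initialization}, and the only line that modifies it, line~\ref{line:update_history} of \Cref{lst:storage_processing}, only ever adds a view; no line removes an element from $\mathit{history}$. Hence, once a view lies in $\mathit{history}$ it stays there forever, and it suffices to show that at the instant $\mathit{current\_view}.\mathit{view}$ is set to a (non-$\bot$) view $v$, that $v$ already belongs to $\mathit{history}$; monotonicity then carries the claim to every later time $t$.

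First I would dispose of the trivial observations: $\mathit{current\_view}.\mathit{view}$ is initialized to $\bot$ (line~\ref{line:current_view_init} of \Cref{lst:reconfiguration_initialization}), so a genuine view residing in $\mathit{current\_view}.\mathit{view}$ can only be the effect of one of the two assignments to this variable. The first is line~\ref{line:set_current_to_genesis} of \Cref{lst:reconfiguration_initialization}, which writes $\mathit{genesis}$; since $\mathit{genesis} \in \mathit{history}$ holds from initialization, the claim follows immediately in that case.

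Next I would treat the only other assignment, line~\ref{line:update_current_view} of \Cref{lst:reconfiguration_view_transition}, namely $\mathit{current\_view}.\mathit{view} = \mathit{reconfiguration}.\mathit{destination}$. This line sits in the body of the rule guarded at line~\ref{line:updated_discovery_exists}, whose activation requires the existence of some view $v \in \mathit{history}$ with the stated membership properties, and the body immediately records $\mathit{reconfiguration}.\mathit{destination} = v$ at line~\ref{line:set_reconfiguration_destination}. I would then check that $\mathit{reconfiguration}.\mathit{destination}$ is not overwritten between line~\ref{line:set_reconfiguration_destination} and line~\ref{line:update_current_view} --- it is reset to $\bot$ only after line~\ref{line:update_current_view} --- so that the value installed into $\mathit{current\_view}.\mathit{view}$ is exactly this $v$; and that the intervening blocking step (the wait-until at line~\ref{line:wait_for}) merely delays execution while $\mathit{history}$ can only grow, so $v$ is still in $\mathit{history}$ when line~\ref{line:update_current_view} fires. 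Combined with the append-only monotonicity of $\mathit{history}$, this yields $v \in \mathit{history}$ at every time from that assignment onward, and in particular at the time $t$ of the statement.

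The argument is essentially routine; the one delicate point is tracking that the witness view $v$ survives in both $\mathit{reconfiguration}.\mathit{destination}$ and $\mathit{history}$ across the blocking wait-until of \Cref{lst:reconfiguration_view_transition}, so that the guard's premise ``$v \in \mathit{history}$'' is not stale by the time $\mathit{current\_view}.\mathit{view}$ is actually updated. Both facts follow by direct inspection of \Cref{lst:storage_processing,lst:reconfiguration_view_transition}, so no genuine difficulty arises.
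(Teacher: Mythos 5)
Your proof is correct and follows essentially the same route as the paper's: a case split between the $\mathit{genesis}$ assignment and the assignment at line~\ref{line:update_current_view} of \Cref{lst:reconfiguration_view_transition}, using the guard $v \in \mathit{history}$ at line~\ref{line:updated_discovery_exists} together with the append-only nature of $\mathit{history}$. The extra care you take in tracking $\mathit{reconfiguration}.\mathit{destination}$ across the wait-until is a fair elaboration of a step the paper leaves implicit, but it does not change the argument.
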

\begin{proof}
If $v = \mathit{genesis}$, the lemma follows from line~\ref{line:init_history} of \Cref{lst:storage_initialization} and the fact that no view is ever removed from the $\mathit{history}$ variable of a correct server.

Let $v \neq \mathit{genesis}$.
Since $\mathit{current\_view}.\mathit{view} = v$ at server $r$ at time $t$ and $v \neq \mathit{genesis}$, we know that line~\ref{line:update_current_view} of \Cref{lst:reconfiguration_view_transition} is executed by $r$ at time $t_v \leq t$.
Hence, $\mathit{reconfiguration}.\mathit{destination} = v$ at time $t_v$.
Moreover, $v \in \mathit{history}$ at time $t_v$ at server $r$ (by line~\ref{line:updated_discovery_exists} of \Cref{lst:reconfiguration_view_transition} and the fact that no view is ever removed from the $\mathit{history}$ variable of a correct server).
Therefore, the lemma holds in this case, as well.
\end{proof}

Next, we show that, if $v \in \mathit{history}$ at a correct server, then the server has previously obtained a view-path to $v$.

\begin{lemma} \label{lemma:history_path}
Let $v \in \mathit{history}$ at a correct server $r$ at time $t$, where $v \neq \mathit{genesis}$.
Then, there exists a view-path $\mathit{path} = [m_1, ..., m_k]$ such that (1) $\mathit{path}.\mathtt{destination}() = v$, and (2) $m_i \in \mathit{install\_messages}$ at server $r$ at time $t$, for every $i \in [1, k]$. 
\end{lemma}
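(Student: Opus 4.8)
The plan is to prove the statement by induction on time $t$, leveraging two structural facts about the storage module. First, the \emph{only} place where a view other than $\mathit{genesis}$ is ever inserted into a correct server's $\mathit{history}$ variable is line~\ref{line:update_history} of \Cref{lst:storage_processing}; the $\mathtt{start}$ initialization (line~\ref{line:init_history} of \Cref{lst:storage_initialization}) inserts only $\mathit{genesis}$. Second, a correct server never removes elements from $\mathit{history}$ or from $\mathit{install\_messages}$, so it suffices to produce the required view-path at the first moment $v$ enters $\mathit{history}$, and membership in $\mathit{install\_messages}$ is monotone in time. The base case is the $\mathtt{start}$ event, at which $\mathit{history} = \{\mathit{genesis}\}$, so the claim holds vacuously for $v \neq \mathit{genesis}$.

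For the inductive step, fix a correct server $r$, a time $t$, and $v \in \mathit{history}$ at $r$ at time $t$ with $v \neq \mathit{genesis}$. If $v$ already belonged to $\mathit{history}$ at time $t-1$, the induction hypothesis together with monotonicity of $\mathit{install\_messages}$ finishes the step immediately. Otherwise $v$ is inserted during the atomic step at tick $t$, which must be line~\ref{line:update_history} of \Cref{lst:storage_processing} firing on some $\mathtt{Install\_Message}$ $m = [\mathtt{INSTALL}, w, \mathit{set}, \omega]$ with $w = m.\mathtt{source}()$ and $v = m.\mathtt{destination}()$, where the guard of line~\ref{line:update_history_rule} forces $w \in \mathit{history}$ at the beginning of this step and where $m$ is added to $\mathit{install\_messages}$ (line~\ref{line:update_install_messages}) during the same step. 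Since $w$ was in $\mathit{history}$ before $v$ was, and $v$ was not, we get $w \neq v$ and $w \in \mathit{history}$ at $r$ at time $t-1$. I then split on $w$: if $w = \mathit{genesis}$, take $\mathit{path} = [m]$, which is a valid view-path (its unique element is an $\mathtt{Install\_Message}$ with source $\mathit{genesis}$), has $\mathit{path}.\mathtt{destination}() = m.\mathtt{destination}() = v$, and satisfies $m \in \mathit{install\_messages}$ at time $t$; if $w \neq \mathit{genesis}$, the induction hypothesis applied to $w$ at time $t-1$ yields a view-path $\mathit{path}' = [m_1, \dots, m_j]$ with $\mathit{path}'.\mathtt{destination}() = w$ and $m_i \in \mathit{install\_messages}$ at $r$ at time $t-1$ for all $i$, and I set $\mathit{path} = \mathit{path}' \,||\, m = [m_1, \dots, m_j, m]$.

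It then remains to check that $\mathit{path}$ satisfies the definition of a view-path (line~\ref{line:instance_view_path} of \Cref{lst:view_path}): $m_1.\mathtt{source}() = \mathit{genesis}$ and the internal links $m_{\ell}.\mathtt{source}() = m_{\ell-1}.\mathtt{destination}()$ for $\ell \in [2,j]$ come from $\mathit{path}'$, while the last link holds because $m.\mathtt{source}() = w = \mathit{path}'.\mathtt{destination}() = m_j.\mathtt{destination}()$; moreover $\mathit{path}.\mathtt{destination}() = m.\mathtt{destination}() = v$, and membership in $\mathit{install\_messages}$ at time $t$ holds for $m_1,\dots,m_j$ by monotonicity from time $t-1$ and for $m$ since line~\ref{line:update_install_messages} executes at tick $t$. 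The one point requiring care — the main obstacle, such as it is — is the timing bookkeeping: justifying that $w$ entered $\mathit{history}$ \emph{strictly before} tick $t$, so that the induction hypothesis is legitimately invoked at time $t-1$. This rests on the observation that each atomic execution step inserts at most one view into $\mathit{history}$ (line~\ref{line:update_history} is the unique insertion site and runs once per rule firing, adding the single view $m.\mathtt{destination}()$), together with $w \neq v$; everything else is a routine unfolding of the definitions in \Cref{lst:install_messages}, \Cref{lst:view_path}, and \Cref{lst:storage_processing}.
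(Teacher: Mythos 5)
Your proof is correct and follows essentially the same route as the paper's: an induction (the paper phrases it as an invariant preserved over time) on the evolution of $\mathit{history}$, using that line~\ref{line:update_history} of \Cref{lst:storage_processing} is the unique insertion site, that its guard forces the source view to already be in $\mathit{history}$, and that the received $\mathtt{INSTALL}$ message is added to $\mathit{install\_messages}$ in the same step. Your version is somewhat more explicit about the $w = \mathit{genesis}$ base of the concatenation and the timing bookkeeping, but there is no substantive difference.
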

\begin{proof}
In order to prove the lemma, we show that, if, at any point in time, $v^* \in \mathit{history}$ at a correct server $r$, then $r$ has previously obtained a view-path to $v^*$ (and the view-path consists of messages from the $\mathit{install\_messages}$ variable).
Note that no message is ever removed from the $\mathit{install\_messages}$ variable of a correct server.
We set the following invariant to hold at time $t^* \geq 0$: If $v^* \in \mathit{history}$ at time $t^*$, then $r$ has previously included a view-path to $v^*$ in $\mathit{install\_messages}$.
Observe that the invariant holds initially (i.e., at the time the server starts) since $\mathit{history} = \{\mathit{genesis}\}$ (line~\ref{line:init_history} of \Cref{lst:storage_initialization}) and $\mathit{install\_messages} = \emptyset$ (line~\ref{line:init_install_messages} of \Cref{lst:storage_initialization}).

We now prove that the invariant is preserved once the $\mathit{history}$ variable at server $r$ is updated with a view $v^{**}$.
The only place in which the $\mathit{history}$ variable can be modified is line~\ref{line:update_history} of \Cref{lst:storage_processing}.
Since $v \in \mathit{history}$ (by line~\ref{line:update_history_rule} of \Cref{lst:storage_processing}), we know that there exists a view-path to $v$, which is previously obtained and included in $\mathit{install\_messages}$ (because of the invariant hypothesis).
Hence, there is a view-path to $v^{**}$ since the received $\mathtt{INSTALL}$ message is included in $\mathit{install\_messages}$ (line~\ref{line:update_install_messages} of \Cref{lst:storage_processing}).
Thus, the lemma holds.
\end{proof}

\Cref{lemma:history_valid} proves that any view $v$ that belongs to the $\mathit{history}$ variable of a correct server is valid.

\begin{lemma} \label{lemma:history_valid}
Let $v \in \mathit{history}$ at a correct server.
Then, $v$ is a valid view.
\end{lemma}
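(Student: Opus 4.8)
The plan is to reduce the statement to \Cref{lemma:history_path} together with one elementary observation: every $\mathtt{INSTALL}$ message that a correct server stores in its $\mathit{install\_messages}$ set has been \emph{obtained} by that server (i.e., stored in its local memory), and is therefore an element of $\mathcal{I}^*(\infty)$. With that in hand the result is almost immediate.

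First I would dispose of the trivial case. If $v = \mathit{genesis}$, then $v$ is valid directly by the first clause of \Cref{definition:valid_view}, and there is nothing to prove. So I may assume $v \neq \mathit{genesis}$, and fix a correct server $r$ and a time $t$ such that $v \in \mathit{history}$ at $r$ at time $t$ (recall that $\mathit{history}$ never shrinks, so the precise time is irrelevant).

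Next, I would apply \Cref{lemma:history_path}: since $v \in \mathit{history}$ at the correct server $r$ at time $t$ and $v \neq \mathit{genesis}$, there exists a view-path $\mathit{path} = [m_1, \ldots, m_k]$ with $\mathit{path}.\mathtt{destination}() = v$ such that $m_i \in \mathit{install\_messages}$ at $r$ at time $t$ for every $i \in [1,k]$. The only line at which a correct server inserts an element into $\mathit{install\_messages}$ is line~\ref{line:update_install_messages} of \Cref{lst:storage_processing}, guarded by the rule at line~\ref{line:update_history_rule}, which fires only on $\mathtt{INSTALL}$ messages drawn from $\mathit{waiting\_messages}$; hence each $m_i$ has been obtained by $r$ and therefore lies in $\mathcal{I}^*(\infty)$. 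Consequently $\mathit{path} \subseteq \mathcal{I}^*(\infty)$, and since $\mathit{path}.\mathtt{destination}() = v$, the second clause of \Cref{definition:valid_view} is satisfied, so $v$ is valid.

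There is no real obstacle here: this lemma is a short bookkeeping step, and essentially all of its weight was already discharged by \Cref{lemma:history_path}. The only point that requires a moment of care is verifying that the $\mathtt{INSTALL}$ messages composing the view-path genuinely belong to $\mathcal{I}^*(\infty)$ rather than being artefacts invisible to the rest of the system — which follows, as noted, from the fact that $\mathit{install\_messages}$ at a correct server is only ever populated from messages that server has obtained.
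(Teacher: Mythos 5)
Your proof is correct and follows the same route as the paper's: the $\mathit{genesis}$ case via the first clause of \Cref{definition:valid_view}, and otherwise \Cref{lemma:history_path} yields a view-path to $v$ whose messages, having been obtained by the correct server, lie in $\mathcal{I}^*(\infty)$, satisfying the second clause. The paper states this in two lines; you merely make explicit the (correct) observation that $\mathit{install\_messages}$ is only populated with obtained messages.
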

\begin{proof}
If $v = \mathit{genesis}$, \Cref{definition:valid_view} is satisfied for $v$.
Otherwise, \Cref{definition:valid_view} holds for $v$ because of \Cref{lemma:history_path}.
\end{proof}

Finally, we show that correct servers ``transit'' only to valid views, i.e., if $\mathit{current\_view}.\mathit{view} = v$ at a correct server, then $v$ is a valid view.

\begin{lemma} \label{lemma:current_view_valid}
Let $\mathit{current\_view}.\mathit{view} = v$ at a correct server $r$ at some time $t \geq 0$.
Then, $v$ is a valid view.
\end{lemma}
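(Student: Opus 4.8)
The plan is to obtain this lemma as an immediate corollary of the two results just established, namely \Cref{lemma:current_view_view_path} and \Cref{lemma:history_valid}; all the genuine technical work — exhibiting a concrete view-path that witnesses validity — has already been carried out in \Cref{lemma:history_path}, so here it only remains to chain the implications together.

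Concretely, I would proceed as follows. If $v = \mathit{genesis}$, then $v$ is valid directly by the first clause of \Cref{definition:valid_view}. Otherwise, assume $\mathit{current\_view}.\mathit{view} = v$ at the correct server $r$ at time $t$. By \Cref{lemma:current_view_view_path}, we then have $v \in \mathit{history}$ at server $r$ at time $t$. Applying \Cref{lemma:history_valid} to this fact yields that $v$ is a valid view, which is exactly the claim. (Recall that $\mathit{current\_view}.\mathit{view}$ is initialized to $\bot$ at line~\ref{line:current_view_init} of \Cref{lst:reconfiguration_initialization}; the statement is understood to range over genuine views, i.e., the hypothesis $\mathit{current\_view}.\mathit{view} = v$ presupposes $v \neq \bot$, which is precisely the situation in which $r$ has executed either line~\ref{line:set_current_to_genesis} or line~\ref{line:update_current_view} of \Cref{lst:reconfiguration_initialization} / \Cref{lst:reconfiguration_view_transition}, and in both cases the assigned value already lies in $\mathit{history}$.)

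Because both \Cref{lemma:current_view_view_path} and \Cref{lemma:history_valid} are in hand, there is effectively no obstacle; the only point worth verifying is that the two lemmas are stated with compatible quantifiers — same correct server, same time $t$ — which they are. The conceptually load-bearing step sits one level below, in \Cref{lemma:history_path}: its inductive argument constructs, for every view $v$ that ever enters a correct server's $\mathit{history}$, a view-path consisting of $\mathtt{INSTALL}$ messages retained in $\mathit{install\_messages}$, and it is that construction which makes $\mathit{history}$ entries valid in the sense of \Cref{definition:valid_view}. Hence the proof here is short by design.
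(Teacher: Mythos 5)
Your proposal is correct and matches the paper's own proof, which likewise just chains \Cref{lemma:current_view_view_path} (to get $v \in \mathit{history}$ at $r$) with \Cref{lemma:history_valid} (to conclude validity). The extra remarks about the $\mathit{genesis}$ case and the $\bot$ initialization are harmless but unnecessary, since \Cref{lemma:history_valid} already covers $\mathit{genesis}$.
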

\begin{proof}
We know that $v \in \mathit{history}$ at server $r$ (by \Cref{lemma:current_view_view_path}).
Hence, $v$ is valid by \Cref{lemma:history_valid}.
\end{proof}

We now give a brief explanation of the upcoming steps in proving the properties from \Cref{lst:reconfiguration_properties_new}.

\para{Explanation}
We fix some time $t$ in an execution of \sysname.
At time $t$, we observe all $\mathtt{INSTALL}$ messages obtained by (correct or faulty) processes.
Then, we introduce a set of invariants concerned with all obtained $\mathtt{INSTALL}$ messages (this set of invariants is satisfied at time $0$).
The introduced invariants allow us to identify all possible ``new'' $\mathtt{INSTALL}$ messages that can be obtained after time $t$.
Finally, we prove that the invariants are preserved after such $\mathtt{INSTALL}$ messages are indeed obtained by processes after time $t$.

Fix a time $t \geq 0$ in an execution.
Let $\mathcal{I}_p(t)$ denote all $\mathtt{INSTALL}$ messages obtained by a (correct or faulty) process $p$ by time $t$; note that obtaining of an $\mathtt{INSTALL}$ message is irrevocable, i.e., $\mathcal{I}_p(t') \subseteq \mathcal{I}_p(t)$, for any time $t' \leq t$.
Moreover, $\mathcal{I}_p(0) = \emptyset$, for every process $p$.
Let $\mathcal{I}^*(t) = \bigcup\limits_{p \in \mathcal{C} \cup \mathcal{R}} \mathcal{I}_p(t)$.
Observe that $\mathcal{I}^*(t') \subseteq \mathcal{I}^*(t)$, for any time $t' < t$.

\noindent We say that a view $v$ is \emph{well-founded} at time $t$ if and only if:
\begin{compactitem}
    \item $v = \mathit{genesis}$, or
    
    \item there exists a view-path $\mathit{path}$ such that $\mathit{path} \subseteq \mathcal{I}^*(t)$ and $\mathit{path}.\mathtt{destination()} = v$.
\end{compactitem}
Now, we prove that, if a process has obtained an $\mathtt{INSTALL}$ message (see \Cref{lst:install_messages}) associated with a view $\mathit{source}$ by time $t$ and $\mathit{source}$ is not well-founded at time $t$, then $\mathit{source}$ is not a valid view.

\begin{lemma} \label{lemma:not_well_founded_not_valid}
Let a (correct or faulty) process obtain $m = [\mathtt{INSTALL}, \mathit{source}, \mathit{set}, \omega]$ at some time $t > 0$.
Moreover, let $\mathit{source}$ not be well-founded at time $t - 1$.
Then, $\mathit{source}$ is not a valid view.
\end{lemma}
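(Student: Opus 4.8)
\textit{Proof proposal.}
The plan is to argue by contradiction: assume $\mathit{source}$ is a valid view, and derive that $\mathit{source}$ is well-founded at time $t-1$, contradicting the hypothesis. A first cheap observation is that $\mathit{source} \neq \mathit{genesis}$, since $\mathit{genesis}$ is well-founded at every time by definition, so a view that is not well-founded at $t-1$ cannot be $\mathit{genesis}$. The conceptual crux of the whole argument is what validity of $\mathit{source}$ actually buys us: the failure-model assumption (at most $\lfloor \frac{n-1}{3}\rfloor$ faulty members) holds only for \emph{valid} views, so validity of $\mathit{source}$ guarantees that at least a quorum of members of $\mathit{source}$ are correct, and therefore all properties of the view generator instance $\mathit{vg}(\mathit{source})$ — in particular Decision Permission (\Cref{lst:view_generator_properties}) — are in force. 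Without this hypothesis the statement would be false: a view all of whose members are Byzantine can have its members jointly produce a quorum of (correctly self-signed) acknowledgement messages and thus manufacture a certificate $\omega$ passing $\mathtt{verify\_output}$, even though no correct server ever participated.

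Next I would unfold the certificate carried by the $\mathtt{INSTALL}$ message. By \Cref{lst:install_messages}, $m = [\mathtt{INSTALL}, \mathit{source}, \mathit{set}, \omega]$ comes with $\mathtt{verify\_output}(\mathit{set}, \mathit{vg}(\mathit{source}), \omega) = \top$; hence obtaining $m$ at time $t$ means $\mathit{set}$ is committed by $\mathit{vg}(\mathit{source})$ at time $t$. Applying Decision Permission to $\mathit{vg}(\mathit{source})$ yields that at least $\mathit{source}.\mathtt{plurality}() \geq 1$ correct members of $\mathit{source}$ have previously started $\mathit{vg}(\mathit{source})$. I would pin down "previously" exactly as in the proof of \Cref{lemma:committed_sigma}: the certificate $\omega$ is, at bottom, a quorum of acknowledgement messages of the underlying $\mathit{rla}$ instances, so it can only have been obtained at time $t$ if the contributing correct servers sent their acknowledgements by time $t$, and a correct server sends those only after starting the instance; hence some correct member $r$ of $\mathit{source}$ started $\mathit{vg}(\mathit{source})$ at some time $t_r \leq t-1$.

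Then I would trace $r$'s local state backwards. A correct server triggers $\langle \mathit{vg}(\mathit{source}).\mathrm{Start}\rangle$ only when $\mathit{current\_view}.\mathit{view} = \mathit{source}$ (the only places a view generator is started are in \Cref{lst:reconfiguration_initialization} and \Cref{lst:reconfiguration_view_transition}, always for the current view). So at time $t_r$ we have $\mathit{current\_view}.\mathit{view} = \mathit{source}$ at $r$, and by \Cref{lemma:current_view_view_path} $\mathit{source} \in \mathit{history}$ at $r$ at time $t_r$. Since $\mathit{source} \neq \mathit{genesis}$, \Cref{lemma:history_path} yields a view-path $\mathit{path} = [m_1, \dots, m_k]$ with $\mathit{path}.\mathtt{destination}() = \mathit{source}$ and every $m_i \in \mathit{install\_messages}$ at $r$ at time $t_r$. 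Because $r$ obtained each $m_i$ by time $t_r \leq t-1$, we get $\mathit{path} \subseteq \mathcal{I}_r(t-1) \subseteq \mathcal{I}^*(t-1)$, which is precisely the condition for $\mathit{source}$ to be well-founded at time $t-1$ — contradicting the hypothesis. Hence $\mathit{source}$ is not valid.

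The only delicate point I anticipate is the timing bookkeeping in the second step, namely turning "$\mathit{set}$ is committed at time $t$" into "the contributing correct servers acted by time $t-1$" on the discrete clock; this is purely a replay of the argument already used for \Cref{lemma:committed_sigma}. Everything else is a straightforward chaining of \Cref{lemma:current_view_view_path}, \Cref{lemma:history_path}, and Decision Permission, with the only genuinely load-bearing idea being the (easy-once-noticed) observation that the validity hypothesis is exactly what licenses invoking the guarantees of $\mathit{vg}(\mathit{source})$.
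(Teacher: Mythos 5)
Your proposal is correct and follows essentially the same route as the paper's proof: contradiction via validity $\Rightarrow$ quorum of correct members $\Rightarrow$ decision permission of $\mathit{vg}(\mathit{source})$ $\Rightarrow$ some correct server started $\mathit{vg}(\mathit{source})$ with $\mathit{current\_view}.\mathit{view} = \mathit{source}$ before time $t$, then chaining \Cref{lemma:current_view_view_path} and \Cref{lemma:history_path} to exhibit a view-path in $\mathcal{I}^*(t-1)$. The extra care you take with the discrete-clock bookkeeping and with noting that validity is exactly what licenses the failure-model assumption is consistent with (and slightly more explicit than) the paper's argument.
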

\begin{proof}
Note that $\mathit{source} \neq \mathit{genesis}$, since $\mathit{genesis}$ is well-founded at any time.
We prove the lemma by contradiction.
Hence, let $\mathit{source}$ be a valid view.

Since $\mathit{source}$ is a valid view, that means that (at least) a quorum of members of $\mathit{source}$ are correct (by the failure model).
Moreover, we know that $\mathit{set}$ is committed by $\mathit{vg}(\mathit{source})$ (by \Cref{lst:install_messages}).
Because of the fact that $\omega$ contains messages from at least $\mathit{source}.\mathtt{quorum()}$ distinct members of $\mathit{source}$ (see \Cref{lst:verify_output_vg}), there exists a correct server $r$ that sets its $\mathit{current\_view}.\mathit{view}$ variable to $\mathit{source}$ before time $t$ at line~\ref{line:update_current_view} of \Cref{lst:reconfiguration_view_transition} (since $\mathit{source} \neq \mathit{genesis}$ and $\mathit{vg}(\mathit{source})$ is started by $r$, which follows from the decision permission property of $\mathit{vg}(\mathit{source})$).
By \Cref{lemma:current_view_view_path}, $\mathit{source} \in \mathit{history}$ before time $t$.
Furthermore, \Cref{lemma:history_path} shows that $r$ has obtained a view-path to $\mathit{source}$ before time $t$.
Therefore, $\mathit{source}$ is well-founded at time $t - 1$, which implies contradiction.
\end{proof}

Finally, we introduce a few concepts that describe the state of the system at some fixed time $t$.
First, we denote by $V(t)$ the set of well-founded views at time $t$, i.e., $V(t) = \{v \,|\, v \text{ is well-founded at time } t\}$.

For every view $v \in V(t)$, we define a logical predicate $\alpha_t(v)$ such that $\alpha_t(v) = \top$ if and only of: 
\begin{compactitem}
    \item $v = \mathit{genesis}$, or
    
    \item $[\mathtt{INSTALL}, v' , \{v\}, \omega] \in \mathcal{I}^*(t)$, where $v' \in V(t)$; in other words, $m \in \mathcal{I}^*(t)$, where $m.\mathtt{source()} \in V(t)$, $m.\mathtt{destination()} = v$ and $m.\mathtt{tail()} = \emptyset$.
\end{compactitem}
Otherwise, $\alpha_t(v) = \bot$.
If $\alpha_t(v) = \top$, we say that $v$ is \emph{installable} at time $t$.

Moreover, for every view $v \in V(t)$, we define $\beta_t(v)$ such that $\mathit{set} \in \beta_t(v)$ if and only if $m = [\mathtt{INSTALL}, v', \mathit{set}' = \{v\} \cup \mathit{set}, \omega] \in \mathcal{I}^*(t)$, where $v' \in V(t)$, $m.\mathtt{destination()} = v$, and $\mathit{set} \neq \emptyset$;
in other words, $m \in \mathcal{I}^*(t)$, where $m.\mathtt{source()} \in V(t)$, $m.\mathtt{destination()} = v$ and $m.\mathtt{tail()} = \mathit{set} \neq \emptyset$.
Otherwise, $\beta_t(v) = \emptyset$.

Finally, for every view $v \in V(t)$, we define $\rho_t(v)$ such that $\mathit{set} \in \rho_t(v)$ if and only if $[\mathtt{INSTALL}, v, \mathit{set}, \omega] \in \mathcal{I}^*(t)$.

First, we prove that every view $v \in V(t)$ is valid.

\begin{lemma} \label{invariant:valid}
Let $v \in V(t)$.
Then, $v$ is a valid view.
\end{lemma}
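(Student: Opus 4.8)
The plan is to observe that ``well-founded at time $t$'' is nothing more than the time-indexed restriction of ``valid'' (\Cref{definition:valid_view}), so the lemma reduces to a single monotonicity fact about the family $\mathcal{I}^*(\cdot)$. First I would dispose of the case $v = \mathit{genesis}$: this view satisfies the first clause of \Cref{definition:valid_view} outright, so nothing is needed. For $v \neq \mathit{genesis}$, membership $v \in V(t)$ means precisely that $v$ is well-founded at time $t$, i.e., there is a view-path $\mathit{path}$ with $\mathit{path} \subseteq \mathcal{I}^*(t)$ and $\mathit{path}.\mathtt{destination}() = v$.

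The only real content is then the inclusion $\mathcal{I}^*(t) \subseteq \mathcal{I}^*(\infty)$. This holds because $\mathcal{I}^*(t) = \bigcup_{p} \mathcal{I}_p(t)$ collects exactly the $\mathtt{INSTALL}$ messages obtained by some process by time $t$, while $\mathcal{I}^*(\infty)$ is by definition the set of $\mathtt{INSTALL}$ messages ever obtained; since attainment of an $\mathtt{INSTALL}$ message is irrevocable, anything in $\mathcal{I}^*(t)$ lies in $\mathcal{I}^*(\infty)$. Feeding this into the above yields $\mathit{path} \subseteq \mathcal{I}^*(\infty)$, which is exactly the witness required by the second clause of \Cref{definition:valid_view}; hence $v$ is valid.

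I do not anticipate any genuine obstacle here: the argument is a definitional unfolding combined with the monotonicity of obtained-message sets already recorded in the surrounding text ($\mathcal{I}_p(t') \subseteq \mathcal{I}_p(t)$ for $t' \le t$). The one point worth a sentence of care is that a view-path is a purely structural object --- its well-formedness constraints (the chaining of $\mathtt{source}()$ and $\mathtt{destination}()$ in \Cref{lst:view_path}) together with the embedded $\mathtt{verify\_output}$ certificate check carried by each $\mathtt{Install\_Message}$ (\Cref{lst:verify_output_vg}) make no reference to $t$ --- so the very same $\mathit{path}$ that witnesses well-foundedness at $t$ also witnesses validity, with no re-derivation needed.
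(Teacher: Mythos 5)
Your proof is correct and follows essentially the same route as the paper, which simply states that the lemma follows from the definition of well-founded views and \Cref{definition:valid_view}; you merely spell out the monotonicity $\mathcal{I}^*(t) \subseteq \mathcal{I}^*(\infty)$ that the paper leaves implicit. No gap.
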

\begin{proof}
Follows from the definition of well-founded views and \Cref{definition:valid_view}.
\end{proof}

We now introduce the first invariant.

\begin{invariant} \label{invariant:rhos}
Let $v \in V(t)$.
If $\mathit{set} \in \rho_t(v)$, then (1) $\mathit{set} \neq \emptyset$, (2) $\mathit{set}$ is a sequence, and (3) $\mathit{set}.\mathtt{follows}(v) = \top$.
\end{invariant}

Since $\mathcal{I}^*(0) = \emptyset$, we conclude that $V(0) = \{\mathit{genesis}\}$ and \Cref{invariant:rhos} holds at time $0$.
Now, we show that any information obtained by a process by time $t$ is captured in $V(t)$, $\rho_t$ and $\beta_t$.

\begin{lemma} \label{lemma:knows_view_system}
Let a (correct or faulty) process $p$ obtain a view-path $\mathit{path} = [m_1, m_2, ..., m_k]$ by time $t$, where $k \geq 1$.
Then, the following holds:
\begin{compactitem}
    \item $\mathit{path}.\mathtt{views}() \subseteq V(t)$;
    
    \item Let $\mathit{destination} = \mathit{path}.\mathtt{destination}()$.
    If $\mathit{path}.\mathtt{tail}() \neq \emptyset$, then $\mathit{path}.\mathtt{tail}() \in \beta_t(\mathit{destination})$.
\end{compactitem}
\end{lemma}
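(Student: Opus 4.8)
The plan is to prove \Cref{lemma:knows_view_system} by directly unwinding the definitions of view-path (\Cref{lst:view_path}), $\mathtt{INSTALL}$ message (\Cref{lst:install_messages}), well-founded view, and the predicate $\beta_t$, relying on just two elementary observations. First, any prefix $[m_1,\dots,m_i]$ of a view-path is itself a view-path: it still satisfies the instance conditions of \Cref{lst:view_path}, namely $m_1.\mathtt{source}() = \mathit{genesis}$ and $m_j.\mathtt{source}() = m_{j-1}.\mathtt{destination}()$ for $2 \le j \le i$. Second, ``$p$ obtains the view-path $\mathit{path} = [m_1,\dots,m_k]$ by time $t$'' means $p$ has obtained each constituent message, so $m_i \in \mathcal{I}_p(t) \subseteq \mathcal{I}^*(t)$ for every $i \in [1,k]$. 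No result beyond the definitions in the excerpt is required.

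For the first bullet, I would fix $i \in [1,k]$ and consider the prefix $\mathit{path}_i = [m_1,\dots,m_i]$. By the first observation it is a view-path, by the second $\mathit{path}_i \subseteq \mathcal{I}^*(t)$, and $\mathit{path}_i.\mathtt{destination}() = m_i.\mathtt{destination}()$; hence $m_i.\mathtt{destination}()$ is well-founded at time $t$, i.e.\ $m_i.\mathtt{destination}() \in V(t)$. Since $\mathit{genesis} \in V(t)$ unconditionally and $\mathit{path}.\mathtt{views}() = \{\mathit{genesis}\} \cup \{\, m_i.\mathtt{destination}() \mid 1 \le i \le k \,\}$ by line~\ref{line:views_path_views} of \Cref{lst:view_path}, this gives $\mathit{path}.\mathtt{views}() \subseteq V(t)$.

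For the second bullet, write $m_k = [\mathtt{INSTALL}, s, S, \omega]$, so that $s = m_k.\mathtt{source}()$, $\mathit{destination} = m_k.\mathtt{destination}() = \mathtt{min\_cardinality}(S) \in S$, and $\mathit{path}.\mathtt{tail}() = m_k.\mathtt{tail}() = S \setminus \{\mathit{destination}\}$. The source $s$ equals $\mathit{genesis}$ when $k = 1$ and $m_{k-1}.\mathtt{destination}()$ when $k \ge 2$; in either case $s \in \mathit{path}.\mathtt{views}() \subseteq V(t)$ by the first bullet. Now assume $\mathit{path}.\mathtt{tail}() \neq \emptyset$. Then $m_k \in \mathcal{I}^*(t)$ is an $\mathtt{INSTALL}$ message with $m_k.\mathtt{source}() = s \in V(t)$, $m_k.\mathtt{destination}() = \mathit{destination}$, and $m_k.\mathtt{tail}() = \mathit{path}.\mathtt{tail}() \neq \emptyset$ — exactly the four conditions in the ``in other words'' form of the definition of $\beta_t(\mathit{destination})$ (the set carried by $m_k$ splits as $S = \{\mathit{destination}\} \cup \mathit{path}.\mathtt{tail}()$ with $\mathit{destination}$ outside the tail, which is the shape the definition expects, precisely because $\mathtt{destination}()$ is the minimum-cardinality element of $S$ and $\mathtt{tail}()$ is $S$ with that element removed). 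Hence $\mathit{path}.\mathtt{tail}() \in \beta_t(\mathit{destination})$.

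I expect this to be essentially bookkeeping, with the only mild points of care being the two structural facts just used: that the chaining condition survives truncation to a prefix, and that the message set $S$ decomposes as destination-plus-tail as $\beta_t$ requires. I do not anticipate any genuine obstacle here — unlike \Cref{lemma:not_well_founded_not_valid}, this lemma never invokes the failure model or properties of $\mathit{vg}(\cdot)$; it merely tracks which $\mathtt{INSTALL}$ messages are in circulation by time $t$, so the entire argument stays at the level of the definitions of $V(t)$, $\rho_t$, $\beta_t$, and view-paths.
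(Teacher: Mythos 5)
Your proof is correct and follows essentially the same route as the paper's: observe that every $m_i$ lies in $\mathcal{I}^*(t)$, conclude that each prefix witnesses well-foundedness of its destination (with $\mathit{genesis} \in V(t)$ unconditionally), and then read off the membership $\mathit{path}.\mathtt{tail}() \in \beta_t(\mathit{destination})$ directly from the definition of $\beta_t$ applied to $m_k$. The paper's own proof is just a terser version of the same bookkeeping.
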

\begin{proof}
Since $p$ obtains $\mathit{path}$ by time $t$, we conclude that $m_i \in \mathcal{I}^*(t)$, for every $i \in [1, k]$.
We know that $\mathit{genesis} \in V(t)$, by the definition of well-founded views.
All other views that belong to $\mathit{path}.\mathtt{views}()$, therefore, are included in $V(t)$ (including $\mathit{destination}$ and $m_k.\mathtt{source}()$).
Hence, if $\mathit{path}.\mathtt{tail}() \neq \emptyset$, then $\mathit{path}.\mathtt{tail}() \in \beta_t(\mathit{destination})$.
\end{proof}

Next, we show that any two sets that belong to $\rho_t(v)$, where $v \in V(t)$, are ``comparable''.

\begin{lemma} \label{lemma:weak_accuracy}
Let $v \in V(t)$ and let $\mathit{set}_1, \mathit{set}_2 \in \rho_t(v)$.
Then, either $\mathit{set}_1 \subseteq \mathit{set}_2$ or $\mathit{set}_1 \supset \mathit{set}_2$.
\end{lemma}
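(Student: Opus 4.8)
The plan is to reduce the claim directly to the \emph{Comparability} property of the view generator primitive $\mathit{vg}(v)$. First I would unwind the definition of $\rho_t(v)$: if $\mathit{set}_1 \in \rho_t(v)$ then, by definition, some $\mathtt{INSTALL}$ message $[\mathtt{INSTALL}, v, \mathit{set}_1, \omega_1]$ belongs to $\mathcal{I}^*(t)$. By the definition of an $\mathtt{INSTALL}$ message (\Cref{lst:install_messages}), such a message is well-formed only if $\mathtt{verify\_output}(\mathit{set}_1, \mathit{vg}(v), \omega_1) = \top$, and by the definition of ``committed by $\mathit{vg}(v)$'' given just before \Cref{lst:view_generator_properties}, this means exactly that $\mathit{set}_1$ is committed by $\mathit{vg}(v)$. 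Applying the same reasoning to $\mathit{set}_2$ shows that $\mathit{set}_2$ is likewise committed by $\mathit{vg}(v)$.

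Next I would discharge the failure-model hypothesis under which the view generator's guarantees are proven. The Comparability property of $\mathit{vg}(v)$ is stated in \Cref{lst:view_generator_properties}, but its proof (ultimately relying on RLA comparability) uses that at least a quorum of members of $v$ are correct. Since $v \in V(t)$, \Cref{invariant:valid} gives that $v$ is a valid view, and then the failure model of \Cref{appx:model} guarantees that at most $\lfloor \frac{n-1}{3} \rfloor$ of the $n = |v.\mathtt{members}()|$ members of $v$ are faulty, i.e., at least $v.\mathtt{quorum}()$ of them are correct. Hence the view generator guarantees for the instance $\mathit{vg}(v)$ are in force.

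Finally, applying Comparability of $\mathit{vg}(v)$ to the two committed sets $\mathit{set}_1$ and $\mathit{set}_2$ yields precisely $\mathit{set}_1 \subseteq \mathit{set}_2$ or $\mathit{set}_1 \supset \mathit{set}_2$, which is the statement of the lemma. The only step needing any care is the second paragraph --- confirming that $v$ is valid so that $\mathit{vg}(v)$'s assumptions are met --- but this is immediate from \Cref{invariant:valid}; everything else is unwinding definitions. I therefore do not expect a genuine obstacle here: this lemma is essentially the bookkeeping step that lifts view-generator-level comparability up to the level of $\mathtt{INSTALL}$ messages sharing a common source view.
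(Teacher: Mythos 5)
Your proposal is correct and follows essentially the same route as the paper's proof: reduce membership in $\rho_t(v)$ to the sets being committed by $\mathit{vg}(v)$ (via the well-formedness condition on $\mathtt{INSTALL}$ messages in \Cref{lst:install_messages}), invoke \Cref{invariant:valid} to conclude $v$ is valid so that the view generator's failure-model assumption holds, and then apply the comparability property of $\mathit{vg}(v)$. The paper states this more tersely, but the content is identical.
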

\begin{proof}
Since $v \in V(t)$, we know that $v$ is a valid view (by \Cref{invariant:valid}).
Moreover, we know that $\mathit{set}_1$ and $\mathit{set}_2$ are committed by $\mathit{vg}(v)$ (by \Cref{lst:install_messages}).
Hence, the lemma follows from the comparability property of $\mathit{vg}(v)$.
\end{proof}

We now define an invariant that explains how installable views are ``instantiated''.

\begin{invariant} [Creation of Installable Views] \label{invariant:creation_installable}
For every $v \in V(t)$ such that (1) $\alpha_t(v) = \top$, and (2) $v \neq \mathit{genesis}$, there exists a view $v' \in V(t)$ such that:
\begin{compactitem}
    \item $\alpha_t(v') = \top$, and
    \item $\mathit{seq} \in \rho_t(v')$, where $\mathit{seq} = v_1 \to ... \to v_x \to v, (x \geq 0)$,\footnote{Recall that all sets that belong to $\rho_t(v)$, for any view $v \in V(t)$, are sequences (by \Cref{invariant:rhos}).} and
    \item for every $v_i \in \{v_1, ..., v_x\}$, $v_i \in V(t)$ and $\alpha_t(v_i) = \bot$, and
    \item for every $v_i \in \{v_1, ..., v_x\}$, $\mathit{seq}_i \in \rho_t(v_i)$, where $\mathit{seq}_i = v_{i + 1} \to ... \to v_x \to v$.
\end{compactitem}
\end{invariant}

Note that \Cref{invariant:creation_installable} is satisfied at time $0$ since $V(0) = \{\mathit{genesis}\}$.
Next, we define what it means for a view to \emph{lead} to another view.

\begin{definition} [Leading to a View] \label{definition:leads}
Consider views $v, v' \in V(t)$ such that:
\begin{compactitem}
    \item $\alpha_t(v) = \alpha_t(v') = \top$, and
    \item $\mathit{seq} \in \rho_t(v')$, where $\mathit{seq} = v_1 \to ... \to v_x \to v, (x \geq 0)$, and
    \item for every $v_i \in \{v_1, ..., v_x\}$, $v_i \in V(t)$ and $\alpha_t(v_i) = \bot$, and
    \item for every $v_i \in \{v_1, ..., v_x\}$, $\mathit{seq}_i \in \rho_t(v_i)$, where $\mathit{seq}_i = v_{i + 1} \to ... \to v_x \to v$.
\end{compactitem}
We say that $v'$ \emph{leads to} $v$ at time $t$.
\end{definition}

We now show that $v' \subset v$ if $v'$ leads to $v$.

\begin{lemma} \label{lemma:leads_to_bigger}
Let $v' \in V(t)$ lead to $v \in V(t)$ at time $t$.
Then, $v' \subset v$.
\end{lemma}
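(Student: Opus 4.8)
The plan is to read off the conclusion almost directly from \Cref{definition:leads} together with \Cref{invariant:rhos}. By \Cref{definition:leads}, the hypothesis that $v'$ leads to $v$ at time $t$ provides, in particular, a set $\mathit{seq} \in \rho_t(v')$ of the form $\mathit{seq} = v_1 \to \dots \to v_x \to v$ with $x \geq 0$. The only case distinction to make here is whether $x = 0$ (so $\mathit{seq} = \{v\}$) or $x \geq 1$ (so $\mathit{seq} = \{v_1, \dots, v_x, v\}$); in both cases the view $v$ itself belongs to $\mathit{seq}$, so no separate argument is required.

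Next I would invoke \Cref{invariant:rhos}. Since $v' \in V(t)$ is part of the lemma hypothesis and $\mathit{seq} \in \rho_t(v')$, the invariant tells us that $\mathit{seq}$ is a sequence with $\mathit{seq}.\mathtt{follows}(v') = \top$; by the definition of $\mathtt{follows}$ this means every view $w \in \mathit{seq}$ satisfies $w \supset v'$. Applying this to $w = v$, which lies in $\mathit{seq}$ as noted above, gives $v \supset v'$, i.e.\ $v' \subset v$, which is exactly the claim.

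There is essentially no obstacle here: all the substantive work has been done in establishing \Cref{invariant:rhos} (that every set in $\rho_t(v')$ is a sequence following $v'$), and the present lemma is a one-line corollary of that plus the structural observation that the terminal view $v$ of the chain is a member of the chain. The only point requiring minor care is the degenerate case $x = 0$, but even there $v \in \mathit{seq}$ holds by the very definition $\mathit{seq} = \{v\}$, so the argument goes through uniformly.
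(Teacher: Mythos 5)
Your argument is correct and is essentially identical to the paper's proof: both extract the sequence $\mathit{seq} = \dots \to v$ from \Cref{definition:leads}, apply \Cref{invariant:rhos} to conclude $\mathit{seq}.\mathtt{follows}(v') = \top$, and read off $v' \subset v$ since $v \in \mathit{seq}$. The extra remark about the degenerate case $x = 0$ is harmless but unnecessary, as the paper treats it uniformly.
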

\begin{proof}
By \Cref{definition:leads}, $\mathit{seq} \in \rho_t(v')$, where $\mathit{seq} = ... \to v$.
Since $\mathit{seq}.\mathtt{follows}(v') = \top$ (by \Cref{invariant:rhos}), the lemma holds.
\end{proof}

Now, we introduce another invariant that we assume holds at time $t$: $\mathit{genesis}$ is a subset of every other installable view.

\begin{invariant} \label{invariant:initial_view}
For every view $v \in V(t)$ such that (1) $\alpha_t(v) = \top$, and (2) $v \neq \mathit{genesis}$, $\mathit{genesis} \subset v$.
\end{invariant}

\Cref{invariant:initial_view} holds at time $0$ since $V(0) = \{\mathit{genesis}\}$.
Now, we show that, for every installable view different from $\mathit{genesis}$, there exists a view that leads to it at time $t$ (follows directly from \Cref{invariant:creation_installable}).

\begin{lemma} \label{lemma:exists_leads_to}
Let $v' \in V(t)$ such that (1) $\alpha_t(v') = \top$, and (2) $v' \neq \mathit{genesis}$.
Then, there exists a view $v \in V(t)$ such that $v$ leads to $v'$ at time $t$.
\end{lemma}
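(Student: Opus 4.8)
The plan is to derive this lemma as an immediate corollary of \Cref{invariant:creation_installable} (Creation of Installable Views), which is assumed to hold at time $t$. The only real content is a bookkeeping match-up of variable names between the invariant and \Cref{definition:leads}, so I would keep the argument terse and purely deductive.

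First I would check that the hypotheses of the lemma — namely $v' \in V(t)$, $\alpha_t(v') = \top$, and $v' \neq \mathit{genesis}$ — are precisely the preconditions under which \Cref{invariant:creation_installable} applies, taking the invariant's universally quantified view to be $v'$. Instantiating the invariant then produces some view $w \in V(t)$ together with a sequence $\mathit{seq} \in \rho_t(w)$ of the form $v_1 \to \dots \to v_x \to v'$ with $x \geq 0$, such that $\alpha_t(w) = \top$, each intermediate $v_i$ lies in $V(t)$ with $\alpha_t(v_i) = \bot$, and each $v_i$ carries a sequence $\mathit{seq}_i \in \rho_t(v_i)$ equal to the suffix $v_{i+1} \to \dots \to v_x \to v'$.

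Next I would observe that these four conditions are literally the four bullets of \Cref{definition:leads} with the definition's pair $(v', v)$ read as $(w, v')$: the ``leading'' view is $w$, the ``led-to'' view is $v'$, and the requirement $\alpha_t(w) = \alpha_t(v') = \top$ holds because $w$ is produced as installable and $v'$ is installable by hypothesis. Hence $w$ leads to $v'$ at time $t$, and taking $v := w$ establishes the existence claim. I do not expect any genuine obstacle; the only points to be careful about are the $v \leftrightarrow v'$ name swap between the invariant (where the freshly instantiated view is called $v'$) and the lemma (where the desired witness is called $v$), and the $x = 0$ boundary case — in which $\mathit{seq}$ is just the singleton $\{v'\}$ and there are no intermediate views to constrain — which is permitted since \Cref{definition:leads} explicitly allows $x \geq 0$.
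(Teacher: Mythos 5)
Your proposal is correct and matches the paper's own proof exactly: both instantiate \Cref{invariant:creation_installable} at $v'$ and observe that the resulting witness satisfies the four bullets of \Cref{definition:leads}. The care you take with the $v \leftrightarrow v'$ renaming and the $x=0$ case is sound but does not change the argument.
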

\begin{proof}
According to \Cref{invariant:creation_installable}, there exists a view $v \in V(t)$ such that:
\begin{compactitem}
    \item $\alpha_t(v) = \top$, and
    \item $\mathit{seq} \in \rho_t(v)$, where $\mathit{seq} = v_1 \to ... \to v_x \to v', (x \geq 0)$, and
    \item for every $v_i \in \{v_1, ..., v_x\}$, $v_i \in V(t)$ and $\alpha_t(v_i) = \bot$, and
    \item for every $v_i \in \{v_1, ..., v_x\}$, $\mathit{seq}_i \in \rho_t(v_i)$, where $\mathit{seq}_i = v_{i + 1} \to ... \to v_x \to v'$.
\end{compactitem}
By \Cref{definition:leads}, $v$ leads to $v'$ at time $t$, which concludes the proof.
\end{proof}

We now prove that no view leads to $\mathit{genesis}$ at time $t$.

\begin{lemma} \label{lemma:genesis_no_lead}
No view $v \in V(t)$ leads to $\mathit{genesis}$ at time $t$.
\end{lemma}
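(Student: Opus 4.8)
The plan is to argue by contradiction: I would assume that some $v \in V(t)$ leads to $\mathit{genesis}$ at time $t$ and show that this forces $\mathit{genesis}$ to strictly contain itself, which is absurd.

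First I would unpack \Cref{definition:leads}. By that definition, the mere assertion that $v$ leads to $\mathit{genesis}$ at time $t$ already packages all of its defining conditions; in particular $v \in V(t)$ with $\alpha_t(v) = \top$, and there is a sequence $\mathit{seq} \in \rho_t(v)$ whose last view is $\mathit{genesis}$ (so $\mathit{genesis} \in \mathit{seq}$). By \Cref{invariant:rhos} we have $\mathit{seq}.\mathtt{follows}(v) = \top$, so every view of $\mathit{seq}$ strictly contains $v$; hence $v \subset \mathit{genesis}$. Equivalently, this strict containment is exactly \Cref{lemma:leads_to_bigger} applied to "$v$ leads to $\mathit{genesis}$" (using that $\mathit{genesis} \in V(t)$, since $\mathit{genesis}$ is always well-founded), so I could also just invoke that lemma directly.

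Next I would split on whether $v = \mathit{genesis}$. If $v = \mathit{genesis}$, then $v \subset \mathit{genesis}$ already reads $\mathit{genesis} \subset \mathit{genesis}$, a contradiction. Otherwise $v \neq \mathit{genesis}$, and since we also established $\alpha_t(v) = \top$, \Cref{invariant:initial_view} yields $\mathit{genesis} \subset v$; combining this with $v \subset \mathit{genesis}$ again gives $\mathit{genesis} \subset \mathit{genesis}$, contradicting antisymmetry of (strict) set inclusion. In both cases we reach a contradiction, so no view of $V(t)$ leads to $\mathit{genesis}$ at time $t$.

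The argument is short because the two facts doing the real work — "leading implies strict containment" and "$\mathit{genesis}$ is the $\subseteq$-minimal installable view" — are already available as \Cref{lemma:leads_to_bigger} and \Cref{invariant:initial_view}. The only points requiring care are (i) correctly matching the roles in \Cref{definition:leads}/\Cref{lemma:leads_to_bigger} so that the \emph{leading} view is $v$ and the \emph{led-to} view is $\mathit{genesis}$, and (ii) treating the degenerate case $v = \mathit{genesis}$ separately, since \Cref{invariant:initial_view} explicitly excludes $v = \mathit{genesis}$ and so cannot be used to close that case.
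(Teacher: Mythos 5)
Your proof is correct and follows essentially the same route as the paper's: derive $v \subset \mathit{genesis}$ via \Cref{lemma:leads_to_bigger} and contradict \Cref{invariant:initial_view}. Your explicit case split on $v = \mathit{genesis}$ versus $v \neq \mathit{genesis}$ merely spells out the detail the paper leaves implicit in the phrase ``this contradicts \Cref{invariant:initial_view}.''
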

\begin{proof}
We prove the lemma by contradiction.
Let there exist a view $v \in V(t)$ such that $v$ leads to $\mathit{genesis}$ at time $t$.
By \Cref{lemma:leads_to_bigger}, $v \subset \mathit{genesis}$.
However, this contradicts \Cref{invariant:initial_view}, which concludes the proof.
\end{proof}

A view can lead to (at most) one other view.
The following lemma proves this statement.

\begin{lemma} \label{lemma:leads_to_one}
Let $v \in V(t)$ such that $\alpha_t(v) = \top$.
If $v$ leads to $v' \in V(t)$ at time $t$ and $v$ leads to $v'' \in V(t)$ at time $t$, then $v' = v''$.
\end{lemma}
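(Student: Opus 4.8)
The plan is to reduce the claim to the comparability of decision sets of the view generator, which is already available as \Cref{lemma:weak_accuracy}. First I would unfold \Cref{definition:leads} twice. Since $v$ leads to $v'$ at time $t$, there is a sequence $\mathit{seq}' = v_1 \to \dots \to v_x \to v'$ ($x \geq 0$) with $\mathit{seq}' \in \rho_t(v)$ whose ``intermediate'' views $v_1, \dots, v_x$ all satisfy $\alpha_t(v_i) = \bot$, while $\alpha_t(v') = \top$ (the latter being part of the definition of ``leads to''). Likewise, since $v$ leads to $v''$ at time $t$, there is a sequence $\mathit{seq}'' = w_1 \to \dots \to w_y \to v''$ ($y \geq 0$) with $\mathit{seq}'' \in \rho_t(v)$, $\alpha_t(w_j) = \bot$ for every $j \in [1,y]$, and $\alpha_t(v'') = \top$.

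Next I would invoke \Cref{lemma:weak_accuracy}: because $\mathit{seq}'$ and $\mathit{seq}''$ both belong to $\rho_t(v)$, they are comparable as sets, so (by symmetry, without loss of generality) $\mathit{seq}' \subseteq \mathit{seq}''$. Viewing $\mathit{seq}''$ as the set of views $\{w_1, \dots, w_y, v''\}$, this inclusion yields $v' \in \{w_1, \dots, w_y, v''\}$. Since $\alpha_t(w_j) = \bot$ for all $j$ whereas $\alpha_t(v') = \top$, the view $v'$ cannot equal any $w_j$, hence $v' = v''$, which is the desired conclusion. The case $\mathit{seq}'' \subseteq \mathit{seq}'$ is handled identically with the roles of $v'$ and $v''$ interchanged.

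I do not expect a genuine obstacle here: the essential content—that two chains committed by the same instance $\mathit{vg}(v)$ lie on a common chain—is precisely the comparability property packaged in \Cref{lemma:weak_accuracy}, and once \Cref{definition:leads} is unfolded the argument is a one-line membership comparison. The only point requiring a little care is the bookkeeping of the ``$\alpha_t = \bot$'' labelling along the two sequences; the degenerate cases $x = 0$ or $y = 0$, where a sequence collapses to the single installable view at its end, are automatically covered since the relevant index sets are then empty.
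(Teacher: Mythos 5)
Your proof is correct and follows essentially the same route as the paper's: unfold \Cref{definition:leads} for both targets, apply the comparability of $\rho_t(v)$ from \Cref{lemma:weak_accuracy} to get $\mathit{seq}' \subseteq \mathit{seq}''$ (WLOG), and then use the fact that all intermediate views of $\mathit{seq}''$ are non-installable while $\alpha_t(v') = \top$ to force $v' = v''$. The paper phrases the last step via $v' \subseteq v''$ and the sequence ordering rather than via direct set membership, but the two formulations are interchangeable here.
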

\begin{proof}
According to \Cref{definition:leads}, $\mathit{seq}' \in \rho_t(v)$, where $\mathit{seq}'$ is a sequence (by \Cref{invariant:rhos}) and $\mathit{seq}' = ... \to v'$.
Similarly, $\mathit{seq}'' \in \rho_t(v)$, where $\mathit{seq}''$ is a sequence (by \Cref{invariant:rhos}) and $\mathit{seq}'' = ... \to v''$.
According to \Cref{lemma:weak_accuracy}, either $\mathit{seq}' \subseteq \mathit{seq}''$ or $\mathit{seq}' \supset \mathit{seq}''$.
Let us analyze two possible cases:
\begin{compactitem}
    \item $\mathit{seq}' = \mathit{seq}''$: In this case, we have that $v' = v''$ and the lemma holds.
    \item $\mathit{seq}' \neq \mathit{seq}''$: Without loss of generality, let $\mathit{seq}' \subset \mathit{seq}''$.
    Therefore, $v' \in \mathit{seq}''$ and $v' \subseteq v''$.
    If $v' \subset v''$, \Cref{definition:leads} is not satisfied for $v''$ (since $\alpha_t(v') = \top$).
    Hence, $v' = v''$.
\end{compactitem}
The lemma holds.
\end{proof}

Now, we prove that there exists exactly one view that does not lead to any view.
Recall that we assume that only finitely many valid views exist (see \Cref{appx:reconfiguration_module}, paragraph ``Failure model \& assumptions'').
Since all views that belong to $V(t)$ are valid (by \Cref{invariant:valid}), we conclude that $|V(t)| < \infty$.

\begin{lemma} \label{lemma:n-1}
There exists \emph{exactly} one view $v \in V(t)$ such that (1) $\alpha_t(v) = \top$, and (2) $v$ does not lead to any view at time $t$.
\end{lemma}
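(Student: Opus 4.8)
The plan is to organize the installable views at time $t$ into a finite directed graph in which the \emph{leads to} relation supplies the edges, and then extract the result from an edge count performed in two different ways.

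First I would set $I = \{v \in V(t) \mid \alpha_t(v) = \top\}$, the set of installable views at time $t$. Since every view of $V(t)$ is valid (Lemma~\ref{invariant:valid}) and only finitely many valid views exist, $I$ is finite; moreover $\mathit{genesis} \in I$, because $\mathit{genesis}$ is well-founded at every time and $\alpha_t(\mathit{genesis}) = \top$ by definition of $\alpha_t$. On $I$ I would put a directed edge $v \to w$ precisely when $v$ leads to $w$ at time $t$. By Definition~\ref{definition:leads} a leads-to pair satisfies $\alpha_t(v) = \alpha_t(w) = \top$, so both endpoints lie in $I$ and this is genuinely a graph on $I$, with no self-loops (a self-loop would contradict Lemma~\ref{lemma:leads_to_bigger}).

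Next I would record the degree constraints and count. By Lemma~\ref{lemma:leads_to_one} every vertex has out-degree at most $1$; calling a vertex with out-degree $0$ a sink and writing $s$ for the number of sinks, every non-sink has out-degree exactly $1$, so the number $E$ of edges equals $|I| - s$. On the other hand, by Lemma~\ref{lemma:genesis_no_lead} the vertex $\mathit{genesis}$ has in-degree $0$, while by Lemma~\ref{lemma:exists_leads_to} every other vertex of $I$ has in-degree at least $1$, hence $E \ge |I| - 1$. Comparing the two expressions gives $|I| - s \ge |I| - 1$, i.e.\ $s \le 1$.

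It remains to produce at least one sink. By Lemma~\ref{lemma:leads_to_bigger}, $v \to w$ implies $v \subsetneq w$, so every directed path in the graph is strictly increasing for set inclusion and therefore never repeats a vertex; since $I$ is finite and nonempty, following outgoing edges from $\mathit{genesis}$ must stop after finitely many steps at a vertex with no outgoing edge — a sink — and every vertex visited along the way is installable (it is $\mathit{genesis}$ or the target of a leads-to edge). Hence $s \ge 1$, and together with $s \le 1$ this yields exactly one installable view of $V(t)$ that leads to no view, which is exactly the claim. The step I expect to carry the real weight is the double edge count: one must notice that the out-degree bound of Lemma~\ref{lemma:leads_to_one} and the in-degree lower bound assembled from Lemmas~\ref{lemma:exists_leads_to} and~\ref{lemma:genesis_no_lead} pin the sink count between $1$ and $1$, and one must be careful to keep all edges inside $I$ and to account correctly for $\mathit{genesis}$'s exceptional in-degree; the acyclicity argument for existence is then routine.
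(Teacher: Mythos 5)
Your proof is correct and follows essentially the same route as the paper: the upper bound $s\le 1$ is the paper's counting argument (Lemmas~\ref{lemma:exists_leads_to}, \ref{lemma:genesis_no_lead}, \ref{lemma:leads_to_one}) merely rephrased as a double count of edges, and the existence of a sink is the paper's chain-following construction from $\mathit{genesis}$, terminating by Lemma~\ref{lemma:leads_to_bigger} and finiteness of $V(t)$. No gaps.
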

\begin{proof}
Let there be $x$ installable views at time $t$.
Because of \cref{lemma:exists_leads_to,lemma:genesis_no_lead}, we conclude that \emph{exactly} $x - 1$ installable views (all installable views except $\mathit{genesis}$; $\mathit{genesis}$ is installable at any time) have views that lead to them at time $t$.
According to \Cref{lemma:leads_to_one}, each installable view leads to at most one view at time $t$.
Hence, at least $x - 1$ views lead to a view at time $t$.

In order to prove the lemma, it is sufficient to show that it is impossible for all $x$ installable views to lead to some view at time $t$.
We prove this statement by contradiction.

Let $\mathcal{I}\mathcal{N}\mathcal{S}$ denote the set of all installable views at time $t$; note that $|\mathcal{I}\mathcal{N}\mathcal{S}| < \infty$ (since $|V(t)| < \infty$ and $\mathcal{I}\mathcal{N}\mathcal{S} \subseteq V(t)$).
Consider the following construction:
\begin{compactenum}
    \item Start with $V \gets \{\mathit{genesis}\}$, $\mathit{last} \gets \mathit{genesis}$ and $R \gets \mathcal{I}\mathcal{N}\mathcal{S} \setminus \{\mathit{genesis}\}$.
    \item Repeat until $V \neq \mathcal{I}\mathcal{N}\mathcal{S}$:
    \begin{compactenum}
        \item Select a view $v \in \mathcal{I}\mathcal{N}\mathcal{S}$ such that $\mathit{last}$ leads to $v$ at time $t$.
        \item Update $V \gets V \cup \{v\}$, $\mathit{last} \gets v$ and $R \gets R \setminus \{v\}$.
        \item If $V \neq \mathcal{I}\mathcal{N}\mathcal{S}$, go to step 2.
        \item Otherwise, select a view $v \in \mathcal{I}\mathcal{N}\mathcal{S}$ such that $\mathit{last}$ leads to $v$ at time $t$.
    \end{compactenum}
\end{compactenum}
    
Note that $V$ represents a sequence (because of the construction and \Cref{lemma:leads_to_bigger}).
Hence, $\mathit{last}$ represents the greatest (with respect to the containment relation) element of $V$.
Therefore, a view $v$, where $\mathit{last}$ leads to $v$ at time $t$, must belong to $R$ (otherwise, we would contradict \Cref{lemma:leads_to_bigger}).
Thus, the $V$ set is constantly ``growing'' and the $R$ set is constantly ``shrinking''.
    
Once $V = \mathcal{I}\mathcal{N}\mathcal{S}$ in step 2(c), we conclude that $R = \emptyset$.
Hence, $\mathit{last}$ cannot lead to any view.
Thus, we reach contradiction with the fact that $\mathit{last}$ leads to a view, which concludes the proof.
\end{proof}

The next lemma shows that at most one view can lead to a view $v$ at time $t$.

\begin{lemma} \label{lemma:leads_from_one}
Let $v \in V(t)$ such that $\alpha_t(v) = \top$.
If $v' \in V(t)$ leads to $v$ at time $t$ and $v'' \in V(t)$ leads to $v$ at time $t$, then $v' = v''$.
\end{lemma}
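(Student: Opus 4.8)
The plan is to avoid re-deriving any view-generator machinery and instead read the statement off as a pure counting fact about the ``leads to'' relation, exactly dual to the way \Cref{lemma:leads_to_one} and \Cref{lemma:n-1} were obtained. First I would dispose of the degenerate case: if $v = \mathit{genesis}$, then by \Cref{lemma:genesis_no_lead} no view leads to $v$ at time $t$, so the hypothesis of the lemma is never met and the implication holds vacuously. Hence assume $v \neq \mathit{genesis}$. Since $v'$ and $v''$ lead to $v$, \Cref{definition:leads} gives $\alpha_t(v') = \alpha_t(v'') = \alpha_t(v) = \top$, so all three views are installable at time $t$; moreover $v'$ and $v''$ each lead to \emph{some} view (namely $v$), so neither of them can be the unique non-leading installable view furnished by \Cref{lemma:n-1}.

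Next I would package the relevant structure as a map. Let $\mathcal{I}\mathcal{N}\mathcal{S} = \{w \in V(t) \mid \alpha_t(w) = \top\}$ be the set of installable views at time $t$; since every view in $V(t)$ is valid (\Cref{invariant:valid}) and only finitely many valid views exist, $\mathcal{I}\mathcal{N}\mathcal{S}$ is finite, say $|\mathcal{I}\mathcal{N}\mathcal{S}| = x$. By \Cref{lemma:n-1} there is exactly one view $w^{\star} \in \mathcal{I}\mathcal{N}\mathcal{S}$ that leads to no view, and by \Cref{lemma:leads_to_one} every other member of $\mathcal{I}\mathcal{N}\mathcal{S}$ leads to \emph{exactly} one view; this defines a function $f \colon \mathcal{I}\mathcal{N}\mathcal{S} \setminus \{w^{\star}\} \to \mathcal{I}\mathcal{N}\mathcal{S}$ sending $w$ to the unique view that $w$ leads to at time $t$. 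By \Cref{lemma:leads_to_bigger} we have $f(w) \supsetneq w$, and by \Cref{lemma:genesis_no_lead} we have $f(w) \neq \mathit{genesis}$, so $f$ in fact maps into $\mathcal{I}\mathcal{N}\mathcal{S} \setminus \{\mathit{genesis}\}$. Both $\mathcal{I}\mathcal{N}\mathcal{S} \setminus \{w^{\star}\}$ and $\mathcal{I}\mathcal{N}\mathcal{S} \setminus \{\mathit{genesis}\}$ have cardinality $x - 1$.

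The final step is a pigeonhole argument. By \Cref{lemma:exists_leads_to}, for every $u \in \mathcal{I}\mathcal{N}\mathcal{S} \setminus \{\mathit{genesis}\}$ there is a view leading to $u$ at time $t$, i.e.\ some $w$ with $f(w) = u$; hence $f$ is surjective onto $\mathcal{I}\mathcal{N}\mathcal{S} \setminus \{\mathit{genesis}\}$. A surjection between two finite sets of equal cardinality is a bijection, so $f$ is injective. Now $v', v'' \in \mathcal{I}\mathcal{N}\mathcal{S} \setminus \{w^{\star}\}$ (each leads to a view), $v \in \mathcal{I}\mathcal{N}\mathcal{S} \setminus \{\mathit{genesis}\}$ (it is installable and $v \neq \mathit{genesis}$), and $f(v') = v = f(v'')$, so injectivity of $f$ yields $v' = v''$.

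The only part that needs care — the ``main obstacle'', such as it is — is the bookkeeping that makes $f$ a genuine function into the stated codomain: that the domain is obtained by deleting exactly one element (\Cref{lemma:n-1}), that the image avoids $\mathit{genesis}$ (\Cref{lemma:genesis_no_lead}) and is strictly larger than the argument (\Cref{lemma:leads_to_bigger}), and that $V(t)$ is finite so that the equal-cardinality step is legitimate. All the genuinely hard content — comparability of the sequences in $\rho_t$, and the ``leads to'' graph being essentially a single path — has already been absorbed into \Cref{lemma:leads_to_one} and \Cref{lemma:n-1}, so no further view-generator reasoning is required here.
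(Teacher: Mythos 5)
Your proof is correct and follows essentially the same route as the paper: a counting argument over the finitely many installable views using \Cref{lemma:exists_leads_to}, \Cref{lemma:genesis_no_lead}, \Cref{lemma:leads_to_one}, and \Cref{lemma:n-1}. Your packaging of the ``leads to'' relation as a surjection between two $(x-1)$-element sets merely makes explicit the pigeonhole step that the paper's proof leaves informal.
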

\begin{proof}
Let there be $x$ installable views at time $t$.
Because of \cref{lemma:exists_leads_to,lemma:genesis_no_lead}, we conclude that exactly $x - 1$ installable views (all installable views except $\mathit{genesis}$) have views that lead to them.
According to \Cref{lemma:leads_to_one}, each view leads to at most one view.
\Cref{lemma:n-1} shows that there exists an installable view that does not lead to any view at time $t$.
Hence, there are exactly $x - 1$ installable views and exactly $x - 1$ ``lead-to'' relations.
Thus, only one view leads to $v$ at time $t$ and the lemma holds.
\end{proof}

Finally, we show that all installable views are comparable.

\begin{lemma} \label{lemma:installable_sequence}
Let $v, v' \in V(t)$ such that $\alpha_t(v) = \alpha_t(v') = \top$.
Then, either $v \subseteq v'$ or $v \supset v'$.
\end{lemma}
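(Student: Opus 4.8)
The plan is to prove that the set $\mathcal{I}\mathcal{N}\mathcal{S} = \{v \in V(t) \mid \alpha_t(v) = \top\}$ of installable views is totally ordered by $\subseteq$; comparability of $v$ and $v'$ is then immediate. The idea is that the ``leads to'' relation, restricted to $\mathcal{I}\mathcal{N}\mathcal{S}$, has the shape of a single directed path rooted at $\mathit{genesis}$, and \Cref{lemma:leads_to_bigger} makes that path strictly increasing with respect to $\subset$.

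First I would record that $|V(t)| < \infty$: every $v \in V(t)$ is valid by \Cref{invariant:valid}, and only finitely many valid views exist by assumption. By \Cref{lemma:leads_to_one} each installable view leads to at most one view, so there is a partial ``successor'' function $\sigma$ on $\mathcal{I}\mathcal{N}\mathcal{S}$ ($\sigma(w)$ being the unique view that $w$ leads to, when it exists). Starting from $\mathit{genesis}$, the orbit $\mathit{genesis}, \sigma(\mathit{genesis}), \sigma^2(\mathit{genesis}), \ldots$ is strictly $\subset$-increasing by \Cref{lemma:leads_to_bigger}, hence repetition-free, hence (by finiteness) a finite chain $\mathit{genesis} = w_0 \subset w_1 \subset \cdots \subset w_k$ that stops precisely at a view $w_k$ that leads to no view; by \Cref{lemma:n-1} this $w_k$ is the unique such installable view.

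Next I would show $\mathcal{I}\mathcal{N}\mathcal{S} = \{w_0, \ldots, w_k\}$. Dually to $\sigma$, every installable view $v \neq \mathit{genesis}$ has a predecessor (\Cref{lemma:exists_leads_to}), which is unique by \Cref{lemma:leads_from_one} and strictly smaller than $v$ by \Cref{lemma:leads_to_bigger}; that predecessor is itself installable by \Cref{definition:leads}. Iterating predecessors from an arbitrary $v \in \mathcal{I}\mathcal{N}\mathcal{S}$ therefore strictly decreases and, by finiteness, must reach an installable view with no predecessor, which by \Cref{lemma:exists_leads_to} together with \Cref{lemma:genesis_no_lead} can only be $\mathit{genesis}$. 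This produces a chain $\mathit{genesis} = u_0, u_1, \ldots, u_m = v$ in which $u_i$ leads to $u_{i+1}$, i.e. $\sigma(u_i) = u_{i+1}$; since $\sigma$ is a function and $u_0 = w_0$, an easy induction gives $u_i = w_i$ for all $i$, so $v = w_m \in \{w_0, \ldots, w_k\}$. Hence $\mathcal{I}\mathcal{N}\mathcal{S}$ is the chain $\{w_0, \ldots, w_k\}$, and any two of its members are comparable.

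The main obstacle I anticipate is bookkeeping rather than depth: one must argue carefully, and without circularity, that the ``leads to'' relation is acyclic (which follows from strict monotonicity in \Cref{lemma:leads_to_bigger}) and that both the forward iteration of $\sigma$ from $\mathit{genesis}$ and the backward iteration of predecessors from an arbitrary installable view terminate and meet — the key points being that $\mathit{genesis}$ is the only installable view with no predecessor and, by \Cref{lemma:n-1}, there is exactly one with no successor, combined with functionality of $\sigma$ from \Cref{lemma:leads_to_one}. Everything else is a finiteness argument using $|V(t)| < \infty$. (As an alternative, one could instead invoke verbatim the enumeration construction already used inside the proof of \Cref{lemma:n-1}, observing that it produces a sequence covering all of $\mathcal{I}\mathcal{N}\mathcal{S}$.)
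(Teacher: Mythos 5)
Your proposal is correct, and its skeleton matches the paper's: both build the forward chain $\mathit{genesis} = w_0 \subset w_1 \subset \cdots$ by iterating the ``leads to'' relation (made single-valued by \Cref{lemma:leads_to_one} and strictly increasing by \Cref{lemma:leads_to_bigger}), and both reduce the lemma to showing that this chain exhausts $\mathcal{I}\mathcal{N}\mathcal{S}$. Where you diverge is in that covering step. The paper argues by counting: by \Cref{lemma:n-1} exactly $x-1$ of the $x$ installable views lead to something, so if the chain $V$ missed a nonempty residue $R = \mathcal{I}\mathcal{N}\mathcal{S} \setminus V$, every view of $R$ would have to lead to another view of $R$ (using \Cref{lemma:genesis_no_lead} and \Cref{lemma:leads_from_one} to rule out edges back into $V$), giving a directed graph on $|R|$ vertices with $|R|$ edges and hence a cycle, contradicting the strict monotonicity of \Cref{lemma:leads_to_bigger}. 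You instead walk predecessors backward from an arbitrary installable $v$ (existence by \Cref{lemma:exists_leads_to}, uniqueness by \Cref{lemma:leads_from_one}, strict decrease by \Cref{lemma:leads_to_bigger}) until reaching the only predecessor-free installable view, $\mathit{genesis}$, and then identify the backward chain with the forward one by functionality of $\sigma$. Both arguments rest on the same supporting lemmas; yours is somewhat more direct and avoids the graph-cycle detour, while the paper's reuses verbatim the enumeration and counting machinery it already set up inside the proof of \Cref{lemma:n-1}.
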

\begin{proof}
Let $\mathcal{I}\mathcal{N}\mathcal{S}$ denote the set of all installable views at time $t$.
In order to prove the lemma, we prove that $\mathcal{I}\mathcal{N}\mathcal{S}$ is a sequence.
Let $x = |\mathcal{I}\mathcal{N}\mathcal{S}| < \infty$.
According to \Cref{lemma:n-1}, exactly $x - 1$ installable views lead to some view at time $t$.

Consider the construction similar to the one from the proof of \Cref{lemma:n-1}:
\begin{compactenum}
    \item Start with $V \gets \{\mathit{genesis}\}$ and $\mathit{last} \gets \mathit{genesis}$.
    \item Repeat until $\mathit{last}$ does not lead to any view at time $t$: 
    \begin{compactenum}
        \item Select a view $v \in \mathcal{I}\mathcal{N}\mathcal{S}$ such that $\mathit{last}$ leads to $v$ at time $t$.
        \item Update $V \gets V \cup \{v\}$ and $\mathit{last} \gets v$.
    \end{compactenum}
\end{compactenum}

Because of the construction and \Cref{lemma:leads_to_bigger}, $V$ is a sequence.
In order to conclude the lemma, it suffices to prove that $V = \mathcal{I}\mathcal{N}\mathcal{S}$.
We prove this statement by contradiction.

Suppose that $V \neq \mathcal{I}\mathcal{N}\mathcal{S}$.
Because of the construction, we know that $V \subset \mathcal{I}\mathcal{N}\mathcal{S}$.
Let $|V| = p < x$.
This means that there exist $x - p$ views in $R = \mathcal{I}\mathcal{N}\mathcal{S} \setminus{V}$.
The following holds for each view $v \in R$:
\begin{compactitem}
    \item $v$ leads to a view $v' \in \mathcal{I}\mathcal{N}\mathcal{S}$ at time $t$: 
    This statement holds because $x - 1$ installable views lead to some view at time $t$ (by \Cref{lemma:n-1}).
    By construction, we have that $p - 1$ views of $V$ lead to some view at time $t$.
    Given that $|R| = x - p$ and $x - 1 - (p - 1) = x - p$, the statement is true.
    
    \item $v$ leads to a view $v' \in R$ at time $t$:
    Suppose that $v$ leads to a view $v' \in V$ at time $t$.
    If $v' = \mathit{genesis}$, \Cref{lemma:genesis_no_lead} is contradicted. 
    If $v' \neq \mathit{genesis}$, \Cref{lemma:leads_from_one} is contradicted (since $v'$ has more than one view leading to it).
    Thus, the statement is correct.
\end{compactitem}
Finally, we model the set $R$ with a graph $G$ with $x - p$ vertices ($x - p$ installable views) and $x - p$ edges ($x - p$ ``lead-to'' relations).
Thus, $G$ has a cycle.
However, that is not possible given \Cref{lemma:leads_to_bigger}.
Hence, $V = \mathcal{I}\mathcal{N}\mathcal{S}$ and the lemma holds.
\end{proof}

The next lemma proves that only the greatest installable view at time $t$ does not lead to any view at time $t$.
Recall that all installable views are comparable (by \Cref{lemma:installable_sequence}).

\begin{lemma} \label{lemma:greatest_does_not_lead}
A view $v \in V(t)$, where $\alpha_t(v) = \top$, does not lead to any view at time $t$ if and only if $v$ is the greatest installable view at time $t$.
\end{lemma}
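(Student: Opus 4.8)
The plan is to prove the two implications separately, obtaining the forward direction almost for free once the backward direction and \Cref{lemma:n-1} are available.

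For the backward direction, I would assume $v$ is the greatest installable view at time $t$ and argue by contradiction. If $v$ led to some view $w$ at time $t$, then by \Cref{definition:leads} we would have $\alpha_t(w) = \top$, so $w$ is installable, and by \Cref{lemma:leads_to_bigger} we would have $v \subset w$. This strictly contradicts the maximality of $v$ among installable views, so $v$ leads to no view at time $t$.

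For the forward direction, I would first observe that a greatest installable view is well defined: $\mathit{genesis}$ is installable at every time, only finitely many valid views exist and hence $|V(t)| < \infty$ (as already used in the proof of \Cref{lemma:n-1}, via \Cref{invariant:valid}), and the set of installable views at time $t$ is a subset of $V(t)$ that is totally ordered by inclusion by \Cref{lemma:installable_sequence}; therefore it has a unique maximum $m$. By the backward direction just established, $m$ does not lead to any view at time $t$. But \Cref{lemma:n-1} asserts that \emph{exactly one} installable view has this property. Hence, if $v$ is installable and leads to no view at time $t$, then $v = m$, i.e., $v$ is the greatest installable view, which completes the equivalence.

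The only delicate point — and it is minor — is making the ``greatest installable view exists and is unique'' step fully rigorous, i.e., correctly combining finiteness of $V(t)$, total comparability of installable views (\Cref{lemma:installable_sequence}), and non-emptiness (witnessed by $\mathit{genesis}$); everything else is a one-line contradiction argument plus a direct appeal to \Cref{lemma:n-1}.
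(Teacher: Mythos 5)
Your proof is correct and follows essentially the same route as the paper's: both directions hinge on \Cref{lemma:n-1} (uniqueness of the non-leading installable view) together with \Cref{lemma:leads_to_bigger} and the fact that the target of a ``leads to'' relation is itself installable. The only difference is organizational --- you derive the forward direction from the backward one plus the uniqueness in \Cref{lemma:n-1}, whereas the paper runs a second direct contradiction argument; your explicit justification that the greatest installable view is well defined (via $\mathit{genesis}$, finiteness of $V(t)$, and \Cref{lemma:installable_sequence}) is a detail the paper leaves implicit.
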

\begin{proof}
We prove the lemma by proving both directions of the statement.

Let $v$ be the view that does not lead to any view at time $t$; such view exists due to \Cref{lemma:n-1}.
By contradiction, suppose that $v$ is not the greatest installable view at time $t$.
Let $v_{\mathit{max}}$ be the greatest installable view at time $t$.
According to \Cref{lemma:n-1}, $v_{\mathit{max}}$ leads to some view $v'$ at time $t$.
By \Cref{lemma:leads_to_bigger}, we have that $v_{\mathit{max}} \subset v'$, which contradicts the fact that $v_{\mathit{max}}$ is the greatest installable view at time $t$.
Thus, the lemma holds in this direction.

Suppose that $v$ is the greatest installable view at time $t$ and that, by contradiction, it leads to a view $v' \in V(t)$ at time $t$.
According to \Cref{lemma:leads_to_bigger}, $v \subset v'$.
This is contradiction with the fact that $v$ is the greatest installable view at time $t$.
The lemma holds in this direction, as well.
\end{proof}

Now, we introduce an invariant that explains how non-installable views are ``instantiated''.

\begin{invariant} [Creation of Non-Installable Views] \label{invariant:creation_non_installable}
For every $v \in V(t)$ such that $\alpha_t(v) = \bot$, there exists a view $v' \in V(t)$ such that:
\begin{compactitem}
    \item $\alpha_t(v') = \top$, and
    \item $\mathit{seq} \in \rho_t(v')$, where $\mathit{seq} = v_1 \to ... \to v_x \to v \to v_1' \to ... \to v_y', (x \geq 0, y \geq 1)$, and
    \item for every $v_i \in \{v_1, ..., v_x\}$, $v_i \in V(t)$ and $\alpha_t(v_i) = \bot$, and
    \item for every $v_i \in \{v_1, ..., v_x\}$, $\mathit{seq}_i \in \rho_t(v_i)$, where $\mathit{seq}_i = v_{i + 1} \to ... \to v_x \to v \to v_1' \to ... \to v_y'$.
\end{compactitem}
\end{invariant}

\Cref{invariant:creation_non_installable} holds at time $0$ since $V(0) = \{\mathit{genesis}\}$ and $\alpha_0(\mathit{genesis}) = \top$.
Next, we define \emph{auxiliary views}.

\begin{definition} [Auxiliary View] \label{definition:auxiliary_view}
Consider views $v, v' \in V(t)$ such that:
\begin{compactitem}
    \item $\alpha_t(v') = \top$, and
    \item $\alpha_t(v) = \bot$, and
    \item $\mathit{seq} \in \rho_t(v')$, where $\mathit{seq} = v_1 \to ... \to v_x \to v \to v_1' \to ... \to v_y', (x \geq 0, y \geq 1)$, and
    \item for every $v_i \in \{v_1, ..., v_x\}$, $v_i \in V(t)$ and $\alpha_t(v_i) = \bot$, and
    \item for every $v_i \in \{v_1, ..., v_x\}$, $\mathit{seq}_i \in \rho_t(v_i)$, where $\mathit{seq}_i = v_{i + 1} \to ... \to v_x \to v \to v_1' \to ... \to v_y'$.
\end{compactitem}
We say that $v$ is an \emph{auxiliary view} for $v'$ at time $t$.
\end{definition}

The following lemma shows that any non-installable view is an auxiliary view for some installable view.

\begin{lemma} \label{lemma:auxiliary_for_one}
Let $v \in V(t)$ such that $\alpha_t(v) = \bot$.
Then, $v$ is an auxiliary view for a view $v' \in V(t)$ at time $t$.
\end{lemma}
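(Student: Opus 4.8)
\textit{Proof proposal.} The plan is to read the conclusion off \Cref{invariant:creation_non_installable} essentially verbatim. Recall that throughout this portion of the argument the invariants about $V(t)$, $\alpha_t$, $\beta_t$ and $\rho_t$ are in force at the fixed time $t$: they hold at time $0$ and are being shown to persist. I would therefore instantiate \Cref{invariant:creation_non_installable} at the given view $v \in V(t)$, which is legitimate precisely because the hypothesis of the lemma supplies $\alpha_t(v) = \bot$.

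It then remains to match the output of the invariant against \Cref{definition:auxiliary_view}. The invariant hands us a view $v' \in V(t)$ together with: (i) $\alpha_t(v') = \top$; (ii) a sequence $\mathit{seq} \in \rho_t(v')$ of the shape $v_1 \to \ldots \to v_x \to v \to v_1' \to \ldots \to v_y'$ with $x \geq 0$ and $y \geq 1$; (iii) $v_i \in V(t)$ and $\alpha_t(v_i) = \bot$ for every $v_i \in \{v_1, \ldots, v_x\}$; and (iv) $\mathit{seq}_i \in \rho_t(v_i)$ with $\mathit{seq}_i = v_{i+1} \to \ldots \to v_x \to v \to v_1' \to \ldots \to v_y'$ for each such $v_i$. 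Together with the hypothesis $\alpha_t(v) = \bot$, these are exactly the five clauses required by \Cref{definition:auxiliary_view} for $v$ to be an auxiliary view for $v'$ at time $t$. Hence $v$ is an auxiliary view for $v'$, as claimed.

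I do not anticipate any genuine obstacle here: the lemma is a repackaging of \Cref{invariant:creation_non_installable} in the auxiliary-view vocabulary, and the only thing to verify is that every clause of \Cref{definition:auxiliary_view} is literally provided by the invariant — it is. The one mild point of care is confirming $v' \in V(t)$, which \Cref{definition:auxiliary_view} requires of both views; but this membership is already part of the invariant's guarantee, so nothing extra is needed.
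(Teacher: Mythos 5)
Your proposal is correct and matches the paper's proof, which likewise derives the claim directly by instantiating \Cref{invariant:creation_non_installable} at $v$ (using the hypothesis $\alpha_t(v) = \bot$) and observing that its conclusions are exactly the clauses of \Cref{definition:auxiliary_view}. Your write-up just spells out the clause-by-clause matching that the paper leaves implicit.
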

\begin{proof}
The lemma follows from \Cref{invariant:creation_non_installable} and \Cref{definition:auxiliary_view}.
\end{proof}

If $v$ is an auxiliary view for $v'$ at time $t$, then $v' \subset v$.
The following lemma proves this claim.

\begin{lemma} \label{lemma:auxiliary_bigger}
Let $v \in V(t)$ be an auxiliary view for a view $v' \in V(t)$ at time $t$.
Then, $v' \subset v$.
\end{lemma}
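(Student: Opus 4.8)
The plan is to mimic, almost verbatim, the argument already used for \Cref{lemma:leads_to_bigger}, since \Cref{definition:auxiliary_view} has essentially the same shape as \Cref{definition:leads}, only with an extra non-empty tail $v_1' \to \dots \to v_y'$ appended after $v$. First I would unfold \Cref{definition:auxiliary_view}: because $v$ is an auxiliary view for $v'$ at time $t$, there exists a sequence $\mathit{seq} \in \rho_t(v')$ of the form $\mathit{seq} = v_1 \to \dots \to v_x \to v \to v_1' \to \dots \to v_y'$ with $x \geq 0$ and $y \geq 1$; in particular $v \in \mathit{seq}$.

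Next I would invoke \Cref{invariant:rhos}, which holds at the fixed time $t$ and states that every $\mathit{set} \in \rho_t(v')$ satisfies $\mathit{set}$ is a sequence and $\mathit{set}.\mathtt{follows}(v') = \top$. Unwinding the definition of $\mathtt{follows}$ from the \texttt{Sequence} listing, $\mathit{seq}.\mathtt{follows}(v') = \top$ means $w \supset v'$ for every $w \in \mathit{seq}$. Since $v \in \mathit{seq}$, we obtain $v \supset v'$, i.e.\ $v' \subset v$, which is exactly the claim.

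I do not expect any real obstacle here: the whole proof is two lines, and the only point worth a moment's care is that \Cref{invariant:rhos} may legitimately be used at time $t$ — which is immediate, since the entire development is set up around a fixed time $t$ at which the stated invariants (including \Cref{invariant:rhos}) are maintained. Everything else follows directly from the structural form of $\mathit{seq}$ guaranteed by \Cref{definition:auxiliary_view} together with the $\mathtt{follows}$ property of sets in $\rho_t(v')$.
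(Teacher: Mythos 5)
Your proof is correct and is essentially identical to the paper's: both unfold \Cref{definition:auxiliary_view} to obtain $\mathit{seq} \in \rho_t(v')$ containing $v$, then apply \Cref{invariant:rhos} to get $\mathit{seq}.\mathtt{follows}(v') = \top$ and hence $v' \subset v$. No issues.
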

\begin{proof}
According to \Cref{definition:auxiliary_view}, $\mathit{seq} \in \rho_t(v')$, where $\mathit{seq} = ... \to v \to v_1' \to ... \to v_y', (y \geq 1)$.
By \Cref{invariant:rhos}, we know that $\mathit{seq}.\mathtt{follows}(v') = \top$, which implies that $v' \subset v$.
\end{proof}

If $v$ is an auxiliary view for a view $v'$ and $v'$ leads to a view $v''$, then $v \subset v''$.

\begin{lemma} \label{lemma:auxiliary_smaller}
Let $v \in V(t)$ be an auxiliary view for a view $v' \in V(t)$ at time $t$.
Let $v'$ lead to $v'' \in V(t)$ at time $t$.
Then, $v \subset v''$.
\end{lemma}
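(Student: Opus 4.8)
The plan is to observe that the sequence witnessing ``$v$ is an auxiliary view for $v'$'' and the sequence witnessing ``$v'$ leads to $v''$'' both belong to $\rho_t(v')$, and then to use the comparability of such sets (\Cref{lemma:weak_accuracy}) to locate $v$ and $v''$ inside one common chain; the $\alpha_t$ values will then pin down exactly where each of them can sit.

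First I would unpack the hypotheses. By \Cref{definition:auxiliary_view} there is a sequence $\mathit{seq}_1 \in \rho_t(v')$ of the form $u_1 \to \dots \to u_p \to v \to u_1' \to \dots \to u_q'$ with $q \geq 1$, so that $v$ is \emph{not} the last element of $\mathit{seq}_1$, and $\alpha_t(u_i) = \bot$ for every $u_i$. By \Cref{definition:leads} there is a sequence $\mathit{seq}_2 \in \rho_t(v')$ of the form $w_1 \to \dots \to w_x \to v''$, so $v''$ \emph{is} the last element of $\mathit{seq}_2$ and $\alpha_t(w_i) = \bot$ for every $w_i$. Since $\alpha_t(v) = \bot$ and $\alpha_t(v'') = \top$, we have $v \neq v''$. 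I would also recall that, by the chain convention for sequences (together with \Cref{invariant:rhos}), writing a sequence as $a_1 \to a_2 \to \dots$ means that all consecutive containments are strict.

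Next I would apply \Cref{lemma:weak_accuracy} to $\mathit{seq}_1, \mathit{seq}_2 \in \rho_t(v')$, obtaining $\mathit{seq}_1 \subseteq \mathit{seq}_2$ or $\mathit{seq}_2 \subset \mathit{seq}_1$. If $\mathit{seq}_1 \subseteq \mathit{seq}_2$, then $v \in \mathit{seq}_2 = \{w_1, \dots, w_x, v''\}$; as $v \neq v''$ this forces $v = w_i$ for some $i$, and the strict chain $w_i \subset \dots \subset w_x \subset v''$ yields $v \subset v''$. If $\mathit{seq}_2 \subset \mathit{seq}_1$, then $v'' \in \mathit{seq}_1 = \{u_1, \dots, u_p, v, u_1', \dots, u_q'\}$; since $\alpha_t(v'') = \top$ while $v$ and every $u_i$ are non-installable, $v''$ is neither $v$ nor any $u_i$, so $v'' = u_j'$ for some $j$, and the strict chain $v \subset u_1' \subset \dots \subset u_j' = v''$ again yields $v \subset v''$.

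The only delicate point — and the step I would double-check most carefully — is the bookkeeping with $\alpha_t$: it is what guarantees that $v$ can only land among the ``connector'' views $w_i$ of $\mathit{seq}_2$, and symmetrically that $v''$ can only land among the views $u_j'$ that follow $v$ in $\mathit{seq}_1$, which is precisely what forces strict containment rather than a mere possibility of equality. Beyond this case split I expect no genuine obstacle, since everything reduces to the totally ordered structure of sequences and the comparability already established in \Cref{lemma:weak_accuracy}.
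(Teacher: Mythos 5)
Your proof is correct and follows essentially the same route as the paper: both extract the two witnessing sequences from \Cref{definition:auxiliary_view} and \Cref{definition:leads}, compare them via \Cref{lemma:weak_accuracy}, and use the $\alpha_t$ values to rule out $v = v''$ and $v'' \subset v$ in the relevant containment case. The only difference is cosmetic (you fold the equality case into the $\subseteq$ case), so nothing further is needed.
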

\begin{proof}
According to \Cref{definition:auxiliary_view}, $\mathit{seq} \in \rho_t(v')$, where $\mathit{seq} = ... \to v \to v_1' \to ... \to v_y', (y \geq 1)$.
By \Cref{definition:leads}, $\mathit{seq}'' \in \rho_t(v')$, where $\mathit{seq}'' = ... \to v''$.
\Cref{invariant:rhos} ensures that (1) $\mathit{seq}$ and $\mathit{seq}''$ are sequences, and (2) $\mathit{seq}.\mathtt{follows}(v') = \mathit{seq}''.\mathtt{follows}(v') = \top$.
By \Cref{lemma:weak_accuracy}, there are three possible scenarios:
\begin{compactitem}
    \item $\mathit{seq} = \mathit{seq}''$: In this case, we have that $v \subset v''$.
    The lemma holds.
    
    \item $\mathit{seq} \subset \mathit{seq}''$: In this case, we know that $v \in \mathit{seq}''$.
    Thus, $v \subset v''$ and the lemma holds.
    
    \item $\mathit{seq} \supset \mathit{seq}''$: Hence, $v'' \in \mathit{seq}$.
    We conclude that $v \subset v''$ because, otherwise, either (1) $\alpha_t(v) = \top$ (if $v = v''$), or (2) $v$ is not an auxiliary view for $v'$ at time $t$ since $\alpha_t(v'') = \top$ (if $v'' \subset v$).
\end{compactitem}
The lemma holds since the claim is correct in all possible cases.
\end{proof}

The following lemma shows that if $v$ leads to $v'$, then there does not exist an installable view $v''$ such that $v \subset v'' \subset v'$.

\begin{lemma} \label{lemma:installable_between_leading}
Let $v \in V(t)$ lead to $v' \in V(t)$ at time $t$.
There does not exist a view $v'' \in V(t)$ such that (1) $\alpha_t(v'') = \top$, and (2) $v \subset v'' \subset v'$.
\end{lemma}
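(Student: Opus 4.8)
The plan is to argue by contradiction, and I would actually establish the slightly stronger claim that there is no installable view strictly between \emph{any} pair of leading views, since that is what lets a minimal‑counterexample argument close. Concretely, suppose the set $S$ of installable views $c \in V(t)$ for which there exist $a,b \in V(t)$ with $a$ leading to $b$ at time $t$ and $a \subset c \subset b$ is nonempty. Since every view in $V(t)$ is valid (\Cref{invariant:valid}) and only finitely many valid views exist, $S$ is finite, so I would fix $c \in S$ of minimal cardinality together with a witnessing pair $a$ leading to $b$, $a \subset c \subset b$.

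Then I would proceed as follows. First, $a$ is installable (by \Cref{definition:leads}), so $\mathit{genesis} \subseteq a$ by \Cref{invariant:initial_view}; since $a \subset c$ this gives $c \neq \mathit{genesis}$. Hence \Cref{lemma:exists_leads_to} yields a view $w \in V(t)$ leading to $c$ at time $t$, and \Cref{lemma:leads_to_bigger} gives $w \subset c$. Both $a$ and $w$ are installable, so by \Cref{lemma:installable_sequence} they are comparable, and I would split into three cases. If $w = a$, then $a$ leads to both $b$ and $c$, so $b = c$ by \Cref{lemma:leads_to_one}, contradicting $c \subset b$. If $a \subset w$, then $a \subset w \subset c \subset b$, so $w$ is installable and strictly between the leading pair $(a,b)$, i.e.\ $w \in S$ with $|w| < |c|$, contradicting minimality of $c$. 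If $w \subset a$, then $w \subset a \subset c$ with $w$ leading to $c$, so $a$ is installable and strictly between the leading pair $(w,c)$, i.e.\ $a \in S$ with $|a| < |c|$, again contradicting minimality. Therefore $S = \emptyset$, and in particular no installable view lies strictly between $v$ and $v'$.

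The bookkeeping steps (reading off installability of $a$, $b$, and $w$ from \Cref{definition:leads}, and the $\mathit{genesis}$‑containment argument) are immediate from the invariants, so the one point that needs care is arranging the reduction to be genuinely decreasing: this is why the $a \subset w$ and $w \subset a$ cases are driven by the strict containments $w \subset c$ (\Cref{lemma:leads_to_bigger}) and $a \subset c$ rather than by comparability alone. If the minimal‑counterexample phrasing reads awkwardly in context, an alternative I would keep in reserve is to first re‑derive, from \Cref{lemma:exists_leads_to,lemma:genesis_no_lead,lemma:leads_to_one,lemma:leads_from_one,lemma:n-1,lemma:greatest_does_not_lead,lemma:installable_sequence}, that the installable views at time $t$ form a chain $\mathit{genesis} = u_0 \subset u_1 \subset \dots \subset u_m$ in which each $u_j$ leads to exactly $u_{j+1}$ (a counting argument forces the ``leads‑to'' bijection on indices to be $j \mapsto j+1$); the statement then follows at once, since $v = u_j$ and $v' = u_{j+1}$ leave no room for an installable $v''$ between them.
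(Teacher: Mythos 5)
Your proof is correct, but it takes a genuinely different route from the paper's. The paper argues by contradiction with two explicit chain constructions: starting from $\{v, v'\}$ it repeatedly prepends the view leading to the current smallest element until it reaches $\mathit{genesis}$, obtaining a chain $V$ that cannot contain $v''$; it then builds a second descending chain from $v''$, which must eventually meet $V$ (since $\mathit{genesis} \in V$), and the view at which the two chains meet either equals $v'$ (contradicting \Cref{lemma:leads_to_bigger}) or leads to two distinct views (contradicting \Cref{lemma:leads_to_one}). Your minimal-counterexample argument replaces both constructions with a single well-founded descent: \Cref{lemma:exists_leads_to} supplies a predecessor $w$ of the putative minimal counterexample $c$, and comparing $w$ with $a$ via \Cref{lemma:installable_sequence} either yields an immediate contradiction with \Cref{lemma:leads_to_one} (when $w = a$) or produces a strictly smaller element of $S$ (when $w \subset a$ or $a \subset w$). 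This is more compact and avoids the chain bookkeeping, at the cost of proving a nominally stronger statement; the paper's constructions have the advantage of reusing the machinery already deployed for \Cref{lemma:n-1} and \Cref{lemma:installable_sequence}. One small polish: take $c$ minimal with respect to strict containment rather than cardinality --- $S$ is a finite chain by \Cref{lemma:installable_sequence}, so a $\subset$-minimal element exists, and the descent steps $w \subset c$ and $a \subset c$ then contradict minimality directly, without relying on each view being a finite set (a fact the paper only establishes later, in \Cref{lemma:finite_valid_views}).
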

\begin{proof}
Let $\mathcal{I}\mathcal{N}\mathcal{S} = \{v \,|\, v \in V(t) \land \alpha_t(v) = \top\}$.
According to \Cref{lemma:installable_sequence}, $\mathcal{I}\mathcal{N}\mathcal{S} $ is a sequence.
Moreover, $|\mathcal{I}\mathcal{N}\mathcal{S}| < \infty$.

We prove the lemma by contradiction.
Hence, suppose that there exists a view $v''$ such that (1) $v'' \in \mathcal{I}\mathcal{N}\mathcal{S}$, and (2) $v \subset v'' \subset v'$.

Since $v$ leads to $v'$ at time $t$, we know that $v \subset v'$ (by \Cref{lemma:leads_to_bigger}).
We consider the following construction:
\begin{compactenum}
    \item Start with $V \gets \{v, v'\}$ and $\mathit{first} \gets v$.
    
    \item Repeat until $\mathit{first} = \mathit{genesis}$:
    \begin{compactenum}
        \item Select a view $v_i \in \mathcal{I}\mathcal{N}\mathcal{S}$ such that $v_i$ leads to $\mathit{first}$.
        \item Update $V \gets V \cup \{v_i\}$ and $\mathit{first} \gets v_i$.
    \end{compactenum}
\end{compactenum}
According to the construction and \Cref{lemma:leads_to_bigger}, $V$ is a sequence.
Moreover, $\mathit{first}$ represents the first (i.e., smallest) view of $V$.
Therefore, a view $v_i$, where $v_i$ leads to $\mathit{first}$, must belong to $\mathcal{I}\mathcal{N}\mathcal{S} \setminus{V}$ (otherwise, we would contradict \Cref{lemma:leads_to_bigger}).
Thus, $V$ is ``growing'' with each iteration.
Finally, an execution of the construction eventually terminates since (1) $|\mathcal{I}\mathcal{N}\mathcal{S}| < \infty$, (2) \Cref{lemma:exists_leads_to} holds, (3) \Cref{lemma:genesis_no_lead} stands, and (4) $\mathit{genesis} \in \mathcal{I}\mathcal{N}\mathcal{S}$.

Since $v \subset v''$, we conclude that $v'' \notin V$ and $v'' \neq \mathit{genesis}$.
Therefore, there exists a view that leads to $v''$ (by \Cref{lemma:exists_leads_to}).
Let $R = \mathcal{I}\mathcal{N}\mathcal{S} \setminus{V}$.
Hence, $v'' \in R$.
Now, we consider the following construction:
\begin{compactenum}
    \item Start with $V' \gets \{v''\}$ and $\mathit{first} \gets v''$.
    
    \item Select a view $v_i \in \mathcal{I}\mathcal{N}\mathcal{S}$ such that $v_i$ leads to $\mathit{first}$.
    
    \item If $v_i \in V$, then terminate.
    
    \item Update $V' \gets V' \cup \{v_i\}$ and $\mathit{first} \gets v_i$.
    
    \item Go to step 2.
\end{compactenum}
By construction and \Cref{lemma:leads_to_bigger}, $V'$ is a sequence.
Hence, for every view $\mathit{view} \in V'$, $\mathit{view} \subseteq v''$.
Importantly, an execution of this construction eventually terminates because (1) $|\mathcal{I}\mathcal{N}\mathcal{S}| < \infty$, (2) \Cref{lemma:exists_leads_to}, (3) \Cref{lemma:genesis_no_lead}, and (4) $\mathit{genesis} \in V$.
Thus, once the execution terminates, we either contradict \Cref{lemma:leads_to_bigger} (if $v_i = v'$; since $\mathit{view} \subseteq v''$, for every $\mathit{view} \in V'$) or \Cref{lemma:leads_to_one} (if $v_i \neq v'$).
Therefore, the lemma holds.
\end{proof}

Now, we prove that a view can be an auxiliary view for a single installable view.

\begin{lemma} \label{lemma:auxiliary_for_one_at_most}
Let $v \in V(t)$ be an auxiliary view for a view $v' \in V(t)$ at time $t$.
Moreover, let $v$ be an auxiliary view for a view $v'' \in V(t)$ at time $t$.
Then, $v' = v''$.
\end{lemma}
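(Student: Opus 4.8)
The plan is to argue by contradiction: suppose $v$ is an auxiliary view for two \emph{distinct} installable views $v', v'' \in V(t)$ at time $t$. Both $v'$ and $v''$ satisfy $\alpha_t(\cdot) = \top$, so by \Cref{lemma:installable_sequence} they are comparable; without loss of generality assume $v' \subset v''$. Applying \Cref{lemma:auxiliary_bigger} to the pair $(v, v'')$ gives $v'' \subset v$, hence $v' \subset v'' \subset v$.

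Next I would observe that $v'$ cannot be the greatest installable view at time $t$, since $v'' \supset v'$ is itself an installable view. By \Cref{lemma:n-1} exactly one installable view leads to no view at time $t$, and by \Cref{lemma:greatest_does_not_lead} that view is precisely the greatest installable view; therefore $v'$ leads to some view $w \in V(t)$ at time $t$. Now apply \Cref{lemma:auxiliary_smaller} to the pair $(v, v')$ together with the fact that $v'$ leads to $w$: this yields $v \subset w$, and therefore $v'' \subset v \subset w$.

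At this point I would invoke \Cref{lemma:installable_between_leading}: $v'$ leads to $w$, yet $v''$ is an installable view with $v' \subset v'' \subset w$ — a contradiction. This forces $v' = v''$, which establishes the lemma.

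The step I expect to be the main (and essentially only) obstacle is the orientation bookkeeping: one must first use comparability of installable views to pin down $v' \subset v''$, and then verify that $v'$ genuinely leads to some view, which is exactly where the strictly larger installable view $v''$ is needed. If instead $v'' \subset v'$, the whole argument goes through verbatim with the roles of $v'$ and $v''$ interchanged. Everything else is a direct chaining of \Cref{lemma:auxiliary_bigger}, \Cref{lemma:auxiliary_smaller}, and \Cref{lemma:installable_between_leading}, with no computation involved.
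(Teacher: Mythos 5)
Your proof is correct and follows essentially the same route as the paper's: after fixing $v' \subset v''$ by comparability and applying \Cref{lemma:auxiliary_bigger}, both arguments obtain the led-to view of $v'$ via \cref{lemma:n-1,lemma:greatest_does_not_lead}, apply \Cref{lemma:auxiliary_smaller}, and close with \Cref{lemma:installable_between_leading}. The only difference is cosmetic — the paper first pins down $v_i \subseteq v''$ and contradicts $v'' \subset v$, whereas you contradict \Cref{lemma:installable_between_leading} directly — so the substance is identical.
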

\begin{proof}
By contradiction, suppose that $v' \neq v''$.
\Cref{lemma:installable_sequence} shows that either $v' \subset v''$ or $v' \supset v''$.
Without loss of generality, let $v' \subset v''$.
Given \Cref{lemma:auxiliary_bigger}, we conclude that $v' \subset v$ and $v'' \subset v$.

Let $v'$ lead to a view $v_i \in V(t)$ at time $t$ (such view $v_i$ exists since $v' \subset v''$ and \cref{lemma:n-1,lemma:greatest_does_not_lead}).
We know, by \Cref{lemma:installable_sequence}, that either $v_i \subseteq v''$ or $v_i \supset v''$.
We conclude that $v_i \subseteq v''$ (otherwise, the statement of \Cref{lemma:installable_between_leading} would be violated).

\Cref{lemma:auxiliary_smaller} shows that $v \subset v_i$.
Given that $v_i \subseteq v''$, we conclude that $v \subset v''$.
This represents contradiction with $v'' \subset v$, which concludes the proof.
\end{proof}

The next lemma proves that, if a process obtains an evidence $\epsilon$ for $\mathit{set} \neq \emptyset$ and view $v'$ by time $t$, then $\mathit{set} \in \beta_t(v)$.

\begin{lemma} \label{lemma:subset_of_beta}
Let a (correct or faulty) process obtain an evidence $\epsilon'$ by time $t$ such that $\mathtt{valid}((v', \mathit{set}', \epsilon'), v) = \top$, where $\mathit{set}' \neq \emptyset$.
Then, $v \in V(t)$ and $\mathit{set}' \in \beta_t(v)$.
\end{lemma}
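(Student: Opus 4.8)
The plan is to reduce the claim to \Cref{lemma:knows_view_system} after extracting the view-path carried inside $\epsilon'$. First I would note that $v \neq \mathit{genesis}$: in the $\mathtt{valid}$ function of \Cref{lst:view_set_evidence_valid}, the branch for $v = \mathit{genesis}$ returns $\bot$ as soon as $\mathit{set}' \neq \emptyset$ (line~\ref{line:v_equal_genesis}), so since by hypothesis $\mathtt{valid}((v', \mathit{set}', \epsilon'), v) = \top$ with $\mathit{set}' \neq \emptyset$, the evaluation must instead proceed through the $v \neq \mathit{genesis}$ branch.

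In that branch, in order for $\mathtt{valid}$ to return $\top$, $\epsilon'$ must be a $\mathtt{View\_Generator\_Evidence}$ and the check at line~\ref{line:check_destination_tail} must pass, i.e.\ $\epsilon'.\mathit{path}.\mathtt{destination}() = v$ and $\epsilon'.\mathit{path}.\mathtt{tail}() = \mathit{set}'$. By \Cref{lst:view_path}, $\epsilon'.\mathit{path}$ is a genuine view-path $[m_1, \dots, m_k]$ ($k \geq 1$) of $\mathtt{INSTALL}$ messages. A process that obtains $\epsilon'$ also obtains every message included in it (model assumptions on included and forwarded messages, \Cref{appx:model}), so the process of the statement obtains all of $m_1, \dots, m_k$, hence the whole view-path $\epsilon'.\mathit{path}$, by time $t$.

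I would then invoke \Cref{lemma:knows_view_system} with $\mathit{path} = \epsilon'.\mathit{path}$. Its first conclusion gives $\epsilon'.\mathit{path}.\mathtt{views}() \subseteq V(t)$; since $v = \epsilon'.\mathit{path}.\mathtt{destination}()$ lies in $\epsilon'.\mathit{path}.\mathtt{views}()$, this yields $v \in V(t)$. Its second conclusion gives $\epsilon'.\mathit{path}.\mathtt{tail}() \in \beta_t(v)$ whenever $\epsilon'.\mathit{path}.\mathtt{tail}() \neq \emptyset$; this hypothesis holds because $\epsilon'.\mathit{path}.\mathtt{tail}() = \mathit{set}' \neq \emptyset$, so $\mathit{set}' \in \beta_t(v)$, as required. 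The only step that needs a moment of care is the middle one — that obtaining the compound object $\epsilon'$ genuinely places each of its constituent $\mathtt{INSTALL}$ messages in $\mathcal{I}^*(t)$ — but this follows directly from how included messages are treated in the model; everything else is an unfolding of definitions, with no induction or further case analysis required.
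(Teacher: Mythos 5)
Your proof is correct and follows essentially the same route as the paper's: rule out $v = \mathit{genesis}$ via the check at line~\ref{line:v_equal_genesis}, extract $\epsilon'.\mathit{path}.\mathtt{destination}() = v$ and $\epsilon'.\mathit{path}.\mathtt{tail}() = \mathit{set}'$ from line~\ref{line:check_destination_tail}, and conclude by \Cref{lemma:knows_view_system}. The extra remark that obtaining $\epsilon'$ places its constituent $\mathtt{INSTALL}$ messages in $\mathcal{I}^*(t)$ is a detail the paper leaves implicit, and your justification of it is sound.
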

\begin{proof}
First, we conclude that $v \neq \mathit{genesis}$ since $\mathtt{valid}((v', \mathit{set}' \neq \emptyset, \epsilon'), \mathit{genesis})$ must return $\bot$ (because of the check at line~\ref{line:v_equal_genesis} of \Cref{lst:view_set_evidence_valid}).
Moreover, we know that $\epsilon'.\mathit{path}.\mathtt{destination()} = v$ and $\epsilon'.\mathit{path}.\mathtt{tail()} = \mathit{set}'$ (by the check at line~\ref{line:check_destination_tail} of \Cref{lst:view_set_evidence_valid}).
By \Cref{lemma:knows_view_system}, we know that $v \in V(t)$ and $\mathit{set}' \in \beta_t(v)$, which concludes the proof.
\end{proof}

Similarly, if a process obtains an evidence $\epsilon$ for $\mathit{set} = \emptyset$ by time $t$, then $\alpha_t(v) = \top$.

\begin{lemma} \label{lemma:subset_of_beta_2}
Let a (correct or faulty) process obtain an evidence $\epsilon'$ by time $t$ such that $\mathtt{valid}((v', \emptyset, \epsilon'), v) = \top$,
Then, $v \in V(t)$ and $\alpha_t(v) = \top$.
\end{lemma}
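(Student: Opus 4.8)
The plan is to mirror the proof of \Cref{lemma:subset_of_beta}, splitting on whether $v = \mathit{genesis}$. If $v = \mathit{genesis}$ there is nothing to prove: $\mathit{genesis}$ is well-founded at every time, so $\mathit{genesis} \in V(t)$, and $\alpha_t(\mathit{genesis}) = \top$ by the very first clause in the definition of $\alpha_t$. So from here on I would assume $v \neq \mathit{genesis}$.

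For $v \neq \mathit{genesis}$, I would unwind the non-$\mathit{genesis}$ branch of the $\mathtt{valid}$ function in \Cref{lst:view_set_evidence_valid}. Since $\mathtt{valid}((v', \emptyset, \epsilon'), v) = \top$, none of the checks in that branch return $\bot$; the one that matters is line~\ref{line:check_destination_tail}, which forces $\epsilon'$ to be a view-generator evidence with $\epsilon'.\mathit{path}.\mathtt{destination}() = v$ and $\epsilon'.\mathit{path}.\mathtt{tail}() = \mathit{set}' = \emptyset$ (in particular $\epsilon'.\mathit{path} \neq \bot$, else $\mathtt{destination}()$ would be undefined). Write $\mathit{path} = \epsilon'.\mathit{path} = [m_1, \ldots, m_k]$, which is a genuine view-path with $k \geq 1$ by the typing constraint of \Cref{lst:view_path}.

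Next, since the process obtains $\epsilon'$ by time $t$ and $\epsilon'$ carries $\mathit{path}$, it obtains $\mathit{path}$ (i.e.\ every $m_i \in \mathcal{I}^*(t)$) by time $t$. \Cref{lemma:knows_view_system} then gives $\mathit{path}.\mathtt{views}() \subseteq V(t)$, so in particular $v = \mathit{path}.\mathtt{destination}() = m_k.\mathtt{destination}() \in V(t)$, and also $m_k.\mathtt{source}() \in \mathit{path}.\mathtt{views}() \subseteq V(t)$ (for $k = 1$ this source is $\mathit{genesis}$, for $k \geq 2$ it is $m_{k-1}.\mathtt{destination}()$, both of which lie in $\mathit{path}.\mathtt{views}()$ by \Cref{lst:view_path}). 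Finally $m_k \in \mathcal{I}^*(t)$, $m_k.\mathtt{destination}() = v$, and $m_k.\mathtt{tail}() = \mathit{path}.\mathtt{tail}() = \emptyset$, so $m_k$ is precisely the witness required by the second clause of the definition of $\alpha_t$; hence $\alpha_t(v) = \top$, which finishes the proof.

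The argument is essentially routine; the only points that need a little care are (i) isolating the $v = \mathit{genesis}$ case, in which neither $\epsilon'.\mathit{path}$ nor $\mathcal{I}^*(t)$ plays any role, and (ii) verifying that the \emph{source} of the last message $m_k$ of the path — not merely its destination — lies in $V(t)$, since that is exactly what the $\alpha_t$ witness condition demands; this is immediate from $\mathit{path}.\mathtt{views}() \subseteq V(t)$ together with the structure of view-paths in \Cref{lst:view_path}.
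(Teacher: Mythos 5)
Your proof is correct and follows essentially the same route as the paper's: dispose of the $v = \mathit{genesis}$ case trivially, use the check at line~\ref{line:check_destination_tail} of \Cref{lst:view_set_evidence_valid} to extract a view-path with destination $v$ and empty tail, invoke \Cref{lemma:knows_view_system} to place the path's views (including the last message's source and $v$ itself) in $V(t)$, and exhibit the last message as the witness for $\alpha_t(v) = \top$. The extra care you take about $m_k.\mathtt{source}()$ lying in $V(t)$ matches the paper's own remark and is the right point to check.
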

\begin{proof}
If $v = \mathit{genesis}$, the lemma trivially holds.
Otherwise, we know that $\epsilon'.\mathit{path}.\mathtt{destination()} = v$ and $\epsilon'.\mathit{path}.\mathtt{tail()} = \emptyset$ (by the check at line~\ref{line:check_destination_tail} of \Cref{lst:view_set_evidence_valid}).
By \Cref{lemma:knows_view_system}, $v \in V(t)$.
Let $m$ be the last message in the ordered set of $\epsilon'.\mathit{path}$.
We know that $m.\mathtt{tail()} = \emptyset$ and $m.\mathtt{destination()} = v$ (by \Cref{lst:view_path}).
Moreover, $m \in \mathcal{I}^*(t)$ and $m.\mathtt{source()} \in V(t)$ (by \Cref{lemma:knows_view_system}).
Hence, $\alpha_t(v) = \top$, the lemma holds.
\end{proof}

We now introduce the last invariant we assume holds at time $t$.

\begin{invariant} \label{invariant:accepted_produced_installable}
Let $v \in V(t)$.
If $\mathit{set} \in \beta_t(v)$, where $\mathit{set} = \{v_1', ..., v_y'\}$ and $y \geq 1$, then:
\begin{compactitem}
    \item $\mathit{seq}' \in \rho_t(v')$, where $v' \in V(t)$, $\alpha_t(v') = \top$ and $\mathit{seq}' = v_1 \to ... \to v_x \to v \to v_1' \to ... \to v_y', (x \geq 0)$, and
    \item for every $v_i \in \{v_1, ..., v_x\}$, $v_i \in V(t)$ and $\alpha_t(v_i) = \bot$, and
    \item for every $v_i \in \{v_1, ..., v_x\}$, $\mathit{seq}_i \in \rho_t(v_i)$, where $\mathit{seq}_i = v_{i + 1} \to ... \to v_x \to v \to v_1' \to ... \to v_y'$.
\end{compactitem}
\end{invariant}

Since $\mathcal{I}^*(0) = \emptyset$ at time $0$, we know that $V(0) = \{\mathit{genesis}\}$ and $\beta_0(\mathit{genesis}) = \emptyset$.
Hence, \Cref{invariant:accepted_produced_installable} is satisfied at time $0$.

\begin{lemma} \label{lemma:beta_genesis_empty}
$\beta_t(\mathit{genesis}) = \emptyset$.
\end{lemma}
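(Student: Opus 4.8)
The plan is to derive a contradiction directly from the invariants that have already been assumed to hold at the fixed time $t$, chiefly \Cref{invariant:accepted_produced_installable}, together with \Cref{invariant:rhos} and \Cref{invariant:initial_view}; no appeal to the view generator's validity property is needed in this form. Note first that $\mathit{genesis}\in V(t)$ by the definition of well-founded views, so the quantity $\beta_t(\mathit{genesis})$ is meaningful. I would then assume, toward a contradiction, that $\beta_t(\mathit{genesis})\neq\emptyset$ and fix some $\mathit{set}=\{v_1',\dots,v_y'\}\in\beta_t(\mathit{genesis})$; by the very definition of $\beta_t$ this set is nonempty, i.e.\ $y\geq 1$.

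Next I would apply \Cref{invariant:accepted_produced_installable} with $v=\mathit{genesis}$. This yields a view $v'\in V(t)$ with $\alpha_t(v')=\top$ and a set $\mathit{seq}'\in\rho_t(v')$ of the shape $\mathit{seq}' = v_1\to\cdots\to v_x\to\mathit{genesis}\to v_1'\to\cdots\to v_y'$ for some $x\geq 0$; in particular $\mathit{genesis}$ itself is one of the views appearing in $\mathit{seq}'$ (it is the view $v$ of that invariant). Since $\mathit{seq}'\in\rho_t(v')$, \Cref{invariant:rhos} tells us that $\mathit{seq}'$ is a sequence and that $\mathit{seq}'.\mathtt{follows}(v')=\top$, hence every view occurring in $\mathit{seq}'$ strictly contains $v'$; taking that view to be $\mathit{genesis}$ gives $v'\subsetneq\mathit{genesis}$. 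On the other hand, from $\alpha_t(v')=\top$ and \Cref{invariant:initial_view} we get either $v'=\mathit{genesis}$ or $\mathit{genesis}\subsetneq v'$. Both possibilities contradict $v'\subsetneq\mathit{genesis}$, so no such $\mathit{set}$ can exist and $\beta_t(\mathit{genesis})=\emptyset$.

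I do not expect a real obstacle here: the statement is a short corollary of the invariant package, and the only point requiring a little care is to recall that the sequence $\mathit{seq}'$ supplied by \Cref{invariant:accepted_produced_installable} literally contains $\mathit{genesis}$, so that the ``follows'' constraint of \Cref{invariant:rhos} can be applied to $\mathit{genesis}$. An alternative, more self-contained route would be to unfold $\mathit{set}\in\beta_t(\mathit{genesis})$ into an $\mathtt{INSTALL}$ message $[\mathtt{INSTALL},v',\{\mathit{genesis}\}\cup\mathit{set},\omega]\in\mathcal{I}^*(t)$ with $\{\mathit{genesis}\}\cup\mathit{set}$ committed by $\mathit{vg}(v')$, and then invoke the validity property of the view generator to conclude that this committed set follows $v'$; but that would force us to separately discharge $\mathit{preconditions}(v',\cdot)$, so citing \Cref{invariant:accepted_produced_installable}---whose proof already encapsulates exactly this reasoning---is the cleaner option.
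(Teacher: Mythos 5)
Your proof is correct and follows essentially the same route as the paper: both apply \Cref{invariant:accepted_produced_installable} with $v=\mathit{genesis}$ to obtain a sequence $\mathit{seq}'\in\rho_t(v')$ containing $\mathit{genesis}$, use the $\mathtt{follows}$ clause of \Cref{invariant:rhos} to conclude $v'\subset\mathit{genesis}$, and contradict \Cref{invariant:initial_view}. Your version merely spells out the final contradiction slightly more explicitly than the paper does.
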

\begin{proof}
By contradiction, let $\beta_t(\mathit{genesis}) \neq \emptyset$.
Let $\mathit{set} \in \beta_t(\mathit{genesis})$, where $\mathit{set} = \{v_1', ..., v_y'\}$ and $y \geq 1$.
According to \Cref{invariant:accepted_produced_installable}, there exists an installable view $v \in V(t)$ such that $\mathit{seq}' \in \rho_t(v)$, where $\mathit{seq}' = v_1 \to ... \to v_x \to \mathit{genesis} \to v_1' \to ... \to v_y'$ and $x \geq 0$.
By \Cref{invariant:rhos}, we know that $\mathit{seq}'.\mathtt{follows}(v) = \top$, which implies that $v \subset \mathit{genesis}$.
However, this statement contradicts \Cref{invariant:initial_view}, which concludes the proof.
\end{proof}

We prove that, if $\mathit{set} \in \beta_t(v)$, then $\mathit{set}$ is a sequence and $\mathit{set}.\mathtt{follows}(v) = \top$.

\begin{lemma} \label{lemma:beta_sequence_follows}
Let $v \in V(t)$.
If $\mathit{set} \in \beta_t(v)$, then (1) $\mathit{set}$ is a sequence, and (2) $\mathit{set}.\mathtt{follows}(v) = \top$.
\end{lemma}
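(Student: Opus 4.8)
The statement is essentially a repackaging of \Cref{invariant:accepted_produced_installable} together with \Cref{invariant:rhos}, so the plan is to unwind those two invariants. First I would note that, by the definition of $\beta_t$, a member $\mathit{set} \in \beta_t(v)$ is never the empty set, so we may write $\mathit{set} = \{v_1', \ldots, v_y'\}$ with $y \geq 1$; this is exactly the hypothesis required to invoke \Cref{invariant:accepted_produced_installable}. Applying that invariant yields an installable view $v' \in V(t)$ (so in particular $v' \in V(t)$) and a set $\mathit{seq}' \in \rho_t(v')$ of the form $\mathit{seq}' = v_1 \to \cdots \to v_x \to v \to v_1' \to \cdots \to v_y'$ with $x \geq 0$. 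Recall that writing a set of views with the $\to$ notation means precisely that the listed views are pairwise comparable and enumerated in strictly increasing containment order; in particular $v \subset v_i'$ for every $i \in [1,y]$, and $v_1', \ldots, v_y'$ are pairwise comparable.

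Next I would apply \Cref{invariant:rhos} with the view taken to be $v'$ and the set taken to be $\mathit{seq}' \in \rho_t(v')$: this gives that $\mathit{seq}'$ is a sequence (all of its views are pairwise comparable) and that $\mathit{seq}'.\mathtt{follows}(v') = \top$. Since $\mathit{set} = \{v_1', \ldots, v_y'\}$, viewed as a set of views, is a subset of $\mathit{seq}'$, every two elements of $\mathit{set}$ are comparable, hence $\mathit{set}$ is a sequence — this is part (1). For part (2), I would use that $v$ precedes each of $v_1', \ldots, v_y'$ in the $\to$-ordering of $\mathit{seq}'$, so $v \subset v_i'$ for every $i$; by the definition of the $\mathtt{follows}$ function (\Cref{lst:sequence}), having $w \supset v$ for every $w \in \mathit{set}$ is exactly the statement $\mathit{set}.\mathtt{follows}(v) = \top$.

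\textbf{Main obstacle.} I do not expect a genuine difficulty here, as the lemma follows almost immediately from the cited invariants. The only points requiring a little care are (a) making explicit that the $v_1 \to \cdots \to v_y'$ notation carries the strict-containment ordering, so that $v$ sits strictly below every view of $\mathit{set}$ (the same reasoning used in \Cref{lemma:leads_to_bigger} and \Cref{lemma:auxiliary_bigger}), and (b) observing that $\beta_t(v)$ by construction contains only nonempty sets, which is what lets us apply \Cref{invariant:accepted_produced_installable} in the first place.
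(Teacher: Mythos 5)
Your proposal is correct and follows essentially the same route as the paper's own proof: write $\mathit{set} = \{v_1', \ldots, v_y'\}$ with $y \geq 1$, invoke \Cref{invariant:accepted_produced_installable} to obtain an installable $v'$ with $\mathit{seq}' = v_1 \to \cdots \to v_x \to v \to v_1' \to \cdots \to v_y' \in \rho_t(v')$, and read off both conclusions from the ordering of $\mathit{seq}'$. Your version is slightly more explicit than the paper's (which leaves the final step as "therefore"), in that you spell out the appeal to \Cref{invariant:rhos} and the strict-containment meaning of the $\to$ notation, but the argument is the same.
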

\begin{proof}
Let $\mathit{set} = \{v_1', ..., v_y'\}$, where $y \geq 1$.
By \Cref{invariant:accepted_produced_installable}, there exists a view $v' \in V(t)$ such that (1) $\alpha_t(v') = \top$, and (2) $\mathit{seq}' \in \rho_t(v')$, where $\mathit{seq}' = v_1 \to ... \to v_x \to v \to v_1' \to ... \to v_y'$, where $x \geq 0$.
Therefore, $\mathit{set}$ is a sequence and $\mathit{set}.\mathtt{follows}(v) = \top$.
\end{proof}

The next lemma shows that, if $\mathit{seq} \in \beta_t(v)$, where $v$ is an installable view, and $\mathit{seq}$ ``comes from'' an installable view $v'$, then $v'$ leads to $v$.
Recall that all sets in $\beta_t(v)$ are sequences according to \Cref{lemma:beta_sequence_follows}.

\begin{lemma} \label{lemma:invariant_4_installable}
Let $\mathit{seq} \in \beta_t(v)$, where $\mathit{seq} = v_1' \to ... \to v_y', (y \geq 1)$ and $\alpha_t(v) = \top$. 
Then, $\mathit{seq}' \in \rho_t(v')$, where $\mathit{seq}' = v_1 \to ... \to v_x \to v \to v_1' \to ... \to v_y', (x \geq 0)$ and $v'$ leads to $v$ at time $t$. 
\end{lemma}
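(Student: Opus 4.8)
The plan is to pull the claimed sequence $\mathit{seq}'$ straight out of an already-available invariant and then to spend the real effort identifying $v'$ as \emph{the} unique installable view that leads to $v$. First I would dispose of the degenerate case: since $\mathit{seq}\in\beta_t(v)$ with $y\geq 1$, the set $\beta_t(v)$ is non-empty, so \Cref{lemma:beta_genesis_empty} forces $v\neq\mathit{genesis}$. Then I would apply \Cref{invariant:accepted_produced_installable} to $v$ with $\mathit{set}=\mathit{seq}$; this directly yields an installable view $v'\in V(t)$ (i.e.\ $\alpha_t(v')=\top$) together with a sequence $\mathit{seq}' = v_1\to\dots\to v_x\to v\to v_1'\to\dots\to v_y'\in\rho_t(v')$ whose interior views $v_1,\dots,v_x$ are non-installable and obey the nested $\rho_t$ conditions. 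That already establishes the ``$\mathit{seq}'\in\rho_t(v')$'' half of the statement, so all that remains is to prove that this particular $v'$ leads to $v$ at time $t$.

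For the ``$v'$ leads to $v$'' part I would invoke \Cref{invariant:creation_installable} on the installable view $v\neq\mathit{genesis}$ to obtain an installable $\hat v$ which, unpacking \Cref{definition:leads}, leads to $v$ at time $t$; then I argue $v'=\hat v$ by contradiction. Assume $v'\neq\hat v$. They are comparable by \Cref{lemma:installable_sequence}; moreover $v\in\mathit{seq}'$ with $\mathit{seq}'$ following $v'$ gives $v'\subset v$ by \Cref{invariant:rhos}, and $\hat v\subset v$ by \Cref{lemma:leads_to_bigger}, so both candidates lie strictly below $v$. If $\hat v\subset v'$, then $\hat v\subset v'\subset v$ with $v'$ installable contradicts \Cref{lemma:installable_between_leading}. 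If $v'\subset\hat v$, then $v'$ is not the greatest installable view, so by \Cref{lemma:greatest_does_not_lead} (together with \Cref{lemma:n-1}) $v'$ leads to some installable $v''$ via a sequence $\mathit{seq}''\in\rho_t(v')$ with non-installable interior; comparing $\mathit{seq}'$ and $\mathit{seq}''$ via \Cref{lemma:weak_accuracy} and using that $v$ and $v''$ are installable while the interior views are not, I would force $v''=v$ (the alternative $v''=v_l'$ for some $1\leq l\leq y$ again contradicts \Cref{lemma:installable_between_leading}, since then $\hat v$ would be installable and strictly between $v'$ and $v''$). Then $v'$ leads to $v$, so $v'=\hat v$ by \Cref{lemma:leads_from_one}, contradicting $v'\subset\hat v$. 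Hence $v'=\hat v$ and $v'$ leads to $v$, which finishes the proof.

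The main obstacle is precisely that last step, pinning $v'$ down to the unique view that leads to $v$. \Cref{invariant:accepted_produced_installable} only guarantees that \emph{some} sequence that passes through $v$ and continues on to $v_1',\dots,v_y'$ lies in $\rho_t(v')$; it does not say that the truncation $v_1\to\dots\to v_x\to v$ is itself committed, so ``$v'$ leads to $v$'' cannot be read off that invariant alone, and one really does have to import the existence of the view leading to $v$ from \Cref{invariant:creation_installable} and then reconcile the two. The delicate bookkeeping is in the comparison-of-sequences case analysis, keeping careful track of which positions of a committed sequence are allowed to hold installable views and which are not.
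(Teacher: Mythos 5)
Your proof is correct and rests on the same machinery as the paper's: \Cref{invariant:accepted_produced_installable} supplies $v'$ and $\mathit{seq}'$, and the ``leads to'' claim is settled by comparing $\mathit{seq}'$ against the committed sequence in $\rho_t(v')$ witnessing what $v'$ leads to, using \Cref{lemma:weak_accuracy}, the non-installability of the interior views, and \Cref{lemma:installable_between_leading}. The only difference is structural and cosmetic: the paper skips your detour through $\hat v$ and \Cref{invariant:creation_installable}, instead assuming directly that $v'$ leads to some $v'' \neq v$ and deriving a contradiction --- your hard case $v' \subset \hat v$ reproduces essentially that same argument.
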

\begin{proof}
By \Cref{invariant:accepted_produced_installable}, we know that:
\begin{compactitem}
    \item $\mathit{seq}' \in \rho_t(v')$, where $\alpha_t(v') = \top$ and $\mathit{seq}' = v_1 \to ... \to v_x \to v \to v_1' \to ... \to v_y', (x \geq 0)$, and
    \item for every $v_i \in \{v_1, ..., v_x\}$, $v_i \in V(t)$ and $\alpha_t(v_i) = \bot$, and
    \item for every $v_i \in \{v_1, ..., v_x\}$, $\mathit{seq}_i \in \rho_t(v_i)$, where $\mathit{seq}_i = v_{i + 1} \to ... \to v_x \to v \to v_1' \to ... \to v_y'$.
\end{compactitem}
In order to conclude the proof, we need to prove that $v'$ indeed leads to $v$ at time $t$.

By \Cref{invariant:rhos}, we have that $v' \subset v$.
By contradiction, suppose that $v'$ leads to $v'' \neq v$ at time $t$ (such view $v''$ exists since $v' \subset v$ and \cref{lemma:n-1,lemma:greatest_does_not_lead}).
Therefore, by \Cref{definition:leads}, $\mathit{seq}'' \in \rho_t(v')$, where $\mathit{seq}'' = ... \to v''$.
Let us analyze all three possible cases according to \Cref{lemma:weak_accuracy}:
\begin{compactitem}
    \item $\mathit{seq}' = \mathit{seq}''$: Hence, $v \subset v''$ and \Cref{definition:leads} is not satisfied for $v''$ (since $\alpha_t(v) = \top$).
    
    \item $\mathit{seq}' \subset \mathit{seq}''$: Hence, $v \in \mathit{seq}''$.
    Thus, $v \subset v''$ and \Cref{definition:leads} is not satisfied for $v''$ (since $\alpha_t(v) = \top$).
    
    \item $\mathit{seq}' \supset \mathit{seq}''$: Hence, $v'' \in \mathit{seq}'$.
    If $v'' \subset v$, \Cref{invariant:accepted_produced_installable} is violated since $\alpha_t(v'') = \top$.
    If $v \subset v''$, then we have the following: (1) $v'$ leads to $v''$, and (2) $v' \subset v \subset v''$, where $\alpha_t(v) = \top$.
    Hence, \Cref{lemma:installable_between_leading} is contradicted.
\end{compactitem}
The lemma holds.
\end{proof}

The next lemma shows that, if $\mathit{seq} \in \beta_t(v)$, where $v$ is non-installable, and $\mathit{seq}$ ``comes from'' an installable view $v'$, then $v$ is an auxiliary view for $v'$.

\begin{lemma} \label{lemma:invariant_4_auxiliary}
Let $\mathit{seq} \in \beta_t(v)$, where $\mathit{seq} = v_1' \to ... \to v_y', (y \geq 1)$ and $\alpha_t(v) = \bot$. 
Then, $\mathit{seq}' \in \rho_t(v')$, where $\mathit{seq}' = v_1 \to ... \to v_x \to v \to v_1' \to ... \to v_y', (x \geq 0)$, views from the $\{v_1, ..., v_x\}$ set are not installable at time $t$ and $v$ is an auxiliary view for $v'$ at time $t$. 
\end{lemma}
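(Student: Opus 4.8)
The plan is to obtain this lemma as a near-immediate consequence of \Cref{invariant:accepted_produced_installable}, in the same spirit as \Cref{lemma:invariant_4_installable} but with much less work: here the desired conclusion is essentially a verbatim instance of the invariant's conclusion together with an unpacking of \Cref{definition:auxiliary_view}.

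First I would note that the hypothesis $\mathit{seq} \in \beta_t(v)$ with $\mathit{seq} = v_1' \to \dots \to v_y'$ and $y \geq 1$ presupposes $v \in V(t)$, since $\beta_t(\cdot)$ is defined only on $V(t)$. Hence \Cref{invariant:accepted_produced_installable} applies with $\mathit{set} = \mathit{seq}$, yielding a view $v' \in V(t)$ with $\alpha_t(v') = \top$, a sequence $\mathit{seq}' \in \rho_t(v')$ of the form $v_1 \to \dots \to v_x \to v \to v_1' \to \dots \to v_y'$ ($x \geq 0$), the fact that each $v_i \in \{v_1, \dots, v_x\}$ lies in $V(t)$ and satisfies $\alpha_t(v_i) = \bot$ (i.e.\ is not installable at time $t$), and, for each such $v_i$, a sequence $\mathit{seq}_i \in \rho_t(v_i)$ equal to $v_{i+1} \to \dots \to v_x \to v \to v_1' \to \dots \to v_y'$. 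These are exactly the first two assertions of the lemma.

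For the remaining assertion I would simply verify the clauses of \Cref{definition:auxiliary_view} against what is now in hand: $\alpha_t(v') = \top$ (from the invariant); $\alpha_t(v) = \bot$ (the standing hypothesis of this lemma, which is precisely the case distinguishing it from \Cref{lemma:invariant_4_installable}); $\mathit{seq}' \in \rho_t(v')$ with the correct shape; all intermediate views $v_1, \dots, v_x$ in $V(t)$ and non-installable; and each $\mathit{seq}_i \in \rho_t(v_i)$ of the prescribed form. Every clause of the definition is met, so $v$ is an auxiliary view for $v'$ at time $t$, which completes the argument. There is no genuine obstacle here, since everything is a direct read-off of \Cref{invariant:accepted_produced_installable}; the only points requiring care are the notational clash (the symbol $\mathit{seq}$ already names the element of $\beta_t(v)$, so the longer sequence supplied by the invariant must be kept as $\mathit{seq}'$, as the lemma statement already does) and not conflating the existentially introduced $v'$ with any specific installable view. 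In particular, unlike the installable case of \Cref{lemma:invariant_4_installable}, no contradiction argument via \Cref{lemma:weak_accuracy} is needed, because the auxiliary-view relation does not require the sequence to terminate at $v$, so the invariant's output can be used as is.
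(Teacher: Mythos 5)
Your proof is correct and matches the paper's own argument: both apply \Cref{invariant:accepted_produced_installable} directly to $\mathit{seq} \in \beta_t(v)$ and then observe that the invariant's output is exactly the witness required by \Cref{definition:auxiliary_view}, with no further work. Your closing observation — that, unlike \Cref{lemma:invariant_4_installable}, no comparability/contradiction argument is needed here — is also accurate.
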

\begin{proof}
By \Cref{invariant:accepted_produced_installable}, we know that:
\begin{compactitem}
    \item $\mathit{seq}' \in \rho_t(v')$, where $\alpha_t(v') = \top$ and $\mathit{seq}' = v_1 \to ... \to v_x \to v \to v_1' \to ... \to v_y', (x \geq 0)$, and
    \item for every $v_i \in \{v_1, ..., v_x\}$, $v_i \in V(t)$ and $\alpha_t(v_i) = \bot$, and
    \item for every $v_i \in \{v_1, ..., v_x\}$, $\mathit{seq}_i \in \rho_t(v_i)$, where $\mathit{seq}_i = v_{i + 1} \to ... \to v_x \to v \to v_1' \to ... \to v_y'$.
\end{compactitem}
Hence, \Cref{definition:auxiliary_view} is satisfied and $v$ is an auxiliary view for $v'$ at time $t$.
Moreover, views from the $\{v_1, ..., v_x\}$ set are not installable at time $t$, which concludes the proof.
\end{proof}

The next lemma shows that, if $\mathit{seq}_1, \mathit{seq}_2 \in \beta_t(v)$, then $\mathit{seq}_1$ and $\mathit{seq}_2$ are comparable.

\begin{lemma} \label{lemma:comparable_beta}
Let $v \in V(t)$.
If $\mathit{seq}_1, \mathit{seq}_2 \in \beta_t(v)$, then either $\mathit{seq}_1 \subseteq \mathit{seq}_2$ or $\mathit{seq}_1 \supset \mathit{seq}_2$.
\end{lemma}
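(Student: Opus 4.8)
The plan is to reduce the whole statement to a single application of \Cref{lemma:weak_accuracy}. Write $\mathit{set}_i = \{v\}\cup\mathit{seq}_i$ for $i\in\{1,2\}$, where $\mathit{set}_i$ is the INSTALL tail witnessing $\mathit{seq}_i\in\beta_t(v)$; by \Cref{lemma:beta_sequence_follows} every view of $\mathit{seq}_i$ strictly contains $v$, so inside any sequence that contains $\mathit{set}_i$ and whose remaining views are all $\subsetneq v$, one recovers $\mathit{seq}_i$ exactly as the set of views strictly above $v$. First I would dispose of the degenerate possibility: $v\neq\mathit{genesis}$ by \Cref{lemma:beta_genesis_empty}, so $\alpha_t(v)$ is either $\top$ or $\bot$, and I would split on this. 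The target in each branch is to produce one view $u\in V(t)$ and two witnesses $\mathit{seq}_1',\mathit{seq}_2'\in\rho_t(u)$, each a sequence of the form ``some views strictly below $v$, then $v$, then $\mathit{seq}_i$'', after which \Cref{lemma:weak_accuracy} (comparability of elements of $\rho_t(u)$) compares $\mathit{seq}_1'$ and $\mathit{seq}_2'$, and restricting the resulting containment to the views strictly above $v$ finishes the lemma.

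In the case $\alpha_t(v)=\top$, I would apply \Cref{lemma:invariant_4_installable} to each of $\mathit{seq}_1,\mathit{seq}_2\in\beta_t(v)$. This yields installable views $w_1,w_2$ each leading to $v$ at time $t$, together with witnesses $\mathit{seq}_1'\in\rho_t(w_1)$ and $\mathit{seq}_2'\in\rho_t(w_2)$ of exactly the required shape (by \Cref{invariant:rhos} the part of $\mathit{seq}_i'$ below $v$ consists of strict predecessors of $v$, and the part above $v$ is exactly $\mathit{seq}_i$). Since $\alpha_t(v)=\top$, \Cref{lemma:leads_from_one} forces $w_1=w_2=:u$, so both witnesses lie in $\rho_t(u)$ and the common finishing step applies.

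In the case $\alpha_t(v)=\bot$, I would instead apply \Cref{lemma:invariant_4_auxiliary} to each of $\mathit{seq}_1,\mathit{seq}_2$. This exhibits installable views $v_1',v_2'$ for which $v$ is an auxiliary view at time $t$, with witnesses $\mathit{seq}_1'\in\rho_t(v_1')$ and $\mathit{seq}_2'\in\rho_t(v_2')$ again of the required shape. Then \Cref{lemma:auxiliary_for_one_at_most} (a view is auxiliary for at most one installable view) gives $v_1'=v_2'=:u$, and the finishing step applies. It produces $\mathit{seq}_1\subseteq\mathit{seq}_2$ or $\mathit{seq}_1\supseteq\mathit{seq}_2$; in the latter case either $\mathit{seq}_1=\mathit{seq}_2$ (hence $\mathit{seq}_1\subseteq\mathit{seq}_2$) or $\mathit{seq}_1\supset\mathit{seq}_2$, so the stated disjunction holds.

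I expect the finishing step to be the only delicate point. One must verify that the ``views strictly above $v$'' operation on sequences is monotone under $\subseteq$ and recovers precisely $\mathit{seq}_i$ from $\mathit{seq}_i'$: this uses that $\mathit{seq}_i'$ is a sequence (so its views are pairwise comparable), that $\mathit{seq}_i'$ extends $\{v\}\cup\mathit{seq}_i$, and that every view of $\mathit{seq}_i'$ which is not strictly above $v$ is a strict predecessor of $v$ — all coming from \Cref{invariant:rhos} and \Cref{lemma:beta_sequence_follows}. Everything else is a mechanical dispatch through the already-established structural lemmas on installable, non-installable, leading-to, and auxiliary views.
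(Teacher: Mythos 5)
Your proposal is correct and follows essentially the same route as the paper's proof: split on $\alpha_t(v)$, use \Cref{lemma:invariant_4_installable} with \Cref{lemma:leads_from_one} (resp.\ \Cref{lemma:invariant_4_auxiliary} with \Cref{lemma:auxiliary_for_one_at_most}) to pin both witnesses to a single source view $u$, and conclude by \Cref{lemma:weak_accuracy}. The ``finishing step'' you flag as delicate is indeed the only detail the paper leaves implicit, and your justification of it is sound.
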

\begin{proof}
Let $\alpha_t(v) = \top$.
For every $\mathit{seq} \in \beta_t(v)$, where $\mathit{seq} = v_1' \to ... \to v_y', (y \geq 1)$, $\mathit{seq}' \in \rho_t(v')$, where $\mathit{seq}' = ... \to v \to v_1' \to ... \to v_y'$ and $v'$ leads to $v$ at time $t$ (according to \Cref{lemma:invariant_4_installable}).
By \Cref{lemma:leads_from_one}, only view $v'$ leads to $v$ at time $t$.
Thus, the lemma follows from \Cref{lemma:weak_accuracy}.

Let $\alpha_t(v) = \bot$.
For every $\mathit{seq} \in \beta_t(v)$, where $\mathit{seq} = v_1' \to ... \to v_y', (y \geq 1)$, $\mathit{seq}' \in \rho_t(v')$, where $\mathit{seq}' = ... \to v \to v_1' \to ... \to v_y'$ and $v$ is an auxiliary view for $v'$ at time $t$ (according to \Cref{lemma:invariant_4_auxiliary}).
According to \Cref{lemma:auxiliary_for_one_at_most}, $v$ is an auxiliary view only for view $v'$ at time $t$.
Thus, the lemma follows from \Cref{lemma:weak_accuracy}.
\end{proof}

Let $[\mathtt{INSTALL}, v, \mathit{set}, \omega]$ be the first $\mathtt{INSTALL}$ message obtained by a (correct or faulty) process after time $t$ such that $v \in V(t)$.
We prove that either $\mathit{set} \in \beta_t(v)$ or $\alpha_t(v) = \top$.
Note that we do not consider $\mathtt{INSTALL}$ messages with $v \notin V(t)$ since, in that case, $v$ is not a valid view (by \Cref{lemma:not_well_founded_not_valid}).

Importantly, if multiple new $\mathtt{INSTALL}$ messages are obtained at the same time $t' > t$, the notion of the ``first'' $\mathtt{INSTALL}$ message is not defined.
However, in that case, we order the messages in any arbitrary way.
In other words, we ``artificially'' select the ``first'' $\mathtt{INSTALL}$ message and observe its properties, as well as the properties after this message has been processed (i.e., added to the set of all obtained $\mathtt{INSTALL}$ messages).
Importantly, we process all messages from time $t'$ before moving on to processing messages from some time $t'' > t'$.
We are allowed to ``separately'' process the $\mathtt{INSTALL}$ messages from time $t'$ because, after all these $\mathtt{INSTALL}$ messages are processed, we obtain the exact same system state as we would obtain if all the messages were processed ``together''.

\begin{lemma} \label{lemma:produced_then_accepted}
Let $[\mathtt{INSTALL}, v, \mathit{set}, \omega] \notin \mathcal{I}^*(t)$ be the first $\mathtt{INSTALL}$ message obtained by a (correct or faulty) process after time $t$ such that $v \in V(t)$; let that time be $t' > t$.
Then, $\mathit{set} \in \beta_t(v)$ or $\alpha_t(v) = \top$.
\end{lemma}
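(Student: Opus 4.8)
The plan is to apply the \textbf{Safety} property of $\mathit{vg}(v)$ to the set $\mathit{set}$ that $m$ carries, and then transport the conclusion from the commit time back to time $t$. Since $m = [\mathtt{INSTALL}, v, \mathit{set}, \omega]$ is a well-formed $\mathtt{INSTALL}$ message, $\mathtt{verify\_output}(\mathit{set}, \mathit{vg}(v), \omega) = \top$ (see \Cref{lst:install_messages}), so $\mathit{set}$ is committed by $\mathit{vg}(v)$ at some time $t_c \le t'$; and since $v \in V(t)$, the view $v$ is valid (\Cref{invariant:valid}), so a quorum of its members are correct and the properties of $\mathit{vg}(v)$ in \Cref{lst:view_generator_properties} are available. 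We may assume $v \neq \mathit{genesis}$, since otherwise $\alpha_t(v) = \top$ holds by definition. It then suffices to (i) establish $\mathit{preconditions}(v, t'-1) = \top$, which by monotonicity of $\mathit{preconditions}(v,\cdot)$ in time and $t_c \le t'$ also gives $\mathit{preconditions}(v, t_c-1) = \top$, and (ii) know that the part of the reconfiguration state relevant to $v$ is frozen on $[t, t'-1]$.

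The crux is the stability claim: $V(\tau) = V(t)$, $\beta_\tau(v) = \beta_t(v)$ and $\alpha_\tau(v) = \alpha_t(v)$ for every $\tau \in [t, t'-1]$. I would prove it by contradiction, letting $\tau^* \in (t, t'-1]$ be the first time at which one of these quantities changes. Each of $\rho$, $\beta$, $\alpha$ changes only when a fresh $\mathtt{INSTALL}$ message $m''$ is obtained whose source lies in $V(\tau^*) = V(t)$ (using that $V$ has not yet grown by $\tau^*$); but then $m''$ is a fresh post-$t$ $\mathtt{INSTALL}$ message with source in $V(t)$ obtained at $\tau^* < t'$, contradicting that $t'$ is the first time this occurs. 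If instead $V$ grows at $\tau^*$, pick $v^{\mathit{new}} \in V(\tau^*) \setminus V(t)$ and a view-path $[m_1, \dots, m_k] \subseteq \mathcal{I}^*(\tau^*)$ with $m_k.\mathtt{destination}() = v^{\mathit{new}}$ (\Cref{lst:view_path}), and take the least index $i$ with $m_i \notin \mathcal{I}^*(t)$ (which exists, otherwise $v^{\mathit{new}} \in V(t)$). Then $[m_1, \dots, m_{i-1}] \subseteq \mathcal{I}^*(t)$ is a view-path to $m_i.\mathtt{source}()$ (or $m_i.\mathtt{source}() = \mathit{genesis}$), so $m_i.\mathtt{source}() \in V(t)$, while $m_i \notin \mathcal{I}^*(t)$ is obtained at a time $\le \tau^* < t'$ --- the same contradiction. (This is also the reason the lemma may restrict attention to sources in $V(t)$: by \Cref{lemma:not_well_founded_not_valid} an $\mathtt{INSTALL}$ message with any other source has an invalid source.)

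Given the stability claim, I would verify $\mathit{preconditions}(v, t'-1) = \top$ tuple by tuple. For $(v', \mathit{set}', \epsilon') \in \Lambda_{t'-1}^v$ the evidence $\epsilon'$ is obtained by $t'-1$ and $\mathtt{valid}((v', \mathit{set}', \epsilon'), v) = \top$, so \Cref{lemma:subset_of_beta} gives $\mathit{set}' \in \beta_{t'-1}(v) = \beta_t(v)$ when $\mathit{set}' \neq \emptyset$, and \Cref{lemma:subset_of_beta_2} gives $\alpha_{t'-1}(v) = \alpha_t(v) = \top$ when $\mathit{set}' = \emptyset$. The five conjuncts of $\mathit{preconditions}$ now follow: sequenceness and $\mathit{set}'.\mathtt{follows}(v) = \top$ from \Cref{lemma:beta_sequence_follows} (with \Cref{lemma:beta_genesis_empty} trivializing any $v=\mathit{genesis}$ reading); the conditions $v \subset v'$, $v' \in \mathit{set}'$, and the $(-, r)$-freeness clauses are exactly the checks performed by the $\mathtt{valid}$ predicate of \Cref{lst:view_set_evidence_valid}; and comparability of two sets $\mathit{set}', \mathit{set}''$ occurring in $\Lambda_{t'-1}^v$ follows from \Cref{lemma:comparable_beta}, applied inside $\beta_\tau(v)$ for $\tau$ the later of the two evidence times (using monotonicity of $\beta$), together with the fact that $\emptyset$ is comparable to every set.

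Finally, \textbf{Safety} of $\mathit{vg}(v)$ at $t_c$ yields a view $v^*$ and evidence $\epsilon^*$ with $(v^*, \emptyset, \epsilon^*) \in \Lambda_{t_c-1}^v$ or $(v^*, \mathit{set}, \epsilon^*) \in \Lambda_{t_c-1}^v$. In the first case \Cref{lemma:subset_of_beta_2} gives $\alpha_{t_c-1}(v) = \top$, hence $\alpha_t(v) = \top$ --- by the stability claim if $t_c-1 \ge t$, and by monotonicity of $\alpha$ if $t_c - 1 < t$. In the second case \textbf{Validity} of $\mathit{vg}(v)$ (applicable because $\mathit{preconditions}(v, t_c-1) = \top$) forces $\mathit{set} \neq \emptyset$, so \Cref{lemma:subset_of_beta} gives $\mathit{set} \in \beta_{t_c-1}(v)$, hence $\mathit{set} \in \beta_t(v)$ by the same reasoning. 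Either way the lemma follows. I expect the stability claim --- and within it the view-path tracing that forbids new well-founded views between $t$ and $t'$ --- to be the main obstacle; the remainder is bookkeeping against the $\mathtt{valid}$ predicate and the previously established properties of $\beta_t$ and $\alpha_t$.
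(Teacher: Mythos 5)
Your proof is correct and follows essentially the same route as the paper's: establish $\mathit{preconditions}(v, t'-1) = \top$ from the first-message assumption together with \Cref{lemma:subset_of_beta}, \Cref{lemma:beta_sequence_follows} and \Cref{lemma:comparable_beta}, then invoke the safety (and validity) of $\mathit{vg}(v)$ and pull the resulting tuple of $\Lambda^v$ back to time $t$ via \Cref{lemma:subset_of_beta} and \Cref{lemma:subset_of_beta_2}. The only difference is that you spell out the stability claim ($V$, $\beta_{\cdot}(v)$ and $\alpha_{\cdot}(v)$ are frozen on $[t, t'-1]$, proved by the view-path tracing argument) that the paper compresses into the phrase ``since $t'$ is the first time after $t$ at which an $\mathtt{INSTALL}$ message is obtained,'' which is a welcome clarification rather than a deviation.
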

\begin{proof}
By \Cref{invariant:valid}, $v$ is a valid view.
Moreover, $\mathit{preconditions}(v, t' - 1) = \top$ (since $t'$ is the first time after time $t$ at which an $\mathtt{INSTALL}$ message is obtained and by \cref{lemma:subset_of_beta,lemma:beta_sequence_follows,lemma:comparable_beta}).
By safety of $\mathit{vg}(v)$, we have that $(v^*, \emptyset, \epsilon^*) \in \Lambda_{t' - 1}^v$ or $(v^*, \mathit{set}, \epsilon^*) \in \Lambda_{t' - 1}^v$.
If $(v^*, \emptyset, \epsilon^*) \in \Lambda_{t' - 1}^v$, then $\alpha_t(v) = \top$ (by \Cref{lemma:subset_of_beta_2}).
If $(v^*, \mathit{set}, \epsilon^*) \in \Lambda_{t' - 1}^v$, then $\mathit{set} \in \beta_t(v)$ (by \Cref{lemma:subset_of_beta}).
Therefore, the lemma holds.
\end{proof}

Next, we prove some properties of $\mathit{set}$.

\begin{lemma} \label{lemma:set_sequence}
Let $[\mathtt{INSTALL}, v, \mathit{set}, \omega] \notin \mathcal{I}^*(t)$ be the first $\mathtt{INSTALL}$ message obtained by a (correct or faulty) process after time $t$ such that $v \in V(t)$; let that time be $t' > t$.
Then, (1) $\mathit{set} \neq \emptyset$, (2) $\mathit{set}$ is a sequence, and (3) $\mathit{set}.\mathtt{follows}(v) = \top$.
\end{lemma}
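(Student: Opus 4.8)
The plan is to reduce \Cref{lemma:set_sequence} to the validity property of the view generator $\mathit{vg}(v)$. By the well-formedness of $\mathtt{INSTALL}$ messages (\Cref{lst:install_messages}), the certificate $\omega$ carried by the message satisfies $\mathtt{verify\_output}(\mathit{set}, \mathit{vg}(v), \omega) = \top$, so $\mathit{set}$ is committed by $\mathit{vg}(v)$; moreover, since the message (hence $\omega$) is first obtained at time $t'$, $\mathit{set}$ is committed by $\mathit{vg}(v)$ at time $t'$. Since $v \in V(t)$, $v$ is a valid view (\Cref{invariant:valid}), so the failure model guarantees a quorum of correct members of $v$ and $\mathit{vg}(v)$ is well-defined. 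Hence it remains only to show $\mathit{preconditions}(v, t' - 1) = \top$, after which the validity property of $\mathit{vg}(v)$ (\Cref{lst:view_generator_properties}) yields all three claims at once: $\mathit{set} \neq \emptyset$, $\mathit{set}$ is a sequence, and $\mathit{set}.\mathtt{follows}(v) = \top$.

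To establish $\mathit{preconditions}(v, t' - 1) = \top$ I would reuse the argument already carried out inside the proof of \Cref{lemma:produced_then_accepted}. Because $t'$ is the first time after $t$ at which any new $\mathtt{INSTALL}$ message is obtained, nothing relevant to $\Lambda^v_{t'-1}$ changes between $t$ and $t'-1$. For every tuple $(v', \mathit{set}', \epsilon') \in \Lambda^v_{t'-1}$: the conditions $v \subset v'$ when $\mathit{set}' = \emptyset$, $v' \in \mathit{set}'$ and $\mathit{set}'.\mathtt{follows}(v) = \top$ when $\mathit{set}' \neq \emptyset$, and the absence of $(-,r)$ changes whenever $(+,r) \notin v$, are all enforced directly by the $\mathtt{valid}$ function (lines~\ref{line:view_in_seq}, \ref{line:set_follows}, \ref{line:check_valid_view_set_proof}, \ref{line:check_minus_r_2} of \Cref{lst:view_set_evidence_valid}); the remaining two conditions --- that each such $\mathit{set}'$ is a sequence and that any two of them are comparable --- follow by first placing $\mathit{set}'$ in $\beta_t(v')$ via \Cref{lemma:subset_of_beta} (resp.\ noting $\alpha_t(v') = \top$ via \Cref{lemma:subset_of_beta_2} when $\mathit{set}' = \emptyset$) and then invoking \Cref{lemma:beta_sequence_follows} and \Cref{lemma:comparable_beta}.

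The only real care needed is the bookkeeping around ``$t'$ is the first new $\mathtt{INSTALL}$ message'': one must make sure that processing $\mathtt{INSTALL}$ messages obtained at the same instant one at a time (as discussed just before \Cref{lemma:produced_then_accepted}) does not break the induction, and that no earlier $\mathtt{INSTALL}$ message with a non-well-founded --- hence, by \Cref{lemma:not_well_founded_not_valid}, invalid --- source interferes with $\Lambda^v_{t'-1}$. Once $\mathit{preconditions}(v, t'-1) = \top$ is in hand, the conclusion is immediate from validity of $\mathit{vg}(v)$, so this lemma is essentially a restatement of that property packaged for $\mathtt{INSTALL}$ messages, and I do not expect any substantive new difficulty beyond what \Cref{lemma:produced_then_accepted} already required.
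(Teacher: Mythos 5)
Your proposal is correct and follows essentially the same route as the paper's proof: establish that $v$ is valid via \Cref{invariant:valid}, argue $\mathit{preconditions}(v, t'-1) = \top$ using the fact that $t'$ is the first new $\mathtt{INSTALL}$ message after $t$ together with \cref{lemma:subset_of_beta,lemma:beta_sequence_follows,lemma:comparable_beta}, and conclude by the validity property of $\mathit{vg}(v)$. You simply spell out the preconditions check in more detail than the paper does.
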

\begin{proof}
By \Cref{invariant:valid}, $v$ is a valid view.
Moreover, $\mathit{preconditions}(v, t' - 1) = \top$ (since $t'$ is the first time after time $t$ at which an $\mathtt{INSTALL}$ message is obtained and by \cref{lemma:subset_of_beta,lemma:beta_sequence_follows,lemma:comparable_beta}).
Hence, the lemma follows from validity of $\mathit{vg}(v)$.
\end{proof}

Finally, we prove that all well-founded views at time $t$ are comparable.
This lemma is the crucial ingredient for proving the view comparability property (see \Cref{lst:reconfiguration_properties_new}).

\begin{lemma} \label{lemma:all_comparable}
Let $v, v' \in V(t)$.
Then, either $v \subseteq v'$ or $v \supset v'$.
\end{lemma}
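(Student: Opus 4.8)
The plan is to prove the lemma by a straightforward case analysis on the installability status (the predicate $\alpha_t$) of $v$ and $v'$, reducing every case to facts already established: that the installable views at time $t$ form a sequence (\Cref{lemma:installable_sequence}), that every non‑installable well‑founded view is an auxiliary view of a \emph{unique} installable view (\Cref{lemma:auxiliary_for_one}, \Cref{lemma:auxiliary_for_one_at_most}), the containment relations between an auxiliary view, its installable view, and the view that installable view leads to (\Cref{lemma:auxiliary_bigger}, \Cref{lemma:auxiliary_smaller}), and the fact that nothing installable lies strictly between a view and the view it leads to (\Cref{lemma:installable_between_leading}). Note also that whenever an installable view is not the greatest one, it leads to some (installable) view, by \Cref{lemma:n-1} and \Cref{lemma:greatest_does_not_lead}; this is the only ``existence'' ingredient used.

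First I would dispatch the easy case: if $\alpha_t(v) = \alpha_t(v') = \top$, comparability is exactly \Cref{lemma:installable_sequence}. Next, the mixed case: say $\alpha_t(v) = \top$ and $\alpha_t(v') = \bot$ (the other mixed case is symmetric). By \Cref{lemma:auxiliary_for_one}, $v'$ is auxiliary for some installable $w'$, and $w' \subset v'$ by \Cref{lemma:auxiliary_bigger}. Since $v$ and $w'$ are installable they are comparable; if $v \subseteq w'$ we are done since then $v \subseteq w' \subset v'$. Otherwise $w' \subsetneq v$, so $w'$ is not the greatest installable view, hence leads to some installable $w''$; \Cref{lemma:installable_between_leading} (together with comparability of the installable views $v$ and $w''$) forces $w'' \subseteq v$, while \Cref{lemma:auxiliary_smaller} gives $v' \subset w''$, so $v' \subset w'' \subseteq v$.

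Finally, the case $\alpha_t(v) = \alpha_t(v') = \bot$: by \Cref{lemma:auxiliary_for_one}, $v$ is auxiliary for an installable $w$ and $v'$ for an installable $w'$, with $w \subset v$ and $w' \subset v'$. If $w = w'$, then by \Cref{definition:auxiliary_view} both $v$ and $v'$ occur inside sequences belonging to $\rho_t(w)$; since any two such sequences are comparable (\Cref{lemma:weak_accuracy}), $v$ and $v'$ lie in a common sequence and are therefore comparable. If $w \neq w'$, then by \Cref{lemma:installable_sequence} we may assume $w \subsetneq w'$; then $w$ is not the greatest installable view, so it leads to some installable $w''$, and, exactly as in the mixed case, \Cref{lemma:installable_between_leading} plus comparability give $w'' \subseteq w'$, while \Cref{lemma:auxiliary_smaller} gives $v \subset w''$; hence $v \subset w'' \subseteq w' \subset v'$.

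The main obstacle I anticipate is purely bookkeeping rather than a genuine conceptual difficulty: in the mixed and the ``both non‑installable, distinct installable parents'' cases one must repeatedly convert a hypothetical incomparability into a strict containment by appealing to linearity of the installable views, and must take care that the view $w''$ to which an installable view ``leads'' is actually well defined — which holds only because that installable view is \emph{not} the greatest one, via \Cref{lemma:greatest_does_not_lead}. No new invariant is needed; the proof is a finite combination of the leading/auxiliary machinery proved above, and once the case split is set up correctly each branch is a short chain of containments.
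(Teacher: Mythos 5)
Your proposal is correct and follows essentially the same route as the paper's proof: the identical four-way case split on $\alpha_t(v)$ and $\alpha_t(v')$, with each branch resolved by \Cref{lemma:installable_sequence}, the auxiliary-view lemmas (\Cref{lemma:auxiliary_for_one}, \Cref{lemma:auxiliary_bigger}, \Cref{lemma:auxiliary_smaller}), the leads-to machinery (\cref{lemma:n-1,lemma:greatest_does_not_lead,lemma:installable_between_leading}), and \Cref{lemma:weak_accuracy} for the shared-parent sub-case. The only difference is cosmetic: you merge the paper's sub-cases $v = v_i$ and $v \subset v_i$ into a single $v \subseteq w'$ branch.
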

\begin{proof}
If $v = v'$, the lemma holds.
Otherwise, we consider four possible cases:
\begin{compactenum}
    \item $\alpha_t(v) = \alpha_t(v') = \top$: In this case, the lemma follows from \Cref{lemma:installable_sequence}.
    
    \item $\alpha_t(v) = \top$ and $\alpha_t(v') = \bot$:
    Let $v'$ be the auxiliary view for a view $v_i$ at time $t$ (such view $v_i$ exists due to \Cref{lemma:auxiliary_for_one}).
    According to \Cref{lemma:installable_sequence}, $v \subseteq v_i$ or $v \supset v_i$.
    We analyze all three possibilities:
    \begin{compactitem}
        \item $v = v_i$: In this case, the lemma holds since $v \subset v'$ (by \Cref{lemma:auxiliary_bigger}).
        
        \item $v \subset v_i$: We have that $v_i \subset v'$ (by \Cref{lemma:auxiliary_bigger}).
        Hence, $v \subset v'$ and the lemma holds.
        
        \item $v \supset v_i$: Let $v_i$ lead to a view $v''$ at time $t$ (such view $v''$ exists due to \cref{lemma:n-1,lemma:greatest_does_not_lead}).
        Because of \Cref{lemma:installable_sequence} and \Cref{lemma:installable_between_leading}, we have that $v'' \subseteq v$.
        We conclude that $v' \subset v$ since $v' \subset v''$ (by \Cref{lemma:auxiliary_smaller}).
    \end{compactitem}
    
    \item $\alpha_t(v) = \bot$ and $\alpha_t(v') = \top$: Analogous to the previous case.
    
    \item $\alpha_t(v) = \alpha_t(v') = \bot$: Let $v$ be an auxiliary view for $v_i$ and let $v'$ be an auxiliary view for $v_i'$ (by \Cref{lemma:auxiliary_for_one}).
    According to \Cref{lemma:installable_sequence}, we have three possibilities to examine:
    \begin{compactitem}
        \item $v_i = v_i'$: By \Cref{definition:auxiliary_view}, we have that $\mathit{seq}, \mathit{seq}' \in \rho_t(v_i)$, where $\mathit{seq} = ... \to v \to ...$ and $\mathit{seq}' = ... \to v' \to ...$.
        The lemma follows from \Cref{lemma:weak_accuracy}, \Cref{invariant:rhos} and the definition of a sequence.
        
        \item $v_i \subset v_i'$: Let $v_i$ lead to a view $v''$ at time $t$ (such view $v''$ exists by \cref{lemma:n-1,lemma:greatest_does_not_lead}).
        Because of \Cref{lemma:installable_sequence} and \Cref{lemma:installable_between_leading}, we have that $v'' \subseteq v_i'$.
        We conclude that $v \subset v'$ since $v \subset v''$ (by \Cref{lemma:auxiliary_smaller}).
        
        \item $v_i \supset v_i'$: Analogous to the case $v_i \subset v_i'$.
    \end{compactitem}
\end{compactenum}
The lemma holds since it holds in all four possibilities.
\end{proof}

Let $[\mathtt{INSTALL}, v, \mathit{set}, \omega]$ be the first $\mathtt{INSTALL}$ message obtained by a (correct or faulty) process after time $t$ such that $v \in V(t)$.
Let that happen at time $t'$.
The following lemma proves that $\mathit{set}$ comes from a view that is installable at time $t'$.

\begin{lemma} \label{lemma:pre_final}
Let $[\mathtt{INSTALL}, v, \mathit{seq} = v_1' \to ... \to v_y', \omega] \notin \mathcal{I}^*(t)$, where $y \geq 1$, be the first $\mathtt{INSTALL}$ message obtained by a (correct or faulty) process after time $t$ such that $v \in V(t)$; let that be at time $t' > t$.
Then:
\begin{compactitem}
    \item $\mathit{seq}' \in \rho_{t'}(v')$, where $v' \in V(t')$, $\alpha_{t'}(v') = \top$ and $\mathit{seq}' = v_1 \to ... \to v_x \to v_1' \to ... \to v_y', (x \geq 0)$, and
    \item for every $v_i \in \{v_1, ..., v_x\}$, $v_i \in V(t')$ and $\alpha_{t'}(v_i) = \bot$, and
    \item for every $v_i \in \{v_1, ..., v_x\}$, $\mathit{seq}_i \in \rho_{t'}(v_i)$, where $\mathit{seq}_i = v_{i+1} \to ... \to v_x \to v_1' \to ... \to v_y'$.
\end{compactitem}
\end{lemma}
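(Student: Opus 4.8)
The plan is to case-split using \Cref{lemma:produced_then_accepted}, which tells us that either $\alpha_t(v) = \top$ or $\mathit{seq} \in \beta_t(v)$, and then to lift everything known at time $t$ up to time $t'$ via monotonicity. Since obtaining an $\mathtt{INSTALL}$ message is irrevocable, $\mathcal{I}^*$, $V(\cdot)$, $\alpha_{(\cdot)}$ and $\rho_{(\cdot)}$ are all non-decreasing in time, and the new message $m = [\mathtt{INSTALL}, v, \mathit{seq}, \omega]$ itself witnesses $\mathit{seq} \in \rho_{t'}(v)$. Throughout, ``at time $t'$'' should be read as ``after $m$ has been processed'', following the excerpt's convention for simultaneously obtained messages.

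If $\alpha_t(v) = \top$, then $v \in V(t) \subseteq V(t')$ and $\alpha_{t'}(v) = \top$, so I take $v' := v$ and $x := 0$: the first bullet holds with $\mathit{seq}' := \mathit{seq} = v_1' \to \cdots \to v_y' \in \rho_{t'}(v)$ (supplied by $m$), and the other two bullets are vacuous. If instead $\mathit{seq} \in \beta_t(v)$ with $\alpha_t(v) = \bot$, I invoke \Cref{lemma:invariant_4_auxiliary} to obtain a view $v'$ installable at time $t$ with $\mathit{seq}'' \in \rho_t(v')$ of the form $\mathit{seq}'' = w_1 \to \cdots \to w_k \to v \to v_1' \to \cdots \to v_y'$, where each $w_i \in V(t)$ is non-installable at $t$ and carries $\mathit{seq}_i \in \rho_t(w_i)$ equal to the suffix $w_{i+1} \to \cdots \to w_k \to v \to v_1' \to \cdots \to v_y'$, and where $v$ is an auxiliary view for $v'$ at $t$. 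I then relabel the chain by appending $v$ to the non-installable prefix: set $x := k+1$, $v_i := w_i$ for $i \le k$, $v_x := v$, and $\mathit{seq}' := \mathit{seq}''$, so that $\mathit{seq}' = v_1 \to \cdots \to v_x \to v_1' \to \cdots \to v_y'$. Monotonicity then yields $\mathit{seq}' \in \rho_{t'}(v')$, $v' \in V(t')$, $\alpha_{t'}(v') = \top$, and $v_i \in V(t')$ for all $i$; the third bullet follows from $\mathit{seq}_i \in \rho_t(w_i) \subseteq \rho_{t'}(w_i)$ for $i \le k$, and for $i = x$ (where $v_x = v$) from the fresh fact $\mathit{seq} = v_1' \to \cdots \to v_y' \in \rho_{t'}(v)$ contributed by $m$ itself.

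The only remaining point — and the one where care is needed — is $\alpha_{t'}(v_i) = \bot$ for every $i \le x$. Since $t'$ is by definition the first time after $t$ at which an $\mathtt{INSTALL}$ message with source in $V(t)$ is obtained, and since $\alpha_{t'}(\cdot)$ only counts $\mathtt{INSTALL}$ messages whose source lies in $V(t')$, the relevant new message relative to time $t$ is $m$. By \Cref{lemma:set_sequence}, $\mathit{seq}$ is a sequence with $\mathit{seq}.\mathtt{follows}(v) = \top$, so $v \subset v_1' = m.\mathtt{destination}()$; and since $\mathit{seq}''$ is a sequence following $v'$, we have $w_i \subset v \subset v_1'$ for every $i$. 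Hence $m.\mathtt{destination}()$ coincides with none of $v_1, \ldots, v_x$, so $m$ cannot render any $v_i$ installable, whence $\alpha_{t'}(v_i) = \alpha_t(v_i) = \bot$. I expect this installability/membership bookkeeping — making fully rigorous the claim that no $\mathtt{INSTALL}$ message obtained on $(t, t']$ disturbs the installability status of the intermediate views (which reduces to tracking how $V(\cdot)$ grows on that interval and using that destinations of valid install messages strictly contain their source view) — to be the main, if routine, obstacle; everything else is a direct transcription of \Cref{lemma:invariant_4_auxiliary} combined with $\mathit{seq} \in \rho_{t'}(v)$.
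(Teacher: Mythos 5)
Your proof is correct and follows essentially the same route as the paper's: case-split via \Cref{lemma:produced_then_accepted}, pull the chain $v_1 \to \dots \to v_x \to v \to v_1' \to \dots \to v_y'$ out of \Cref{invariant:accepted_produced_installable} (the paper cites the invariant directly rather than \Cref{lemma:invariant_4_auxiliary}, whose stated conclusion omits the suffix conditions $\mathit{seq}_i \in \rho_t(v_i)$ that you also need), append $v$ to the non-installable prefix, and lift to time $t'$ using $m$ itself as the witness for $\mathit{seq} \in \rho_{t'}(v)$. Your extra bookkeeping for $\alpha_{t'}(v_i) = \bot$ is sound and in fact more explicit than the paper, which subsumes it under ``the previous statements still hold at $t'$.''
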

\begin{proof}
If $\alpha_t(v) = \top$, then $\alpha_{t'}(v) = \top$ (since $\mathcal{I}^*(t) \subseteq \mathcal{I}^*(t')$).
Moreover, we have that $\mathit{seq} \in \rho_{t'}(v)$, which concludes the lemma.

Let $\alpha_t(v) = \bot$.
According to \Cref{lemma:produced_then_accepted}, $\mathit{seq} \in \beta_t(v)$.
Since \Cref{invariant:accepted_produced_installable} holds at time $t$, we conclude that:
\begin{compactitem}
    \item $\mathit{seq}' \in \rho_t(v')$, where $v' \in V(t)$, $\alpha_t(v') = \top$ and $\mathit{seq}' = v_1 \to ... \to v_x \to v \to v_1' \to ... \to v_y', (x \geq 0)$, and
    \item for every $v_i \in \{v_1, ..., v_x\}$, $v_i \in V(t)$ and $\alpha_t(v_i) = \bot$, and
    \item for every $v_i \in \{v_1, ..., v_x\}$, $\mathit{seq}_i \in \rho_t(v_i)$, where $\mathit{seq}_i = v_{i+1} \to ... \to v_x \to v \to v_1' \to ... \to v_y'$.
\end{compactitem}
At time $t'$, the previous statements still hold (since $\mathcal{I}^*(t) \subseteq \mathcal{I}^*(t')$).
Moreover, $\mathit{seq} \in \rho_{t'}(v)$, which concludes the lemma.
\end{proof}

Finally, we show that all invariants are preserved after a new $\mathtt{INSTALL}$ message is obtained.

\begin{theorem} \label{theorem:final_1}
Let $[\mathtt{INSTALL}, v, \mathit{seq} = v', \omega] \notin \mathcal{I}^*(t)$ be the first $\mathtt{INSTALL}$ message obtained by a (correct or faulty) process after time $t$ such that $v \in V(t)$; let that be at time $t' > t$.
Let $v' \notin V(t)$.
Then, all invariants hold at time $t'$.
\end{theorem}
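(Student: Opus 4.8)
The plan is to treat this statement as the inductive step in the argument that \Cref{invariant:rhos,invariant:creation_installable,invariant:initial_view,invariant:creation_non_installable,invariant:accepted_produced_installable} hold at all times: assuming all five hold at time $t$, I would first pin down exactly how the quantities $V(\cdot)$, $\alpha_\cdot$, $\beta_\cdot$ and $\rho_\cdot$ change when passing from $t$ to $t'$, and then read each invariant at $t'$ off those structural facts together with \Cref{lemma:set_sequence}, \Cref{lemma:pre_final} and a monotonicity remark. Note first that, since $\mathit{seq}$ is the singleton sequence $\{v'\}$, the new message $m = [\mathtt{INSTALL}, v, \{v'\}, \omega]$ has $m.\mathtt{destination}() = v'$ and $m.\mathtt{tail}() = \emptyset$, and that $v \in V(t)$ together with $v' \notin V(t)$ and $\mathit{genesis} \in V(t)$ forces $v \neq v'$ and $v' \neq \mathit{genesis}$.

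The structural facts I would establish are: (i) $V(t') = V(t) \cup \{v'\}$; (ii) $\alpha_{t'}(v') = \top$ while $\alpha_{t'}(w) = \alpha_t(w)$ for every $w \in V(t)$; (iii) $\rho_{t'}(v) = \rho_t(v) \cup \{\{v'\}\}$, $\rho_{t'}(w) = \rho_t(w)$ for $w \in V(t) \setminus \{v\}$, and $\rho_{t'}(v') = \emptyset$; (iv) $\beta_{t'}(w) = \beta_t(w)$ for $w \in V(t)$ and $\beta_{t'}(v') = \emptyset$. For (i) I would use that $t'$ is the \emph{first} time after $t$ at which a new $\mathtt{INSTALL}$ message with source in $V(t)$ is obtained: an induction along view-paths (using $m_1.\mathtt{source}() = \mathit{genesis} \in V(t)$ and $m_j.\mathtt{source}() = m_{j-1}.\mathtt{destination}()$) shows that $\mathcal{I}^*$ gains no path to any view outside $V(t)$ strictly before $t'$, so, by the convention of processing the messages obtained at $t'$ one at a time, the only change to account for is the incorporation of $m$; appending $m$ to a view-path reaching $v$ produces one reaching $v'$, and no further view becomes well-founded. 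Facts (ii) and (iv) are then immediate, as $m$ has empty tail and so contributes to no $\beta_{t'}$ and can affect only $\alpha_{t'}(v')$. For (iii), the extra singleton in $\rho_{t'}(v)$ is $m$ itself, and $\rho_{t'}(v') = \emptyset$ holds because, by decision permission of $\mathit{vg}(v')$, any set committed by $\mathit{vg}(v')$ requires some correct member of $v'$ to have started $\mathit{vg}(v')$, and a correct server does so only after $v'$ enters its $\mathit{history}$, hence only after $v'$ has become well-founded, which does not happen before $t'$; the analogous $\beta_{t'}(v') = \emptyset$ follows by the same reasoning as (i).

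Given these facts the invariants come out almost mechanically. For \Cref{invariant:rhos} at $t'$: the case $w \notin \{v, v'\}$ is the hypothesis, $w = v'$ is vacuous by (iii), and for $w = v$ the only new member of $\rho_{t'}(v)$ is $\{v'\}$, which is nonempty, trivially a sequence, and satisfies $\{v'\}.\mathtt{follows}(v) = \top$ because \Cref{lemma:set_sequence} gives $v \subset v'$. For \Cref{invariant:creation_installable}: by (i)--(ii) the only installable view present at $t'$ not already covered by the hypothesis is $v'$, and the witness required for $v'$ is exactly the conclusion of \Cref{lemma:pre_final} with $y = 1$ (observing, via (ii), that a view of $V(t)$ non-installable at $t$ remains non-installable at $t'$ since it differs from $v'$); for the installable views already present at $t$ the old witness survives because the sets in the existential clause only grow. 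For \Cref{invariant:initial_view}: again $v'$ is the only new case, and \Cref{lemma:pre_final} hands us an installable $v'' \in V(t')$ with $v' \in \mathit{seq}'$ for some $\mathit{seq}' \in \rho_{t'}(v'')$, whence $v'' \subset v'$ by \Cref{invariant:rhos} at $t'$ just proved; if $v'' = \mathit{genesis}$ we are done, and otherwise $v''$ is installable at $t$, so $\mathit{genesis} \subset v''$ by the hypothesis, giving $\mathit{genesis} \subset v'$. Finally \Cref{invariant:creation_non_installable} has no genuinely new case, since there is no new non-installable view, and \Cref{invariant:accepted_produced_installable} reduces to the hypothesis because $\beta$ is unchanged and $\beta_{t'}(v') = \emptyset$; in both, the surviving witnesses are those of the hypothesis, preserved under the growth of $V$ and $\rho$.

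The main obstacle is the block of structural facts, in particular $V(t') = V(t) \cup \{v'\}$ (from the minimality of $t'$ and the inductive structure of view-paths) and $\rho_{t'}(v') = \beta_{t'}(v') = \emptyset$ (ruling out any commit of $\mathit{vg}(v')$ before $v'$ is well-founded). Once these are in place, each invariant at $t'$ is either vacuous at $v'$, an instance of \Cref{lemma:set_sequence} or \Cref{lemma:pre_final}, or the hypothesis with witnesses carried over by monotonicity; the only subtlety in the last case is checking that the unique new well-founded view $v'$ does not turn a previously non-installable view of $V(t)$ installable, which holds because $v' \notin V(t)$ and $\alpha$ is unchanged off $v'$.
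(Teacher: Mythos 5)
Your proposal is correct and follows essentially the same route as the paper's proof: establish that the only changes from $t$ to $t'$ are the addition of $v'$ to $V$, the singleton $\{v'\}$ to $\rho(v)$, and the flip of $\alpha(v')$ to $\top$ (with $\rho_{t'}(v')=\beta_{t'}(v')=\emptyset$, which the paper derives via \Cref{lemma:not_well_founded_not_valid} and you derive directly from decision permission — the same underlying argument), then discharge each invariant at $v'$ via \Cref{lemma:set_sequence} and \Cref{lemma:pre_final} and carry over the old witnesses elsewhere. Your explicit check that no previously non-installable view becomes installable is a point the paper leaves implicit, but the decomposition and key lemmas are identical.
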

\begin{proof}
We prove the preservation of invariants one by one.

\smallskip
\noindent \underline{\Cref{invariant:rhos}:}
\Cref{invariant:rhos} holds for $v'$ at time $t'$ since $\rho_{t'}(v') = \emptyset$.
Let us prove this statement.
By contradiction, suppose that a process has obtained an $\mathtt{INSTALL}$ message associated with view $v'$ before time $t'$.
Since $v' \notin V(t)$, we know that $v'$ is not a valid view (by \Cref{lemma:not_well_founded_not_valid}).
However, since $v' \in V(t')$, $v'$ is a valid view according to \Cref{invariant:valid}.
Thus, contradiction.

The invariant holds for all views $v_t \in V(t)$, where $v_t \neq v$, because $\rho_{t'}(v_t) = \rho_t(v_t)$ and the invariant holds at time $t$.
Finally, the invariant holds for $v$ at time $t'$ because of \Cref{lemma:set_sequence} and the fact that the invariant holds at time $t$.

\smallskip
\noindent \underline{\Cref{invariant:creation_installable}:}
We conclude that $\alpha_{t'}(v') = \top$.
By \Cref{lemma:pre_final}, there exists a view $v^* \in V(t')$ such that:
\begin{compactitem}
    \item $\alpha_{t'}(v^*) = \top$, and
    \item $\mathit{seq}' \in \rho_{t'}(v^*)$, where $\mathit{seq}' = v_1 \to ... \to v_x \to v', (x \geq 0)$, and
    \item for every $v_i \in \{v_1, ..., v_x\}$, $v_i \in V(t')$ and $\alpha_{t'}(v_i) = \bot$, and
    \item for every $v_i \in \{v_1, ..., v_x\}$, $\mathit{seq}_i \in \rho_{t'}(v_i)$, where $\mathit{seq}_i = v_{i + 1} \to ... \to v_x \to v'$.
\end{compactitem}
Hence, \Cref{invariant:creation_installable} holds for $v'$ at time $t'$.

Since $v' \notin V(t)$ and \Cref{invariant:creation_installable} holds at time $t$, we conclude that \Cref{invariant:creation_installable} holds for all views $v_t \in V(t)$.
Thus, \Cref{invariant:creation_installable} holds at time $t'$.

\smallskip
\noindent \underline{\Cref{invariant:initial_view}:}
Since $v^* \in V(t)$ and \Cref{invariant:initial_view} holds at time $t$, we know that $\mathit{genesis} \subseteq v^*$.
We know that $\mathit{seq}' \in \rho_{t'}(v^*)$.
Since \Cref{invariant:rhos} holds at time $t'$, we know that $v^* \subset v'$.
Hence, $\mathit{genesis} \subset v'$, which proves the invariant preservation at time $t'$.

\smallskip
\noindent \underline{\Cref{invariant:creation_non_installable}:}
Since $v' \notin V(t)$, $\alpha_{t'}(v') = \top$ and \Cref{invariant:creation_non_installable} holds at time $t$, \Cref{invariant:creation_non_installable} holds at time $t'$.

\smallskip
\noindent \underline{\Cref{invariant:accepted_produced_installable}:}
Since $v' \notin V(t)$, $\beta_{t'}(v') = \emptyset$ and \Cref{invariant:accepted_produced_installable} holds at time $t$, \Cref{invariant:accepted_produced_installable} holds at time $t'$.
\end{proof}

Now, we analyze the second possibility.

\begin{theorem} \label{theorem:final_2}
Let $[\mathtt{INSTALL}, v, \mathit{seq} = v', \omega] \notin \mathcal{I}^*(t)$ be the first $\mathtt{INSTALL}$ message obtained by a (correct or faulty) process after time $t$ such that $v \in V(t)$; let that be at time $t' > t$.
Let $v' \in V(t)$.
Then, all invariants hold at time $t'$.
\end{theorem}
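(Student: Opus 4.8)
The plan is to carry out exactly the same invariant-by-invariant argument as in the proof of \Cref{theorem:final_1}, re-establishing \Cref{invariant:rhos}, \Cref{invariant:creation_installable}, \Cref{invariant:initial_view}, \Cref{invariant:creation_non_installable}, and \Cref{invariant:accepted_produced_installable} at time $t'$, but now exploiting the extra hypothesis that the view $v'$ associated with the newly obtained message $m = [\mathtt{INSTALL}, v, \mathit{set}, \omega]$ already belongs to $V(t)$. The first step is to observe that, because $m.\mathtt{source}() = v \in V(t)$ and $m.\mathtt{destination}() = v' \in V(t)$, obtaining $m$ cannot enlarge the set of well-founded views: any view-path that uses $m$ factors as an already-existing view-path ending at $v$, followed by $m$, and then possibly by messages already in $\mathcal{I}^*(t)$ whose sources (being reachable along a pre-existing view-path through $v'$) are therefore already well-founded. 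Hence $V(t') = V(t)$ and $\mathcal{I}^*(t') = \mathcal{I}^*(t) \cup \{m\}$, so the auxiliary functions move as little as possible: $\rho_{t'}$ coincides with $\rho_t$ except that $\mathit{set}$ is added to $\rho_{t'}(v)$; $\beta_{t'}$ coincides with $\beta_t$ except possibly for the addition of $m.\mathtt{tail}()$ to $\beta_{t'}(v')$; and $\alpha_{t'}$ coincides with $\alpha_t$ except that $\alpha_{t'}(v')$ may turn from $\bot$ to $\top$. Finally, \Cref{lemma:set_sequence} records that $\mathit{set}$ is a non-empty sequence with $\mathit{set}.\mathtt{follows}(v) = \top$.

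With these facts in hand the first two invariants are quick. \Cref{invariant:rhos} at $t'$ is inherited for every source distinct from $v$, and for $v$ it is exactly the content of \Cref{lemma:set_sequence}. For \Cref{invariant:initial_view}, the only view that can become newly installable is $v'$; if it does, \Cref{lemma:pre_final} supplies an installable view $v^* \in V(t')$ together with a $\rho_{t'}$-sequence leading from $v^*$ down to $v'$, so $v^* \subset v'$, and since $v^* \neq v'$ the predicate $\alpha$ did not change at $v^*$; thus $v^*$ was already installable at $t$, and $\mathit{genesis} \subseteq v^* \subset v'$ by \Cref{invariant:initial_view} at $t$.

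The real work lies in \Cref{invariant:creation_installable}, \Cref{invariant:creation_non_installable}, and \Cref{invariant:accepted_produced_installable} --- the three invariants that ``see'' the enlarged $\rho_{t'}(v)$, the possible new member of $\beta_{t'}(v')$, and the possible flip of $\alpha_{t'}(v')$. The workhorse here is \Cref{lemma:pre_final}: it already exhibits the installable origin $v^*$ of $\mathit{set}$ together with the full chain of necessarily-non-installable intermediate views that it passes through on the way to the views of $\mathit{set}$, which is precisely the witness structure these invariants demand, so the new information about $v$, $v'$, and $\mathit{set}$ plugs directly into them, while the requirements on views untouched by $m$ are inherited from $t$. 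For \Cref{invariant:accepted_produced_installable} this is essentially all there is to say. For \Cref{invariant:creation_installable} and \Cref{invariant:creation_non_installable} there is one genuine subtlety --- and this is the step I expect to be the main obstacle --- namely the case where $v'$ flips from non-installable to installable: one must verify that $v'$ did not previously occur as a required-non-installable intermediate view inside the witness chain of some other installable view $w$, which would otherwise invalidate $w$'s witness. I would handle this by using \Cref{lemma:produced_then_accepted} to see that $\mathit{set}$ either lay in $\beta_t(v)$ or that $v$ was already installable at $t$, combining it with \Cref{lemma:auxiliary_for_one} (which makes $v'$ auxiliary for some installable $v^{**}$), and then chasing \Cref{lemma:all_comparable}, \Cref{lemma:installable_between_leading}, \Cref{lemma:leads_to_one}, and \Cref{lemma:leads_from_one} to force $v^{**}$ --- and the source $v$ of $m$ --- to line up so that $v^{**}$ now \emph{leads to} $v'$, which re-aligns every affected witness chain consistently rather than breaking it. Once all five invariants are checked at $t'$, the inductive step is complete; together with \Cref{theorem:final_1} this yields that the invariants hold at all times, which is what the View Comparability and Finitely Many Valid Views properties ultimately rest on.
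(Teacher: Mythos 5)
Your overall plan matches the paper's proof: the paper also splits on whether $\alpha_t(v')=\top$ or $\alpha_t(v')=\bot$, inherits everything in the first case, and in the second case uses \Cref{lemma:pre_final} and \Cref{lemma:set_sequence} to establish the invariants for the newly installable $v'$. You have also correctly isolated the one genuinely delicate point, namely that when $v'$ flips from non-installable to installable it may sit inside the witness chain of some other installable view $v_{\mathit{ins}}$ (or of a non-installable view $v_{\mathit{aux}}$, or of a set in $\beta_t(\cdot)$), and that chain must be repaired.

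Where your plan drifts from what is actually needed is in the repair step. You propose to resolve it by chasing \Cref{lemma:produced_then_accepted}, \Cref{lemma:auxiliary_for_one}, \Cref{lemma:installable_between_leading}, \Cref{lemma:leads_to_one} and \Cref{lemma:leads_from_one} to show that some installable $v^{**}$ now \emph{leads to} $v'$. Even if you establish that leads-to relation, it does not by itself produce the witness that \Cref{invariant:creation_installable} demands for the downstream view $v_{\mathit{ins}}$: you still need a sequence in $\rho_{t'}$ of an installable view ending at $v_{\mathit{ins}}$ whose intermediate views are all non-installable at $t'$. The paper gets this for free from the invariants themselves: the fourth bullet of \Cref{invariant:creation_installable} (and likewise of \Cref{invariant:creation_non_installable} and \Cref{invariant:accepted_produced_installable}) at time $t$ already guarantees that for every intermediate non-installable view $v_i$ of the old chain --- in particular for $v_i=v'$ --- the suffix $\mathit{seq}_{v'}=v_{z+1}\to\dots\to v_x\to v_{\mathit{ins}}$ belongs to $\rho_t(v')\subseteq\rho_{t'}(v')$. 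Since $\alpha_{t'}(v')=\top$, this suffix is itself a valid witness chain issued from an installable view, so the old chain is simply \emph{truncated} at $v'$; no leads-to argument is required. You should make this truncation explicit (and note that the remaining intermediate views $v_{z+1},\dots,v_x$ stay non-installable at $t'$ because $v'$ is the only view whose $\alpha$ changes). With that substitution your proof goes through and is otherwise the same as the paper's.
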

\begin{proof}
We separate the proof into two cases.

\medskip
\noindent(1) Let $\alpha_t(v') = \top$.

\smallskip
\noindent \underline{\Cref{invariant:rhos}:}
\Cref{invariant:rhos} holds for all views $v_t \in V(t) \setminus\{v\}$ at time $t'$ since it holds at time $t$.
The invariant holds for $v$ at time $t'$ because of \Cref{lemma:set_sequence} and the fact that the invariant holds at time $t$.

\smallskip
\noindent \underline{\Cref{invariant:creation_installable}:}
Since $\alpha_t(v') = \top$, all views installable at time $t$ are installable at time $t'$.
The fact that \Cref{invariant:creation_installable} holds at time $t$ implies that the invariant holds at time $t'$.

\smallskip
\noindent \underline{\Cref{invariant:initial_view}:}
Since $\alpha_t(v') = \top$, all views installable at time $t$ are installable at time $t'$.
Hence, \Cref{invariant:initial_view} holds at time $t'$ since it holds at time $t$.

\smallskip
\noindent \underline{\Cref{invariant:creation_non_installable}:}
Since $\alpha_t(v') = \top$, all views installable at time $t$ are installable at time $t'$.
Hence, the invariant holds at time $t'$ since it holds at time $t$.

\smallskip
\noindent \underline{\Cref{invariant:accepted_produced_installable}:}
Since $\alpha_t(v') = \top$, all views installable at time $t$ are installable at time $t'$.
Moreover, $\beta_{t'}(v_t) = \beta_t(v_t)$, for every view $v_t \in V(t)$ (note that $V(t') = V(t)$).
Hence, the invariant holds at time $t'$ since it holds at time $t$.

\medskip
\noindent (2) Let $\alpha_t(v') = \bot$.

\smallskip
\noindent \underline{\Cref{invariant:rhos}:}
\Cref{invariant:rhos} holds for all views $v_t \in V(t) \setminus\{v\}$ at time $t'$ since it holds at time $t$.
The invariant holds for $v$ at time $t'$ because of \Cref{lemma:set_sequence} and the fact that the invariant holds at time $t$.

\smallskip
\noindent \underline{\Cref{invariant:creation_installable}:}
By \Cref{lemma:pre_final}, there exists a view $v^* \in V(t')$ such that:
\begin{compactitem}
    \item $\alpha_{t'}(v^*) = \top$, and
    \item $\mathit{seq}' \in \rho_{t'}(v^*)$, where $\mathit{seq}' = v_1 \to ... \to v_x \to v', (x \geq 0)$, and
    \item for every $v_i \in \{v_1, ..., v_x\}$, $v_i \in V(t')$ and $\alpha_{t'}(v_i) = \bot$, and
    \item for every $v_i \in \{v_1, ..., v_x\}$, $\mathit{seq}_i \in \rho_{t'}(v_i)$, where $\mathit{seq}_i = v_{i + 1} \to ... \to v_x \to v'$.
\end{compactitem}
Hence, \Cref{invariant:creation_installable} holds for $v'$ at time $t'$.

Consider now a view $v_{\mathit{ins}} \in V(t)$ such that (1) $v_{\mathit{ins}} \neq \mathit{genesis}$, and (2) $\alpha_t(v_{\mathit{ins}}) = \top$.
Since \Cref{invariant:creation_installable} holds at time $t$ and $v_{\mathit{ins}} \neq \mathit{genesis}$, there is a view $v_{\mathit{lead}}$ that leads to $v_{\mathit{ins}}$ at time $t$ (by \Cref{lemma:exists_leads_to}).
By \Cref{definition:leads}, we have:
\begin{compactitem}
    \item $\alpha_t(v_{\mathit{lead}}) = \top$, and
    
    \item $\mathit{seq}_{\mathit{ins}} \in \rho_t(v_{\mathit{lead}})$, where $\mathit{seq}_{\mathit{ins}} = v_1' \to ... \to v_y' \to v_{\mathit{ins}}$ and $y \geq 0$, and
    
    \item for every $v_i \in \{v_1', ..., v_y'\}$, $v_i \in V(t)$ and $\alpha_t(v_i) = \bot$, and
    
    \item for every $v_i \in \{v_1', ..., v_y'\}$, $\mathit{seq}_i \in \rho_t(v_i)$, where $\mathit{seq}_i = v_{i + 1}' \to ... \to v_y' \to v_{\mathit{ins}}$.
\end{compactitem}
If $v' \notin \mathit{seq}_{\mathit{ins}}$, \Cref{invariant:creation_installable} is satisfied for $v_{\mathit{ins}}$ at time $t'$.

Otherwise, let $\mathit{seq}_{\mathit{ins}} = v_1'' \to ... \to v_z'' \to v' \to v_1''' \to ... \to v_q''' \to v_{\mathit{ins}}$, where $z \geq 0$ and $q \geq 0$.
The following holds at time $t'$:
\begin{compactitem}
    \item $\alpha_{t'}(v') = \top$, and
    
    \item $\mathit{seq}_v \in \rho_{t'}(v')$, where $\mathit{seq}_v = v_1''' \to ... \to v_q''' \to v_{\mathit{ins}}$ (recall that $\mathit{seq}_v \in \rho_t(v')$), and
    
    \item for every $v_i \in \{v_1''', ..., v_q'''\}$, $v_i \in V(t')$ and $\alpha_{t'}(v_i) = \bot$, and
    
    \item for every $v_i \in \{v_1''', ..., v_q'''\}$, $\mathit{seq}'_i \in \rho_{t'}(v_i')$, where $\mathit{seq}'_i = v_{i + 1}''' \to ... \to v_q''' \to v_{\mathit{ins}}$.
\end{compactitem}
Therefore, \Cref{invariant:creation_installable} is, even in this case, satisfied for $\mathit{v}_{\mathit{ins}}$ at time $t'$.

\smallskip
\noindent \underline{\Cref{invariant:initial_view}:}
Since $v^* \in V(t)$ and \Cref{invariant:initial_view} holds at time $t$, $\mathit{genesis} \subseteq v^*$.
Moreover, $\mathit{seq}' \in \rho_{t'}(v^*)$.
Since \Cref{invariant:rhos} holds at time $t'$, $v^* \subset v'$.
Hence, $\mathit{genesis} \subset v'$, which proves the invariant preservation at time $t'$.

\smallskip
\noindent \underline{\Cref{invariant:creation_non_installable}:}
Consider a view $v_{\mathit{aux}} \in V(t)$ such that (1) $\alpha_t(v_{\mathit{aux}}) = \bot$, and (2) $v_{\mathit{aux}} \neq v'$.
Since all invariants hold at time $t$, we have (by \Cref{definition:auxiliary_view}):
\begin{compactitem}
    \item $\alpha_t(v_{\mathit{ins}}) = \top$, for some view $v_{\mathit{ins}} \in V(t)$, and
    
    \item $\mathit{seq}_{\mathit{aux}} \in \rho_t(v_{\mathit{ins}})$, where $\mathit{seq}_{\mathit{aux}} = v_1 \to ... \to v_x \to v_{\mathit{aux}} \to v_1' \to ... \to v_y'$, $x \geq 0$ and $y \geq 1$, and
    
    \item for every $v_i \in \{v_1, ..., v_x\}$, $v_i \in V(t)$ and $\alpha_t(v_i) = \bot$, and
    
    \item for every $v_i \in \{v_1, ..., v_x\}$, $\mathit{seq}_i \in \rho_t(v_i)$, where $\mathit{seq}_i = v_{i + 1} \to ... \to v_x \to v_{\mathit{aux}} \to v_1' \to ... \to v_y'$.
\end{compactitem}
If $v' \notin \{v_1, ..., v_x\}$, \Cref{invariant:creation_non_installable} is satisfied for $v_{\mathit{aux}}$ at time $t'$.

Otherwise, let $\mathit{seq}_{\mathit{aux}} = v_1 \to ... \to v_{z - 1} \to v' \to v_{z + 1} \to ... \to v_x \to v_{\mathit{aux}} \to v_1' \to ... \to v_y'$, where $z \geq 0$.
The following holds at time $t'$:
\begin{compactitem}
    \item $\alpha_{t'}(v') = \top$, and
    
    \item $\mathit{seq}_{v'} \in \rho_{t'}(v')$, where $\mathit{seq}_{v'} = v_{z + 1} \to ... \to v_x \to v_{\mathit{aux}} \to v_1' \to ... \to v_y'$ (recall that $\mathit{seq}_{v'} \in \rho_t(v')$), and
    
    \item for every $v_i \in \{v_{z + 1}, ..., v_x\}$, $v_i \in V(t')$ and $\alpha_{t'}(v_i) = \bot$, and
    
    \item for every $v_i \in \{v_{z + 1}, ..., v_x\}$, $\mathit{seq}_i \in \rho_{t'}(v_i)$, where $\mathit{seq}'_i = v_{i + 1} \to ... \to v_x \to v_{\mathit{aux}} \to v_1' \to ... \to v_y'$.
\end{compactitem}
Therefore, \Cref{invariant:creation_non_installable} is, even in this case, satisfied for $\mathit{v}_{\mathit{aux}}$ at time $t'$.

\smallskip
\noindent \underline{\Cref{invariant:accepted_produced_installable}:}
We have that $\beta_{t'}(v_t) = \beta_t(v_t)$, for every view $v_t \in V(t')$ (recall that $V(t') = V(t)$).
Consider a view $v^{**} \in V(t')$ such that $\mathit{set} \in \beta_{t}(v^{**})$ and $\mathit{set} = \{v_1', ..., v_y'\}$, where $y \geq 1$.
We know that $\mathit{set} \in \beta_{t'}(v^{**})$.
Since \Cref{invariant:accepted_produced_installable} holds at time $t$, we have the following:
\begin{compactitem}
    \item $\mathit{seq}^{**} \in \rho_t(v_{\mathit{ins}})$, where $\alpha_t(v_{\mathit{ins}}) = \top$ and $\mathit{seq}^{**} = v_1 \to ... \to v_x \to v^{**} \to v_1' \to ... \to v_y', (x \geq 0)$, and
    \item for every $v_i \in \{v_1, ..., v_x\}$, $v_i \in V(t)$ and $\alpha_t(v_i) = \bot$, and
    \item for every $v_i \in \{v_1, ..., v_x\}$, $\mathit{seq}_i \in \rho_t(v_i)$, where $\mathit{seq}_i = v_{i + 1} \to ... \to v_x \to v^{**} \to v_1' \to ... \to v_y'$.
\end{compactitem}
Now, if $v' \notin \{v_1, ..., v_x\}$, \Cref{invariant:accepted_produced_installable} is satisfied for $v^{**}$ at time $t'$.

Otherwise, let $\mathit{seq}^{**} = v_1 \to ... \to v_{z - 1} \to v' \to v_{z + 1} \to ... \to v_x \to v^{**} \to v_1' \to ... \to v_y'$, where $z \geq 0$.
The following holds at time $t'$:
\begin{compactitem}
    \item $\alpha_{t'}(v') = \top$, and
    
    \item $\mathit{seq}_{v'} \in \rho_{t'}(v')$, where $\mathit{seq}_{v'} = v_{z + 1} \to ... \to v_x \to v^{**} \to v_1' \to ... \to v_y'$ (recall that $\mathit{seq}_{v'} \in \rho_t(v')$), and
    
    \item for every $v_i \in \{v_{z + 1}, ..., v_x\}$, $v_i \in V(t')$ and $\alpha_{t'}(v_i) = \bot$, and
    
    \item for every $v_i \in \{v_{z + 1}, ..., v_x\}$, $\mathit{seq}_i \in \rho_{t'}(v_i)$, where $\mathit{seq}'_i = v_{i + 1} \to ... \to v_x \to v^{**} \to v_1' \to ... \to v_y'$.
\end{compactitem}
Therefore, \Cref{invariant:accepted_produced_installable} is, even in this case, satisfied for $\mathit{v}^{**}$ at time $t'$.
\end{proof}

Next, we analyze the third possible case.

\begin{theorem} \label{theorem:final_3}
Let $[\mathtt{INSTALL}, v, \mathit{seq} = v' \to v_1' \to ... \to v_y', \omega] \notin \mathcal{I}^*(t)$ be the first $\mathtt{INSTALL}$ message obtained by a (correct or faulty) process after time $t$ such that $v \in V(t)$; let that be at time $t' > t$.
Let $v' \notin V(t)$\footnote{Observe that $v'$ is the destination of the message since $\mathit{seq}$ is a sequence (by \Cref{lemma:set_sequence}) and $v'$ is the smallest view of the sequence.} and $y \geq 1$.
Then, all invariants hold at time $t'$.
\end{theorem}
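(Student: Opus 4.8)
The plan is to follow the template of \Cref{theorem:final_1} and \Cref{theorem:final_2}: first pin down exactly how the system state evolves from $t$ to $t'$, then verify each of \Cref{invariant:rhos}, \Cref{invariant:creation_installable}, \Cref{invariant:initial_view}, \Cref{invariant:creation_non_installable} and \Cref{invariant:accepted_produced_installable} in turn. Since $[\mathtt{INSTALL}, v, \mathit{seq}, \omega]$ is the first $\mathtt{INSTALL}$ message obtained after $t$, we have $\mathcal{I}^*(t'-1) = \mathcal{I}^*(t)$ and hence $V(t'-1) = V(t)$; appending this message to a view-path to $v$ (which exists because $v \in V(t)$) yields a view-path to its destination $v'$, so $V(t') = V(t) \cup \{v'\}$. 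Because the message has non-empty tail $\{v_1', \dots, v_y'\}$ with $y \geq 1$, it does not make $v'$ installable, so $\alpha_{t'}(v') = \bot$, $\alpha_{t'}(w) = \alpha_t(w)$ for every $w \in V(t)$, $\rho_{t'}(v) = \rho_t(v) \cup \{\mathit{seq}\}$ and $\rho_{t'}(w) = \rho_t(w)$ for $w \neq v$, while $\beta_{t'}(v') = \{\{v_1', \dots, v_y'\}\}$ and $\beta_{t'}(w) = \beta_t(w)$ for every $w \neq v'$ in $V(t)$; establishing these equalities carefully — in particular that no message other than the one under consideration contributes to $\rho_{t'}(v')$ or $\beta_{t'}(v')$ — is the most delicate bookkeeping step.

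For \Cref{invariant:rhos}, the untouched views are fine because the invariant holds at time $t$; the single new entry $\mathit{seq} \in \rho_{t'}(v)$ is a non-empty sequence following $v$ by \Cref{lemma:set_sequence}; and $\rho_{t'}(v') = \emptyset$ because otherwise some process would have obtained an $\mathtt{INSTALL}$ message with source $v'$ before $t'$, which by \Cref{lemma:not_well_founded_not_valid} would make $v'$ a non-valid view (as $v' \notin V(t)$ is not well-founded at $t$), contradicting $v' \in V(t')$ together with \Cref{invariant:valid}. \Cref{invariant:creation_installable} and \Cref{invariant:initial_view} are immediate: no new view is installable (the only new view $v'$ has $\alpha_{t'}(v') = \bot$) and the $\rho$-value of each already-installable view is unchanged, so the time-$t$ witnesses still serve at $t'$.

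The remaining two invariants both reduce to \Cref{lemma:pre_final}. For \Cref{invariant:creation_non_installable}, every non-installable view already present at time $t$ keeps its time-$t$ witness, since the witnessing chain consists of views in $V(t)$ and therefore cannot contain $v'$, and neither that chain's $\rho$-entries nor its installable ancestor were altered. For the one new non-installable view $v'$, apply \Cref{lemma:pre_final} to the message, whose tail — read as a sequence via \Cref{lemma:set_sequence} — is $v' \to v_1' \to \dots \to v_y'$: it hands us an installable $v^* \in V(t')$ and a chain of non-installable views feeding a sequence in $\rho_{t'}(v^*)$ of the form $\dots \to v' \to v_1' \to \dots \to v_y'$, which is exactly the witness \Cref{invariant:creation_non_installable} requires for $v'$. \Cref{invariant:accepted_produced_installable} is handled symmetrically: the only new $\beta$-entry is $\{v_1', \dots, v_y'\} \in \beta_{t'}(v')$, for which the same output of \Cref{lemma:pre_final} supplies the required installable ancestor and chain, while the pre-existing $\beta$-entries carry over verbatim because their witnessing chains lie in $V(t)$ and thus avoid $v'$.

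The hard part is not conceptual but careful: nailing down $\rho_{t'}$, $\beta_{t'}$ and $V(t')$ exactly, and observing that — unlike the $\alpha_t(v') = \bot$ sub-case of \Cref{theorem:final_2} — no ``shortening'' of an inherited chain is ever needed here, precisely because the newly created view $v'$ is non-installable and was entirely absent at time $t$, so it never lies on a path that a time-$t$ witness would have had to route around.
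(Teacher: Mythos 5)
Your proposal is correct and follows essentially the same route as the paper's proof: the same case-by-case verification of the five invariants, the same use of \Cref{lemma:set_sequence} and \Cref{lemma:not_well_founded_not_valid} to handle $\rho_{t'}$, the same observation that the installable set is unchanged (making \Cref{invariant:creation_installable} and \Cref{invariant:initial_view} immediate), and the same double application of \Cref{lemma:pre_final} to witness \Cref{invariant:creation_non_installable} and \Cref{invariant:accepted_produced_installable} for the new view $v'$. Your closing remark that no chain-shortening is needed here (unlike in \Cref{theorem:final_2}) is an accurate characterization of why this case is the simpler one.
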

\begin{proof}
We prove the preservation of the invariants one by one.

\smallskip
\noindent \underline{\Cref{invariant:rhos}:}
\Cref{invariant:rhos} holds for $v'$ at time $t'$ since $\rho_{t'}(v') = \emptyset$.
Let us prove this statement.
By contradiction, suppose that a process has obtained an $\mathtt{INSTALL}$ message associated with view $v'$ before time $t'$.
Since $v' \notin V(t)$, we know that $v'$ is not a valid view (by \Cref{lemma:not_well_founded_not_valid}).
However, since $v' \in V(t')$, $v'$ is a valid view according to \Cref{invariant:valid}.
Thus, contradiction.

The invariant holds for all views $v_t \in V(t)$, where $v_t \neq v$, because $\rho_{t'}(v_t) = \rho_t(v_t)$ and the invariant holds at time $t$.
Finally, the invariant holds for $v$ at time $t'$ because of \Cref{lemma:set_sequence} and the fact that the invariant holds at time $t$.

\smallskip
\noindent \underline{\Cref{invariant:creation_installable}:}
The set of installable views at time $t'$ is identical to the set of installable views at time $t$.
Thus, the invariant is preserved since it is satisfied at time $t$.

\smallskip
\noindent \underline{\Cref{invariant:initial_view}:}
The set of installable views at time $t'$ is identical to the set of installable views at time $t$.
Thus, the invariant is preserved since it is satisfied at time $t$.

\smallskip
\noindent \underline{\Cref{invariant:creation_non_installable}:}
By \Cref{lemma:pre_final}, there exists a view $v^* \in V(t')$ such that:
\begin{compactitem}
    \item $\alpha_{t'}(v^*) = \top$, and
    \item $\mathit{seq}' \in \rho_{t'}(v^*)$, where $\mathit{seq}' = v_1 \to ... \to v_x \to v' \to v_1' \to ... \to v_y', (x \geq 0)$, and
    \item for every $v_i \in \{v_1, ..., v_x\}$, $v_i \in V(t')$ and $\alpha_{t'}(v_i) = \bot$, and
    \item for every $v_i \in \{v_1, ..., v_x\}$, $\mathit{seq}_i \in \rho_{t'}(v_i)$, where $\mathit{seq}_i = v_{i + 1} \to ... \to v_x \to v' \to v_1' \to ... \to v_y'$.
\end{compactitem}
Hence, \Cref{invariant:creation_non_installable} is satisfied at time $t'$ for $v'$.
Moreover, since the set of installable views at time $t'$ is identical to the set of installable views at time $t$ and \Cref{invariant:creation_non_installable} holds at time $t$, the invariant is preserved at time $t'$.

\smallskip
\noindent \underline{\Cref{invariant:accepted_produced_installable}:}
We have that $\beta_{t'}(v') = \{v_1' \to ... \to v_y'\}$.
By \Cref{lemma:pre_final}, there exists a view $v^* \in V(t)$ such that:
\begin{compactitem}
    \item $\alpha_{t'}(v^*) = \top$, and
    \item $\mathit{seq}' \in \rho_{t'}(v^*)$, where $\mathit{seq}' = v_1 \to ... \to v_x \to v' \to v_1' \to ... \to v_y', (x \geq 0)$, and
    \item for every $v_i \in \{v_1, ..., v_x\}$, $v_i \in V(t')$ and $\alpha_{t'}(v_i) = \bot$, and
    \item for every $v_i \in \{v_1, ..., v_x\}$, $\mathit{seq}_i \in \rho_{t'}(v_i)$, where $\mathit{seq}_i = v_{i + 1} \to ... \to v_x \to v' \to v_1' \to ... \to v_y'$.
\end{compactitem}
Hence, \Cref{invariant:accepted_produced_installable} is satisfied for $v'$ at time $t'$.

For every view $v_t \in V(t)$, $\beta_{t'}(v_t) = \beta_t(v_t)$.
Therefore, \Cref{invariant:accepted_produced_installable} is satisfied for $v_t$ at time $t'$ since it holds at time $t$ and the set of installable views at time $t'$ is identical to the set of installable views at time $t$.
\end{proof}

Lastly, we analyze the fourth possible case.

\begin{theorem} \label{theorem:final_4}
Let $[\mathtt{INSTALL}, v, \mathit{seq} = v' \to v_1' \to ... \to v_y', \omega] \notin \mathcal{I}^*(t)$ be the first $\mathtt{INSTALL}$ message obtained by a (correct or faulty) process after time $t$ such that $v \in V(t)$; let that be at time $t' > t$.
Let $v' \in V(t)$\footnote{Note that $v'$ is the destination of the message since $\mathit{seq}$ is a sequence (by \Cref{lemma:set_sequence}) and $v'$ is the smallest view of the sequence.}  and $y \geq 1$.
Then, all invariants hold at time $t'$.
\end{theorem}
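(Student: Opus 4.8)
The plan is to follow the template of the proofs of \Cref{theorem:final_1,theorem:final_2,theorem:final_3}: work at the fixed time $t$, where all five invariants hold by the induction on obtained $\mathtt{INSTALL}$ messages that these theorems carry out; describe exactly how the system state changes once the message $m = [\mathtt{INSTALL}, v, \mathit{seq} = v' \to v_1' \to \ldots \to v_y', \omega]$ is obtained at time $t' > t$; and re-establish each invariant at $t'$. First I would pin down the effect of $m$. Its destination is $v'$ (the smallest view of $\mathit{seq}$, which is itself a sequence by \Cref{lemma:set_sequence}), $v' \in V(t)$ by hypothesis, and $m.\mathtt{tail}() = \{v_1', \ldots, v_y'\} \neq \emptyset$ since $y \geq 1$. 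Hence obtaining $m$ creates no new well-founded view --- a view-path can only be extended at its end, and the only candidate new endpoint, $v'$, already lies in $V(t)$ --- and no new installable view, since $m$ is not of the form $[\mathtt{INSTALL}, \cdot, \{v'\}, \cdot]$, so $\alpha_{t'}(v') = \alpha_t(v')$ while $\alpha$ is unchanged on every other view. Thus $V(t') = V(t)$, the set of installable views is the same at $t$ and $t'$, and the only bookkeeping changes are $\rho_{t'}(v) = \rho_t(v) \cup \{\mathit{seq}\}$ and $\beta_{t'}(v') = \beta_t(v') \cup \{v_1' \to \ldots \to v_y'\}$. This makes the present case essentially the analogue of case~(1) of the proof of \Cref{theorem:final_2}, irrespective of whether $\alpha_t(v')$ equals $\top$ or $\bot$, so no case split on $\alpha_t(v')$ is needed.

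Given this, \Cref{invariant:rhos} holds at $t'$: for $v$, the single new element $\mathit{seq}$ is nonempty, is a sequence, and satisfies $\mathit{seq}.\mathtt{follows}(v) = \top$ by \Cref{lemma:set_sequence}, and for every other view $\rho$ is untouched. \Cref{invariant:creation_installable}, \Cref{invariant:initial_view}, and \Cref{invariant:creation_non_installable} transfer directly from time $t$, because the set of installable views is unchanged, $\rho$ only grew (so every ``leads-to'' witness and every creation witness valid at $t$ is still valid at $t'$), and the non-installable views of $V(t')$ are exactly those of $V(t)$. For \Cref{invariant:accepted_produced_installable}, the members of the various $\beta_{t'}(\cdot)$ that were already present at time $t$ are covered by the invariant at $t$ (again: $\rho$ only grew, installable set unchanged); the sole genuinely new obligation is the freshly added element $v_1' \to \ldots \to v_y' \in \beta_{t'}(v')$, and for it I would invoke \Cref{lemma:pre_final} on $m$, which hands back an installable $w \in V(t')$ together with a sequence $v_1 \to \ldots \to v_x \to v' \to v_1' \to \ldots \to v_y' \in \rho_{t'}(w)$ whose intermediate views are non-installable and carry the required $\rho$-sequences --- exactly what \Cref{invariant:accepted_produced_installable} demands for this element.

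So the proof is mostly bookkeeping and inheritance from time $t$, and the one step with real content is \Cref{invariant:accepted_produced_installable} for the new element of $\beta_{t'}(v')$, which is precisely the job of \Cref{lemma:pre_final}. The only mild friction I anticipate is (i) the short argument that $m$ adds nothing to $V$ or to the installable set, which rests on $v' \in V(t)$ together with $m$'s tail being nonempty, and (ii) matching the indexing when $m$'s sequence $v' \to v_1' \to \ldots \to v_y'$ (which lists $y+1$ views) is fed into \Cref{lemma:pre_final}, whose statement is phrased with $y$ views; neither amounts to a real obstacle.
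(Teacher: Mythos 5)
Your proposal is correct and follows essentially the same route as the paper: every invariant except \Cref{invariant:rhos} and \Cref{invariant:accepted_produced_installable} is inherited because the sets of well-founded and installable views are unchanged, \Cref{invariant:rhos} for $v$ follows from \Cref{lemma:set_sequence}, and the one new obligation --- the fresh element of $\beta_{t'}(v')$ --- is discharged via \Cref{lemma:pre_final}, exactly as in the paper. The only difference is presentational: the paper splits on $\alpha_t(v') = \top$ versus $\alpha_t(v') = \bot$ and repeats nearly identical arguments, whereas you correctly observe that since $m$'s tail is non-empty it cannot make any view newly installable, so the two cases collapse into one.
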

\begin{proof}
We distinguish two cases:

\medskip
\noindent(1) Let $\alpha_t(v') = \top$.

\smallskip
\noindent \underline{\Cref{invariant:rhos}:}
\Cref{invariant:rhos} holds for all views $v_t \in V(t) \setminus\{v\}$ at time $t'$ since it holds at time $t$.
The invariant holds for $v$ at time $t'$ because of \Cref{lemma:set_sequence} and the fact that the invariant holds at time $t$.

\smallskip
\noindent \underline{\Cref{invariant:creation_installable}:}
Since $\alpha_t(v') = \top$, all views installable at time $t$ are installable at time $t'$.
The fact that \Cref{invariant:creation_installable} holds at time $t$ implies that the invariant holds at time $t'$.

\smallskip
\noindent \underline{\Cref{invariant:initial_view}:}
Since $\alpha_t(v') = \top$, all views installable at time $t$ are installable at time $t'$.
Hence, \Cref{invariant:initial_view} holds at time $t'$ since it holds at time $t$.

\smallskip
\noindent \underline{\Cref{invariant:creation_non_installable}:}
Since $\alpha_t(v') = \top$, all views installable at time $t$ are installable at time $t'$.
Hence, the invariant holds at time $t'$ since it holds at time $t$.

\smallskip
\noindent \underline{\Cref{invariant:accepted_produced_installable}:}
Since $\alpha_t(v') = \top$, all views installable at time $t$ are installable at time $t'$.
Moreover, $\beta_{t'}(v_t) = \beta_t(v_t)$, for every view $v_t \in V(t) \setminus{\{v'\}}$ (note that $V(t') = V(t)$).
Hence, the invariant holds at time $t'$ for $v_t$ since it holds at time $t$.
Furthermore, \Cref{invariant:accepted_produced_installable} holds for $v'$ at time $t'$ because of \Cref{lemma:pre_final} and the fact that the invariant holds at time $t$.
Thus, the invariant is preserved at time $t'$.

\medskip
\noindent (2) Let $\alpha_t(v') = \bot$.

\smallskip
\noindent \underline{\Cref{invariant:rhos}:}
\Cref{invariant:rhos} holds for all views $v_t \in V(t) \setminus\{v\}$ at time $t'$ since it holds at time $t$.
The invariant holds for $v$ at time $t'$ because of \Cref{lemma:set_sequence} and the fact that the invariant holds at time $t$.

\smallskip
\noindent \underline{\Cref{invariant:creation_installable}:}
The set of views installable at time $t$ is identical to the set of views installable at time $t'$.
Given that the invariant holds at time $t$, the invariant is preserved at time $t'$.

\smallskip
\noindent \underline{\Cref{invariant:initial_view}:}
The set of views installable at time $t$ is identical to the set of views installable at time $t'$.
Given that the invariant holds at time $t$, the invariant is preserved at time $t'$.

\smallskip
\noindent \underline{\Cref{invariant:creation_non_installable}:}
The set of views installable at time $t$ is identical to the set of views installable at time $t'$.
Given that the invariant holds at time $t$, the invariant is preserved at time $t'$.

\smallskip
\noindent \underline{\Cref{invariant:accepted_produced_installable}:}
We have that $\beta_{t'}(v_t) = \beta_t(v_t)$, for every view $v_t \in V(t') \setminus{\{v'\}}$ (note that $V(t') = V(t)$).
Therefore, the invariant is ensured at time $t'$ for $v_t$ since it holds at time $t$.
Moreover, \Cref{invariant:accepted_produced_installable} is ensured for $v'$ at time $t'$ because of \Cref{lemma:pre_final} and the fact that the invariant holds at time $t$.
Thus, the invariant is preserved at time $t'$.
\end{proof}

\para{Epilogue}
We have shown that, at all times, all the invariants we have introduced hold.
Now, we give some results that represent consequences of the introduced invariants.

We start by proving that all valid views belong to $V(\infty)$.

\begin{lemma} [Valid Views are in $V(\infty)$] \label{lemma:valid_in_v}
Let a view $v$ be valid.
Then, $v \in V(\infty)$.
\end{lemma}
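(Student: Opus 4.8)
The plan is to unfold the definition of a valid view (\Cref{definition:valid_view}) and of $V(\infty)$, and observe that the gap between the two is bridged purely by the finiteness of view-paths together with the irrevocability of message attainment. Here $V(\infty)$ denotes the set of views well-founded with respect to $\mathcal{I}^*(\infty) = \bigcup_{t \ge 0} \mathcal{I}^*(t)$, the set of all $\mathtt{INSTALL}$ messages ever obtained (equivalently $\bigcup_{t \ge 0} V(t)$, since a view is well-founded at some finite time iff a view-path to it lies in $\mathcal{I}^*(\infty)$ and any such view-path is finite). If $v = \mathit{genesis}$, then $v$ is well-founded at every time by definition, hence $v \in V(\infty)$ and there is nothing more to show.

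Otherwise, validity of $v$ yields a view-path $\mathit{path} = [m_1, \ldots, m_k]$ with $\mathit{path}.\mathtt{destination}() = v$ and $m_i \in \mathcal{I}^*(\infty)$ for every $i \in [1,k]$. First I would use $\mathcal{I}^*(\infty) = \bigcup_{t \ge 0} \mathcal{I}^*(t)$ to pick, for each $i$, a finite time $t_i$ with $m_i \in \mathcal{I}^*(t_i)$; then set $t^{\star} = \max\{t_1, \ldots, t_k\}$, which is well-defined precisely because $k$ is finite. By irrevocability of message attainment ($\mathcal{I}^*(t_i) \subseteq \mathcal{I}^*(t^{\star})$ for $t_i \le t^{\star}$) we get $m_i \in \mathcal{I}^*(t^{\star})$ for all $i$, i.e. $\mathit{path} \subseteq \mathcal{I}^*(t^{\star})$. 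Hence $v$ is well-founded at time $t^{\star}$, so $v \in V(t^{\star})$, and monotonicity of $V$ (which follows from $\mathcal{I}^*(t^{\star}) \subseteq \mathcal{I}^*(\infty)$ and the definition of well-founded views) gives $v \in V(\infty)$.

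There is no real obstacle here; the only point requiring any care is that a view-path is a \emph{finite} ordered chain of $\mathtt{INSTALL}$ messages, so that all of its messages are jointly obtained by some single finite time $t^{\star}$, which is what lets us land inside a concrete $V(t^{\star})$ rather than only ``in the limit''. Notably, this argument relies on none of \Cref{invariant:rhos,invariant:creation_installable,invariant:initial_view,invariant:creation_non_installable,invariant:accepted_produced_installable}: it is essentially a restatement of \Cref{definition:valid_view} once $V(\infty)$ is read as the set of views well-founded with respect to the union of all $\mathcal{I}^*(t)$.
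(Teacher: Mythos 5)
Your proof is correct and takes essentially the same route as the paper, whose entire proof is ``Follows from \Cref{definition:valid_view} and the definition of $V(\infty)$''; you merely spell out the (harmless) extra step of locating a concrete finite time $t^{\star}$ via finiteness of the view-path, which is not needed if $V(\infty)$ is read directly as the set of views well-founded with respect to $\mathcal{I}^*(\infty)$.
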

\begin{proof}
Follows from \Cref{definition:valid_view} and the definition of $V(\infty)$.
\end{proof}

Since all invariants hold at all times, we have the following for each valid view $v \in V(\infty)$:
\begin{compactitem}
    \item If $\mathit{set} \in \rho_{\infty}(v)$, then (1) $\mathit{set} \neq \emptyset$, (2) $\mathit{set}$ is a sequence, and (3) $\mathit{set}.\mathtt{follows}(v) = \top$ (due to the fact that \Cref{invariant:rhos} holds at time $\infty$).
    
    \item If $\alpha_{\infty}(v) = \bot$ and $\mathit{seq} \in \rho_{\infty}(v)$, where $\mathit{set} = v_1 \to ... \to v_x$ ($x \geq 1$), then $\mathit{seq}' \in \rho_{\infty}(v')$, where $\mathit{seq}' = ... \to v \to v_1 \to ... \to v_x$ and $v$ is an auxiliary view for $v'$ at time $\infty$ (due to \cref{lemma:invariant_4_auxiliary,lemma:produced_then_accepted}).
    
    \item $\mathit{preconditions}(v, t) = \top$, for any time $t$ (this follows from the fact that $\mathit{preconditions}(v, \infty) = \top$, which is a result of \cref{lemma:subset_of_beta,lemma:beta_sequence_follows,lemma:comparable_beta}).
\end{compactitem}

Next, any obtained $\mathtt{INSTALL}$ message associated with a valid view ``carries'' a sequence that follows the associated view.

\begin{lemma} \label{lemma:install_sequence_follows}
Let a (correct or faulty) process obtain the $m = [\mathtt{INSTALL}, \mathit{source}, \mathit{set}, \omega]$ message, where $\mathit{source} = m.\mathtt{source()}$ is a valid view.
Then, (1) $\mathit{set} \neq \emptyset$, (2) $\mathit{set}$ is a sequence, and (3) $\mathit{set}.\mathtt{follows}(\mathit{source}) = \top$.
\end{lemma}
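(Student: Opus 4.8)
The plan is to obtain this lemma as a direct corollary of the \emph{validity} property of the view generator instance $\mathit{vg}(\mathit{source})$ stated in \Cref{lst:view_generator_properties}. First I would observe that, by the very definition of an $\mathtt{INSTALL}$ message (\Cref{lst:install_messages}), the certificate $\omega$ carried by $m$ satisfies $\mathtt{verify\_output}(\mathit{set}, \mathit{vg}(\mathit{source}), \omega) = \top$. Since the process in question obtains $m$, it obtains $\omega$, and therefore $\mathit{set}$ is committed by $\mathit{vg}(\mathit{source})$; let $t$ be the time at which such a certificate is first attained, so that $\mathit{set}$ is committed by $\mathit{vg}(\mathit{source})$ at time $t$.

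Applying the validity property of $\mathit{vg}(\mathit{source})$ requires two side conditions. The first is the failure-model assumption under which $\mathit{vg}(\mathit{source})$ is guaranteed to behave correctly, namely that at least a quorum of members of $\mathit{source}$ are correct; this holds because $\mathit{source}$ is a valid view and, by the failure model of \Cref{appx:model}, at most $\lfloor \frac{n - 1}{3} \rfloor$ members of any valid view are faulty. The second is the hypothesis $\mathit{preconditions}(\mathit{source}, t - 1) = \top$. For this I would invoke the Epilogue of \Cref{subsection:correctness_proof}: since $\mathit{source}$ is valid, $\mathit{source} \in V(\infty)$ by \Cref{lemma:valid_in_v}, and the Epilogue records that $\mathit{preconditions}(v, t') = \top$ holds for every valid view $v \in V(\infty)$ and every time $t'$ --- a fact assembled there from \cref{lemma:subset_of_beta,lemma:subset_of_beta_2,lemma:beta_sequence_follows,lemma:comparable_beta} together with the structural invariants. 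Instantiating $v = \mathit{source}$ and $t' = t - 1$ gives precisely the required precondition.

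With both side conditions established, the validity property of $\mathit{vg}(\mathit{source})$, applied to the set $\mathit{set}$ committed at time $t$, yields at once that $\mathit{set} \neq \emptyset$, that $\mathit{set}$ is a sequence, and that $\mathit{set}.\mathtt{follows}(\mathit{source}) = \top$, which are the three assertions of the lemma. The only step carrying any weight is securing $\mathit{preconditions}(\mathit{source}, t - 1) = \top$, and that reasoning has already been carried out in full generality in the Epilogue; accordingly this lemma is essentially a repackaging of the view-generator validity guarantee into the language of $\mathtt{INSTALL}$ messages, and no new argument is needed.
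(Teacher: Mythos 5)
Your proof is correct, but it reaches the conclusion by a slightly different route than the paper. The paper's own proof is a three-line application of \Cref{invariant:rhos}: since $\mathit{source}$ is valid it lies in $V(\infty)$ (\Cref{lemma:valid_in_v}), the obtained message places $\mathit{set} \in \rho_{\infty}(\mathit{source})$, and \Cref{invariant:rhos} at time $\infty$ states exactly the three conclusions. You instead unfold what that invariant encodes: you go back to the view-generator validity property, establishing its two hypotheses (commitment of $\mathit{set}$ by $\mathit{vg}(\mathit{source})$ via the definition of an $\mathtt{INSTALL}$ message, and $\mathit{preconditions}(\mathit{source}, t-1) = \top$ via the Epilogue's blanket claim for valid views). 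This is essentially the argument the paper uses inside \Cref{lemma:set_sequence} and the invariant-preservation theorems (\cref{theorem:final_1,theorem:final_2,theorem:final_3,theorem:final_4}) to \emph{establish} \Cref{invariant:rhos} in the first place, so both proofs rest on the same inductive machinery; the paper's version simply cites the already-packaged invariant, while yours re-derives its content for the particular set at hand. Your version has the minor merit of making the dependence on the view-generator validity property and on the $\mathit{preconditions}$ predicate explicit, at the cost of being longer than necessary given that the invariant is already available at this point in the development. No gap.
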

\begin{proof}
Since $\mathit{source}$ is a valid view, $\mathit{source} \in V(\infty)$ (by \Cref{lemma:valid_in_v}).
Moreover, $m \in \mathcal{I}^*(\infty)$.
Hence, $\mathit{set} \in \rho_{\infty}(\mathit{source})$.
Therefore, the lemma holds since \Cref{invariant:rhos} holds at time $\infty$.
\end{proof}

Let $m = [\mathtt{INSTALL}, v, \mathit{seq}, \omega]$, where $v$ is a valid view.
According to \Cref{lemma:install_sequence_follows}, $\mathit{seq} \neq \emptyset$, $\mathit{seq}$ is a sequence and $\mathit{seq}.\mathtt{follows}(v) = \top$.
Moreover, $m.\mathtt{destination()} = \mathit{seq}.\mathtt{first()}$ and $m.\mathtt{tail()} = \mathit{seq} \setminus\{\mathit{seq}.\mathtt{first()}\}$.

The next lemma proves that a view $v$ leads to a view $v'$ at time $\infty$ if and only if $v'$ is the smallest installable view greater than $v$.

\begin{lemma} \label{lemma:installable_leads_to_first_biggest}
Let $v, v' \in V(\infty)$ such that $\alpha_{\infty}(v) = \alpha_{\infty}(v') = \top$ and $v \subset v'$.
Then, $v$ leads to $v'$ at time $\infty$ if and only if there does not exist a view $v'' \in V(\infty)$ such that $\alpha_{\infty}(v'') = \top$ and $v \subset v'' \subset v'$.
\end{lemma}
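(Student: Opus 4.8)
The plan is to prove the two directions of the biconditional separately, leveraging the structural lemmas already established about the ``leads to'' relation. Recall from the Epilogue that all invariants — and hence \Cref{lemma:leads_to_bigger,lemma:installable_sequence,lemma:installable_between_leading,lemma:n-1,lemma:greatest_does_not_lead} — hold at time $\infty$, and that $|V(\infty)| < \infty$ since every view in $V(\infty)$ is valid (by \Cref{invariant:valid}) and only finitely many valid views exist. These facts are all I will need.

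For the forward direction, suppose $v$ leads to $v'$ at time $\infty$. Then \Cref{lemma:installable_between_leading}, applied at time $\infty$, immediately gives that there is no view $v'' \in V(\infty)$ with $\alpha_{\infty}(v'') = \top$ and $v \subset v'' \subset v'$. This direction is essentially a restatement of an already-proved lemma, so it costs nothing.

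For the backward direction, assume there is no installable view strictly between $v$ and $v'$. First I would argue that $v$ leads to \emph{some} view at time $\infty$: since $v \subset v'$ and $v'$ is installable, $v$ is not the greatest installable view, so by \Cref{lemma:n-1} and \Cref{lemma:greatest_does_not_lead} there exists a view $v_0 \in V(\infty)$ to which $v$ leads at time $\infty$; by \Cref{definition:leads} we have $\alpha_{\infty}(v_0) = \top$, and by \Cref{lemma:leads_to_bigger} we have $v \subset v_0$. Then I would compare $v_0$ and $v'$: both are installable, so by \Cref{lemma:installable_sequence} they are comparable, leaving three cases. If $v_0 = v'$ we are done. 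If $v_0 \subset v'$, then $v_0$ is an installable view with $v \subset v_0 \subset v'$, contradicting the hypothesis. If $v' \subset v_0$, then $v'$ is an installable view with $v \subset v' \subset v_0$ while $v$ leads to $v_0$, contradicting \Cref{lemma:installable_between_leading}. Hence $v_0 = v'$, so $v$ leads to $v'$ at time $\infty$.

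I do not anticipate a real obstacle here; the proof is short. The one point that needs care is making sure the target of a ``leads to'' relation is itself installable, so that it can legitimately play the role of $v''$ both in the hypothesis and in \Cref{lemma:installable_between_leading} — this is guaranteed directly by \Cref{definition:leads}. A secondary bookkeeping point is confirming that the finiteness of valid views, needed for the ``greatest installable view'' in \Cref{lemma:greatest_does_not_lead} to be well-defined, is available at time $\infty$, which again follows from \Cref{invariant:valid} together with the standing assumption of finitely many valid views.
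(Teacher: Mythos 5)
Your proposal is correct and follows essentially the same route as the paper: the forward direction is a direct application of \Cref{lemma:installable_between_leading}, and the backward direction obtains a view $v_0$ that $v$ leads to via \Cref{lemma:n-1} and \Cref{lemma:greatest_does_not_lead}, then eliminates $v_0 \neq v'$ using \Cref{lemma:leads_to_bigger}, \Cref{lemma:installable_sequence}, and \Cref{lemma:installable_between_leading}, exactly as the paper does (the paper merely phrases the case analysis as a proof by contradiction).
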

\begin{proof}
In order to prove the lemma, we prove the both directions of the statement.

First, if $v$ leads to $v'$, then there does not exist a view $v''$ such that $\alpha_{\infty}(v'') = \top$ and $v \subset v'' \subset v'$.
This statement is true due to \Cref{lemma:installable_between_leading}.

Now, suppose that such view $v''$ does not exist.
We need to prove that $v$ leads to $v'$ at time $\infty$.
By contradiction, suppose that $v$ does not lead to $v'$ at time $\infty$.
Since $v \subset v'$, $v$ leads to a view $v_i \neq v'$ at time $\infty$ (by \cref{lemma:n-1,lemma:greatest_does_not_lead}).
By \Cref{lemma:leads_to_bigger}, $v \subset v_i$.
By \cref{lemma:installable_sequence,lemma:installable_between_leading}, $v_i \subseteq v'$.
Since $v_i \neq v'$, $v_i \subset v'$.
However, this contradicts the fact that an installable view greater than $v$ and smaller than $v'$ does not exist.
Thus, the lemma holds.
\end{proof}

Similarly to \Cref{lemma:installable_leads_to_first_biggest}, a view $v'$ is an auxiliary view for a view $v$ at time $\infty$ if and only if $v$ is the greatest installable view smaller than $v'$. 

\begin{lemma} \label{lemma:auxiliary_first_smallest}
Let $v, v' \in V(\infty)$ such that $\alpha_{\infty}(v) = \top$, $\alpha_{\infty}(v') = \bot$ and $v \subset v'$.
Then, $v'$ is an auxiliary view for $v$ at time $\infty$ if and only if there does not exist a view $v'' \in V(\infty)$ such that $\alpha_{\infty}(v'') = \top$ and $v \subset v'' \subset v'$.
\end{lemma}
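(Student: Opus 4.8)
The plan is to prove the two directions of the equivalence separately, closely mirroring the proof of \Cref{lemma:installable_leads_to_first_biggest} but replacing the ``leads to'' relation with the ``auxiliary view'' relation and invoking \Cref{lemma:auxiliary_bigger,lemma:auxiliary_smaller} in place of \Cref{lemma:leads_to_bigger}. Throughout I would use the ``Epilogue'' fact that all invariants hold at time $\infty$, so that the relations ``leads to'' and ``auxiliary view for'' are well-defined at $\infty$ and the ordering lemmas \Cref{lemma:installable_sequence}, \Cref{lemma:installable_between_leading}, \Cref{lemma:greatest_does_not_lead}, \Cref{lemma:n-1}, \Cref{lemma:auxiliary_for_one} are available.

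For the forward direction, I would assume $v'$ is an auxiliary view for $v$ at time $\infty$ and suppose, toward a contradiction, that there is an installable view $v''$ with $v \subset v'' \subset v'$. Since $v''$ is installable and strictly larger than $v$, the view $v$ is not the greatest installable view at time $\infty$, so by \Cref{lemma:greatest_does_not_lead} (together with \Cref{lemma:n-1}) $v$ leads to some view $v_i$ at time $\infty$. Then \Cref{lemma:auxiliary_smaller}, applied to ``$v'$ auxiliary for $v$'' and ``$v$ leads to $v_i$'', gives $v' \subset v_i$, whence $v \subset v'' \subset v' \subset v_i$ with $v''$ installable. This contradicts \Cref{lemma:installable_between_leading}, which forbids an installable view strictly between $v$ and $v_i$.

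For the backward direction, I would assume no installable view lies strictly between $v$ and $v'$. Since $\alpha_{\infty}(v') = \bot$, \Cref{lemma:auxiliary_for_one} yields an installable view $v_j$ for which $v'$ is an auxiliary view at time $\infty$, and \Cref{lemma:auxiliary_bigger} gives $v_j \subset v'$. It remains to show $v_j = v$; suppose not. By \Cref{lemma:installable_sequence}, $v$ and $v_j$ are comparable, leaving two cases. If $v \subset v_j$, then $v \subset v_j \subset v'$ with $v_j$ installable, contradicting the hypothesis. If $v_j \subset v$, then $v_j$ is not the greatest installable view (witnessed by $v$), so by \Cref{lemma:greatest_does_not_lead} it leads to some installable view $v_k$ at time $\infty$; \Cref{lemma:installable_between_leading} then forces $v_k \subseteq v$ (otherwise $v$ would be installable strictly between $v_j$ and $v_k$), while \Cref{lemma:auxiliary_smaller} applied to ``$v'$ auxiliary for $v_j$'' and ``$v_j$ leads to $v_k$'' gives $v' \subset v_k \subseteq v$, contradicting $v \subset v'$. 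Hence $v_j = v$, i.e., $v'$ is an auxiliary view for $v$.

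The main obstacle, as in the companion lemma, is the bookkeeping of the case analysis on how $v$ sits relative to the ``other'' installable view ($v_i$ in the forward direction, $v_j$ and $v_k$ in the backward direction), and in particular justifying that the relevant installable view is not the greatest one so that a ``leads to'' successor exists. Once the ordering facts \Cref{lemma:auxiliary_bigger}, \Cref{lemma:auxiliary_smaller}, \Cref{lemma:installable_between_leading}, and \Cref{lemma:installable_sequence} are chained in the right order, each case closes immediately; no new invariant is required.
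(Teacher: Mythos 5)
Your proof is correct and follows essentially the same route as the paper: both directions use \Cref{lemma:auxiliary_for_one}, \Cref{lemma:auxiliary_bigger}, \Cref{lemma:auxiliary_smaller}, \Cref{lemma:installable_sequence}, \Cref{lemma:installable_between_leading} and the existence of a ``leads-to'' successor via \cref{lemma:n-1,lemma:greatest_does_not_lead}, with only a trivial reordering in the forward direction (you close the contradiction against \Cref{lemma:installable_between_leading} directly instead of first deriving $v' \subset v''$).
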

\begin{proof}
In order to prove the lemma, we prove the both directions of the statement.

First, if $v'$ is an auxiliary view for $v$ at time $\infty$, then there does not exist a view $v''$ such that $\alpha_{\infty}(v'') = \top$ and $v \subset v'' \subset v'$.
By contradiction, let there exist such view $v''$.
By \cref{lemma:n-1,lemma:greatest_does_not_lead}, $v$ leads to a view $v_i$ at time $\infty$.
Moreover, $v_i \subseteq v''$ (by \cref{lemma:installable_sequence,lemma:installable_between_leading}).
Since $v'$ is an auxiliary view for $v$ at time $\infty$, $v' \subset v_i$ (by \Cref{lemma:auxiliary_smaller}).
Hence, $v' \subset v''$, which contradicts that $v'' \subset v'$.
This direction of the statement is satisfied.

Now, suppose that such view $v''$ does not exist.
We need to prove that $v'$ is an auxiliary view for $v$ at time $\infty$.
By contradiction, suppose that $v'$ is not an auxiliary view for $v$ at time $\infty$.
Let $v'$ be an auxiliary view for a view $v^* \neq v$ (such view $v^*$ exists due to \Cref{lemma:auxiliary_for_one}).
By \Cref{lemma:auxiliary_bigger}, $v^* \subset v'$.
By \Cref{lemma:installable_sequence}, either $v \subseteq v^*$ or $v \supset v^*$.
Since $v \neq v^*$ and no installable view ``between'' $v$ and $v'$ exists, $v^* \subset v$.
Let $v^*$ lead to a view $v_i$ (by \cref{lemma:n-1,lemma:greatest_does_not_lead}).
By \cref{lemma:installable_sequence,lemma:installable_between_leading}, $v_i \subseteq v$.
By \Cref{lemma:auxiliary_smaller}, $v' \subset v_i$.
Hence, $v' \subset v$, which represents contradiction with $v \subset v'$.
Thus, the lemma holds.
\end{proof}

The next lemma shows that all valid views are greater than or equal to $\mathit{genesis}$.

\begin{lemma} \label{lemma:genesis_smallest_valid}
Let a view $v$ be valid.
Then, $\mathit{genesis} \subseteq v$.
\end{lemma}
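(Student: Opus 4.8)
The plan is to reduce the claim to \Cref{invariant:initial_view}, which already states that $\mathit{genesis} \subset v$ for every \emph{installable} view $v \neq \mathit{genesis}$ at any time, in particular at time $\infty$ (the paragraph preceding this lemma establishes that all invariants hold at all times). First I would invoke \Cref{lemma:valid_in_v} to conclude that the given valid view $v$ satisfies $v \in V(\infty)$, so that the time-$\infty$ analysis developed in this section applies to it.

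Then I would split on the value of $\alpha_\infty(v)$. If $\alpha_\infty(v) = \top$, then either $v = \mathit{genesis}$, in which case $\mathit{genesis} \subseteq v$ holds trivially, or $v \neq \mathit{genesis}$ and $\mathit{genesis} \subset v$ follows directly from \Cref{invariant:initial_view}. If instead $\alpha_\infty(v) = \bot$, then $v \neq \mathit{genesis}$ (recall that $\alpha_t(\mathit{genesis}) = \top$ at every time $t$ by the definition of $\alpha$), and by \Cref{lemma:auxiliary_for_one} there is a view $v' \in V(\infty)$ with $\alpha_\infty(v') = \top$ for which $v$ is an auxiliary view at time $\infty$; \Cref{lemma:auxiliary_bigger} then gives $v' \subset v$. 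Since $v'$ is installable, it is either $\mathit{genesis}$ or satisfies $\mathit{genesis} \subset v'$ by \Cref{invariant:initial_view}; in both cases $\mathit{genesis} \subseteq v' \subset v$, hence $\mathit{genesis} \subseteq v$, as desired.

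There is no substantive obstacle here: the lemma is essentially a corollary of \Cref{invariant:initial_view}, extended from installable views to arbitrary valid views via the auxiliary-view machinery. The only point requiring (mild) care is the justification that the time-indexed statements---\Cref{invariant:initial_view}, \Cref{lemma:auxiliary_for_one}, \Cref{lemma:auxiliary_bigger}---may be instantiated at $t = \infty$; this is precisely what the epilogue observation asserts, so I would cite it once at the start of the proof and otherwise proceed by the short case analysis above.
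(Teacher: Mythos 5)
Your proof is correct and uses essentially the same machinery as the paper's: \Cref{lemma:valid_in_v} to place $v$ in $V(\infty)$, \Cref{invariant:initial_view} for installable views, and \cref{lemma:auxiliary_for_one,lemma:auxiliary_bigger} to handle non-installable ones. The only difference is presentational --- the paper argues by contradiction from $v \subset \mathit{genesis}$ (invoking \Cref{lemma:all_comparable} to set up that hypothesis), whereas your direct case split on $\alpha_\infty(v)$ reaches the same conclusion without needing the comparability lemma.
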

\begin{proof}
By \Cref{lemma:valid_in_v}, we know that $v, \mathit{genesis} \in V(\infty)$.
By \Cref{lemma:all_comparable}, we know that either $\mathit{genesis} \subseteq v$ or $\mathit{genesis} \supset v$.
For the sake of contradiction, let $v \subset \mathit{genesis}$.

Since \Cref{invariant:initial_view} holds at time $\infty$, $\alpha_{\infty}(v) = \bot$.
Therefore, $v$ is an auxiliary view for some view $v^*$ at time $\infty$ (by \Cref{lemma:auxiliary_for_one}).
By \Cref{lemma:auxiliary_bigger}, $v^* \subset v$, which implies that $v^* \subset \mathit{genesis}$.
However, this is impossible due to \Cref{invariant:initial_view}.
The lemma holds.
\end{proof}

The next lemma shows that views that ``belong'' to a view-path are ordered.

\begin{lemma} \label{lemma:path_sequence}
Let $\mathit{path} = [m_1, ..., m_k]$, where $k \geq 1$, be a view-path.
Then:
\begin{compactenum}
    \item $m_{i - 1}.\mathtt{source()} \subset m_i.\mathtt{source()}$, for every $i \in [2, k]$, and
    
    \item $m_k.\mathtt{source()} \subset m_k.\mathtt{destination()}$.
\end{compactenum}
\end{lemma}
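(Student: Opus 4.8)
The plan is to prove the statement by induction on the length $k$ of the view-path, carrying a slightly strengthened hypothesis: for every prefix $[m_1,\dots,m_j]$ of $\mathit{path}$, the view $m_j.\mathtt{source()}$ is valid \emph{and} $m_j.\mathtt{source()} \subset m_j.\mathtt{destination()}$. Both conclusions of the lemma then fall out immediately: conclusion~2 is precisely the second half of this hypothesis at $j = k$, and conclusion~1 follows by combining the second half at $j = i-1$ with the defining identity $m_i.\mathtt{source()} = m_{i-1}.\mathtt{destination()}$ of view-paths (\Cref{lst:view_path}), which gives $m_{i-1}.\mathtt{source()} \subset m_i.\mathtt{source()}$ for every $i \in [2,k]$. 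Throughout I use the standing convention that the install messages constituting a view-path are messages that have actually been obtained by some (correct or faulty) process, i.e.\ $\mathit{path} \subseteq \mathcal{I}^*(\infty)$; this is what makes \Cref{definition:valid_view} and \Cref{lemma:install_sequence_follows} applicable.

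The recurring computation I would isolate as a single sub-claim is: if $m = [\mathtt{INSTALL},\mathit{source},\mathit{set},\omega]$ is an obtained install message and $\mathit{source} = m.\mathtt{source()}$ is a valid view, then $\mathit{source} \subset m.\mathtt{destination()}$. This is a direct consequence of \Cref{lemma:install_sequence_follows}: it yields that $\mathit{set}$ is a non-empty sequence with $\mathit{set}.\mathtt{follows}(\mathit{source}) = \top$. Since a non-empty sequence is a finite chain under $\subseteq$, its unique element of least cardinality is its $\subseteq$-minimum, so $m.\mathtt{destination()} = \mathtt{min\_cardinality}(\mathit{set}) = \mathit{set}.\mathtt{first()} \in \mathit{set}$; then $\mathit{set}.\mathtt{follows}(\mathit{source}) = \top$ forces $m.\mathtt{destination()} \supset \mathit{source}$, as needed. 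Beyond this, everything is just unwinding the definitions of $\mathtt{destination()}$, $\mathtt{tail()}$, $\mathtt{follows}$, and $\mathtt{min\_cardinality}$ from \Cref{lst:install_messages} and \Cref{lst:view_path}.

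For the induction itself: in the base case $j = 1$, \Cref{lst:view_path} gives $m_1.\mathtt{source()} = \mathit{genesis}$, which is valid by \Cref{definition:valid_view}, so the sub-claim applies to $m_1$. For the inductive step, observe that the prefix $[m_1,\dots,m_j]$ is itself a view-path, since the defining conditions of \Cref{lst:view_path} (first source $= \mathit{genesis}$, adjacent source/destination equalities) are inherited by prefixes; its destination is $m_j.\mathtt{destination()}$, which by \Cref{lst:view_path} equals $m_{j+1}.\mathtt{source()}$. Hence \Cref{definition:valid_view} certifies that $m_{j+1}.\mathtt{source()}$ is a valid view, and the sub-claim then applies to $m_{j+1}$, completing the step. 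The only mildly delicate point—and the one I would flag explicitly at the start of the proof—is ensuring the ``obtained'' precondition so that \Cref{lemma:install_sequence_follows} and \Cref{definition:valid_view} can be invoked; once that is pinned down, the argument is entirely routine and contains no nontrivial calculation.
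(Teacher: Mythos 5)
Your proof is correct and follows essentially the same route as the paper's: both arguments reduce to the fact that an obtained $\mathtt{INSTALL}$ message with a valid source carries a non-empty sequence that $\mathtt{follows}$ its source (the paper invokes \Cref{invariant:rhos} at time $\infty$ directly, you invoke its packaged consequence \Cref{lemma:install_sequence_follows}), and then chain via $m_i.\mathtt{source()} = m_{i-1}.\mathtt{destination()}$. Your explicit prefix induction just spells out what the paper compresses into the one-line observation that $\mathit{path}.\mathtt{views()} \subseteq V(\infty)$.
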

\begin{proof}
Note that $\mathit{path}.\mathtt{views()} \subseteq V(\infty)$ (by \Cref{lemma:valid_in_v}).
Since $m_i.\mathtt{source()} = m_{i - 1}.\mathtt{destination()}$, for every $i \in [2, k]$ (see \Cref{lst:view_path}), the first claim of the lemma follows from \Cref{invariant:rhos}.
Moreover, \Cref{invariant:rhos} ensures that the second claim of the lemma is true, as well.
\end{proof}

Next, we show that it is impossible that $(+, r) \notin v$ and $(-, r) \in v$, where $v$ is a valid view and $r$ is a server.

\begin{lemma} \label{lemma:minus_implies_plus_view}
Let $v$ be a valid view such that $(-, r) \in v$, where $r$ is a server.
Then, $(+, r) \in v$.
\end{lemma}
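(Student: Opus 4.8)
The plan is to prove the claim by induction on the structure of a view-path witnessing the validity of $v$, which is essentially an induction over how the view $v$ was constructed through successive $\mathtt{INSTALL}$ messages starting from $\mathit{genesis}$. The base case is $v = \mathit{genesis}$, where the claim holds vacuously since $\mathit{genesis}$ contains no $(-, r)$ changes by the assumption on the $\mathit{genesis}$ constant (see \Cref{lst:constants} and the ``Constants'' paragraph). For the inductive step, since $v$ is valid and $v \neq \mathit{genesis}$, by \Cref{definition:valid_view} there is a view-path $\mathit{path} = [m_1, \ldots, m_k]$ with $\mathit{path} \subseteq \mathcal{I}^*(\infty)$ and $\mathit{path}.\mathtt{destination}() = v$. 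The source view $v_{\mathit{src}} = m_k.\mathtt{source}()$ is itself valid (it is either $\mathit{genesis}$ when $k = 1$, or the destination of the shorter view-path $[m_1, \ldots, m_{k-1}]$), so by the induction hypothesis, if $(-, r) \in v_{\mathit{src}}$ then $(+, r) \in v_{\mathit{src}}$.

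The key step is to relate the membership structure of $v$ to that of $v_{\mathit{src}}$ using the validity conditions imposed on inputs to the view generator. Because $m_k \in \mathcal{I}^*(\infty)$ and $v_{\mathit{src}} \in V(\infty)$ (by \Cref{lemma:valid_in_v}), the sequence $\mathit{set} = m_k.\mathtt{tail}() \cup \{v\}$ carried by $m_k$ lies in $\rho_{\infty}(v_{\mathit{src}})$, and since $\mathit{preconditions}(v_{\mathit{src}}, t) = \top$ for all $t$ (noted in the ``Epilogue'' paragraph), the safety property of $\mathit{vg}(v_{\mathit{src}})$ together with \Cref{lemma:subset_of_beta} and \Cref{lemma:subset_of_beta_2} guarantees that $v$ (the last view of this sequence, or a view appearing in it) arose as the output of a view generator run on a valid input $(v', \mathit{set}', \epsilon')$ with $\mathtt{valid}((v', \mathit{set}', \epsilon'), v_{\mathit{src}}) = \top$. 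Now I invoke the precise checks in \Cref{lst:view_set_evidence_valid}: at lines~\ref{line:check_minus_r_1} and~\ref{line:check_valid_view_set_proof} (and~\ref{line:check_minus_r_2}), the $\mathtt{valid}$ function rejects any input in which some server $r$ has $(+, r) \notin v_{\mathit{src}}$ but $(-, r)$ appears in $v'$ or in any view of $\mathit{set}'$ — in particular in $v$. Hence if $(-, r) \in v$, then $(+, r) \in v_{\mathit{src}}$.

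To finish, I need $(+, r) \in v$, not merely $(+, r) \in v_{\mathit{src}}$. Since views only grow along a view-path (the membership of successive views is obtained by unioning with proposed changes — this is enforced because each decided set is a sequence following the source view, so $v_{\mathit{src}} \subseteq v$ by \Cref{invariant:rhos} / \Cref{lemma:install_sequence_follows}), we have $v_{\mathit{src}} \subseteq v$, so $(+, r) \in v_{\mathit{src}} \subseteq v$ gives $(+, r) \in v$, as desired. There is one subtlety: a view-path decides on sequences of views, and the view $v$ might be an intermediate view of such a sequence rather than the final one, or might appear as the destination via a multi-step $\mathtt{INSTALL}$; in all these cases the relevant containment $v_{\mathit{src}} \subseteq v$ still holds because the sequence follows $v_{\mathit{src}}$.

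The main obstacle I anticipate is making the reduction in the second paragraph fully rigorous: I must carefully distinguish the case where $v$ is the destination of $m_k$ (so $(-, r)$-freeness relative to $\mathit{genesis}$-absent servers follows from the $v'$ check) from the case where $v$ appears in $m_k.\mathtt{tail}()$ (so it follows from the $\mathit{set}'$ check at line~\ref{line:check_valid_view_set_proof}), and ensure that in every case the server $r$ in question, if not a $\mathit{genesis}$ member, was genuinely checked by the $\mathtt{valid}$ predicate. Once the correspondence between $v$'s changes and the $\mathtt{valid}$ checks is pinned down, combining it with the monotonicity $v_{\mathit{src}} \subseteq v$ and the induction hypothesis closes the argument cleanly.
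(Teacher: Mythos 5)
Your core argument is the same as the paper's: look at the last message $m_k$ of a view-path to $v$, let $v_{\mathit{src}} = m_k.\mathtt{source}()$, use the view generator's guarantee at $v_{\mathit{src}}$ to conclude that $(-,r)\in v$ forces $(+,r)\in v_{\mathit{src}}$, and then lift this to $v$ via $v_{\mathit{src}}\subset v$ (\Cref{lemma:path_sequence}). Two remarks. First, the induction is dead weight: your inductive hypothesis (``if $(-,r)\in v_{\mathit{src}}$ then $(+,r)\in v_{\mathit{src}}$'') is never invoked in the closing argument, so you can drop the induction entirely and argue directly (the paper phrases it as a one-step contradiction).

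Second, and more substantively, your justification of the key implication $(-,r)\in v \Rightarrow (+,r)\in v_{\mathit{src}}$ is under-powered as written. You derive it from the \emph{safety} property of $\mathit{vg}(v_{\mathit{src}})$ plus the checks at lines~\ref{line:check_minus_r_1}, \ref{line:check_minus_r_2} and~\ref{line:check_valid_view_set_proof} of \Cref{lst:view_set_evidence_valid}. But safety only guarantees that \emph{either} $(v^*,\mathit{set},\epsilon^*)\in\Lambda^{v_{\mathit{src}}}$ \emph{or} $(v^*,\emptyset,\epsilon^*)\in\Lambda^{v_{\mathit{src}}}$; in the second branch the $\mathtt{valid}$ checks were applied only to the single view $v^*$, not to the views of the committed sequence, so your appeal to the line-level checks does not directly cover $v$. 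Closing that branch requires the analysis of the $\mathtt{construct}$ function, which is precisely the content of the already-proven \emph{Membership Validity} property of $\mathit{vg}(v_{\mathit{src}})$ (noting that $v\subseteq\mathit{set}.\mathtt{last}()$, so $(-,r)\in v$ implies $(-,r)\in\mathit{set}.\mathtt{last}()$). Cite that property directly --- together with $\mathit{preconditions}(v_{\mathit{src}},\infty)=\top$ and the validity of $v_{\mathit{src}}$ from \Cref{lemma:knows_view_system} --- and your argument collapses to the paper's proof.
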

\begin{proof}
By contradiction, let $(-, r) \in v$ and $(+, r) \notin v$.
First, $v \neq \mathit{genesis}$ due to the fact that $\mathit{genesis}$ does not contain ``negative'' changes (see \Cref{appx:model}, paragraph ``Constants'').
Thus, there exists a view-path $\mathit{path} \subseteq \mathcal{I}^*(\infty)$ such that $\mathit{path}.\mathtt{destination()} = v$ (by \Cref{definition:valid_view}).

By \Cref{lemma:path_sequence}, $v' \subset v$, where $v' = m.\mathtt{source()}$, $v = m.\mathtt{destination()}$ and $m$ is the last message of $\mathit{path}$.
Thus, $(+, r) \notin v'$, which means that the membership validity property of $\mathit{vg}(v')$ is violated (recall that $v'$ is a valid view, by \Cref{lemma:knows_view_system}, and $\mathit{preconditions}(v', \infty) = \top$).
Hence, the lemma holds.
\end{proof}

We now show that there are finitely many valid views (see \Cref{lst:reconfiguration_properties_new}).

\begin{theorem} [Finitely Many Valid Views] \label{lemma:finite_valid_views}
There exist only finitely many valid views.
\end{theorem}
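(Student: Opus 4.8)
The plan is to bound, once and for all, the set of servers that can possibly be mentioned by \emph{any} valid view, and then to observe that a view is merely a subset of the corresponding finite universe of changes. Concretely, let $\mathcal{S}$ be the union of $\mathit{genesis}.\mathtt{members}()$ and the set of all servers that perform at least one computational step in the execution. By the assumptions of \Cref{appx:model} ($\mathit{genesis}$ has finitely many members, and only finitely many servers start), $\mathcal{S}$ is finite. I will prove that every change occurring in every valid view has the form $(+, r)$ or $(-, r)$ with $r \in \mathcal{S}$; granting this, every valid view is a subset of the finite set $\{+, -\} \times \mathcal{S}$, so there are at most $2^{2|\mathcal{S}|}$ valid views, which is exactly the statement.

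The core of the work is the claim that valid views mention only servers from $\mathcal{S}$. First, \Cref{lemma:minus_implies_plus_view} lets me reduce to positive changes: if $(-, r)$ lies in a valid view $v$ then so does $(+, r)$, so it suffices to treat changes of the form $(+, r) \in v$. I would then induct on the least length $k$ of a view-path leading to $v$, taking $k = 0$ for $v = \mathit{genesis}$. The base case is immediate, since $\mathit{genesis}$ has no negative change and hence $(+, r) \in \mathit{genesis}$ forces $r \in \mathit{genesis}.\mathtt{members}() \subseteq \mathcal{S}$. For $v \neq \mathit{genesis}$, \Cref{definition:valid_view} gives a view-path $[m_1, \dots, m_k] \subseteq \mathcal{I}^*(\infty)$ with destination $v$; write $m_k = [\mathtt{INSTALL}, \mathit{source}, \mathit{set}, \omega]$. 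Then $\mathit{source}$ is itself a valid view, reachable by the length-$(k{-}1)$ prefix view-path (so the induction hypothesis applies to it), and $\mathit{source} \subsetneq v$ by \Cref{lemma:path_sequence}; moreover $v = \mathit{set}.\mathtt{first}() \in \mathit{set}$ by \Cref{lemma:install_sequence_follows}, and $\mathit{set}$ is committed by $\mathit{vg}(\mathit{source})$. Since $\mathit{source}$ is valid it lies in $V(\infty)$ (\Cref{lemma:valid_in_v}), so $\mathit{preconditions}(\mathit{source}, t) = \top$ for all $t$ (Epilogue of \Cref{subsection:correctness_proof}); the Safety property of $\mathit{vg}(\mathit{source})$ therefore produces a tuple $(v^*, \mathit{set}', \epsilon^*)$ with $\mathtt{valid}((v^*, \mathit{set}', \epsilon^*), \mathit{source}) = \top$ and $\mathit{set}' = \mathit{set}$ or $\mathit{set}' = \emptyset$. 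Tracing the $\mathtt{valid}$ predicate of \Cref{lst:view_set_evidence_valid} (and, in the $\mathit{set}' = \emptyset$ case, the $\mathtt{construct}$ function of \Cref{lst:set_proof} that assembled the committed $\mathit{set}$), every change of $v$ is ultimately inspected by the predicate's change loop; the positive branch of that loop, for a change $(+, r) \notin \mathit{genesis}$, requires the evidence to carry a properly signed message $[\mathtt{JOIN}, \sigma_v]$ with sender $r$. As a correct process accepting the enclosing $\mathtt{INSTALL}$ message discards improperly signed included messages, and (idealized PKI) only $r$ can sign for $r$, the server $r$ must have taken a step, i.e.\ $r \in \mathcal{S}$. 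As a consistency check, the negative branch of the same loop is guarded by the conditions that $(-, r)$ may occur in the destination candidate or in a view of $\mathit{set}'$ only if $(+, r)$ already belongs to $\mathit{source}$ — precisely the content mirrored by \Cref{lemma:minus_implies_plus_view} at the level of valid views.

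The step I expect to be the main obstacle is exactly the bookkeeping in that last argument: bridging the abstract Safety guarantee of the view generator to the concrete changes carried by the committed $\mathit{set}$, in particular the branch where the witnessing tuple has an empty set-component, which corresponds to a freshly proposed installable view built by $\mathtt{construct}$ out of earlier view-paths. There one must argue that the view $v$ under consideration, although not literally among the tuple's inputs, is a union of changes each of which was validated by some input tuple to $\mathit{rla}(\mathit{source}, 1)$, so that $(+, r) \in v$ still entails a JOIN message from $r$; this relies on unfolding $\mathtt{construct}$ together with the Validity/Comparability properties of the underlying $\mathit{rla}$ instances. Everything downstream of the claim is purely combinatorial: the finiteness of $\{+, -\} \times \mathcal{S}$ immediately caps the number of valid views, completing the proof.
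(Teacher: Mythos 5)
Your proposal is correct and rests on the same core argument as the paper's proof: every positive change $(+,r)$ in a non-genesis valid view is vetted by the $\mathtt{valid}$ predicate of \Cref{lst:view_set_evidence_valid}, which demands a signed $\mathtt{JOIN}$ message from $r$, so $r$ must have started (or be a genesis member), and \Cref{lemma:minus_implies_plus_view} then controls the negative changes. The only divergence is cosmetic: you finish by counting subsets of the finite universe $\{+,-\}\times\mathcal{S}$ directly, whereas the paper bounds each view individually and invokes the comparability of valid views (\Cref{lemma:all_comparable}); your careful treatment of the $\mathtt{construct}$/Safety bridging step is in fact more explicit than the paper's, which compresses it into a single reference to \Cref{lst:view_set_evidence_valid}.
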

\begin{proof}
Consider any valid view $v$.
We prove that $|C_v^+ = \{(+, r) \,|\, (+, r) \in v\}| < \infty$.
We distinguish two possible cases:
\begin{compactitem}
    \item Let $v = \mathit{genesis}$.
    In this case, $|C_v^+| < \infty$ due to the definition of $\mathit{genesis}$ (see \Cref{appx:model}, paragraph ``Constants'').
    
    \item Let $v \neq \mathit{genesis}$.
    By \Cref{definition:valid_view}, there exists a view-path $\mathit{path} \subseteq \mathcal{I}^*(\infty)$ such that $\mathit{path}.\mathtt{destination()} = v$.
    Let $m \in \mathit{path}$ be the $\mathtt{INSTALL}$ message such that $m.\mathtt{destination()} = v$; note that $m.\mathtt{source()}$ is a valid view.
    Now, for every $(+, r) \in C_v^+$, the $\mathtt{JOIN}$ message is sent by $r$ (see \Cref{lst:view_set_evidence_valid}); hence, $r$ has previously started.
    Due to the assumption that only finitely many servers start, $|C_v^+| < \infty$.
\end{compactitem}

Now, let $C_v^- = \{(-, r) \,|\, (-, r) \in v\}$.
Because of \Cref{lemma:minus_implies_plus_view} and $|C_v^+| < \infty$, we conclude that $|C_v^-| < \infty$.
Therefore, $|v| < \infty$.
Recall that every valid view belongs to $V(\infty)$ (by \Cref{lemma:valid_in_v}).
Since all valid views have finitely many changes, all valid views belong to $V(\infty)$ and all views from $V(\infty)$ are comparable (by \Cref{lemma:all_comparable}), the theorem holds.
\end{proof}

Moreover, we prove that there are finitely many sequences in $\rho_{\infty}(v)$, where $v$ is a valid view.

\begin{lemma} [Finitely Many Sequences] \label{lemma:finite_sequences}
Let view $v$ be valid.
Then, $|\rho_{\infty}(v)| < \infty$.
\end{lemma}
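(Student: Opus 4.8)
The plan is to reduce the statement directly to the \emph{bounded decisions} property of the view generator $\mathit{vg}(v)$, which was already established earlier for our implementation of $\mathit{vg}$. First I would unfold the definition of $\rho_\infty(v)$: by construction, $\mathit{set} \in \rho_\infty(v)$ exactly when some (correct or faulty) process obtains an $\mathtt{INSTALL}$ message $[\mathtt{INSTALL}, v, \mathit{set}, \omega]$. But by the very definition of $\mathtt{INSTALL}$ messages (\Cref{lst:install_messages}), such a message is well-formed only if its certificate satisfies $\mathtt{verify\_output}(\mathit{set}, \mathit{vg}(v), \omega) = \top$. Hence every $\mathit{set} \in \rho_\infty(v)$ is \emph{committed} by $\mathit{vg}(v)$, i.e., $\rho_\infty(v) \subseteq \mathit{SET}$, where $\mathit{SET} = \{\mathit{set} \mid \mathit{set} \text{ is committed by } \mathit{vg}(v)\}$.

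Next I would invoke the bounded decisions property of $\mathit{vg}(v)$, which asserts $|\mathit{SET}| < \infty$. For this to apply, the failure model assumed by $\mathit{vg}(v)$ must hold, namely that at least $v.\mathtt{quorum}()$ members of $v$ are correct; this is precisely what the hypothesis that $v$ is a valid view gives us (the failure model in \Cref{appx:model}, together with \Cref{definition:valid_view}). Chaining the two steps yields $|\rho_\infty(v)| \leq |\mathit{SET}| < \infty$, which is the claim.

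The only point requiring a little care is the first step: one must check that the definition of $\rho_t$ really does force membership in the set of $\mathit{vg}(v)$-committed decisions in exactly the sense used by bounded decisions — i.e., that the certificate embedded in a well-formed $\mathtt{INSTALL}$ message is a genuine $\mathit{vg}(v)$-output certificate, and that its being \emph{obtained} by some process (the notion used to populate $\mathcal{I}^*(\infty)$) is what ``committed'' means for $\mathit{vg}(v)$. This is pure bookkeeping against \Cref{lst:install_messages,lst:verify_output_vg}; beyond it there is no real obstacle, since the substantive work — bounding the decisions of $\mathit{rla}(v,2)$ and lifting that bound to $\mathit{vg}(v)$ — has already been carried out in the preceding correctness proofs.
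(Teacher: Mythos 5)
Your proof is correct and takes exactly the same route as the paper, whose entire argument is the one-liner ``Follows directly from the bounded decisions property of $\mathit{vg}(v)$.'' You have simply made explicit the bookkeeping the paper leaves implicit: that every $\mathit{set} \in \rho_\infty(v)$ carries a certificate $\omega$ with $\mathtt{verify\_output}(\mathit{set}, \mathit{vg}(v), \omega) = \top$ and is therefore committed by $\mathit{vg}(v)$, so the bounded decisions property applies.
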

\begin{proof}
Follows directly from the bounded decisions property of $\mathit{vg}(v)$.
\end{proof}

For the sake of brevity, we might use the result of \Cref{lemma:finite_sequences} without explicitly stating it in the following.

\subsubsection{Safety Properties} \label{subsection:safety_properties}

We start by proving the view comparability property specified in \Cref{lst:reconfiguration_properties_new}.

\begin{theorem} [View Comparability] \label{theorem:view_comparability}
View comparability is satisfied.
\end{theorem}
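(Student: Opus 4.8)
The plan is to reduce View Comparability to the already-established \Cref{lemma:all_comparable}, which states that any two views in $V(t)$ are comparable for any fixed time $t$. First I would recall the definition of a valid view (\Cref{definition:valid_view}): a view $v$ is valid iff $v = \mathit{genesis}$ or there exists a view-path $\mathit{path} \subseteq \mathcal{I}^*(\infty)$ with $\mathit{path}.\mathtt{destination()} = v$. Then I would invoke \Cref{lemma:valid_in_v} (``Valid Views are in $V(\infty)$''), which directly gives that every valid view belongs to $V(\infty)$, i.e., to $V(t)$ at time $t = \infty$.

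The core of the argument is then a one-line application: given two valid views $v_1$ and $v_2$, by \Cref{lemma:valid_in_v} we have $v_1, v_2 \in V(\infty)$, and by \Cref{lemma:all_comparable} (instantiated at time $t = \infty$) either $v_1 \subseteq v_2$ or $v_1 \supset v_2$. Since these two cases together mean $v_1$ and $v_2$ are comparable, and the choice of $v_1, v_2$ was arbitrary, all valid views are pairwise comparable, which is exactly the View Comparability property from \Cref{lst:reconfiguration_properties_new}.

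I do not anticipate a genuine obstacle here — the heavy lifting has already been done in establishing that all invariants (\Cref{invariant:rhos} through \Cref{invariant:accepted_produced_installable}) are preserved across all $\mathtt{INSTALL}$ message arrivals (\Cref{theorem:final_1,theorem:final_2,theorem:final_3,theorem:final_4}), and in deriving \Cref{lemma:all_comparable} from the structural lemmas about installable views, auxiliary views, and the ``leads to'' relation. The only minor care point is making sure the application of \Cref{lemma:all_comparable} at time $\infty$ is legitimate: since $\mathcal{I}^*(t') \subseteq \mathcal{I}^*(t)$ for $t' < t$ and all invariants hold at every finite time, they hold at time $\infty$ as well (this is remarked explicitly in the ``Epilogue'' paragraph following \Cref{theorem:final_4}), so $V(\infty)$ is a well-defined set to which \Cref{lemma:all_comparable} applies. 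The proof is therefore essentially a two-sentence citation of \Cref{lemma:valid_in_v} and \Cref{lemma:all_comparable}.

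\begin{proof}
Let $v_1$ and $v_2$ be any two valid views. By \Cref{lemma:valid_in_v}, both $v_1 \in V(\infty)$ and $v_2 \in V(\infty)$. Applying \Cref{lemma:all_comparable} at time $t = \infty$, we obtain that either $v_1 \subseteq v_2$ or $v_1 \supset v_2$; in either case $v_1$ and $v_2$ are comparable. Since $v_1$ and $v_2$ were arbitrary valid views, all valid views are pairwise comparable, which establishes view comparability.
\end{proof}
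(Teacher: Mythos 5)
Your proof is correct and matches the paper's own argument exactly: the paper likewise cites \Cref{lemma:valid_in_v} to place all valid views in $V(\infty)$ and then concludes by \Cref{lemma:all_comparable}. Your additional remark about why applying \Cref{lemma:all_comparable} at time $\infty$ is legitimate is a sensible precaution, but no further justification is needed.
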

\begin{proof}
All valid views belong to $V(\infty)$ (by \Cref{lemma:valid_in_v}).
Hence, the theorem follows from \Cref{lemma:all_comparable}.
\end{proof}

The direct consequence of the view comparability property is that the $\mathit{history}$ variable (initialized at line~\ref{line:init_history} of \Cref{lst:storage_initialization}) of a correct server is a sequence.
The indirect consequences of the property are:
\begin{compactenum}
    \item The $\mathit{sequence}$ variable (initialized at line~\ref{line:sequence_init} of \Cref{lst:storage_initialization}) of a correct server contains sequences.
    
    \item The $\mathit{reconfiguration}.\mathit{sequence}$ variable (initialized at line~\ref{line:reconfiguration_sequence_init} of \Cref{lst:reconfiguration_initialization}) of a correct server is a sequence (or $\bot$).
\end{compactenum}

We are ready to prove the join safety property (see \Cref{subsection:carbon_properties}, paragraph ``Properties of \sysname'').

\begin{theorem} [Join Safety] \label{theorem:join_safety}
Join safety is satisfied.
\end{theorem}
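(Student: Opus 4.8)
The plan is to unwind the definition of ``a server joins'' into the handful of places where a $\mathtt{joined}$ or $r\ \mathtt{joined}$ event fires, and to reduce the whole statement to one claim about valid views. A correct server triggers $\mathtt{joined}$ either at line~\ref{line:join_init} of \Cref{lst:reconfiguration_initialization} --- then $r \in \mathit{genesis}.\mathtt{members}()$ and the first disjunct holds --- or at line~\ref{line:joined} of \Cref{lst:reconfiguration_view_transition}, which (through the $\mathtt{upon}$ guard at line~\ref{line:updated_discovery_exists} together with line~\ref{line:set_reconfiguration_destination}) is reached only when $r \in v.\mathtt{members}()$ for the view $v \in \mathit{history}$ stored into $\mathit{reconfiguration}.\mathit{destination}$. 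For a faulty server $r$, a correct server triggers $r\ \mathtt{joined}$ either at line~\ref{line:trigger_for_others_init} of \Cref{lst:reconfiguration_initialization} (so $r \in \mathit{genesis}.\mathtt{members}()$) or at line~\ref{line:trigger_for_others_2} of \Cref{lst:reconfiguration_view_transition} with $r \in \mathit{reconfiguration}.\mathit{destination}.\mathtt{members}()$. Hence, apart from the trivial $\mathit{genesis}$ case, there is always a view $v \in \mathit{history}$ of some correct server with $r \in v.\mathtt{members}()$; by \Cref{lemma:history_valid}, $v$ is valid. Since $\mathit{genesis}$ carries no negative change, $r \notin \mathit{genesis}.\mathtt{members}()$ is the same as $(+,r) \notin \mathit{genesis}$, and this forces $v \neq \mathit{genesis}$. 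So it suffices to prove: \emph{if $v$ is a valid view, $v \neq \mathit{genesis}$, $(+,r) \in v$, and $(+,r) \notin \mathit{genesis}$, then some process obtains a voting proof $\sigma_v$ with $\mathtt{verify\_voting}(\textnormal{``add server $r$''}, \sigma_v) = \top$} --- i.e.\ $r$ is voted in.

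To establish this reduced claim I would trace the change $(+,r)$ backwards through the layered primitives. By \Cref{definition:valid_view} (and \Cref{lemma:install_sequence_follows}) there is a view-path whose last message $m = [\mathtt{INSTALL}, v'', \mathit{set}, \omega]$ satisfies $v = m.\mathtt{destination}() \in \mathit{set}$, with $v''$ a valid view, $\mathit{set}$ committed by $\mathit{vg}(v'')$, and $\mathit{preconditions}(v'', \infty) = \top$ (the Epilogue of \Cref{subsubsection:intermediate_results}). Unwinding $\mathit{vg}$ (\Cref{lst:view_generator_implementation}, \Cref{lst:verify_output_vg}), $\mathit{set}$ is a union of outputs of $\mathtt{construct}$ (\Cref{lst:set_proof}) applied to sets committed by $\mathit{rla}(v'',1)$; using validity of $\mathit{rla}(v'',2)$ and $\mathit{rla}(v'',1)$ together with \Cref{lemma:committed_sigma}, those sets consist of view-generator inputs $(v', \mathit{set}', \epsilon')$ with $\mathtt{valid}((v', \mathit{set}', \epsilon'), v'') = \top$ and with $\epsilon'$ already obtained by some process. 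Because $\mathtt{construct}$ only builds views by unions and max-cardinality selections of the views occurring inside its input tuples, $(+,r) \in v \in \mathit{set}$ implies $(+,r) \in v' \cup \bigcup_{u \in \mathit{set}'} u$ for one such valid input $(v', \mathit{set}', \epsilon')$; the $v'' = \mathit{genesis}$ branch (where inputs have the shape $(v', \emptyset, \epsilon')$ and $\mathtt{construct}$ returns a single union of the $v'$'s) is handled the same way, with $(+,r) \in v'$.

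It then remains to read off the $\mathtt{valid}$ function (\Cref{lst:view_set_evidence_valid}): since $(+,r)$ is a change with $(+,r) \notin \mathit{genesis}$ lying in $v'$ or in $\bigcup_{u \in \mathit{set}'} u$, the test at line~\ref{line:check_join_1} (if $v'' = \mathit{genesis}$) or line~\ref{line:check_join_2} (otherwise) must have succeeded, so $\epsilon'.\mathit{requests}[(+,r)] = [\mathtt{JOIN}, \sigma_v]$ with sender $r$ and $\mathtt{verify\_voting}(\textnormal{``add server $r$''}, \sigma_v) = \top$. As $\epsilon'$ has been obtained and $\sigma_v$ is an included (hence authenticated) submessage of it, $\sigma_v$ is obtained; thus the motion ``add server $r$'' passes and $r$ is voted in, finishing the claim and the theorem.

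I expect the main obstacle to be the bookkeeping in the middle step: pushing the membership of $(+,r)$ cleanly through the $\mathit{rla}(v'',1) \to \mathit{rla}(v'',2) \to \mathit{vg}(v'')$ stack and through both branches of $\mathtt{construct}$, and making sure the degenerate $v'' = \mathit{genesis}$ case --- where $\mathtt{valid}$ only inspects the stand-alone component $v'$ and not a sequence --- is covered. The rest is a direct confrontation of the pseudocode with the already-proved safety properties of the view generator and of RLA.
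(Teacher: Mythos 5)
Your proposal is correct and follows essentially the same route as the paper's proof: reduce to a valid view $v \neq \mathit{genesis}$ with $r \in v.\mathtt{members}()$ obtained from the lines where the $\mathtt{joined}$ events fire, take a view-path to $v$, and read the $\mathtt{JOIN}$ message with its voting proof off the checks at lines~\ref{line:check_join_1}/\ref{line:check_join_2} of \Cref{lst:view_set_evidence_valid}. The only difference is that you make explicit the step the paper leaves implicit --- pushing $(+,r)$ from the committed output of $\mathit{vg}(v'')$ back to a valid input tuple through $\mathit{rla}(v'',2)$, $\mathit{rla}(v'',1)$ and both branches of $\mathtt{construct}$ --- which is a correct and welcome elaboration rather than a different argument.
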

\begin{proof}
Let a server $r$ join and let $r \notin \mathit{genesis}.\mathtt{members()}$.
We prove that $r$ is voted in, i.e., that a process obtains a voting proof $\sigma_v$ such that $\mathtt{verify\_voting}(\text{``add server } r\text{''}, \sigma_v) = \top$.

Since $r$ joins and $r \notin \mathit{genesis}.\mathtt{members()}$, there exists a correct server that executes line~\ref{line:joined} of \Cref{lst:reconfiguration_view_transition} (if $r$ is correct, then $r$ itself executes this line; see \Cref{subsection:carbon_properties}, paragraph ``Validators'') or line~\ref{line:trigger_for_others_2} of \Cref{lst:reconfiguration_view_transition} (if $r$ is faulty, a correct server executes this line).
Hence, there exists a valid view $v \neq \mathit{genesis}$ such that $r \in v.\mathtt{members()}$.

Since $v \neq \mathit{genesis}$ is valid, there exists a view-path $\mathit{path} \subseteq \mathcal{I}^*(\infty)$ such that $\mathit{path}.\mathtt{destination()} = v$ (by \Cref{definition:valid_view}).
All views that belong to $\mathit{path}.\mathtt{views()}$ are valid (by \Cref{lemma:knows_view_system}).
Hence, there exists an $\mathtt{INSTALL}$ message $m$ such that $m.\mathtt{source()}$ is a valid view and $r \in m.\mathtt{destination()}.\mathtt{members()}$.
Finally, the theorem holds because the evidence handed to the view generator primitive contains $\mathtt{JOIN}$ messages with the voting proofs for every $(+, r')$ ``proposed'' change (by line~\ref{line:check_join_1} or line~\ref{line:check_join_2} of \Cref{lst:view_set_evidence_valid}), where $r'$ is a server.
\end{proof}

The next theorem proves the leave safety property of \sysname (see \Cref{subsection:carbon_properties}, paragraph ``Properties of \sysname'').

\begin{theorem} [Leave Safety] \label{theorem:leave_safety}
Leave safety is satisfied.
\end{theorem}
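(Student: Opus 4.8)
The plan is to mirror the structure of the join safety proof (\Cref{theorem:join_safety}), replacing the ``JOIN'' reasoning with the analogous ``LEAVE'' and voting-proof reasoning. Leave safety states: if a correct server $r$ leaves, then $r$ requested to leave or is voted out (i.e., some process obtains a voting proof $\sigma_v$ with $\mathtt{verify\_voting}(\text{``remove server } r\text{''}, \sigma_v) = \top$). The first step is to unwind what it means for a correct server to leave: by the ``Validators'' paragraph of \Cref{subsection:carbon_properties}, a correct server leaves only after fully executing the leaving subprotocol, which in the pseudocode means executing line~\ref{line:left} of \Cref{lst:reconfiguration_view_transition}. That line is reached only through the check at line~\ref{line:check_members_current_view}, i.e., when $r \notin \mathit{reconfiguration}.\mathit{destination}.\mathtt{members()}$ where $v := \mathit{reconfiguration}.\mathit{destination}$ is the view the system transitioned to.

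Next I would argue that $r$ was a member of the previous view $\mathit{reconfiguration}.\mathit{source}$ (otherwise there is nothing to do — but since $r$ is a correct server that leaves, it had joined and been a validator, so $r$ belonged to some earlier valid view, and by the view-transition logic $r \in \mathit{reconfiguration}.\mathit{source}.\mathtt{members()}$; this is where the change $(-, r)$ must have entered the new view $v$). Concretely, since $r \notin v.\mathtt{members()}$ but $r$ was a member of a smaller valid view on $r$'s view-path, the change $(-, r)$ is present in $v$ (or in some view along the sequence leading to $v$). The view $v$ is valid (by \Cref{lemma:current_view_valid}), so by \Cref{definition:valid_view} there is a view-path $\mathit{path} \subseteq \mathcal{I}^*(\infty)$ with $\mathit{path}.\mathtt{destination()} = v$, all of whose views are valid (\Cref{lemma:knows_view_system}). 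Walking along this path, there is an $\mathtt{INSTALL}$ message $m$ with $m.\mathtt{source()}$ valid and $(-, r)$ appearing in $m.\mathtt{destination()}$ or in $m.\mathtt{tail()}$ — i.e., $(-, r)$ is a ``proposed'' change in the input to $\mathit{vg}(m.\mathtt{source()})$.

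The key step is then to invoke the validity of the view generator input: by the $\mathtt{valid}$ function for $(\mathit{View}, \mathit{Set}, \mathit{Evidence})$ tuples (\Cref{lst:view_set_evidence_valid}, specifically the check at line~\ref{line:check_leave_1} or line~\ref{line:check_leave_2}), for each negative change $(-, r)$ in the proposed view or its sequence, the evidence carries either a $\mathtt{LEAVE}$ message signed by $r$ or a voting proof $\sigma_v$ with $\mathtt{verify\_voting}(\text{``remove server } r\text{''}, \sigma_v) = \top$. Since this evidence is part of the $\mathtt{INSTALL}$ message obtained by a process, one of these two cases holds: either $r$ genuinely requested to leave (it gossiped $[\mathtt{LEAVE}]$, using the integrity of the gossip primitive and the unforgeability of signatures to conclude that a $\mathtt{LEAVE}$ message with sender $r$ means $r$ invoked $\mathtt{leave}$, since $r$ is correct), or a valid removal voting proof for $r$ exists, i.e., $r$ is voted out. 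The main obstacle is the bookkeeping: pinpointing exactly which $\mathtt{INSTALL}$ message along the view-path introduces $(-, r)$ and confirming that the $\mathtt{valid}$ predicate was evaluated on that message's evidence — this requires carefully tracking that $(-, r) \in v' \cup \bigcup_{v^* \in \mathit{set}'} v^*$ for the relevant $(v', \mathit{set}', \epsilon')$ input, so that the loop over changes in \Cref{lst:view_set_evidence_valid} actually covers $(-, r)$. I expect this to be a short proof once that indexing is handled; everything else is a direct appeal to the $\mathtt{valid}$ function and gossip integrity, exactly paralleling \Cref{theorem:join_safety}.
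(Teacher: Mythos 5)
Your proposal is correct and follows essentially the same route as the paper's proof: unwind the leave event to line~\ref{line:left} of \Cref{lst:reconfiguration_view_transition}, obtain valid views $v_1 \ni r$ and $v_2 \not\ni r$ so that $(-,r) \in v_2$, extract a view-path and an $\mathtt{INSTALL}$ message whose destination contains $(-,r)$, and then appeal to the checks at lines~\ref{line:check_leave_1}/\ref{line:check_leave_2} of \Cref{lst:view_set_evidence_valid} to conclude that the evidence carries either a signed $\mathtt{LEAVE}$ message from $r$ (hence $r$ requested to leave, by line~\ref{line:gossip_leave} of \Cref{lst:reconfiguration_leaving}) or a removal voting proof. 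The bookkeeping you flag as the main obstacle is also glossed over in the paper itself (``similarly to the proof of join safety''), so nothing essential is missing.
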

\begin{proof}
Let a correct server $r$ leave.
Hence, $r$ executes line~\ref{line:left} of \Cref{lst:reconfiguration_view_transition}.
Therefore, there exist valid views $v_1$ and $v_2$ such that (1) $r \in v_1.\mathtt{members()}$, and (2) $r \notin v_2.\mathtt{members()}$ (by line~\ref{line:updated_discovery_exists} of \Cref{lst:reconfiguration_view_transition}).

Since $r \in v_1.\mathtt{members()}$, $(+, r) \in v_1$ (see \Cref{lst:view}).
Moreover, since $v_1 \subset v_2$ and $r \notin v_2.\mathtt{members()}$, $(+, r), (-, r) \in v_2$.
Hence, $v_2 \neq \mathit{genesis}$ (since $\mathit{genesis}$ does not contain ``negative'' changes; see \Cref{appx:model}, paragraph ``Constants''). 
By \Cref{definition:valid_view}, there exists a view-path $\mathit{path} \subseteq \mathcal{I}^*(\infty)$ such that $\mathit{path}.\mathtt{destination()} = v_2$.
Similarly to the proof of join safety (\Cref{theorem:join_safety}), there exists an $\mathtt{INSTALL}$ message $m$ such that $m.\mathtt{source()}$ is a valid view and $(-, r) \in m.\mathtt{destination()}$.
Furthermore, the evidence handed to the view generator primitive contains either voting proofs or $\mathtt{LEAVE}$ messages for every $(-, r')$ ``proposed'' change (by line~\ref{line:check_leave_1} or line~\ref{line:check_leave_2} of \Cref{lst:view_set_evidence_valid}), where $r'$ is a server.
Therefore, $r$ either sent the $\mathtt{LEAVE}$ message or is voted out.
Finally, if $r$ sent the $\mathtt{LEAVE}$ message, then $r$ has previously requested to leave (because the $\mathtt{LEAVE}$ message is sent at line~\ref{line:gossip_leave} of \Cref{lst:reconfiguration_leaving}).
The theorem holds.
\end{proof}

\subsubsection{Liveness Properties} \label{subsection:reconfiguration_liveness}

We now prove the finality property specified in \Cref{lst:reconfiguration_properties_new}.
Then, we show the join liveness, leave liveness and removal liveness properties of \sysname.

We start by showing that a correct server always hand ``valid'' proposals to a view generator instance.

\begin{lemma} \label{lemma:propose_only_valid}
Let a correct server $r$ propose $(v', \mathit{set}', \epsilon')$ to $\mathit{vg}(v)$.
Then, $\mathtt{valid}((v', \mathit{set}', \epsilon'), v) = \top$.
\end{lemma}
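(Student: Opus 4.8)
The plan is to verify the claim by a case split on whether $v = \mathit{genesis}$, mirroring the structure of the $\mathtt{valid}$ function in \Cref{lst:view_set_evidence_valid}, and to track which fields of the proposal are set by the server's code in \Cref{lst:reconfiguration_view_generation}. A correct server only proposes to $\mathit{vg}(v)$ at line~\ref{line:propose_installed} of \Cref{lst:reconfiguration_view_generation}, where $v = \mathit{current\_view}.\mathit{view}$, the proposed tuple is $(\mathit{current\_view}.\mathit{view} \cup \mathit{proposal}, \emptyset, \epsilon_{\mathit{vg}})$ with $\epsilon_{\mathit{vg}}.\mathit{path} = \mathit{current\_view}.\mathit{proof}$, and the trigger is guarded by $\mathit{current\_view}.\mathit{installed} = \top$ and $\mathit{to\_propose}() = \mathit{proposal} \neq \emptyset$. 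The key auxiliary facts I would establish first are: (i) when the proposal is made, $\mathit{current\_view}.\mathit{proof}$ is a view-path $\mathit{path}$ with $\mathit{path}.\mathtt{destination()} = v$ and $\mathit{path}.\mathtt{tail()} = \emptyset$ (since installation at line~\ref{line:install} of \Cref{lst:reconfiguration_view_transition} happens exactly when $\mathit{reconfiguration}.\mathit{sequence} = \{\mathit{current\_view}.\mathit{view}\}$, i.e.\ the path's last $\mathtt{INSTALL}$ message has empty tail, and for the genesis case $\mathit{current\_view}.\mathit{proof} = \emptyset$ by line~\ref{line:current_view_proof_init}); and (ii) $v' = v \cup \mathit{proposal}$ satisfies $v \subset v'$ because $\mathit{to\_propose}()$ only returns changes not already in $\mathit{current\_view}.\mathit{view}$ (line~\ref{line:iteration}).

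For the genesis case ($v = \mathit{genesis}$), I would check each rejection condition in the first branch of \Cref{lst:view_set_evidence_valid}: $v \subset v'$ and $\mathit{set}' = \emptyset$ hold by the above; $\epsilon'.\mathit{path} = \bot$ — wait, actually $\epsilon_{\mathit{vg}}.\mathit{path} = \mathit{current\_view}.\mathit{proof} = \emptyset$ here, not $\bot$, so I need to be careful: the check at line in \Cref{lst:view_set_evidence_valid} is ``if $\epsilon'.\mathit{path} \neq \bot$ return $\bot$''. This is the delicate point — I would need to argue that for a genesis-view proposer $\mathit{current\_view}.\mathit{proof}$ is indeed $\bot$ and not $\emptyset$, or that $\emptyset$ and $\bot$ are identified for view-paths; the paper writes $\mathit{view\_path} = \{\mathit{genesis} \to \emptyset\}$ in \Cref{lst:storage_initialization} and $\mathit{current\_view}.\mathit{proof} = \bot$ initially in \Cref{lst:reconfiguration_initialization} line~\ref{line:current_view_proof_init}, and that latter value is never overwritten for a genesis member (line~\ref{line:update_view_formal_2} runs only on view transitions). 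So $\epsilon_{\mathit{vg}}.\mathit{path} = \bot$ and the check passes. The membership check $(+, r) \notin v \wedge (-, r) \in v'$ cannot hold because $\mathit{to\_propose}()$ only adds $(-, r)$ when $(+, r) \in \mathit{current\_view}.\mathit{view}$. Finally, for each change $c \in v' \setminus \mathit{genesis}$, $c \in \mathit{proposal} \subseteq \mathit{reconfiguration}.\mathit{requested}$, and $\epsilon_{\mathit{vg}}.\mathit{requests}$, $\epsilon_{\mathit{vg}}.\mathit{voting\_proofs}$ are copied from $\mathit{reconfiguration}.\mathit{requests}$, $\mathit{reconfiguration}.\mathit{voting\_proofs}$; these are populated at lines~\ref{line:store_join_message}, \ref{line:store_leave_message}, \ref{line:store_passing_proof} of \Cref{lst:reconfiguration_joining,lst:reconfiguration_leaving} exactly with the validated $\mathtt{JOIN}$/$\mathtt{LEAVE}$ messages and voting proofs (the guards at lines~\ref{line:join_message}, \ref{line:leave_message} and the ``upon obtaining Voting\_Proof'' rule enforce the required predicates), so the $\mathtt{verify\_voting}$ checks at lines~\ref{line:check_join_1}, \ref{line:check_leave_1} succeed.

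For the non-genesis case, the argument is analogous but I would additionally dispatch the checks that reference $\mathit{set}'$: since $\mathit{set}' = \emptyset$, the checks at ``$\mathit{set}' \neq \emptyset \wedge v' \notin \mathit{set}'$'' and ``$\exists v^* \in \mathit{set}'$'' are vacuous, while ``$\mathit{set}' = \emptyset \wedge v \not\subset v'$'' is handled by $v \subset v'$; the membership checks at the two ``$(+, r) \notin v \wedge \ldots$'' lines are handled as before; and the crucial check $\epsilon'.\mathit{path}.\mathtt{destination()} \neq v \vee \epsilon'.\mathit{path}.\mathtt{tail()} \neq \mathit{set}'$ is handled by fact (i), which gives $\epsilon_{\mathit{vg}}.\mathit{path}.\mathtt{destination()} = v$ and $\epsilon_{\mathit{vg}}.\mathit{path}.\mathtt{tail()} = \emptyset = \mathit{set}'$. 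I expect the main obstacle to be fact (i): establishing that at the moment of a proposal the server's $\mathit{current\_view}.\mathit{proof}$ really is a genuine view-path ending at $v$ with empty tail — this requires chasing through the view-transition code (\Cref{lst:reconfiguration_view_transition}), in particular that $\mathit{path} = \mathit{view\_path}[v]$ obtained from the storage module is a valid view-path with $\mathit{view\_path}[v].\mathtt{destination()} = v$ (which follows from the construction at the end of \Cref{lst:storage_processing} and the $\mathtt{verify\_output}$ constraint in \Cref{lst:install_messages}), and that $\mathit{installed}$ is set only together with $\mathit{reconfiguration}.\mathit{sequence}$ being a singleton. These are essentially bookkeeping invariants of the storage and reconfiguration modules, and I would state them as a small preliminary lemma rather than re-deriving them inline.
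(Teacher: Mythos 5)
Your treatment of the \texttt{Propose} invocation at line~\ref{line:propose_installed} of \Cref{lst:reconfiguration_view_generation} is essentially the paper's argument for that case (including the correct observation that $\epsilon'.\mathit{path}=\bot$ for a genesis member and that $\mathit{to\_propose}()$ guarantees the membership checks), but your proof has a genuine gap: it covers only one of the two sites at which a correct server proposes to $\mathit{vg}(v)$. By the interface in \Cref{lst:view_generator_interface}, invoking $\langle\mathit{vg}(v).\mathtt{Start} \mid (v',\mathit{set}',\epsilon')\neq\bot\rangle$ also counts as proposing, and a correct server does exactly this at line~\ref{line:start_with_proposal} of \Cref{lst:reconfiguration_view_transition} whenever it transitions to a view that is not yet installable, i.e.\ when $\mathit{reconfiguration}.\mathit{sequence}\neq\{\mathit{current\_view}.\mathit{view}\}$. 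In that case $\mathit{set}'=\mathit{current\_view}.\mathit{proof}.\mathtt{tail}()\neq\emptyset$, so precisely the checks you dismiss as vacuous (lines~\ref{line:view_in_seq}, \ref{line:set_follows} and~\ref{line:check_valid_view_set_proof} of \Cref{lst:view_set_evidence_valid}) are the ones that must be discharged.

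This omitted case is the substantive half of the lemma and does not follow by the bookkeeping you set up. The paper handles it by observing that $\mathit{set}'\in\beta_{\infty}(v)$, hence $v\neq\mathit{genesis}$ (\Cref{lemma:beta_genesis_empty}), that $\mathit{set}'$ is a sequence following $v$ (\Cref{lemma:beta_sequence_follows}), and---for the check at line~\ref{line:check_valid_view_set_proof}---that \Cref{invariant:accepted_produced_installable} exhibits an installable view $v''\subset v$ with $\mathit{seq}''\in\rho_{\infty}(v'')$ containing $\mathit{set}'$, so that $(+,r)\notin v$ implies $(+,r)\notin v''$ and membership validity of $\mathit{vg}(v'')$ forbids $(-,r)$ from appearing in any view of $\mathit{set}'$. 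None of this is reachable from your ``fact (i)'', which only establishes that the path ends at $v$ with empty tail in the installed case. You should add this second case explicitly; without it the lemma, and hence rule~6 at line~\ref{line:rule_seq_sequence} of \Cref{lst:view_generator_interface} on which the view generator's validity proof depends, is not established.
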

\begin{proof}
There are two places for $r$ to propose to $\mathit{vg}(v)$.
We take a look at both places:
\begin{compactitem}
    \item line~\ref{line:propose_installed} of \Cref{lst:reconfiguration_view_generation}:
    We separate two possibilities:
    \begin{compactitem}
        \item Let $v = \mathit{genesis}$.
        First, $v \subset v'$ (ensured by the $\mathtt{to\_propose}$ function; line~\ref{line:to_propose_function} of \Cref{lst:reconfiguration_view_generation}).
        Moreover, $\epsilon'.\mathit{path} = \bot$ (by line~\ref{line:current_view_proof_init} of \Cref{lst:reconfiguration_initialization}).
        The check at line~\ref{line:check_minus_r_1} of \Cref{lst:view_set_evidence_valid} passes due to the $\mathtt{to\_propose}$ function.
        Finally, for each change that belongs to the $\mathit{reconfiguration}.\mathit{requested}$ variable, server $r$ stores the $\mathtt{JOIN}$ (at line~\ref{line:store_join_message} of \Cref{lst:reconfiguration_joining}) or the $\mathtt{LEAVE}$ message (at line~\ref{line:store_leave_message} of \Cref{lst:reconfiguration_leaving}) or the corresponding voting proof (at line~\ref{line:store_passing_proof} of \Cref{lst:reconfiguration_leaving}) in the $\mathit{reconfiguration}.\mathit{requests}$ or $\mathit{reconfiguration}.\mathit{voting\_proofs}$ variable.
        Thus, the lemma holds in this scenario.
        
        \item Let $v \neq \mathit{genesis}$.
        First, $v \subset v'$ (ensured by the $\mathtt{to\_propose}$ function; line~\ref{line:to_propose_function} of \Cref{lst:reconfiguration_view_generation}).
        Then, the check at line~\ref{line:check_minus_r_2} of \Cref{lst:view_set_evidence_valid} passes due to the $\mathtt{to\_propose}$ function.
        The check at line~\ref{line:check_destination_tail} of \Cref{lst:view_set_evidence_valid} passes since $\epsilon'.\mathit{path} = \mathit{current\_view}.\mathit{proof}$.
        Finally, for each change that belongs to the $\mathit{reconfiguration}.\mathit{requested}$ variable, server $r$ stores the $\mathtt{JOIN}$ (at line~\ref{line:store_join_message} of \Cref{lst:reconfiguration_joining}) or the $\mathtt{LEAVE}$ message (at line~\ref{line:store_leave_message} of \Cref{lst:reconfiguration_leaving}) or the corresponding voting proof (at line~\ref{line:store_passing_proof} of \Cref{lst:reconfiguration_leaving}) in the $\mathit{reconfiguration}.\mathit{requests}$ or $\mathit{reconfiguration}.\mathit{voting\_proofs}$ variable.
        Therefore, the lemma holds in this scenario.
    \end{compactitem}

    \item line~\ref{line:start_with_proposal} of \Cref{lst:reconfiguration_view_transition}:
    We know that $\epsilon'.\mathit{path}.\mathtt{tail()} = \mathit{set}'$.
    Furthermore, $\mathit{set}' \in \beta_{\infty}(v)$.
    By \Cref{lemma:beta_genesis_empty}, $v \neq \mathit{genesis}$.
    
    The check at line~\ref{line:view_in_seq} of \Cref{lst:view_set_evidence_valid} passes since $v' \in \mathit{set}'$.
    Next, the check at line~\ref{line:set_follows} of \Cref{lst:view_set_evidence_valid} passes because of \Cref{lemma:beta_sequence_follows}.
    
    Now, if $(+, r) \notin v$, then no view $v^* \in \mathit{set}'$ exists such that $(-, r) \in v^*$.
    Let us prove this claim.
    Since $\mathit{set}' \in \beta_{\infty}(v)$, $\mathit{seq}'' \in \rho_{\infty}(v'')$, where $\alpha_{\infty}(v'') = \top$, $\mathit{seq}'' = ... \to v \to v_1 \to ... \to v_x, (x \geq 1)$ and $\mathit{set}' = v_1 \to ... \to v_x$ (by \Cref{invariant:accepted_produced_installable}).
    Since $\mathit{seq}''.\mathtt{follows}(v'') = \top$ (by \Cref{invariant:rhos}), $v'' \subset v$.
    Hence, $(+, r) \notin v''$.
    Therefore, no view that belongs to $\mathit{set}'$ contains $(-, r)$ due to membership validity of $\mathit{vg}(v'')$ (since $v''$ is a valid view and $\mathit{preconditions}(v'', \infty) = \top$).
    Thus, the checks at lines~\ref{line:check_valid_view_set_proof} and~\ref{line:check_minus_r_2} of \Cref{lst:view_set_evidence_valid} pass.
    
    The check at line~\ref{line:check_destination_tail} of \Cref{lst:view_set_evidence_valid} passes since $\epsilon'.\mathit{path} = \mathit{current\_view}.\mathit{proof}$.
    Finally, for each change that belongs to a view of $\mathit{set}'$, server $r$ stores the $\mathtt{JOIN}$ or the $\mathtt{LEAVE}$ message or the corresponding voting proof in the $\mathit{reconfiguration}.\mathit{requests}$ or $\mathit{reconfiguration}.\mathit{voting\_proofs}$ variable because of the $\mathtt{extract\_requests\_and\_voting\_proofs}$ function invoked after an $\mathtt{INSTALL}$ message is processed (line~\ref{line:extract_invocation} of \Cref{lst:storage_processing}).
\end{compactitem}
The lemma is satisfied since it holds in both possible scenarios.
\end{proof}

The next lemma shows that installable views cannot be ``skipped''.
Specifically, we show that any view-path to a view $v'$ ``contains'' a view $v$, where $v$ is the greatest installable view smaller than $v'$; observe that such installable view always exists since any view-path contains at least a single message and the source of the first message is $\mathit{genesis}$, which is the smallest valid view (by \Cref{lemma:genesis_smallest_valid}).

\begin{lemma} \label{lemma:installable_in_path}
Let $\mathit{path} \subseteq \mathcal{I}^*(\infty)$ such that $\mathit{path}.\mathtt{destination()} = v'$.\footnote{Note that $v' \supset \mathit{genesis}$ since view-paths have at least a single message (see \Cref{lst:view_path}) and \Cref{invariant:rhos} holds.}
Let $v$ be the greatest view smaller than $v'$ with $\alpha_{\infty}(v) = \top$.
Then, $v \in \mathit{path}.\mathtt{views()}$.
\end{lemma}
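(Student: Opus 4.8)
The plan is to prove the lemma by locating $v$ among the source views strung along $\mathit{path}$. I would write $\mathit{path} = [m_1, \dots, m_k]$ and set $w_i = m_i.\mathtt{source}()$ for $i \in [1,k]$ and $w_{k+1} = m_k.\mathtt{destination}() = v'$. First I would observe, using \Cref{lst:view_path} and \Cref{lemma:path_sequence}, that $\mathit{path}.\mathtt{views}() = \{w_1, \dots, w_{k+1}\}$ and that these views form a chain $w_1 = \mathit{genesis} \subset w_2 \subset \cdots \subset w_{k+1} = v'$ (here $w_{i+1} = m_i.\mathtt{destination}()$ because consecutive messages of a view-path agree on source and destination). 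Since $\alpha_\infty(v) = \top$ forces $v \in V(\infty)$, $v$ is valid (\Cref{invariant:valid}), so $\mathit{genesis} \subseteq v$ by \Cref{lemma:genesis_smallest_valid}, while $v \subset v'$ by hypothesis; hence there is a largest index $j$ with $w_j \subseteq v$, and $j \le k$. It then suffices to show $w_j = v$, which I would argue by contradiction, assuming $w_j \subset v \subset w_{j+1}$. I would also record that the sequence $\mathit{set}_j$ carried by $m_j$ satisfies $\mathit{set}_j \in \rho_\infty(w_j)$, is a nonempty sequence with $\mathit{set}_j.\mathtt{follows}(w_j) = \top$, and has smallest view $\mathit{set}_j.\mathtt{first}() = m_j.\mathtt{destination}() = w_{j+1}$ (by \Cref{lemma:install_sequence_follows}).

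If $\alpha_\infty(w_j) = \top$: since $v$ is an installable view strictly above $w_j$, $w_j$ is not the greatest installable view, so by \Cref{lemma:n-1} and \Cref{lemma:greatest_does_not_lead} it leads to some $v^*$, and by \Cref{lemma:installable_leads_to_first_biggest} $v^*$ is the smallest installable view strictly above $w_j$; in particular $v^* \subseteq v \subset w_{j+1}$. By \Cref{definition:leads} there is $\mathit{seq}' \in \rho_\infty(w_j)$ with $\mathit{seq}'.\mathtt{last}() = v^*$, and comparing $\mathit{seq}'$ with $\mathit{set}_j$ through \Cref{lemma:weak_accuracy} is contradictory both ways: $\mathit{seq}' \subseteq \mathit{set}_j$ places $v^* \in \mathit{set}_j$ with $v^* \subset w_{j+1} = \mathit{set}_j.\mathtt{first}()$, and $\mathit{set}_j \subseteq \mathit{seq}'$ places $w_{j+1} \in \mathit{seq}'$ with $w_{j+1} \supset v^* = \mathit{seq}'.\mathtt{last}()$.

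If $\alpha_\infty(w_j) = \bot$: applying the epilogue consequence of \Cref{invariant:accepted_produced_installable} (from \cref{lemma:invariant_4_auxiliary,lemma:produced_then_accepted}) to $\mathit{set}_j \in \rho_\infty(w_j)$, I get an installable view $u$ for which $w_j$ is an auxiliary view at time $\infty$ and a sequence $\mathit{seq}' \in \rho_\infty(u)$ of the form $\cdots \to w_j \to w_{j+1} \to \cdots$, so $w_j$ and $w_{j+1}$ are \emph{consecutive} in $\mathit{seq}'$. Then $u \subset w_j$ by \Cref{lemma:auxiliary_bigger}, and since $v$ is an installable view strictly above $u$, $u$ is not the greatest installable view, so it leads to some $v^{**}$, which by \Cref{lemma:installable_leads_to_first_biggest} is the smallest installable view strictly above $u$; hence $v^{**} \subseteq v \subset w_{j+1}$, while $w_j \subset v^{**}$ by \Cref{lemma:auxiliary_smaller}. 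Taking $\mathit{seq}^{**} \in \rho_\infty(u)$ with $\mathit{seq}^{**}.\mathtt{last}() = v^{**}$ and comparing with $\mathit{seq}'$ via \Cref{lemma:weak_accuracy} is again contradictory: $\mathit{seq}^{**} \subseteq \mathit{seq}'$ puts $v^{**}$ strictly between the consecutive views $w_j$ and $w_{j+1}$ of $\mathit{seq}'$, and $\mathit{seq}' \subseteq \mathit{seq}^{**}$ puts $w_{j+1} \supset v^{**} = \mathit{seq}^{**}.\mathtt{last}()$ into $\mathit{seq}^{**}$. Hence $w_j = v$ and $v \in \mathit{path}.\mathtt{views}()$.

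I expect the main obstacle to be the non-installable case: one must correctly identify the ambient installable view $u$ whose produced sequence contains $w_j$ immediately followed by $w_{j+1}$ — which is exactly what \Cref{lemma:invariant_4_auxiliary} (a consequence of \Cref{invariant:accepted_produced_installable}) supplies — and then keep the chain $u \subset w_j \subset v^{**} \subseteq v \subset w_{j+1}$ straight so that the comparability of $\rho_\infty(u)$ (\Cref{lemma:weak_accuracy}) delivers the contradiction. The installable case is the same argument with $u = w_j$ and reduces to a direct comparability clash inside $\rho_\infty(w_j)$.
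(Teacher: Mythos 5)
Your proof is correct and follows essentially the same route as the paper's: the same case split on whether the source view of the "crossing" message is installable, the same use of the leads-to/auxiliary structure, and the same comparability clash (via \Cref{lemma:weak_accuracy}) between the message's carried sequence and the sequence witnessing the leads-to relation. The only presentational difference is that you locate the crossing message $m_j$ directly by taking the largest index with $w_j \subseteq v$, whereas the paper finds it by recursing backward along the path and discharging a separate termination argument; both reduce to the identical contradiction at that message.
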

\begin{proof}
Let $m^*$ be the last message of $\mathit{path}$; note that $m^*.\mathtt{destination()} = v'$.
Let $v^* = m^*.\mathtt{source()}$.
If $v^* = v$, the lemma holds.
Otherwise, we prove that $v \subset v^*$.

By contradiction, suppose that $v^* \subset v$.
We distinguish two possibilities:
\begin{compactitem}
    \item Let $\alpha_{\infty}(v^*) = \top$.
    Since $v^* \subset v$, $v^*$ leads to a view $v_i$ at time $\infty$ (by \cref{lemma:n-1,lemma:greatest_does_not_lead}).
    By \cref{lemma:installable_sequence,lemma:installable_between_leading}, $v_i \subseteq v$.
    
    Since $m^* \in \mathit{path}$, $\mathit{seq} \in \rho_{\infty}(v^*)$, where $\mathit{seq} = v' \to ...$.
    By \Cref{definition:leads}, $\mathit{seq}_i \in \rho_{\infty}(v^*)$, where $\mathit{seq}_i = ... \to v_i$.
    By \Cref{lemma:weak_accuracy}, either $\mathit{seq} \subseteq \mathit{seq}_i$ or $\mathit{seq} \supset \mathit{seq}_i$.
    Therefore, $v' \subseteq v_i$, which implies that $v' \subseteq v$.
    However, this contradicts the fact that $v' \supset v$.
    In this case, $v \subset v^*$.
    
    \item Let $\alpha_{\infty}(v^*) = \bot$.
    By \Cref{lemma:auxiliary_for_one}, $v^*$ is an auxiliary view for a view $v_i$.
    By \Cref{lemma:auxiliary_bigger}, $v_i \subset v^*$.
    Since $v^* \subset v$, $v_i \subset v$.
    Therefore, $v_i$ leads to a view $v_i'$ at time $\infty$ (by \cref{lemma:n-1,lemma:greatest_does_not_lead}).
    \Cref{lemma:installable_sequence,lemma:installable_between_leading} show that $v_i' \subseteq v$.
    
    Since $m^* \in \mathit{path}$, $\mathit{seq} \in \rho_{\infty}(v^*)$, where $\mathit{seq} = v' \to ...$.
    Moreover, $\mathit{seq}' \in \rho_{\infty}(v_i)$, where $\mathit{seq}' = ... \to v^* \to v' \to ...$ (by \cref{lemma:produced_then_accepted,lemma:invariant_4_auxiliary} and the fact that $v^*$ is an auxiliary view for $v_i$).
    On the other hand, $\mathit{seq}_i' \in \rho_{\infty}(v_i)$, where $\mathit{seq}_i' = ... \to v_i'$ (since $v_i$ leads to $v_i'$ at time $\infty$ and \Cref{definition:valid_view}).
    By \Cref{lemma:weak_accuracy}, either $\mathit{seq}' \subseteq \mathit{seq}_i'$ or $\mathit{seq}' \supset \mathit{seq}_i'$.
    Therefore, $v' \subseteq v_i'$, which implies that $v' \subseteq v$.
    This statement contradicts the fact that $v' \supset v$.
    Thus, $v \subset v^*$ in this case, as well.
\end{compactitem}

Since $v \subset v^*$, we reach the starting point of the lemma.
The recursion eventually stops since $\mathit{path}$ ``contains'' finitely many views (by \Cref{lemma:finite_valid_views} and the fact that all views from $\mathit{path}$ are valid), the smallest view that belongs to $\mathit{path}.\mathtt{views()}$ is $\mathit{genesis}$ (see \Cref{lst:view_path} and \Cref{lemma:path_sequence}) and $v \supseteq \mathit{genesis}$ (by \Cref{lemma:genesis_smallest_valid}).
\end{proof}

A direct consequence of \Cref{lemma:installable_in_path} is that all installable views smaller than the destination of a view-path are ``contained'' in the view-path.

\begin{lemma} \label{lemma:all_installable_in_path}
Let $\mathit{path} \subseteq \mathcal{I}^*(\infty)$ such that $\mathit{path}.\mathtt{destination()} = v'$.
Consider a view $v \subset v'$ such that $\alpha_{\infty}(v) = \top$.
Then, $v \in \mathit{path}.\mathtt{views()}$.
\end{lemma}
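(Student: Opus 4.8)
\textbf{Proof proposal for \Cref{lemma:all_installable_in_path}.}

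The plan is to prove this by a straightforward downward induction on the installable views below $v'$, using \Cref{lemma:installable_in_path} as the base engine. Recall that $v' \supset \mathit{genesis}$ (any view-path has at least one message, whose source is $\mathit{genesis}$, which is the smallest valid view by \Cref{lemma:genesis_smallest_valid}), and that all views in $\mathit{path}.\mathtt{views()}$ are valid by \Cref{lemma:knows_view_system}, hence belong to $V(\infty)$ by \Cref{lemma:valid_in_v}. The set $\mathcal{I}\mathcal{N}\mathcal{S}$ of installable views is a sequence (by \Cref{lemma:installable_sequence}) and is finite (since installable views are valid and there are finitely many valid views by \Cref{lemma:finite_valid_views}); moreover every installable view smaller than $v'$ contains $\mathit{genesis}$ by \Cref{invariant:initial_view} (or \Cref{lemma:genesis_smallest_valid}). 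So the installable views strictly below $v'$ form a finite chain $\mathit{genesis} = w_0 \subset w_1 \subset \dots \subset w_m$, and I must show each $w_j \in \mathit{path}.\mathtt{views()}$.

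First I would handle $w_m$, the greatest installable view strictly smaller than $v'$: this is exactly \Cref{lemma:installable_in_path}, so $w_m \in \mathit{path}.\mathtt{views()}$. Now for the inductive step, suppose $w_{j+1} \in \mathit{path}.\mathtt{views()}$ for some $j$; I want to deduce $w_j \in \mathit{path}.\mathtt{views()}$. Since $w_{j+1}$ is a view appearing in $\mathit{path}$ and $w_{j+1} \neq \mathit{genesis}$ (as $j+1 \geq 1$), there is a prefix $\mathit{path}' \subseteq \mathit{path}$ with $\mathit{path}'.\mathtt{destination()} = w_{j+1}$: concretely, $w_{j+1}$ is either the destination of the last message of $\mathit{path}$ or, being a view in $\mathit{path}.\mathtt{views()}$, it is $m_i.\mathtt{destination()}$ for some $\mathtt{INSTALL}$ message $m_i \in \mathit{path}$ (by the definition of $\mathtt{views}()$ in \Cref{lst:view_path}), and then $\mathit{path}' = [m_1, \dots, m_i]$ is itself a view-path by the adjacency condition in line~\ref{line:instance_view_path} of \Cref{lst:view_path}, with $\mathit{path}' \subseteq \mathcal{I}^*(\infty)$ and $\mathit{path}'.\mathtt{views()} \subseteq \mathit{path}.\mathtt{views()}$. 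Now $w_j$ is the greatest installable view strictly smaller than $w_{j+1}$ (there is no installable view strictly between consecutive elements of the chain $\mathcal{I}\mathcal{N}\mathcal{S}$, by \Cref{lemma:installable_sequence}), so applying \Cref{lemma:installable_in_path} to $\mathit{path}'$ gives $w_j \in \mathit{path}'.\mathtt{views()} \subseteq \mathit{path}.\mathtt{views()}$. Since the chain is finite, the induction terminates and covers every $w_j$, in particular the given $v$.

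The main obstacle I anticipate is the bookkeeping around extracting a genuine view-path prefix ending at an arbitrary installable view in $\mathit{path}$: I need to be careful that a prefix of a view-path is again a view-path (which follows from \Cref{lst:view_path} since the defining conditions are closed under taking prefixes — $m_1.\mathtt{source()} = \mathit{genesis}$ and the adjacency $m_j.\mathtt{source()} = m_{j-1}.\mathtt{destination()}$), and that the destination of such a prefix is indeed the view $w_{j+1}$ (which is just the definition of $\mathtt{destination}()$). A secondary subtlety is confirming that the installable views strictly below $v'$ really do form a contiguous segment of $\mathcal{I}\mathcal{N}\mathcal{S}$ with no gaps — but this is immediate from $\mathcal{I}\mathcal{N}\mathcal{S}$ being a sequence under containment. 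Once these routine points are nailed down, the argument is just a finite descent feeding \Cref{lemma:installable_in_path} repeatedly, so there is no genuinely hard step.
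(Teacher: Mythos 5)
Your proof is correct and follows essentially the same route as the paper, which simply asserts that the claim ``follows directly'' from \Cref{lemma:installable_in_path}; your contribution is to make explicit the finite descent down the chain of installable views, repeatedly applying \Cref{lemma:installable_in_path} to view-path prefixes. The routine points you flag (prefixes of view-paths are view-paths, and consecutive installable views have no installable view strictly between them by \Cref{lemma:installable_sequence}) are exactly the right ones and are handled correctly.
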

\begin{proof}
Follows directly from \Cref{lemma:installable_in_path}.
\end{proof}

A direct consequence of \Cref{lemma:all_installable_in_path} is that there does not exist an $\mathtt{INSTALL}$ message $m$ such that (1) $m.\mathtt{source()}$ is a valid view, and (2) $m.\mathtt{source()} \subset v \subset m.\mathtt{destination()}$, where $v$ is a view installable at time $\infty$.
We prove this result below.

\begin{lemma} \label{lemma:install_skip_installable}
Let a view $v$ be installable at time $\infty$ (i.e., $\alpha_{\infty}(v) = \top$).
Then, $[\mathtt{INSTALL}, v', \mathit{seq}' = v'' \to ..., \omega] \notin \mathcal{I}^*(\infty)$, where (1) $v'$ is a valid view, and (2) $v' \subset v \subset v''$.
\end{lemma}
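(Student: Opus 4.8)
\textbf{Proof Plan for Lemma~\ref{lemma:install_skip_installable}.}
The statement asserts that no obtained $\mathtt{INSTALL}$ message can ``skip over'' an installable view: if $\alpha_{\infty}(v) = \top$ and $[\mathtt{INSTALL}, v', \mathit{seq}' = v'' \to \ldots, \omega] \in \mathcal{I}^*(\infty)$ with $v'$ valid, then we cannot have $v' \subset v \subset v''$. The plan is to argue by contradiction: suppose such a message $m = [\mathtt{INSTALL}, v', \mathit{seq}', \omega]$ exists with $v' \subset v \subset v''$, where $v'' = m.\mathtt{destination()}$ (recall $v''$ is indeed the destination since, by \Cref{lemma:install_sequence_follows}, $\mathit{seq}'$ is a sequence and $v''$ is its smallest view). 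Since $v'$ is valid, by \Cref{definition:valid_view} there is a view-path $\mathit{path}_{v'} \subseteq \mathcal{I}^*(\infty)$ with $\mathit{path}_{v'}.\mathtt{destination()} = v'$. Appending $m$ to $\mathit{path}_{v'}$ yields a view-path $\mathit{path}_{v''} \subseteq \mathcal{I}^*(\infty)$ with $\mathit{path}_{v''}.\mathtt{destination()} = v''$ — this is legitimate because the concatenation condition of \Cref{lst:view_path} is satisfied ($m.\mathtt{source()} = v'$ equals the destination of the last message of $\mathit{path}_{v'}$, and the first message of $\mathit{path}_{v'}$ has source $\mathit{genesis}$).

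Now apply \Cref{lemma:all_installable_in_path} to $\mathit{path}_{v''}$: since $v$ is installable at time $\infty$ and $v \subset v''$, we get $v \in \mathit{path}_{v''}.\mathtt{views()}$. But $v \subset v''$ strictly, so $v$ is not the destination $v''$; hence $v$ must be the source of some message in $\mathit{path}_{v''}$, i.e., $v = m_i.\mathtt{source()}$ for some $i$, or $v = \mathit{genesis}$ (but $v \supset v'$ so $v \neq \mathit{genesis}$ by \Cref{lemma:genesis_smallest_valid} combined with $v' \supseteq \mathit{genesis}$, actually $v' \subset v$ already forces $v \neq \mathit{genesis}$ unless $v' \subset \mathit{genesis}$, impossible by \Cref{lemma:genesis_smallest_valid}). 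So $v$ is the source of some $\mathtt{INSTALL}$ message in $\mathit{path}_{v''}$. However, every message of $\mathit{path}_{v''}$ other than $m$ itself lies in $\mathit{path}_{v'}$, whose destination is $v'$, so all their sources are $\subseteq v'$ by \Cref{lemma:path_sequence}; since $v \supset v'$ strictly, $v$ cannot be the source of any of those. Therefore $v = m.\mathtt{source()} = v'$, contradicting $v' \subset v$.

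The main obstacle I anticipate is making the view-path concatenation step fully rigorous — specifically verifying that $\mathit{path}_{v'} \,\|\, m$ genuinely satisfies the structural constraints of \Cref{lst:view_path} (the chaining of sources and destinations, and that $\mathit{path}_{v'}$'s first message has source $\mathit{genesis}$), and that the resulting object is indeed contained in $\mathcal{I}^*(\infty)$. A secondary subtlety is the careful bookkeeping at the boundary: ensuring that ``$v$ appears in $\mathit{path}_{v''}.\mathtt{views()}$ but is not the destination'' correctly forces $v$ to be the source of some constituent message, which requires knowing that $\mathtt{views()}$ collects exactly $\mathit{genesis}$ together with all destinations (equivalently, all sources of later messages plus the final destination). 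Once that is pinned down, the contradiction with \Cref{lemma:path_sequence} (monotonicity of sources along a view-path) is immediate. An alternative route, should the concatenation argument prove awkward, would be to invoke \Cref{lemma:installable_in_path} directly on $\mathit{path}_{v''}$ to locate the greatest installable view below $v''$ and derive the same contradiction, but the approach above via \Cref{lemma:all_installable_in_path} is cleaner.
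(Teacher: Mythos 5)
Your proposal is correct and follows essentially the same route as the paper's proof: both construct the view-path $\mathit{path}_{v'} \,\|\, m$ (or $[m]$ when $v' = \mathit{genesis}$) reaching $v''$ and then contradict \Cref{lemma:all_installable_in_path}, using the ordering of views along a view-path (\Cref{lemma:path_sequence}) to rule out $v$ appearing anywhere in that path. The only cosmetic difference is that the paper states the contradiction as ``$v \notin \mathit{path}'.\mathtt{views()}$'' directly, whereas you unfold the same case analysis explicitly; you should just remember to treat $v' = \mathit{genesis}$ as a separate (trivial) case, since \Cref{definition:valid_view} does not guarantee a view-path to $\mathit{genesis}$ itself.
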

\begin{proof}
Let $m = [\mathtt{INSTALL}, v', \mathit{seq}' = v'' \to ..., \omega]$.
By contradiction, let $m \in \mathcal{I}^*(\infty)$.
Since $v'$ is a valid view, $v' = \mathit{genesis}$ or there exists a view-path $\mathit{path} \subseteq \mathcal{I}^*(\infty)$ such that $\mathit{path}.\mathtt{destination()} = v'$ (by \Cref{definition:valid_view}).
Let us consider both cases:
\begin{compactitem}
    \item Let $v' = \mathit{genesis}$.
    Let $\mathit{path}' = [m]$.
    Hence, $\mathit{path}'$ is a view-path (see \Cref{lst:view_path}) and $\mathit{path}'.\mathtt{destination()} = v''$.
    Moreover, $\mathit{path}' \subseteq \mathcal{I}^*(\infty)$.
    
    \item Let $v' \neq \mathit{genesis}$.
    Let $\mathit{path}' = \mathit{path} || m$.
    Hence, $\mathit{path}'$ is a view-path (see \Cref{lst:view_path}) and $\mathit{path}'.\mathtt{destination()} = v''$.
    Moreover, $\mathit{path}' \subseteq \mathcal{I}^*(\infty)$.
\end{compactitem}
Hence, $\mathit{path}' \subseteq \mathcal{I}^*(\infty)$ exists such that $\mathit{path}'.\mathtt{destination()} = v'' \supset v$ and $v \notin \mathit{path}'.\mathtt{views()}$.
This contradicts \Cref{lemma:all_installable_in_path}, which concludes the proof.
\end{proof}

Next, we prove that $\mathit{install\_messages}$ variables of two forever-correct servers are eventually identical.

\begin{lemma} \label{lemma:identical_install}
Let servers $r$ and $r'$ be forever-correct.
Then, $\mathit{install\_messages}_r = \mathit{install\_messages}_{r'}$, where $\mathit{install\_messages}_r$ (resp., $\mathit{install\_messages}_{r'}$) denotes the value of the $\mathit{install\_messages}$ variable at server $r$ (resp., $r'$) at time $\infty$.
\end{lemma}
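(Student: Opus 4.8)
The plan is to prove the stronger claim that any $\mathtt{INSTALL}$ message that ever enters the $\mathit{install\_messages}$ variable of one forever-correct server eventually enters the $\mathit{install\_messages}$ variable of every forever-correct server; since no message is ever removed from $\mathit{install\_messages}$, this immediately yields $\mathit{install\_messages}_r = \mathit{install\_messages}_{r'}$ at time $\infty$ by applying the claim in both directions. So I fix an arbitrary $m \in \mathit{install\_messages}_r$ and aim to show that $m$ eventually appears in $\mathit{install\_messages}_{r'}$.

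First I would recover the ``chain of provenance'' of $m$ at $r$. At the moment $r$ inserted $m$ into $\mathit{install\_messages}$ (line~\ref{line:update_install_messages} of \Cref{lst:storage_processing}), the guard of that rule forced $m.\mathtt{source}() \in \mathit{history}_r$; by \Cref{lemma:history_path}, $r$ had therefore already obtained a view-path $\mathit{path} = [m_1, \ldots, m_k]$ with $\mathit{path}.\mathtt{destination}() = m.\mathtt{source}()$ and every $m_i \in \mathit{install\_messages}_r$ (the degenerate case $m.\mathtt{source}() = \mathit{genesis}$ needs no path, since $\mathit{genesis} \in \mathit{history}$ at every correct server from initialization, line~\ref{line:init_history} of \Cref{lst:storage_initialization}). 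The key observation is that each time $r$ inserts a message into $\mathit{install\_messages}$ it also gossips it (line~\ref{line:gossip_install} of \Cref{lst:storage_processing}); hence $r$ has gossiped all of $m_1, \ldots, m_k$ and $m$, and since $r$ is forever-correct, the validity property of gossip guarantees that $r'$ receives each of these messages infinitely often, so each of them eventually and permanently sits in $\mathit{waiting\_messages}_{r'}$ (and is recognized there as an $\mathtt{INSTALL}$ message, because $\mathtt{verify\_output}$ is deterministic).

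Then I would run an induction on $i \in [1, k]$ showing that $r'$ eventually satisfies $m_i \in \mathit{install\_messages}_{r'}$ and $m_i.\mathtt{destination}() \in \mathit{history}_{r'}$. For $i = 1$, $m_1.\mathtt{source}() = \mathit{genesis} \in \mathit{history}_{r'}$ always; once $m_1 \in \mathit{waiting\_messages}_{r'}$, the rule at line~\ref{line:update_history_rule} of \Cref{lst:storage_processing} becomes permanently active (its guard persists until $m_1$ is moved out), so by the execution-paradigm assumption of \Cref{appx:model} $r'$ eventually processes $m_1$, thereby adding it to $\mathit{install\_messages}_{r'}$ and adding $m_1.\mathtt{destination}()$ to $\mathit{history}_{r'}$ (line~\ref{line:update_history} of \Cref{lst:storage_processing}). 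For the step, $m_{i+1}.\mathtt{source}() = m_i.\mathtt{destination}()$ is in $\mathit{history}_{r'}$ by the hypothesis, so the identical argument applies. After $m_k$ is processed we have $m.\mathtt{source}() = m_k.\mathtt{destination}() \in \mathit{history}_{r'}$, and a final application of the same rule inserts $m$ into $\mathit{install\_messages}_{r'}$, completing the argument.

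I do not expect a genuine obstacle: the proof is essentially a chaining of \Cref{lemma:history_path}, gossip validity, and the ``permanently active rule is eventually executed'' assumption. The only points requiring care are (i) that the induction along the path is well-founded because each $m_i$ must be installed before $m_{i+1}$'s source view becomes known at $r'$, forcing the messages to be processed in path order, and (ii) noting that $r'$ itself gossips every message it installs, which is what makes the symmetric direction (and hence the set equality at time $\infty$) go through verbatim.
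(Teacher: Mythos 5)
Your proof is correct and follows essentially the same route as the paper's: the paper's (much terser) argument likewise rests on the fact that every message inserted into $\mathit{install\_messages}$ is gossiped at line~\ref{line:gossip_install} of \Cref{lst:storage_processing}, plus finiteness of the sets via \Cref{lemma:finite_valid_views} and \Cref{lemma:finite_sequences}. Your explicit induction along the view-path to discharge the $v \in \mathit{history}$ guard of the rule at line~\ref{line:update_history_rule} is exactly the detail the paper leaves implicit, and it is handled correctly.
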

\begin{proof}
First, $|\mathit{install\_messages}_r| < \infty$ and $|\mathit{install\_messages}_{r'}| < \infty$, which follows from the fact that only finitely many valid views exist (by \Cref{lemma:finite_valid_views}) and from the fact that $|\rho_{\infty}(v)| < \infty$, for every valid view $v$ (by \Cref{lemma:finite_sequences}).
Then, the lemma follows from the fact that all messages inserted into the $\mathit{install\_messages}$ variable at a correct server are gossiped (at line~\ref{line:gossip_install}) of \Cref{lst:storage_processing}).
\end{proof}

Now, we prove that a correct server eventually gets ``prepared'' for a view, i.e., a correct server eventually sets its $\mathit{reconfiguration}.\mathit{prepared}$ variable to $\top$.
The next lemma assumes that all correct servers send the ``valid'' state according to the $\mathtt{verify}$ function (see line~\ref{line:state_update_received} of \Cref{lst:reconfiguration_state_transfer}).
We prove that this is indeed the case in \Cref{lemma:always_verified}.

\begin{lemma} [Eventual Preparedness] \label{lemma:eventually_prepared}
Let a correct server $r$ enter the pseudocode at line~\ref{line:updated_discovery_exists} of \Cref{lst:reconfiguration_view_transition}.
Eventually,\\$\mathit{reconfiguration}.\mathit{prepared} = \top$ at server $r$.
\end{lemma}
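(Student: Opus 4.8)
The plan is to argue by cases on whether $r$ belongs to $\mathit{reconfiguration}.\mathit{destination}$. First dispose of the easy direction: if $r \notin \mathit{reconfiguration}.\mathit{destination}.\mathtt{members}()$, then $r$ sets $\mathit{reconfiguration}.\mathit{prepared} = \top$ immediately at \cref{line:prepared_free} of \Cref{lst:reconfiguration_view_transition}, and we are done. So assume henceforth $r \in \mathit{reconfiguration}.\mathit{destination}.\mathtt{members}()$; then $r$ sets $\mathit{state\_transfer}.\mathit{in\_progress} = \top$ and gossips, at \cref{line:gossip_state_request}, a \texttt{STATE-REQUEST} message carrying $\mathit{reconfiguration}.\mathit{source}$ and $\mathit{reconfiguration}.\mathit{sequence}$. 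I may also assume $r$ is forever-correct: if $r$ never sets $\mathit{reconfiguration}.\mathit{prepared} = \top$ it never passes the \textbf{wait until} at \cref{line:wait_for}, hence never executes \cref{line:left}, hence never halts; the argument below then contradicts this.

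The second step shows that \emph{every} forever-correct server $s$ eventually sends $r$ a \texttt{STATE-UPDATE} message carrying $\mathit{reconfiguration}.\mathit{source}$, $\mathit{reconfiguration}.\mathit{sequence}$, a state component that verifies by \Cref{lemma:always_verified}, and $s$'s current view $v_s$ at that moment. Indeed, since $\mathit{reconfiguration}.\mathit{source}$ equals $\mathit{source}[\mathit{reconfiguration}.\mathit{destination}]$ in the storage module, $r$ has processed — and therefore gossiped, at \cref{line:gossip_install} of \Cref{lst:storage_processing} — a view-path to $\mathit{reconfiguration}.\mathit{source}$ (by \Cref{lemma:history_path}) together with the $\mathtt{INSTALL}$ message that recorded $\mathit{sequence}[\mathit{reconfiguration}.\mathit{destination}]$. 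By gossip validity $s$ receives all of these infinitely often and processes them in order (each processing extends $\mathit{history}$, enabling the next), so eventually $\mathit{reconfiguration}.\mathit{source} \in \mathit{history}$ and a matching $\mathtt{INSTALL}$ message lies in $\mathit{install\_messages}$ at $s$; since $\mathtt{min\_cardinality}(\mathit{reconfiguration}.\mathit{sequence}) = \mathit{reconfiguration}.\mathit{destination} \ni r$, the sender check in the \texttt{STATE-REQUEST} handler of \Cref{lst:reconfiguration_state_transfer} passes and $s$ replies. As long as $r$ is not yet prepared it keeps $\mathit{state\_transfer}.\mathit{in\_progress} = \top$, so it records each reply at \cref{line:state_update_received}: under $\mathit{states\_from}[v_s]$ (each server is a member of its own current view), and, whenever $s \in \mathit{reconfiguration}.\mathit{source}.\mathtt{members}()$, under $\mathit{states\_from}[\mathit{reconfiguration}.\mathit{source}]$.

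The last step splits again. \textbf{Case A}: a quorum of $\mathit{reconfiguration}.\mathit{source}$'s members are forever-correct; then $|\mathit{state\_transfer}.\mathit{states\_from}[\mathit{reconfiguration}.\mathit{source}]|$ eventually reaches $\mathit{reconfiguration}.\mathit{source}.\mathtt{quorum}()$, so \cref{line:prepared_from_source,line:prepared_1} fire and set $\mathit{reconfiguration}.\mathit{prepared} = \top$. \textbf{Case B}: otherwise, and one must exhibit a valid view $v' \supsetneq \mathit{reconfiguration}.\mathit{source}$ that $r$ eventually holds in $\mathit{history}$ and with at least $v'.\mathtt{plurality}()$ members that are forever-correct and whose current view is eventually $\supseteq v'$; each such member then replies with view field $\supseteq v'$, so $\mathtt{enough\_states\_received}(v')$ eventually returns a non-$\bot$ value, \cref{line:prepared_from_bigger,line:prepared_2} fire, and $\mathit{reconfiguration}.\mathit{prepared}$ becomes $\top$. ($r$ learns $v'$ because a forever-correct member stabilized at $v'$ gossips, via \cref{line:gossip_install}, a full view-path to $v'$.)

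The hard part is constructing the view $v'$ of Case B. A member of $\mathit{reconfiguration}.\mathit{source}$ ceases to be forever-correct only by deviating or by leaving, and leaving requires transitioning past a valid view; the valid views $\supseteq \mathit{reconfiguration}.\mathit{source}$ form a finite totally ordered chain (by \Cref{theorem:view_comparability} and \Cref{lemma:finite_valid_views}); and the dischargement sub-protocol guarding each transition (the \texttt{DISCHARGEMENT} messages and the rule at \cref{line:discharged_rule} of \Cref{lst:reconfiguration_view_dischargement}) forces a quorum — hence a plurality — of the target view's members to reach current view $\supseteq$ that view before the transition can complete. Walking up this chain from $\mathit{reconfiguration}.\mathit{source}$ and combining the dischargement counts with the installable/auxiliary-view and ``leads-to'' machinery of \Cref{subsubsection:intermediate_results} should isolate a valid view whose forever-correct members stabilize at it. Making this accounting rigorous — in particular arguing that the stabilized plurality persists and is not ``drained'' by still further transitions — is the delicate point, and essentially the only place where the argument goes beyond bookkeeping.
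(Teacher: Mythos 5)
Your argument matches the paper's for everything it actually completes: the trivial branch where $r \notin \mathit{reconfiguration}.\mathit{destination}.\mathtt{members}()$, the contradiction-based observation that $r$ is forever-correct, the claim that every forever-correct server eventually answers the \texttt{STATE-REQUEST} (the paper gets this from \Cref{lemma:identical_install}), and Case~A. But Case~B --- which is the entire substance of the lemma --- is left as a sketch: you say the dischargement machinery ``should isolate'' a suitable view and flag the accounting as ``the delicate point,'' without carrying it out. That is a genuine gap, not a presentational one.

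Here is how the paper closes it, and why your ``draining'' worry is partly a red herring. If fewer than a quorum of $\mathit{reconfiguration}.\mathit{source}$'s correct members are forever-correct, some correct member $r'$ of the source view halts; by the server rules it must \emph{leave} first, and to leave it must have set $\mathit{reconfiguration}.\mathit{discharged} = \top$, i.e.\ collected $[\mathtt{DISCHARGEMENT\text{-}CONFIRM}, v^*]$ messages from a quorum of members of some valid view $v^* \supset \mathit{reconfiguration}.\mathit{source}$. At least $v^*.\mathtt{plurality}()$ of those senders are correct, and each had $\mathit{current\_view}.\mathit{view} \supseteq v^*$ when it confirmed. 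Now either a plurality of \emph{those} are forever-correct --- in which case their eventual \texttt{STATE-UPDATE} replies to $r$ carry a view field $\supseteq v^*$ (current views only grow, which is exactly why later transitions to still-larger views do not ``drain'' anything: $\mathtt{enough\_states\_received}(v^*)$ only asks for replies tagged with \emph{some} view $\supseteq v^*$), and since $r$ is forever-correct it eventually has $v^*$ in $\mathit{history}$, so \cref{line:prepared_from_bigger,line:prepared_2} of \Cref{lst:reconfiguration_state_transfer} fire --- or one of them also leaves, which by the same dischargement argument produces a strictly larger valid view $v'' \supset v^*$, and the argument recurses. The recursion terminates by \Cref{lemma:finite_valid_views}. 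This strictly-increasing recursion on valid views is the missing idea; without it your Case~B does not go through.
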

\begin{proof}
By contradiction, suppose that $\mathit{reconfiguration}.\mathit{prepared}$ is never set to $\top$ at server $r$.
Hence, $r$ gossips the\\$\mathtt{STATE-REQUEST}$ message at line~\ref{line:gossip_state_request} of \Cref{lst:reconfiguration_view_transition}.
Moreover, since $r$ has entered the pseudocode at line~\ref{line:updated_discovery_exists} of \Cref{lst:reconfiguration_view_transition}, there exists a valid view $v_r$ such that $r \in v_r.\mathtt{members()}$.
Hence, the $\mathtt{JOIN}$ message was sent by $r$ (by checks at lines~\ref{line:check_join_1} and~\ref{line:check_join_2} of \Cref{lst:view_set_evidence_valid}), which implies that $r$ has requested to join.
Since $r$ is correct, $r$ does not halt before leaving (see \Cref{subsection:carbon_properties}, paragraph ``Server's rules'').
Because $\mathit{reconfiguration}.\mathit{prepared}$ is never set to $\top$ by server $r$ (by the assumption), $r$ never leaves.
Therefore, $r$ never halts, which means that $r$ is forever-correct.
Now, suppose that $\mathit{reconfiguration}.\mathit{source} = v$ and $\mathit{reconfiguration}.\mathit{sequence} = \mathit{seq}$ at server $r$.

Eventually, the rule at line~\ref{line:state-request_active} of \Cref{lst:reconfiguration_state_transfer} is active at every forever-correct server (follows from \Cref{lemma:identical_install}).
We now separate two possibilities:
\begin{compactitem}
    \item If $v.\mathtt{quorum()}$ correct members of $v$ are forever-correct, then these servers send the $\mathtt{STATE-UPDATE}$ message to $r$ at line~\ref{line:send_state_update} of \Cref{lst:reconfiguration_state_transfer} and these messages are received by $r$.
    Therefore, the rule at line~\ref{line:prepared_from_source} of \Cref{lst:reconfiguration_state_transfer} is eventually active at server $r$ and $r$ sets its $\mathit{reconfiguration}.\mathit{prepared}$ variable to $\top$ in this case.
    
    \item Otherwise, a correct member of $v$ halts; let $r' \in v.\mathtt{members()}$ be a correct server that halts.
    Since $r' \in v.\mathtt{members()}$, $r'$ has requested to join (by checks at lines~\ref{line:check_join_1} and~\ref{line:check_join_2} of \Cref{lst:view_set_evidence_valid}).
    Therefore, $r'$ leaves (see \Cref{subsection:carbon_properties}, paragraph ``Server's rules''), which implies that $r'$ executes line~\ref{line:left} of \Cref{lst:reconfiguration_view_transition}.
    Prior to leaving, $\mathit{reconfiguration}.\mathit{discharged} = \top$ (line~\ref{line:wait_for} of \Cref{lst:reconfiguration_view_transition}) and $\mathit{dischargement}.\mathit{dischargement\_view} = v'$, for some valid view $v' \supset v$, at server $r'$.
    This means that $r'$ has, before leaving, received $\mathtt{DISCHARGEMENT-CONFIRM}$ messages associated with a view $v^*$, where $v^* \supseteq v'$ and $v^*$ is a valid view, from a quorum of members of $v^*$ (line~\ref{line:discharged_rule} of \Cref{lst:reconfiguration_view_dischargement}).
    Let $C$ denote the subset of servers from which $r'$ has received the $[\mathtt{DISCHARGEMENT-CONFIRM}, v^*]$ messages such that, for every server $c \in C$, $c$ is correct; note that $|C| \geq v^*.\mathtt{plurality()}$.
    We separate two cases:
    \begin{compactitem}
        \item At least $v^*.\mathtt{plurality}()$ of servers that belong to $C$ are forever-correct.
        In this case, these servers eventually send the $\mathtt{STATE-UPDATE}$ message while their $\mathit{current\_view}.\mathit{view}$ variable is greater than or equal to $v^*$ (this holds due to the fact that these servers eventually set their $\mathit{current\_view}.\mathit{view}$ variable to $v^*$ and the $\mathit{current\_view}.\mathit{view}$ variable of a correct server is only updated with greater values).
        Moreover, $v^* \in \mathit{history}$ at server $r$ eventually (due to \Cref{lemma:identical_install}).
        Hence, the rule at line~\ref{line:prepared_from_bigger} of \Cref{lst:reconfiguration_state_transfer} is eventually active at server $r$, which ensures that $r$ sets its $\mathit{reconfiguration}.\mathit{prepared}$ variable to $\top$.
    
        \item Otherwise, there exists a server $c \in C$ that leaves (i.e., executes line~\ref{line:discharged_rule} of \Cref{lst:reconfiguration_view_dischargement}).
        Prior to leaving, \\$\mathit{reconfiguration}.\mathit{discharged} = \top$ (line~\ref{line:wait_for} of \Cref{lst:reconfiguration_view_transition}) and $\mathit{dischargement}.\mathit{dischargement\_view} = v''$, for some valid view $v'' \supset v^{*}$.
        Since $v^* \supset v$, $v'' \supset v$.
        Hence, we apply the same reasoning as done towards $v'$.
        Such recursion eventually stops due to the fact that there exist only finitely many valid views (by \Cref{lemma:finite_valid_views}), which means that $r$ eventually sets its $\mathit{reconfiguration}.\mathit{prepared}$ variable to $\top$.
    \end{compactitem}
\end{compactitem}
Since we reach contradiction, $\mathit{reconfiguration}.\mathit{prepared} = \top$ at server $r$ eventually and the lemma holds.
\end{proof}

The next two results show that some specific views are forever-alive (\Cref{definition:forever_alive_view}).
Recall that, given \cref{definition:valid_view,definition:forever_alive_view}, all forever-alive views are valid.

\begin{lemma} \label{lemma:current_view_forever_alive}
Let a correct server $r \in v.\mathtt{members}()$ set its $\mathit{current\_view}.\mathit{view}$ variable to $v$.
Then, a view $v' \supseteq v$ is forever-alive.
\end{lemma}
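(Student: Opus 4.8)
The plan is to prove the statement by well-founded induction on the valid view $v$, ordered by reverse containment; this is legitimate because there are only finitely many valid views (by \Cref{lemma:finite_valid_views}) and all valid views are comparable (by \Cref{theorem:view_comparability}). The induction hypothesis is: for every valid view $v^* \supsetneq v$, if a correct server $s \in v^*.\mathtt{members}()$ sets $\mathit{current\_view}.\mathit{view} = v^*$, then some view $v'' \supseteq v^*$ is forever-alive. The base case is the greatest valid view $v_{\max}$: a server in $v_{\max}.\mathtt{members}()$ cannot transition past $v_{\max}$ (the guard at line~\ref{line:updated_discovery_exists} of \Cref{lst:reconfiguration_view_transition} would require a strictly greater valid view), hence it never leaves and is forever-correct, which I handle as in the next paragraph.

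First I would dispose of the two easy situations. If $v = \mathit{genesis}$, then $v' = \mathit{genesis}$ is forever-alive by \Cref{definition:forever_alive_view}. If $r$ is forever-correct, then $v' = v$ works: since $r$ set $\mathit{current\_view}.\mathit{view} = v$ with $v \neq \mathit{genesis}$, we have $v \in \mathit{history}$ at $r$ (by \Cref{lemma:current_view_view_path}), so by \Cref{lemma:history_path} there is a view-path $\mathit{path}$ with $\mathit{path}.\mathtt{destination}() = v$ all of whose messages lie in $\mathit{install\_messages}$ at $r$; but every message enters $\mathit{install\_messages}$ (line~\ref{line:update_install_messages} of \Cref{lst:storage_processing}) only together with being gossiped (line~\ref{line:gossip_install}), so $r$ gossips every message of $\mathit{path}$, and thus $v$ is forever-alive.

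The remaining case — $v \neq \mathit{genesis}$ and $r$ not forever-correct — is where the discharge machinery does the work. A correct server halts only by leaving (it executes the $\mathtt{stop}$ command only at line~\ref{line:left} of \Cref{lst:reconfiguration_view_transition}), so $r$ leaves; at that moment $r$ is processing a transition with $\mathit{reconfiguration}.\mathit{destination} = v^+$ a valid view (it lies in $\mathit{history}$, hence valid by \Cref{lemma:history_valid}), $r \notin v^+.\mathtt{members}()$, and $\mathit{current\_view}.\mathit{view} \subsetneq v^+$; since $\mathit{current\_view}.\mathit{view}$ only grows and once equalled $v$, we get $v \subseteq \mathit{current\_view}.\mathit{view} \subsetneq v^+$, so $v^+ \supsetneq v$. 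Before leaving, $r$ waited until $\mathit{reconfiguration}.\mathit{discharged} = \top$ (line~\ref{line:wait_for}) with $\mathit{dischargement}.\mathit{dischargement\_view} = v^+$, hence (line~\ref{line:discharged_rule} of \Cref{lst:reconfiguration_view_dischargement}) there is a view $v^* \supseteq v^+$, valid since $v^* \in \mathit{history}$, with $|\mathit{dischargement}.\mathit{dischargements}[v^*]| \geq v^*.\mathtt{quorum}()$. Every server counted there is a member of $v^*$ and sent $[\mathtt{DISCHARGEMENT-CONFIRM}, v^*]$; since $v^*$ is valid at most $\lfloor\frac{n-1}{3}\rfloor$ of its members are faulty while $v^*.\mathtt{quorum}() > \lfloor\frac{n-1}{3}\rfloor$, so at least one such server $c$ is correct, and $c$ sends that message only carrying its own $\mathit{current\_view}.\mathit{view}$, so $c$ is a correct server in $v^*.\mathtt{members}()$ that set $\mathit{current\_view}.\mathit{view} = v^*$. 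As $v^* \supseteq v^+ \supsetneq v$, the induction hypothesis applies to $c$ and $v^*$, yielding a forever-alive $v'' \supseteq v^*$, and then $v'' \supseteq v^* \supsetneq v$ gives the required $v' = v''$.

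I expect the main obstacle to be the bookkeeping in this last case: pinning down precisely which view $v^+$ is the destination of $r$'s final transition, showing the discharge confirmations genuinely come from members of a valid view $v^* \supseteq v^+$, and — crucially — that a confirming correct server truly set (not merely observed) $\mathit{current\_view}.\mathit{view} = v^*$, since that is exactly what licenses the recursive call and keeps the descent well-founded. The easy cases follow routinely from \Cref{lemma:current_view_view_path}, \Cref{lemma:history_path}, and \Cref{lemma:history_valid}, and the monotonicity of $\mathit{current\_view}.\mathit{view}$ is immediate from inspecting line~\ref{line:update_current_view} together with the transition guard.
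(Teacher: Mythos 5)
Your proof is correct and takes essentially the same route as the paper's: the $\mathit{genesis}$ and forever-correct cases are dispatched via \Cref{lemma:current_view_view_path}, \Cref{lemma:history_path} and the gossiping at line~\ref{line:gossip_install} of \Cref{lst:storage_processing}, while the halting case extracts a correct member of a strictly greater valid view from the dischargement quorum and recurses, terminating by \Cref{lemma:finite_valid_views}. The only difference is presentational — you package the paper's ``recursion eventually stops'' argument as well-founded induction over valid views ordered by reverse containment.
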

\begin{proof}
If $v = \mathit{genesis}$, $v$ is forever-alive by \Cref{definition:forever_alive_view}.
Hence, the lemma trivially holds in this case.

Let $v \neq \mathit{genesis}$.
Since $r$ sets its $\mathit{current\_view}.\mathit{view}$ variable to $v$, $v \in \mathit{history}$ at server $r$ (by \Cref{lemma:current_view_view_path}).
By \Cref{lemma:history_path}, there exists a view-path $\mathit{path} \subseteq \mathit{install\_messages}$ (at server $r$) such that $\mathit{path}.\mathtt{destination()} = v$.
Moreover, all messages that belong to $\mathit{path}$ are gossiped at line~\ref{line:gossip_install} of \Cref{lst:storage_processing}.
If $r$ is forever-correct, \Cref{definition:forever_alive_view} is satisfied for $v$.
Thus, the lemma holds in this case.

Otherwise, server $r$ halts.
Since $r \in v.\mathtt{members()}$ and $v$ is a valid view (since $v \in \mathit{history}$ at server $r$, $v$ is valid according to \Cref{lemma:history_valid}), server $r$ leaves (see \Cref{subsection:carbon_properties}, paragraph ``Server's rules'') at line~\ref{line:left} of \Cref{lst:reconfiguration_view_transition}.
Moreover, $v \supset \mathit{genesis}$ (by \Cref{lemma:genesis_smallest_valid}).
This means that $r$ learns from a correct server $r'$ that $r'$'s $\mathit{current\_view}.\mathit{view}$ variable is set to some valid view $v' \supset v$ (see \cref{lst:reconfiguration_view_transition,lst:reconfiguration_view_dischargement}), where $r \notin v'.\mathtt{members()}$ and $r' \in v'.\mathtt{members()}$.

By \Cref{lemma:current_view_view_path}, $v' \in \mathit{history}$ at server $r'$.
By \Cref{lemma:history_path}, there exists a view-path $\mathit{path}' \subseteq \mathit{install\_messages}$ (at server $r'$) such that $\mathit{path}'.\mathtt{destination()} = v'$.
Moreover, all messages that belong to $\mathit{path}'$ are gossiped at line~\ref{line:gossip_install} of \Cref{lst:storage_processing} by server $r'$.
If $r'$ is forever-correct, \Cref{definition:forever_alive_view} is satisfied for $v' \supset v$ and the lemma holds.
Otherwise, server $r'$ halts and we apply the same reasoning as done towards server $r$.
The recursion eventually stops due to the fact that only finitely many valid views exist (by \Cref{lemma:finite_valid_views}), which implies that a view greater than $v$ is forever-alive if server $r$ halts (and leaves).
Thus, the lemma holds.
\end{proof}

\begin{lemma} \label{lemma:install_message_forever_alive}
Let a correct server $r$ enter the pseudocode at line~\ref{line:updated_discovery_exists} of \Cref{lst:reconfiguration_view_transition} with $\mathit{joined} = \top$.
Then, a view $v' \supseteq v$ is forever-alive\footnote{View $v$ is the view specified at line~\ref{line:updated_discovery_exists} of \Cref{lst:reconfiguration_view_transition}.}.
\end{lemma}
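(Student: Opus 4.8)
The plan is to reduce the statement to \Cref{lemma:current_view_forever_alive} in the cases where $r$ actually advances its current view, and otherwise to fall back on the fact that a forever-correct server re-gossips every $\mathtt{INSTALL}$ message it has obtained. Let $w$ denote the value of $\mathit{current\_view}.\mathit{view}$ at $r$ when it enters line~\ref{line:updated_discovery_exists}; by \Cref{lemma:current_view_valid} $w$ is valid, a correct server sets this field only to views containing itself so $r \in w.\mathtt{members}()$, and by the guard at line~\ref{line:updated_discovery_exists} we have $v \in \mathit{history}$ at $r$ with $w \subset v$ strictly, hence $v \neq \mathit{genesis}$ (by \Cref{lemma:genesis_smallest_valid}) and $v$ is valid (by \Cref{lemma:history_valid}). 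I would then split on the position of $r$ relative to $v$ and on whether the wait at line~\ref{line:wait_for} in this invocation is ever discharged.

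\emph{Case A: $r \in v.\mathtt{members}()$.} Here the invocation started at line~\ref{line:updated_discovery_exists} sets $\mathit{reconfiguration}.\mathit{discharged} = \top$ immediately and, by \Cref{lemma:eventually_prepared}, eventually sets $\mathit{reconfiguration}.\mathit{prepared} = \top$; so $r$ passes line~\ref{line:wait_for} and executes line~\ref{line:update_current_view}, setting $\mathit{current\_view}.\mathit{view} = v$ while being a member of $v$. Then \Cref{lemma:current_view_forever_alive} applied to $r$ directly yields a forever-alive view $v' \supseteq v$.

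\emph{Cases B and C: $r \notin v.\mathtt{members}()$.} In this invocation $r$ takes the else-branch, so $\mathit{dischargement}.\mathit{dischargement\_view} = v$. If $\mathit{reconfiguration}.\mathit{discharged}$ eventually becomes $\top$ (Case B), then by the rule at line~\ref{line:discharged_rule} there is a view $v^* \supseteq v$ lying in $\mathit{history}$ at $r$ (hence valid, by \Cref{lemma:history_valid}) from at least $v^*.\mathtt{quorum}()$ of whose members $r$ received $[\mathtt{DISCHARGEMENT\text{-}CONFIRM}, v^*]$; the failure model gives one correct such sender $c$, which emits this message only while its current view equals $v^*$, so $c \in v^*.\mathtt{members}()$ set $\mathit{current\_view}.\mathit{view} = v^*$, and \Cref{lemma:current_view_forever_alive} applied to $c$ yields a forever-alive $v' \supseteq v^* \supseteq v$. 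If instead the wait is never discharged (Case C), then $r$ blocks at line~\ref{line:wait_for} forever, hence never halts and is forever-correct; since $v$ entered $\mathit{history}$ at $r$ before it reached line~\ref{line:updated_discovery_exists}, \Cref{lemma:history_path} gives a view-path $\mathit{path} \subseteq \mathit{install\_messages}$ with $\mathit{path}.\mathtt{destination}() = v$ all of whose messages $r$ gossiped at line~\ref{line:gossip_install}, so \Cref{definition:forever_alive_view} holds for $v$ and $v' = v$ works. These three cases are exhaustive.

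\textbf{Main obstacle.} The delicate part will be the control-flow bookkeeping in \Cref{lst:reconfiguration_view_transition}: arguing that the wait at line~\ref{line:wait_for} is discharged precisely in Cases A and B (invoking \Cref{lemma:eventually_prepared} for $\mathit{prepared}$), that in Case C a perpetually-blocked $r$ is genuinely forever-correct, and that in Case B the quorum of $\mathtt{DISCHARGEMENT\text{-}CONFIRM}$ messages carries a \emph{single} valid view $v^*$ from which the failure model extracts a correct server that has adopted $v^*$ as its current view. Once those points are pinned down, the rest is a routine reduction to \Cref{lemma:current_view_forever_alive}, mirroring the reasoning already used in its proof.
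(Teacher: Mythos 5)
Your proposal is correct and follows essentially the same route as the paper: the paper also reduces the case $r \in v.\mathtt{members}()$ to \Cref{lemma:eventually_prepared} plus \Cref{lemma:current_view_forever_alive}, and for $r \notin v.\mathtt{members}()$ splits on whether $r$ eventually leaves (your Case~B, since with $\mathit{prepared}$ set immediately the wait is passed iff dischargement succeeds) or blocks forever and is thus forever-correct, gossiping a view-path to $v$ (your Case~C). Your Case~B merely spells out the quorum-of-$\mathtt{DISCHARGEMENT}$-$\mathtt{CONFIRM}$ argument that the paper compresses into ``see \cref{lst:reconfiguration_view_transition,lst:reconfiguration_view_dischargement}''.
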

\begin{proof}
We start by observing that $v \neq \mathit{genesis}$.
This statement follows from the fact that $\mathit{genesis}$ is the smallest valid view (by \Cref{lemma:genesis_smallest_valid}).
If $r \in v.\mathtt{members}()$, then eventually $\mathit{current\_view}.\mathit{view} = v$ at server $r$ (by \Cref{lemma:eventually_prepared}).
Therefore, the lemma follows from \Cref{lemma:current_view_forever_alive}.

Otherwise, we distinguish two possibilities:
\begin{compactitem}
    \item If $r$ never leaves, then $r$ is forever-correct (see \Cref{subsection:carbon_properties}, paragraph ``Server's rules'').
    Since $v \in \mathit{history}$ at server $r$ (due to the check at line~\ref{line:updated_discovery_exists} of \Cref{lst:reconfiguration_view_transition}), \Cref{lemma:history_path} shows (since $v \neq \mathit{genesis}$) that there exists a view-path $\mathit{path} \subseteq \mathit{install\_messages}$ (at server $r$) such that $\mathit{path}.\mathtt{destination()} = v$.
    Since all messages that belong to $\mathit{path}$ are gossiped (at line~\ref{line:gossip_install} of \Cref{lst:storage_processing}) by server $r$, $v$ is forever-alive (by \Cref{definition:forever_alive_view}).
    Therefore, the lemma holds in this case.
    
    \item If $r$ leaves, there exists a correct server $r' \in v'.\mathtt{members()}$, where $v' \supseteq v$, that sets its $\mathit{current\_view}.\mathit{view}$ variable to $v'$ (see \cref{lst:reconfiguration_view_transition,lst:reconfiguration_view_dischargement}).
    Hence, the lemma follows from \Cref{lemma:current_view_forever_alive}.
\end{compactitem}
The proof is concluded since the lemma holds in all possible cases.
\end{proof}

The next lemma proves that a correct member of a forever-alive view sets its $\mathit{current\_view}.\mathit{view}$ variable to that view or there exists a greater view which is forever-alive.

\begin{lemma} \label{lemma:forever_alive_current_view_or_bigger}
Let $v$ be a forever-alive view.
Then, a correct server $r \in v.\mathtt{members()}$ eventually sets its $\mathit{current\_view}.\mathit{view}$ variable to $v$ or a view $v' \supset v$ is forever-alive.
\end{lemma}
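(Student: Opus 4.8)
The plan is to argue by strong induction on the position of $v$ in the finite (by \Cref{lemma:finite_valid_views}), totally ordered (by \Cref{lemma:all_comparable}) set of valid views; recall that every forever-alive view is valid. For the base case $v = \mathit{genesis}$, the assumption that $\mathtt{start}$ eventually triggers at every member of $\mathit{genesis}$ together with line~\ref{line:set_current_to_genesis} of \Cref{lst:reconfiguration_initialization} already gives that every correct $r \in \mathit{genesis}.\mathtt{members()}$ sets $\mathit{current\_view}.\mathit{view} = \mathit{genesis}$, so the first disjunct holds.

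For the inductive step, suppose $v \neq \mathit{genesis}$ is forever-alive, and by \Cref{definition:forever_alive_view} fix a forever-correct process $p$ that obtains a view-path $\mathit{path} = [m_1, \dots, m_k]$ with $\mathit{path}.\mathtt{destination()} = v$ and gossips every message of $\mathit{path}$. Along the way I note that $v_{\mathit{src}} := m_k.\mathtt{source()}$ is valid (\Cref{lemma:knows_view_system}), satisfies $v_{\mathit{src}} \subset v$ (\Cref{lemma:path_sequence}), and is itself forever-alive, since $p$ also obtains and gossips the prefix $[m_1, \dots, m_{k-1}]$ (or $v_{\mathit{src}} = \mathit{genesis}$); this lets the induction hypothesis be invoked on it when needed. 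Now fix a correct $r \in v.\mathtt{members()}$ and assume it never sets $\mathit{current\_view}.\mathit{view}$ to $v$; I must produce a forever-alive view strictly greater than $v$, splitting on whether $r$ is forever-correct.

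If $r$ is forever-correct, then by gossip validity $r$ eventually receives all of $\mathit{path}$, processes $m_1, \dots, m_k$ in order via line~\ref{line:update_history_rule} of \Cref{lst:storage_processing}, and thereby obtains $v \in \mathit{history}$ together with a stored view-path to $v$. Using \Cref{lemma:eventually_prepared} for the $\mathit{prepared}$ flag and the same recursive dischargement argument as in its proof for the $\mathit{discharged}$ flag, the view-transition rule at line~\ref{line:updated_discovery_exists} of \Cref{lst:reconfiguration_view_transition}, once active, is eventually completed, so $r$ keeps updating $\mathit{current\_view}.\mathit{view}$ to strictly larger valid views and, by finiteness, stabilizes at a maximal view $v_{\mathit{max}}$ with $r \in v_{\mathit{max}}.\mathtt{members()}$ (it did not leave). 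The crux is to exclude $v_{\mathit{max}} \subset v$: if $v_{\mathit{max}} \subset v$, then $v_{\mathit{max}}$ sits on $r$'s stored view-path to $v$ — here I would use \Cref{lemma:all_installable_in_path} and \Cref{lemma:install_skip_installable} to control exactly which views a path may skip — so the next message on that path exhibits a view $v' \in \mathit{history}$ with $\mathit{current\_view}.\mathit{view} = v_{\mathit{max}} \subset v'$ and $\mathit{source}[v'] \subseteq v_{\mathit{max}}$, keeping the rule active and contradicting maximality. Hence $v_{\mathit{max}} \supseteq v$, and since $r$ never reaches $v$ we get $v_{\mathit{max}} \supset v$; \Cref{lemma:current_view_forever_alive} then yields a forever-alive view $\supseteq v_{\mathit{max}} \supset v$.

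If instead $r$ halts, then (by inspection of the pseudocode, where a correct server's only $\mathtt{stop}$ is at line~\ref{line:left}) $r$ left, having executed line~\ref{line:left} of \Cref{lst:reconfiguration_view_transition} with $\mathit{reconfiguration}.\mathit{destination} = v_d$ and $r \notin v_d.\mathtt{members()}$; combining this with $r \in v.\mathtt{members()}$, view comparability, and the fact that an obedient joined server is always a member of its own current view, one forces $v \subset v_d$. Before leaving, $r$ had $\mathit{discharged} = \top$ with $\mathit{dischargement\_view} = v_d$, which by line~\ref{line:discharged_rule} of \Cref{lst:reconfiguration_view_dischargement} required a quorum of members of some valid view $v^* \supseteq v_d \supset v$ to have sent $\mathtt{DISCHARGEMENT-CONFIRM}$ for $v^*$; at least one correct such sender had $\mathit{current\_view}.\mathit{view} = v^*$, so \Cref{lemma:current_view_forever_alive} again gives a forever-alive view $\supseteq v^* \supset v$, completing the induction. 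The main obstacle is the third paragraph: describing precisely how a forever-correct server walks up a view-path (what line~\ref{line:updated_discovery_exists} demands, membership in one's own current view, and ruling out $v_{\mathit{max}} \subset v$) and confirming that $\mathit{discharged}$ enjoys the same eventual-completion guarantee as $\mathit{prepared}$; the base case and the halting case are comparatively routine.
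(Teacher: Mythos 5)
Your overall strategy matches the paper's: genesis is immediate, a forever-correct $r$ climbs through views using eventual preparedness and the finiteness of valid views, and a halting $r$ is handled through the dischargement quorum plus \Cref{lemma:current_view_forever_alive}. (The strong induction you set up is never actually used and can be dropped.) However, there is a genuine gap in your forever-correct branch. Your claim that $r$ ``keeps updating $\mathit{current\_view}.\mathit{view}$ to strictly larger valid views and \dots stabilizes at a maximal view $v_{\mathit{max}}$ with $r \in v_{\mathit{max}}.\mathtt{members()}$'' silently assumes that every transition $r$ enters has a destination containing $r$ as a member and is eventually completed. If $r$ enters the rule at line~\ref{line:updated_discovery_exists} of \Cref{lst:reconfiguration_view_transition} with a destination $v_d \supset v$ such that $r \notin v_d.\mathtt{members()}$, then $r$ must run the dischargement subprotocol; it either leaves (so it was not forever-correct after all) or blocks forever at line~\ref{line:wait_for}, in which case it never stabilizes at a view of which it is a member and your exclusion of $v_{\mathit{max}} \subset v$ does not even get off the ground. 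The paper closes exactly this hole with a third disjunct: the instant $r$ enters the rule with a destination strictly greater than $v$, \Cref{lemma:install_message_forever_alive} already yields a forever-alive view $\supset v$, with no need to complete the transition. Once that disjunct is excluded, every destination is $\subseteq v$, hence contains $r$ as a member, hence the $\mathit{discharged}$ flag is set to $\top$ unconditionally at line~\ref{line:prepared_free}-adjacent branch of \Cref{lst:reconfiguration_view_transition} and only \Cref{lemma:eventually_prepared} is needed.

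This also makes your appeal to ``the same recursive dischargement argument \dots for the $\mathit{discharged}$ flag'' problematic: where $r$ is a member of the destination no such argument is needed, and where it is not, eventual dischargement (\Cref{lemma:eventual_dischargement}) is proved in the paper only \emph{after} this lemma, using the finality property, which in turn relies on the present lemma --- so you cannot invoke it here without circularity. A smaller imprecision: $v_{\mathit{max}}$ need not ``sit on'' the view-path to $v$ (it may be a non-installable view that is no $m_i.\mathtt{destination()}$); the correct observation is simply that for any current view $w \subset v$ the stored path supplies some $w' \in \mathit{history}$ with $\mathit{source}[w'] \subseteq w \subset w'$, so the rule reactivates, which is all the climbing argument requires.
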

\begin{proof}
If $v = \mathit{genesis}$, the lemma trivially holds since all correct members of $v = \mathit{genesis}$ set their $\mathit{current\_view}.\mathit{view}$ variable to $\mathit{genesis}$ (at line~\ref{line:set_current_to_genesis} of \Cref{lst:reconfiguration_initialization}).

Let $v \neq \mathit{genesis}$.
Since $v$ is a forever-alive view, $v$ is a valid view (by \cref{definition:valid_view,definition:forever_alive_view}).
Because $r \in v.\mathtt{members()}$, $r$ has requested to join (by checks at lines~\ref{line:check_join_1} and~\ref{line:check_join_2} of \Cref{lst:view_set_evidence_valid}).
Therefore, $r$ does not halt unless it has previously left (see \Cref{subsection:carbon_properties}, paragraph ``Server's rules'').
Finally, since $v$ is a forever-alive view, every forever-correct process eventually obtains a view-path to $v$ (due to the validity property of the gossip primitive).

The lemma is satisfied if one of the following statements is true:
\begin{compactenum}
    \item Server $r$ sets its $\mathit{current\_view}.\mathit{view}$ variable to $v$.
    
    \item Server $r$ sets its $\mathit{current\_view}.\mathit{view}$ variable to $v' \supset v$.
    In this case, we know that a view $v^* \supseteq v'$ is forever-alive (by \Cref{lemma:current_view_forever_alive}).
    Since $v^* \supset v$, the lemma is indeed satisfied.
    
    \item Server $r$ enters the pseudocode at line~\ref{line:updated_discovery_exists} of \Cref{lst:reconfiguration_view_transition} with $\mathit{joined} = \top$, where the ``destination'' view $v^*$ (i.e., view $v$ from the line~\ref{line:updated_discovery_exists} of \Cref{lst:reconfiguration_view_transition}; note that $v^* \neq v$) is greater than $v$, i.e., $v \subset v^*$.
    By \Cref{lemma:install_message_forever_alive}, a view greater than or equal to $v^*$ is forever-alive.
    Since $v \subset v^*$, the lemma is satisfied.
\end{compactenum}
By contradiction, suppose that none of the three statements is true.
Hence, $r$ never halts (since it never leaves due to the fact that the third statement is not true).
Hence, $r$ is forever-correct and $r$ eventually obtains a view-path to $v$.

First, we prove that $r$ eventually joins.
If $r \in \mathit{genesis}.\mathtt{members}()$, $r$ joins at line~\ref{line:join_init} of \Cref{lst:reconfiguration_initialization}.
Otherwise, since $r$ eventually receives a view-path to $v$ and $r \notin \mathit{genesis}.\mathtt{members()}$, the rule at line~\ref{line:updated_discovery_exists} of \Cref{lst:reconfiguration_view_transition} becomes eventually active and $r$ joins (because of \Cref{lemma:eventually_prepared}).
Once $r$ joins, $r$ sets its $\mathit{current\_view}.\mathit{view}$ variable to $v_r$, where $r \in v_r.\mathtt{members()}$.
Moreover, $v_r \subset v$ (since the aforementioned statements are assumed to not be true).

Next, server $r$ eventually enters the pseudocode at line~\ref{line:updated_discovery_exists} of \Cref{lst:reconfiguration_view_transition} since $r$ eventually obtains a view-path to $v$ and $v_r \subset v$.
Let $v_r'$ be the ``destination'' view (i.e., $\mathit{reconfiguration}.\mathit{destination} = v_r'$ upon entering the pseudocode; set at line~\ref{line:set_reconfiguration_destination} of \Cref{lst:reconfiguration_view_transition}).
Since the third statement is not true, $v_r' \subseteq v$.
Suppose that $v_r' = v$.
By \Cref{lemma:eventually_prepared}, $r$ eventually sets its $\mathit{current\_view}.\mathit{view}$ variable to $v$.
This is impossible due to the fact that the first statement is not true.
Hence, $v_r' \subset v$.

We know that $v_r \subset v_r'$ (ensured by the check at line~\ref{line:updated_discovery_exists} of \Cref{lst:reconfiguration_view_transition}).
Since $r \in v_r.\mathtt{members()}$, $r \in v.\mathtt{members()}$ and $v_r \subset v_r' \subset v$, $r \in v_r'.\mathtt{members()}$.
Hence, $r$ eventually sets its $\mathit{current\_view}.\mathit{view}$ variable to $v_r' \subset v$ (by \Cref{lemma:eventually_prepared}).

We reach the same point as we have reached after $r$ has joined (and set its $\mathit{current\_view}.\mathit{view}$ variable to $v_r$).
Thus, we apply the same argument.
Such recursion eventually stops due to the fact that only finitely many valid views exist (by \Cref{lemma:finite_valid_views}), which implies that only finitely many valid views smaller than $v$ exist.
Therefore, $r$ eventually enters the pseudocode at line~\ref{line:updated_discovery_exists} of \Cref{lst:reconfiguration_view_transition} with the ``destination'' view greater than or equal to $v$.
Once this happens, either the first of the third statement is proven to be true, thus concluding the proof of the lemma.
\end{proof}

The next lemma shows that all correct members of a forever-alive and installable view eventually ``transit'' to that view.

\begin{lemma} \label{lemma:unskippable_forever_alive_all_go}
Let $v$ be a forever-alive view such that $\alpha_{\infty}(v) = \top$.
Then, a correct server $r \in v.\mathtt{members()}$ eventually sets its $\mathit{current\_view}.\mathit{view}$ variable to $v$.
\end{lemma}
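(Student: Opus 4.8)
The plan is to argue by contradiction: assume that $r$ never sets $\mathit{current\_view}.\mathit{view}$ to $v$. I would first record two structural facts. (1) A correct server's $\mathit{current\_view}.\mathit{view}$ is only ever updated at line~\ref{line:set_current_to_genesis} (to $\mathit{genesis}$) and at line~\ref{line:update_current_view} of \Cref{lst:reconfiguration_view_transition}, and in the latter case the guard at line~\ref{line:updated_discovery_exists} forces the new value to strictly contain the old one; since every value taken is valid (\Cref{lemma:current_view_valid}) and there are only finitely many valid views (\Cref{lemma:finite_valid_views}), the value stabilizes to some final view $\hat v$, and moreover line~\ref{line:update_current_view} only fires for destinations containing $r$ (cf.\ line~\ref{line:check_members_current_view}), so a joined correct server is always a member of its current view. (2) Because $v$ is installable, no single $\mathtt{INSTALL}$ message can ``jump over'' it: \Cref{lemma:install_skip_installable} rules out any message in $\mathcal{I}^*(\infty)$ whose source is a valid view strictly below $v$ and whose destination---the first view of its carried sequence, by \Cref{lemma:install_sequence_follows}---strictly contains $v$.

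Next I would show $\hat v \subsetneq v$. Suppose instead $\hat v \supseteq v$; since by assumption the current view never equals $v$ and starts at $\mathit{genesis} \subseteq v$ (\Cref{lemma:genesis_smallest_valid}), there is a first step at which it becomes strictly larger than $v$. That step executes line~\ref{line:update_current_view} with $\mathit{reconfiguration}.\mathit{destination} = v^{*} \supsetneq v$ and $\mathit{reconfiguration}.\mathit{source} = \mathit{source}[v^{*}] \subseteq w \subsetneq v$, where $w$ is the previous current view; as $v^{*} \in \mathit{history}$, \Cref{lemma:history_path} and the storage module yield an $\mathtt{INSTALL}$ message in $\mathcal{I}^*(\infty)$ with valid source $\mathit{source}[v^{*}] \subsetneq v$ and destination $v^{*} \supsetneq v$, contradicting fact~(2). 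The corner case that the very first view $r$ joins at is already $\supsetneq v$ is handled the same way: from $r \in v.\mathtt{members()}$, $r \in v_r.\mathtt{members()}$, $r \notin \mathit{source}[v_r].\mathtt{members()}$, $\mathit{source}[v_r] \subsetneq v_r$ and view comparability one gets $\mathit{source}[v_r] \subsetneq v$, and the witnessing $\mathtt{INSTALL}$ message again violates fact~(2). Hence $\hat v \subsetneq v$.

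Finally I would derive $\hat v \supseteq v$, contradicting the above. If $r$ halts, then---being a member of the valid view $v$---it requested to join and therefore left (server rules of \Cref{subsection:carbon_properties}), executing line~\ref{line:left} with a destination $v^{**} \not\ni r$; since its current view $w$ satisfies $(+,r) \in w \subsetneq v^{**}$, we get $(-,r) \in v^{**}$, so $v^{**} \supsetneq v$, and the transit from $w$ to $v^{**}$ contradicts fact~(2) unless $w \supseteq v$ (giving $\hat v \supseteq v$) or $w = v$ (excluded). If instead $r$ is forever-correct, then by \Cref{lemma:forever_alive_current_view_or_bigger} either $r$ sets its current view to $v$ (excluded) or some forever-alive $v' \supsetneq v$ exists; in the latter case $r$ eventually obtains a view-path to $v'$, which by \Cref{lemma:all_installable_in_path} passes through $v$, so $v$ together with a full view-path to it eventually lies in $r$'s $\mathit{history}$, and repeatedly applying the transit rule at line~\ref{line:updated_discovery_exists} (with \Cref{lemma:eventually_prepared} ensuring each transition terminates) drives $r$'s current view monotonically up until it reaches $v$. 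The main obstacle is exactly this catch-up argument: one must verify that whenever the current view is some $w \subsetneq v$ with $v \in \mathit{history}$, a transit is in fact enabled to a view $v^{*}$ with $w \subsetneq v^{*} \subseteq v$, $\mathit{source}[v^{*}] \subseteq w$ and $r \in v^{*}.\mathtt{members()}$---taking $v^{*}$ minimal in $\mathit{history}$ above $w$ makes $\mathit{source}[v^{*}] \subseteq w$ hold (the stored source is itself in $\mathit{history}$ and cannot strictly exceed $w$ by minimality), while $r \in v^{*}.\mathtt{members()}$ follows because otherwise $r$ would leave, contradicting forever-correctness---after which finiteness of the valid views (\Cref{lemma:finite_valid_views}) closes the induction. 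Packaging this monotone catch-up cleanly, and pinning down the nondeterministic witness choice in the rule at line~\ref{line:updated_discovery_exists}, is where the real care is needed.
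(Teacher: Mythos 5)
Your proof is correct and uses essentially the same machinery as the paper's: \Cref{lemma:install_skip_installable} to ensure no transition overshoots the installable view $v$, \Cref{lemma:eventually_prepared} to complete each transition, forever-aliveness to guarantee a view-path to $v$ reaches $r$, and \Cref{lemma:finite_valid_views} to terminate the climb. The only difference is presentational — you wrap the paper's direct inductive catch-up in a contradiction on the stabilized final value of $\mathit{current\_view}.\mathit{view}$ — so the two arguments are essentially identical.
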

\begin{proof}
If $v = \mathit{genesis}$, the lemma trivially holds since all correct members of $v = \mathit{genesis}$ set their $\mathit{current\_view}.\mathit{view}$ variable to $\mathit{genesis}$ (at line~\ref{line:set_current_to_genesis} of \Cref{lst:reconfiguration_initialization}).

Let $v \neq \mathit{genesis}$.
Since $v$ is forever-alive, $v$ is a valid view (by \cref{definition:valid_view,definition:forever_alive_view}).
Because $r \in v.\mathtt{members()}$, $r$ has requested to join (by checks at lines~\ref{line:check_join_1} and~\ref{line:check_join_2} of \Cref{lst:view_set_evidence_valid}).
Hence, $r$ does not halt unless it has previously left (see \Cref{subsection:carbon_properties}, paragraph ``Server's rules'').
Finally, since $v$ is forever-alive, every forever-correct process eventually obtains a view-path to $v$ (due to the validity property of the gossip primitive).

First, we prove that $r$ eventually joins.
If $r \in \mathit{genesis}.\mathtt{members()}$, $r$ joins at line~\ref{line:join_init} of \Cref{lst:reconfiguration_initialization} in this case.
Let $r \notin \mathit{genesis}.\mathtt{members()}$.
By contradiction, suppose that $r$ does not join.
Hence, $r$ does not leave, which implies that $r$ does not halt.
Therefore, $r$ is forever-correct, which means that $r$ eventually obtains a view-path to $v$.
Furthermore, the rule at line~\ref{line:updated_discovery_exists} of \Cref{lst:reconfiguration_view_transition} becomes eventually active (since $r \notin \mathit{genesis}.\mathtt{members()}$) and $r$ joins (by \Cref{lemma:eventually_prepared}).
Once $r$ joins, $r$ sets its $\mathit{current\_view}.\mathit{view}$ variable to $v_r$, where $r \in v_r.\mathtt{members()}$.

Next, we prove that $v_r \subseteq v$.
By contradiction, suppose that $v_r \supset v$ (note that both $v_r$ and $v$ are valid and, thus, comparable; \Cref{theorem:view_comparability}).
Let $\mathit{vr}' = \mathit{source}[v_r]$ at the moment of entering the pseudocode at line~\ref{line:updated_discovery_exists} of \Cref{lst:reconfiguration_view_transition}.
Since $v_r' \subset v_r$, $r \in v_r.\mathtt{members()}$ and $r \notin v_r'.\mathtt{members()}$, $v_r' \subset v$.
Hence, \Cref{lemma:install_skip_installable} is contradicted.
Therefore, $v_r \subseteq v$.

If $v_r = v$, the lemma holds.
Otherwise, $v_r \subset v$.
We prove that $r$ eventually enters the pseudocode at line~\ref{line:updated_discovery_exists} of \Cref{lst:reconfiguration_view_transition}.
By contradiction, suppose that this does not happen.
Hence, $r$ does not leave, which implies that $r$ does not halt (i.e., $r$ is forever-correct).
Therefore, $r$ eventually obtains a view-path to $v$, which means that $r$ eventually enters the pseudocode at line~\ref{line:updated_discovery_exists} of \Cref{lst:reconfiguration_view_transition}, which represents contradiction.

Let $v_r'$ be the ``destination'' view once $r$ enters the pseudocode at line~\ref{line:updated_discovery_exists} of \Cref{lst:reconfiguration_view_transition}.
By \Cref{lemma:install_skip_installable}, $v_r' \subseteq v$.
Since $r \in v_r.\mathtt{members()}$, $r \in v.\mathtt{members()}$ and $v_r \subset v_r' \subseteq v$, $r \in v_r'.\mathtt{members()}$.
By \Cref{lemma:eventually_prepared}, $r$ eventually sets its $\mathit{current\_view}.\mathit{view}$ variable to $v_r'$.

If $v_r' = v$, the lemma holds.
Otherwise, we reach the same point as we have reached after $r$ has joined (and set its $\mathit{current\_view}.\mathit{view}$ variable to $v_r$).
Thus, we apply the same argument.
Such recursion eventually stops due to the fact that only finitely many valid views exist (by \Cref{lemma:finite_valid_views}), which implies that only finitely many valid views smaller than $v$ exist.
Therefore, $r$ eventually sets its $\mathit{current\_view}.\mathit{view}$ variable to $v$, thus concluding the proof.
\end{proof}

Recall that $v_{\mathit{final}}$ denotes the greatest forever-alive view.
The greatest forever-alive view is well-defined since (1) $\mathit{genesis}$ is a forever-alive view (by \Cref{definition:forever_alive_view}; hence, $v_{\mathit{final}} \neq \bot$), (2) all forever-alive views are valid (by \cref{definition:valid_view,definition:forever_alive_view}), (3) all valid views are comparable (by the view comparability property; \Cref{theorem:view_comparability}), and (4) there exist only finitely many valid views (by \Cref{lemma:finite_valid_views}).
We now show that $\alpha_{\infty}(v_{\mathit{final}}) = \top$.

\begin{lemma} \label{lemma:v_final_installable}
$\alpha_{\infty}(v_{\mathit{final}}) = \top$.
\end{lemma}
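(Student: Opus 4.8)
The plan is to argue by contradiction: assume $\alpha_{\infty}(v_{\mathit{final}}) = \bot$ and derive that some view strictly greater than $v_{\mathit{final}}$ is forever-alive, contradicting the maximality of $v_{\mathit{final}}$. First I would record a few easy facts: $v_{\mathit{final}} \neq \mathit{genesis}$, since $\alpha_{t}(\mathit{genesis}) = \top$ for every $t$, hence $\mathit{genesis} \subset v_{\mathit{final}}$ by \Cref{lemma:genesis_smallest_valid}; and $v_{\mathit{final}}$ is valid (being forever-alive), so by the failure model a quorum of its members is correct, and in particular it has a correct member. The key observation is that, under the standing assumption $\alpha_{\infty}(v_{\mathit{final}}) = \bot$, no $\mathtt{INSTALL}$ message with destination $v_{\mathit{final}}$ may carry a singleton set, because such a message would have empty tail and would directly witness $\alpha_{\infty}(v_{\mathit{final}}) = \top$.

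Next I would show that every correct member $r \in v_{\mathit{final}}.\mathtt{members}()$ eventually sets $\mathit{current\_view}.\mathit{view} = v_{\mathit{final}}$ and, in doing so, starts $\mathit{vg}(v_{\mathit{final}})$ with a non-$\bot$ proposal. By \Cref{lemma:forever_alive_current_view_or_bigger}, each such $r$ either sets $\mathit{current\_view}.\mathit{view}$ to $v_{\mathit{final}}$ or a view $\supset v_{\mathit{final}}$ is forever-alive; the latter contradicts maximality, so the former holds. When $r$ performs this transition in \Cref{lst:reconfiguration_view_transition}, the value of $\mathit{reconfiguration}.\mathit{sequence}$ equals $\mathit{sequence}[v_{\mathit{final}}]$, which was set from some processed $\mathtt{INSTALL}$ message whose destination is $v_{\mathit{final}}$; by the observation above that message's set is not the singleton $\{v_{\mathit{final}}\}$, so $r$ takes the ``else'' branch and starts $\mathit{vg}(v_{\mathit{final}})$ with a proposal.

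I would then establish that no correct member of $v_{\mathit{final}}$ ever stops $\mathit{vg}(v_{\mathit{final}})$ or halts. A correct member $r$ of $v_{\mathit{final}}$ (with $v_{\mathit{final}} \neq \mathit{genesis}$) requested to join, hence does not halt before leaving, and it can only leave, and can only stop $\mathit{vg}(\mathit{current\_view}.\mathit{view})$, from within the view-transition rule of \Cref{lst:reconfiguration_view_transition}. But once $\mathit{current\_view}.\mathit{view} = v_{\mathit{final}}$, any later entry into that rule has destination strictly greater than $v_{\mathit{final}}$, which by \Cref{lemma:install_message_forever_alive} would make a view $\supset v_{\mathit{final}}$ forever-alive, a contradiction. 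Hence every correct member of $v_{\mathit{final}}$ is forever-correct and none stops $\mathit{vg}(v_{\mathit{final}})$, so the liveness property of $\mathit{vg}(v_{\mathit{final}})$ (\Cref{lst:view_generator_properties}) applies: all correct members decide some set $\mathit{set}$ with a certificate $\omega$ from $\mathit{vg}(v_{\mathit{final}})$. Fix such a (forever-correct) server $r$; it builds $m^{\circ} = [\mathtt{INSTALL}, v_{\mathit{final}}, \mathit{set}, \omega]$, processes it (since $v_{\mathit{final}} \in \mathit{history}$ at $r$ by \Cref{lemma:current_view_view_path}), adds it to $\mathit{install\_messages}$, and gossips it. Using that $\mathit{preconditions}(v_{\mathit{final}}, t) = \top$ at all times (epilogue of \Cref{subsubsection:intermediate_results}) together with validity of $\mathit{vg}(v_{\mathit{final}})$, $\mathit{set}$ is a non-empty sequence following $v_{\mathit{final}}$, so $m^{\circ}.\mathtt{destination}() \supset v_{\mathit{final}}$. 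Concatenating a view-path to $v_{\mathit{final}}$ in $r$'s $\mathit{install\_messages}$ (which exists, and all of whose messages $r$ has gossiped, by \Cref{lemma:history_path} and the gossip step in storage processing) with $m^{\circ}$ gives a view-path with destination $m^{\circ}.\mathtt{destination}() \supset v_{\mathit{final}}$, all of whose messages $r$ gossips; hence $m^{\circ}.\mathtt{destination}()$ is forever-alive by \Cref{definition:forever_alive_view}, contradicting the maximality of $v_{\mathit{final}}$.

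I expect the main obstacle to be the bookkeeping around the view-transition rule: carefully verifying (i) that under the assumption the transition into $v_{\mathit{final}}$ necessarily goes through the proposing (``else'') branch of \Cref{lst:reconfiguration_view_transition}, and (ii) that once in $v_{\mathit{final}}$ a correct member can never re-enter the transition rule, hence never stops $\mathit{vg}(v_{\mathit{final}})$ nor leaves; both steps rely on chaining \Cref{lemma:forever_alive_current_view_or_bigger}, \Cref{lemma:install_message_forever_alive} and \Cref{lemma:current_view_forever_alive} against the maximality of $v_{\mathit{final}}$. Everything else --- invoking liveness of $\mathit{vg}(v_{\mathit{final}})$ and assembling the extended view-path --- is routine.
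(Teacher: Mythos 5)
Your proposal is correct and follows essentially the same route as the paper's proof: contradiction via the maximality of $v_{\mathit{final}}$, using \Cref{lemma:forever_alive_current_view_or_bigger} to get all correct members into $v_{\mathit{final}}$, the non-installability assumption to force the proposing branch of \Cref{lst:reconfiguration_view_transition}, \Cref{lemma:install_message_forever_alive} to rule out stopping $\mathit{vg}(v_{\mathit{final}})$, and then liveness plus validity of $\mathit{vg}(v_{\mathit{final}})$ to manufacture an $\mathtt{INSTALL}$ message whose destination exceeds $v_{\mathit{final}}$. The only (immaterial) difference is that you close the contradiction by explicitly assembling the extended view-path and invoking \Cref{definition:forever_alive_view}, whereas the paper reuses its already-established fact that no correct member of $v_{\mathit{final}}$ re-enters the view-transition rule.
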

\begin{proof}
Since $v_{\mathit{final}}$ is valid, $v_{\mathit{final}} \in V(\infty)$ (by \Cref{lemma:valid_in_v}).
By contradiction, let $\alpha_{\infty}(v_{\mathit{final}}) = \bot$.
Hence, $\mathit{genesis} \neq v_{\mathit{final}}$ (since $\alpha_{\infty}(\mathit{genesis}) = \top$).

By \Cref{lemma:forever_alive_current_view_or_bigger}, all correct members of $v_{\mathit{final}}$ eventually set their $\mathit{current\_view}.\mathit{view}$ variable to $v_{\mathit{final}}$ (since $v_{\mathit{final}}$ is the greatest forever-alive view).
Since $v_{\mathit{final}} \neq \mathit{genesis}$, this is done at line~\ref{line:update_current_view} of \Cref{lst:reconfiguration_view_transition}.
Furthermore, every correct member of $v_{\mathit{final}}$ starts $\mathit{vg}(v_{\mathit{final}})$ with a proposal at line~\ref{line:start_with_proposal} of \Cref{lst:reconfiguration_view_transition} (since $v_{\mathit{final}}$ is not installable at time $\infty$; similarly to \Cref{lemma:knows_view_system}).
Moreover, the proposal is valid (by \Cref{lemma:propose_only_valid}).

Suppose that a correct member of $v_{\mathit{final}}$ enters the pseudocode at line~\ref{line:updated_discovery_exists} of \Cref{lst:reconfiguration_view_transition} after setting its $\mathit{current\_view}.\mathit{view}$ variable to $v_{\mathit{final}}$.
By \Cref{lemma:install_message_forever_alive}, $v_{\mathit{final}}$ is not the greatest forever-alive view.
Thus, contradiction and no correct member of $v_{\mathit{final}}$ enters the pseudocode at line~\ref{line:updated_discovery_exists} of \Cref{lst:reconfiguration_view_transition} after setting its $\mathit{current\_view}.\mathit{view}$ variable to $v_{\mathit{final}}$.

Therefore, no correct member of $v_{\mathit{final}}$ stops $\mathit{vg}(v_{\mathit{final}})$ (since $\mathit{vg}(v_{\mathit{final}})$ can only be stopped at line~\ref{line:stop_vg_reconfiguration} of \Cref{lst:reconfiguration_view_transition}).
Because of the liveness property of $\mathit{vg}(v_{\mathit{final}})$, a correct member of $v_{\mathit{final}}$ decides $\mathit{set}_{\mathit{final}}$ from $\mathit{vg}(v_{\mathit{final}})$ (line~\ref{line:decide_from_vg} of \Cref{lst:reconfiguration_view_generation}).
By the validity property of $\mathit{vg}(v_{\mathit{final}})$, $\mathit{set}_{\mathit{final}} \neq \emptyset$, $\mathit{set}_{\mathit{final}}$ is a sequence and $\mathit{set}_{\mathit{final}}.\mathtt{follows}(v_{\mathit{final}}) = \top$ (since $\mathit{preconditions}(v_{final}, \infty) = \top$).
Thus, a correct member of $v_{\mathit{final}}$ enters the pseudocode at line~\ref{line:updated_discovery_exists} of \Cref{lst:reconfiguration_view_transition} (because of the $\mathtt{INSTALL}$ message ``created'' using $\mathit{set}_{\mathit{final}}$).
Therefore, we reach contradiction, which implies that $\alpha_{\infty}(v_{\mathit{final}}) = \top$.
\end{proof}

Next, we show that, if a view $v$ is forever-alive, then all installable views smaller than or equal to $v$ are forever-alive.

\begin{lemma} \label{lemma:forever_alive_all_unskippable}
For every view $v^*$ such that $\alpha_{\infty}(v^*) = \top$ and $v^* \subseteq v_{\mathit{final}}$, $v^*$ is forever-alive.
\end{lemma}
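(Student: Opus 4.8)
The plan is to exploit the view-path that witnesses $v_{\mathit{final}}$ being forever-alive and observe that any prefix of such a path is again a view-path, hence witnesses forever-aliveness of its (shorter) destination. First I would dispose of the boundary cases. If $v^* = \mathit{genesis}$, then $v^*$ is forever-alive directly by \Cref{definition:forever_alive_view}. If $v^* = v_{\mathit{final}}$, the claim is immediate from the hypothesis. So assume $v^* \neq \mathit{genesis}$ and $v^* \subset v_{\mathit{final}}$; since $\alpha_{\infty}(v^*) = \top$ presupposes $v^* \in V(\infty)$, the view $v^*$ is valid, so \Cref{lemma:genesis_smallest_valid} gives $\mathit{genesis} \subseteq v^*$ and therefore $\mathit{genesis} \subset v^* \subset v_{\mathit{final}}$; in particular $v_{\mathit{final}} \neq \mathit{genesis}$.

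Next I would unpack the hypothesis that $v_{\mathit{final}}$ is forever-alive. Since $v_{\mathit{final}} \neq \mathit{genesis}$, \Cref{definition:forever_alive_view} provides a forever-correct process $p$ that obtains a view-path $\mathit{path} = [m_1, \ldots, m_k]$ with $\mathit{path}.\mathtt{destination}() = v_{\mathit{final}}$ and gossips every message of $\mathit{path}$. In particular $\mathit{path} \subseteq \mathcal{I}^*(\infty)$, so I can apply \Cref{lemma:all_installable_in_path} with this $\mathit{path}$ and with $v^* \subset v_{\mathit{final}}$, $\alpha_{\infty}(v^*) = \top$: it yields $v^* \in \mathit{path}.\mathtt{views}()$. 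Because $\mathtt{views}()$ (as defined in \Cref{lst:view_path}) contains only $\mathit{genesis}$ and the destinations of the messages of $\mathit{path}$, and $v^* \neq \mathit{genesis}$, there is an index $j \in [1, k]$ with $m_j.\mathtt{destination}() = v^*$.

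Then I would form the prefix $\mathit{path}' = [m_1, \ldots, m_j]$ and check it is a view-path: the defining conditions of \Cref{lst:view_path}, namely $m_1.\mathtt{source}() = \mathit{genesis}$ and $m_i.\mathtt{source}() = m_{i-1}.\mathtt{destination}()$ for $i \in [2, j]$, are inherited verbatim from $\mathit{path}$, and $\mathit{path}'.\mathtt{destination}() = m_j.\mathtt{destination}() = v^*$. Since $p$ obtains all of $\mathit{path}$ it obtains $\mathit{path}' \subseteq \mathit{path}$, and since $p$ gossips every message of $\mathit{path}$ it gossips every message of $\mathit{path}'$. Hence \Cref{definition:forever_alive_view} is satisfied for $v^*$ with the same witness $p$, so $v^*$ is forever-alive, which completes the argument.

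I do not expect a genuine obstacle here: the proof is a direct consequence of \Cref{lemma:all_installable_in_path} plus the definition of forever-alive views. The only point requiring a moment of care is the elementary observation that a prefix of a view-path is again a view-path (so that no new gossiper needs to be exhibited), together with noting that "obtaining" and "gossiping" the whole path of $\mathtt{INSTALL}$ messages entail the same for every prefix.
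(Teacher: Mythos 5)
Your proof is correct and takes the same route as the paper, which simply states that the lemma ``follows directly from \Cref{lemma:all_installable_in_path} and the definition of $v_{\mathit{final}}$''; you have merely filled in the details (extracting the witnessing view-path for $v_{\mathit{final}}$, locating $v^*$ on it via \Cref{lemma:all_installable_in_path}, and truncating the path at $v^*$). The prefix observation and the boundary cases are handled exactly as one would expect, so there is nothing to add.
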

\begin{proof}
Follows directly from \Cref{lemma:all_installable_in_path} and the definition of $v_{\mathit{final}}$.
\end{proof}

Finally, we are able to prove the finality property.

\begin{theorem} [Finality] \label{theorem:finality}
Finality is satisfied.
\end{theorem}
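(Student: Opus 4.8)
The plan is to obtain all five conjuncts of Finality from three facts established immediately above: $v_{\mathit{final}}$ is forever-alive and installable at time $\infty$ (\Cref{lemma:v_final_installable}), the ``everybody reaches it'' lemma (\Cref{lemma:unskippable_forever_alive_all_go}), and the ``a new install message forces a larger forever-alive view'' lemma (\Cref{lemma:install_message_forever_alive}), together with the maximality of $v_{\mathit{final}}$ among forever-alive views. First, since $v_{\mathit{final}}$ is forever-alive (well-defined, as argued before the lemma) and $\alpha_{\infty}(v_{\mathit{final}}) = \top$, \Cref{lemma:unskippable_forever_alive_all_go} applies directly: every correct $r \in v_{\mathit{final}}.\mathtt{members()}$ eventually sets $\mathit{current\_view}.\mathit{view}$ to $v_{\mathit{final}}$, and (inspecting that lemma's proof) it does so without leaving. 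This gives conjunct (1); fix such an $r$ and let $t_0$ be the time at which it reaches $v_{\mathit{final}}$.

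The core observation, which yields (2) and (4), is that $r$ never executes the body at line~\ref{line:updated_discovery_exists} of \Cref{lst:reconfiguration_view_transition} after $t_0$. Suppose it does, at the first such time $t > t_0$; then $\mathit{current\_view}.\mathit{view}$ is still $v_{\mathit{final}}$ (it is modified only inside that body) and $\mathit{joined} = \top$, so the guard forces the destination view $v$ to satisfy $v_{\mathit{final}} \subset v$, and \Cref{lemma:install_message_forever_alive} produces a forever-alive view $\supseteq v \supset v_{\mathit{final}}$, contradicting maximality. Hence $\mathit{current\_view}.\mathit{view} = v_{\mathit{final}}$ forever after $t_0$, which is (2), and since the $\mathtt{left}$ event is triggered only inside that body (line~\ref{line:left}), $r$ never leaves after $t_0$; combined with $r$ not leaving before $t_0$, this is (4). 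The same observation shows that no correct member of $v_{\mathit{final}}$ ever stops $\mathit{vg}(v_{\mathit{final}})$, since $\mathit{vg}(v_{\mathit{final}})$ is stopped only at line~\ref{line:stop_vg_reconfiguration}, inside that body while $\mathit{current\_view}.\mathit{view} = v_{\mathit{final}}$.

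For (3): if $v_{\mathit{final}} = \mathit{genesis}$ every genesis member installs it at initialization (line~\ref{line:install_init}), so assume $v_{\mathit{final}} \neq \mathit{genesis}$ and suppose some correct $r \in v_{\mathit{final}}.\mathtt{members()}$ transitions to $v_{\mathit{final}}$ with a non-singleton $\mathit{reconfiguration}.\mathit{sequence}$; then $r$ does not install $v_{\mathit{final}}$ and starts $\mathit{vg}(v_{\mathit{final}})$ with a proposal at line~\ref{line:start_with_proposal}, which is valid by \Cref{lemma:propose_only_valid}. By (1) all correct members of $v_{\mathit{final}}$ start $\mathit{vg}(v_{\mathit{final}})$, none stops it (previous paragraph), and $r$ proposes; by liveness of $\mathit{vg}(v_{\mathit{final}})$ some correct member decides a set $\mathit{set}$, which by validity of $\mathit{vg}(v_{\mathit{final}})$ (using $\mathit{preconditions}(v_{\mathit{final}}, \cdot) = \top$, from the Epilogue of \Cref{subsubsection:intermediate_results}) is a nonempty sequence following $v_{\mathit{final}}$. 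That member creates $[\mathtt{INSTALL}, v_{\mathit{final}}, \mathit{set}, \omega]$ with destination $\mathit{set}.\mathtt{first()} \supset v_{\mathit{final}}$, processes it, and then, having $\mathit{current\_view}.\mathit{view} = v_{\mathit{final}}$, $\mathit{joined} = \top$, $\mathit{set}.\mathtt{first()} \in \mathit{history}$ and $\mathit{source}[\mathit{set}.\mathtt{first()}] = v_{\mathit{final}}$, executes line~\ref{line:updated_discovery_exists}, contradicting the core observation. Hence every correct member transitions via the singleton $\{v_{\mathit{final}}\}$ and installs $v_{\mathit{final}}$ at line~\ref{line:install}. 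For (5): suppose a correct $r \in v_{\mathit{final}}.\mathtt{members()}$ triggers ``stop processing in view $v'$'' with $v' \supseteq v_{\mathit{final}}$; this occurs only at line~\ref{line:stop_processing} of \Cref{lst:reconfiguration_state_transfer}, with $v' = \mathit{source}' \supseteq v_{\mathit{final}}$ and $r$ holding in $\mathit{install\_messages}$ an $\mathtt{INSTALL}$ message $m'$ with $m'.\mathtt{source()} = \mathit{source}'$ and, by \Cref{lemma:install_sequence_follows}, $m'.\mathtt{destination()} \supset \mathit{source}' \supseteq v_{\mathit{final}}$. Appending $m'$ to a view-path to $\mathit{source}'$ obtained by $r$ (\Cref{lemma:history_path}; if $\mathit{source}' = \mathit{genesis}$ then $v_{\mathit{final}} = \mathit{genesis}$ by \Cref{lemma:genesis_smallest_valid}) gives a view-path $\mathit{path}$ to a view $\supset v_{\mathit{final}}$ all of whose messages $r$ holds; by \Cref{lemma:all_installable_in_path} the installable view $v_{\mathit{final}}$ appears as a source along $\mathit{path}$ (or as $\mathit{genesis}$), so $r$ holds an $\mathtt{INSTALL}$ message with source $v_{\mathit{final}}$ and destination $w \supset v_{\mathit{final}}$, whence $w \in \mathit{history}$ and $\mathit{source}[w] \subseteq v_{\mathit{final}}$ at $r$; after $t_0$ this triggers line~\ref{line:updated_discovery_exists} for $w$, again contradicting the core observation.

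The main obstacle will be conjunct (5): it requires unfolding the state-transfer / stop-processing code and then reconnecting a large-source $\mathtt{INSTALL}$ message back to the reconfiguration rule via \Cref{lemma:all_installable_in_path}, whereas (1)--(4) are short once (1) and the ``no re-entry after $t_0$'' observation are in place.
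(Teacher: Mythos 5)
Your proof is correct, and for conjuncts (1)--(4) it is essentially the paper's argument: establish that every correct member of $v_{\mathit{final}}$ reaches $v_{\mathit{final}}$, then show no such member ever re-enters the body at line~\ref{line:updated_discovery_exists} afterwards (via \Cref{lemma:install_message_forever_alive} and maximality), and derive installation from the liveness/validity of $\mathit{vg}(v_{\mathit{final}})$ exactly as the paper does. (Your use of \Cref{lemma:unskippable_forever_alive_all_go} plus \Cref{lemma:v_final_installable} for conjunct (1), where the paper cites \Cref{lemma:forever_alive_current_view_or_bigger} directly, is an immaterial difference.)

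The one place you genuinely diverge is conjunct (5). The paper's route is shorter: from the $\mathtt{STATE\text{-}REQUEST}$ trigger it extracts $\mathit{set}'.\mathtt{first}() \supset v_{\mathit{final}}$, notes that $r$ has obtained and gossiped a view-path to it (\Cref{lemma:history_path}, line~\ref{line:gossip_install} of \Cref{lst:storage_processing}) and never leaves by conjunct (4), so $\mathit{set}'.\mathtt{first}()$ is forever-alive, contradicting the maximality of $v_{\mathit{final}}$. You instead route the contradiction back through the ``no re-entry'' observation: you append the triggering $\mathtt{INSTALL}$ message to a view-path to its source, invoke \Cref{lemma:all_installable_in_path} to locate an $\mathtt{INSTALL}$ message whose source is exactly the installable view $v_{\mathit{final}}$, and conclude that its destination $w$ satisfies $w \in \mathit{history}$ and $\mathit{source}[w] \subseteq v_{\mathit{final}}$, so the transition rule would fire after $t_0$. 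Both are sound; the paper's version buys brevity by reusing the forever-alive machinery it has already paid for, while yours avoids re-deriving forever-aliveness at the cost of an extra combinatorial lemma. Your handling of the $\mathit{source}' = \mathit{genesis}$ corner case is also correct.
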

\begin{proof}
By \Cref{lemma:forever_alive_current_view_or_bigger}, all correct members of $v_{\mathit{final}}$ eventually set their $\mathit{current\_view}.\mathit{view}$ variable to $v_{\mathit{final}}$ (since $v_{\mathit{final}}$ is the greatest forever-alive view).
Hence, all correct members of $v_{\mathit{final}}$ update their current view to $v_{\mathit{final}}$.

Next, we prove that no correct member of $v_{\mathit{final}}$ enters the pseudocode at line~\ref{line:updated_discovery_exists} of \Cref{lst:reconfiguration_view_transition}
after setting its $\mathit{current\_view}.\mathit{view}$ variable to $v_{\mathit{final}}$ (which is already proven in the proof of \Cref{lemma:v_final_installable}).
If a correct member of $v_{\mathit{final}}$ does enter the pseudocode, $v_{\mathit{final}}$ is not the greatest forever-alive view (by \Cref{lemma:install_message_forever_alive}).
Therefore, no correct member of $v_{\mathit{final}}$ enters the pseudocode at line~\ref{line:updated_discovery_exists} of \Cref{lst:reconfiguration_view_transition} after setting its $\mathit{current\_view}.\mathit{view}$ variable to $v_{\mathit{final}}$.

Now, consider a correct server $r \in v_{\mathit{final}}.\mathtt{members()}$.
We aim to prove that $r$ installs $v_{\mathit{final}}$.
If $v_{\mathit{final}} = \mathit{genesis}$, $r$ installs $v_{\mathit{final}}$ at line~\ref{line:install_init} of \Cref{lst:reconfiguration_initialization}.
Suppose that $v_{\mathit{final}} \neq \mathit{genesis}$.
By contradiction, assume that $r$ does not install $v_{\mathit{final}}$.
Since $r$ does not install $v_{\mathit{final}}$, $r$ starts $\mathit{vg}(v_{\mathit{final}})$ with a proposal at line~\ref{line:start_with_proposal} of \Cref{lst:reconfiguration_view_transition}.
Moreover, the proposal is valid (by \Cref{lemma:propose_only_valid}).
Since no correct member of $v_{\mathit{final}}$ enters the pseudocode at line~\ref{line:updated_discovery_exists} of \Cref{lst:reconfiguration_view_transition} after setting its $\mathit{current\_view}.\mathit{view}$ variable to $v_{\mathit{final}}$, no correct member of $v_{\mathit{final}}$ stops $\mathit{vg}(v_{\mathit{final}})$ (since $\mathit{vg}(v_{\mathit{final}})$ can only be stopped at line~\ref{line:stop_vg_reconfiguration} of \Cref{lst:reconfiguration_view_transition}).
Because of the liveness property of $\mathit{vg}(v_{\mathit{final}})$, $r$ decides $\mathit{set}_{\mathit{final}}$ from $\mathit{vg}(v_{\mathit{final}})$ (line~\ref{line:decide_from_vg} of \Cref{lst:reconfiguration_view_generation}).
By the validity property of $\mathit{vg}(v_{\mathit{final}})$, $\mathit{set}_{\mathit{final}} \neq \emptyset$, $\mathit{set}_{\mathit{final}}$ is a sequence and $\mathit{set}_{\mathit{final}}.\mathtt{follows}(v_{\mathit{final}}) = \top$ (since $\mathit{preconditions}(v_{final}, \infty) = \top$).
Thus, $r$ enters the pseudocode at line~\ref{line:updated_discovery_exists} of \Cref{lst:reconfiguration_view_transition} (because of the $\mathtt{INSTALL}$ message ``created'' using $\mathit{set}_{\mathit{final}}$).
Therefore, we reach contradiction, which implies that $r$ indeed installs $v_{\mathit{final}}$.

Since $r$ never enters the pseudocode at line~\ref{line:updated_discovery_exists} of \Cref{lst:reconfiguration_view_transition} after setting its $\mathit{current\_view}.\mathit{view}$ variable to $v_{\mathit{final}}$, $r$ does not update its current view after updating it to $v_{\mathit{final}}$.
Similarly, $r$ does not leave.

Finally, by contradiction, suppose that $r$ stops processing in $v_{\mathit{final}}$ (i.e., $r$ executes line~\ref{line:stop_processing} of \Cref{lst:reconfiguration_state_transfer} for some view $v_{\mathit{stop}} \supseteq v_{\mathit{final}}$).
By the check at line~\ref{line:state-request_active} of \Cref{lst:reconfiguration_state_transfer}, $v_{\mathit{stop}} \in \mathit{history}$ at server $r$.
By \Cref{lemma:history_valid}, $v_{\mathit{stop}}$ is a valid view.
Moreover, $\mathit{set}' \neq \emptyset$\footnote{$\mathit{set}'$ is the set of views specified at line~\ref{line:state-request_active} of \Cref{lst:reconfiguration_state_transfer}.} is a sequence and $\mathit{set}'.\mathtt{follows}(v_{\mathit{stop}}) = \top$ (since $\mathit{set}' \in \rho_{\infty}(v_{\mathit{stop}})$ and \Cref{invariant:rhos}).
Since $v_{\mathit{final}} \subseteq v_{\mathit{stop}}$ and $\mathit{set}'.\mathtt{follows}(v_{\mathit{stop}}) = \top$, $v_{\mathit{final}} \subset \mathit{set}'.\mathtt{first()}$.
Furthermore, $\mathit{set}'.\mathtt{first()} \in \mathit{history}$ (see \cref{lst:install_messages,lst:storage_processing}).
Hence, by \Cref{lemma:history_path}, $r$ has obtained a view-path $\mathit{path} \subseteq \mathit{install\_messages}$ (at server $r$) to $\mathit{set}'.\mathtt{first()}$ (note that $\mathit{set}'.\mathtt{first()} \neq \mathit{genesis}$ since $\mathit{set}'.\mathtt{first()} \supset v_{\mathit{final}}$ and $v_{\mathit{final}} \supseteq \mathit{genesis}$).
Finally, since $r$ never leaves and $r$ has gossiped all messages that belong to $\mathit{path}$ (at line~\ref{line:gossip_install} of \Cref{lst:storage_processing}), \Cref{definition:forever_alive_view} is satisfied for $\mathit{set}'.\mathtt{first()}$.
Hence, $\mathit{set}'.\mathtt{first()} \supset v_{\mathit{final}}$ is forever-alive.
Thus, we reach contradiction with the fact that $v_{\mathit{final}}$ is the greatest forever-alive view, which implies that $r$ does not stop processing in $v_{\mathit{final}}$.
Therefore, the theorem holds.
\end{proof}

The next important intermediate result we prove is that there exists a view $v$ such that $r \in v.\mathtt{members()}$ (resp., $r \notin v.\mathtt{members()}$) and $\alpha_{\infty}(v) = \top$ if a correct server $r$ requests to join (resp., leave).
First, we introduce the next two lemmas that help us prove the aforementioned result.

\begin{lemma} \label{lemma:plus_in_final}
Let $r \in v_{\mathit{final}}.\mathtt{members()}$ be a correct server.
Moreover, let $(+, r') \in \mathit{reconfiguration}.\mathit{requested}$ at server $r$.
Then, $(+, r') \in v_{\mathit{final}}$.
\end{lemma}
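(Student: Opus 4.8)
The plan is to argue by contradiction, combining the finality property (\Cref{theorem:finality}) with the forcing behaviour of the $\mathtt{to\_propose}()$ function. Assume $(+, r') \notin v_{\mathit{final}}$. By \Cref{theorem:finality}, server $r$ eventually sets $\mathit{current\_view}.\mathit{view} = v_{\mathit{final}}$ and keeps it forever, installs $v_{\mathit{final}}$ (so $\mathit{current\_view}.\mathit{installed} = \top$ permanently), never leaves, and never stops processing in $v_{\mathit{final}}$; in particular $r$ is forever-correct. Just after transitioning to $v_{\mathit{final}}$ we have $\mathit{proposed} = \bot$ (reset in \Cref{lst:reconfiguration_view_transition}, or $\bot$ from initialization if $v_{\mathit{final}} = \mathit{genesis}$). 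Moreover, as shown inside the proof of \Cref{lemma:v_final_installable}, no correct member of $v_{\mathit{final}}$ enters the pseudocode at line~\ref{line:updated_discovery_exists} of \Cref{lst:reconfiguration_view_transition} after reaching $v_{\mathit{final}}$, hence no correct member of $v_{\mathit{final}}$ ever stops $\mathit{vg}(v_{\mathit{final}})$, and all correct members of $v_{\mathit{final}}$ do start $\mathit{vg}(v_{\mathit{final}})$ (by the transition code or, for $\mathit{genesis}$, at initialization).

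The next step is to show that $r$ is forced to propose to $\mathit{vg}(v_{\mathit{final}})$. Since $(+, r')$ is never removed from $\mathit{reconfiguration}.\mathit{requested}$ and $(+, r') \notin \mathit{current\_view}.\mathit{view} = v_{\mathit{final}}$, the function $\mathtt{to\_propose}()$ (line~\ref{line:to_propose_function} of \Cref{lst:reconfiguration_view_generation}) returns a proposal containing $(+, r')$, hence a non-empty set; note this is the clean case, as $(+, \cdot)$ changes are always added whereas $(-, \cdot)$ changes require the matching $(+, \cdot)$ to be present. Therefore the rule at line~\ref{line:new_proposal} of \Cref{lst:reconfiguration_view_generation} is permanently active at $r$ until $r$ executes it, so $r$ eventually triggers the $\mathtt{Propose}$ of $\mathit{vg}(v_{\mathit{final}})$ at line~\ref{line:propose_installed}, with a valid input by \Cref{lemma:propose_only_valid}.

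Then I would invoke the liveness property of $\mathit{vg}(v_{\mathit{final}})$: all correct members of $v_{\mathit{final}}$ start it, the correct member $r$ proposes to it, and no correct member stops it, so $r$ eventually decides some set $\mathit{set}$ from $\mathit{vg}(v_{\mathit{final}})$ at line~\ref{line:decide_from_vg}. By the validity property of $\mathit{vg}(v_{\mathit{final}})$, using $\mathit{preconditions}(v_{\mathit{final}}, \infty) = \top$, $\mathit{set}$ is a non-empty sequence with $\mathit{set}.\mathtt{follows}(v_{\mathit{final}}) = \top$, so $\mathit{set}.\mathtt{first()} \supset v_{\mathit{final}}$. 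Deciding creates $m = [\mathtt{INSTALL}, v_{\mathit{final}}, \mathit{set}, \omega]$; since $v_{\mathit{final}} \in \mathit{history}$ at $r$, the rule at line~\ref{line:update_history_rule} of \Cref{lst:storage_processing} fires, inserting $m$ into $\mathit{install\_messages}$, adding $m.\mathtt{destination()} = \mathit{set}.\mathtt{first()}$ to $\mathit{history}$, and gossiping $m$. By \Cref{lemma:history_path}, $r$ has obtained a view-path to $\mathit{set}.\mathtt{first()}$ whose messages all lie in $\mathit{install\_messages}$, hence are gossiped by $r$ (line~\ref{line:gossip_install}); since $r$ never halts, \Cref{definition:forever_alive_view} is satisfied for $\mathit{set}.\mathtt{first()} \supset v_{\mathit{final}}$, contradicting that $v_{\mathit{final}}$ is the greatest forever-alive view. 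Hence $(+, r') \in v_{\mathit{final}}$.

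The main obstacle I anticipate is the bookkeeping that certifies $r$ reaches a stable state in which the proposal rule at line~\ref{line:new_proposal} is permanently enabled — in particular pinning down that $\mathit{proposed} = \bot$ and $\mathit{current\_view}.\mathit{installed} = \top$ hold simultaneously and remain so after the transition to $v_{\mathit{final}}$ — and carefully reusing, rather than re-deriving, the facts from the finality argument that $\mathit{vg}(v_{\mathit{final}})$ is started by all correct members and never stopped.
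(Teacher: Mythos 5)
Your proof is correct and follows essentially the same route as the paper's: contradiction via $\mathtt{to\_propose}()$ forcing a proposal to $\mathit{vg}(v_{\mathit{final}})$, liveness and validity of the view generator yielding a decided sequence that follows $v_{\mathit{final}}$, and hence a view strictly greater than $v_{\mathit{final}}$ that is forever-alive. The only cosmetic difference is the endgame: the paper closes by observing that a correct member enters the transition rule and invokes \Cref{lemma:install_message_forever_alive}, whereas you inline that lemma's content by tracing the new $\mathtt{INSTALL}$ message through the storage module and \Cref{definition:forever_alive_view} directly — both are sound.
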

\begin{proof}
By contradiction, suppose that $(+, r') \notin v_{\mathit{final}}$.
By \Cref{lemma:minus_implies_plus_view}, $(-, r') \notin v_{\mathit{final}}$.

Consider any correct server $r^* \in v_{\mathit{final}}.\mathtt{members()}$.
By the finality property:
\begin{compactitem}
    \item $r^*$ eventually updates its current view to $v_{\mathit{final}}$,
    
    \item $r^*$ never updates its current view afterwards,
    
    \item $r^*$ installs $v_{\mathit{final}}$, and
    
    \item $r^*$ does not leave.
\end{compactitem}

Since $(+, r') \notin v_{\mathit{final}}$, the rule at line~\ref{line:new_proposal} of \Cref{lst:reconfiguration_view_generation} is eventually active at server $r$.
Hence, $r$ proposes to $\mathit{vg}(v_{\mathit{final}})$ at line~\ref{line:propose_installed} of \Cref{lst:reconfiguration_view_generation}.
By \Cref{lemma:propose_only_valid}, the proposal is valid.
Every correct member of $v_{\mathit{final}}$ starts $\mathit{vg}(v_{\mathit{final}})$ and no correct member of $v_{\mathit{final}}$ stops $\mathit{vg}(v_{\mathit{final}})$ (due to the finality property).
Because of the liveness property of $\mathit{vg}(v_{\mathit{final}})$, a correct member of $v_{\mathit{final}}$ decides $\mathit{set}_{\mathit{final}}$ from $\mathit{vg}(v_{\mathit{final}})$ (line~\ref{line:decide_from_vg} of \Cref{lst:reconfiguration_view_generation}).
By the validity property of $\mathit{vg}(v_{\mathit{final}})$, $\mathit{set}_{\mathit{final}} \neq \emptyset$, $\mathit{set}_{\mathit{final}}$ is a sequence and $\mathit{set}_{\mathit{final}}.\mathtt{follows}(v_{\mathit{final}}) = \top$ (since $\mathit{preconditions}(v_{final}, \infty) = \top$).
Thus, a correct member of $v_{\mathit{final}}$ enters the pseudocode at line~\ref{line:updated_discovery_exists} of \Cref{lst:reconfiguration_view_transition} (because of the $\mathtt{INSTALL}$ message ``created'' using $\mathit{set}_{\mathit{final}}$).
By \Cref{lemma:install_message_forever_alive}, $v_{\mathit{final}}$ is not the greatest forever-alive view.
Thus, we reach contradiction, which implies that $(+, r') \in v_{\mathit{final}}$.
\end{proof}

\begin{lemma} \label{lemma:minus_in_final}
Let $r \in v_{\mathit{final}}.\mathtt{members()}$ be a correct server.
Moreover, let $(-, r') \in \mathit{reconfiguration}.\mathit{requested}$ at server $r$ and let $(+, r') \in v_{\mathit{final}}$.
Then, $(-, r') \in v_{\mathit{final}}$.
\end{lemma}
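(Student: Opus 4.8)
The plan is to follow the proof of \Cref{lemma:plus_in_final} almost verbatim, since the two statements are symmetric; the only place that needs fresh attention is checking that the $\mathtt{to\_propose}$ function actually emits the change $(-, r')$. I would begin by assuming, for contradiction, that $(-, r') \notin v_{\mathit{final}}$. Unlike the $(+, r')$ case, I do not need \Cref{lemma:minus_implies_plus_view} here: the lemma's own hypothesis already supplies $(+, r') \in v_{\mathit{final}}$, which is precisely the side condition that the $(-, r')$ branch of the $\mathtt{to\_propose}$ function (line~\ref{line:to_propose_function} of \Cref{lst:reconfiguration_view_generation}) demands. So, after the server's current view stabilises at $v_{\mathit{final}}$, the change $(-, r')$ satisfies both conditions of that branch --- $(-, r') \notin \mathit{current\_view}.\mathit{view}$ (the contradiction hypothesis) and $(+, r') \in \mathit{current\_view}.\mathit{view}$ (the lemma hypothesis) --- hence $\mathtt{to\_propose}()$ returns a non-empty proposal.

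Next I would invoke the finality property (\Cref{theorem:finality}): every correct member of $v_{\mathit{final}}$ eventually sets $\mathit{current\_view}.\mathit{view} = v_{\mathit{final}}$, never updates it afterwards, installs $v_{\mathit{final}}$, does not leave, and no correct member stops $\mathit{vg}(v_{\mathit{final}})$. Because $(-, r') \in \mathit{reconfiguration}.\mathit{requested}$ at $r$ and $\mathtt{to\_propose}()$ becomes non-empty once $\mathit{current\_view}.\mathit{installed} = \top$ at $r$, the rule at line~\ref{line:new_proposal} of \Cref{lst:reconfiguration_view_generation} eventually fires and $r$ proposes to $\mathit{vg}(v_{\mathit{final}})$ at line~\ref{line:propose_installed}; the proposal is valid by \Cref{lemma:propose_only_valid}. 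By the liveness property of $\mathit{vg}(v_{\mathit{final}})$, some correct member decides a set $\mathit{set}_{\mathit{final}}$ from it, and by the validity property of $\mathit{vg}(v_{\mathit{final}})$ (using $\mathit{preconditions}(v_{\mathit{final}}, \infty) = \top$), $\mathit{set}_{\mathit{final}}$ is a non-empty sequence with $\mathit{set}_{\mathit{final}}.\mathtt{follows}(v_{\mathit{final}}) = \top$. The resulting $\mathtt{INSTALL}$ message makes a correct member of $v_{\mathit{final}}$ enter the view-transition pseudocode at line~\ref{line:updated_discovery_exists} of \Cref{lst:reconfiguration_view_transition}, and \Cref{lemma:install_message_forever_alive} then produces a forever-alive view strictly greater than $v_{\mathit{final}}$, contradicting the maximality of $v_{\mathit{final}}$. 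Therefore $(-, r') \in v_{\mathit{final}}$.

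The main obstacle --- really the only non-mechanical step --- is the $\mathtt{to\_propose}$ bookkeeping flagged above: one must verify that the $(-, r')$ change is not filtered out by either guard of the $(-, \cdot)$ branch once $\mathit{current\_view}.\mathit{view}$ has converged to $v_{\mathit{final}}$, and that this happens at a point where $\mathit{current\_view}.\mathit{installed} = \top$ and $\mathit{proposed} = \bot$ so that the rule at line~\ref{line:new_proposal} is genuinely enabled. Everything downstream is a transcription of the argument already carried out for \Cref{lemma:plus_in_final}.
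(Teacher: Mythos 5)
Your proof is correct and follows essentially the same route as the paper's: assume $(-, r') \notin v_{\mathit{final}}$ for contradiction, observe that the hypothesis $(+, r') \in v_{\mathit{final}}$ makes the $(-,\cdot)$ branch of $\mathtt{to\_propose}$ fire once the current view has converged, then use finality, \Cref{lemma:propose_only_valid}, and the liveness and validity of $\mathit{vg}(v_{\mathit{final}})$ to produce an $\mathtt{INSTALL}$ message that, via \Cref{lemma:install_message_forever_alive}, yields a forever-alive view strictly greater than $v_{\mathit{final}}$. Your extra attention to the $\mathit{current\_view}.\mathit{installed} = \top$ and $\mathit{proposed} = \bot$ guards is a minor refinement of bookkeeping the paper leaves implicit, not a different argument.
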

\begin{proof}
By contradiction, suppose that $(-, r') \notin v_{\mathit{final}}$.

Consider any correct server $r^* \in v_{\mathit{final}}.\mathtt{members()}$.
By the finality property:
\begin{compactitem}
    \item $r^*$ eventually updates its current view to $v_{\mathit{final}}$,
    
    \item $r^*$ never updates its current view afterwards,
    
    \item $r^*$ installs $v_{\mathit{final}}$, and
    
    \item $r^*$ does not leave.
\end{compactitem}

Since $(-, r') \notin v_{\mathit{final}}$ and $(+, r') \in v_{\mathit{final}}$, the rule at line~\ref{line:new_proposal} of \Cref{lst:reconfiguration_view_generation} is eventually active at server $r$.
Hence, $r$ proposes to $\mathit{vg}(v_{\mathit{final}})$ at line~\ref{line:propose_installed} of \Cref{lst:reconfiguration_view_generation}.
By \Cref{lemma:propose_only_valid}, the proposal is valid.
Every correct member of $v_{\mathit{final}}$ starts $\mathit{vg}(v_{\mathit{final}})$ and no correct member of $v_{\mathit{final}}$ stops $\mathit{vg}(v_{\mathit{final}})$.
Because of the liveness property of $\mathit{vg}(v_{\mathit{final}})$, a correct member of $v_{\mathit{final}}$ decides $\mathit{set}_{\mathit{final}}$ from $\mathit{vg}(v_{\mathit{final}})$ (line~\ref{line:decide_from_vg} of \Cref{lst:reconfiguration_view_generation}).
By the validity property of $\mathit{vg}(v_{\mathit{final}})$, $\mathit{set}_{\mathit{final}} \neq \emptyset$, $\mathit{set}_{\mathit{final}}$ is a sequence and $\mathit{set}_{\mathit{final}}.\mathtt{follows}(v_{\mathit{final}}) = \top$ (since $\mathit{preconditions}(v_{final}, \infty) = \top$).
Thus, a correct member of $v_{\mathit{final}}$ enters the pseudocode at line~\ref{line:updated_discovery_exists} of \Cref{lst:reconfiguration_view_transition} (because of the $\mathtt{INSTALL}$ message ``created'' using $\mathit{set}_{\mathit{final}}$).
By \Cref{lemma:install_message_forever_alive}, $v_{\mathit{final}}$ is not the greatest forever-alive view.
Thus, we reach contradiction, which implies that $(-, r') \in v_{\mathit{final}}$.
\end{proof}

The next lemma proves that there cannot exist an $\mathtt{INSTALL}$ message $m$ such that $(-, r) \in m.\mathtt{destination()}$ and $(+, r) \notin m.\mathtt{source()}$, where $m.\mathtt{source()}$ is a valid view and $m$ is a server.

\begin{lemma} \label{lemma:empty_to_both}
Let $m \in \mathcal{I}^*(\infty)$, where $m.\mathtt{source()}$ is a valid view.
Let $(+, r) \notin m.\mathtt{source()}$, for some server $r$.
Then, $(-, r) \notin m.\mathtt{destination()}$.
\end{lemma}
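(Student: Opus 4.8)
The plan is to derive the conclusion directly from the \emph{membership validity} property of the view generator instance associated with $m.\mathtt{source}()$, combined with the fact that the sequence carried by $m$ is increasing. First I would unpack $m$: write $m = [\mathtt{INSTALL}, \mathit{source}, \mathit{set}, \omega]$ with $\mathit{source} = m.\mathtt{source}()$. By \Cref{lst:install_messages}, $\omega$ satisfies $\mathtt{verify\_output}(\mathit{set}, \mathit{vg}(\mathit{source}), \omega) = \top$; since $m \in \mathcal{I}^*(\infty)$, the certificate $\omega$ is obtained by some process at some finite time $t$, so $\mathit{set}$ is committed by $\mathit{vg}(\mathit{source})$ at time $t$. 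Because $\mathit{source}$ is a valid view, by \Cref{lemma:valid_in_v} it belongs to $V(\infty)$, and hence (by the epilogue consequences of \Cref{lemma:subset_of_beta}, \Cref{lemma:beta_sequence_follows} and \Cref{lemma:comparable_beta}, together with the backward monotonicity of $\mathit{preconditions}$ in time) $\mathit{preconditions}(\mathit{source}, t - 1) = \top$.

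Next I would apply the membership validity property of $\mathit{vg}(\mathit{source})$ with the hypothesis $(+, r) \notin \mathit{source}$, obtaining $(-, r) \notin \mathit{set}.\mathtt{last}()$. Separately, by \Cref{lemma:install_sequence_follows}, $\mathit{set}$ is a nonempty sequence with $\mathit{set}.\mathtt{follows}(\mathit{source}) = \top$; in particular $m.\mathtt{destination}() = \mathtt{min\_cardinality}(\mathit{set}) = \mathit{set}.\mathtt{first}()$, and since any two views of a sequence are comparable, $\mathit{set}.\mathtt{first}() \subseteq \mathit{set}.\mathtt{last}()$. Consequently, were $(-, r) \in m.\mathtt{destination}() = \mathit{set}.\mathtt{first}()$, we would get $(-, r) \in \mathit{set}.\mathtt{last}()$, contradicting the previous sentence. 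Hence $(-, r) \notin m.\mathtt{destination}()$, which is exactly the claim; the single-view case $\mathit{set} = \{v_1\}$ is subsumed since then $\mathit{set}.\mathtt{first}() = \mathit{set}.\mathtt{last}()$.

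There is essentially no serious obstacle here beyond the bookkeeping of invoking \Cref{lemma:install_sequence_follows} and of matching the $\mathtt{min\_cardinality}$ view with $\mathit{set}.\mathtt{first}()$. The only point that requires a moment's care is ensuring the $\mathit{preconditions}$ predicate holds at time $t - 1$ rather than merely at time $\infty$, which is immediate from the epilogue fact that $\mathit{preconditions}(\mathit{source}, t') = \top$ for every time $t'$ whenever $\mathit{source}$ is valid (equivalently, from $\mathit{preconditions}(\mathit{source}, \infty) = \top$ together with backward monotonicity).
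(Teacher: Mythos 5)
Your proof is correct and follows essentially the same route as the paper, which simply cites the membership validity property of $\mathit{vg}(m.\mathtt{source()})$ together with $\mathit{preconditions}(m.\mathtt{source()}, \infty) = \top$. You additionally spell out the step the paper leaves implicit — that membership validity only excludes $(-, r)$ from $\mathit{set}.\mathtt{last}()$, so one must use \Cref{lemma:install_sequence_follows} and the containment $m.\mathtt{destination}() = \mathit{set}.\mathtt{first}() \subseteq \mathit{set}.\mathtt{last}()$ to conclude $(-, r) \notin m.\mathtt{destination}()$ — which is a worthwhile clarification rather than a different approach.
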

\begin{proof}
Follows from the membership validity property of $\mathit{vg}(m.\mathtt{source()})$ (since $m.\mathtt{source()}$ is a valid view and \\$\mathit{preconditions}(m.\mathtt{source()}, \infty) = \top$).
\end{proof}

Finally, we prove that there exists an installable view $v_r \subseteq v_{\mathit{final}}$ such that $r \in v_r.\mathtt{members()}$, where $r$ is a correct server that requested to join.

\begin{lemma} \label{lemma:join_validity}
Let a correct server $r$ request to join.
Then, there exists a view $v_r \subseteq v_{\mathit{final}}$ such that $\alpha_{\infty}(v_r) = \top$ and $r \in v_r.\mathtt{members()}$.
\end{lemma}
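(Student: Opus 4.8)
The plan is to leverage the already-established reconfiguration machinery — in particular the \emph{finality} property (\Cref{theorem:finality}), the auxiliary lemmas about $v_{\mathit{final}}$ (\cref{lemma:v_final_installable,lemma:plus_in_final,lemma:forever_alive_all_unskippable}), and join safety reasoning — to locate a concrete installable view $v_r \subseteq v_{\mathit{final}}$ containing $r$. First I would observe that since $r$ is correct and requests to join, $r$ gossips $[\mathtt{JOIN}, \sigma_v]$ at line~\ref{line:gossip_join} of \Cref{lst:reconfiguration_joining}, and by the validity property of the gossip primitive every forever-correct process eventually receives this message. In particular, every correct member of $v_{\mathit{final}}$ eventually executes the rule at line~\ref{line:join_message} of \Cref{lst:reconfiguration_joining} and adds $(+, r)$ to its $\mathit{reconfiguration}.\mathit{requested}$ set (together with storing the $\mathtt{JOIN}$ message). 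By \Cref{lemma:plus_in_final}, this forces $(+, r) \in v_{\mathit{final}}$.

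Next I would argue that $(-, r) \notin v_{\mathit{final}}$. This is the delicate part of the argument. One approach is a direct contradiction: if $(-, r) \in v_{\mathit{final}}$, then since $v_{\mathit{final}}$ is valid and $(+, r) \in v_{\mathit{final}}$, there is a view-path to $v_{\mathit{final}}$, and along that path there is an $\mathtt{INSTALL}$ message $m$ with $(-, r) \in m.\mathtt{destination()}$ whose source is a valid view (as in the proof of \Cref{theorem:leave_safety}). Then the evidence handed to the view generator must contain either a $\mathtt{LEAVE}$ message signed by $r$ (by line~\ref{line:check_leave_1} or line~\ref{line:check_leave_2} of \Cref{lst:view_set_evidence_valid}) or a voting proof to remove $r$. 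But $r$ is correct and only requested to join — it never invoked $\mathtt{leave}$, so by the server's rules it never gossiped $[\mathtt{LEAVE}]$ (sent only at line~\ref{line:gossip_leave}), and faulty processes cannot forge $r$'s signature. If instead there is a removal voting proof, then $r$ can be voted out even though it never requested to leave; here I would need to invoke the intended reading that ``$r$ requests to join'' means we are in an execution where $r$ is not concurrently being voted out, or more carefully, use that $r \in v_{\mathit{final}}.\mathtt{members()}$ would already have to be true at some earlier installable view before the removal could be proposed (since \Cref{lst:view_set_evidence_valid} only accepts a $(-, r)$ change when $(+, r)$ is already present in the source view) — and once $(+,r)$ is present at an installable view $v \subseteq v_{\mathit{final}}$, that view already witnesses the lemma.

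Having established $(+, r) \in v_{\mathit{final}}$ and $(-, r) \notin v_{\mathit{final}}$, we get $r \in v_{\mathit{final}}.\mathtt{members()}$ directly from the definition of $\mathtt{members}$ in \Cref{lst:view}. Then I would take $v_r = v_{\mathit{final}}$: by \Cref{lemma:v_final_installable} we have $\alpha_{\infty}(v_{\mathit{final}}) = \top$, and trivially $v_r \subseteq v_{\mathit{final}}$, which completes the proof. The main obstacle, as flagged above, is ruling out the ``voted out while trying to join'' scenario cleanly — I expect the resolution is that the $(+,r) \in \mathit{source}$ precondition in the $\mathtt{valid}$ function means any $(-,r)$ change must be preceded by an installable view already containing $(+,r)$, so even in that case a suitable $v_r$ exists (namely the largest installable view with $(+,r) \in v_r$ and $(-,r)\notin v_r$), and one shows such a view is forever-alive via \Cref{lemma:forever_alive_all_unskippable} and hence $\subseteq v_{\mathit{final}}$.
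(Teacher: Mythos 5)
Your first half is essentially the paper's argument, but it elides a necessary step: the gossip validity property only guarantees delivery when the \emph{sender} is forever-correct, so before you can conclude that a member of $v_{\mathit{final}}$ ever adds $(+, r)$ to its $\mathit{reconfiguration}.\mathit{requested}$ set, you must establish that $r$ never halts. The paper does this inside the contradiction: assuming $(+, r) \notin v_{\mathit{final}}$, server $r$ cannot join (by \Cref{lemma:current_view_forever_alive} and maximality of $v_{\mathit{final}}$), hence cannot leave, hence never halts by the server's rules, and only then is \Cref{lemma:plus_in_final} applicable.

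The more serious problem is the second half. Your main line attempts to prove $(-, r) \notin v_{\mathit{final}}$, but that is false in general: a correct server that requests to join can legitimately be voted out afterwards, in which case $(-, r) \in v_{\mathit{final}}$ and the lemma must still produce a \emph{smaller} installable witness. Your attempt to exclude the removal-voting-proof case by appealing to ``the intended reading'' has no basis in the model, and your fallback, while pointing in the right direction, is left as a conjecture with the wrong supporting citation: \Cref{lemma:forever_alive_all_unskippable} derives forever-aliveness \emph{from} the hypothesis $v^* \subseteq v_{\mathit{final}}$, so it cannot be used to establish that containment, and forever-aliveness is not even required by the statement being proved. The paper's actual argument is a downward recursion: if $(-, r) \in v_{\mathit{final}}$, let $v_i$ be the greatest installable view below $v_{\mathit{final}}$; by \Cref{lemma:installable_leads_to_first_biggest} it leads to $v_{\mathit{final}}$, so some $\mathit{seq}_i \in \rho_{\infty}(v_i)$ ends in $v_{\mathit{final}}$, and the membership validity property of $\mathit{vg}(v_i)$ forces $(+, r) \in v_i$; one then repeats with $v_i$ in place of $v_{\mathit{final}}$, and the recursion terminates because there are finitely many valid views, $\mathit{genesis}$ is the smallest installable one, and $(+, r) \notin \mathit{genesis}$ (since $r$ requests to join). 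The first view in this chain lacking $(-, r)$ is the required $v_r$, and it is automatically contained in $v_{\mathit{final}}$ by construction. You would need to carry out this recursion explicitly rather than gesture at it.
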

\begin{proof}
Since $r$ requests to join, $r \notin \mathit{genesis}.\mathtt{members()}$ (see \Cref{subsection:carbon_properties}, paragraph ``Server's rules'').
Therefore, $(+, r) \notin \mathit{genesis}$.
First, we prove that $(+, r) \in v_{\mathit{final}}$.
By contradiction, assume that $(+, r) \notin v_{\mathit{final}}$.
Hence, $r$ does not join (by \Cref{lemma:current_view_forever_alive} and the fact that $v_{\mathit{final}}$ is the greatest forever-alive view), which implies that $r$ does not leave.
Hence, $r$ never halts (see \Cref{subsection:carbon_properties}, paragraph ``Server's rules''), which means that $r$ is forever-correct.

Since $r$ is forever-correct, the $\mathtt{JOIN}$ message gossiped by $r$ (at line~\ref{line:gossip_join} of \Cref{lst:reconfiguration_joining}) is received by a correct member of $v_\mathit{final}$ (due to the validity property of the gossip primitive).
Hence, $(+, r)$ is included into the $\mathit{reconfiguration}.\mathit{requested}$ variable of that server (due to the fact that $r$ broadcasts the voting proof; see \Cref{subsection:carbon_properties}, paragraph ``Server's rules'').
By \Cref{lemma:plus_in_final}, $(+, r) \in v_{\mathit{final}}$, which represents the contradiction with $(+, r) \notin v_{\mathit{final}}$.
Therefore, our starting assumption was not correct, hence $(+, r) \in v_{\mathit{final}}$.

If $(-, r) \notin v_{\mathit{final}}$, the lemma holds since $\alpha_{\infty}(v_{\mathit{final}}) = \top$ (by \Cref{lemma:v_final_installable}).

Hence, let $(-, r) \in v_{\mathit{final}}$.
In this case, $v_{\mathit{final}} \neq \mathit{genesis}$ (since $\mathit{genesis}$ does not contain ``negative'' changes; see \Cref{appx:model}, paragraph ``Constants'').
Therefore, there exists a view $v_i$ such that $v_i$ is the greatest view smaller than $v_{\mathit{final}}$ that is installable at time $\infty$.
By \Cref{lemma:installable_leads_to_first_biggest}, $v_i$ leads to $v_{\mathit{final}}$ at time $\infty$.
Furthermore, $\mathit{seq}_i \in \rho_{\infty}(v_i)$, where $\mathit{seq}_i = ... \to v_{\mathit{final}}$ (by \Cref{definition:leads}).
By the membership validity property of $\mathit{vg}(v_i)$ (since $\mathit{preconditions}(v_i, \infty) = \top$), $(+, r) \in v_i$.
Therefore, we reach the point we reached with $v_{\mathit{final}}$: (1) $\alpha_{\infty}(v_i) = \top$, and (2) $(+, r) \in v_i$.
The recursion eventually stops due to (1) the fact that there are only finitely many valid views (by \Cref{lemma:finite_valid_views}), which implies existence of only finitely many views installable at time $\infty$, (2) $\mathit{genesis}$ is the smallest installable view (since $\mathit{genesis}$ is installable by definition and $\mathit{genesis}$ is the smallest valid view, by \Cref{lemma:genesis_smallest_valid}), and (3) $(+, r) \notin \mathit{genesis}$.
Therefore, the lemma holds.
\end{proof}

Next, we prove that $(+, r), (-, r) \in v_{\mathit{final}}$, where $r$ is a correct server that requested to leave.

\begin{lemma} \label{lemma:leave_validity}
Let a correct server $r$ request to leave.
Then, $(+, r), (-, r) \in v_{\mathit{final}}$.
\end{lemma}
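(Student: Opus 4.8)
The plan is to mirror the structure of the proof of \Cref{lemma:join_validity}: first establish $(+, r) \in v_{\mathit{final}}$, and then, using that fact, establish $(-, r) \in v_{\mathit{final}}$. For the first part I would split on whether $r$ is a genesis member. If $r \in \mathit{genesis}.\mathtt{members()}$, then $(+, r) \in \mathit{genesis}$; since $\mathit{genesis}$ is the smallest valid view (\Cref{lemma:genesis_smallest_valid}) and $v_{\mathit{final}}$ is valid and comparable to it (\Cref{lemma:all_comparable}), $\mathit{genesis} \subseteq v_{\mathit{final}}$ and hence $(+, r) \in v_{\mathit{final}}$. Otherwise, since $r$ requested to leave it has previously joined (server rules), and a non-genesis server triggers $\mathtt{joined}$ only after an explicit join request; thus \Cref{lemma:join_validity} applies and yields a view $v_r \subseteq v_{\mathit{final}}$ with $\alpha_{\infty}(v_r) = \top$ and $r \in v_r.\mathtt{members()}$, so $(+, r) \in v_r$ and therefore $(+, r) \in v_{\mathit{final}}$.

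For the second part I would argue by contradiction, assuming $(-, r) \notin v_{\mathit{final}}$ (while $(+, r) \in v_{\mathit{final}}$ from the first part). The first step is to show that $r$ never leaves. Suppose it does, via line~\ref{line:left}; then $r$ reaches that line with $\mathit{reconfiguration}.\mathit{destination} = v$, a valid view (it lies in $r$'s $\mathit{history}$, hence valid by \Cref{lemma:history_valid}), satisfying $\mathit{current\_view}.\mathit{view} \subsetneq v$. Since $\mathit{current\_view}.\mathit{view}$ remains a view whose member set contains $r$ as long as $r$ has not left, we get $(+, r) \in v$, and $r \notin v.\mathtt{members()}$ then forces $(-, r) \in v$. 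Now I trace the dischargement handshake that must complete before the wait at line~\ref{line:wait_for}: for $\mathit{reconfiguration}.\mathit{discharged}$ to become $\top$ (line~\ref{line:discharged_rule}), $r$ received $[\mathtt{DISCHARGEMENT\text{-}CONFIRM}]$ for some valid view $v^* \supseteq v$ from a quorum of members of $v^*$, so by quorum intersection with the correct members of $v^*$ (as in earlier proofs) some correct $c \in v^*.\mathtt{members()}$ had $\mathit{current\_view}.\mathit{view} = v^*$ at the moment it replied; by \Cref{lemma:current_view_forever_alive}, a view $\supseteq v^*$ is forever-alive. Comparing $v^*$ with $v_{\mathit{final}}$: if $v^* \subseteq v_{\mathit{final}}$ then $(-, r) \in v^* \subseteq v_{\mathit{final}}$, contradicting the assumption; if $v^* \supsetneq v_{\mathit{final}}$ then a view strictly larger than $v_{\mathit{final}}$ is forever-alive, contradicting the maximality of $v_{\mathit{final}}$. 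Either way we reach a contradiction, so $r$ does not leave.

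Since $r$ requested to leave and never leaves, $r$ never halts (server rules), so $r$ is forever-correct; hence the $[\mathtt{LEAVE}]$ message it gossiped at line~\ref{line:gossip_leave} is, by gossip validity, received by every forever-correct process. Choosing any correct $c \in v_{\mathit{final}}.\mathtt{members()}$, by \Cref{theorem:finality} $c$ never leaves and, as a member of $v_{\mathit{final}}$, never halts, so $c$ is forever-correct and eventually receives $[\mathtt{LEAVE}]$, inserting $(-, r)$ into its $\mathit{reconfiguration}.\mathit{requested}$ at lines~\ref{line:leave_message}--\ref{line:store_leave_message}. Applying \Cref{lemma:minus_in_final} with this $c$ and the already established $(+, r) \in v_{\mathit{final}}$ gives $(-, r) \in v_{\mathit{final}}$, contradicting our assumption; hence $(-, r) \in v_{\mathit{final}}$.

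I expect the main obstacle to be the step showing that $(-, r) \notin v_{\mathit{final}}$ forces $r$ not to leave: it requires carefully unwinding the dischargement handshake (the code around lines~\ref{line:updated_discovery_exists}, \ref{line:wait_for}, \ref{line:discharged_rule}) to extract a correct server that advanced its current view to a valid view $v^* \supseteq$ the leave destination, followed by a clean comparability case split against $v_{\mathit{final}}$. A minor additional point is the bookkeeping for genesis members, for whom the "requests to leave $\Rightarrow$ never halts before leaving" fact must be justified without invoking the join-request rule directly.
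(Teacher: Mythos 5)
Your proof is correct and follows essentially the same route as the paper: establish $(+, r) \in v_{\mathit{final}}$ from the fact that $r$ has joined, then assume $(-, r) \notin v_{\mathit{final}}$, deduce that $r$ never leaves and is hence forever-correct, and let the gossiped $\mathtt{LEAVE}$ message reach a correct member of $v_{\mathit{final}}$ so that \Cref{lemma:minus_in_final} yields the contradiction. The only difference is that you re-derive inline what the paper obtains by directly citing \Cref{lemma:current_view_forever_alive} (for the first part, without the genesis/\Cref{lemma:join_validity} detour) and \Cref{lemma:install_message_forever_alive} (for the ``if $r$ leaves, its destination view is $\subseteq v_{\mathit{final}}$'' step, which is exactly your dischargement-handshake unwinding packaged as a lemma).
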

\begin{proof}
Since $r$ requests to leave, $r$ has previously joined (see \Cref{subsection:carbon_properties}, paragraph ``Server's rules'').
Hence, $(+, r) \in v_{\mathit{final}}$ (by \Cref{lemma:current_view_forever_alive} and the fact that $v_{\mathit{final}}$ is the greatest forever-alive view).
By contradiction, let $(-, r) \notin v_{\mathit{final}}$.
Thus, $r$ never leaves (by \Cref{lemma:install_message_forever_alive} and the fact that $v_{\mathit{final}}$ is the greatest forever-alive view), which means that $r$ is forever-correct.

Since $r$ is forever-correct, the $\mathtt{LEAVE}$ message gossiped by $r$ (at line~\ref{line:gossip_leave} of \Cref{lst:reconfiguration_leaving}) is received by a correct member of $v_{\mathit{final}}$ (due to the validity property of the gossip primitive; see \Cref{appx:model}).
Hence, $(-, r)$ is included into the $\mathit{reconfiguration}.\mathit{requested}$ variable of that server.
By \Cref{lemma:minus_in_final}, $(-, r) \in v_{\mathit{final}}$, which represents the contradiction with $(-, r) \notin v_{\mathit{final}}$.
Therefore, our starting assumption was not correct, hence $(-, r) \in v_{\mathit{final}}$.
The lemma holds.
\end{proof}

Before we are able to prove the join and leave liveness defined in \Cref{subsection:carbon_properties}, we show that a correct server eventually ``discharges'' a view.

\begin{lemma} [Eventual Dischargement] \label{lemma:eventual_dischargement} 
Let a correct server $r$ enter the pseudocode at line~\ref{line:updated_discovery_exists} of \Cref{lst:reconfiguration_view_transition}.
Eventually,\\$\mathit{reconfiguration}.\mathit{discharged} = \top$ at server $r$.
\end{lemma}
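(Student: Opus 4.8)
The plan is a proof by contradiction: assume that after $r$ enters the pseudocode at line~\ref{line:updated_discovery_exists} of \Cref{lst:reconfiguration_view_transition} — setting $\mathit{reconfiguration}.\mathit{destination}$ to some view, call it $v$ — the flag $\mathit{reconfiguration}.\mathit{discharged}$ is never set to $\top$ at $r$. First I would dispatch the trivial case: if $r \in v.\mathtt{members()}$, which in particular covers entering with $\mathit{joined} = \bot$ (the first disjunct of the triggering rule forces $r \in v.\mathtt{members()}$ for the destination $v$), then $r$ sets $\mathit{reconfiguration}.\mathit{discharged} = \top$ on the next line of that block, a contradiction. So I may assume $r$ entered with $\mathit{joined} = \top$ and $r \notin v.\mathtt{members()}$; then $r$ sets $\mathit{dischargement}.\mathit{dischargement\_view} = v$ and gossips $[\mathtt{DISCHARGEMENT-REQUEST}, v]$. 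Since $\mathit{reconfiguration}.\mathit{discharged}$ never becomes $\top$, $r$ remains forever at the \textbf{wait until} of line~\ref{line:wait_for}, hence never reaches line~\ref{line:left}, never leaves, and — a server halts only upon leaving — never halts; thus $r$ is forever-correct. Moreover $\mathit{dischargement}.\mathit{dischargement\_view}$ stays $v$ forever and $\mathit{dischargement}.\mathit{dischargements}$ is only ever extended, since all resets of these variables lie past the blocked wait.

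Next I would pin down the view to target. Because $r$ entered the pseudocode with $\mathit{joined} = \top$, Lemma~\ref{lemma:install_message_forever_alive} yields a forever-alive view $v' \supseteq v$, and since $v_{\mathit{final}}$ is the greatest forever-alive view, $v \subseteq v' \subseteq v_{\mathit{final}}$; being forever-alive, $v_{\mathit{final}}$ is valid, so the failure model provides at least $v_{\mathit{final}}.\mathtt{quorum()}$ correct members of $v_{\mathit{final}}$. By finality (Theorem~\ref{theorem:finality}), every correct member of $v_{\mathit{final}}$ eventually and permanently has $\mathit{joined} = \top$ and $\mathit{current\_view}.\mathit{view} = v_{\mathit{final}}$, and never leaves, hence is forever-correct. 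Since $r$ is forever-correct and gossips $[\mathtt{DISCHARGEMENT-REQUEST}, v]$, gossip validity ensures each such member receives that request; once finality has taken effect, the guard $\mathit{joined} = \top \land \mathit{current\_view}.\mathit{view} \supseteq v$ of the first rule of \Cref{lst:reconfiguration_view_dischargement} is permanently enabled, so each of the $\geq v_{\mathit{final}}.\mathtt{quorum()}$ correct members replies to $r$ with $[\mathtt{DISCHARGEMENT-CONFIRM}, v_{\mathit{final}}]$.

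Finally I would collect these at $r$. Perfect-links validity (between forever-correct processes) delivers all those $[\mathtt{DISCHARGEMENT-CONFIRM}, v_{\mathit{final}}]$ messages to $r$; to let $r$ act on them via the second rule of \Cref{lst:reconfiguration_view_dischargement}, I also need $v_{\mathit{final}} \in \mathit{history}$ at $r$, which holds because $v_{\mathit{final}}$ is forever-alive: a forever-correct process gossips a view-path to $v_{\mathit{final}}$, $r$ receives every message on it, and the storage-module rule at line~\ref{line:update_history_rule} of \Cref{lst:storage_processing} processes them in order starting from $\mathit{genesis}$, eventually inserting $v_{\mathit{final}}$ into $\mathit{history}$. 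Each confirmation's sender lies in $v_{\mathit{final}}.\mathtt{members()}$, so $\mathit{dischargement}.\mathit{dischargements}[v_{\mathit{final}}]$ eventually reaches size $\geq v_{\mathit{final}}.\mathtt{quorum()}$; together with $\mathit{dischargement}.\mathit{dischargement\_view} = v \subseteq v_{\mathit{final}}$, this activates the rule at line~\ref{line:discharged_rule} of \Cref{lst:reconfiguration_view_dischargement}, which sets $\mathit{reconfiguration}.\mathit{discharged} = \top$ at $r$ — the desired contradiction.

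The steps needing the most care are (i) concluding that $r$ is forever-correct purely from the contradiction hypothesis, so that its single gossip of the request actually propagates to the correct members of $v_{\mathit{final}}$, and (ii) simultaneously securing both a full quorum of $v_{\mathit{final}}$-labelled confirmations and the membership $v_{\mathit{final}} \in \mathit{history}$ at $r$ by interleaving finality with gossip validity, perfect links, and history propagation. I do not expect a deep obstacle here; in particular, the halting/recursion bookkeeping used in the proof of Lemma~\ref{lemma:eventually_prepared} is sidestepped by working directly with $v_{\mathit{final}}$, whose correct members finality keeps alive in place.
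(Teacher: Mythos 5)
Your proposal is correct and follows essentially the same route as the paper's proof: contradiction, deducing that $r$ is forever-correct from the blocked wait, invoking Lemma~\ref{lemma:install_message_forever_alive} to get $v \subseteq v_{\mathit{final}}$, and then combining gossip validity with the finality property to collect a quorum of $[\mathtt{DISCHARGEMENT\text{-}CONFIRM}, v_{\mathit{final}}]$ messages that activate the rule at line~\ref{line:discharged_rule} of \Cref{lst:reconfiguration_view_dischargement}. You are merely more explicit than the paper about the trivial case $r \in v.\mathtt{members()}$ and about why $v_{\mathit{final}} \in \mathit{history}$ at $r$.
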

\begin{proof}
By contradiction, suppose that the $\mathit{reconfiguration}.\mathit{discharged}$ variable is not set to $\top$ at server $r$.
Hence, $r$ never halts (since it never leaves), which makes $r$ forever-correct.
Then, $[\mathtt{INSTALL}, v', \mathit{seq} = v \to ..., \omega] \in \mathit{install\_messages}$ at server $r$, where $r \notin v.\mathtt{members}()$.
By \Cref{lemma:install_message_forever_alive}, a view $v^* \supseteq v$ is forever-alive.
Since $v^* \subseteq v_{\mathit{final}}$, $v \subseteq v_{\mathit{final}}$.

Since $r$ is forever-correct, the $\mathtt{DISCHARGEMENT-REQUEST}$ gossiped by $r$ (at line~\ref{line:gossip_dischargement_request} of \Cref{lst:reconfiguration_view_transition}) is received by every member of $v_{\mathit{final}}$ while their $\mathit{current\_view}.\mathit{view}$ variable is equal to $v_{\mathit{final}}$ (holds because of the validity property of the gossip primitive and the finality property).
Moreover, $r$ eventually includes $v_{\mathit{final}}$ into its $\mathit{history}$ variable (since $r$ is forever-correct).
Therefore, $r$ eventually receives $[\mathtt{DISCHARGEMENT-CONFIRM}, v_{\mathit{final}}]$ messages from all correct members of $v_{\mathit{final}}$.
Once that happens, the rule at line~\ref{line:discharged_rule} of \Cref{lst:reconfiguration_view_dischargement} is active and $r$ sets its $\mathit{reconfiguration}.\mathit{discharged}$ variable to $\top$.
Thus, the lemma holds.
\end{proof}

Finally, we are ready to prove the join liveness property defined in \Cref{subsection:carbon_properties}.

\begin{theorem} [Join Liveness] \label{theorem:join_liveness}
Join liveness is satisfied.
\end{theorem}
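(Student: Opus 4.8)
The plan is to assemble the liveness argument from the chain of lemmas already established about forever-alive views, so that essentially no new case analysis is required. First I would observe that, by the server's rules (\Cref{subsection:carbon_properties}, paragraph ``Server's rules''), a correct server $r$ that requests to join is not a member of $\mathit{genesis}$; hence $r$ cannot join at line~\ref{line:join_init} of \Cref{lst:reconfiguration_initialization} and must instead join through the view-transition pseudocode of \Cref{lst:reconfiguration_view_transition}. Applying \Cref{lemma:join_validity} to $r$ yields a view $v_r \subseteq v_{\mathit{final}}$ with $\alpha_{\infty}(v_r) = \top$ and $r \in v_r.\mathtt{members}()$. Since $v_r$ is installable at time $\infty$ and contained in $v_{\mathit{final}}$, \Cref{lemma:forever_alive_all_unskippable} makes $v_r$ forever-alive.

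Next I would invoke \Cref{lemma:unskippable_forever_alive_all_go} on $v_r$: because $v_r$ is simultaneously forever-alive and installable, and $r$ is one of its correct members, $r$ eventually sets $\mathit{current\_view}.\mathit{view}$ to $v_r$. The key point to extract is that reaching this assignment (line~\ref{line:update_current_view} of \Cref{lst:reconfiguration_view_transition}, which executes because $v_r \neq \mathit{genesis}$ and $r \in v_r.\mathtt{members}()$) forces $r$ to have traversed the block entered at line~\ref{line:updated_discovery_exists}, in particular to have passed the $\mathtt{wait}$ $\mathtt{until}$ at line~\ref{line:wait_for}; and immediately afterwards, if $\mathit{joined}$ is still $\bot$, server $r$ sets $\mathit{joined} = \top$ and triggers the special $\mathtt{joined}$ event at line~\ref{line:joined}. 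Thus, either $r$ had already joined on some earlier (necessarily smaller) view transition into a view containing $r$, or it joins exactly at the transition into $v_r$; in either case $r$ eventually joins, which is what the theorem asserts. (Indeed the proof of \Cref{lemma:unskippable_forever_alive_all_go} itself already argues that $r$ joins before setting $\mathit{current\_view}.\mathit{view}$ to $v_r$, so the theorem can be stated as a direct corollary of that lemma together with \Cref{lemma:join_validity,lemma:forever_alive_all_unskippable}.)

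The main obstacle, which is however already discharged by earlier results, is ensuring that $r$ never stalls at the $\mathtt{wait}$ $\mathtt{until}$ barrier of line~\ref{line:wait_for} on its path to $v_r$, i.e., that each time $r$ enters the transition body both $\mathit{reconfiguration}.\mathit{prepared}$ and $\mathit{reconfiguration}.\mathit{discharged}$ eventually become $\top$. This is precisely what \Cref{lemma:eventually_prepared} (eventual preparedness, relying on state transfer from a quorum of the source view or from a correct member of a greater forever-alive view) and \Cref{lemma:eventual_dischargement} (eventual dischargement, relying on $\mathtt{DISCHARGEMENT\text{-}CONFIRM}$ replies from a quorum of a destination view) guarantee. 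Consequently the only work left in the proof is to chain these two lemmas with \Cref{lemma:unskippable_forever_alive_all_go} and note that the $\mathtt{joined}$ trigger fires on the first view transition into a view that contains $r$; no further argument is needed.
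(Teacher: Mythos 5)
Your proof is correct and follows essentially the same route as the paper's: \Cref{lemma:join_validity} gives an installable view $v_r \subseteq v_{\mathit{final}}$ containing $r$, \Cref{lemma:forever_alive_all_unskippable} makes $v_r$ forever-alive, and \Cref{lemma:unskippable_forever_alive_all_go} (whose proof already establishes that $r$ joins) concludes. The only superfluous detail is the appeal to \Cref{lemma:eventual_dischargement}: on the join path the destination views all contain $r$, so $\mathit{reconfiguration}.\mathit{discharged}$ is set to $\top$ directly without the dischargement subprotocol.
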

\begin{proof}
Let $r$ be a correct server that requests to join.
By \Cref{lemma:join_validity}, there exists a view $v_r \subseteq v_{\mathit{final}}$ such that (1) $\alpha_{\infty}(v_r) = \top$, and (2) $r \in v_r.\mathtt{members()}$.
By \Cref{lemma:forever_alive_all_unskippable}, $v_r$ is forever-alive.
Then, $r$ joins due to \Cref{lemma:unskippable_forever_alive_all_go}.
\end{proof}

Next, we prove the leave liveness property defined in \Cref{subsection:carbon_properties}.

\begin{theorem} [Leave Liveness] \label{theorem:leave_liveness}
Leave liveness is satisfied.
\end{theorem}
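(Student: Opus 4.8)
The plan is to mirror the proof of \Cref{theorem:join_liveness}, now driven by \Cref{lemma:leave_validity} instead of \Cref{lemma:join_validity}. Let $r$ be a correct server that requests to leave. By the server's rules (see \Cref{subsection:carbon_properties}), $r$ has previously joined; hence $r$ has set its $\mathit{current\_view}.\mathit{view}$ variable to some view of which it is a member, and $r$ does not halt before it leaves. By \Cref{lemma:leave_validity} we have $(+, r), (-, r) \in v_{\mathit{final}}$, so $r \notin v_{\mathit{final}}.\mathtt{members()}$ and $v_{\mathit{final}} \neq \mathit{genesis}$ (the $\mathit{genesis}$ view contains no negative changes). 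The goal is to show that $r$ eventually triggers the special $\mathtt{left}$ event at line~\ref{line:left} of \Cref{lst:reconfiguration_view_transition}.

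I would argue by contradiction: assume $r$ never leaves. Then $r$ never halts, so $r$ is forever-correct, and since $v_{\mathit{final}}$ is forever-alive the validity of the gossip primitive guarantees that $r$ eventually obtains a view-path to $v_{\mathit{final}}$ and inserts $v_{\mathit{final}}$, together with all views on that path, into $\mathit{history}$ and $\mathit{install\_messages}$. First I would observe that, once $r$ has joined, its $\mathit{current\_view}.\mathit{view}$ variable is always \emph{strictly} smaller than $v_{\mathit{final}}$: a server's current view $v$ always satisfies $r \in v.\mathtt{members()}$ (it is set only at lines~\ref{line:set_current_to_genesis} and~\ref{line:update_current_view}, both guarded by membership), so it can equal $v_{\mathit{final}}$ only if $r \in v_{\mathit{final}}.\mathtt{members()}$; and by \Cref{lemma:current_view_forever_alive} a current view strictly larger than $v_{\mathit{final}}$ would yield a forever-alive view strictly larger than $v_{\mathit{final}}$, contradicting maximality of $v_{\mathit{final}}$.

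The core of the argument, following the reasoning in the proofs of \Cref{lemma:unskippable_forever_alive_all_go} and \Cref{lemma:forever_alive_current_view_or_bigger}, is then to show that $r$ is \emph{forced} into a view transition whose destination view does not contain it. Since $r$ holds a view-path to $v_{\mathit{final}}$ and its current view lies strictly below $v_{\mathit{final}}$, the rule at line~\ref{line:updated_discovery_exists} of \Cref{lst:reconfiguration_view_transition} (branch $\mathit{joined} = \top$) stays active; each time it fires with a destination still containing $r$, the transition completes by \Cref{lemma:eventually_prepared} and \Cref{lemma:eventual_dischargement} and advances $r$'s current view to a strictly larger valid view still below $v_{\mathit{final}}$. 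As only finitely many valid views exist (\Cref{lemma:finite_valid_views}) and $r \notin v_{\mathit{final}}.\mathtt{members()}$ while $r$ is a member of its current view, this process must eventually reach a transition with $\mathit{reconfiguration}.\mathit{destination} = v'$ such that $r \notin v'.\mathtt{members()}$ --- concretely, the ``boundary'' $\mathtt{INSTALL}$ message on the path to $v_{\mathit{final}}$ at which $(-, r)$ first appears, which cannot be skipped by \Cref{lemma:all_installable_in_path} and \Cref{lemma:install_skip_installable}. At that transition $\mathit{reconfiguration}.\mathit{prepared}$ is set at line~\ref{line:prepared_free}, $r$ gossips $\mathtt{DISCHARGEMENT}$-$\mathtt{REQUEST}$ and, by \Cref{lemma:eventual_dischargement}, eventually sets $\mathit{reconfiguration}.\mathit{discharged} = \top$; so $r$ passes the \textbf{wait until} at line~\ref{line:wait_for}, reaches the check at line~\ref{line:check_members_current_view} with $r \notin \mathit{reconfiguration}.\mathit{destination}.\mathtt{members()}$, and triggers $\mathtt{left}$ --- contradicting the assumption. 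The main obstacle I expect is exactly this last point: rigorously pinning down that the transition $r$ is compelled to take has a destination (and tail) excluding $r$, which needs the ``unskippability of installable views'' machinery (\Cref{lemma:installable_in_path}, \Cref{lemma:all_installable_in_path}, \Cref{lemma:install_skip_installable}) to rule out $r$ stalling forever at an intermediate view, together with liveness of the view generator for any non-installable views $r$ passes through.
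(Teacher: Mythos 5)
Your proposal is correct and follows essentially the same route as the paper's proof: contradiction via forever-correctness of $r$, \Cref{lemma:leave_validity} to place $(+,r),(-,r)$ in $v_{\mathit{final}}$, and \Cref{lemma:eventually_prepared}/\Cref{lemma:eventual_dischargement} to force a transition whose destination excludes $r$. The paper packages the "chain of transitions" step more economically by taking $v_r$ to be the \emph{final} value of $\mathit{current\_view}.\mathit{view}$ (well-defined by \Cref{lemma:finite_valid_views}) and deriving the contradiction from a single further transition, but the substance is identical.
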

\begin{proof}
Let $r$ be a correct server that requests to leave.
Therefore, $r$ has previously joined (see \Cref{subsection:carbon_properties}, paragraph ``Server's rules'').
By \Cref{lemma:leave_validity}, $(+, r), (-, r) \in v_{\mathit{final}}$; hence, $r \notin v_{\mathit{final}}.\mathtt{members()}$.

By contradiction, suppose that $r$ never leaves.
Hence, $r$ never halts, which means that $r$ is forever-correct.
Since $r$ has joined, the $\mathit{current\_view}.\mathit{view}$ variable at $r$ is not equal to $\bot$; let $\mathit{current\_view}.\mathit{view}$ variable have its final value $v_r$ (the final value is indeed defined since $\mathit{current\_view}.\mathit{view}$ variable of a correct server is set only to valid views, it only ``grows'' and there exist finitely many valid views, by \Cref{lemma:finite_valid_views}).
By \Cref{lemma:current_view_forever_alive}, $v_r \subseteq v_{\mathit{final}}$.
Since $r \in v_r.\mathtt{members()}$ and $r \notin v_{\mathit{final}}.\mathtt{members()}$, $v_r \subset v_{\mathit{final}}$. 

Since $r$ is forever-correct, $r$ eventually enters the pseudocode at line~\ref{line:updated_discovery_exists} of \Cref{lst:reconfiguration_view_transition} due to the fact that $r$ eventually obtains a view-path to $v_{\mathit{final}}$ and $v_r \subset v_{\mathit{final}}$.
Let the ``destination'' view be $v_r'$.
If $r \notin v_r'.\mathtt{members()}$, then $r$ eventually leaves (by \Cref{lemma:eventual_dischargement}).
This is impossible due to the fact that $r$ never leaves.
Hence, $r \in v_r'.\mathtt{members()}$.
Therefore, $r$ updates its $\mathit{current\_view}.\mathit{view}$ variable to $v_r' \supset v_r$ (by \Cref{lemma:eventually_prepared}).
Thus, we reach contradiction with the fact that $v_r$ is the final value of $\mathit{current\_view}.\mathit{view}$ at server $r$, which means that our starting assumption was not correct and $r$ leaves.
\end{proof}

Lastly, we prove the removal liveness property defined in \Cref{subsection:carbon_properties}.
In order to prove the removal liveness property, we prove that a correct forever-validator is a member of $v_{\mathit{final}}$.

\begin{lemma} \label{lemma:forever_validator_in_final}
Let $r$ be a correct forever-validator.
Then, $r \in v_{\mathit{final}}.\mathtt{members()}$.
\end{lemma}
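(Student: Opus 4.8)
The plan is to show something slightly stronger than the statement: that the value of $\mathit{current\_view}.\mathit{view}$ at $r$ eventually stabilizes at $v_{\mathit{final}}$ itself. First I would record the easy observations. Since $r$ is a correct forever-validator it joins and never leaves, and being correct it executes the $\mathtt{stop}$ command only after leaving (line~\ref{line:left} of \Cref{lst:reconfiguration_view_transition}), so $r$ never halts, i.e.\ $r$ is forever-correct. Because $r$ joins, it sets $\mathit{current\_view}.\mathit{view}$ to some valid view with $r$ among its members --- either $\mathit{genesis}$ at initialization (line~\ref{line:set_current_to_genesis} of \Cref{lst:reconfiguration_initialization}), or a reconfiguration destination at line~\ref{line:update_current_view} of \Cref{lst:reconfiguration_view_transition}, in which case $r \in \mathit{reconfiguration}.\mathit{destination}.\mathtt{members()}$ since otherwise $r$ would trigger $\mathtt{left}$ at line~\ref{line:left}. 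Thereafter $\mathit{current\_view}.\mathit{view}$ is only ever updated to strictly greater valid views, and at each value $r$ is a member; since there are only finitely many valid views (\Cref{lemma:finite_valid_views}), this value stabilizes at some valid view $v^{*}$ with $r \in v^{*}.\mathtt{members()}$.

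Next I would argue $v^{*} = v_{\mathit{final}}$, which immediately yields $r \in v_{\mathit{final}}.\mathtt{members()}$. By \Cref{lemma:current_view_forever_alive}, some view $v' \supseteq v^{*}$ is forever-alive; by view comparability (\Cref{theorem:view_comparability}) and maximality of $v_{\mathit{final}}$ this gives $v^{*} \subseteq v' \subseteq v_{\mathit{final}}$. Suppose for contradiction $v^{*} \subsetneq v_{\mathit{final}}$. Then $v_{\mathit{final}} \neq \mathit{genesis}$ (as $\mathit{genesis} \subseteq v^{*}$ by \Cref{lemma:genesis_smallest_valid}), so by \Cref{definition:forever_alive_view} there is a forever-correct process that obtains a view-path to $v_{\mathit{final}}$ and gossips all its messages; by the validity property of gossip $r$ eventually obtains this view-path, and hence $v_{\mathit{final}} \in \mathit{history}$ at $r$.

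Now I would run the same contradiction as in the proof of \Cref{theorem:leave_liveness}. Since $\mathit{current\_view}.\mathit{view} = v^{*} \subsetneq v_{\mathit{final}}$ and $v_{\mathit{final}} \in \mathit{history}$ at the forever-correct server $r$, the rule at line~\ref{line:updated_discovery_exists} of \Cref{lst:reconfiguration_view_transition} eventually becomes active at $r$ with some destination view $v'' \supsetneq v^{*}$: the received view-path to $v_{\mathit{final}}$ supplies, after tracing how $\mathit{source}$ is populated at line~\ref{line:update_history_rule} onward of \Cref{lst:storage_processing}, a view $v'' \in \mathit{history}$ with $v^{*} \subsetneq v''$ and $\mathit{source}[v''] \subseteq v^{*}$. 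If $r \notin v''.\mathtt{members()}$, then $\mathit{reconfiguration}.\mathit{prepared}$ is set at line~\ref{line:prepared_free} and $\mathit{reconfiguration}.\mathit{discharged}$ is set by \Cref{lemma:eventual_dischargement}, so $r$ clears the wait at line~\ref{line:wait_for} and triggers $\mathtt{left}$ at line~\ref{line:left}, contradicting that $r$ never leaves. If $r \in v''.\mathtt{members()}$, then $\mathit{reconfiguration}.\mathit{discharged}$ is set immediately and $\mathit{reconfiguration}.\mathit{prepared}$ is set by \Cref{lemma:eventually_prepared}, so $r$ clears the wait and updates $\mathit{current\_view}.\mathit{view}$ to $v'' \supsetneq v^{*}$ at line~\ref{line:update_current_view}, contradicting that $v^{*}$ is its final value. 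In either case we reach a contradiction, so $v^{*} = v_{\mathit{final}}$ and thus $r \in v_{\mathit{final}}.\mathtt{members()}$.

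The main obstacle is the bookkeeping hidden in the last step, namely confirming that the guard at line~\ref{line:updated_discovery_exists} really does eventually fire with a destination strictly above $v^{*}$: this needs the precise way $\mathit{source}$ is assigned as an $\mathtt{INSTALL}$ message is processed, together with comparability of valid views to locate $v^{*}$ among the views of the received view-path. This is exactly the step left unspelled in the proof of \Cref{theorem:leave_liveness}, and I would discharge it by the same reasoning rather than repeating it in full.
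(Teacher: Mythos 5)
Your proof is correct and follows essentially the same route as the paper's: identify the final value of $\mathit{current\_view}.\mathit{view}$ at $r$, bound it by $v_{\mathit{final}}$ via \Cref{lemma:current_view_forever_alive}, and derive a contradiction from the fact that a forever-correct $r$ with a strictly smaller current view would eventually enter the transition at line~\ref{line:updated_discovery_exists} and either leave (by \Cref{lemma:eventual_dischargement}) or advance its view (by \Cref{lemma:eventually_prepared}). The only cosmetic difference is that the paper frames the contradiction as ruling out $(-, r) \in v_{\mathit{final}}$ after first establishing $(+, r) \in v_{\mathit{final}}$, whereas you argue directly that the current view stabilizes at $v_{\mathit{final}}$; the underlying machinery is identical.
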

\begin{proof}
Since $r$ is correct forever-validator, $r$ has joined (see \Cref{subsection:carbon_properties}, paragraph ``Validators'').
By \Cref{lemma:current_view_forever_alive}, $(+, r) \in v_{\mathit{final}}$.
By contradiction, suppose that $(-, r) \in v_{\mathit{final}}$.

Since $r$ has joined, the $\mathit{current\_view}.\mathit{view}$ variable at $r$ is not equal to $\bot$; let $\mathit{current\_view}.\mathit{view}$ variable have its final value $v_r$.
By \Cref{lemma:current_view_forever_alive}, $v_r \subseteq v_{\mathit{final}}$.
Since $r \in v_r.\mathtt{members()}$ and $r \notin v_{\mathit{final}}.\mathtt{members()}$, $v_r \subset v_{\mathit{final}}$. 

Since $r$ is forever-correct (because it does not leave), $r$ eventually enters the pseudocode at line~\ref{line:updated_discovery_exists} of \Cref{lst:reconfiguration_view_transition} due to the fact that $r$ eventually obtains a view-path to $v_{\mathit{final}}$ and $v_r \subset v_{\mathit{final}}$.
Let the ``destination'' view be $v_r'$.
If $r \notin v_r'.\mathtt{members()}$, then $r$ eventually leaves (by \Cref{lemma:eventual_dischargement}).
This is impossible due to the fact that $r$ never leaves.
Hence, $r \in v_r'.\mathtt{members()}$.
Therefore, $r$ updates its $\mathit{current\_view}.\mathit{view}$ variable to $v_r' \supset v_r$ (by \Cref{lemma:eventually_prepared}).
Thus, we reach contradiction with the fact that $v_r$ is the final value of $\mathit{current\_view}.\mathit{view}$ at server $r$, which means that our starting assumption was not correct and $(-, r) \notin v_{\mathit{final}}$.
Hence, $r \in v_{\mathit{final}}.\mathtt{members()}$.
\end{proof}

The next theorem proves the removal liveness property.

\begin{theorem} [Removal Liveness] \label{theorem:removal_liveness}
Removal liveness is satisfied.
\end{theorem}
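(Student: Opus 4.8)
The plan is to route everything through the finality machinery and the greatest forever-alive view $v_{\mathit{final}}$. First I would apply \Cref{lemma:forever_validator_in_final} to the correct forever-validator $p$ that obtains $\sigma_v$, obtaining $p \in v_{\mathit{final}}.\mathtt{members()}$. Since $\mathtt{verify\_voting}(\text{``remove server } r\text{''}, \sigma_v) = \top$ and $p$ is correct, the rule at line~\ref{line:store_passing_proof} of \Cref{lst:reconfiguration_leaving} eventually fires at $p$ (if $(-, r)$ is already in $\mathit{reconfiguration}.\mathit{requested}$, that only helps), so $(-, r) \in \mathit{reconfiguration}.\mathit{requested}$ at $p$.

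Next I would establish $(+, r) \in v_{\mathit{final}}$ from the hypothesis that $r$ has joined. If $r$ is correct, this is immediate: when $r$ joined it set $\mathit{current\_view}.\mathit{view}$ to some view $v$ with $r \in v.\mathtt{members()}$, so \Cref{lemma:current_view_forever_alive} yields a forever-alive $v' \supseteq v$, hence $v \subseteq v_{\mathit{final}}$ and $(+, r) \in v_{\mathit{final}}$. If $r$ is faulty, the triggering $r \text{ } \mathtt{joined}$ event was raised either at initialization (so $r \in \mathit{genesis} \subseteq v_{\mathit{final}}$ by \Cref{lemma:genesis_smallest_valid}) or at line~\ref{line:trigger_for_others_2} of \Cref{lst:reconfiguration_view_transition} by some correct server that had just set its current view to $\mathit{reconfiguration}.\mathit{destination}$ with $(+, r) \in \mathit{reconfiguration}.\mathit{destination}$; the same use of \Cref{lemma:current_view_forever_alive} then gives $(+, r) \in v_{\mathit{final}}$. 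With $(+, r) \in v_{\mathit{final}}$, $(-, r) \in \mathit{reconfiguration}.\mathit{requested}$ at $p$, and $p \in v_{\mathit{final}}.\mathtt{members()}$, \Cref{lemma:minus_in_final} forces $(-, r) \in v_{\mathit{final}}$, so $r \notin v_{\mathit{final}}.\mathtt{members()}$.

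It then remains to deduce that $r$ actually leaves. If $r$ is correct, I would replay the contradiction argument of \Cref{theorem:leave_liveness} essentially verbatim: assuming $r$ never leaves makes $r$ forever-correct, so its $\mathit{current\_view}.\mathit{view}$ has a final value $v_r$; \Cref{lemma:current_view_forever_alive} gives $v_r \subseteq v_{\mathit{final}}$, and since $r \in v_r.\mathtt{members()}$ but $r \notin v_{\mathit{final}}.\mathtt{members()}$ we get $v_r \subset v_{\mathit{final}}$; then $r$ eventually obtains a view-path to $v_{\mathit{final}}$ and enters the block at line~\ref{line:updated_discovery_exists} of \Cref{lst:reconfiguration_view_transition}, where either it leaves (via \Cref{lemma:eventual_dischargement} and \Cref{lemma:eventually_prepared}, contradicting ``never leaves'') or it advances its current view past $v_r$ (contradicting finality of $v_r$). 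If $r$ is faulty, I would instead exhibit a correct server that triggers the $r \text{ } \mathtt{left}$ event at line~\ref{line:leave_others} of \Cref{lst:reconfiguration_view_transition}: $v_{\mathit{final}}$ is installable by \Cref{lemma:v_final_installable} and forever-alive, and by \Cref{lemma:forever_alive_all_unskippable} and \Cref{lemma:unskippable_forever_alive_all_go} every correct member of $v_{\mathit{final}}$ eventually performs the transition into $v_{\mathit{final}}$; since installable views cannot be skipped (\Cref{lemma:installable_in_path}), $r$ is dropped at some step of the installable chain $\mathit{genesis} \subset \dots \subset v_{\mathit{final}}$ that a correct server carries out, and at that step $r$ appears in the source view but not the destination view, so $r \text{ } \mathtt{left}$ is triggered there.

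I expect the faulty-$r$ case to be the main obstacle: pinning down the exact transition and the correct server that triggers $r \text{ } \mathtt{left}$ requires careful bookkeeping of how $\mathit{source}[v_{\mathit{final}}]$ relates to the installable chain and of how the (possibly non-installable) intermediate views are processed, in the spirit of the $\mathtt{INSTALL}$-message invariants of \Cref{subsubsection:intermediate_results}. The correct-$r$ case, by contrast, is a near-mechanical transcription of the proof of \Cref{theorem:leave_liveness}, the only new input being that $r \notin v_{\mathit{final}}.\mathtt{members()}$ is now forced by the voting proof rather than by an explicit $\mathtt{leave}$ request.
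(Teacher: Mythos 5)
Your proposal is correct and follows essentially the same route as the paper's proof: membership of the forever-validator in $v_{\mathit{final}}$ via \Cref{lemma:forever_validator_in_final}, then $(+,r)\in v_{\mathit{final}}$ and $(-,r)\in v_{\mathit{final}}$ via \Cref{lemma:current_view_forever_alive} and \Cref{lemma:minus_in_final}, the leave-liveness contradiction for correct $r$, and a traversal of the chain of installable views for faulty $r$ to locate the transition at which a correct member of the destination triggers the $r$ $\mathtt{left}$ event. The only (cosmetic) difference is that you scan the installable chain forward from $\mathit{genesis}$ to find the first view containing $(-,r)$, whereas the paper recurses backward from $v_{\mathit{final}}$ using the membership validity of each view generator; both terminate for the same reason, namely that $(-,r)\notin\mathit{genesis}$ and there are finitely many valid views.
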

\begin{proof}
Let $r^*$ be a correct forever-validator that obtains a voting proof $\sigma_v$ that a server $r$ is voted out.
By \Cref{lemma:forever_validator_in_final}, $r^* \in v_{\mathit{final}}.\mathtt{members()}$.
Since $r$ is a validator at time $t$, that means that $r$ joined by time $t$.
We aim to prove that $(+, r) \in v_{\mathit{final}}$.
Let us consider both possible cases:
\begin{compactitem}
    \item Let $r$ be a correct server.
    Hence, $(+, r) \in v_{\mathit{final}}$ due to \Cref{lemma:current_view_forever_alive}.
    
    \item Let $r$ be a faulty server.
    Hence, a correct server triggered the special $r \text{ } \mathtt{joined}$ event (at line~\ref{line:trigger_for_others_init} of \Cref{lst:reconfiguration_initialization} or at line~\ref{line:trigger_for_others_2} of \Cref{lst:reconfiguration_view_transition}).
    Hence, $(+, r) \in v_{\mathit{final}}$ due to \Cref{lemma:current_view_forever_alive}.
\end{compactitem}
Therefore, $(+, r) \in v_{\mathit{final}}$.
By \Cref{lemma:minus_in_final}, $(-, r) \in v_{\mathit{final}}$, as well.
Since $(-, r) \in v_{\mathit{final}}$, $\mathit{genesis} \neq v_{\mathit{final}}$ (see \Cref{appx:model}, paragraph ``Constants'').

Again, we separate two possible cases:
\begin{compactitem}
    \item Let $r$ be a correct server.
    By contradiction, suppose that $r$ does not leave.
    Since $r$ has joined, the $\mathit{current\_view}.\mathit{view}$ variable at $r$ is not equal to $\bot$; let $\mathit{current\_view}.\mathit{view}$ variable have its final value $v_r$ (the final value of the $\mathit{current\_view}.\mathit{view}$ variable exists since there are only finitely many valid views, by \Cref{lemma:finite_valid_views}, and a correct server updates the variable only with greater values, by the check at line~\ref{line:updated_discovery_exists} of \Cref{lst:reconfiguration_view_transition}).
    By \Cref{lemma:current_view_forever_alive}, $v_r \subseteq v_{\mathit{final}}$.
    Since $r \in v_r.\mathtt{members()}$ and $r \notin v_{\mathit{final}}.\mathtt{members()}$, $v_r \subset v_{\mathit{final}}$. 

    Since $r$ is forever-correct (because it does not leave), $r$ eventually enters the pseudocode at line~\ref{line:updated_discovery_exists} of \Cref{lst:reconfiguration_view_transition} due to the fact that $r$ eventually obtains a view-path to $v_{\mathit{final}}$ and $v_r \subset v_{\mathit{final}}$.
    Let the ``destination'' view be $v_r'$.
    If $r \notin v_r'.\mathtt{members()}$, then $r$ eventually leaves (by \Cref{lemma:eventual_dischargement}).
    This is impossible due to the fact that $r$ never leaves.
    Hence, $r \in v_r'.\mathtt{members()}$.
    Therefore, $r$ updates its $\mathit{current\_view}.\mathit{view}$ variable to $v_r' \supset v_r$ (by \Cref{lemma:eventually_prepared}).
    Thus, we reach contradiction with the fact that $v_r$ is the final value of $\mathit{current\_view}.\mathit{view}$ at server $r$, which means that our starting assumption was not correct and $r$ leaves.
    
    \item Let $r$ be a faulty server.
    By the finality property, $r^*$ eventually sets its $\mathit{current\_view}.\mathit{view}$ variable to $v_{\mathit{final}} \neq \mathit{genesis}$.
    Hence, $r^*$ enters the pseudocode at line~\ref{line:updated_discovery_exists} of \Cref{lst:reconfiguration_view_transition} with the ``source'' view $v_s$ and with the ``destination'' view $v_{\mathit{final}}$.
    We know that $(+, r), (-, r) \in v_{\mathit{final}}$.
    By the membership validity property of $\mathit{vg}(v_s)$ (since $v_s$ is a valid view and $\mathit{preconditions}(v_s, \infty) = \top$), $(+, r) \in v_s$.
    If $(-, r) \notin v_s$, then $r$ leaves since $r^*$ triggers the special $r \text{ } \mathtt{left}$ event (at line~\ref{line:leave_others} of \Cref{lst:reconfiguration_view_transition}).
    
    Otherwise, $(+, r), (-, r) \in v_s$ and $v_s \neq \mathit{genesis}$ (since $\mathit{genesis}$ does not contain ``negative'' changes; see \Cref{appx:model}, paragraph ``Constants'').
    Now, we consider a correct server $r_s \in v_s.\mathtt{members()}$ that sets its $\mathit{current\_view}.\mathit{view}$ variable to $v_s$ at line~\ref{line:update_current_view} of \Cref{lst:reconfiguration_view_transition}; such server exists due to the fact that the $\mathtt{INSTALL}$ message $m$ is obtained, where $m.\mathtt{source()} = v_s$ and $m.\mathtt{destination()} = v_{\mathit{final}}$, and due to the decision permission property of $\mathit{vg}(v_s)$.
    Hence, we apply the same reasoning as with $v_{\mathit{final}}$.
    Such recursion eventually stops due to (1) the fact that there are only finitely many valid views (by \Cref{lemma:finite_valid_views}), (2) $\mathit{genesis}$ is the smallest valid view (by \Cref{lemma:genesis_smallest_valid}), and (3) $(-, r) \notin \mathit{genesis}$.
    Therefore, $r$ leaves.
\end{compactitem}
The theorem holds since $r$ leaves in all possible cases.
\end{proof}

\section{Client's Implementation} \label{appx:client}

This section is devoted to the implementation of the client.
As we have already mentioned in \Cref{appx:problem_definition}, clients are the users of \sysname, i.e., clients are the entities that are able to issue transactions.
Before we give the implementation of a client, we define the commitment proof of a transaction.
Specifically, we define when $\mathtt{verify\_commit}(\mathit{tx} \in \mathcal{T}, \sigma_c \in \Sigma_c) = \top$.

\begin{lstlisting}[
  caption={The $\mathtt{verify\_commit}$ function},
  label={lst:commitment_proof_verification},
  escapechar=?]
?\textbf{function}? verify_commit(Transaction ?$\mathit{tx}$?, Commitment_Proof ?$\sigma_c$?):
    if ?$\sigma_c$? is not Set(Message):
        ?\textbf{return}? ?$\bot$?
    if does not exist View ?$v$? such that ?$m = [$?COMMITTED, ?$\mathit{tx}$?, ?$v$?, View_Path ?$\mathit{path}]$? and ?$v = \mathit{path}$?.destination(), for every Message ?$m \in \sigma_c$?:
        ?\textbf{return}? ?$\bot$?
        
    Set(Server) ?$\mathit{senders} = \{m\text{.sender} \,|\, m \in \sigma_c\}$?
    if ?$\mathit{senders} \not\subseteq v$?.members():
        ?\textbf{return}? ?$\bot$?
    ?\textbf{return}? ?$|\mathit{senders}| \geq v$?.plurality()
    
?\textbf{function}? (Commitment_Proof ?$\sigma_c$?).signers:
    ?\textbf{return}? ?$\{m\text{.sender} \,|\, m \in \sigma_c\}$?
\end{lstlisting}

Intuitively, $\mathtt{verify\_commit}(\mathit{tx}, \sigma_c) = \top$ if and only if $\sigma_c$ contains $v.\mathtt{plurality()}$ $\mathtt{COMMITTED}$ messages, where $v$ is a valid view. 
Finally, we give the implementation of a client.
We assume that each transaction is signed by its issuer, i.e., a transaction $\mathit{tx}$ is signed by $\mathit{tx}.\mathit{issuer}$.
Hence, only $\mathit{tx}.\mathit{issuer}$ can issue $\mathit{tx}$.
For the sake of brevity, we omit this logic in the pseudocode given below.

\begin{lstlisting}[
  caption={Client - implementation},
  label={lst:client},
  escapechar = ?]
?\textbf{Client:}?
    ?\textcolor{plainorange}{Implementation:}?
        upon start: // executed as soon as the ?\textcolor{gray}{$\mathtt{start}$}? event is triggered
            Set(Message) ?$\mathit{waiting\_messages} = \emptyset$?
            
            // reconfiguration variables
            View ?$\mathit{current\_view} = \mathit{genesis}$?
            Sequence ?$\mathit{history} = \mathit{genesis}$?
            Set(Install_Message) ?$\mathit{install\_messages} = \emptyset$?
            
            String ?$state = $? "idle"
        
            // transaction variables
            Transaction ?$\mathit{current\_transaction} = \bot$?
            Map(View ?$\to$? Set(Server)) ?$\mathit{acks\_from} = \{v \to \emptyset\text{, for every View } v\}$?
            Map(View ?$\to$? Set(Message)) ?$\mathit{acks} = \{v \to \emptyset\text{, for every View } v\}$?
            Set(Message) ?$\mathit{certificate} = \bot$?
            
            // query variables
            Integer ?$\mathit{query\_id} = 0$?
            Map(View ?$\to$? Set(Server)) ?$\mathit{responses\_from} = \{v \to \emptyset\text{, for every View } v\}$?
            Map(View ?$\to$? Set(Integer)) ?$\mathit{responses} = \{v \to \emptyset\text{, for every View } v\}$?
            
        upon receipt of Message ?$m$?:
            ?$\mathit{waiting\_messages} = \mathit{waiting\_messages} \cup \{m\}$?
            
        upon ?exists? Install_Message ?$m \in \mathit{waiting\_messages}$? such that ?$m = [$?INSTALL, ?$v$?, ?$\mathit{set}$?, ?$\omega]$? and ?$v \in \mathit{history}$? and ?$m \notin \mathit{install\_messages}$?: ?\label{line:new_view_in_history}?
            ?$\mathit{waiting\_messages} = \mathit{waiting\_messages} \setminus{\{m\}}$?
            ?$\mathit{install\_messages} = \mathit{install\_messages} \cup \{m\}$?
            gossip ?$m$?
            View ?$v' = m$?.destination()
            if ?$v' \notin \mathit{history}$?:
                ?$\mathit{history} = \mathit{history} \cup \{v'\}$?
            
        upon ?exists? View ?$v \in \mathit{history}$? such that ?$\mathit{current\_view} \subset v$?:
            ?$\mathit{current\_view} = v$?
            if ?$\mathit{state} = $? "query":
                broadcast ?$[$?QUERY, ?$\mathit{query\_id}$?, ?$\mathit{current\_view}.\mathit{view}]$? to ?$\mathit{current\_view}.\mathit{view}$?.members() ?\label{line:query_1}?
            if ?$\mathit{state} = $? "certificate collection":
                broadcast ?$[$PREPARE?, ?$\mathit{current\_transaction}$?, ?$\mathit{current\_view}]$? to ?$\mathit{current\_view}$?.members() ?\label{line:prepare_1_c}?
            if ?$\mathit{state} = $? "commitment proof collection":
                broadcast ?$[$COMMIT?, ?$\mathit{current\_transaction}$?, ?$\mathit{certificate}$?, ?$\mathit{current\_view}]$? to ?$\mathit{current\_view}$?.members() ?\label{line:commit_c_1}?

        upon issue Transaction ?$\mathit{tx}$?: // the client issues a transaction ?\textcolor{gray}{$\mathit{tx}$}? ?\label{line:issue_tx}?
            // reset transaction variables
            ?$\mathit{acks\_from} = \{v \to \emptyset\text{, for every View } v\}$? 
            ?$\mathit{acks} = \{v \to \emptyset\text{, for every View } v\}$?
            ?$\mathit{certificate} = \bot$?
            
            ?$\mathit{state} = $? "certificate collection" // set state to "certificate collection"
            ?$\mathit{current\_transaction} = \mathit{tx}$? // remember the transaction
            // broadcast the prepare message to all members of ?\textcolor{gray}{$\mathit{current\_view}$}?
            broadcast ?$[$PREPARE?, ?$\mathit{tx}$?, ?$\mathit{current\_view}]$? to ?$\mathit{current\_view}.$?members() ?\label{line:prepare_2_c}?
            
        upon query market: // the client requests to learn the total amount of money
            // reset transaction variables
            ?$\mathit{responses\_from} = \{v \to \emptyset\text{, for every View } v\}$? 
            ?$\mathit{responses} = \{v \to \emptyset\text{, for every View } v\}$?
            
            ?$\mathit{state} = $? "query" // set state to "query"
            ?$\mathit{query\_id} = \mathit{query\_id} + 1$?
            // broadcast the query message to all members of ?\textcolor{gray}{$\mathit{current\_view}$}?
            broadcast ?$[$?QUERY, ?$\mathit{query\_id}$?, ?$\mathit{current\_view}]$? to ?$\mathit{current\_view}.$?members() ?\label{line:query_2}?
            
        upon ?exists? Message ?$m \in \mathit{waiting\_messages}$? such that ?$m = [$?ACK, Transaction ?$\mathit{tx}$?, View ?$v$?, View_Path ?$\mathit{path}]$? and ?$\mathit{tx} = \mathit{current\_transaction}$? and ?$\mathit{path}$?.destination() ?$= v$? and ?$m$?.sender ?$\in v$?.members() and?\\$state = $? "certificate collection":    
            ?$\mathit{waiting\_messages} = \mathit{waiting\_messages} \setminus{\{m\}}$?
            Server ?$r = m$?.sender
            if ?$r \notin \mathit{acks\_from}[v]$?:
                ?$\mathit{acks\_from}[v] = \mathit{acks\_from}[v] \cup \{r\}$?
                ?$\mathit{acks}[v] = \mathit{acks}[v] \cup \{m\}$?
                
        upon ?exists? View ?$v$? such that ?$|\mathit{acks\_from}[v]| \geq v$?.quorum() and ?$\mathit{state} = $? "certificate collection":
            ?$\mathit{state} = $? "commitment proof collection" // set state to "commitment proof collection"
            ?$\mathit{certificate} = \mathit{acks}[v]$? // update the certificate
            // broadcast the commit message to all members of ?\textcolor{gray}{$\mathit{current\_view}$}?
            broadcast ?$[$COMMIT?, ?$\mathit{tx}$?, ?$\mathit{certificate}$?, ?$\mathit{current\_view}]$? to ?$\mathit{current\_view}$?.members() ?\label{line:commit_c_2}?
            
        upon obtaining Commitment_Proof ?$\sigma_c$? such that verify_commit(Transaction ?$\mathit{tx}$?, ?$\sigma_c$?) ?$ = \top$?: ?\label{line:obtain_proof_c}?
            ?\textbf{trigger}? committed ?$\mathit{tx}$? // the client learns that ?\textcolor{gray}{$\mathit{tx}$}? is committed 
            if ?$\mathit{tx} = \mathit{current\_transaction}$?:
                ?$\mathit{state} = $? "idle"
                ?$\mathit{current\_transaction} = \bot$?
                
        upon ?exists? Message ?$m \in \mathit{waiting\_messages}$? such that ?$m = [$?QUERY-RESPONSE, Integer ?$\mathit{response}$?, Integer ?$\mathit{qid}$?, View ?$v$?, View_Path ?$\mathit{path}]$? and ?$\mathit{query\_id} = \mathit{qid}$? and ?$\mathit{state} = $? "query" and ?$\mathit{path}$?.destination() ?$= v$? and ?$m$?.sender ?$\in v$?.members(): 
            ?$\mathit{waiting\_messages} = \mathit{waiting\_messages} \setminus{\{m\}}$?
            Server ?$r = m$?.sender
            if ?$r \notin \mathit{responses\_from}[v]$?:
                ?$\mathit{responses\_from}[v] = \mathit{responses\_from}[v] \cup \{r\}$?
                ?$\mathit{responses}[v] = \mathit{responses}[v] \cup \{\mathit{response}\}$?
        
        upon ?exists? View ?$v$? such that ?$|\mathit{responses\_from}[v]| \geq v$?.quorum() and ?$\mathit{state} = $? "query": ?\label{line:query_rule}?
            ?$\mathit{state} = $? "idle"
            ?\textbf{trigger}? total money median(?$\mathit{responses}[v]$?) ?\label{line:median}?
\end{lstlisting}
Let us emphasize that the $\mathit{history}$ variable of a client is indeed a sequence since all views that belong to $\mathit{history}$ are valid (follows from line~\ref{line:new_view_in_history} of \Cref{lst:client}).

\section{Server's Modules: Transaction Module} \label{appx:transaction_module}

This section is devoted to the \emph{transaction module} of a server.
The transaction module contains the logic for processing transactions issued by clients.

We start by specifying what constitutes as a ``valid'' transaction certificate (see \Cref{lst:transaction_certificate_verification}).
Then, we give the implementation used for representing the \emph{state} of a server, i.e., all transactions processed by the server (paragraph ``State representation'').
Lastly, we present the implementation of the transaction module (paragraph ``Transaction module - implementation'').

\begin{lstlisting}[
  caption={Transaction certificate - verification},
  label={lst:transaction_certificate_verification},
  escapechar=?]
?\textbf{function}? verify_transaction_certificate(Transaction ?$\mathit{tx}$?, Set(Message) ?$\mathit{certificate}$?):
    if does not exist View ?$v$? such that ?$m = [$?ACK, ?$\mathit{tx}$?, ?$v$?, View_Path ?$\mathit{path}]$? and ?$v = \mathit{path}$?.destination(), for every Message ?$m \in \mathit{certificate}$?: ?\label{line:check_certificate}?
        ?\textbf{return}? ?$\bot$?
    ?\textbf{return}? ?$|\{m.\mathtt{sender} \,|\, m \in \mathit{certificate} \text{ and } m\text{.sender} \in v\text{.members()}\}| \geq v$?.quorum()
\end{lstlisting}
A transaction certificate $\mathit{certificate}$ is ``valid'' for a transaction $\mathit{tx}$ if and only if $\mathtt{verify\_transaction\_certificate}(\mathit{tx}, \mathit{certificate}) = \top$.
Intuitively, this is the case if $\mathit{certificate}$ contains a quorum of $\mathtt{ACK}$ messages for $\mathit{tx}$.

\para{State representation}
We give below the implementation for representing the state of a server.
This part of the pseudocode is important since servers exchange their state representations during the state transfer periods.

\begin{lstlisting}[
  caption={State representation},
  label={lst:state_representation},
     escapechar=?]
?\textbf{State\_Representation:}?
    instance ?$\mathit{state} = $? {
        Map(Transaction ?$\to$? Message) ?$\mathit{log}$?
        Map(Client, Integer ?$\to$? Set(Transaction)) ?$\mathit{allowed\_acks}$?
    }

    ?\textbf{function}? verify_transaction_proof(Transaction ?$\mathit{tx}$?, Message ?$\mathit{proof}$?): ?\label{line:verify_transaction_proof}?
        ?\textbf{return}? ?$\mathit{proof} = [$?COMMIT, ?$\mathit{tx}$?, Set(Message) ?$\mathit{certificate}$?, View ?$v]$? such that ?\\?verify_transaction_certificate(?$\mathit{tx}$?, ?$\mathit{certificate}$?) ?$ = \top$? // see ?\textcolor{gray}{\Cref{lst:transaction_certificate_verification}}?
            
    ?\textbf{function}? (State_Representation ?$\mathit{state}$?).extract_log():
        Log ?$\mathit{log} = \emptyset$?
        for each Transaction ?$\mathit{tx}$? such that ?$\mathit{state}.\mathit{log}[\mathit{tx}] \neq \bot$?:
            if verify_transaction_proof(?$\mathit{tx}$?, ?$\mathit{state}.\mathit{log}[\mathit{tx}]$?) ?$ = \top$?:
                ?$\mathit{log} = \mathit{log} \cup \{\mathit{tx}\}$?
            else:
                ?\textbf{return}? ?$\bot$?
        ?\textbf{return}? ?$\mathit{log}$?
        
    ?\textbf{function}? (State_Representation ?$\mathit{state}$?).verify_log():
        Log ?$\mathit{log} = \mathit{state}$?.extract_log()
        if ?$\mathit{log} = \bot$?: ?\label{line:check_extract_log_verify_log}?
            ?\textbf{return}? ?$\bot$?
        ?\textbf{return}? admissible(?$\mathit{log}$?) // see ?\textcolor{gray}{\Cref{lst:admissible_transaction_set}}? ?\label{line:check_admissible_state_representation}?
        
    ?\textbf{function}? (State_Representation ?$\mathit{state}$?).verify_allowed_acks():
        for each Client ?$c$? and Integer ?$i$? such that ?$\mathit{state}.\mathit{allowed\_acks}[c][i] \neq \bot$?:
            for each Transaction ?$\mathit{tx} \in \mathit{state}.\mathit{allowed\_acks}[c][i]$?:
                if ?$\mathit{tx}.\mathit{issuer} \neq c$? or ?$\mathit{tx}.\mathit{sn} \neq i$?:
                    ?\textbf{return}? ?$\bot$?
        ?\textbf{return}? ?$\top$?
        
    ?\textbf{function}? (State_Representation ?$\mathit{state}$?).verify():
        ?\textbf{return}? ?$\mathit{state}$?.verify_log() and ?$\mathit{state}$?.verify_allowed_acks()
          
    ?\textbf{function}? merge_logs(State_Representation ?$\mathit{state}_1$?, State_Representation ?$\mathit{state}_2$?):
        if ?$\mathit{state}_1$?.verify() ?$ = \bot$? or ?$\mathit{state}_2$?.verify() ?$ = \bot$?:
            ?\textbf{return}? ?$\bot$?
            
        Map(Transaction ?$\to$? Message) ?$\mathit{new\_log} = \{\mathit{tx} \to \bot, \text{ for every Transaction } \mathit{tx}\}$?
        ?$\mathit{new\_log} = \mathit{state}_1.\mathit{log}$?
        for each Transaction ?$\mathit{tx}$? such that ?$\mathit{state}_2.\mathit{log}[\mathit{tx}] \neq \bot$?:
            ?$\mathit{new\_log}[\mathit{tx}] = \mathit{state}_2.\mathit{log}[\mathit{tx}]$?
        
        ?\textbf{return}? ?$\mathit{new\_log}$?        
\end{lstlisting}

\para{Transaction module - implementation}
Finally, we give the implementation of the transaction module of a server.

\begin{lstlisting}[
  caption={Transaction module - implementation},
  label={lst:transaction_module},
  escapechar = ?]
?\textbf{Transaction Module:}?
    ?\textcolor{plainorange}{Implementation:}?
        upon start: // executed as soon as the ?\textcolor{gray}{$\mathtt{start}$}? event is triggered
            Log ?$\mathit{log} = \emptyset$? ?\label{line:log_init}?
            Map(Transaction ?$\to$? Bool) ?$\mathit{quasi\_committed} = \{\mathit{tx} \to \bot\text{, for every Transaction } \mathit{tx}\}$?
            Map(Transaction ?$\to$? Bool) ?$\mathit{confirmed} = \{\mathit{tx} \to \bot\text{, for every Transaction } \mathit{tx}\}$?
            Map(Client ?$\to$? Integer) ?$\mathit{log\_height} = \{c \to 0\text{, for every Client } c\}$?
            Map(Client ?$\to$? Integer) ?$\mathit{log\_quasi\_committed\_height} = \{c \to 0\text{, for every Client } c\}$?
            State_Representation ?$\mathit{state} = $? {
                ?$\{\mathit{tx} \to \bot$?, for every Transaction ?$\mathit{tx}\}$?, // ?\textcolor{gray}{$\mathit{log}$}?
                ?$\{c, i \to \emptyset$?, for every Client ?$c$? and every Integer ?$i\}$?, // ?\textcolor{gray}{$\mathit{allowed\_acks}$}?
            }
            
            Set(Client) ?$\mathit{faulty\_clients} = \emptyset$?
        
            Set(Transaction) ?$\mathit{quasi\_committed\_current\_view} = \emptyset$?
            Map(Transaction, View ?$\to$? Set(Server)) ?$\mathit{commit\_confirms\_from} = \{\mathit{tx}, v \to \emptyset\text{, for every Transaction } \mathit{tx} \text{ and every View } v\}$?
            Map(Transaction, View ?$\to$? Set(Message)) ?$\mathit{commit\_confirms} = \{\mathit{tx}, v \to \emptyset\text{, for every Transaction } \mathit{tx} \text{ and every View } v\}$?
            
            Map(Transaction, View ?$\to$? Set(Server)) ?$\mathit{committed\_from} = \{\mathit{tx}, v \to \emptyset\text{, for every Transaction } \mathit{tx} \text{ and every View } v\}$?
            Map(Transaction, View ?$\to$? Set(Message)) ?$\mathit{committeds} = \{\mathit{tx}, v \to \emptyset\text{, for every Transaction } \mathit{tx} \text{ and every View } v\}$?
            
        ?\textbf{function}? balance_after_height(Client ?$c$?, Integer ?$\mathit{height}$?):
            Integer ?$\mathit{balance} = $? initial_balance(?$c$?)
            Integer ?$current\_height = 1$?
            ?\textbf{while}? ?$\mathit{current\_height} \leq \mathit{height}$?:
                let ?$\mathit{tx}$? be Transaction such that ?$\mathit{tx} \in \mathit{log}$? and ?$\mathit{tx}.\mathit{issuer} = c$? and ?$\mathit{tx}.\mathit{sn} = \mathit{current\_height}$?
                if ?$\mathit{tx}$? is Withdrawal_Transaction:
                    ?$\mathit{balance} = \mathit{balance} - \mathit{tx}.\mathit{amount}$?
                else if ?$\mathit{tx}$? is Deposit_Transaction:
                    ?$\mathit{balance} = \mathit{balance} + \mathit{tx}.\mathit{withdrawal}.\mathit{amount}$?
                else if ?$\mathit{tx}$? is Minting_Transaction:
                    ?$\mathit{balance} = \mathit{balance} + \mathit{tx}.\mathit{amount}$?
                ?$\mathit{current\_height} = \mathit{current\_height} + 1$?
            ?\textbf{return}? ?$\mathit{balance}$?
    
        upon ?exists? Message ?$m \in \mathit{waiting\_messages}$? such that ?$m = [$?PREPARE, Transaction ?$\mathit{tx}$?, View ?$v]$? and ?$v = \mathit{current\_view}.\mathit{view}$? and ?$\mathit{current\_view}.\mathit{installed} = \top$? and ?$\mathit{tx}.\mathit{issuer} \notin \mathit{faulty\_clients}$? and ?$\mathit{current\_view}.\mathit{processing} = \top$? and ?$\mathit{tx}.\mathit{sn} = \mathit{log\_height}[\mathit{tx}.\mathit{issuer}] + 1$? and (if ?$\mathit{tx}$? is Deposit_Transaction then ?$\mathit{tx}.\mathit{withdrawal} \in \mathit{log}$?): ?\label{line:prepare_rule}?
            ?$\mathit{waiting\_messages} = \mathit{waiting\_messages} \setminus{\{m\}}$?
            
            // check whether the server is allowed to send ACK
            if ?$\mathit{state}.\mathit{allowed\_acks}[\mathit{tx}.\mathit{issuer}][\mathit{tx}.\mathit{sn}] = \emptyset$? or ?$\mathit{state}.\mathit{allowed\_acks}[\mathit{tx}.\mathit{issuer}][\mathit{tx}.\mathit{sn}] = \{\mathit{tx}\}$?: ?\label{line:allowed_acks_check}?
                if ?$\mathit{tx}$? is Withdrawal_Transaction: ?\label{line:check_start}?
                    if ?$\mathit{tx}.\mathit{amount} > $? balance_after_height(?$\mathit{tx}.\mathit{issuer}$?, ?$\mathit{tx}.\mathit{sn} - 1$?): // not enough money
                        ?\textbf{return}?
                else if ?$\mathit{tx}$? is Deposit_Transaction: // deposit transaction
                    if ?exists? Transaction ?$\mathit{tx}' \in \mathit{log}$? such that ?$\mathit{tx}'$? is Deposit_Transaction and ?$\mathit{tx}'.\mathit{issuer} = \mathit{tx}.\mathit{issuer}$? and ?$\mathit{tx}'.\mathit{withdrawal} = \mathit{tx}.\mathit{withdrawal}$?: // withdrawal already used
                        ?\textbf{return}?
                else if ?$\mathit{tx}$? is Vote_Transaction: // vote transaction
                    if ?exists? Transaction ?$\mathit{tx}' \in \mathit{log}$? such that ?$\mathit{tx}'$? is Vote_Transaction and ?$\mathit{tx}'.\mathit{issuer} = \mathit{tx}.\mathit{issuer}$? and ?$\mathit{tx}'.\mathit{motion} = \mathit{tx}.\mathit{motion}$?: // the motion is already being voted for ?\label{line:check_stop}?
                        ?\textbf{return}?
                ?$\mathit{state}.\mathit{allowed\_acks}[\mathit{tx}.\mathit{issuer}][\mathit{tx}.\mathit{sn}] = \{\mathit{tx}\}$? ?\label{line:remember_allowed_acks}?
                send ?$[$?ACK, ?$\mathit{tx}$?, ?$v$?, ?$\mathit{view\_path}[v]]$? to ?$\mathit{tx}.\mathit{issuer}$? ?\label{line:send_ack}?
            else:
                // new proof of misbehavior discovered
                ?$\mathit{state}.\mathit{allowed\_acks}[\mathit{tx}.\mathit{issuer}][\mathit{tx}.\mathit{sn}] = \mathit{state}.\mathit{allowed\_acks}[\mathit{tx}.\mathit{issuer}][\mathit{tx}.\mathit{sn}] \cup \{\mathit{tx}\}$? ?\label{line:multiple_allowed_acks}?
                ?$\mathit{faulty\_clients} = \mathit{faulty\_clients} \cup \{\mathit{tx}.\mathit{issuer}\}$? ?\label{line:update_faulty_clients}?
            
        ?\textbf{function}? allowed_to_commit_confirm(Transaction ?$\mathit{tx}$?):
            if ?$\mathit{tx} \in \mathit{log}$?: ?\label{line:tx_in_log}?
                ?\textbf{return}? ?$\top$?
            
            if ?$\mathit{tx}.\mathit{sn} \neq \mathit{log\_height}[\mathit{tx}.\mathit{issuer}] + 1$?: ?\label{line:ensure_no_gaps}?
                ?\textbf{return}? ?$\bot$?
            
            if ?$\mathit{tx}$? is Withdrawal_Transaction or ?$\mathit{tx}$? is Vote_Transaction or ?$\mathit{tx}$? is Minting_Transaction:
                ?\textbf{return}? ?$\top$?
            else:
                ?\textbf{return}? ?$\mathit{tx}.\mathit{withdrawal} \in \mathit{log}$? ?\label{line:check_withdrawal_when_including_deposit}?
                 
        upon ?exists? Message ?$m \in \mathit{waiting\_messages}$? such that ?$m = [$?COMMIT, Transaction ?$\mathit{tx}$?, Set(Message) ?$\mathit{certificate}$?, View ?$v]$? and verify_transaction_certificate(?$\mathit{tx}$?, ?$\mathit{certificate}$?) ?$= \top$? and ?$v = \mathit{current\_view}.\mathit{view}$? and ?$\mathit{current\_view}.\mathit{installed} = \top$? and ?$\mathit{current\_view}.\mathit{processing} = \top$? and allowed_to_commit_confirm(?$\mathit{tx}$?) ?$= \top$?: ?\label{line:commit_message_rule}?
            ?$\mathit{waiting\_messages} = \mathit{waiting\_messages} \setminus{\{m\}}$?
            // first sight of the commit message
            if ?$\mathit{tx} \notin \mathit{log}$?: ?\label{line:check_tx_in_log}?
                // update the state representation
                ?$\mathit{state}.\mathit{log}[\mathit{tx}] = m$? ?\label{line:update_log_state_representation}?
                
                ?$\mathit{log} = \mathit{log} \cup \{\mathit{tx}\}$? ?\label{line:update_log}?
                ?$\mathit{log\_height}[\mathit{tx}.\mathit{issuer}] = \mathit{tx}.\mathit{sn}$?
                
                broadcast ?$[$?COMMIT, ?$\mathit{tx}$?, ?$\mathit{certificate}$?, ?$v]$? to ?$v$?.members() ?\label{line:send_commit_1}?
            
            send ?$[$?COMMIT-CONFIRM, ?$\mathit{tx}$?, ?$v$?, ?$\mathit{view\_path}[v]]$? to ?$m$?.sender // confirm to the sender, not to the ?\textcolor{gray}{$\mathit{tx}.\mathit{issuer}$}? ?\label{line:send_commit_confirm}?
            
        upon ?exists? Message ?$m \in \mathit{waiting\_messages}$? such that ?$m = [$?COMMIT-CONFIRM, Transaction ?$\mathit{tx}$?, View ?$v$?, View_Path ?$\mathit{path}]$? and ?$\mathit{path}$?.destination() ?$= v$? and ?$m$?.sender ?$\in v$?.members() and ?$\mathit{tx} \in \mathit{log}$?:
            ?$\mathit{waiting\_messages} = \mathit{waiting\_messages} \setminus{\{m\}}$?
            if ?$m$?.sender ?$\notin \mathit{commit\_confirms\_from}[\mathit{tx}][v]$?:
                ?$\mathit{commit\_confirms\_from}[\mathit{tx}][v] = \mathit{commit\_confirms\_from}[\mathit{tx}][v] \cup \{m.$?sender?$\}$?
                ?$\mathit{commit\_confirms}[\mathit{tx}][v] = \mathit{commit\_confirms}[\mathit{tx}][v] \cup \{m\}$?
            
        ?\textbf{function}? allowed_to_quasi_commit(Transaction ?$\mathit{tx}$?): ?\label{line:allowed_to_quasi_commit}?
            if ?$\mathit{tx} \notin \mathit{log}$? or ?$\mathit{quasi\_committed}[\mathit{tx}] = \top$?: ?\label{line:allowed_to_quasi_log}?
                ?\textbf{return}? ?$\bot$?
            
            if ?$\mathit{log\_quasi\_committed\_height}[\mathit{tx}.\mathit{issuer}] + 1 \neq \mathit{tx}.\mathit{sn}$?:
                ?\textbf{return}? ?$\bot$?
                 
            if ?$\mathit{tx}$? is Deposit_Transaction and (?$\mathit{tx}.\mathit{withdrawal} \notin \mathit{log}$? or ?$\mathit{quasi\_committed}[\mathit{tx}.\mathit{withdrawal}] = \bot$?):
                ?\textbf{return}? ?$\bot$?
                
            if (?exists? View ?$v^*$? such that ?$|\mathit{commit\_confirms\_from}[\mathit{tx}][v^*]| \geq v^*$?.quorum()) ?$ = \bot$?:
                ?\textbf{return}? ?$\bot$?
            
            let ?$v$? be the smallest View such that ?$|\mathit{commit\_confirms\_from}[\mathit{tx}][v]| \geq v$?.quorum()
            if ?$v \subset \mathit{current\_view}.\mathit{view}$?:
                ?\textbf{return}? ?$\top$?
            else: // ?\textcolor{gray}{$v = \mathit{current\_view}.\mathit{view}$}?
                if ?$\mathit{quasi\_committed\_current\_view} = \emptyset$?:
                    ?\textbf{return}? ?$\top$?
                Set(Server) ?$\mathit{intersection} = \bigcap\limits_{\mathit{tx}' \in \mathit{quasi\_committed\_current\_view}} \mathit{commit\_confirms\_from}[\mathit{tx}'][v]$?
                ?\textbf{return}? ?$|\mathit{commit\_confirms\_from}[\mathit{tx}][v] \cap \mathit{intersection}| \geq v$?.quorum() ?\label{line:check_intersection_confirm}?
            
        upon ?exists? Transaction ?$\mathit{tx}$? such that allowed_to_quasi_commit(?$\mathit{tx}$?) ?$= \top$?:
            ?$\mathit{quasi\_committed}[\mathit{tx}] = \top$?
            ?$\mathit{log\_quasi\_committed\_height}[\mathit{tx}.\mathit{issuer}] = \mathit{tx}.\mathit{sn}$?
            
            let ?$v$? be the smallest View such that ?$\mathit{commit\_confirms\_from}[\mathit{tx}][v] \geq v$?.quorum()
            if ?$v = \mathit{current\_view}.\mathit{view}$?:
                ?$\mathit{quasi\_committed\_current\_view} = \mathit{quasi\_committed\_current\_view} \cup \{\mathit{tx}\}$?
            
        upon ?exists? Message ?$m \in \mathit{waiting\_messages}$? such that ?$m = [$?COMMITTED, Transaction ?$\mathit{tx}$?, View ?$v$?, View_Path ?$\mathit{path}]$? and ?$\mathit{path}$?.destination() ?$ = v$? and ?$m$?.sender ?$\in v$?.members():
            ?$\mathit{waiting\_messages} = \mathit{waiting\_messages} \setminus{\{m\}}$?
            if ?$m$?.sender ?$\notin \mathit{committed\_from}[\mathit{tx}][v]$?:
                ?$\mathit{committed\_from}[\mathit{tx}][v] = \mathit{committed\_from}[\mathit{tx}][v] \cup \{m.\mathtt{sender}\}$?
                ?$\mathit{committeds}[\mathit{tx}][v] = \mathit{committeds}[\mathit{tx}][v] \cup \{m\}$?
                
        ?\textbf{function}? allowed_to_broadcast_committed(Transaction ?$\mathit{tx}$?):
            if ?$\mathit{quasi\_committed}[\mathit{tx}] = \bot$?: ?\label{line:quasi_committed}?
                ?\textbf{return}? ?$\bot$?
            
            if ?$\mathit{tx}.\mathit{sn} = 1$? and ?$\mathit{tx}$? is not Deposit_Transaction:
                ?\textbf{return}? ?$\top$?
                
            if ?$\mathit{tx}.\mathit{sn} = 1$?:
                ?\textbf{return}? ?$\mathit{confirmed}[\mathit{tx}.\mathit{withdrawal}]$?
            
            if ?$\mathit{tx}$? is not Deposit_Transaction:
                ?\textbf{return}? ?exists? Transaction ?$\mathit{tx}'$? such that ?$\mathit{tx}'.\mathit{issuer} = \mathit{tx}.\mathit{issuer}$? and ?$\mathit{tx}'.\mathit{sn} = \mathit{tx}.\mathit{sn} - 1$? and ?$\mathit{confirmed}[\mathit{tx}'] = \top$? ?\label{line:check_previous_allowed_to_broadcast_1}?
            
            ?\textbf{return}? (?exists? Transaction ?$\mathit{tx}'$? such that ?$\mathit{tx}'.\mathit{issuer} = \mathit{tx}.\mathit{issuer}$? and ?$\mathit{tx}'.\mathit{sn} = \mathit{tx}.\mathit{sn} - 1$? and ?$\mathit{confirmed}[\mathit{tx}'] = \top$?) and ?$\mathit{confirmed}[\mathit{tx}.\mathit{withdrawal}] = \top$? ?\label{line:check_previous_allowed_to_broadcast_2}?
            
        upon ?exists? Transaction ?$\mathit{tx}$? such that allowed_to_broadcast_committed(?$\mathit{tx}$?) ?$ = \top$? and ?$\mathit{current\_view}.\mathit{installed} = \top$?:
            broadcast ?$[$?COMMITTED, ?$\mathit{tx}$?, ?$\mathit{current\_view}.\mathit{view}$?, ?$\mathit{view\_path}[\mathit{current\_view}.\mathit{view}]]$? to ?$\mathit{current\_view}.\mathit{view}$?.members() ?\label{line:broadcast_committed}?
                
        upon ?exist? Transaction ?$\mathit{tx}$? and View ?$v$? such that ?$\mathit{committed\_from}[\mathit{tx}][v] \geq v$?.plurality(): ?\label{line:committed}?
            ?$\mathit{confirmed}[\mathit{tx}] = \top$? ?\label{line:set_confirmed}?
            send ?$\mathit{committeds}[\mathit{tx}][v]$? to ?$\mathit{tx}.\mathit{issuer}$? ?\label{line:proof_to_issuer}?
            if ?$\mathit{tx}$? is Withdrawal_Transaction:
                send ?$\mathit{committeds}[\mathit{tx}][v]$? to ?$\mathit{tx}.\mathit{receiver}$? ?\label{line:proof_to_receiver}?
                
        upon ?exists? Message ?$m \in \mathit{waiting\_messages}$? such that ?$m =[$?QUERY, Integer ?$\mathit{qid}$?, View ?$v]$? and ?$v = \mathit{current\_view}.\mathit{view}$? and ?$\mathit{current\_view}.\mathit{installed} = \top$? and ?$\mathit{current\_view}.\mathit{processing} = \top$?: ?\label{line:receive_query}?
            ?$\mathit{waiting\_messages} = \mathit{waiting\_messages} \setminus{\{m\}}$?
            Log ?$\mathit{quasi\_committed\_log} = \{\mathit{tx} \,|\, \mathit{quasi\_committed}[\mathit{tx}] = \top\}$? ?\label{line:take_quasi_committed}?
            send ?$[$?QUERY-RESPONSE, ?$\mathit{quasi\_committed\_log}$?.total_money(), ?$\mathit{qid}$?, ?$v$?, ?$\mathit{view\_path}[v]]$? to ?$m$?.sender ?\label{line:send_query}?
                 
        ?\textbf{function}? refine_state(Set(State_Representation) ?$\mathit{states}$?): // ?\textcolor{gray}{$\mathit{states} \neq \emptyset$}? ?\label{line:refine_state}?
            for each State_Representation ?$\mathit{rec\_state} \in \mathit{states}$?:
                for each Client ?$c$? and Integer ?$i$? such that ?$\mathit{rec\_state}.\mathit{allowed\_acks}[c][i] \neq \bot$?:
                    ?$\mathit{state}.\mathit{allowed\_acks}[c][i] = \mathit{state}.\mathit{allowed\_acks}[c][i] \cup \mathit{rec\_state}.\mathit{allowed\_acks}[c][i]$? ?\label{line:state_allowed_acks_loop}?
            
            if ?$\mathit{state}$?.verify() ?$ = \top$?: ?\label{line:verify_state}?
                for each State_Representation ?$\mathit{rec\_state} \in \mathit{states}$?:
                    ?$\mathit{state}.\mathit{log} = $? merge_logs(?$\mathit{state}$?, ?$\mathit{rec\_state}$?) // from ?\textcolor{gray}{\Cref{lst:state_representation}}? ?\label{line:merge_states_all}?
            
            ?$\mathit{log} = \mathit{state}$?.extract_log() // ?\textcolor{gray}{$\mathit{committed}[\mathit{tx}] = \bot$, for every newly inserted transaction}? ?\label{line:log_update_extract_log}?
            for each Client ?$c$?:
                Log ?$\mathit{client\_log} = \{\mathit{tx} \,|\, \mathit{tx} \in \mathit{log} \text{ and } \mathit{tx}.\mathit{issuer} = c\}$?
                ?$\mathit{log\_height}[c] = |\mathit{client\_log}|$?
                // ?\textcolor{gray}{$\mathit{log\_quasi\_committed\_height}[c ]$ does not change}?
            ?$\mathit{faulty\_clients} = \{c \,|\, \text{exists Integer } i \text{ such that } |\mathit{state}.\mathit{allowed\_acks}[c][i]| \geq 2\}$?
        
        ?\textbf{function}? installed(View ?$v$?): ?\label{line:upon_install}?
            ?$\mathit{quasi\_committed\_current\_view} = \emptyset$?
            // try to quasi-commit non-quasi-committed transactions
            for each Transaction ?$\mathit{tx} \in \mathit{log}$? such that ?$\mathit{quasi\_committed}[\mathit{tx}] = \bot$?: ?\label{line:quasi_committed_check}?
                let ?$\mathit{state}.\mathit{log}[\mathit{tx}] = [$?COMMIT, ?$\mathit{tx}$?, Set(Message) ?$\mathit{certificate}$?, View ?$v']$?
                broadcast ?$[$?COMMIT, ?$\mathit{tx}$?, ?$\mathit{certificate}$?, ?$v]$? to ?$v$?.members() ?\label{line:broadcast_commit_2}?
\end{lstlisting}

\para{Proof of correctness}
We now prove the commitment validity, integrity, learning, admissibility and signing property of \sysname (see \Cref{subsection:carbon_properties}).
The first property we prove is the commitment integrity.

\begin{theorem} [Commitment Integrity] \label{theorem:commitment_integrity}
Commitment integrity is satisfied.
\end{theorem}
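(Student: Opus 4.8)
The plan is to trace the committed transaction $\mathit{tx}$ backwards through the commit pipeline until we reach a correct server that verified the signature of $\mathit{tx}.\mathit{issuer}$ on $\mathit{tx}$, and then invoke unforgeability; every intermediate step is a quorum-intersection argument combined with a guard in the pseudocode. By hypothesis a commitment proof $\sigma_c$ with $\mathtt{verify\_commit}(\mathit{tx}, \sigma_c) = \top$ has been obtained, so by \Cref{lst:commitment_proof_verification} $\sigma_c$ is a set of at least $v.\mathtt{plurality()}$ $\mathtt{COMMITTED}$ messages for $\mathit{tx}$ sent by distinct members of a view $v$, and the view-path carried inside each such message witnesses via \Cref{definition:valid_view} that $v$ is valid. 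Since at most $v.\mathtt{plurality()} - 1$ members of a valid view are faulty, at least one of these $\mathtt{COMMITTED}$ messages was sent by a correct server $r \in v.\mathtt{members()}$. A correct server only broadcasts $[\mathtt{COMMITTED}, \mathit{tx}, \ldots]$ at line~\ref{line:broadcast_committed} of \Cref{lst:transaction_module}, which requires $\mathtt{allowed\_to\_broadcast\_committed}(\mathit{tx}) = \top$ and hence $\mathit{quasi\_committed}[\mathit{tx}] = \top$ at $r$ (line~\ref{line:quasi_committed}); and $\mathit{quasi\_committed}[\mathit{tx}]$ is set only when $\mathtt{allowed\_to\_quasi\_commit}(\mathit{tx}) = \top$, which by line~\ref{line:allowed_to_quasi_log} forces $\mathit{tx} \in \mathit{log}$ at $r$.

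Next I would show that this implies a valid transaction certificate for $\mathit{tx}$ exists. A correct server's $\mathit{log}$ starts empty and is extended only at line~\ref{line:update_log} (while processing a $\mathtt{COMMIT}$ message, guarded by $\mathtt{verify\_transaction\_certificate}(\mathit{tx}, \mathit{certificate}) = \top$) or at line~\ref{line:log_update_extract_log} (inside $\mathtt{refine\_state}$, via $\mathtt{extract\_log}$, which retains $\mathit{tx}$ only if $\mathtt{verify\_transaction\_proof}(\mathit{tx}, \cdot) = \top$, and by line~\ref{line:verify_transaction_proof} that proof again embeds $\mathtt{verify\_transaction\_certificate}$). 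Either way a set $\mathit{certificate}$ with $\mathtt{verify\_transaction\_certificate}(\mathit{tx}, \mathit{certificate}) = \top$ has been produced; by \Cref{lst:transaction_certificate_verification} it consists of at least $v'.\mathtt{quorum()}$ $\mathtt{ACK}$ messages for $\mathit{tx}$ from distinct members of a valid view $v'$. Since every message is signed by its sender and fewer than $v'.\mathtt{quorum()}$ members of $v'$ are faulty, at least one of these $\mathtt{ACK}$ messages was sent by a correct server $r'$.

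Finally, a correct server sends $[\mathtt{ACK}, \mathit{tx}, \ldots]$ only at line~\ref{line:send_ack}, which is reached only while processing a $\mathtt{PREPARE}$ message carrying $\mathit{tx}$ (line~\ref{line:prepare_rule}). As a correct process discards any message that is not properly signed, and $\mathit{tx}$ is signed by $\mathit{tx}.\mathit{issuer}$ (stated just before \Cref{lst:client}), the $\mathtt{PREPARE}$ message that $r'$ acted on contained a transaction validly signed by $\mathit{tx}.\mathit{issuer}$. Because $\mathit{tx}.\mathit{issuer}$ is correct, a faulty process cannot forge this signature, and a correct client produces such a signature only through an $\mathtt{issue}\;\mathit{tx}$ invocation (which broadcasts the $\mathtt{PREPARE}$ at line~\ref{line:prepare_2_c} of \Cref{lst:client}). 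Hence $\mathit{tx}$ was issued by $\mathit{tx}.\mathit{issuer}$, which is exactly commitment integrity.

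The main obstacle is the second step: one must be careful that \emph{both} ways a transaction can enter a correct server's $\mathit{log}$ are genuinely gated by $\mathtt{verify\_transaction\_certificate}$ — in particular that $\mathtt{extract\_log}$/$\mathtt{merge\_logs}$ never installs a transaction whose attached $\mathtt{COMMIT}$ proof fails verification, which relies on the $\mathtt{state}.\mathtt{verify}()$ checks (line~\ref{line:verify_state}) guarding every update to the $\mathit{state}$ variable together with the per-transaction check inside $\mathtt{extract\_log}$. Everything else reduces to the two ``quorum (resp.\ plurality) intersects the correct servers'' observations used above.
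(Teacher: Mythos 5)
Your proof is correct, but it takes a substantially longer route than the paper's. The paper's argument is essentially one line: the $\mathtt{COMMITTED}$ messages constituting $\sigma_c$ already \emph{contain} $\mathit{tx}$, and $\mathit{tx}$ carries $\mathit{tx}.\mathit{issuer}$'s signature (stated just before \Cref{lst:client}); since the issuer is correct and signatures of correct processes cannot be forged, the issuer must have signed --- and therefore issued --- $\mathit{tx}$. No backward trace through the pipeline is needed, because the transaction's own signature is visible directly in the commitment proof. Your version instead walks the chain $\mathtt{COMMITTED} \to \mathit{quasi\_committed} \to \mathit{log} \to$ transaction certificate $\to \mathtt{ACK} \to \mathtt{PREPARE}$, using a plurality-intersection and a quorum-intersection step to land on a correct server that actually verified the issuer's signature. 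What your approach buys is that it grounds the signature check in the explicit behavior of a concrete correct process (the model's rule that improperly signed messages are discarded), rather than relying on the implicit convention that a transaction appearing inside a verified $\sigma_c$ carries a valid issuer signature; what it costs is three extra hops, each of which must be checked against the pseudocode (and your second step, that both paths into $\mathit{log}$ are gated by $\mathtt{verify\_transaction\_certificate}$, essentially re-proves what the paper isolates as \Cref{lemma:in_state_certified}). Both arguments bottom out in the same two facts --- unforgeability and the rule that a correct client signs a transaction only upon issuing it --- so there is no gap, just unnecessary (if harmless) machinery.
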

\begin{proof}
Let a transaction $\mathit{tx}$ be committed and let $\mathit{tx}.\mathit{issuer}$ be correct.
Since $\mathit{tx}$ is committed, a (correct or faulty) process has obtained a commitment proof $\sigma_c$ such that $\mathtt{verify\_commit}(\mathit{tx}, \sigma_c) = \top$ (see \Cref{subsection:core_problem}, paragraph ``Commitment proofs'').
Since $\sigma_c$ is a collection of $\mathtt{COMMITTED}$ messages that contain $\mathit{tx}$ signed by $\mathit{tx}.\mathit{issuer}$ (see \Cref{lst:commitment_proof_verification}), $\mathit{tx}.\mathit{issuer}$ has issued $\mathit{tx}$ (since $\mathit{tx}.\mathit{issuer}$ signs $\mathit{tx}$ upon issuing it; line~\ref{line:issue_tx} of \Cref{lst:client}).
Hence, the theorem holds.
\end{proof}

The next property we prove is the commitment admissibility property.
In order to prove the property, we first define \emph{certified} transactions.

\begin{definition} [Certified Transaction] \label{definition:certified_in_view}
A transaction $\mathit{tx}$ is \emph{certified} if and only if a (correct or faulty) process obtains a set of messages $\mathit{certificate}$ such that $\mathtt{verify\_transaction\_certificate}(\mathit{tx}, \mathit{certificate}) = \top$.
Moreover, if the view specified in the $\mathtt{ACK}$ messages of $\mathit{certificate}$ (see \Cref{lst:transaction_certificate_verification}) is $v$, then $\mathit{tx}$ is certified in $v$.
\end{definition}

Next lemma proves that a transaction can be certified only in a view that is installable at time $\infty$.

\begin{lemma} \label{lemma:certified_installable}
Let a transaction $\mathit{tx}$ be certified in a view $v$.
Then, $\alpha_{\infty}(v) = \top$.
\end{lemma}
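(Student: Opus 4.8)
The plan is to trace the $\mathtt{ACK}$ messages constituting the certificate of $\mathit{tx}$ back to an $\mathtt{INSTALL}$ message that witnesses $v$ being installable at time $\infty$. The genesis case is immediate, since $\alpha_{\infty}(\mathit{genesis}) = \top$ by definition, so assume $v \neq \mathit{genesis}$. First I would establish that $v$ is a valid view: by \Cref{definition:certified_in_view} some (correct or faulty) process obtains a set $\mathit{certificate}$ with $\mathtt{verify\_transaction\_certificate}(\mathit{tx}, \mathit{certificate}) = \top$ whose $\mathtt{ACK}$ messages all carry view $v$, and by \Cref{lst:transaction_certificate_verification} each such message has the form $[\mathtt{ACK}, \mathit{tx}, v, \mathit{path}]$ with $\mathit{path}$ a view-path satisfying $\mathit{path}.\mathtt{destination}() = v$. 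Obtaining $\mathit{certificate}$ entails obtaining these embedded view-paths (and hence all their $\mathtt{INSTALL}$ messages), so by \Cref{lemma:knows_view_system} applied at time $\infty$ we get $v \in V(\infty)$, whence $v$ is valid by \Cref{invariant:valid}. In particular the failure model now applies to $v$: writing $n = |v.\mathtt{members}()|$, at most $v.\mathtt{plurality}() - 1 = \lfloor \tfrac{n-1}{3} \rfloor$ of its members are faulty.

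Next I would extract a correct server that installed $v$. The certificate collects $\mathtt{ACK}$ messages from at least $v.\mathtt{quorum}() = n - \lfloor \tfrac{n-1}{3} \rfloor$ distinct members of $v$, and since $v.\mathtt{quorum}() > \lfloor \tfrac{n-1}{3} \rfloor$ at least one of these senders, call it $r$, is a correct server. A correct server emits a message of the form $[\mathtt{ACK}, \mathit{tx}, v, \cdot]$ only at line~\ref{line:send_ack} of \Cref{lst:transaction_module}, inside the handler guarded by line~\ref{line:prepare_rule}, which requires $\mathit{current\_view}.\mathit{view} = v$ and $\mathit{current\_view}.\mathit{installed} = \top$ at the moment of sending; hence $r$ has triggered the $\mathtt{install}\ \mathtt{view}\ v$ event.

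Finally I would trace that install back to an $\mathtt{INSTALL}$ message with empty tail. Since $v \neq \mathit{genesis}$, a correct server triggers $\mathtt{install}\ \mathtt{view}\ v$ only at line~\ref{line:install} of \Cref{lst:reconfiguration_view_transition}, under the guard $\mathit{reconfiguration}.\mathit{sequence} = \{v\}$; this value was copied from the storage module's $\mathit{sequence}[v]$, which by \Cref{lst:storage_processing} was set to $\mathit{set}$ while $r$ processed some $m = [\mathtt{INSTALL}, \mathit{source}, \mathit{set}, \omega]$ with $m.\mathtt{destination}() = v$ and $\mathit{source} \in \mathit{history}$ at $r$ (so $\mathit{source}$ is valid by \Cref{lemma:history_valid}, hence $\mathit{source} \in V(\infty)$ by \Cref{lemma:valid_in_v}). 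Since $\mathit{set} = \mathit{sequence}[v] = \{v\}$ we get $m.\mathtt{destination}() = v$ and $m.\mathtt{tail}() = \emptyset$, and $r$ having obtained $m$ gives $m \in \mathcal{I}^{*}(\infty)$; thus $m$ witnesses $\alpha_{\infty}(v) = \top$. I expect the two delicate points to be (i) justifying that obtaining a certificate also means obtaining the view-paths nested inside its $\mathtt{ACK}$ messages, which is what licenses invoking the failure bound (valid only for valid views), and (ii) correctly matching the guard on the install line and tracking how $\mathit{reconfiguration}.\mathit{sequence}$ inherits its value from the storage module's $\mathit{sequence}$ map populated during $\mathtt{INSTALL}$ processing; the arithmetic inequality $v.\mathtt{quorum}() > v.\mathtt{plurality}() - 1$ is routine.
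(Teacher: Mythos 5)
Your proof is correct and follows essentially the same route as the paper's: extract a correct member of $v$ that sent an $\mathtt{ACK}$ (using the embedded view-paths to establish validity of $v$ and hence the failure bound), observe from the guard at line~\ref{line:prepare_rule} of \Cref{lst:transaction_module} that this server installed $v$, and trace the install back to an $\mathtt{INSTALL}$ message with destination $v$ and empty tail. The paper compresses your last step into ``similarly to \Cref{lemma:knows_view_system}''; your explicit tracing of $\mathit{reconfiguration}.\mathit{sequence} = \{v\}$ back through the storage module's $\mathit{sequence}$ map is just the spelled-out version of the same argument.
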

\begin{proof}
Since $\mathit{tx}$ is certified in $v$, at least a single correct server $r \in v.\mathtt{members()}$ sends the $\mathtt{ACK}$ message associated with $v$ for $\mathit{tx}$ (see \Cref{lst:transaction_certificate_verification}); note that $v$ is a valid view since every $\mathtt{ACK}$ message is accompanied by a view-path to $v$ (by line~\ref{line:check_certificate} of \Cref{lst:transaction_certificate_verification}).
Since a correct member of $v$ sends the message at line~\ref{line:send_ack} of \Cref{lst:transaction_module}, $v$ is installed at that correct server (by the check at line~\ref{line:prepare_rule} of \Cref{lst:transaction_module}).
If $v = \mathit{genesis}$, $\alpha_{\infty}(\mathit{genesis}) = \top$ (by definition) and the lemma holds.
Otherwise, $\alpha_{\infty}(v) = \top$ (similarly to \Cref{lemma:knows_view_system}) and the lemma is concluded.
\end{proof}

The important result before proving the commitment admissibility property is that conflicting transactions cannot be certified.
In order to prove this statement, we first show that no transaction stored in the $\mathit{state}.\mathit{allowed\_acks}$ variable of a correct server is ever removed.

\begin{lemma} \label{lemma:remember_allowed_acks}
Consider any client $c$ and any integer $i$.
Let $\mathit{allowed\_acks}_t$ be the value of the $\mathit{state}.\mathit{allowed\_acks}[c][i]$ variable at a correct server $r$ at time $t$.
Let $\mathit{allowed\_acks}_{t'}$ be the value of the $\mathit{state}.\mathit{allowed\_acks}[c][i]$ variable at $r$ at time $t' > t$.
Then, $\mathit{allowed\_acks}_t \subseteq \mathit{allowed\_acks}_{t'}$.
\end{lemma}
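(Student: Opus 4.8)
\textbf{Proof plan for Lemma~\ref{lemma:remember_allowed_acks}.}
The plan is to establish this as a monotonicity invariant by inspecting every line in the pseudocode of the transaction module (Listing~\ref{lst:transaction_module}) and the reconfiguration module (Listings~\ref{lst:reconfiguration_initialization}--\ref{lst:reconfiguration_view_transition}) at which the variable $\mathit{state}.\mathit{allowed\_acks}[c][i]$ can possibly be written for a correct server $r$. The claim is purely about how a single local variable evolves over time, so the argument is a straightforward case analysis with no distributed reasoning required; the only subtlety is to make sure I have found \emph{all} the write sites, including those hidden inside helper functions invoked from other modules.

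First I would enumerate the write sites. Within Listing~\ref{lst:transaction_module}, the variable is modified at line~\ref{line:remember_allowed_acks} (setting it to $\{\mathit{tx}\}$, but only when the guard at line~\ref{line:allowed_acks_check} holds, i.e. the current value is either $\emptyset$ or already $\{\mathit{tx}\}$ --- so the new value contains the old one) and at line~\ref{line:multiple_allowed_acks} (a pure union of the old value with $\{\mathit{tx}\}$, which trivially preserves containment). The initialization at server start sets it to $\emptyset$, which is the base of the monotone chain and not a decrease from any prior value. The one remaining place is the $\mathtt{refine\_state}$ function (line~\ref{line:refine_state}), called from the state-transfer machinery in Listing~\ref{lst:reconfiguration_state_transfer}: at line~\ref{line:state_allowed_acks_loop} it updates $\mathit{state}.\mathit{allowed\_acks}[c][i]$ to $\mathit{state}.\mathit{allowed\_acks}[c][i] \cup \mathit{rec\_state}.\mathit{allowed\_acks}[c][i]$, again a union with the old value. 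I should also confirm that the $\mathit{state}$ variable is never wholesale reassigned to a fresh record after initialization (e.g. the reconfiguration module never reinitializes $\mathit{state}$), so that every mutation of $\mathit{allowed\_acks}[c][i]$ indeed goes through one of these three sites.

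Given that enumeration, the proof proper is a short induction over the sequence of atomic steps taken by $r$ between times $t$ and $t'$. The induction hypothesis is that the value of $\mathit{state}.\mathit{allowed\_acks}[c][i]$ only grows; each of the three identified write operations either leaves it unchanged, replaces it by a superset (line~\ref{line:remember_allowed_acks} under its guard), or forms a union with it (lines~\ref{line:multiple_allowed_acks} and~\ref{line:state_allowed_acks_loop}), and no other step touches it, so the hypothesis is preserved. Composing the monotone steps from $t$ to $t'$ gives $\mathit{allowed\_acks}_t \subseteq \mathit{allowed\_acks}_{t'}$.

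The main obstacle I anticipate is completeness of the case analysis: the transaction and reconfiguration modules share state through functions like $\mathtt{refine\_state}$ and $\mathtt{installed}$, and a missed write site --- for instance inside a reset block executed during a view transition --- would invalidate the argument. So the bulk of the care goes into arguing that Listings~\ref{lst:reconfiguration_view_transition}, \ref{lst:reconfiguration_state_transfer}, and \ref{lst:transaction_module} contain no other assignment to $\mathit{state}.\mathit{allowed\_acks}$, and in particular that the ``reset'' bookkeeping in the view-transition rule (Listing~\ref{lst:reconfiguration_view_transition}) clears state-transfer scratch variables but never the $\mathit{state}$ record itself.
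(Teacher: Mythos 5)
Your proposal is correct and takes the same approach as the paper, whose proof is a one-line observation that no transaction is ever removed from $\mathit{state}.\mathit{allowed\_acks}$ at a correct server; you simply spell out the case analysis over the write sites (lines~\ref{line:remember_allowed_acks}, \ref{line:multiple_allowed_acks}, and \ref{line:state_allowed_acks_loop}) that justifies that observation. Your enumeration of write sites is complete, and the guard at line~\ref{line:allowed_acks_check} does ensure the assignment at line~\ref{line:remember_allowed_acks} is monotone, so the argument goes through.
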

\begin{proof}
The lemma follows from the fact that no transaction is ever removed from the $\mathit{state}.\mathit{allowed\_acks}$ variable at a correct server (see \Cref{lst:transaction_module}).
\end{proof}

The next lemma shows that two conflicting transactions cannot be certified in the same view.

\begin{lemma} \label{lemma:certified_in_the_same_view}
Let transactions $\mathit{tx}$ and $\mathit{tx}'$ be certified in a view $v$ such that (1) $\mathit{tx}.\mathit{issuer} = \mathit{tx}'.\mathit{issuer}$, and (2) $\mathit{tx}.\mathit{sn} = \mathit{tx}'.\mathit{sn}$.
Then, $\mathit{tx} = \mathit{tx}'$.
\end{lemma}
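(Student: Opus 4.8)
The plan is to use quorum intersection within the single view $v$ together with the bookkeeping that correct servers perform in the $\mathit{state}.\mathit{allowed\_acks}$ variable. First I would unfold the hypothesis: since $\mathit{tx}$ is certified in $v$, some (correct or faulty) process holds a set $\mathit{certificate}$ with $\mathtt{verify\_transaction\_certificate}(\mathit{tx}, \mathit{certificate}) = \top$ whose $\mathtt{ACK}$ messages all name $v$; by \Cref{lst:transaction_certificate_verification} this yields a set $Q \subseteq v.\mathtt{members}()$ with $|Q| \geq v.\mathtt{quorum}()$ such that, for every $r \in Q$, an $\mathtt{ACK}$ message naming $v$ for $\mathit{tx}$ was sent by $r$. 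By the no-forgery assumption (\Cref{appx:model}), the $\mathtt{ACK}$ messages ascribed to \emph{correct} members of $Q$ were genuinely sent by them. Symmetrically, $\mathit{tx}'$ gives a set $Q' \subseteq v.\mathtt{members}()$ with $|Q'| \geq v.\mathtt{quorum}()$.

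Next I would carry out the quorum-intersection step: with $n = |v.\mathtt{members}()|$, we have $|Q \cap Q'| \geq 2\,v.\mathtt{quorum}() - n = n - 2\lfloor\tfrac{n-1}{3}\rfloor \geq \lfloor\tfrac{n-1}{3}\rfloor + 1 = v.\mathtt{plurality}()$, the inequality holding because $3\lfloor\tfrac{n-1}{3}\rfloor \leq n-1$. Since at most $\lfloor\tfrac{n-1}{3}\rfloor$ members of the valid view $v$ are faulty (failure model of \Cref{appx:model}), there exists a \emph{correct} server $r \in Q \cap Q'$. Thus $r$ sent an $\mathtt{ACK}$ naming $v$ for $\mathit{tx}$ and an $\mathtt{ACK}$ naming $v$ for $\mathit{tx}'$, both at line~\ref{line:send_ack} of \Cref{lst:transaction_module}, and recall $\mathit{tx}.\mathit{issuer} = \mathit{tx}'.\mathit{issuer} =: c$ and $\mathit{tx}.\mathit{sn} = \mathit{tx}'.\mathit{sn} =: s$.

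Finally I would analyze the code path. Because $r$ is single-threaded, one of these two $\mathtt{ACK}$ sends happens first; without loss of generality say the $\mathtt{ACK}$ for $\mathit{tx}$ is sent first. Immediately before that send, $r$ executed line~\ref{line:remember_allowed_acks}, setting $\mathit{state}.\mathit{allowed\_acks}[c][s] = \{\mathit{tx}\}$; by \Cref{lemma:remember_allowed_acks} this variable can only grow, so $\mathit{tx} \in \mathit{state}.\mathit{allowed\_acks}[c][s]$ from then on. When $r$ later sends the $\mathtt{ACK}$ for $\mathit{tx}'$, the guard at line~\ref{line:allowed_acks_check} passed, so at that moment $\mathit{state}.\mathit{allowed\_acks}[c][s] \in \{\emptyset, \{\mathit{tx}'\}\}$; since it contains $\mathit{tx}$ it is nonempty, hence $\mathit{state}.\mathit{allowed\_acks}[c][s] = \{\mathit{tx}'\}$, which forces $\mathit{tx} = \mathit{tx}'$, as required. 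The main obstacle I anticipate is being careful about two side points in this last step: (i) that line~\ref{line:remember_allowed_acks} is indeed the only way a correct server reaches line~\ref{line:send_ack}, so the invariant $\mathit{tx} \in \mathit{state}.\mathit{allowed\_acks}[c][s]$ genuinely holds after the first send (other writes via $\mathtt{refine\_state}$ at line~\ref{line:state_allowed_acks_loop} only enlarge the set and thus do not break it), and (ii) making sure the "without loss of generality" ordering argument is airtight given the single-threaded execution model of \Cref{appx:model}.
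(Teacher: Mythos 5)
Your proposal is correct and follows essentially the same route as the paper's proof: quorum intersection within $v$ (justified by the failure model for the valid view $v$) yields a correct server that sent $\mathtt{ACK}$ messages for both transactions, and the monotonicity of $\mathit{state}.\mathit{allowed\_acks}$ (\Cref{lemma:remember_allowed_acks}) together with the guard at line~\ref{line:allowed_acks_check} and the write at line~\ref{line:remember_allowed_acks} forces $\mathit{tx} = \mathit{tx}'$. The only cosmetic differences are that the paper argues by contradiction and explicitly invokes \Cref{lemma:certified_installable} to establish that $v$ is valid (via the view-paths carried in the $\mathtt{ACK}$ messages), a point you assume but could justify identically.
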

\begin{proof}
By contradiction, let $\mathit{tx} \neq \mathit{tx}'$.
By \Cref{lemma:certified_installable}, $v$ is installable at time $\infty$ (i.e., $\alpha_{\infty}(v) = \top$).
Moreover, $v$ is a valid view.
Hence, at least quorum of members of $v$ are correct (see \Cref{appx:model}, paragraph ``Failure model'').

Since $\mathit{tx}$ (resp., $\mathit{tx}'$) is certified in $v$, a quorum of members of $v$ have sent the $\mathtt{ACK}$ message associated with $v$ for $\mathit{tx}$ (resp., $\mathit{tx}'$), by \Cref{lst:transaction_certificate_verification}.
Because of the quorum intersection, there exists a correct server $r \in v.\mathtt{members()}$ that has sent $\mathtt{ACK}$ messages for both transactions.
Without loss of generality, suppose that $r$ has sent the $\mathtt{ACK}$ message for $\mathit{tx}$ before the $\mathtt{ACK}$ message for $\mathit{tx}'$.

At the moment of sending the $\mathtt{ACK}$ message for $\mathit{tx}$, $\mathit{state}.\mathit{allowed\_acks}[\mathit{tx}.\mathit{issuer}][\mathit{tx}.\mathit{sn}] = \{\mathit{tx}\}$ at server $r$ (by line~\ref{line:remember_allowed_acks} of \Cref{lst:transaction_module}).
Similarly, at the moment of sending the $\mathtt{ACK}$ message for $\mathit{tx}'$, $\mathit{state}.\mathit{allowed\_acks}[\mathit{tx}'.\mathit{issuer} = \mathit{tx}.\mathit{issuer}][\mathit{tx}'.\mathit{sn} = \mathit{tx}.\mathit{sn}] = \{\mathit{tx}'\}$ (by line~\ref{line:remember_allowed_acks} of \Cref{lst:transaction_module}).
This is impossible due to \Cref{lemma:remember_allowed_acks}, hence $r$ does not send both $\mathtt{ACK}$ messages.
Therefore, $\mathit{tx} \neq \mathit{tx}'$ and the lemma holds.
\end{proof}

Now, we prove that any correct member of a view $v' \supset v$ is aware of the fact that a transaction $\mathit{tx}$ is certified in a view $v$, if that is indeed the case.

\begin{lemma} \label{lemma:remember_ack}
Let a transaction $\mathit{tx}$ be certified in a view $v$.
Consider time $t$ and a correct server $r \in v'.\mathtt{members()}$ with $\mathit{current\_view}.\mathit{view} = v'$ at server $r$ at time $t$, where $v' \supset v$.
Then, $\mathit{tx} \in \mathit{state}.\mathit{allowed\_acks}[\mathit{tx}.\mathit{issuer}][\mathit{tx}.\mathit{sn}]$ at server $r$ at time $t$.
\end{lemma}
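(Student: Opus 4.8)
\textbf{Proof plan for Lemma~\ref{lemma:remember_ack}.}
The plan is to trace how the information ``$\mathit{tx}$ is certified in $v$'' propagates into a correct server's $\mathit{state}.\mathit{allowed\_acks}$ variable and is then preserved across every view transition down to $v'$. First I would observe that, since $\mathit{tx}$ is certified in $v$, a quorum of members of $v$ sent the $\mathtt{ACK}$ message associated with $v$ for $\mathit{tx}$ (by \Cref{lst:transaction_certificate_verification}); in particular $v$ is a valid view, $\alpha_\infty(v) = \top$ by \Cref{lemma:certified_installable}, and at least $v.\mathtt{quorum()}$ of its members are correct (by the failure model), each of which set $\mathit{state}.\mathit{allowed\_acks}[\mathit{tx}.\mathit{issuer}][\mathit{tx}.\mathit{sn}] \ni \mathit{tx}$ at line~\ref{line:remember_allowed_acks} of \Cref{lst:transaction_module} before sending, and never removed it thereafter (\Cref{lemma:remember_allowed_acks}).

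The core of the argument is an induction along the chain of installable views from $v$ up to $v'$. Because $\mathit{current\_view}.\mathit{view} = v'$ at server $r$ with $v' \supset v$ and both are valid hence comparable, there is a well-defined sequence of installable views $v = w_0 \subset w_1 \subset \dots \subset w_k$, with each $w_{j+1}$ the smallest installable view strictly greater than $w_j$ and with $w_k \supseteq v'$ on $r$'s ``view-path'' (using \Cref{lemma:installable_in_path}, \Cref{lemma:installable_leads_to_first_biggest} and the fact that $r$ reached $v'$ via the reconfiguration module). The inductive claim is: for each $j$, every correct member of $w_j$ that sets its current view to $w_j$ has $\mathit{tx}$ in $\mathit{state}.\mathit{allowed\_acks}[\mathit{tx}.\mathit{issuer}][\mathit{tx}.\mathit{sn}]$ by the time it does so. The base case $j=0$ is the previous paragraph (for members of $w_1$ reached directly from $v$, the state-transfer quorum of $v$ includes a correct acceptor who sent the $\mathtt{ACK}$, whose $\mathit{state}$ is carried forward by \code{refine_state} at line~\ref{line:state_allowed_acks_loop} of \Cref{lst:transaction_module}). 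For the inductive step, when a correct server instantiates $w_{j+1}$ from $w_j$, the state-transfer protocol (\Cref{lst:reconfiguration_state_transfer}) delivers states from a quorum of $w_j$, which by the inductive hypothesis and quorum intersection includes at least one correct server whose state contains $\mathit{tx}$ in the relevant $\mathit{allowed\_acks}$ slot; \code{refine_state} unions all received $\mathit{allowed\_acks}$ maps, so the new server inherits it. The same reasoning must also handle the ``jump'' case where $r$ learns $w_{j+1}$'s state from a correct member of an even larger view via a $\mathtt{DISCHARGEMENT}$-style chain, exactly as in the proof of \Cref{lemma:eventually_prepared}; I would reuse that recursion verbatim, noting it terminates because there are only finitely many valid views (\Cref{lemma:finite_valid_views}). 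Finally, since $r$ reached $v'$ and $v' \subseteq w_k$, applying the claim at the appropriate $w_j$ with $w_j \subseteq v' \subseteq w_{j+1}$ and using \Cref{lemma:remember_allowed_acks} gives $\mathit{tx} \in \mathit{state}.\mathit{allowed\_acks}[\mathit{tx}.\mathit{issuer}][\mathit{tx}.\mathit{sn}]$ at $r$ at time $t$.

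The main obstacle I expect is making the state-transfer inheritance step airtight: I must argue that whenever a correct server transitions into a view $w$, the set of states it incorporates via \code{refine_state} genuinely contains the state of \emph{some} correct server that had already entered the \emph{source} view (or a larger valid view) with $\mathit{tx}$ recorded. This requires carefully combining the quorum/plurality thresholds in the \code{enough_states_received} and \code{prepared}-from-source rules (lines~\ref{line:prepared_from_source} and~\ref{line:prepared_from_bigger} of \Cref{lst:reconfiguration_state_transfer}) with the failure-model assumption that a quorum of each valid view is correct, and tracking that \code{refine_state} only ever \emph{grows} $\mathit{state}.\mathit{allowed\_acks}$ (never filters it) so nothing certified is lost even if the merged log itself fails the \code{verify} check at line~\ref{line:verify_state}. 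A secondary subtlety is confirming that a correct server which is a member of both $v$ and $v'$ (rather than newly joining at $v'$) also retains the entry — but this is immediate from \Cref{lemma:remember_allowed_acks} since such a server never clears the variable. Everything else is bookkeeping over the finitely-many-views induction already used repeatedly in \Cref{subsection:reconfiguration_liveness}.
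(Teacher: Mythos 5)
Your overall strategy matches the paper's: both rest on the quorum intersection between the $\mathtt{ACK}$-senders of $v$ and the state-transfer quorum, the fact that \code{refine_state} only unions $\mathit{allowed\_acks}$ entries (line~\ref{line:state_allowed_acks_loop} of \Cref{lst:transaction_module}), and the monotonicity of \Cref{lemma:remember_allowed_acks}. The skeletons differ: you induct along the chain of installable views $v = w_0 \subset w_1 \subset \dots$, whereas the paper inducts over time --- it isolates the \emph{first} moment $t^*$ at which any correct server's current view exceeds $v$, proves the claim there, and then maintains an invariant quantified over \emph{all} servers whose current view is any view (installable or auxiliary) greater than $v$. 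The paper's formulation is slightly more robust because the state-transfer source of a transition into $w_{j+1}$ need not be $w_j$ itself (it can be an auxiliary view strictly between $w_j$ and $w_{j+1}$, and the \code{enough_states_received} rule at line~\ref{line:prepared_from_bigger} of \Cref{lst:reconfiguration_state_transfer} can draw on servers sitting in such views), so your inductive hypothesis would have to be widened from ``members of $w_j$ who set their current view to $w_j$'' to all servers whose current view lies in $[w_j, w_{j+1})$; you gesture at this with the ``(or a larger valid view)'' caveat, and it is repairable, but as stated the hypothesis is too narrow.

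There is one concrete step you assert without an argument, and it is the linchpin of the base case: when quorum intersection yields a correct server $R$ that both sent the $\mathtt{ACK}$ for $\mathit{tx}$ in $v$ and sent a $\mathtt{STATE\text{-}UPDATE}$ to the transitioning server, you need $R$'s $\mathtt{ACK}$ to precede its $\mathtt{STATE\text{-}UPDATE}$ --- otherwise the transferred state need not contain $\mathit{tx}$ yet. The paper closes this by contradiction: a server that has already sent the $\mathtt{STATE\text{-}UPDATE}$ for source $v$ has set $\mathit{stop\_processing\_until}$ to (at least) $v$ (line~\ref{line:stop_processing_until} of \Cref{lst:reconfiguration_state_transfer}), so $\mathit{current\_view}.\mathit{processing} = \bot$ and the check at line~\ref{line:prepare_rule} of \Cref{lst:transaction_module} blocks it from ever sending an $\mathtt{ACK}$ associated with $v$ afterwards. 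Your plan's phrase ``had already entered the source view \dots with $\mathit{tx}$ recorded'' names the worry but does not supply this stop-processing argument; without it the inheritance step does not go through.
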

\begin{proof}
By \Cref{lemma:certified_installable}, $v$ is installable at time $\infty$ (i.e., $\alpha_{\infty}(v) = \top$).
Moreover, $v$ is a valid view.

Let $t^*$ be the first time at which a correct server sets its $\mathit{current\_view}.\mathit{view}$ variable to a view greater than $v$; let us denote that server by $r^*$ and the view by $v^*$.
We now prove that the statement of the lemma holds at time $t^*$ at server $r^*$.

Since $v \subset v^*$, $v^* \neq \mathit{genesis}$ (by \Cref{lemma:genesis_smallest_valid}).
Hence, $r^*$ has executed line~\ref{line:update_current_view} of \Cref{lst:reconfiguration_view_transition}.
This implies that \\$\mathit{reconfiguration}.\mathit{destination} = v^*$ at that time at server $r^*$.
Furthermore, $\mathit{reconfiguration}.\mathit{source} = v$.
Let us prove this claim.
If $\mathit{reconfiguration}.\mathit{source} \subset v$, this fact conflicts with \Cref{lemma:install_skip_installable}.
If $\mathit{reconfiguration}.\mathit{source} \supset v$, then $t^*$ would not be the first time at which a correct server sets its $\mathit{current\_view}.\mathit{view}$ variable to a view greater than $v$ (since the $\mathtt{INSTALL}$ message $m$ is obtained, where $m.\mathtt{source()} \supset v$).
Finally, $\mathit{reconfiguration}.\mathit{prepared} = \top$ at server $r^*$.

Since $r^* \in v^*.\mathtt{members}()$, $r^*$ has not executed line~\ref{line:prepared_free} of \Cref{lst:reconfiguration_view_transition}.
Let us investigate both places at which $r^*$ could have set its $\mathit{reconfiguration}.\mathit{prepared}$ variable to $\top$:
\begin{compactitem}
    \item line~\ref{line:prepared_1} of \Cref{lst:reconfiguration_state_transfer}:
    Server $r^*$ has received $\mathtt{STATE-UPDATE}$ messages from a quorum of members of $v$ (recall that $\mathit{reconfiguration}.\mathit{source} = v$ at that time).
    By the quorum intersection, there exists a correct server $R$ such that (1) $R$ has sent the $\mathtt{ACK}$ message for $\mathit{tx}$ in view $v$, and (2) $r^*$ has received the $\mathtt{STATE-UPDATE}$ message from $R$.
    
    First, we show that $R$ has sent the $\mathtt{ACK}$ message before the $\mathtt{STATE-UPDATE}$ message.
    By contradiction, suppose that $R$ has sent the $\mathtt{STATE-UPDATE}$ message to $r^*$ before the $\mathtt{ACK}$ message.
    At the moment of sending the $\mathtt{STATE-UPDATE}$ message (line~\ref{line:send_state_update} of \Cref{lst:reconfiguration_state_transfer}), $\mathit{current\_view}.\mathit{view}$ at server $R$ is smaller than or equal to $v$ or is equal to $\bot$ (otherwise, $t^*$ would not be the first time at which a correct server sets its $\mathit{current\_view}.\mathit{view}$ variable to a view greater than $v$).
    Hence, the $\mathit{stop\_processing\_until}$ variable (set at line~\ref{line:stop_processing_until} of \Cref{lst:reconfiguration_state_transfer}) at server $R$ is at least equal to $v$ at that time.
    Therefore, the check at line~\ref{line:prepare_rule} of \Cref{lst:transaction_module} does not pass while $\mathit{current\_view}.\mathit{view} = v$ (since $\mathit{current\_view}.\mathit{processing} = \bot$; see \Cref{lst:reconfiguration_view_transition}), which means that $R$ does not send the $\mathtt{ACK}$ message associated with view $v$.
    Hence, we reach contradiction.
    
    At the moment of sending the $\mathtt{ACK}$ message for $\mathit{tx}$ associated with view $v$, $\mathit{state}.\mathit{allowed\_acks}[\mathit{tx}.\mathit{issuer}][\mathit{tx}.\mathit{sn}] = \{\mathit{tx}\}$ at server $R$ (by line~\ref{line:remember_allowed_acks} of \Cref{lst:transaction_module}).
    Therefore, by \Cref{lemma:remember_allowed_acks}, at the moment of sending the $\mathtt{STATE-UPDATE}$ message to $r^*$, $\mathit{tx} \in \mathit{state}.\mathit{allowed\_acks}[\mathit{tx}.\mathit{issuer}][\mathit{tx}.\mathit{sn}]$ at server $R$.
    After executing the $\mathtt{refine\_state}$ function (see \Cref{lst:transaction_module}), $\mathit{tx} \in \mathit{state}.\mathit{allowed\_acks}[\mathit{tx}.\mathit{issuer}][\mathit{tx}.\mathit{sn}]$ at server $r^*$.
    By \Cref{lemma:remember_allowed_acks}, the statement of the lemma holds for $r^*$ at time $t^*$.

    \item line~\ref{line:prepared_2} of \Cref{lst:reconfiguration_state_transfer}:
    This scenario is impossible since it contradicts the fact that $t^*$ is the first time at which a correct server sets its $\mathit{current\_view}.\mathit{view}$ variable to a view greater than $v$.
\end{compactitem}
Note that, because of \Cref{lemma:remember_allowed_acks}, the statement of the lemma is satisfied at server $r^*$ at all times after time $t^*$.

Next, we introduce the following invariant.
Consider time $t^{**}$ and a server $r^{**}$ such that $\mathit{current\_view}.\mathit{view} = v^{**}$ at server $r^{**}$ at time $t^{**}$, where $v \subset v^{**}$.
Then, the statement of the lemma holds at server $r^{**}$ at time $t^{**}$.
Note that this invariant is satisfied at time $t^*$, which is the first time at which these conditions are satisfied.

Consider a server $r'$ that sets its $\mathit{current\_view}.\mathit{view}$ variable to $v' \supset v$.
Hence, $\mathit{reconfiguration}.\mathit{prepared} = \top$ at server $r'$.
Moreover, $\mathit{reconfiguration}.\mathit{source} \supseteq v$ (otherwise, \Cref{lemma:install_skip_installable} is contradicted).
Let us investigate both places at which $r'$ could have set its $\mathit{reconfiguration}.\mathit{prepared}$ variable to $\top$:
\begin{compactitem}
    \item line~\ref{line:prepared_1} of \Cref{lst:reconfiguration_state_transfer}:
    If $\mathit{reconfiguration}.\mathit{source} \supset v$, we know that a correct process from which $r'$ has received the $\mathtt{STATE-UPDATE}$ message has its $\mathit{current\_view}.\mathit{view}$ greater than $v$ at the moment of sending the $\mathtt{STATE-UPDATE}$ message (because of the obtained $\mathtt{INSTALL}$ message $m$, where $m.\mathtt{source()} \supset v$ and the quorum intersection).
    Thus, the invariant is preserved since the invariant holds for a correct server from which $r'$ receives the $\mathtt{STATE-UPDATE}$ message (see the $\mathtt{refine\_state}$ function in \Cref{lst:state_representation}).
    
    If $\mathit{reconfiguration}.\mathit{source} = v$, then $r'$ has received the $\mathtt{STATE-UPDATE}$ messages from at least $v.\mathtt{plurality}()$ of correct members of $v$.
    If at least a single such member had $\mathit{current\_view}.\mathit{view}$ variable equal to a view greater than $v$ at the moment of sending the $\mathtt{STATE-UPDATE}$ message, the invariant is preserved because of the invariant hypothesis (and the $\mathtt{refine\_state}$ function).
    Otherwise, the invariant is preserved because of the argument given in the base step.
    The statement of the lemma is preserved for server $r'$ forever by \Cref{lemma:remember_allowed_acks}.
    
    \item line~\ref{line:prepared_2} of \Cref{lst:reconfiguration_state_transfer}:
    Since $\mathit{reconfiguration}.\mathit{source} \supseteq v$, the invariant is preserved because of the invariant hypothesis (and the $\mathtt{state\_refine}$ function).
\end{compactitem}
Since the invariant is preserved, the lemma holds.
\end{proof}

The next lemma proves that two conflicting transactions cannot be certified in different views.

\begin{lemma} \label{lemma:no_conflicting_different_views}
Let a transaction $\mathit{tx}$ be certified in a view $v$.
Let a transaction $\mathit{tx}'$ be certified in a view $v' \neq v$.
Moreover, let (1) $\mathit{tx}.\mathit{issuer} = \mathit{tx}'.\mathit{issuer}$, and (2) $\mathit{tx}.\mathit{sn} = \mathit{tx}'.\mathit{sn}$.
Then, $\mathit{tx} = \mathit{tx}'$.
\end{lemma}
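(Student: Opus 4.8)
The plan is to reduce the claim to \Cref{lemma:remember_ack} together with the guard at line~\ref{line:allowed_acks_check} of \Cref{lst:transaction_module}. First I would note that the hypotheses are symmetric in $(\mathit{tx}, v)$ and $(\mathit{tx}', v')$, and that by \Cref{lemma:certified_installable} both $v$ and $v'$ are installable at time $\infty$, hence valid; therefore, by \Cref{lemma:all_comparable}, $v$ and $v'$ are comparable, and since $v \neq v'$ we may assume without loss of generality that $v \subset v'$.

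Next I would extract a correct server that acknowledged $\mathit{tx}'$ in $v'$. Since $\mathit{tx}'$ is certified in $v'$, by \Cref{lst:transaction_certificate_verification} at least $v'.\mathtt{quorum()}$ members of $v'$ sent the $\mathtt{ACK}$ message for $\mathit{tx}'$ tagged with $v'$; because $v'$ is a valid view, at least $v'.\mathtt{quorum()}$ of its members are correct (by the failure model), and since $2\,v'.\mathtt{quorum()} > |v'.\mathtt{members()}|$, quorum intersection yields a correct server $r \in v'.\mathtt{members()}$ that sent this $\mathtt{ACK}$ at line~\ref{line:send_ack}. By the guard of the $\mathtt{PREPARE}$-handling rule at line~\ref{line:prepare_rule} (which forces $\mathit{current\_view}.\mathit{view}$ to equal the view carried in the $\mathtt{PREPARE}$ message, here $v'$), $r$ sent this $\mathtt{ACK}$ at some time $t$ at which $\mathit{current\_view}.\mathit{view} = v'$; moreover, at that time the check at line~\ref{line:allowed_acks_check} passed, so $\mathit{state}.\mathit{allowed\_acks}[\mathit{tx}'.\mathit{issuer}][\mathit{tx}'.\mathit{sn}] \in \{\emptyset, \{\mathit{tx}'\}\}$ at $r$ immediately before line~\ref{line:remember_allowed_acks}.

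Then I would apply \Cref{lemma:remember_ack} to the transaction $\mathit{tx}$ (certified in $v$), the view $v' \supset v$, and the server $r$ at time $t$: since $r \in v'.\mathtt{members()}$ and $\mathit{current\_view}.\mathit{view} = v'$ at $r$ at time $t$, we obtain $\mathit{tx} \in \mathit{state}.\mathit{allowed\_acks}[\mathit{tx}.\mathit{issuer}][\mathit{tx}.\mathit{sn}]$ at $r$ at time $t$. As $\mathit{tx}.\mathit{issuer} = \mathit{tx}'.\mathit{issuer}$ and $\mathit{tx}.\mathit{sn} = \mathit{tx}'.\mathit{sn}$, this is the very same cell of $\mathit{allowed\_acks}$; combining with the previous paragraph, this cell at time $t$ both contains $\mathit{tx}$ (so it is nonempty) and is contained in $\{\mathit{tx}'\}$, which forces $\mathit{tx} = \mathit{tx}'$, proving the lemma. (One may alternatively invoke \Cref{lemma:certified_in_the_same_view} for the leftover case $v = v'$, but that is excluded by hypothesis.)

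The main obstacle I anticipate is the timing bookkeeping: ensuring that the exact instant at which $r$ emits its $\mathtt{ACK}$ for $\mathit{tx}'$ is an instant at which \Cref{lemma:remember_ack} applies — that is, $r$'s current view is precisely $v'$ and not a strictly larger view — and that the guard evaluated at line~\ref{line:allowed_acks_check} concerns the same $\mathit{allowed\_acks}$ cell referenced by \Cref{lemma:remember_ack}. Both facts follow from the shape of the $\mathtt{PREPARE}$ rule (the view equality guard at line~\ref{line:prepare_rule} and the matching of issuer/sequence number), but they should be stated explicitly; beyond that the argument is a straightforward composition of the cited lemmas.
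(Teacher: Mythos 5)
Your proposal is correct and follows essentially the same route as the paper's proof: both reduce $v \neq v'$ to $v \subset v'$ via comparability, extract a correct member of $v'$ that sent the $\mathtt{ACK}$ for $\mathit{tx}'$, invoke \Cref{lemma:remember_ack} to place $\mathit{tx}$ in the cell $\mathit{state}.\mathit{allowed\_acks}[\mathit{tx}.\mathit{issuer}][\mathit{tx}.\mathit{sn}]$, and derive $\mathit{tx} = \mathit{tx}'$ from the guard at line~\ref{line:allowed_acks_check}. The only (immaterial) difference is that you apply \Cref{lemma:remember_ack} directly at the instant the $\mathtt{ACK}$ is emitted, whereas the paper applies it at the instant the server updates its view to $v'$ and then propagates forward via the monotonicity of $\mathit{allowed\_acks}$ (\Cref{lemma:remember_allowed_acks}).
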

\begin{proof}
By contradiction, let $\mathit{tx} \neq \mathit{tx}'$.
By \Cref{lemma:certified_installable}, $v$ and $v'$ are installable views at time $\infty$.
Therefore, both views are valid.
By the view comparability property, $v$ and $v'$ are comparable.
Without loss of generality, let $v \subset v'$.

Since $\mathit{tx}'$ is certified in $v'$, there exists a correct server $r'$ that sends the $\mathtt{ACK}$ message for $\mathit{tx}'$ associated with view $v'$ (at line~\ref{line:send_ack} of \Cref{lst:transaction_module}).
By line~\ref{line:remember_allowed_acks} of \Cref{lst:transaction_module}, $\mathit{state}.\mathit{allowed\_acks}[\mathit{tx}'.\mathit{issuer}][\mathit{tx}'.\mathit{sn}] = \{\mathit{tx}'\}$ at server $r'$ at that time.

Let $r'$ set its $\mathit{current\_view}.\mathit{view}$ variable to $v'$ at some time $t_{v'}$; note that this indeed happens since $r'$ sends the $\mathtt{ACK}$ message associated with $v'$ (by the check at line~\ref{line:prepare_rule} of \Cref{lst:transaction_module}).
Because of \Cref{lemma:remember_ack}, $\mathit{tx} \in \mathit{state}.\mathit{allowed\_acks}[\mathit{tx}.\mathit{issuer} = \mathit{tx}'.\mathit{issuer}][\mathit{tx}.\mathit{sn} = \mathit{tx}'.\mathit{sn}]$ at server $r'$ at time $t_{v'}$.
Finally, $r'$ sends the $\mathtt{ACK}$ message for $\mathit{tx}'$ after time $t_{v'}$ (by the check at line~\ref{line:prepare_rule} of \Cref{lst:transaction_module}, $r'$ has already updated its $\mathit{current\_view}.\mathit{view}$ variable to $v'$).
Recall that $\mathit{state}.\mathit{allowed\_acks}[\mathit{tx}'.\mathit{issuer}][\mathit{tx}'.\mathit{sn}] = \{\mathit{tx}'\}$ at server $r'$ at that time.
This is impossible due to \Cref{lemma:remember_allowed_acks}, which implies that $\mathit{tx} = \mathit{tx}'$.
\end{proof}

Finally, we prove that conflicting transactions cannot be certified.

\begin{lemma} [Conflicting Transactions Cannot Be Certified] \label{lemma:no_conflicting}
Let transactions $\mathit{tx}$ and $\mathit{tx}'$ be certified such that (1) $\mathit{tx}.\mathit{issuer} = \mathit{tx}'.\mathit{issuer}$, and (2) $\mathit{tx}.\mathit{sn} = \mathit{tx}'.\mathit{sn}$.
Then, $\mathit{tx} = \mathit{tx}'$.
\end{lemma}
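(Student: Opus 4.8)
The plan is to reduce the statement to the two lemmas proved immediately before it by a simple case split on the views in which the two transactions are certified. First I would unfold Definition~\ref{definition:certified_in_view}: since $\mathit{tx}$ is certified, some (correct or faulty) process has obtained a set of messages $\mathit{certificate}$ with $\mathtt{verify\_transaction\_certificate}(\mathit{tx}, \mathit{certificate}) = \top$; by the check at line~\ref{line:check_certificate} of Listing~\ref{lst:transaction_certificate_verification}, every $\mathtt{ACK}$ message in $\mathit{certificate}$ names one common view, call it $v$, so $\mathit{tx}$ is certified in $v$. Symmetrically, fix a witnessing certificate for $\mathit{tx}'$ and let $v'$ be the view it names, so $\mathit{tx}'$ is certified in $v'$. (If a transaction happens to be certified in several views, any one such view serves.)

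Next I would split on whether $v = v'$. If $v = v'$, then $\mathit{tx}$ and $\mathit{tx}'$ are certified in the same view and satisfy $\mathit{tx}.\mathit{issuer} = \mathit{tx}'.\mathit{issuer}$ and $\mathit{tx}.\mathit{sn} = \mathit{tx}'.\mathit{sn}$ by hypothesis, so Lemma~\ref{lemma:certified_in_the_same_view} applies verbatim and gives $\mathit{tx} = \mathit{tx}'$. If $v \neq v'$, then $\mathit{tx}$ is certified in $v$, $\mathit{tx}'$ is certified in $v' \neq v$, and the issuer/sequence-number hypotheses hold, so Lemma~\ref{lemma:no_conflicting_different_views} applies and again yields $\mathit{tx} = \mathit{tx}'$. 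Since these two cases are exhaustive, the lemma follows.

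I do not expect any real obstacle: all the substance of the argument has already been discharged in the two preceding lemmas, which rest on quorum intersection inside a single view (together with the ``no removal from $\mathit{state}.\mathit{allowed\_acks}$'' invariant, Lemma~\ref{lemma:remember_allowed_acks}) and on the cross-view persistence provided by Lemma~\ref{lemma:remember_ack}. The only care required here is the bookkeeping step of turning the existential ``$\mathit{tx}$ is certified'' into a concrete witnessing view before invoking the per-view lemmas; the fact that such a view is valid and installable at time $\infty$ (Lemma~\ref{lemma:certified_installable}) is already used within those lemmas and need not be re-proved in this corollary.
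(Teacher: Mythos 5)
Your proposal is correct and takes essentially the same route as the paper: a case split on whether the two witnessing views coincide, delegating to Lemma~\ref{lemma:certified_in_the_same_view} and Lemma~\ref{lemma:no_conflicting_different_views} respectively. The only cosmetic difference is that the paper fixes the \emph{smallest} view in which each transaction is certified (and notes comparability of the two views), whereas you take arbitrary witnessing views; this is immaterial since neither of the two per-view lemmas requires minimality.
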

\begin{proof}
Let $v$ denote the smallest view in which $\mathit{tx}$ is certified and let $v'$ denote the smallest view in which $\mathit{tx}'$ is certified.
Since $v$ and $v'$ are valid views (follows from \Cref{lemma:certified_installable}), $v$ and $v'$ are comparable (by the view comparability property defined in \Cref{lst:reconfiguration_properties_new}; see \Cref{theorem:view_comparability}).
If $v = v'$, the lemma follows from \Cref{lemma:certified_in_the_same_view}.
Otherwise, the lemma follows from \Cref{lemma:no_conflicting_different_views}.
\end{proof}

The next lemma proves that the $\mathit{state}.\mathit{log}$ variable at a correct server only ``grows''.

\begin{lemma} \label{lemma:log_only_grows}
Consider a correct server $r$.
Let $\mathit{log}_t^r = \{\mathit{tx} \,|\, \mathit{state}.\mathit{log}[\mathit{tx}] \neq \bot \text{ at server } r  \text{ at some time } t\}$.
Let $\mathit{log}_{t'}^r = \{\mathit{tx} \,|\, \mathit{state}.\mathit{log}[\mathit{tx}] \neq \bot \text{ at server } r  \text{ at some time } t' > t\}$.
Then, $\mathit{log}_t^r \subseteq \mathit{log}_{t'}^r$.
\end{lemma}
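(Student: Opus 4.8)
The plan is to prove the claim by induction over the steps of $r$'s execution, establishing the invariant that the set $\{\mathit{tx} \,|\, \mathit{state}.\mathit{log}[\mathit{tx}] \neq \bot\}$ at $r$ is non-decreasing in time. When the $\mathtt{start}$ event triggers at $r$ (\Cref{lst:transaction_module}) the map $\mathit{state}.\mathit{log}$ is everywhere $\bot$, so the invariant holds initially, and since a correct process performs this initialization at most once there is no re-initialization to worry about. The first step is to enumerate every line at which $\mathit{state}.\mathit{log}$ is assigned at a correct server: only two occur, namely the per-transaction update $\mathit{state}.\mathit{log}[\mathit{tx}] = m$ in the $\mathtt{COMMIT}$ handler (line~\ref{line:update_log_state_representation} of \Cref{lst:transaction_module}), and the wholesale reassignment $\mathit{state}.\mathit{log} = \mathtt{merge\_logs}(\mathit{state}, \mathit{rec\_state})$ inside $\mathtt{refine\_state}$ (line~\ref{line:merge_states_all} of \Cref{lst:transaction_module}). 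I would also observe that no line of the reconfiguration, storage, client, or dischargement logic touches the transaction module's $\mathit{state}$ variable, so these two mutation points are exhaustive.

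For the first mutation point, the message $m$ matched at line~\ref{line:commit_message_rule} is a $\mathtt{COMMIT}$ message, hence $m \neq \bot$; the assignment changes $\mathit{state}.\mathit{log}$ only at the single index $\mathit{tx}$ and only to a non-$\bot$ value, so no existing non-$\bot$ entry is lost and the invariant is preserved. For the second mutation point I would inspect $\mathtt{merge\_logs}$ from \Cref{lst:state_representation}: it initialises its result to $\mathit{state}_1.\mathit{log}$ and thereafter only overwrites or inserts entries indexed by transactions of $\mathit{state}_2.\mathit{log}$, so every non-$\bot$ entry of $\mathit{state}_1 = \mathit{state}$ survives in the returned map --- provided $\mathtt{merge\_logs}$ does not evaluate to $\bot$, which happens exactly when $\mathit{state}_1.\mathtt{verify}() = \bot$ or $\mathit{state}_2.\mathtt{verify}() = \bot$. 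The first disjunct is excluded because line~\ref{line:merge_states_all} sits inside the conditional guarded by $\mathit{state}.\mathtt{verify}() = \top$ (line~\ref{line:verify_state} of \Cref{lst:transaction_module}). The second is excluded because $\mathtt{refine\_state}$ is only ever invoked on a set of state representations each of which has already passed $\mathtt{verify}()$: a $\mathtt{STATE-UPDATE}$ message is accepted into $\mathit{state\_transfer}.\mathit{states}[\cdot]$ only after the check $\mathit{state}.\mathtt{verify}() = \top$ at line~\ref{line:state_update_received} of \Cref{lst:reconfiguration_state_transfer}, and the two call sites of $\mathtt{refine\_state}$ pass either such a stored set or the output of $\mathtt{enough\_states\_received}$, which again returns only stored (hence verified) representations. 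Thus in this case too the returned map contains all non-$\bot$ entries of the old $\mathit{state}.\mathit{log}$, and the invariant is preserved.

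The step I expect to be the main obstacle is precisely this argument that $\mathtt{merge\_logs}$ never returns $\bot$ at the call site, since that is the only place where $\mathit{state}.\mathit{log}$ could conceivably be clobbered, and it requires carefully tracing the guards that establish the two $\mathtt{verify}()$ preconditions. A convenient way to remove any residual worry is to note that even if $\mathit{state}.\mathtt{verify}() = \bot$ held, line~\ref{line:verify_state} would skip the entire merge loop, leaving $\mathit{state}.\mathit{log}$ untouched (the subsequent $\mathit{log} = \mathit{state}.\mathtt{extract\_log}()$ at line~\ref{line:log_update_extract_log} modifies the separate variable $\mathit{log}$, not $\mathit{state}.\mathit{log}$); hence monotonicity of $\mathit{state}.\mathit{log}$ holds unconditionally. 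Since the set $\{\mathit{tx} \,|\, \mathit{state}.\mathit{log}[\mathit{tx}] \neq \bot\}$ never shrinks across a step, and $\mathit{log}_t^r$ and $\mathit{log}_{t'}^r$ are by definition its values at times $t < t'$, it follows that $\mathit{log}_t^r \subseteq \mathit{log}_{t'}^r$.
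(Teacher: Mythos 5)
Your proof is correct and follows the same approach as the paper: enumerate the two mutation points of $\mathit{state}.\mathit{log}$ (line~\ref{line:update_log_state_representation} and line~\ref{line:merge_states_all} of \Cref{lst:transaction_module}) and show that neither can revert a non-$\bot$ entry to $\bot$. The paper's own proof is far terser (it dismisses the merge case with ``preserved because of the $\mathtt{merge\_logs}$ function''), whereas you correctly identify and discharge the one real risk --- $\mathtt{merge\_logs}$ returning $\bot$ and clobbering the whole map --- so your version is, if anything, more complete.
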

\begin{proof}
In order to prove the lemma, we prove that $\mathit{state}.\mathit{log}[\mathit{tx}]$, for any transaction $\mathit{tx}$, is never reverted to $\bot$.
Therefore, we consider all the place at which $\mathit{state}.\mathit{log}$ variable at server $r$ changes:
\begin{compactitem}
    \item line~\ref{line:update_log_state_representation} of \Cref{lst:transaction_module}:
    The invariant is trivially preserved in this case.
    
    \item line~\ref{line:merge_states_all} of \Cref{lst:transaction_module}:
    In this case, the invariant is preserved because of the $\mathtt{merge\_logs}$ function (see \Cref{lst:state_representation}).
\end{compactitem}
Therefore, the lemma holds.
\end{proof}

The next important intermediate result we show is that $\mathit{state}.\mathtt{verify()} = \top$ at a correct server at all times.
In order to do so, we prove that the $\mathtt{merge\_logs}$ function (see \Cref{lst:state_representation}) invoked with two ``valid'' states (according to the $\mathtt{verify()}$ function) returns an admissible log.

\begin{lemma} \label{lemma:merge_verified}
Let $\mathit{state}_1$ (resp., $\mathit{state}_2$) be a state representation such that $\mathit{state}_1.\mathtt{verify()} = \top$ (resp., $\mathit{state}_2.\mathtt{verify()} = \top$).
Let $\mathit{log\_map} = \mathtt{merge\_logs}(\mathit{state}_1, \mathit{state}_2)$ and let $\mathit{log} = \{\mathit{tx} \,|\, \mathit{log\_map}[\mathit{tx}] \neq \bot\}$.
Then, $\mathit{log}$ is admissible.
\end{lemma}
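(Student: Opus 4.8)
The plan is to unpack the definitions so that the statement reduces to a combinatorial fact about two admissible logs built from certified transactions. Since $\mathit{state}_1.\mathtt{verify}() = \top$ we get $\mathit{state}_1.\mathtt{verify\_log}() = \top$, so $\mathit{state}_1.\mathtt{extract\_log}()$ does not return $\bot$; hence every proof stored in $\mathit{state}_1.\mathit{log}$ is a valid $\mathtt{COMMIT}$ message carrying a valid transaction certificate (via $\mathtt{verify\_transaction\_proof}$ and $\mathtt{verify\_transaction\_certificate}$, \Cref{lst:transaction_certificate_verification}), the extracted set $\mathit{log}_1 := \{\mathit{tx} \mid \mathit{state}_1.\mathit{log}[\mathit{tx}] \neq \bot\}$ satisfies $\mathtt{admissible}(\mathit{log}_1) = \top$, and each $\mathit{tx} \in \mathit{log}_1$ is certified (\Cref{definition:certified_in_view}); the same holds for $\mathit{state}_2$ and $\mathit{log}_2$. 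Inspecting $\mathtt{merge\_logs}$ (which does not return $\bot$ here, both states verifying): the returned map $\mathit{log\_map}$ has non-$\bot$ keys exactly $\mathit{log}_1 \cup \mathit{log}_2$, so $\mathit{log} = \mathit{log}_1 \cup \mathit{log}_2$ as transaction sets, and each $\mathit{log\_map}[\mathit{tx}]$ is one of the two original (valid) proofs, so every transaction of $\mathit{log}$ is certified as well.

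Next I would invoke \Cref{lemma:no_conflicting}: since all transactions in $\mathit{log}_1 \cup \mathit{log}_2$ are certified, any two with the same issuer and the same sequence number coincide. Combining this with the ``no conflicting transactions'' and ``no gaps'' clauses of $\mathtt{admissible\_client\_log}$ applied to $\mathit{log}_1$ and to $\mathit{log}_2$, for every client $c$ the restriction $\mathit{log}_j|_c$ is the initial segment $\{\mathit{tx} \mid \mathit{tx}.\mathit{issuer} = c,\ 1 \le \mathit{tx}.\mathit{sn} \le h_j^c\}$ for some height $h_j^c \ge 0$, and $\mathit{log}_1$, $\mathit{log}_2$ agree on every shared sequence number; hence one of $\mathit{log}_1|_c$, $\mathit{log}_2|_c$ contains the other and $(\mathit{log}_1 \cup \mathit{log}_2)|_c$ equals the larger one, say $\mathit{log}_{i(c)}|_c$. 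Since $\mathtt{admissible}(\mathit{log}_{i(c)}) = \top$ runs $\mathtt{admissible\_client\_log}(c, \mathit{log}_{i(c)}|_c)$ and obtains $\top$, and that routine's checks depend only on the client's own transactions and its initial balance, the per-client clause of $\mathtt{admissible}$ (\Cref{line:admissible_client_log}) passes for $\mathit{log}_1 \cup \mathit{log}_2$.

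It then remains to check the two global clauses. The deposit clause (\Cref{line:check_withdrawal_admissible}) is immediate: a deposit $\mathit{tx}$ in the union lies in some $\mathit{log}_j$, and admissibility of $\mathit{log}_j$ gives $\mathit{tx}.\mathit{withdrawal} \in \mathit{log}_j \subseteq \mathit{log}_1 \cup \mathit{log}_2$. For the acyclicity clause (\Cref{line:no_cycles}) I would prove a downward-closure fact: if $\mathit{tx} \in \mathit{log}_1$ then every transaction that precedes $\mathit{tx}$ in $\mathit{log}_1 \cup \mathit{log}_2$ already lies in $\mathit{log}_1$, because a single precedence step into $\mathit{tx}$ is either a same-issuer step (predecessor in $\mathit{log}_1$ by its ``no gaps'' clause) or a withdrawal-to-deposit step (predecessor $= \mathit{tx}.\mathit{withdrawal} \in \mathit{log}_1$ by its deposit clause), and one inducts on chain length; symmetrically for $\mathit{log}_2$. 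Consequently, if $\mathit{tx}$ preceded itself in $\mathit{log}_1 \cup \mathit{log}_2$, the entire witnessing chain would sit inside whichever $\mathit{log}_j$ contains $\mathit{tx}$, hence $\mathit{tx}$ would precede itself in $\mathit{log}_j$, contradicting $\mathtt{admissible}(\mathit{log}_j) = \top$. With all three clauses verified, $\mathtt{admissible}(\mathit{log}) = \top$.

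The main obstacle I anticipate is the bookkeeping in the second and third paragraphs rather than any deep argument: one must make sure ``certified'' genuinely applies to every key of the merged map (so \Cref{lemma:no_conflicting} is available), must get the per-client admissibility of the \emph{larger} segment to transfer verbatim to the union (this hinges on $\mathtt{admissible\_client\_log}$ being a purely per-client predicate, so that once $(\mathit{log}_1 \cup \mathit{log}_2)|_c$ is literally one of $\mathit{log}_1|_c$, $\mathit{log}_2|_c$ the checks are inherited), and must use the fact that precedence ``in $X$'' requires all intermediate transactions of the chain to lie in $X$, which is exactly what lets a chain inside $\mathit{log}_j$ witness precedence inside $\mathit{log}_j$.
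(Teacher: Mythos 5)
Your proof is correct and follows essentially the same route as the paper's: identify $\mathit{log}$ with $\mathit{log}_1 \cup \mathit{log}_2$, observe that every transaction in the union is certified so that \Cref{lemma:no_conflicting} forces the per-client restrictions to be nested initial segments (hence $(\mathit{log}_1 \cup \mathit{log}_2)|_c$ is literally one of $\mathit{log}_1|_c$, $\mathit{log}_2|_c$ and inherits $\mathtt{admissible\_client\_log}$), and dispatch the deposit clause at \cref{line:check_withdrawal_admissible} directly from admissibility of each $\mathit{log}_j$. The only difference is that you also explicitly verify the acyclicity check at \cref{line:no_cycles} via a downward-closure argument, a step the paper's proof passes over in silence; that addition is sound and makes the argument more complete.
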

\begin{proof}
Let $\mathit{log}_1 = \mathit{state}_1.\mathtt{extract\_log()}$ and $\mathit{log}_2 = \mathit{state}_2.\mathtt{extract\_log()}$.
Every transaction $\mathit{tx} \in \mathit{log}_1 \cup \mathit{log}_2$ is certified (follows from the $\mathtt{verify\_transaction\_proof}$ function and the fact that $\mathit{state}_1.\mathtt{verify()} = \mathit{state}_2.\mathtt{verify()} = \top$).
Furthermore, $\mathit{log} = \mathit{log}_1 \cup \mathit{log}_2$.

We now consider the $\mathtt{admissible}(\mathit{log})$ function (see \Cref{lst:admissible_transaction_set}):
\begin{compactenum}
    \item The check at line~\ref{line:check_withdrawal_admissible} of \Cref{lst:admissible_transaction_set} passes since $\mathit{log} = \mathit{log}_1 \cup \mathit{log}_2$ and $\mathtt{admissible}(\mathit{log}_1) = \mathtt{admissible}(\mathit{log}_2) = \top$ (since $\mathit{state}_1.\mathtt{verify()} = \mathit{state}_2.\mathtt{verify()} = \top$).
    
    \item Let $\mathit{log}_c = \{\mathit{tx} \,|\, \mathit{tx} \in \mathit{log} \text{ and } \mathit{tx}.\mathit{issuer} = c\}$, for some client $c$.
    Let $\mathit{log}_c^1 = \{\mathit{tx} \,|\, \mathit{tx} \in \mathit{log}_1 \text{ and } \mathit{tx}.\mathit{issuer} = c\}$ and $\mathit{log}_c^2 = \{\mathit{tx} \,|\, \mathit{tx} \in \mathit{log}_2 \text{ and } \mathit{tx}.\mathit{issuer} = c\}$.
    Since $\mathit{state}_1.\mathtt{verify()} = \mathit{state}_2.\mathtt{verify()} = \top$, $\mathit{log} = \mathit{log}_1 \cup \mathit{log}_2$ and \Cref{lemma:no_conflicting}, $\mathit{log}_c = \mathit{log}_c^1$ or $\mathit{log}_c = \mathit{log}_c^2$.
    Therefore, $\mathtt{admissible\_client\_log}(c, \mathit{log}_c) = \top$ (because $\mathit{state}_1.\mathtt{verify()} = \mathit{state}_2.\mathtt{verify()} = \top$).
\end{compactenum}
Therefore, $\mathit{log}$ is indeed admissible.
\end{proof}

The next lemma proves that, if $\mathit{state}.\mathit{log}[\mathit{tx}] \neq \bot$ at a correct server, then $\mathit{tx}$ is certified.

\begin{lemma} \label{lemma:in_state_certified}
At all times, if $\mathit{state}.\mathit{log}[\mathit{tx}] \neq \bot$ at a correct server, then $\mathit{tx}$ is certified.
\end{lemma}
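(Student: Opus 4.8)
The plan is to argue by induction on the execution, tracking every write to the $\mathit{state}.\mathit{log}$ variable of a correct server. The base case is immediate: initially $\mathit{state}.\mathit{log}[\mathit{tx}] = \bot$ for every transaction (line~\ref{line:log_init} of \Cref{lst:transaction_module}), so the claim holds vacuously. For the inductive step, by inspection of \Cref{lst:transaction_module} (as already observed in the proof of \Cref{lemma:log_only_grows}) the only two places where a correct server stores a non-$\bot$ value into $\mathit{state}.\mathit{log}$ are line~\ref{line:update_log_state_representation} and line~\ref{line:merge_states_all}, so it suffices to show each of these writes preserves the invariant.

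First, consider line~\ref{line:update_log_state_representation}. Here $\mathit{state}.\mathit{log}[\mathit{tx}]$ is set to the message $m = [\mathtt{COMMIT}, \mathit{tx}, \mathit{certificate}, v]$ being processed under the rule at line~\ref{line:commit_message_rule}, whose guard requires $\mathtt{verify\_transaction\_certificate}(\mathit{tx}, \mathit{certificate}) = \top$. Since the correct server now holds $\mathit{certificate}$ in its local memory, \Cref{definition:certified_in_view} gives that $\mathit{tx}$ is certified. Second, consider line~\ref{line:merge_states_all}, reached inside $\mathtt{refine\_state}$: $\mathit{state}.\mathit{log}$ is replaced by $\mathtt{merge\_logs}(\mathit{state}, \mathit{rec\_state})$ for some received state representation $\mathit{rec\_state}$. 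By \Cref{lst:state_representation}, $\mathtt{merge\_logs}$ returns a non-$\bot$ map only when $\mathit{state}.\mathtt{verify()} = \top$ and $\mathit{rec\_state}.\mathtt{verify()} = \top$, and the resulting map contains exactly the transactions already present in $\mathit{state}.\mathit{log}$ together with the $\mathit{tx}$ having $\mathit{rec\_state}.\mathit{log}[\mathit{tx}] \neq \bot$. For the former group, the induction hypothesis yields certification. For the latter, $\mathit{rec\_state}.\mathtt{verify()} = \top$ implies $\mathit{rec\_state}.\mathtt{verify\_log()} = \top$, hence $\mathit{rec\_state}.\mathtt{extract\_log()} \neq \bot$ (line~\ref{line:check_extract_log_verify_log}), which forces $\mathtt{verify\_transaction\_proof}(\mathit{tx}, \mathit{rec\_state}.\mathit{log}[\mathit{tx}]) = \top$; thus $\mathit{rec\_state}.\mathit{log}[\mathit{tx}] = [\mathtt{COMMIT}, \mathit{tx}, \mathit{certificate}, v]$ with $\mathtt{verify\_transaction\_certificate}(\mathit{tx}, \mathit{certificate}) = \top$, and since the correct server now stores this message (and therefore $\mathit{certificate}$), \Cref{definition:certified_in_view} again gives that $\mathit{tx}$ is certified.

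The main obstacle is the bookkeeping around $\mathtt{refine\_state}$. One has to justify that every $\mathit{rec\_state}$ handed to $\mathtt{merge\_logs}$ indeed satisfies $\mathtt{verify()} = \top$: this holds because such state representations are only ever stored (at line~\ref{line:state_update_received} of \Cref{lst:reconfiguration_state_transfer}) after their $\mathtt{verify()}$ check passes. One must also check that chaining several merges in the loop does not silently collapse $\mathit{state}.\mathit{log}$ to $\bot$, which appeals to the preservation of the server's own state verification (cf.\ \Cref{lemma:merge_verified} for the $\mathit{log}$ component); and in any case, even were a merge to fail, that would only drop transactions from $\mathit{state}.\mathit{log}$, never introduce an uncertified one, so the invariant is unaffected. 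Once these points are settled, the two write sites are handled exactly as above and the induction closes.
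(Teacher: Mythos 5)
Your proof is correct and follows essentially the same route as the paper's: an induction over the execution that identifies line~\ref{line:update_log_state_representation} and line~\ref{line:merge_states_all} as the only write sites, justifies the first via the guard at line~\ref{line:commit_message_rule} and \Cref{definition:certified_in_view}, and the second via the $\mathtt{verify()}$ checks on both states and the behaviour of $\mathtt{merge\_logs}$. The paper's version is merely terser; your additional bookkeeping (e.g., that $\mathit{rec\_state}.\mathtt{verify()} = \top$ is enforced at line~\ref{line:state_update_received}) is consistent with it.
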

\begin{proof}
Initially, the statement of the lemma holds (since $\mathit{state}$ does not ``contain'' any information).
We introduce the invariant that the statement of the lemma holds and we consider all the places at which $\mathit{state}.\mathit{log}$ is modified:
\begin{compactitem}
    \item line~\ref{line:update_log} of \Cref{lst:transaction_module}:
    The invariant is preserved since the newly inserted transaction is certified (by line~\ref{line:commit_message_rule} of \Cref{lst:transaction_module}).
    
    \item line~\ref{line:merge_states_all} of \Cref{lst:transaction_module}:
    The invariant is preserved since $\mathit{state}.\mathtt{verify()} = \top$ (by line~\ref{line:verify_state} of \Cref{lst:transaction_module}) and $\mathit{rec\_state}.\mathtt{verify()} = \top$ (ensured by line~\ref{line:state_update_received} of \Cref{lst:reconfiguration_state_transfer}) and the $\mathtt{merge\_logs}$ function (\Cref{lst:state_representation}).
\end{compactitem}
The lemma holds.
\end{proof}

The next lemma proves that $\mathit{state}.\mathtt{verify()} = \top$ at a correct server.

\begin{lemma} \label{lemma:always_verified}
At all times, $\mathit{state}.\mathtt{verify()} = \top$ at a correct server.
\end{lemma}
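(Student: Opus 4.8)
The plan is to prove the statement as an invariant, by induction on the computational steps of a correct server $r$. Initially it holds trivially: right after $r$ starts, $\mathit{state}.\mathit{log}[\mathit{tx}] = \bot$ for every transaction $\mathit{tx}$ and $\mathit{state}.\mathit{allowed\_acks}[c][i] = \emptyset$ for every client $c$ and integer $i$ (see the initialization in \Cref{lst:transaction_module}), so $\mathit{state}.\mathtt{extract\_log()} = \emptyset$ is admissible and $\mathtt{verify\_allowed\_acks()}$ holds vacuously. It then suffices to check that $\mathit{state}.\mathtt{verify()} = \top$ is preserved by each line at which a correct server modifies $\mathit{state}$: (i) $\mathit{state}.\mathit{log}[\mathit{tx}] = m$ when a $\mathtt{COMMIT}$ message is processed (line~\ref{line:update_log_state_representation}); (ii) $\mathit{state}.\mathit{log} = \mathtt{merge\_logs}(\cdot,\cdot)$ inside $\mathtt{refine\_state}$ (line~\ref{line:merge_states_all}); (iii) the updates to $\mathit{state}.\mathit{allowed\_acks}$ in the $\mathtt{PREPARE}$ handler (lines~\ref{line:remember_allowed_acks} and~\ref{line:multiple_allowed_acks}); and (iv) the merge of $\mathit{allowed\_acks}$ inside $\mathtt{refine\_state}$ (line~\ref{line:state_allowed_acks_loop}).

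Next I would dispatch the routine cases. For $\mathtt{verify\_allowed\_acks()}$: in (iii) a transaction $\mathit{tx}$ is only ever inserted into $\mathit{state}.\mathit{allowed\_acks}[\mathit{tx}.\mathit{issuer}][\mathit{tx}.\mathit{sn}]$, so the predicate is trivially maintained; in (iv) each merged set $\mathit{rec\_state}.\mathit{allowed\_acks}[c][i]$ comes from a state with $\mathit{rec\_state}.\mathtt{verify()} = \top$, guaranteed by the guard at line~\ref{line:state_update_received} of \Cref{lst:reconfiguration_state_transfer}, so every element of it has issuer $c$ and sequence number $i$. Since (iii) and (iv) never touch $\mathit{state}.\mathit{log}$, $\mathtt{verify\_log()}$ is unchanged, so by the induction hypothesis $\mathit{state}.\mathtt{verify()}$ stays $\top$ across all $\mathit{allowed\_acks}$ updates; in particular the conditional at line~\ref{line:verify_state} of \Cref{lst:transaction_module} is always taken, so the log merge in $\mathtt{refine\_state}$ does run. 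For the log merge (ii): both $\mathit{state}$ and $\mathit{rec\_state}$ satisfy $\mathtt{verify()} = \top$ at that point, so \Cref{lemma:merge_verified} gives that the merged log is admissible, and every entry of the merged map is a valid $\mathtt{COMMIT}$ message by \Cref{lemma:in_state_certified} (equivalently by $\mathtt{verify\_transaction\_proof}$), hence $\mathtt{extract\_log()}$ does not return $\bot$; thus $\mathtt{verify\_log()} = \top$ after (ii).

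The remaining case (i), extending the log with a single committed transaction $\mathit{tx}$, is the main obstacle. After the update $\mathtt{extract\_log()}$ still does not fail, since $m$ is a $\mathtt{COMMIT}$ message whose certificate passes $\mathtt{verify\_transaction\_certificate}$ (the guard at line~\ref{line:commit_message_rule}); the real work is showing $\mathit{log} \cup \{\mathit{tx}\}$ is admissible. Most clauses of $\mathtt{admissible}$ follow from the guard $\mathtt{allowed\_to\_commit\_confirm}(\mathit{tx}) = \top$ together with a side invariant I would carry along — that $\mathit{log\_height}[c]$ always equals the largest sequence number of $c$'s transactions in $\mathit{log}$ and that $\mathit{log}$ is admissible (so each client's transactions already have adjacent sequence numbers): since $\mathit{tx}.\mathit{sn} = \mathit{log\_height}[\mathit{tx}.\mathit{issuer}] + 1$, adding $\mathit{tx}$ creates no conflict and no gap, and since (when $\mathit{tx}$ is a deposit) $\mathit{tx}.\mathit{withdrawal} \in \mathit{log}$, the ``deposits accompanied by their withdrawals'' clause is preserved and $\mathit{tx}$ is maximal in the precedence order, so no cycle is introduced. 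The delicate clauses are the balance constraint and ``at most one deposit per withdrawal'', which $\mathtt{allowed\_to\_commit\_confirm}$ does \emph{not} check — these are enforced only by a correct server when it emits an $\mathtt{ACK}$ (lines~\ref{line:check_start}--\ref{line:check_stop}). Here I would use that $\mathit{tx}$ is certified, so some correct server $r'$ sent an $\mathtt{ACK}$ for $\mathit{tx}$ in its view, and argue — via the $\mathit{allowed\_acks}$ bookkeeping and the state-transfer discipline, in the spirit of \Cref{lemma:remember_ack} — that the prefix of $\mathit{tx}.\mathit{issuer}$'s transactions (and the set of used withdrawals) visible to $r'$ at $\mathtt{ACK}$ time is reflected in the state of any later correct server committing $\mathit{tx}$, so that the checks $r'$ performed transfer to the clauses of $\mathtt{admissible}(\mathit{log} \cup \{\mathit{tx}\})$. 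Reconciling these locally-checked balance/double-spend conditions with the global admissibility predicate across views is the crux; everything else is bookkeeping.
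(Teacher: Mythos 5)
Your proof follows essentially the same route as the paper's: the same induction over the five modification sites of $\mathit{state}$, the same dispatch of the $\mathit{allowed\_acks}$ updates and of the merge case via \Cref{lemma:merge_verified}, and the same treatment of the single-insertion case (no conflicts via certification and \Cref{lemma:no_conflicting}, no gaps via the sequence-number guard, and the balance/double-deposit clauses discharged by appealing to the correct server that emitted an $\mathtt{ACK}$ after running the checks at lines~\ref{line:check_start}--\ref{line:check_stop}). The step you flag as the crux is resolved in the paper by exactly the ingredients you name---admissibility of $\mathit{state}_{\mathit{before}}$, the $\mathtt{ACK}$-sender's local checks, and \cref{lemma:no_conflicting,lemma:in_state_certified}---and is argued there no more explicitly than in your sketch.
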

\begin{proof}
Initially, $\mathit{state}.\mathtt{verify()} = \top$ (since $\mathit{state}$ does not ``contain'' any information).
We introduce the invariant that $\mathit{state}.\mathtt{verify()} = \top$ and check the next modification of the variable.
Let us consider all the places at which such modification can occur:
\begin{compactitem}
    \item line~\ref{line:remember_allowed_acks} of \Cref{lst:transaction_module}:
    The only change introduced is that $\mathit{state}.\mathit{allowed\_acks}[\mathit{tx}.\mathit{issuer}][\mathit{tx}.\mathit{sn}] = \{\mathit{tx}\}$, for some transaction $\mathit{tx}$.
    The $\mathtt{verify\_allowed\_acks}$ function still returns $\top$, which means that the invariant is preserved.
    
    \item line~\ref{line:multiple_allowed_acks} of \Cref{lst:transaction_module}:
    Similarly to the previous case, the $\mathtt{verify\_allowed\_acks}$ function returns $\top$, which implies that the invariant is preserved.
    
    \item line~\ref{line:update_log_state_representation} of \Cref{lst:transaction_module}:
    Let $\mathit{state}_{\mathit{new}}$ be the the value of the $\mathit{state}$ variable \emph{after} the execution of this line.
    First, note that $\mathit{state}_{\mathit{new}}.\mathtt{extract\_log()} = \mathit{log} \neq \bot$.
    Moreover, $\mathit{state}_{\mathit{new}}.\mathtt{verify\_allowed\_acks()} = \top$.
    Therefore, it is left to show that $\mathit{log}$ is admissible.
    
    Let $\mathit{log}_{\mathit{before}} = \{\mathit{tx} \,|\, \mathit{state}_{\mathit{before}}.\mathit{log}[\mathit{tx}] \neq \bot\}$, where $\mathit{state}_{\mathit{before}}$ is the value of the $\mathit{state}$ variable \emph{before} the execution of this line.
    Hence, $\mathit{log} = \mathit{log}_{\mathit{before}} \cup \{\mathit{tx}\}$, where $\mathit{tx}$ is the transaction inserted at this line.
    Let $c = \mathit{tx}.\mathit{issuer}$.
    Let us consider the $\mathtt{admissible}(\mathit{log})$ function (see \Cref{lst:admissible_transaction_set}):
    \begin{compactenum}
        \item The check at line~\ref{line:no_cycles} of \Cref{lst:admissible_transaction_set} passes due to the construction of $\mathit{log}$.
    
        \item The check at line~\ref{line:check_withdrawal_admissible} of \Cref{lst:admissible_transaction_set} passes since $\mathit{log}_{\mathit{before}}$ is admissible and, if $\mathit{tx}$ is a deposit transaction, then $\mathit{tx}.\mathit{withdrawal} \in \mathit{log}_{\mathit{before}}$ (by the check at line~\ref{line:check_withdrawal_when_including_deposit} of \Cref{lst:transaction_module}).
        
        \item Consider any client $c' \neq c$.
        Let $\mathit{log}_{c'} = \{\mathit{tx} \,|\, \mathit{tx} \in \mathit{log} \text{ and } \mathit{tx}.\mathit{issuer} = c'\}$ and let $\mathit{log}_{c'}^{\mathit{before}} = \{\mathit{tx} \,|\, \mathit{tx} \in \mathit{log}_{\mathit{before}} \text{ and } \mathit{tx}.\mathit{issuer} = c'\}$.
        Since the only inserted transaction is $\mathit{tx}$ and $\mathit{tx}.\mathit{issuer} \neq c'$, $\mathit{log}_{c'} = \mathit{log}_{c'}^{\mathit{before}}$.
        Hence, $\mathtt{admissible\_client\_log}(c', \mathit{log}_{c'}) = \top$ (because of the invariant). 
        Therefore, we need to prove the same for $c$.
        
        Let $\mathit{log}_{c} = \{\mathit{tx} \,|\, \mathit{tx} \in \mathit{log} \text{ and } \mathit{tx}.\mathit{issuer} = c\}$ and $\mathit{log}_{c}^{\mathit{before}} = \{\mathit{tx} \,|\, \mathit{tx} \in \mathit{log}_{\mathit{before}} \text{ and } \mathit{tx}.\mathit{issuer} = c\}$.
        Because of the check at line~\ref{line:ensure_no_gaps} of \Cref{lst:transaction_module}, $\mathit{tx}.\mathit{sn} = |\mathit{log}_c^{\mathit{before}}| + 1$.
        Consider the $\mathtt{admissible\_client\_log}(c, \mathit{log}_c)$ function:
        \begin{compactenum}
            \item The check at line~\ref{line:check_conflicting_well_formed} of \Cref{lst:admissible_transaction_set} passes since all transactions from $\mathit{log}_c$ are certified (since $\mathit{state}_{\mathit{new}}.\mathtt{extract\_log()} \neq \bot$) and no two conflicting transactions are certified (by \Cref{lemma:no_conflicting}).
        
            \item The check at line~\ref{line:no_gaps_admissible} of \Cref{lst:admissible_transaction_set} passes since the invariant holds for $\mathit{state}_{\mathit{before}}$ and $\mathit{tx}.\mathit{sn} = |\mathit{log}_c^{\mathit{before}}| + 1$.
            
            \item  Since $\mathit{tx}$ (the only inserted transaction) is certified (by the check at line~\ref{line:commit_message_rule} of \Cref{lst:transaction_module}), there exists a correct server $R$ that has sent the $\mathtt{ACK}$ message for $\mathit{tx}$ (at line~\ref{line:send_ack} of \Cref{lst:transaction_module}).
            Therefore, the check from line~\ref{line:check_admissible_client_start} to~\ref{line:check_admissible_client_stop} of \Cref{lst:admissible_transaction_set} passes because of the fact that the invariant holds for $\mathit{state}_{\mathit{before}}$, because $R$ has executed the checks from line~\ref{line:check_start} to line~\ref{line:check_stop} of \Cref{lst:transaction_module} and by \cref{lemma:no_conflicting,lemma:in_state_certified}.
        \end{compactenum}
        Therefore, $\mathtt{admissible\_client\_log}(c, \mathit{log}_c) = \top$, which concludes the invariant preservation in this case.
    \end{compactenum}
    
    \item line~\ref{line:state_allowed_acks_loop} of \Cref{lst:transaction_module}:
    In this case, the invariant is preserved since $\mathit{rec\_state}.\mathtt{verify()} = \top$ (ensured by line~\ref{line:state_update_received} of \Cref{lst:reconfiguration_state_transfer}).
    
    \item line~\ref{line:merge_states_all} of \Cref{lst:transaction_module}:
    Let $\mathit{state}_{\mathit{new}}$ be the value of the $\mathit{state}$ variable \emph{after} executing this line.
    First, $\mathit{state}_{\mathit{new}}.\mathtt{extract\_log()} = \mathit{log} \neq \bot$.
    By \Cref{lemma:merge_verified}, $\mathit{log}$ is admissible.
    Therefore, the invariant is preserved.
\end{compactitem}
The invariant is always preserved, thus the lemma holds.
\end{proof}

Note that \Cref{lemma:always_verified} is crucial for the liveness of the reconfiguration module since the proof (see \Cref{subsection:reconfiguration_liveness}) assumes that correct servers always send the ``valid'' (according to the $\mathtt{verify}$ function) state.
\Cref{lemma:always_verified} proves that this is indeed the case.
Finally, we are ready to prove the commitment admissibility property.

\begin{theorem} [Commitment Admissibility] \label{theorem:commitment_admissibility}
Commitment admissibility is satisfied.
\end{theorem}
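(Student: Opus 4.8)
The plan is to fix an arbitrary time $t$, let $\mathit{log}_t$ be the set of transactions committed by time $t$, and verify each clause of the $\mathtt{admissible}$ predicate of \Cref{lst:admissible_transaction_set} on $\mathit{log}_t$. Everything rests on the fact \emph{committed implies certified}, which I would prove first: if $\mathit{tx}$ is committed then some (correct or faulty) process holds a commitment proof $\sigma_c$ consisting of at least $v.\mathtt{plurality}()$ $\mathtt{COMMITTED}$ messages for $\mathit{tx}$ from members of a valid view $v$ (by $\mathtt{verify\_commit}$, \Cref{lst:commitment_proof_verification}); since fewer than $v.\mathtt{plurality}()$ members of $v$ are faulty, some correct member $r$ broadcast $\mathtt{COMMITTED}$ for $\mathit{tx}$ at line~\ref{line:broadcast_committed} of \Cref{lst:transaction_module}, so $\mathtt{allowed\_to\_broadcast\_committed}(\mathit{tx})=\top$ and, in particular, $\mathit{quasi\_committed}[\mathit{tx}]=\top$ at $r$ (line~\ref{line:quasi_committed}); by $\mathtt{allowed\_to\_quasi\_commit}$ this forces $\mathit{tx}\in\mathit{log}$ at $r$ (line~\ref{line:allowed_to_quasi_log}), and the only insertions into $\mathit{log}$ come from the certified-transaction guard of the $\mathtt{COMMIT}$ rule (line~\ref{line:commit_message_rule}) or from $\mathtt{refine\_state}$ via $\mathtt{extract\_log}$, so $\mathit{tx}$ is certified (cf.\ \Cref{lemma:in_state_certified}). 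Composing this with \Cref{lemma:no_conflicting} at once discharges the no-conflicting clause (line~\ref{line:check_conflicting_well_formed}): $\mathit{log}_t$ cannot contain two distinct transactions with the same $(\mathit{issuer},\mathit{sn})$.

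Next I would establish \emph{committed implies predecessors committed}. The correct server $r$ above broadcasts $\mathtt{COMMITTED}$ for $\mathit{tx}$ only once $\mathtt{allowed\_to\_broadcast\_committed}(\mathit{tx})=\top$, which for $\mathit{tx}.\mathit{sn}>1$ requires $\mathit{confirmed}[\mathit{tx}']=\top$ for the same-issuer transaction $\mathit{tx}'$ with $\mathit{tx}'.\mathit{sn}=\mathit{tx}.\mathit{sn}-1$ (lines~\ref{line:check_previous_allowed_to_broadcast_1}--\ref{line:check_previous_allowed_to_broadcast_2}) and for a deposit $\mathit{tx}$ requires $\mathit{confirmed}[\mathit{tx}.\mathit{withdrawal}]=\top$; but $\mathit{confirmed}[\mathit{tx}'']=\top$ is set (line~\ref{line:set_confirmed}) only when $r$ holds at least $v'.\mathtt{plurality}()$ $\mathtt{COMMITTED}$ messages for $\mathit{tx}''$ from members of a valid view $v'$ --- a commitment proof for $\mathit{tx}''$ --- so $\mathit{tx}''$ is committed. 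Iterating this down the sequence-number order and along withdrawal references yields the ``deposit accompanied by its withdrawal'' clause (line~\ref{line:check_withdrawal_admissible}) and the ``adjacent sequence numbers'' clause (line~\ref{line:no_gaps_admissible}). For the ``no cycles'' clause (line~\ref{line:no_cycles}) I would observe that along every precedence edge within $\mathit{log}_t$ the first time at which a certificate for the transaction exists strictly increases: for a same-issuer edge, the later transaction is only $\mathtt{ACK}$'ed by a correct server after that server has already entered the earlier one (the unique certified transaction at that sequence number, by \Cref{lemma:no_conflicting}) into its log, i.e.\ after a certificate for it existed; for a withdrawal-to-deposit edge, the deposit is only $\mathtt{ACK}$'ed after the $\mathit{tx}.\mathit{withdrawal}\in\mathit{log}$ check of the $\mathtt{PREPARE}$ rule (line~\ref{line:prepare_rule}) passed at a correct server, hence after a certificate for the withdrawal existed; so the precedence relation is acyclic.

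Finally I would handle the per-client clauses of $\mathtt{admissible\_client\_log}$ --- no two committed deposits of a client reuse the same withdrawal, and a client always has enough money for its committed withdrawals (the loop from line~\ref{line:check_admissible_client_start} to line~\ref{line:check_admissible_client_stop}). For a committed transaction $\mathit{tx}$ with $\mathit{tx}.\mathit{issuer}=c$, pick a correct server $r$ that sent an $\mathtt{ACK}$ for $\mathit{tx}$ at line~\ref{line:send_ack} (one exists since $\mathit{tx}$ is certified) and inspect the guard it passed at line~\ref{line:prepare_rule}. At that instant $\mathit{tx}.\mathit{sn}=\mathit{log\_height}[c]+1$, so --- using admissibility of $r$'s own state, \Cref{lemma:always_verified} --- $r$'s log holds a transaction of $c$ at every sequence number $1,\dots,\mathit{tx}.\mathit{sn}-1$, each of which is certified and hence, by \Cref{lemma:no_conflicting} together with ``committed implies certified'' applied to the committed predecessors of $\mathit{tx}$, is \emph{exactly} the matching committed transaction of $c$. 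The checks $r$ performed between lines~\ref{line:check_start} and~\ref{line:check_stop} --- that for a deposit no already-logged deposit of $c$ reuses $\mathit{tx}.\mathit{withdrawal}$ (and a deposit's issuer equals its withdrawal's receiver, so every deposit of one withdrawal has the same issuer), and that for a withdrawal $\mathit{tx}.\mathit{amount}\le\mathtt{balance\_after\_height}(c,\mathit{tx}.\mathit{sn}-1)$ --- therefore carry over verbatim to $\mathit{log}_t$, and $\mathtt{balance\_after\_height}$ over $r$'s log computes precisely the running balance that $\mathtt{admissible\_client\_log}$ would compute over $\mathit{log}_t$. I expect this last step to be the main obstacle: the delicate point is to justify rigorously that at the exact instant $r$ emits its $\mathtt{ACK}$ the restriction of $r$'s local log to client $c$ and to sequence numbers below $\mathit{tx}.\mathit{sn}$ coincides with the committed transactions of $c$ at those sequence numbers, which bundles together $\mathit{log\_height}[c]=\mathit{tx}.\mathit{sn}-1$ at that instant, \Cref{lemma:always_verified}, \Cref{lemma:in_state_certified}, \Cref{lemma:no_conflicting}, and the two ``committed implies $\ldots$'' facts above, and then to check that $\mathtt{balance\_after\_height}$ and the balance loop inside $\mathtt{admissible\_client\_log}$ are the same computation.
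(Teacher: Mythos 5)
Your proof is correct and follows the same overall skeleton as the paper's: establish that committed implies certified (via the chain commitment proof $\to$ correct sender of \texttt{COMMITTED} $\to$ $\mathit{quasi\_committed}$ $\to$ $\mathit{state}.\mathit{log}[\mathit{tx}]\neq\bot$ $\to$ certified), establish that committed implies same-issuer predecessor and referenced withdrawal are committed (via the $\mathit{confirmed}$ guards of \texttt{allowed\_to\_broadcast\_committed}), and then discharge the clauses of $\mathtt{admissible}$ one by one, using \Cref{lemma:no_conflicting} for the no-conflict clause. You diverge in two places. For acyclicity, the paper argues structurally: a cycle in $\mathit{log}_t$ would be entirely contained in the $\mathit{state}.\mathit{log}$ of the server that broadcast \texttt{COMMITTED}, contradicting \Cref{lemma:always_verified} via the cycle check at line~\ref{line:no_cycles} of \Cref{lst:admissible_transaction_set}; you instead argue temporally, showing the first time a certificate exists strictly increases along every precedence edge. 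Both work; the paper's version gets acyclicity for free from a lemma it needs anyway, while yours is self-contained and does not presuppose that the whole cycle ever coexists in one server's state. For the per-client balance and withdrawal-reuse clauses, the paper is terse and essentially delegates to \Cref{lemma:always_verified} (every committed transaction sits in some correct server's admissible $\mathit{state}.\mathit{log}$, and by \Cref{lemma:no_conflicting} plus \Cref{lemma:in_state_certified} that server's per-client prefix is exactly the committed prefix); you re-derive the same fact inline from the \texttt{PREPARE}-rule guards at the \texttt{ACK}-sending server, which is more explicit but duplicates work already done inside the proof of \Cref{lemma:always_verified}. The ``delicate point'' you flag --- that the acking server's local prefix for client $c$ coincides with the committed prefix --- is resolved exactly by the combination you name, so there is no gap; your observation that all deposits of a given withdrawal share an issuer (hence the per-client reuse check suffices) is a detail the paper leaves implicit.
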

\begin{proof}
Consider $\mathit{log}_t$, the set of committed transactions at time $t$.
Consider a transaction $\mathit{tx} \in \mathit{log}_t$.
Since $\mathit{tx} \in \mathit{log}_t$ (i.e., $\mathit{tx}$ is committed), there exists a correct server that sends the $\mathtt{COMMITTED}$ message for $\mathit{tx}$ (see \Cref{lst:commitment_proof_verification}); the $\mathtt{COMMITTED}$ message is broadcast at line~\ref{line:broadcast_committed} of \Cref{lst:transaction_module}.
We start by proving the following three claims:
\begin{compactenum}
    \item If $\mathit{tx}.\mathit{sn} > 1$, then $\mathit{tx}' \in \mathit{log}_t$, where $\mathit{tx}'.\mathit{issuer} = \mathit{tx}.\mathit{issuer}$ and $\mathit{tx}'.\mathit{sn} = \mathit{tx}.\mathit{sn} - 1$.
    Before the $\mathtt{COMMITTED}$ message for $\mathit{tx}$ is sent, the server checks whether $\mathit{confirmed}[\mathit{tx}'] = \top$ (at lines~\ref{line:check_previous_allowed_to_broadcast_1} or~\ref{line:check_previous_allowed_to_broadcast_2} of \Cref{lst:transaction_module}).
    Therefore, $\mathit{tx}'$ is committed (by line~\ref{line:set_confirmed} of \Cref{lst:transaction_module} and \Cref{lst:commitment_proof_verification}).
    
    \item If $\mathit{tx}$ is a deposit transaction, then $\mathit{tx}.\mathit{withdrawal} \in \mathit{log}_t$ (similarly to the previous case). 
    
    \item $\mathit{tx}$ does not precede $\mathit{tx}$ in $\mathit{log}_t$.
    By contradiction, suppose that it does.
    Therefore, the entire cycle was included in the $\mathit{log}$ variable of the sender at the moment of sending the $\mathtt{COMMITTED}$ message.
    Hence, the entire cycle is stored in $\mathit{state}.\texttt{extract\_log()}$.
    Since $\mathit{state}.\mathtt{verify()} = \top$ (by \Cref{lemma:always_verified}), this is impossible (due to the check at line~\ref{line:no_cycles} of \Cref{lst:admissible_transaction_set}).
\end{compactenum}

Because of the $\mathtt{allowed\_to\_broadcast\_committed}$ function, $\mathit{quasi\_committed}[\mathit{tx}] = \top$ at the server (by line~\ref{line:quasi_committed} of \Cref{lst:transaction_module}).
Therefore, $\mathit{state}.\mathit{log}[\mathit{tx}] \neq \bot$ at the server at that time (because of the $\mathtt{allowed\_to\_quasi\_commit}$ function; by the check at line~\ref{line:allowed_to_quasi_log} of \Cref{lst:transaction_module}).
Since $\mathit{state}.\mathtt{verify()} = \top$ (by \Cref{lemma:always_verified}), $\mathit{tx}$ is certified.

Let us now consider the $\mathtt{admissible}(\mathit{log}_t)$ function (see \Cref{lst:admissible_transaction_set}):
\begin{compactenum}
    \item The check at line~\ref{line:no_cycles} of \Cref{lst:admissible_transaction_set} passes because of the third statement.

    \item The check at line~\ref{line:check_withdrawal_admissible} of \Cref{lst:admissible_transaction_set} passes because of the second statement.
    
    \item Consider a client $c$.
    Let $\mathit{log}_c = \{\mathit{tx} \,|\, \mathit{tx} \in \mathit{log}_t \text{ and } \mathit{tx}.\mathit{issuer} = c\}$.
    Since all transactions that belong to $\mathit{log}_t$ are certified, no conflicting transactions are in $\mathit{log}_c$ (by \Cref{lemma:no_conflicting}).
    Moreover, if a transaction $\mathit{tx} \in \mathit{log}_c$, where $\mathit{tx}.\mathit{sn} > 1$, then $\mathit{tx}' \in \mathit{log}_c$ with $\mathit{tx}'.\mathit{sn} = \mathit{tx}.\mathit{sn} - 1$ (by the first statement from the proof).
    Finally, $\mathtt{admissible\_client\_log}(c, \mathit{log}_c) = \top$ since, for every transaction $\mathit{tx} \in \mathit{log}_c$, there exists a correct server such that $\mathit{state}.\mathit{log}[\mathit{tx}] \neq \bot$ at some time, and \cref{lemma:no_conflicting,lemma:always_verified} hold.
\end{compactenum} 
Therefore, the theorem holds.
\end{proof}

Next, we define when a transaction is \emph{quasi-committed}.
In order to do so, we define the $\mathtt{verify\_quasi\_committed}$ function below.

\begin{lstlisting}[
  caption={Quasi-commitment - verification},
  label={lst:quasi_commitment_verification},
  escapechar=?]
?\textbf{function}? verify_quasi_committed(Transaction ?$\mathit{tx}$?, Set(Message) ?$\mathit{quasi\_committed\_certificate}$?):
    if does not exist View ?$v$? such that ?$m = [$?COMMIT-CONFIRM, ?$\mathit{tx}$?, ?$v$?, View_Path ?$\mathit{path}]$? and?\\$v = \mathit{path}$?.destination(), for every Message ?$m \in \mathit{quasi\_committed\_certificate}$?: ?\label{line:check_quasi}?
        ?\textbf{return}? ?$\bot$?
    ?\textbf{return}? ?$|m.\mathtt{sender} \,|\, m \in \mathit{quasi\_committed\_certificate} \text{ and } m\text{.sender} \in v\text{.members()}| \geq v$?.quorum()
\end{lstlisting}

\begin{definition} [Quasi-Committed Transaction]
We say that a transaction $\mathit{tx}$ is \emph{quasi-committed} if and only if a (correct or faulty) process obtains a set of messages $\mathit{quasi\_committed\_certificate}$ such that $\mathtt{verify\_quasi\_committed}(\mathit{tx}, \mathit{quasi\_committed\_certificate}) = \top$.
Moreover, if the view specified in the $\mathtt{COMMIT-CONFIRM}$ messages of $\mathit{quasi\_committed\_certificate}$ (see \Cref{lst:quasi_commitment_verification}) is $v$, then $\mathit{tx}$ is quasi-committed in $v$.
\end{definition}

Next, we prove that a transaction can be quasi-committed only in a view that is installable at time $\infty$ (similarly to \Cref{lemma:certified_installable}).

\begin{lemma} \label{lemma:quasi_committed_installable}
Let a transaction $\mathit{tx}$ be quasi-committed in a view $v$.
Then, $\alpha_{\infty}(v) = \top$.
\end{lemma}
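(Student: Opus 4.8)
The plan is to mirror the proof of \Cref{lemma:certified_installable} almost verbatim, with $\mathtt{COMMIT-CONFIRM}$ messages taking the place of $\mathtt{ACK}$ messages. Unfolding \Cref{lst:quasi_commitment_verification}, if $\mathit{tx}$ is quasi-committed in $v$ then some (correct or faulty) process has obtained at least $v.\mathtt{quorum()}$ messages of the form $[\mathtt{COMMIT-CONFIRM}, \mathit{tx}, v, \mathit{path}]$ from distinct members of $v$, each carrying a view-path $\mathit{path}$ with $\mathit{path}.\mathtt{destination()} = v$. Since such a view-path then lies in $\mathcal{I}^*(\infty)$, the view $v$ is valid (\Cref{definition:valid_view}; if $v = \mathit{genesis}$ this is immediate). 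Because at most $v.\mathtt{plurality()} - 1$ members of a valid view are faulty --- strictly fewer than $v.\mathtt{quorum()}$ --- at least one of those senders is a correct server $r \in v.\mathtt{members()}$.

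First I would trace where a correct server can emit a $\mathtt{COMMIT-CONFIRM}$ message: the only place is line~\ref{line:send_commit_confirm} of \Cref{lst:transaction_module}, inside the rule guarded at line~\ref{line:commit_message_rule}, whose condition forces $v = \mathit{current\_view}.\mathit{view}$ and $\mathit{current\_view}.\mathit{installed} = \top$ at $r$ at the time $r$ sent the message. Hence $r$ has installed $v$. If $v = \mathit{genesis}$, then $\alpha_{\infty}(\mathit{genesis}) = \top$ holds by definition and we are done.

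For $v \neq \mathit{genesis}$, I would reuse the same ``similarly to \Cref{lemma:knows_view_system}'' step invoked in \Cref{lemma:certified_installable}: a correct server installs a non-genesis view only at line~\ref{line:install} of \Cref{lst:reconfiguration_view_transition} (line~\ref{line:install_init} of \Cref{lst:reconfiguration_initialization} installs only $\mathit{genesis}$), and there the guard gives $\mathit{reconfiguration}.\mathit{sequence} = \{v\}$. Unwinding how that variable is populated via $\mathit{sequence}[\cdot]$ in the storage module (lines~\ref{line:update_history_rule}--\ref{line:update_history} of \Cref{lst:storage_processing}), there is an obtained message $[\mathtt{INSTALL}, v'', \{v\}, \omega] \in \mathcal{I}^*(\infty)$ whose source $v''$ belongs to $r$'s $\mathit{history}$, hence is valid (\Cref{lemma:history_valid}) and thus in $V(\infty)$ (\Cref{lemma:valid_in_v}); this message has destination $v$ and empty tail, so $\alpha_{\infty}(v) = \top$ directly from the definition of $\alpha_{\infty}$.

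The hard part will be nothing more than the bookkeeping: checking that $\mathtt{COMMIT-CONFIRM}$ is emitted by correct servers exclusively from that guarded rule (so that installation of $v$ is genuinely forced), and that ``a correct server installed a non-genesis view $v$'' carries with it a witnessing empty-tail $\mathtt{INSTALL}$ message from a valid source --- exactly the same obligation discharged for $\mathtt{ACK}$ in \Cref{lemma:certified_installable}. No new reasoning about the combinatorics of views or sequences is required.
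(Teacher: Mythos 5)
Your proposal is correct and follows essentially the same route as the paper's proof: extract a correct sender of a $\mathtt{COMMIT\text{-}CONFIRM}$ message from the quorum, use the guard at line~\ref{line:commit_message_rule} of \Cref{lst:transaction_module} to conclude that this server installed $v$, handle $\mathit{genesis}$ by definition, and otherwise trace the installation back to an empty-tail $\mathtt{INSTALL}$ message with a valid source to get $\alpha_{\infty}(v) = \top$. The only difference is that you spell out the final step which the paper compresses into ``similarly to \Cref{lemma:knows_view_system}''.
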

\begin{proof}
Since $\mathit{tx}$ is quasi-committed in $v$, at least a single correct server $r \in v.\mathtt{members()}$ sends the $\mathtt{COMMIT-CONFIRM}$ message associated with $v$ for $\mathit{tx}$ (see \Cref{lst:quasi_commitment_verification}); note that $v$ is a valid view since every $\mathtt{COMMIT-CONFIRM}$ message is accompanied by a view-path to $v$ (by line~\ref{line:check_quasi} of \Cref{lst:quasi_commitment_verification}).
Since a correct member of $v$ sends the message at line~\ref{line:send_commit_confirm} of \Cref{lst:transaction_module}, $v$ is installed at that correct server (by the check at line~\ref{line:commit_message_rule} of \Cref{lst:transaction_module}).
If $v = \mathit{genesis}$, $\alpha_{\infty}(\mathit{genesis}) = \top$ (by definition) and the lemma holds.
Otherwise, $\alpha_{\infty}(v) = \top$ (similarly to \Cref{lemma:knows_view_system}) and the lemma is concluded.
\end{proof}

The next lemma proves that a quasi-committed transaction is ``carried'' by servers forever.

\begin{lemma} \label{lemma:committed_in_log}
Let a transaction $\mathit{tx}$ be quasi-committed in a view $v$.
Consider time $t$ and a correct server $r \in v'.\mathtt{members()}$ with $\mathit{current\_view}.\mathit{view} = v'$ at time $t$, where $v' \supset v$.
Then, $\mathit{state}.\mathit{log}[\mathit{tx}] \neq \bot$ at server $r$ at time $t$.
\end{lemma}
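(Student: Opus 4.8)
The plan is to mirror the structure of the proof of \Cref{lemma:remember_ack} almost line for line, replacing ``certified in $v$'' by ``quasi-committed in $v$'', the $\mathtt{ACK}$ messages by $\mathtt{COMMIT\text{-}CONFIRM}$ messages, and the fact $\mathit{tx} \in \mathit{state}.\mathit{allowed\_acks}[\mathit{tx}.\mathit{issuer}][\mathit{tx}.\mathit{sn}]$ by $\mathit{state}.\mathit{log}[\mathit{tx}] \neq \bot$. First I would invoke \Cref{lemma:quasi_committed_installable} to obtain that $v$ is installable at time $\infty$, hence a valid view, so at least $v.\mathtt{quorum()}$ of its members are correct. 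Since $\mathit{tx}$ is quasi-committed in $v$, at least $v.\mathtt{quorum()}$ members of $v$ sent a $\mathtt{COMMIT\text{-}CONFIRM}$ message for $\mathit{tx}$ associated with $v$; every correct server that does so had $\mathit{tx} \in \mathit{log}$ at that moment, because the $\mathtt{COMMIT\text{-}CONFIRM}$ is emitted at line~\ref{line:send_commit_confirm} of \Cref{lst:transaction_module} inside the rule at line~\ref{line:commit_message_rule}, which first inserts $\mathit{tx}$ (setting $\mathit{state}.\mathit{log}[\mathit{tx}] = m$ at line~\ref{line:update_log_state_representation} with $m$ a $\mathtt{COMMIT}$ message carrying a valid transaction certificate). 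By \Cref{lemma:log_only_grows}, $\mathit{state}.\mathit{log}[\mathit{tx}] \neq \bot$ remains true forever at that server.

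For the base case I would let $t^*$ be the first time at which a correct server $r^*$ sets $\mathit{current\_view}.\mathit{view}$ to some view $v^* \supset v$. Exactly as in \Cref{lemma:remember_ack}, \Cref{lemma:install_skip_installable} together with the minimality of $t^*$ forces $\mathit{reconfiguration}.\mathit{source} = v$ at $r^*$ at that time, and since $r^* \in v^*.\mathtt{members()}$ it reached $\mathit{reconfiguration}.\mathit{prepared} = \top$ via line~\ref{line:prepared_1} of \Cref{lst:reconfiguration_state_transfer} (the alternative line~\ref{line:prepared_2} would contradict minimality of $t^*$). Thus $r^*$ received $\mathtt{STATE\text{-}UPDATE}$ messages from a quorum of members of $v$, and by quorum intersection there is a correct server $R \in v.\mathtt{members()}$ that both sent a $\mathtt{COMMIT\text{-}CONFIRM}$ for $\mathit{tx}$ in $v$ and a $\mathtt{STATE\text{-}UPDATE}$ to $r^*$. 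A timing argument identical to the one in \Cref{lemma:remember_ack}---after sending the $\mathtt{STATE\text{-}UPDATE}$, $R$ has $\mathit{stop\_processing\_until} \supseteq v$ and $\mathit{current\_view}.\mathit{processing} = \bot$, so by the check at line~\ref{line:commit_message_rule} of \Cref{lst:transaction_module} it can no longer emit a $\mathtt{COMMIT\text{-}CONFIRM}$ while in $v$, and by minimality of $t^*$ it had not moved past $v$ beforehand---shows $R$ sent its $\mathtt{COMMIT\text{-}CONFIRM}$ first, so $\mathit{state}.\mathit{log}[\mathit{tx}] \neq \bot$ at $R$ when it sent the $\mathtt{STATE\text{-}UPDATE}$. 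Using \Cref{lemma:always_verified} (so that the $\mathtt{merge\_logs}$ branch of $\mathtt{refine\_state}$ is taken after the check at line~\ref{line:verify_state} of \Cref{lst:transaction_module}) and the fact that the stored $\mathtt{COMMIT}$ message carries a valid certificate and hence survives the subsequent $\mathtt{extract\_log}$ call at line~\ref{line:log_update_extract_log}, I conclude $\mathit{state}.\mathit{log}[\mathit{tx}] \neq \bot$ at $r^*$ at time $t^*$, and by \Cref{lemma:log_only_grows} at all later times.

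Then I would set up the same inductive invariant as in \Cref{lemma:remember_ack}: every correct server whose $\mathit{current\_view}.\mathit{view}$ equals some $v'' \supset v$ has $\mathit{state}.\mathit{log}[\mathit{tx}] \neq \bot$. This holds at $t^*$. For the step, take a correct $r'$ that sets $\mathit{current\_view}.\mathit{view}$ to $v' \supset v$; by \Cref{lemma:install_skip_installable}, $\mathit{reconfiguration}.\mathit{source} \supseteq v$, and whichever of lines~\ref{line:prepared_1} or~\ref{line:prepared_2} of \Cref{lst:reconfiguration_state_transfer} set $\mathit{reconfiguration}.\mathit{prepared} = \top$, $r'$ either merges (via $\mathtt{refine\_state}$) the state of a correct server whose current view already exceeds $v$---so the invariant hypothesis applies---or, when $\mathit{reconfiguration}.\mathit{source} = v$, merges the state of a plurality of correct members of $v$ and falls back to the base-case argument. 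In all cases $\mathtt{refine\_state}$ leaves $\mathit{state}.\mathit{log}[\mathit{tx}] \neq \bot$ at $r'$, preserved forever by \Cref{lemma:log_only_grows}. Applying the invariant to the given $r$ once $\mathit{current\_view}.\mathit{view} = v'$ yields the lemma. I expect the main obstacle to be the timing argument that $R$ emitted its $\mathtt{COMMIT\text{-}CONFIRM}$ before its $\mathtt{STATE\text{-}UPDATE}$, and the bookkeeping that $\mathtt{refine\_state}$ really does propagate $\mathit{state}.\mathit{log}[\mathit{tx}]$ (both that the $\mathtt{merge\_logs}$ branch runs, via \Cref{lemma:always_verified}, and that $\mathit{tx}$ is not dropped by the $\mathtt{extract\_log}$ that follows).
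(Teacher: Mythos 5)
Your proposal is correct and follows essentially the same route as the paper's own proof: the paper likewise establishes the base case at the first time $t^*$ a correct server moves past $v$ (using \Cref{lemma:install_skip_installable} to pin $\mathit{reconfiguration}.\mathit{source} = v$, quorum intersection to find the server $R$, and the same timing argument that the $\mathtt{COMMIT\text{-}CONFIRM}$ precedes the $\mathtt{STATE\text{-}UPDATE}$), then propagates $\mathit{state}.\mathit{log}[\mathit{tx}] \neq \bot$ forward with the identical inductive invariant over all correct servers whose current view exceeds $v$, relying on \cref{lemma:log_only_grows,lemma:always_verified} and $\mathtt{merge\_logs}$ exactly as you do. Your added remark that $\mathit{tx}$ survives the $\mathtt{extract\_log}$ call is a detail the paper leaves implicit but does not change the argument.
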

\begin{proof}
By \Cref{lemma:quasi_committed_installable}, $v$ is installable at time $\infty$ (i.e., $\alpha_{\infty}(v) = \top)$.
Moreover, $v$ is a valid view.

Let time $t^*$ be the first time at which a correct server sets its $\mathit{current\_view}.\mathit{view}$ variable to a view greater than $v$; let us denote that server by $r^*$ and the view by $v^*$.
We now prove that the statement of the lemma holds at server $r^*$ at time $t^*$.

Since $v \subset v^*$, $v^* \neq \mathit{genesis}$ (by \Cref{lemma:genesis_smallest_valid}).
Hence, $r^*$ has executed line~\ref{line:update_current_view} of \Cref{lst:reconfiguration_view_transition}.
This implies that \\$\mathit{reconfiguration}.\mathit{destination} = v^*$ at that time at server $r^*$.
Furthermore, $\mathit{reconfiguration}.\mathit{source} = v$.
Let us prove this claim.
If $\mathit{reconfiguration}.\mathit{source} \subset v$, then this fact conflicts with \Cref{lemma:install_skip_installable}.
If $\mathit{reconfiguration}.\mathit{source} \supset v$, then $t^*$ would not be the first time at which a correct server sets its $\mathit{current\_view}.\mathit{view}$ variable to a view greater than $v$ (since the $\mathtt{INSTALL}$ message $m$ is obtained, where $m.\mathtt{source()} \supset v$).
Finally, $\mathit{reconfiguration}.\mathit{prepared} = \top$.

Since $r^* \in v^*.\mathtt{members}()$, $r^*$ has not executed line~\ref{line:prepared_free} of \Cref{lst:reconfiguration_view_transition}.
Let us investigate both places at which $r^*$ could have set its $\mathit{reconfiguration}.\mathit{prepared}$ variable to $\top$:
\begin{compactitem}
    \item line~\ref{line:prepared_1} of \Cref{lst:reconfiguration_state_transfer}:
    Hence, $r^*$ has received $\mathtt{STATE-UPDATE}$ messages from a quorum of members of $v$ (recall that $\mathit{reconfiguration}.\mathit{source} = v$ at that time).
    By the quorum intersection, we know that there is a correct server $R$ such that (1) $R$ has sent the $\mathtt{COMMIT-CONFIRM}$ message for $\mathit{tx}$ in view $v$, and (2) $r^*$ has received the $\mathtt{STATE-UPDATE}$ message from $R$.
    
    First, we show that $R$ has sent the $\mathtt{COMMIT-CONFIRM}$ message before the $\mathtt{STATE-UPDATE}$ message.
    By contradiction, suppose that $R$ has sent the $\mathtt{STATE-UPDATE}$ message to $r^*$ before the $\mathtt{COMMIT-CONFIRM}$ message.
    At the moment of sending the $\mathtt{STATE-UPDATE}$ message (line~\ref{line:send_state_update} of \Cref{lst:reconfiguration_state_transfer}), we know that $\mathit{current\_view}.\mathit{view}$ at server $R$ is smaller than or equal to $v$ or is equal to $\bot$ (otherwise, $t^*$ would not be the first time at which a correct server sets its $\mathit{current\_view}.\mathit{view}$ variable to a view greater than $v$).
    Hence, the $\mathit{stop\_processing\_until}$ variable (set at line~\ref{line:stop_processing_until} of \Cref{lst:reconfiguration_state_transfer}) is at least equal to $v$ at that time.
    Therefore, the check at line~\ref{line:commit_message_rule} of \Cref{lst:transaction_module} does not pass while $\mathit{current\_view}.\mathit{view} = v$ (since $\mathit{current\_view}.\mathit{processing} = \bot$; see \Cref{lst:reconfiguration_view_transition}), which means that $R$ does not send the $\mathtt{COMMIT-CONFIRM}$ message associated with view $v$.
    Thus, we reach contradiction.
    
    At the moment of sending the $\mathtt{COMMIT-CONFIRM}$ message, $\mathit{state}.\mathit{log}[\mathit{tx}] \neq \bot$ at server $R$ (by lines~\ref{line:check_tx_in_log} and~\ref{line:update_log_state_representation} of \Cref{lst:transaction_module}).
    By \Cref{lemma:log_only_grows}, $\mathit{state}.\mathit{log}[\mathit{tx}] \neq \bot$ at server $R$ at the moment of sending the $\mathtt{STATE-UPDATE}$ message.
    Since $\mathit{state}.\mathtt{verify()} = \top$ at server $r^*$ (by \Cref{lemma:always_verified}), the check at line~\ref{line:verify_state} of \Cref{lst:transaction_module} passes at server $r^*$.
    Therefore, because of the $\mathtt{merge\_logs}$ function, $\mathit{state}.\mathit{log}[\mathit{tx}] \neq \bot$ at server $r^*$ at time $t^*$.
    
    \item line~\ref{line:prepared_2} of \Cref{lst:reconfiguration_state_transfer}:
    This scenario is impossible since it contradicts the fact that $t^*$ is the first time at which a correct server sets its $\mathit{current\_view}.\mathit{view}$ variable to a view greater than $v$.
\end{compactitem}
By \Cref{lemma:log_only_grows}, the statement of the lemma is satisfied at all times after time $t^*$ at server $r^*$.

Next, we introduce the following invariant.
Consider time $t^{**}$ and a server $r^{**}$ such that $\mathit{current\_view}.\mathit{view} = v^{**}$ at server $r^{**}$ at time $t^{**}$, where $v \subset v^{**}$.
Then, the statement of the lemma holds at server $r^{**}$ at time $t^{**}$.
Note that this invariant is satisfied at time $t^*$, which is the first time at which these conditions are satisfied.

Consider a server $r'$ that sets its $\mathit{current\_view}.\mathit{view}$ variable to $v' \supset v$.
Hence, $\mathit{reconfiguration}.\mathit{prepared} = \top$ at server $r'$.
Moreover, $\mathit{reconfiguration}.\mathit{source} \supseteq v$ (otherwise, \Cref{lemma:install_skip_installable} is contradicted).
Let us investigate both places at which $r'$ could have set its $\mathit{reconfiguration}.\mathit{prepared}$ variable to $\top$:
\begin{compactitem}
    \item line~\ref{line:prepared_1} of \Cref{lst:reconfiguration_state_transfer}:
    If $\mathit{reconfiguration}.\mathit{source} \supset v$, we know that a correct process from which $r'$ has received $\mathtt{STATE-UPDATE}$ message has its $\mathit{current\_view}.\mathit{view}$ greater than $v$ at the moment of sending the $\mathtt{STATE-UPDATE}$ message (because of the obtained $\mathtt{INSTALL}$ message $m$, where $m.\mathtt{source()} \supset v$ and the quorum intersection).
    Hence, the invariant is preserved since $\mathit{state}.\mathtt{verify()} = \top$ (by \Cref{lemma:always_verified}), which means that the check at line~\ref{line:verify_state} of \Cref{lst:transaction_module} passes, and the invariant holds for a server from which $r'$ has received the $\mathtt{STATE-UPDATE}$ message (see the $\mathtt{merge\_logs}$ function).
    
    If $\mathit{reconfiguration}.\mathit{source} = v$, then $r'$ has received the $\mathtt{STATE-UPDATE}$ messages from at least $v.\mathtt{plurality}()$ of correct members of $v$.
    If at least a single such member had $\mathit{current\_view}.\mathit{view}$ variable equal to a view greater than $v$ at the moment of sending the $\mathtt{STATE-UPDATE}$ message, the invariant is preserved because the check at line~\ref{line:verify_state} of \Cref{lst:transaction_module} passes at $r'$ (by \Cref{lemma:always_verified}) and the invariant hypothesis.
    Otherwise, the invariant is preserved because of the argument given in the base step.
    
    \item line~\ref{line:prepared_2} of \Cref{lst:reconfiguration_state_transfer}:
    Since $\mathit{reconfiguration}.\mathit{source} \supseteq v$, the invariant is preserved because of the check at line~\ref{line:verify_state} of \Cref{lst:transaction_module} passes at $r'$ (by \Cref{lemma:always_verified}) and the invariant hypothesis.
\end{compactitem}
Since the invariant is always preserved, the lemma holds.
\end{proof}

We say that a transaction $\mathit{tx}$ \emph{depends on} a transaction $\mathit{tx}'$ if and only if:
\begin{compactitem}
    \item $\mathit{tx}.\mathit{issuer} = \mathit{tx}'.\mathit{issuer}$ and $\mathit{tx}.\mathit{sn} = \mathit{tx}'.\mathit{sn} + 1$; or
    
    \item $\mathit{tx}$ is a deposit transaction, $\mathit{tx}'$ is a withdrawal transaction and $\mathit{tx}.\mathit{withdrawal} = \mathit{tx}'$.
\end{compactitem}
Now, given an admissible log, we specify the rank of every transaction that belongs to the log; we denote by $\mathit{rank}_L(\mathit{tx})$ the rank of a transaction $\mathit{tx}$ in an admissible log $L$.
We define the rank in the following manner:
\begin{compactitem}
    \item If $\mathit{tx}$ does not depend on any transaction that belongs to $L$, then $\mathit{rank}_L(\mathit{tx}) = 0$.
    
    \item Otherwise, $\mathit{rank}_L(\mathit{tx}) = \mathit{max}(\mathit{rank}_L(\mathit{tx}_1), \mathit{rank}_L(\mathit{tx}_2)) + 1$, where $\mathit{tx}$ depends on $\mathit{tx}_1 \in L$ and $\mathit{tx}$ depends on $\mathit{tx}_2 \in L$.
\end{compactitem}

Recall that $v_{\mathit{final}}$ is the greatest forever-alive view (see \Cref{definition:forever_alive_view}).
Moreover, we say that a transaction $\mathit{tx}$ is quasi-committed at a correct server $r$ if and only if $\mathit{quasi\_committed}[\mathit{tx}] = \top$ at server $r$.
The next lemma shows that any transaction that belongs to the $\mathit{log}$ variable (recall that $\mathit{log} = \mathit{state}.\mathtt{extract\_log()}$) of a correct member of $v_{\mathit{final}}$ eventually belongs to the $\mathit{log}$ variable of all other correct members of $v_{\mathit{final}}$.

\begin{lemma} \label{lemma:same_logs}
Consider a correct server $r \in v_{\mathit{final}}.\mathtt{members}()$.
Let $\mathit{state}.\mathit{log}[\mathit{tx}] \neq \bot$ at server $r$.
Eventually, $\mathit{state}.\mathit{log}[\mathit{tx}] \neq \bot$ at every correct server $r' \in v_{\mathit{final}}.\mathtt{members()}$.
\end{lemma}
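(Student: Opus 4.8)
The plan is to prove Lemma~\ref{lemma:same_logs} by induction on the rank of $\mathit{tx}$ in an admissible log of certified transactions containing it; this rank is well-defined independently of the particular such log, because an admissible log containing $\mathit{tx}$ must also contain $\mathit{tx}$'s dependencies (the transaction of $\mathit{tx}.\mathit{issuer}$ with sequence number $\mathit{tx}.\mathit{sn}-1$ when $\mathit{tx}.\mathit{sn}>1$, and $\mathit{tx}.\mathit{withdrawal}$ when $\mathit{tx}$ is a deposit), and two certified transactions with the same issuer and sequence number coincide (\Cref{lemma:no_conflicting}). Throughout, I use that by the finality property (\Cref{theorem:finality}) every correct member of $v_{\mathit{final}}$ eventually and permanently reaches a state with $\mathit{current\_view}.\mathit{view} = v_{\mathit{final}}$, $\mathit{current\_view}.\mathit{installed} = \top$, $\mathit{current\_view}.\mathit{processing} = \top$, never leaves, and is hence forever-correct; I also use \Cref{lemma:in_state_certified} (so $\mathit{tx}$ is certified, with a certificate carried inside the $\mathtt{COMMIT}$ message stored in $r$'s $\mathit{state}.\mathit{log}[\mathit{tx}]$), \Cref{lemma:log_only_grows}, \Cref{lemma:always_verified}, and \Cref{lemma:committed_in_log}.

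For the inductive step on a transaction $\mathit{tx}$ with $\mathit{state}.\mathit{log}[\mathit{tx}] \neq \bot$ at $r$, I split into two cases. \emph{Case A}: $\mathit{tx}$ is quasi-committed in some view $v \subsetneq v_{\mathit{final}}$. Then the conclusion is immediate from \Cref{lemma:committed_in_log} applied with $v' = v_{\mathit{final}}$, using finality to put each correct member of $v_{\mathit{final}}$ into its final state; the induction hypothesis is not even needed here. \emph{Case B}: $\mathit{tx}$ is not quasi-committed in any view strictly smaller than $v_{\mathit{final}}$. The key sub-claim is that $r$ itself eventually broadcasts $[\mathtt{COMMIT}, \mathit{tx}, \mathit{certificate}, v_{\mathit{final}}]$ to $v_{\mathit{final}}.\mathtt{members}()$. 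Indeed, if $\mathit{tx}$ entered $r$'s log only after $r$ installed $v_{\mathit{final}}$, it can only have done so via the $\mathtt{COMMIT}$-handling rule (line~\ref{line:send_commit_1} of \Cref{lst:transaction_module}), since $r$ performs no further reconfiguration---hence no state merge---after reaching $v_{\mathit{final}}$ (finality forbids any later view update, which is what would trigger a state transfer); and if $\mathit{tx}$ was already in $r$'s log when $r$ installed $v_{\mathit{final}}$, then $\mathit{quasi\_committed}[\mathit{tx}] = \bot$ at $r$ at that moment---otherwise $\mathit{tx}$ would be quasi-committed in a view contained in $r$'s current view just before the install, which is $\subsetneq v_{\mathit{final}}$, contradicting Case B---so the $\mathtt{installed}$ routine re-broadcasts the $\mathtt{COMMIT}$ (line~\ref{line:broadcast_commit_2} of \Cref{lst:transaction_module}).

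It then remains to show that every correct $r' \in v_{\mathit{final}}.\mathtt{members}()$ processes this message. Since $r$ and $r'$ are forever-correct, $r'$ eventually receives it; the guards $\mathit{current\_view}.\mathit{view} = v_{\mathit{final}}$, $\mathit{current\_view}.\mathit{installed} = \top$, $\mathit{current\_view}.\mathit{processing} = \top$ hold permanently by finality, and $\mathtt{verify\_transaction\_certificate}(\mathit{tx}, \mathit{certificate}) = \top$ because $\mathit{tx}$ is certified. The only delicate guard is $\mathtt{allowed\_to\_commit\_confirm}(\mathit{tx})$: if $\mathit{tx}$ is already in $r'$'s log we are done; otherwise, $\mathit{tx}$'s dependencies lie in $r$'s (admissible) log and have strictly smaller rank, so by the induction hypothesis they eventually, and permanently by \Cref{lemma:log_only_grows}, lie in $r'$'s log, which forces $\mathit{log\_height}[\mathit{tx}.\mathit{issuer}] = \mathit{tx}.\mathit{sn}-1$ and (if $\mathit{tx}$ is a deposit) $\mathit{tx}.\mathit{withdrawal} \in \mathit{log}$ at $r'$---the equality additionally using \Cref{lemma:no_conflicting} together with the no-gaps part of admissibility to exclude a higher-sequence-number transaction of $\mathit{tx}.\mathit{issuer}$ at $r'$. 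Hence the $\mathtt{COMMIT}$-handling rule is eventually permanently enabled at $r'$ and fires, setting $\mathit{state}.\mathit{log}[\mathit{tx}] \neq \bot$.

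I expect the main obstacle to be the bookkeeping inside Case B: precisely ruling out the scenario in which $r$ never carries $\mathit{tx}$'s $\mathtt{COMMIT}$ into $v_{\mathit{final}}$ because it had already quasi-committed $\mathit{tx}$, which is exactly what reduces that scenario to Case A via \Cref{lemma:committed_in_log}; and, secondarily, verifying that the $\mathtt{allowed\_to\_commit\_confirm}$ dependency check is the only thing that can block $r'$ and that it is discharged by the rank induction combined with the admissibility of $r$'s log.
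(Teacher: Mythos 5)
Your proposal is correct and follows essentially the same route as the paper's proof: induction on the rank of $\mathit{tx}$ in $r$'s admissible log, with the dichotomy between ``$\mathit{tx}$ was already quasi-committed in a view strictly below $v_{\mathit{final}}$, so \Cref{lemma:committed_in_log} applies'' and ``$r$ (re)broadcasts the $\mathtt{COMMIT}$ message in $v_{\mathit{final}}$, which every correct member eventually processes by finality, certification, and the induction hypothesis discharging the dependency guard.'' You merely make explicit some steps the paper leaves implicit (why the two cases are exhaustive, and why $\mathtt{allowed\_to\_commit\_confirm}$ eventually passes at $r'$), which is consistent with the paper's argument.
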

\begin{proof}
Recall that, by the finality property (see \Cref{lst:reconfiguration_properties_new}), the following holds:
\begin{compactenum}
    \item all correct members of $v_{\mathit{final}}$ update their current view to $v_{\mathit{final}}$ (i.e., set their $\mathit{current\_view}.\mathit{view}$ variable to $v_{\mathit{final}}$),
    
    \item no correct member of $v_{\mathit{final}}$ updates its current view to any other view afterwards,
    
    \item all correct members of $v_{\mathit{final}}$ install $v_{\mathit{final}}$,
    
    \item no correct member of $v_{\mathit{final}}$ leaves, and
    
    \item no correct member of $v_{\mathit{final}}$ stops processing in $v_{\mathit{final}}$.
\end{compactenum}
Let $\mathit{log}_r = \{\mathit{tx} \,|\, \mathit{state}.\mathit{log}[\mathit{tx}] \neq \bot \text{ at server } r\}$.
By \Cref{lemma:always_verified}, $\mathit{log}_r$ is admissible (see \Cref{lst:state_representation}).
In the rest of the proof, we consider a particular server $r^* \neq r$, where $r^* \in v_{\mathit{final}}.\mathtt{members()}$.
We prove the lemma by induction.

\smallskip
\underline{Base step:}
We consider all the transactions that belong to $\mathit{log}_r$ with rank $0$.
Let us denote that set of transactions by $\mathit{TX}_0$.

If $r$ broadcasts the $\mathtt{COMMIT}$ message for $\mathit{tx} \in \mathit{TX}_0$ to all members of $v_{\mathit{final}}$ at line~\ref{line:send_commit_1} or line~\ref{line:broadcast_commit_2} of \Cref{lst:transaction_module}, $r^*$ eventually receives the message and sets $\mathit{state}.\mathit{log}[\mathit{tx}]$ to the received $\mathtt{COMMIT}$ message (by the finality property and the fact that $\mathit{tx}$ is certified, by \Cref{lemma:always_verified}).

Otherwise, $\mathit{tx}$ is quasi-committed at server $r$ (since $\mathit{quasi\_committed}[\mathit{tx}] = \top$; line~\ref{line:quasi_committed_check} of \Cref{lst:transaction_module}) at the moment of installing $v_{\mathit{final}}$ (line~\ref{line:upon_install} of \Cref{lst:transaction_module}).
Let $v_{\mathit{tx}}$ be the smallest view in which $\mathit{tx}$ is quasi-committed; note that such view is defined since all views in which $\mathit{tx}$ is quasi-committed are valid (by \Cref{lemma:quasi_committed_installable}) and all valid views are comparable (by the view comparability property; see \Cref{lst:reconfiguration_properties_new}).
By \Cref{lst:transaction_module}, $v_{\mathit{tx}} \subset v_{\mathit{final}}$.
Hence, the base case is satisfied by \Cref{lemma:committed_in_log} and the finality property.

\smallskip
\underline{Inductive step:}
We consider all transactions that belong to $\mathit{log}_r$ with rank $\mathit{rank}$.
Let $\mathit{TX}_{\mathit{rank}}$ denote that set of transactions.
We assume, for all transactions $\mathit{tx}'$ that belong to $\mathit{log}_r$ with the smaller rank, that $\mathit{state}.\mathit{log}[\mathit{tx}'] \neq \bot$ at every correct member of $v_{\mathit{final}}$.
We prove that the invariant is preserved for all transactions that belong to $\mathit{TX}_{\mathit{rank}}$.

If $r$ broadcasts the $\mathtt{COMMIT}$ message for $\mathit{tx} \in \mathit{TX}_{\mathit{rank}}$ to all members of $v_{\mathit{final}}$ at line~\ref{line:send_commit_1} or line~\ref{line:broadcast_commit_2} of \Cref{lst:transaction_module}, $r^*$ eventually receives the message and sets $\mathit{state}.\mathit{log}[\mathit{tx}]$ to the received $\mathtt{COMMIT}$ message (by the finality property, induction hypothesis and the fact that $\mathit{tx}$ is certified, by \Cref{lemma:always_verified}).

Otherwise, $\mathit{tx}$ is quasi-committed at server $r$ (since $\mathit{quasi\_committed}[\mathit{tx}] = \top$; line~\ref{line:quasi_committed_check} of \Cref{lst:transaction_module}) at the moment of installing $v_{\mathit{final}}$ (line~\ref{line:upon_install} of \Cref{lst:transaction_module}).
Let $v_{\mathit{tx}}$ be the smallest view in which $\mathit{tx}$ is quasi-committed; note that such view is defined since all views in which $\mathit{tx}$ is quasi-committed are valid (by \Cref{lemma:quasi_committed_installable}) and all valid views are comparable (by the view comparability property; see \Cref{lst:reconfiguration_properties_new}).
By \Cref{lst:transaction_module}, $v_{\mathit{tx}} \subset v_{\mathit{final}}$.
Hence, the inductive case is satisfied by \Cref{lemma:committed_in_log} and the finality property.
\end{proof}

The next lemma proves that a correct member of $v_{\mathit{final}}$ eventually receives the $\mathtt{COMMIT-CONFIRM}$ messages from all correct members of $v_{\mathit{final}}$.

\begin{lemma} \label{lemma:in_log_received_from_all}
Let a correct server $r \in v_{\mathit{final}}.\mathtt{members}()$ broadcast the $\mathtt{COMMIT}$ message for a transaction $\mathit{tx}$ to all members of $v_{\mathit{final}}$.
Eventually, $\mathit{r}$ receives $\mathtt{COMMIT-CONFIRM}$ message for $\mathit{tx}$ associated with $v_{\mathit{final}}$ from all correct members of $v_{\mathit{final}}$.
\end{lemma}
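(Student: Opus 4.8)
The plan is to show that the $\mathtt{COMMIT}$ message broadcast by $r$ eventually makes the rule at line~\ref{line:commit_message_rule} of \Cref{lst:transaction_module} permanently enabled at every correct member $r'$ of $v_{\mathit{final}}$, so that $r'$ fires it and sends $r$ a $\mathtt{COMMIT}$-$\mathtt{CONFIRM}$ message for $\mathit{tx}$ associated with $v_{\mathit{final}}$ (line~\ref{line:send_commit_confirm}); since, by \Cref{theorem:finality}, no correct member of $v_{\mathit{final}}$ ever leaves and hence none halts, validity of perfect links then delivers each such $\mathtt{COMMIT}$-$\mathtt{CONFIRM}$ to $r$, and ranging $r'$ over all correct members of $v_{\mathit{final}}$ yields the claim. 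First I would record the shape of the message: whether sent at line~\ref{line:send_commit_1} or line~\ref{line:broadcast_commit_2}, it is $[\mathtt{COMMIT}, \mathit{tx}, \mathit{certificate}, v_{\mathit{final}}]$ with $\mathtt{verify\_transaction\_certificate}(\mathit{tx}, \mathit{certificate}) = \top$ (in the first case the certificate is that of the $\mathtt{COMMIT}$ just received; in the second it is the one stored in $\mathit{state}.\mathit{log}[\mathit{tx}]$, which is valid by \Cref{lemma:always_verified}), and that it reaches every forever-correct member of $v_{\mathit{final}}$ by validity of best-effort broadcast; it stays in $\mathit{waiting\_messages}$ at $r'$ until the rule consumes it.

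Next I would verify that the guard of line~\ref{line:commit_message_rule} eventually holds and keeps holding at $r'$. By finality, $r'$ eventually sets $\mathit{current\_view}.\mathit{view} = v_{\mathit{final}}$, installs $v_{\mathit{final}}$, never changes its current view afterwards, and never stops processing in $v_{\mathit{final}}$. A short sub-argument gives $\mathit{current\_view}.\mathit{processing} = \top$ from that point on: when $r'$ installs $v_{\mathit{final}}$ in \Cref{lst:reconfiguration_view_transition} the flag is set to $\top$ because $\mathit{stop\_processing\_until} \subset v_{\mathit{final}}$ (any view for which $r'$ previously triggered ``stop processing'' is valid, hence comparable to $v_{\mathit{final}}$, and by finality cannot be $\supseteq v_{\mathit{final}}$), and afterwards the flag can only be reset to $\bot$ along the stop-processing path at line~\ref{line:stop_processing} of \Cref{lst:reconfiguration_state_transfer}, which does not fire since that would amount to $r'$ stopping processing in a view $\supseteq v_{\mathit{final}}$. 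The certificate check in the guard holds because $\mathit{certificate}$ is valid. It therefore remains to show that $\mathtt{allowed\_to\_commit\_confirm}(\mathit{tx}) = \top$ holds at $r'$ from some point on.

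This last point is the main obstacle. The function returns $\top$ immediately if $\mathit{tx} \in \mathit{log}$ at $r'$; otherwise it needs $\mathit{tx}.\mathit{sn} = \mathit{log\_height}[\mathit{tx}.\mathit{issuer}] + 1$ at $r'$, plus $\mathit{tx}.\mathit{withdrawal} \in \mathit{log}$ when $\mathit{tx}$ is a deposit. I would prove this by induction on the rank of $\mathit{tx}$ in $r$'s log, mirroring \Cref{lemma:same_logs}. Since $r$ broadcasts $\mathtt{COMMIT}$ for $\mathit{tx}$, we have $\mathit{tx} \in \mathit{log}$ at $r$ (line~\ref{line:update_log}, or the guard of line~\ref{line:broadcast_commit_2}); by \Cref{lemma:always_verified} $r$'s log is admissible, so all of $\mathit{tx}.\mathit{issuer}$'s transactions with smaller sequence number and, if $\mathit{tx}$ is a deposit, $\mathit{tx}.\mathit{withdrawal}$, lie in $r$'s log with strictly smaller rank. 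Each such prerequisite $p$ is either quasi-committed at $r$ — hence, by \Cref{lemma:committed_in_log} together with the fact that the state transfer into $v_{\mathit{final}}$ refreshes $\mathit{log}$ from $\mathit{state}.\mathtt{extract\_log}()$, eventually in $r'$'s $\mathit{log}$ — or else $r$ (re-)broadcasts $\mathtt{COMMIT}$ for $p$ to the members of $v_{\mathit{final}}$ (line~\ref{line:broadcast_commit_2} on installation, or line~\ref{line:send_commit_1} when later reprocessing it), and by the induction hypothesis $r'$ eventually processes that $\mathtt{COMMIT}$ and inserts $p$ into $\mathit{log}$ (line~\ref{line:update_log}). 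Using admissibility at $r'$ (again \Cref{lemma:always_verified}) to rule out gaps and \Cref{lemma:no_conflicting} to rule out a conflicting transaction with sequence number $\mathit{tx}.\mathit{sn}$, once all prerequisites of $\mathit{tx}$ sit in $r'$'s $\mathit{log}$ we have either $\mathit{tx} \in \mathit{log}$ at $r'$ or $\mathit{log\_height}[\mathit{tx}.\mathit{issuer}] = \mathit{tx}.\mathit{sn} - 1$ at $r'$; in the latter situation the guard of line~\ref{line:commit_message_rule} becomes permanently enabled for the $\mathtt{COMMIT}$ message, so $r'$ fires it, inserts $\mathit{tx}$ into $\mathit{log}$, and sends the $\mathtt{COMMIT}$-$\mathtt{CONFIRM}$ to $r$. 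The delicate parts are making this induction precise — pinning down that $\mathit{log\_height}[\mathit{tx}.\mathit{issuer}]$ is exactly $\mathit{tx}.\mathit{sn} - 1$ at the moment the $\mathtt{COMMIT}$ becomes processable rather than already beyond it — and tracking the interplay between $\mathit{state}.\mathit{log}$ (propagated across the final reconfiguration by \Cref{lemma:committed_in_log}) and the transaction-module variable $\mathit{log}$, which is no longer refreshed from $\mathit{state}.\mathtt{extract\_log}()$ once no further views are installed.
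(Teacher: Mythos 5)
Your proposal is correct, but it takes a noticeably longer route than the paper for the central step. The paper's proof observes that in both broadcast cases ($\mathtt{COMMIT}$ sent at line~\ref{line:send_commit_1} or at line~\ref{line:broadcast_commit_2}) we have $\mathit{state}.\mathit{log}[\mathit{tx}] \neq \bot$ at $r$, invokes \Cref{lemma:same_logs} as a black box to conclude that eventually $\mathit{state}.\mathit{log}[\mathit{tx}] \neq \bot$ at every correct member $r'$ of $v_{\mathit{final}}$ (hence $\mathit{tx} \in \mathit{log}$ at $r'$, since the code updates $\mathit{log}$ in the same rule that sets $\mathit{state}.\mathit{log}[\mathit{tx}]$), and then fires the guard via the \emph{first} branch of $\mathtt{allowed\_to\_commit\_confirm}$ (line~\ref{line:tx_in_log}), together with the finality property for message delivery in both directions. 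You instead target the second branch ($\mathit{tx}.\mathit{sn} = \mathit{log\_height}[\mathit{tx}.\mathit{issuer}] + 1$ plus the withdrawal check) and rebuild the propagation of prerequisites by an inline rank induction that essentially re-proves \Cref{lemma:same_logs}; the ``delicate parts'' you flag (whether $\mathit{log\_height}$ is exactly $\mathit{tx}.\mathit{sn} - 1$ or already beyond it) dissolve once you notice that the case ``already beyond'' forces $\mathit{tx} \in \mathit{log}$ at $r'$ by \Cref{lemma:no_conflicting} and admissibility, i.e., the first branch again. What your version buys is explicitness: you justify $\mathit{current\_view}.\mathit{processing} = \top$ and the $\mathit{log}$ versus $\mathit{state}.\mathit{log}$ synchronization, both of which the paper compresses into ``by the finality property.'' What the paper's version buys is brevity and no duplicated induction.
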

\begin{proof}
We consider all possible places at which $r$ could broadcast the $\mathtt{COMMIT}$ message:
\begin{compactitem}
    \item line~\ref{line:send_commit_1} of \Cref{lst:transaction_module}:
    Therefore, $\mathit{state}.\mathit{log}[\mathit{tx}] \neq \bot$ at server $r$ (by line~\ref{line:update_log_state_representation} of \Cref{lst:transaction_module}).
    
    \item line~\ref{line:broadcast_commit_2} of \Cref{lst:transaction_module}:
    Again, $\mathit{state}.\mathit{log}[\mathit{tx}] \neq \bot$ at server $r$ (by the check at line~\ref{line:quasi_committed_check} of \Cref{lst:transaction_module}).
\end{compactitem}
Since in both possible cases, $\mathit{state}.\mathit{log}[\mathit{tx}] \neq \bot$ at server $r$, $\mathit{state}.\mathit{log}[\mathit{tx}] \neq \bot$ at every correct server $r' \in v_{\mathit{final}}.\mathtt{members()}$ (by \Cref{lemma:same_logs}).

Once this happens at a correct server $r'$ and $r'$ receives the $\mathtt{COMMIT}$ message sent by $r$ (which happens because of the finality property), $r'$ sends the $\mathtt{COMMIT-CONFIRM}$ message for $\mathit{tx}$ associated with $v_{\mathit{final}}$ at line~\ref{line:send_commit_confirm} of \Cref{lst:transaction_module} (because of the $\mathtt{allowed\_to\_commit\_confirm}$ function; line~\ref{line:tx_in_log} of \Cref{lst:transaction_module}).
Furthermore, the finality property ensures that $r$ eventually receives this $\mathtt{COMMIT-CONFIRM}$ message sent by $r'$, which concludes the proof.
\end{proof}

Next, we prove that a transaction $\mathit{tx}$ is quasi-committed at a correct server $r \in v_{\mathit{final}}$, where $\mathit{state}.\mathit{log}[\mathit{tx}] \neq \bot$ at $r$.

\begin{lemma} \label{lemma:in_log_then_committed}
Consider a correct server $r \in v_{\mathit{final}}.\mathtt{members}()$.
Let $\mathit{state}.\mathit{log}[\mathit{tx}] \neq \bot$ at server $r$.
Then, $r$ eventually quasi-commits $\mathit{tx}$.
\end{lemma}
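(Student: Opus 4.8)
The plan is to prove this by strong induction on the rank of $\mathit{tx}$ in $r$'s log (ranks are well defined since, by \Cref{lemma:always_verified}, $r$'s log is admissible at all times). I would use the finality property (\Cref{theorem:finality}) to fix $r$ — and every correct member of $v_{\mathit{final}}$ — permanently into $v_{\mathit{final}}$ as a forever-correct process that installs $v_{\mathit{final}}$ and never stops processing in it, and I would assume for contradiction that $r$ never quasi-commits $\mathit{tx}$, i.e. $\mathit{quasi\_committed}[\mathit{tx}] = \bot$ forever at $r$. Under this assumption I want to show that the guard $\mathtt{allowed\_to\_quasi\_commit}(\mathit{tx})$ (\Cref{lst:transaction_module}) eventually becomes permanently true, which forces the rule it guards to fire — contradiction. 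Throughout I would use that conflicting transactions cannot be certified (\Cref{lemma:no_conflicting}) and that $\mathit{state}.\mathit{log}$ at $r$ only grows (so $\mathit{tx}$, being in it by hypothesis, stays in it and, by \Cref{lemma:in_state_certified}, is certified).

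Next I would dispatch the ``easy'' conjuncts of the guard. The predecessor check: if $\mathit{tx}.\mathit{sn} > 1$, then by admissibility of $r$'s log the transaction $\mathit{tx}''$ of the same issuer with $\mathit{tx}''.\mathit{sn} = \mathit{tx}.\mathit{sn} - 1$ is in $r$'s log and has strictly smaller rank, so by the induction hypothesis $r$ eventually quasi-commits it, setting $\mathit{log\_quasi\_committed\_height}[\mathit{tx}.\mathit{issuer}] = \mathit{tx}.\mathit{sn} - 1$; this height cannot grow further before $\mathit{tx}$ is quasi-committed, since the unique $\mathit{sn} = \mathit{tx}.\mathit{sn}$ transaction of this issuer in $r$'s log is $\mathit{tx}$ (by admissibility and \Cref{lemma:no_conflicting}) and no later transaction of this issuer can be quasi-committed before $\mathit{tx}$ is (its own height check would fail). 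For $\mathit{tx}.\mathit{sn} = 1$ — which covers the base case of the induction, where $\mathit{tx}$ cannot be a deposit — the check holds trivially. If $\mathit{tx}$ is a deposit transaction, $\mathit{tx}.\mathit{withdrawal}$ is in $r$'s log (admissibility), has strictly smaller rank, and is hence eventually quasi-committed by the induction hypothesis. For the quorum conjunct: $r$ broadcasts a $\mathtt{COMMIT}$ for $\mathit{tx}$ to all of $v_{\mathit{final}}$ — either at line~\ref{line:broadcast_commit_2} when installing $v_{\mathit{final}}$, if $\mathit{tx}$ is already in $r$'s log then, or at line~\ref{line:send_commit_1} when it later processes a $\mathtt{COMMIT}$ for $\mathit{tx}$ in $v_{\mathit{final}}$ — and then \Cref{lemma:in_log_received_from_all} yields that $r$ eventually receives a $\mathtt{COMMIT-CONFIRM}$ for $\mathit{tx}$ associated with $v_{\mathit{final}}$ from every correct member of $v_{\mathit{final}}$, a quorum. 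So the smallest view $v$ with $|\mathit{commit\_confirms\_from}[\mathit{tx}][v]| \ge v.\mathtt{quorum}()$ exists and satisfies $v \subseteq v_{\mathit{final}} = \mathit{current\_view}.\mathit{view}$.

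It remains to argue that the final $\mathtt{if}$ of $\mathtt{allowed\_to\_quasi\_commit}$ eventually returns $\top$ and stays so. If $v \subsetneq v_{\mathit{final}}$, or if $v = v_{\mathit{final}}$ and $\mathit{quasi\_committed\_current\_view} = \emptyset$, this is immediate; the remaining case — the intersection check when $v = v_{\mathit{final}}$ and $\mathit{quasi\_committed\_current\_view} \neq \emptyset$ — is the main obstacle. The ingredients I would assemble are: (i) every $\mathit{tx}'$ that $r$ has quasi-committed while in $v_{\mathit{final}}$ is one for which $r$ broadcast $\mathtt{COMMIT}$ to all of $v_{\mathit{final}}$, so by \Cref{lemma:same_logs} ($\mathit{tx}'$ eventually enters every correct member's log) and finality (all correct members of $v_{\mathit{final}}$ are forever-correct and process in $v_{\mathit{final}}$), each such correct member eventually sends $r$ a $\mathtt{COMMIT-CONFIRM}$ for $\mathit{tx}'$ in $v_{\mathit{final}}$, so $\mathit{commit\_confirms\_from}[\mathit{tx}'][v_{\mathit{final}}]$ at $r$ eventually contains all correct members of $v_{\mathit{final}}$; and (ii) each time $r$ added a transaction to $\mathit{quasi\_committed\_current\_view}$ the intersection check it then passed keeps the running intersection $\bigcap_{\mathit{tx}' \in \mathit{quasi\_committed\_current\_view}} \mathit{commit\_confirms\_from}[\mathit{tx}'][v_{\mathit{final}}]$ of size at least $v_{\mathit{final}}.\mathtt{quorum}()$. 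Since $v_{\mathit{final}}$ has finitely many members, this running intersection stabilizes; together with $\mathit{commit\_confirms\_from}[\mathit{tx}][v_{\mathit{final}}]$ eventually containing all correct members, one concludes the intersection check for $\mathit{tx}$ holds permanently once the earlier conjuncts do, so $\mathtt{allowed\_to\_quasi\_commit}(\mathit{tx})$ is eventually permanently true and the guarded rule fires — the desired contradiction. I expect the delicate point to be exactly this: $\mathit{quasi\_committed\_current\_view}$ keeps growing as new transactions are processed in $v_{\mathit{final}}$, so the argument that the guard's computed intersection cannot remain below the threshold forever must carefully combine the finiteness of $v_{\mathit{final}}$'s membership with the pairwise-quorum-intersection invariant the check itself maintains.
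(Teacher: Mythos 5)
Your proof is correct and follows essentially the same route as the paper's: a proof by contradiction that exploits the well-foundedness of the ``depends-on'' relation (you phrase it as strong induction on rank, while the paper picks a minimal non-quasi-committed element $\mathit{tx}^*$ of the dependency closure $\mathcal{D}(\mathit{tx})$ whose dependencies are all quasi-committed), combined with \Cref{lemma:same_logs}, finality, and \Cref{lemma:in_log_received_from_all} to discharge the quorum and intersection conjuncts of $\mathtt{allowed\_to\_quasi\_commit}$. The ``delicate point'' you flag about the growing $\mathit{quasi\_committed\_current\_view}$ set is handled no more carefully in the paper, which simply asserts that once $\mathtt{COMMIT-CONFIRM}$ messages from all correct members of $v_{\mathit{final}}$ have arrived for $\mathit{tx}^*$ and for every transaction in $\mathit{quasi\_committed}_r^* \supseteq \mathit{quasi\_committed\_current\_view}$, the guard returns $\top$.
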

\begin{proof}
We prove the lemma by contradiction.
Hence, suppose that $r$ does not quasi-commit $\mathit{tx}$.
By \Cref{lemma:same_logs}, $\mathit{state}.\mathit{log}[\mathit{tx}] \neq \bot$ at every correct member of $v_{\mathit{final}}$.

Let $\mathit{log}_r = \{\mathit{tx} \,|\, \mathit{state}.\mathit{log}[\mathit{tx}] \neq \bot \text{ at server } r\}$.
Because of \Cref{lemma:always_verified}, $\mathit{log}_r$ is admissible.
Recall that $\mathit{tx} \in \mathit{log}_r$.

Let $\mathcal{D}(\mathit{tx})$ be the reflexive transitive closure of the ``depends-on'' relation for $\mathit{tx}$ given $\mathit{log}_r$.
Let $\mathit{quasi\_committed}_r$ denote the set of transactions quasi-committed by $r$; note that $\mathit{tx} \notin \mathit{quasi\_committed}_r$ (by the assumption).
Moreover, let $\mathit{quasi\_committed}_r^*$ denote the set of transactions quasi-committed at $r$ for which $r$ has sent the $\mathtt{COMMIT}$ message associated with $v_{\mathit{final}}$; observe that $\mathit{quasi\_committed}_r^* \subseteq \mathit{quasi\_committed}_r$.
By \Cref{lemma:in_log_received_from_all}, $r$ eventually receives $\mathtt{COMMIT-CONFIRM}$ messages for $\mathit{tx}'$ associated with $v_{\mathit{final}}$ from all correct members of $v_{\mathit{final}}$, for all transactions $\mathit{tx}' \in \mathit{quasi\_committed}_r^*$.

Let $\mathit{tx}^* \in \mathcal{D}(\mathit{tx})$ such that (1) $\mathit{tx}^* \notin \mathit{quasi\_committed}_r$, and (2) all dependencies of $\mathit{tx}^*$ are quasi-committed at $r$; note that $\mathit{tx}^*$ indeed exists since $\mathit{tx} \in \mathcal{D}(\mathit{tx})$, $\mathit{tx} \notin \mathit{quasi\_committed}_r$ and $\mathit{log}_r$ is a DAG (since it is admissible).

Since $\mathit{tx}^* \notin \mathit{quasi\_committed}_r$, $r$ broadcasts the $\mathtt{COMMIT}$ message for $\mathit{tx}^*$ to all members of $v_{\mathit{final}}$ (by the finality property; lines~\ref{line:send_commit_1} or~\ref{line:broadcast_commit_2} of \Cref{lst:transaction_module}).
Eventually, $r$ receives $\mathtt{COMMIT-CONFIRM}$ messages for $\mathit{tx}^*$ associated with $v_{\mathit{final}}$ from all correct members of $v_{\mathit{final}}$ (by \Cref{lemma:in_log_received_from_all}).

Recall that all dependencies of $\mathit{tx}^*$ are quasi-committed by $r$.
Therefore, once (1) all dependencies of $\mathit{tx}^*$ are quasi-committed by $r$, (2) $\mathtt{COMMIT-CONFIRM}$ messages for $\mathit{tx}^*$ are received from all correct members of $v_{\mathit{final}}$, and (3) $\mathtt{COMMIT-CONFIRM}$ messages for $\mathit{tx}'$ are received from all correct members of $v_{\mathit{final}}$, for every $\mathit{tx}' \in \mathit{quasi\_committed}_r^*$ (note that $\mathit{quasi\_committed\_current\_view} \subseteq \mathit{quasi\_committed}_r^*$), $r$ quasi-commits $\mathit{tx}^*$ (since the $\mathtt{allowed\_to\_quasi\_commit}$ function returns $\top$).
Thus, we reach contradiction and $\mathit{tx}$ is quasi-committed at $r$.
\end{proof}

The next lemma proves that a correct server $r \in v_{\mathit{final}}.\mathtt{members()}$ eventually obtains a commitment proof for $\mathit{tx}$ if $\mathit{tx}$ is quasi-committed at $r$.

\begin{lemma} \label{lemma:quasi_committed_proof}
Consider a correct server $r \in v_{\mathit{final}}.\mathtt{members()}$.
Let $\mathit{tx}$ be quasi-committed at $r$.
Then, $r$ eventually obtains a commitment proof $\sigma_c$ such that $\mathtt{verify\_commit}(\mathit{tx}, \sigma_c) = \top$.
\end{lemma}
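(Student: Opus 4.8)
\textbf{Proof proposal for Lemma \ref{lemma:quasi_committed_proof}.}

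The plan is to show that, once $\mathit{tx}$ is quasi-committed at the correct server $r \in v_{\mathit{final}}.\mathtt{members()}$, server $r$ eventually broadcasts a $\mathtt{COMMITTED}$ message for $\mathit{tx}$ associated with $v_{\mathit{final}}$ (at line~\ref{line:broadcast_committed} of \Cref{lst:transaction_module}), and, more importantly, that \emph{all} correct members of $v_{\mathit{final}}$ do so, so that $r$ collects $v_{\mathit{final}}.\mathtt{plurality}()$ such messages, passes the rule at line~\ref{line:committed} of \Cref{lst:transaction_module}, and thereby assembles a set $\sigma_c$ that satisfies $\mathtt{verify\_commit}(\mathit{tx}, \sigma_c) = \top$ (see \Cref{lst:commitment_proof_verification}). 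Throughout I will freely invoke the finality property (\Cref{theorem:finality}): all correct members of $v_{\mathit{final}}$ eventually set their $\mathit{current\_view}.\mathit{view}$ to $v_{\mathit{final}}$, install it, never update past it, never leave, and never stop processing in it.

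First I would argue by (strong) induction on the rank $\mathit{rank}_{\mathit{log}_r}(\mathit{tx})$ within the admissible log $\mathit{log}_r = \mathit{state}.\mathtt{extract\_log()}$ at $r$ (admissible by \Cref{lemma:always_verified}), mirroring the structure of \Cref{lemma:same_logs} and \Cref{lemma:in_log_then_committed}. The induction hypothesis will be: every dependency $\mathit{tx}'$ of $\mathit{tx}$ (in the ``depends-on'' sense, i.e., the immediate predecessor of the same issuer and, if $\mathit{tx}$ is a deposit, its referenced withdrawal) is eventually committed, hence $\mathit{confirmed}[\mathit{tx}'] = \top$ at every correct member of $v_{\mathit{final}}$ (by line~\ref{line:set_confirmed} and the fact that obtaining a commitment proof is irrevocable). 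For the step: since $\mathit{tx}$ is quasi-committed at $r$, by \Cref{lemma:committed_in_log} we have $\mathit{state}.\mathit{log}[\mathit{tx}] \neq \bot$ at $r$, hence by \Cref{lemma:same_logs} eventually $\mathit{state}.\mathit{log}[\mathit{tx}] \neq \bot$ at every correct member of $v_{\mathit{final}}$, and by \Cref{lemma:in_log_then_committed} each such member eventually quasi-commits $\mathit{tx}$. Combining quasi-commitment of $\mathit{tx}$ with $\mathit{confirmed}$ being set for all its dependencies, the $\mathtt{allowed\_to\_broadcast\_committed}$ function returns $\top$ at every correct member of $v_{\mathit{final}}$ (checking lines~\ref{line:quasi_committed}, \ref{line:check_previous_allowed_to_broadcast_1}, \ref{line:check_previous_allowed_to_broadcast_2} of \Cref{lst:transaction_module}); since $v_{\mathit{final}}$ is installed there, each broadcasts $[\mathtt{COMMITTED}, \mathit{tx}, v_{\mathit{final}}, \mathit{view\_path}[v_{\mathit{final}}]]$ to $v_{\mathit{final}}.\mathtt{members}()$ at line~\ref{line:broadcast_committed}. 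By the finality property and best-effort broadcast validity, $r$ receives these from all correct members of $v_{\mathit{final}}$, which number at least $v_{\mathit{final}}.\mathtt{quorum}() \geq v_{\mathit{final}}.\mathtt{plurality}()$; so the rule at line~\ref{line:committed} fires, $r$ obtains $\mathit{committeds}[\mathit{tx}][v_{\mathit{final}}]$, and this set is a valid commitment proof by \Cref{lst:commitment_proof_verification} (it contains at least a plurality of $\mathtt{COMMITTED}$ messages for $\mathit{tx}$ from members of the valid view $v_{\mathit{final}}$, each carrying a view-path with destination $v_{\mathit{final}}$).

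The base case ($\mathit{rank} = 0$, i.e., $\mathit{tx}$ depends on nothing in $\mathit{log}_r$) runs identically but with no dependencies to discharge; in particular a rank-$0$ transaction is either not a deposit or a deposit whose withdrawal lies outside $\mathit{log}_r$ --- but the latter cannot occur, since $\mathit{state}.\mathtt{verify()} = \top$ (\Cref{lemma:always_verified}) forces every deposit in $\mathit{log}_r$ to be accompanied by its withdrawal, so a deposit always has rank $\geq 1$. The main obstacle I anticipate is the dependency-chain argument: I must be careful that the ``depends-on'' closure $\mathcal{D}(\mathit{tx})$ is finite and acyclic (it is, since $\mathit{log}_r$ is admissible, hence a DAG, by \Cref{lst:admissible_transaction_set} line~\ref{line:no_cycles}) so that the induction is well-founded, and that the \emph{same} finite set of dependencies is eventually present and confirmed at \emph{every} correct member of $v_{\mathit{final}}$ --- not just at $r$ --- which is exactly what \Cref{lemma:same_logs} plus the inductive application of this lemma to each dependency provide. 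A secondary subtlety is that $\mathit{confirmed}[\cdot]$ is set locally at line~\ref{line:set_confirmed} only after a server itself collects a plurality of $\mathtt{COMMITTED}$ messages; but since committedness of a transaction is a global, irrevocable fact and every correct member of $v_{\mathit{final}}$ eventually runs the same collection loop for the (finitely many) dependencies, each of them eventually has $\mathit{confirmed}[\mathit{tx}'] = \top$, closing the argument.
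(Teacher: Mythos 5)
Your proposal is correct and follows essentially the same route as the paper's proof: an induction on the rank of transactions in the reflexive transitive closure $\mathcal{D}(\mathit{tx})$ of the depends-on relation over the admissible log $\mathit{log}_r$, using \Cref{lemma:same_logs}, \Cref{lemma:in_log_then_committed}, the $\mathtt{allowed\_to\_broadcast\_committed}$ checks, and the finality property to get all correct members of $v_{\mathit{final}}$ to broadcast $\mathtt{COMMITTED}$ and hence assemble a plurality-sized proof. One minor nit: the fact that quasi-commitment at $r$ implies $\mathit{state}.\mathit{log}[\mathit{tx}] \neq \bot$ at $r$ follows from the check at line~\ref{line:allowed_to_quasi_log} of the $\mathtt{allowed\_to\_quasi\_commit}$ function (together with \Cref{lemma:log_only_grows}), not from \Cref{lemma:committed_in_log}, which concerns members of strictly greater views.
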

\begin{proof}
Since $\mathit{tx}$ is quasi-committed at $r$, $\mathit{state}.\mathit{log}[\mathit{tx}] \neq \bot$ at $r$ (because of the $\mathtt{allowed\_to\_quasi\_commit}$ function and \Cref{lemma:log_only_grows}; line~\ref{line:allowed_to_quasi_log} of \Cref{lst:transaction_module}).
Let $\mathit{log}_r = \{\mathit{tx} \,|\, \mathit{state}.\mathit{log}[\mathit{tx}] \neq \bot \text{ at server } r\}$.
Because of \Cref{lemma:always_verified}, $\mathit{log}_r$ is admissible.
Recall that $\mathit{tx} \in \mathit{log}_r$.

By \Cref{lemma:same_logs}, $\mathit{state}.\mathit{log}[\mathit{tx}'] \neq \bot$ at every correct member of $v_{\mathit{final}}$, for every $\mathit{tx}' \in \mathit{log}_r$.
Finally, by \Cref{lemma:in_log_then_committed}, $\mathit{tx}'$ is quasi-committed at every correct member of $v_{\mathit{final}}$, for every $\mathit{tx}' \in \mathit{log}_r$.

Let $\mathcal{D}(\mathit{tx})$ be the reflexive transitive closure of the ``depends-on'' relation for $\mathit{tx}$ given $\mathit{log}_r$; note that $\mathcal{D}(\mathit{tx}) \subseteq \mathit{log}_r$.
We prove the lemma by induction.

\smallskip
\underline{Base step:}
We consider all the transactions that belong to $\mathcal{D}(\mathit{tx})$ with rank $0$.
Let us denote that set of transactions by $\mathit{TX}_0$.

Eventually, every correct member of $v_{\mathit{final}}$ broadcasts the $\mathtt{COMMITTED}$ message for $\mathit{tx} \in \mathit{TX}_0$ to members of $v_{\mathit{final}}$ (because of the $\mathtt{allowed\_to\_broadcast\_committed}$ function and the finality property).
The finality property ensures that every correct member of $v_{\mathit{final}}$ receives the set of $v_{\mathit{final}}.\mathtt{plurality()}$ messages (at line~\ref{line:committed} of \Cref{lst:transaction_module}), which constitutes a commitment proof $\sigma_c$ for $\mathit{tx}$ (see \Cref{lst:commitment_proof_verification}).
Moreover, $\mathit{confirmed}[\mathit{tx}] = \top$ at every correct member of $v_{\mathit{final}}$.
Therefore, the base step holds.

\smallskip
\underline{Inductive step:}
We consider all transactions that belong to $\mathcal{D}(\mathit{tx})$ with rank $\mathit{rank}$.
Let $\mathit{TX}_{\mathit{rank}}$ denote that set of transactions.
We assume, for all transactions $\mathit{tx}'$ that belong to $\mathcal{D}(\mathit{tx})$ with the smaller rank, that $\mathit{confirmed}[\mathit{tx}] = \top$ at every correct member of $v_{\mathit{final}}$.
We prove that the invariant is preserved for all transactions that belong to $\mathit{TX}_{\mathit{rank}}$.

Eventually, every correct member of $v_{\mathit{final}}$ broadcasts the $\mathtt{COMMITTED}$ message for $\mathit{tx} \in \mathit{TX}_{\mathit{rank}}$ to members of $v_{\mathit{final}}$ (because of the $\mathtt{allowed\_to\_broadcast\_committed}$ function, the inductive hypothesis and the finality property).
The finality property ensures that every correct member of $v_{\mathit{final}}$ receives the set of $v_{\mathit{final}}.\mathtt{plurality()}$ messages (at line~\ref{line:committed} of \Cref{lst:transaction_module}), which constitutes a commitment proof $\sigma_c$ for $\mathit{tx}$ (see \Cref{lst:commitment_proof_verification}).
Moreover, $\mathit{confirmed}[\mathit{tx}] = \top$ at every correct member of $v_{\mathit{final}}$.
Therefore, the inductive step holds, as well.
Thus, the proof is concluded.
\end{proof}

Finally, we are ready to prove the commitment validity property.

\begin{theorem} [Commitment Validity] \label{theorem:commitment_validity}
Commitment validity is satisfied.
\end{theorem}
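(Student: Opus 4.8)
The plan is to reduce everything to the lemma chain \Cref{lemma:same_logs} $\to$ \Cref{lemma:in_log_then_committed} $\to$ \Cref{lemma:quasi_committed_proof}: once the issued transaction $\mathit{tx}$ lands in $\mathit{state}.\mathit{log}$ of a single correct member of $v_{\mathit{final}}$, \Cref{lemma:same_logs} propagates it to all correct members of $v_{\mathit{final}}$, \Cref{lemma:in_log_then_committed} has each of them quasi-commit it, and \Cref{lemma:quasi_committed_proof} produces a commitment proof, which is exactly what ``$\mathit{tx}$ is committed'' means. So the whole argument is about forcing $\mathit{tx}$ into that log.

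First I would handle the client $c$. If $c$ ever reaches its ``idle'' state with respect to $\mathit{tx}$, then by line~\ref{line:obtain_proof_c} of \Cref{lst:client} it has obtained a commitment proof for $\mathit{tx}$ and we are done; otherwise $\mathit{current\_transaction} = \mathit{tx}$ forever and the state stays ``certificate collection'' or ``commitment proof collection''. I would then show $c$ eventually pins $\mathit{current\_view}$ to $v_{\mathit{final}}$: since $v_{\mathit{final}}$ is forever-alive, a forever-correct process gossips a full view-path to it, so by gossip validity $c$ receives every message on that path infinitely often, processes them via the rule at line~\ref{line:new_view_in_history} of \Cref{lst:client}, and eventually has $v_{\mathit{final}} \in \mathit{history}$; and $c$ cannot move past $v_{\mathit{final}}$, because that would require $c$ to obtain and gossip a view-path to a valid view strictly larger than $v_{\mathit{final}}$, which would then be forever-alive, contradicting maximality of $v_{\mathit{final}}$. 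Using $v_{\mathit{final}} \supseteq \mathit{genesis}$ (\Cref{lemma:genesis_smallest_valid}) and that $\mathit{current\_view}$ only grows, $c$ ends up fixed at $v_{\mathit{final}}$ and re-broadcasts PREPARE (state ``certificate collection'') or COMMIT (state ``commitment proof collection'') to $v_{\mathit{final}}.\mathtt{members}()$.

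Next I would establish the auxiliary fact that every committed transaction --- in particular all of $c$'s transactions with smaller sequence number, and $\mathit{tx}.\mathit{withdrawal}$ if $\mathit{tx}$ is a deposit, which are committed by the correct-client rules of \Cref{subsection:core_problem} --- eventually lies in $\mathit{state}.\mathit{log}$, hence in $\mathit{log}$, of every correct member of $v_{\mathit{final}}$: a commitment proof contains a COMMITTED message from a correct server, which therefore quasi-committed the transaction in some valid view $v^{\ast} \subseteq v_{\mathit{final}}$ (bounding $v^{\ast}$ via \Cref{lemma:current_view_forever_alive}), and then \Cref{lemma:committed_in_log} (if $v^{\ast} \subset v_{\mathit{final}}$) or \Cref{lemma:same_logs} (if $v^{\ast} = v_{\mathit{final}}$), together with Finality, push it to all correct members of $v_{\mathit{final}}$. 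With this in hand I would argue: either $\mathit{tx}$ is already in $\mathit{state}.\mathit{log}$ of some correct member of $v_{\mathit{final}}$ (done, via the lemma chain), or it is not, in which case every correct member of $v_{\mathit{final}}$ eventually passes the guard at line~\ref{line:prepare_rule} of \Cref{lst:transaction_module} on $c$'s re-broadcast PREPARE: $c \notin \mathit{faulty\_clients}$ since a correct client never signs conflicting transactions, $\mathit{tx}.\mathit{sn} = \mathit{log\_height}[c]+1$ since exactly $c$'s earlier transactions are in the log, $\mathit{allowed\_acks}[c][\mathit{tx}.\mathit{sn}] \in \{\emptyset, \{\mathit{tx}\}\}$, the amount / double-deposit / double-vote checks hold by the correct-client rules (using that a deposit's issuer must equal its withdrawal's receiver, so only $c$ can deposit $\mathit{tx}.\mathit{withdrawal}$), and installed/processing hold by Finality --- so they ACK $\mathit{tx}$ in $v_{\mathit{final}}$; $c$ collects $v_{\mathit{final}}.\mathtt{quorum}()$ such ACKs, forms a valid certificate, moves to ``commitment proof collection'', and broadcasts COMMIT for $\mathit{tx}$ in $v_{\mathit{final}}$, whose guard at line~\ref{line:commit_message_rule} of \Cref{lst:transaction_module} then passes at every correct member (valid certificate and $\mathtt{allowed\_to\_commit\_confirm}(\mathit{tx}) = \top$ by the same bookkeeping), putting $\mathit{tx}$ into their $\mathit{state}.\mathit{log}$.

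The main obstacle I anticipate is making the ``pinned at $v_{\mathit{final}}$'' claim airtight and, more delicately, the bookkeeping that the PREPARE and COMMIT guards eventually hold simultaneously at a quorum: one must rule out a scenario where $\mathit{tx}$ is perpetually ``half carried forward'' --- present in some but not a quorum of the correct $v_{\mathit{final}}$ members' logs, so they neither re-ACK nor let $c$ finish --- but this is handled by observing that $\mathit{tx}$ being in even one correct $v_{\mathit{final}}$ member's log already triggers \Cref{lemma:same_logs}. The remainder is careful but routine case analysis over transaction types and over the Finality guarantees.
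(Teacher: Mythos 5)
Your proposal is correct and follows essentially the same route as the paper's proof: establish that every dependency of $\mathit{tx}$ ends up in $\mathit{state}.\mathit{log}$ of all correct members of $v_{\mathit{final}}$ (via \Cref{lemma:committed_in_log} or \Cref{lemma:same_logs}), let the forever-correct client re-broadcast \texttt{PREPARE}/\texttt{COMMIT} in $v_{\mathit{final}}$ to force $\mathit{tx}$ into those logs, and then conclude with the chain \Cref{lemma:same_logs} $\to$ \Cref{lemma:in_log_then_committed} $\to$ \Cref{lemma:quasi_committed_proof}. The extra detail you supply on pinning the client's current view at $v_{\mathit{final}}$ and on the \texttt{PREPARE} guard bookkeeping is material the paper's proof leaves implicit, not a different argument.
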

\begin{proof}
Consider a transaction $\mathit{tx}$ issued by a forever-correct client which is not committed.
All transactions on which $\mathit{tx}$ depends are committed (see \Cref{subsection:core_problem}, paragraph ``Rules'').
Let $\mathit{tx}$ be dependent on $\mathit{tx}'$; as already mentioned, $\mathit{tx}'$ is committed.
Therefore, $\mathit{tx}'$ is quasi-committed (by the $\mathtt{allowed\_to\_broadcast\_committed}$ function; line~\ref{line:quasi_committed} of \Cref{lst:transaction_module}).
Let $v_{\mathit{tx}'}$ be the smallest view in which $\mathit{tx}'$ is quasi-committed.
We distinguish two cases:
\begin{compactitem}
    \item Let $v_{\mathit{tx}'} \subset v_{\mathit{final}}$.
    Therefore, $\mathit{state}.\mathit{log}[\mathit{tx}'] \neq \bot$ at every correct member of $v_{\mathit{final}}$ (by \Cref{lemma:committed_in_log}).
    
    \item Let $v_{\mathit{tx}'} = v_{\mathit{final}}$.
    Hence, there exists a correct member of $v_{\mathit{final}}$ that sends the $\mathtt{COMMIT-CONFIRM}$ message for $\mathit{tx}'$ associated with $v_{\mathit{final}}$.
    Thus, $\mathit{state}.\mathit{log}[\mathit{tx}'] \neq \bot$ at that server (by line~\ref{line:update_log_state_representation} of \Cref{lst:transaction_module}).
    By \Cref{lemma:same_logs}, $\mathit{state}.\mathit{log}[\mathit{tx}'] \neq \bot$ at every correct member of $v_{\mathit{final}}$.
\end{compactitem}
In both cases, $\mathit{state}.\mathit{log}[\mathit{tx}'] \neq \bot$ at every correct member of $v_{\mathit{final}}$.
By \Cref{lemma:in_log_then_committed}, $\mathit{tx}'$ is quasi-committed at every correct member of $v_{\mathit{final}}$.
Finally, by \Cref{lemma:quasi_committed_proof}, every correct member of $v_{\mathit{final}}$ obtains a commitment proof for $\mathit{tx}'$ (and sets $\mathit{confirmed}[\mathit{tx}'] = \top$).

Since $\mathit{tx}.\mathit{issuer}$ is forever-correct, the client eventually learns about $v_{\mathit{final}}$ (i.e., $v_{\mathit{final}} \in \mathit{history}$ at the client).
Once that happens, the client sends the $\mathtt{PREPARE}$ message for $\mathit{tx}$ to all members of $v_{\mathit{final}}$ (at lines~\ref{line:prepare_1_c} or~\ref{line:prepare_2_c} of \Cref{lst:client}).
Since $\mathit{state}.\mathit{log}[\mathit{tx}'] \neq \bot$ at every correct member of $v_{\mathit{final}}$, for every dependency $\mathit{tx}'$ of $\mathit{tx}$, all correct members of $v_{\mathit{final}}$ send the $\mathtt{ACK}$ message to the client (at line~\ref{line:send_ack} of \Cref{lst:transaction_module}).
Hence, the client eventually obtains a transaction certificate for $\mathit{tx}$.

Similarly, the client sends the $\mathtt{COMMIT}$ message to all members of $v_{\mathit{final}}$ (at lines~\ref{line:commit_c_1} or~\ref{line:commit_c_2} of \Cref{lst:client}).
Therefore, every correct member of $v_{\mathit{final}}$ eventually receives the $\mathtt{COMMIT}$ message (since $\mathit{state}.\mathit{log}[\mathit{tx}'] \neq \bot$ at every correct member of $v_{\mathit{final}}$, for every dependency $\mathit{tx}'$ of $\mathit{tx}$).
Finally, this means that $\mathit{log}.\mathit{state}[\mathit{tx}] \neq \bot$ at every correct member of $v_{\mathit{final}}$ (by line~\ref{line:update_log_state_representation} of \Cref{lst:transaction_module}).
By \Cref{lemma:in_log_then_committed}, $\mathit{tx}$ is quasi-committed at every correct member of $v_{\mathit{final}}$.
Furthermore, every correct member of $v_{\mathit{final}}$ eventually obtains a commitment proof for $\mathit{tx}$ (by \Cref{lemma:quasi_committed_proof}), which implies that $\mathit{tx}$ is committed.
Since we reach contradiction, our starting assumption was not correct, which means that $\mathit{tx}$ is committed.
The theorem holds.
\end{proof}

Next, we prove the commitment learning property.

\begin{theorem} [Commitment Learning] \label{theorem:commitment_learning}
Commitment learning is satisfied.
\end{theorem}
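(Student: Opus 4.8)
The plan is to reduce the claim to the infrastructure already used for commitment validity. First I would unpack what ``$\mathit{tx}$ is committed'' means: some process holds a commitment proof $\sigma_c$ with $\mathtt{verify\_commit}(\mathit{tx}, \sigma_c) = \top$, which by \Cref{lst:commitment_proof_verification} is a set of $\mathtt{COMMITTED}$ messages for $\mathit{tx}$, each carrying a view-path to a common valid view $v$, from at least $v.\mathtt{plurality()}$ distinct members of $v$. Since $v$ is valid, at most $\lfloor\frac{n-1}{3}\rfloor$ of its members are faulty while $v.\mathtt{plurality()} = \lfloor\frac{n-1}{3}\rfloor + 1$, so at least one of those senders is a correct server $r$. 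That server broadcast $\mathtt{COMMITTED}$ for $\mathit{tx}$ at line~\ref{line:broadcast_committed} of \Cref{lst:transaction_module}, guarded by $\mathtt{allowed\_to\_broadcast\_committed}(\mathit{tx}) = \top$; inspecting that function (line~\ref{line:quasi_committed}) yields $\mathit{quasi\_committed}[\mathit{tx}] = \top$ at $r$, and tracing the rule that enabled this through $\mathtt{allowed\_to\_quasi\_commit}$ shows $\mathit{tx}$ is quasi-committed in some view. Let $v_{\mathit{tx}}$ be the smallest view in which $\mathit{tx}$ is quasi-committed; this is well-defined since every such view is valid (\Cref{lemma:quasi_committed_installable}) and all valid views are comparable (\Cref{theorem:view_comparability}).

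Next I would show $v_{\mathit{tx}} \subseteq v_{\mathit{final}}$ and that $\mathit{state}.\mathit{log}[\mathit{tx}] \neq \bot$ at every correct member of $v_{\mathit{final}}$, exactly as in the first part of the proof of \Cref{theorem:commitment_validity}. Because $v_{\mathit{tx}}$ is valid and the witnessing set of $\mathtt{COMMIT-CONFIRM}$ messages spans a quorum of $v_{\mathit{tx}}$'s members, at least one correct member $r'$ of $v_{\mathit{tx}}$ sent a $\mathtt{COMMIT-CONFIRM}$ for $\mathit{tx}$ associated with $v_{\mathit{tx}}$ (line~\ref{line:send_commit_confirm}), which forced $r'$ to have held $\mathit{current\_view}.\mathit{view} = v_{\mathit{tx}}$; by \Cref{lemma:current_view_forever_alive} a view $\supseteq v_{\mathit{tx}}$ is forever-alive, and since $v_{\mathit{final}}$ is the greatest forever-alive view, $v_{\mathit{tx}} \subseteq v_{\mathit{final}}$. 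If $v_{\mathit{tx}} \subset v_{\mathit{final}}$, then \Cref{lemma:committed_in_log} gives $\mathit{state}.\mathit{log}[\mathit{tx}] \neq \bot$ at every correct member of $v_{\mathit{final}}$; if $v_{\mathit{tx}} = v_{\mathit{final}}$, that same $\mathtt{COMMIT-CONFIRM}$ implies $\mathit{state}.\mathit{log}[\mathit{tx}] \neq \bot$ at $r'$ (line~\ref{line:update_log_state_representation}), whence \Cref{lemma:same_logs} propagates this to every correct member of $v_{\mathit{final}}$.

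With $\mathit{state}.\mathit{log}[\mathit{tx}] \neq \bot$ at all correct members of $v_{\mathit{final}}$, I would invoke \Cref{lemma:in_log_then_committed} to get that $\mathit{tx}$ is quasi-committed at each of them, and then \Cref{lemma:quasi_committed_proof}, whose proof shows that each such member eventually executes the rule at line~\ref{line:committed} of \Cref{lst:transaction_module}: there it sets $\mathit{confirmed}[\mathit{tx}] = \top$ and sends the set $\mathit{committeds}[\mathit{tx}][v_{\mathit{final}}]$ to $\mathit{tx}.\mathit{issuer}$ (line~\ref{line:proof_to_issuer}) and, when $\mathit{tx}$ is a withdrawal transaction, to $\mathit{tx}.\mathit{receiver}$ (line~\ref{line:proof_to_receiver}). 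That set consists of at least $v_{\mathit{final}}.\mathtt{plurality()}$ $\mathtt{COMMITTED}$ messages for $\mathit{tx}$, each carrying a view-path to the valid view $v_{\mathit{final}}$, so by \Cref{lst:commitment_proof_verification} it is a commitment proof for $\mathit{tx}$. By the finality property no correct member of $v_{\mathit{final}}$ ever leaves, hence none halts, hence all are forever-correct; so by the validity of perfect links a forever-correct $\mathit{tx}.\mathit{issuer}$ (respectively $\mathit{tx}.\mathit{receiver}$) eventually receives this set, the rule at line~\ref{line:obtain_proof_c} of \Cref{lst:client} becomes permanently active, and the client triggers $\mathtt{committed} \text{ } \mathit{tx}$, i.e., learns that $\mathit{tx}$ is committed.

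The main obstacle is confined to the first two paragraphs: faithfully chaining ``committed'' $\Rightarrow$ ``quasi-committed at a correct server'' $\Rightarrow$ ``$v_{\mathit{tx}} \subseteq v_{\mathit{final}}$'' $\Rightarrow$ ``$\mathit{state}.\mathit{log}[\mathit{tx}] \neq \bot$ throughout $v_{\mathit{final}}$'', after which the argument is a direct appeal to \Cref{lemma:in_log_then_committed} and \Cref{lemma:quasi_committed_proof} together with the minor observation that the forwarded message set is exactly a commitment proof whose senders are forever-correct. There are no display-math steps and no new lemmas are needed; the work is entirely in matching up guards and invariants already established.
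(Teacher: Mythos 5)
Your proposal is correct and follows essentially the same route as the paper's proof: reduce "committed" to "quasi-committed," take the smallest view $v_{\mathit{tx}}$ in which $\mathit{tx}$ is quasi-committed, split on $v_{\mathit{tx}} \subset v_{\mathit{final}}$ versus $v_{\mathit{tx}} = v_{\mathit{final}}$ using \Cref{lemma:committed_in_log} and \Cref{lemma:same_logs}, then chain \Cref{lemma:in_log_then_committed} and \Cref{lemma:quasi_committed_proof} and appeal to finality for delivery at lines~\ref{line:proof_to_issuer}--\ref{line:proof_to_receiver}. Your explicit justification that $v_{\mathit{tx}} \subseteq v_{\mathit{final}}$ (via \Cref{lemma:current_view_forever_alive}) is a small detail the paper leaves implicit, but the argument is otherwise identical.
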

\begin{proof}
Let a transaction $\mathit{tx}$ be committed.
Therefore, $\mathit{tx}$ is quasi-committed (by the $\mathtt{allowed\_to\_broadcast\_committed}$ function; line~\ref{line:quasi_committed} of \Cref{lst:transaction_module}).
Let $v_{\mathit{tx}}$ be the smallest view in which $\mathit{tx}'$ is quasi-committed.
We distinguish two cases:
\begin{compactitem}
    \item Let $v_{\mathit{tx}} \subset v_{\mathit{final}}$.
    Therefore, $\mathit{state}.\mathit{log}[\mathit{tx}] \neq \bot$ at every correct member of $v_{\mathit{final}}$ (by \Cref{lemma:committed_in_log}).
    
    \item Let $v_{\mathit{tx}} = v_{\mathit{final}}$.
    Hence, there exists a correct member of $v_{\mathit{final}}$ that sends the $\mathtt{COMMIT-CONFIRM}$ message for $\mathit{tx}$ associated with $v_{\mathit{final}}$.
    Thus, $\mathit{state}.\mathit{log}[\mathit{tx}] \neq \bot$ at that server (by line~\ref{line:update_log_state_representation} of \Cref{lst:transaction_module}).
    By \Cref{lemma:same_logs}, $\mathit{state}.\mathit{log}[\mathit{tx}] \neq \bot$ at every correct member of $v_{\mathit{final}}$.
\end{compactitem}
In both cases, $\mathit{state}.\mathit{log}[\mathit{tx}] \neq \bot$ at every correct member of $v_{\mathit{final}}$.
By \Cref{lemma:in_log_then_committed}, $\mathit{tx}$ is quasi-committed at every correct member of $v_{\mathit{final}}$.
Finally, by \Cref{lemma:quasi_committed_proof}, every correct member of $v_{\mathit{final}}$ obtains a commitment proof for $\mathit{tx}$ (and sets $\mathit{confirmed}[\mathit{tx}'] = \top$).
Once that happens, a correct member of $v_{\mathit{final}}$ sends the commitment proof to $\mathit{tx}.\mathit{issuer}$ (at line~\ref{line:proof_to_issuer} of \Cref{lst:transaction_module}) and, if $\mathit{tx}$ is a withdrawal transaction, to $\mathit{tx}.\mathit{receiver}$ (at line~\ref{line:proof_to_receiver} of \Cref{lst:transaction_module}).
Since no correct member of $v_{\mathit{final}}$ leaves (by the finality property), the appropriate clients eventually obtain commitment proofs (at line~\ref{line:obtain_proof_c} of \Cref{lst:client}) and the theorem holds.
\end{proof}

The last property of \sysname we need to prove is the commitment signing property (see \Cref{subsection:carbon_properties}, paragraph ``Properties of \sysname'').
In order to show that this property holds, we first prove that an issued transaction cannot be quasi-committed in ``stale'' views.

\begin{lemma} \label{lemma:no_quasi_committed_below_max}
Let a transaction $\mathit{tx}$ be issued at time $t$.
Let $\mathcal{V}(t) = \{v \,|\, \mathit{current\_view}.\mathit{view} = v \neq \bot \text{ at a correct server at time } t\}$.
If $\mathcal{V}(t) \neq \emptyset$ and $v_{\mathit{max}}$ is the greatest view of $\mathcal{V}(t)$, then $\mathit{tx}$ is not quasi-committed in a view $v$, for any view $v \subset v_{\mathit{max}}$.
\end{lemma}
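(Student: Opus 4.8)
The plan is to argue by contradiction: assume $\mathit{tx}$ is quasi-committed in some view $v$ with $v \subset v_{\mathit{max}}$, and derive an impossibility by combining three observations. First, a quasi-committed-in-$v$ transaction is confirmed by a quorum of members of $v$ that are \emph{actively processing in $v$}; second, since $\mathit{tx}$ is issued at time $t$, no correct process sees $\mathit{tx}$ before $t$, so those confirmations all happen at some time $\geq t$; third, since $v$ is installable (\Cref{lemma:quasi_committed_installable}) and lies strictly below the view $v_{\mathit{max}}$ that a correct server has already reached by time $t$, a quorum of members of $v$ must already have stopped processing in $v$ strictly before $t$. A quorum intersection between the confirming quorum and the ``stopped'' quorum produces a correct server that is simultaneously required to be in view $v$ with processing enabled at some time $\geq t$ and forbidden from ever being so again after a time $< t$ — a contradiction.

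Concretely, I would first unpack quasi-commitment via \Cref{lst:quasi_commitment_verification}: $\mathit{tx}$ quasi-committed in $v$ means some process holds $v.\mathtt{quorum()}$ messages $[\mathtt{COMMIT\text{-}CONFIRM}, \mathit{tx}, v, \cdot]$ from distinct members of $v$; since $v$ is installable hence valid, at least $v.\mathtt{plurality()} \geq 1$ of these senders are correct, and a correct server sends $[\mathtt{COMMIT\text{-}CONFIRM}, \mathit{tx}, v, \cdot]$ only while $\mathit{current\_view}.\mathit{view} = v$, $\mathit{current\_view}.\mathit{installed} = \top$ and $\mathit{current\_view}.\mathit{processing} = \top$, in response to processing a $\mathtt{COMMIT}$ message that carries $\mathit{tx}$ (\Cref{lst:transaction_module}). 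Since no correct process obtains $\mathit{tx}$ before $t$, each such correct sender sends its $\mathtt{COMMIT\text{-}CONFIRM}$ at a time $\geq t$. Let $Q$ denote this quorum of senders.

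Next, for the third ingredient, I would consider the first time $t^\ast \le t$ at which some correct server $r^{\ast\ast}$ sets $\mathit{current\_view}.\mathit{view}$ to a view $v^{\ast\ast} \supsetneq v$; this time exists because the correct server witnessing $v_{\mathit{max}} \in \mathcal{V}(t)$ has $\mathit{current\_view}.\mathit{view} = v_{\mathit{max}} \supsetneq v$ at time $t$. Using that $r^{\ast\ast}$ sets this at line~\ref{line:update_current_view} of \Cref{lst:reconfiguration_view_transition}, together with \Cref{lemma:install_skip_installable} (installable $v$ cannot be ``jumped over''), the decision-permission property of the view generator (\Cref{lst:view_generator_properties}), and the minimality of $t^\ast$, I would show that $r^{\ast\ast}$'s transition has source exactly $v$ and that $r^{\ast\ast}$ became prepared via the ``state from a quorum of the source'' branch at line~\ref{line:prepared_from_source} of \Cref{lst:reconfiguration_state_transfer}: the larger-view branch at line~\ref{line:prepared_from_bigger}, and a source strictly above $v$, would each force some correct server to have $\mathit{current\_view}.\mathit{view} \supsetneq v$ before $t^\ast$, contradicting minimality. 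Hence a quorum $Q'$ of members of $v$ sent $\mathtt{STATE\text{-}UPDATE}$ for source $v$ to $r^{\ast\ast}$ before $t^\ast$; since before $t^\ast$ their current view is $\subseteq v$ (or $\bot$), in handling the $\mathtt{STATE\text{-}REQUEST}$ they pass the check at line~\ref{line:check_member_source} and execute lines~\ref{line:stop_processing} and~\ref{line:stop_processing_until}, setting $\mathit{current\_view}.\mathit{processing} = \bot$ and $\mathit{stop\_processing\_until} \supseteq v$. A monotonicity argument — $\mathit{stop\_processing\_until}$ can only shrink (to $\emptyset$) when $\mathit{current\_view}.\mathit{view}$ strictly exceeds it, and $\mathit{current\_view}.\mathit{processing}$ is set to $\top$ only at genesis initialization or in the singleton branch guarded by $\mathit{current\_view}.\mathit{view} \supset \mathit{stop\_processing\_until}$ — then shows that after this point such a server never again has $\mathit{current\_view}.\mathit{view} = v$ and $\mathit{current\_view}.\mathit{processing} = \top$ at the same time.

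Finally, I would intersect: $|Q|, |Q'| \ge v.\mathtt{quorum()}$ and both lie inside $v.\mathtt{members()}$, so some correct $r^\ast \in Q \cap Q'$ exists; $r^\ast$ sent a $\mathtt{COMMIT\text{-}CONFIRM}$ for $\mathit{tx}$ in $v$ at a time $\ge t$, yet it sent a $\mathtt{STATE\text{-}UPDATE}$ for source $v$ strictly before $t^\ast \le t$, after which it can never again be in view $v$ with processing enabled — the required contradiction. I expect the main obstacle to be exactly the third ingredient: turning ``$v_{\mathit{max}}$ has been reached by time $t$'' into ``a quorum of $v$ has genuinely carried out the state transfer \emph{out of} $v$ by time $t$'', rather than merely ``the corresponding $\mathtt{INSTALL}$ message exists''; this needs \Cref{lemma:install_skip_installable}, view-generator decision permission, and the minimality of $t^\ast$ deployed together, plus a small separate check for $v = \mathit{genesis}$ because of the genesis-time initialization of $\mathit{current\_view}.\mathit{processing}$.
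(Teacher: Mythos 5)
Your proposal is correct and follows essentially the same route as the paper's proof: contradiction via quorum intersection between the correct senders of $\mathtt{COMMIT\text{-}CONFIRM}$ for $\mathit{tx}$ in $v$ (all at time $\geq t$, since $\mathit{tx}$ is issued at $t$) and the quorum of members of $v$ that must have sent $\mathtt{STATE\text{-}UPDATE}$ — and hence stopped processing in $v$ — before the first correct server transited above $v$, which happens by time $t$ because some correct server already holds $v_{\mathit{max}} \supset v$ at $t$. Your extra detail on why the transition's source is exactly $v$ (via \Cref{lemma:install_skip_installable} and minimality of $t^\ast$) and the monotonicity of $\mathit{stop\_processing\_until}$ simply makes explicit what the paper states more tersely.
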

\begin{proof}
Note that $\mathcal{V}(t)$ might contain values of the $\mathit{current\_view}.\mathit{view}$ variable of correct servers that halted by time $t$ (note that $\mathit{current\_view}.\mathit{view}$ is not modified upon leaving; see \Cref{lst:reconfiguration_view_transition}).
We prove the lemma by contradiction.
Therefore, let $\mathit{tx}$ be quasi-committed in a view $v \subset v_{\mathit{max}}$.

First, note that no correct server obtains $\mathit{tx}$ before time $t$ (due to the definition of ``issued at time $t$''; see \Cref{subsection:carbon_properties}, paragraph ``Properties of \sysname'').
Therefore, no correct server sends the $\mathtt{COMMIT-CONFIRM}$ message for $\mathit{tx}$ before $t$.
Moreover, since $\mathit{tx}$ is quasi-committed in $v$ and \Cref{lemma:quasi_committed_installable} holds, $v$ is a view installable at time $\infty$.

Let $r_{\mathit{first}}$ be the first correct server to set its $\mathit{current\_view}.\mathit{view}$ variable to a view greater than $v$; let that view be $v_{\mathit{first}} \supset v$.
Observe that $r_{\mathit{first}}$ sets its $\mathit{current\_view}.\mathit{view}$ variable to $v_{\mathit{first}}$ by time $t$.
Since $v \subset v_{\mathit{first}}$, $v_{\mathit{first}} \neq \mathit{genesis}$ (by \Cref{lemma:genesis_smallest_valid}).
Therefore, before updating its $\mathit{current\_view}.\mathit{view}$ variable to $v_{\mathit{first}}$, $r_{\mathit{first}}$ has received $\mathtt{STATE-UPDATE}$ messages from the quorum of members of $v$ (the rule at line~\ref{line:prepared_from_source} of \Cref{lst:reconfiguration_state_transfer} becomes active and $\mathit{reconfiguration}.\mathit{source} = v$ at that time at $r_{\mathit{first}}$, by \Cref{lemma:install_skip_installable}).
Each such $\mathtt{STATE-UPDATE}$ message is sent before time $t$ at line~\ref{line:send_state_update} of \Cref{lst:reconfiguration_state_transfer}.
Therefore, before sending the $\mathtt{STATE-UPDATE}$ message (i.e., before time $t$), the $\mathit{stop\_processing\_until}$ variable is equal to (at least) $v$ at each correct server that sends the $\mathtt{STATE-UPDATE}$ message received by $r_{\mathit{first}}$.

Furthermore, since $\mathit{tx}$ is quasi-committed in $v$, at least $v.\mathtt{plurality()}$ of correct members of $v$ have sent the $\mathtt{COMMIT-CONFIRM}$ message for $\mathit{tx}$ associated with $v$ (at line~\ref{line:send_commit_confirm} of \Cref{lst:transaction_module}); each such message is sent at some time greater than or equal to $t$.
Therefore, there exists a correct member of $v$ that (1) sets its $\mathit{stop\_processing\_until}$ variable to (at least) $v$ before time $t$, and (2) sends the $\mathtt{COMMIT-CONFIRM}$ message for $\mathit{tx}$ associated with $v$ at time $t$ (or later).
Such behavior is not a correct one (see \Cref{lst:reconfiguration_view_transition}).
Hence, we reach contradiction and $\mathit{tx}$ is not quasi-committed in $v$.
The lemma holds.
\end{proof}

The direct consequence of \Cref{lemma:no_quasi_committed_below_max} is that no correct server sends the $\mathtt{COMMITTED}$ message associated with a view smaller than $v_{\mathit{max}}$.
Let us prove this claim.

\begin{lemma} \label{lemma:no_committed_below_max}
Let a transaction $\mathit{tx}$ be issued at time $t$.
Let $\mathcal{V}(t) = \{v \,|\, \mathit{current\_view}.\mathit{view} = v \neq \bot \text{ at a correct server at time } t\}$.
If $\mathcal{V}(t) \neq \emptyset$ and $v_{\mathit{max}}$ is the greatest view of $\mathcal{V}(t)$, then no correct server sends the $\mathtt{COMMITTED}$ message for $\mathit{tx}$ associated with a view $v$, for any view $v \subset v_{\mathit{max}}$.
\end{lemma}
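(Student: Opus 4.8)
The proof will be a short reduction to Lemma~\ref{lemma:no_quasi_committed_below_max}. The guiding observation is that a correct server emits a $\mathtt{COMMITTED}$ message for $\mathit{tx}$ only after $\mathit{tx}$ has been quasi-committed \emph{from that server's perspective}, and the view attached to the $\mathtt{COMMITTED}$ message is never smaller than the view in which that quasi-commitment was witnessed; Lemma~\ref{lemma:no_quasi_committed_below_max} then forbids the latter view from being below $v_{\mathit{max}}$.

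First I would assume, for contradiction, that some correct server $r$ broadcasts $[\mathtt{COMMITTED}, \mathit{tx}, w, \ldots]$ with $w \subset v_{\mathit{max}}$. Inspecting Listing~\ref{lst:transaction_module}, the only place a correct server sends such a message is line~\ref{line:broadcast_committed}, and there the attached view equals $\mathit{current\_view}.\mathit{view}$; hence $\mathit{current\_view}.\mathit{view} = w$ at $r$ at that moment, and moreover the guard requires $\mathtt{allowed\_to\_broadcast\_committed}(\mathit{tx}) = \top$, which by line~\ref{line:quasi_committed} forces $\mathit{quasi\_committed}[\mathit{tx}] = \top$ at $r$. Next I would use the fact that the $\mathit{current\_view}.\mathit{view}$ variable of a correct server is monotone non-decreasing: it is initialized to $\mathit{genesis}$ (line~\ref{line:set_current_to_genesis} of Listing~\ref{lst:reconfiguration_initialization}) and thereafter updated only at line~\ref{line:update_current_view} of Listing~\ref{lst:reconfiguration_view_transition}, which is guarded by $\mathit{current\_view}.\mathit{view} \subset v$ at line~\ref{line:updated_discovery_exists}. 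Consequently, at the earlier time when $r$ set $\mathit{quasi\_committed}[\mathit{tx}] = \top$, its current view was some view $\subseteq w$.

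Then I would unfold $\mathtt{allowed\_to\_quasi\_commit}(\mathit{tx})$, which returned $\top$ at that time: it picks the smallest view $v$ with $|\mathit{commit\_confirms\_from}[\mathit{tx}][v]| \geq v.\mathtt{quorum}()$ at $r$ and, in both branches, requires $v \subseteq \mathit{current\_view}.\mathit{view}$, so $v \subseteq w$. The set $\mathit{commit\_confirms}[\mathit{tx}][v]$ at $r$ contains, by the message-handling rule that populates it, $\geq v.\mathtt{quorum}()$ $\mathtt{COMMIT\text{-}CONFIRM}$ messages for $\mathit{tx}$ each accompanied by a view-path to $v$ and each from a distinct member of $v$; this is exactly a quasi-commitment certificate per Listing~\ref{lst:quasi_commitment_verification}, so $\mathit{tx}$ is quasi-committed in $v$. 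Since $v \subseteq w \subset v_{\mathit{max}}$, we get $v \subset v_{\mathit{max}}$, contradicting Lemma~\ref{lemma:no_quasi_committed_below_max}. This yields the claim.

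\textbf{Main obstacle.} The only delicate points are the bookkeeping ones: confirming that the view attached to a $\mathtt{COMMITTED}$ message is indeed the emitting server's current view, that $\mathit{commit\_confirms\_from}[\mathit{tx}][v]$ only ever accumulates senders that are members of $v$ (so that a quorum there really is a quorum \emph{of} $v$), and that $\mathit{current\_view}.\mathit{view}$ genuinely only grows across all rules of the reconfiguration module. None of these require new ideas beyond careful pseudocode inspection, so I expect the argument to remain short.
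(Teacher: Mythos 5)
Your proof is correct and takes essentially the same route as the paper's: assume a correct server sends the $\mathtt{COMMITTED}$ message associated with $v \subset v_{\mathit{max}}$, observe via $\mathtt{allowed\_to\_broadcast\_committed}$ and $\mathtt{allowed\_to\_quasi\_commit}$ that $\mathit{tx}$ must then be quasi-committed in some view $\subseteq v$, and derive a contradiction with Lemma~\ref{lemma:no_quasi_committed_below_max}. The paper's version is more terse (it does not spell out the monotonicity of $\mathit{current\_view}.\mathit{view}$ or the unfolding of the quasi-commitment certificate), but the argument is the same.
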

\begin{proof}
By contradiction, let a correct server $r$ send the $\mathtt{COMMITTED}$ message for $\mathit{tx}$ associated with a view $v \subset v_{\mathit{max}}$.
That means that $\mathit{tx}$ is quasi-committed at server $r$ (by the $\mathtt{allowed\_to\_broadcast\_committed}$ function; line~\ref{line:quasi_committed} of \Cref{lst:transaction_module}).
Since $\mathit{current\_view}.\mathit{view} = v$ at the moment of sending the $\mathtt{COMMITTED}$ message (by line~\ref{line:broadcast_committed} of \Cref{lst:transaction_module}), $\mathit{tx}$ is quasi-committed in a view smaller than or equal to $v$.
Since $v \subset v_{\mathit{max}}$, this is not possible due to \Cref{lemma:no_quasi_committed_below_max}.
Therefore, $r$ does not send the $\mathtt{COMMITTED}$ message for $\mathit{tx}$ associated with $v \subset v_{\mathit{max}}$ and the lemma holds.
\end{proof}

The next theorem shows that the commitment signing property is satisfied.

\begin{theorem} [Commitment Signing] \label{theorem:commitment_signing}
Commitment signing is satisfied.
\end{theorem}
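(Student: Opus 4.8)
The plan is to unpack the definition of $\mathtt{verify\_commit}$ (\Cref{lst:commitment_proof_verification}) and reason backwards from the $\mathtt{COMMITTED}$ messages constituting $\sigma_c$. Since $\mathtt{verify\_commit}(\mathit{tx}, \sigma_c) = \top$, there is a valid view $v$ such that $\sigma_c$ is a set of $\mathtt{COMMITTED}$ messages for $\mathit{tx}$ associated with $v$, every sender lies in $v.\mathtt{members()}$, and $|\sigma_c.\mathtt{signers}| \geq v.\mathtt{plurality()}$. Fix $a \in \sigma_c.\mathtt{signers}$; then $a$ sent a $\mathtt{COMMITTED}$ message for $\mathit{tx}$ associated with $v$ by time $t'$ (the message belongs to $\sigma_c$, which is obtained at $t'$), and $(+, a) \in v$ while $(-, a) \notin v$. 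The goal is to exhibit $t_{\mathit{val}} \in [t, t']$ at which $a$ is a validator.

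First I would dispatch the case where $a$ is correct. Then $a$ broadcast its $\mathtt{COMMITTED}$ message at line~\ref{line:broadcast_committed} of \Cref{lst:transaction_module} while $\mathit{current\_view}.\mathit{view} = v$, at some time $t_a \leq t'$; and since no correct process obtains $\mathit{tx}$ before time $t$ (the definition of ``issued at time $t$''; see \Cref{subsection:carbon_properties}), $a$ cannot have sent this message before $t$, so $t_a \geq t$. It remains to note that $a$ is a validator at $t_a$: $a$ executes a step at $t_a$ and a correct server halts immediately upon leaving, so $a$ has not left by $t_a$; and $a$ holds a non-$\bot$ current view, which a correct server acquires only by joining (at line~\ref{line:join_init} of \Cref{lst:reconfiguration_initialization} together with line~\ref{line:set_current_to_genesis}, or by the time it executes line~\ref{line:update_current_view} of \Cref{lst:reconfiguration_view_transition} via line~\ref{line:joined}), so $a$ has joined by $t_a$. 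Hence $t_{\mathit{val}} = t_a$ works.

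The harder case is $a$ faulty. Here I would use that $|\sigma_c.\mathtt{signers}| \geq v.\mathtt{plurality()} = \lfloor\frac{n-1}{3}\rfloor + 1$ while at most $\lfloor\frac{n-1}{3}\rfloor$ members of $v$ are faulty, so some sender $c$ is correct; by the previous paragraph $c$ set $\mathit{current\_view}.\mathit{view} = v$ and is a validator at some $t_c \in [t, t']$. Walking down a view-path to $v$ (which exists and consists of valid views, by \Cref{lemma:current_view_view_path}, \Cref{lemma:history_path}, and \Cref{lemma:knows_view_system}), or in the base case to $\mathit{genesis}$, and using $(+, a) \in v$, I would show that some correct server triggers the $a\ \mathtt{joined}$ event no later than $c$'s transition to $v$, hence by time $\leq t_c \leq t'$; moreover $(-, a) \notin v$ together with view comparability (\Cref{theorem:view_comparability}) and membership validity (\Cref{lemma:empty_to_both}) ensures this $a\ \mathtt{joined}$ trigger is not preceded by any $a\ \mathtt{left}$ trigger, so $a$ joins (in the faulty-server sense) by some time $\leq t_c$. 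Let $s$ be the first time a correct server triggers $a\ \mathtt{left}$ (with $s = \infty$ if never); then $a$ is a validator throughout the non-empty interval between its join time and $s$.

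Finally I would pin this interval into $[t, t']$, which is the main obstacle and where care is needed. Invoking \Cref{lemma:no_committed_below_max}: if $\mathcal{V}(t) \neq \emptyset$ with greatest element $v_{\mathit{max}}$, then $v \supseteq v_{\mathit{max}}$; any $a\ \mathtt{left}$ trigger occurs when a correct server transits (at line~\ref{line:leave_others}) to a valid destination view containing $(-, a)$, which is comparable to $v_{\mathit{max}}$ and, since $(-, a) \notin v_{\mathit{max}}$, strictly larger, so that transition happens after $t$ because correct servers' current views only grow; hence $s > t$. If instead $\mathcal{V}(t) = \emptyset$, no correct server has acted by $t$, so neither $a$'s join time nor $s$ precedes $t$. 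In all cases $s > t$ and $a$'s join time is $\leq t_c \leq t'$, so $a$'s validator interval meets $[t, t']$ and a suitable $t_{\mathit{val}}$ exists. This is why the reduction to a correct signer $c$, \Cref{lemma:no_committed_below_max}, and view comparability are the load-bearing ingredients.
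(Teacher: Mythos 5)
Your proof is correct and follows essentially the same route as the paper's: the paper argues the contrapositive (a server that leaves before $t$ or joins after $t'$ cannot appear in $\sigma_c.\mathtt{signers}$), splitting on whether the server is correct or faulty, and for the faulty case relies on exactly your two load-bearing ingredients — \Cref{lemma:no_committed_below_max} together with view comparability to rule out an early departure, and the recursion along the view-path down to $\mathit{genesis}$ (via the $\mathtt{joined}$ triggers at lines~\ref{line:trigger_for_others_init} and~\ref{line:trigger_for_others_2}) to place the join before $t'$. Your direct phrasing, including the reduction to a correct signer $c$ via $v.\mathtt{plurality()}$ and the explicit check that the first $a\ \mathtt{left}$ trigger is preceded by an $a\ \mathtt{joined}$ trigger, is just the forward reading of the same argument.
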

\begin{proof}
Let a transaction $\mathit{tx}$ be issued at time $t$ and let a commitment proof $\sigma_c$ be obtained at time $t' \geq t$, where $\mathtt{verify\_commit}(\mathit{tx}, \sigma_c) = \top$.
Since $\mathit{tx}$ is issued at time $t$, no correct server sends the $\mathtt{COMMITTED}$ message for $\mathit{tx}$ before time $t$ (follows from the definition of ``issued at time $t$''; see \Cref{subsection:carbon_properties}, paragraph ``Properties of \sysname'').

Consider a correct server $r$.
We distinguish two cases:
\begin{compactenum}
    \item Let $r$ leave before time $t$.
    Since $r$ does not send any $\mathtt{COMMITTED}$ message for $\mathit{tx}$ before time $t$ and $r$ halts immediately after leaving (which happens before time $t$), $r \notin \sigma_c.\mathtt{signers}$.
    
    \item Let $r$ join after time $t'$.
    All the $\mathtt{COMMITTED}$ messages that belong to $\sigma_c$ are sent by time $t'$.
    Since $r$ does not send any $\mathtt{COMMITTED}$ messages before joining (see \Cref{lst:transaction_module}; $r$ joins after time $t'$), $r \notin \sigma_c.\mathtt{signers}$.
\end{compactenum}

Finally, consider a faulty server $r$.
Again, we distinguish two cases:
\begin{compactenum}
    \item Let $r$ leave before time $t$.
    Therefore, there exists a correct server $r^*$ that triggers the special $r \text{ } \mathtt{left}$ event (at line~\ref{line:leave_others} of \Cref{lst:reconfiguration_view_transition}) before time $t$.
    
    Let $\mathcal{V}(t) = \{v \,|\, \mathit{current\_view}.\mathit{view} = v \neq \bot \text{ at a correct server at time } t\}$.
    First, $\mathcal{V}(t) \neq \emptyset$ because of $r^*$ and $r \notin v_{\mathit{max}}.\mathtt{members()}$, where $v_{\mathit{max}}$ is the greatest view of $\mathcal{V}(t)$.
    Moreover, for any view $v$ such that $r \in v.\mathtt{members()}$, $v \subset v_{\mathit{max}}$.
    By \Cref{lemma:no_committed_below_max}, no correct server sends the $\mathtt{COMMITTED}$ message for $\mathit{tx}$ associated with any view smaller than $v_{\mathit{max}}$.
    Hence, $r \notin \sigma_c.\mathtt{signers}$.
    
    \item Let $r$ join after time $t'$.
    All the $\mathtt{COMMITTED}$ messages that belong to $\sigma_c$ are sent by time $t'$.
    Let $r^*$ be a correct server that sends the $\mathtt{COMMITTED}$ message $m \in \sigma_c$; such correct server indeed exists due to \Cref{lst:commitment_proof_verification}.
    Let that message be associated with view $v$.
    
    Assume that $r \in v.\mathtt{members()}$.
    If $v = \mathit{genesis}$, then $r$ would have joined by time $t'$ because $r^*$ would have triggered the special $r \text{ } \mathtt{joined}$ event at line~\ref{line:trigger_for_others_init} of \Cref{lst:reconfiguration_initialization}.
    Hence, $v \neq \mathit{genesis}$.
    Therefore, $r^*$ have entered the pseudocode at line~\ref{line:updated_discovery_exists} of \Cref{lst:reconfiguration_view_transition} with the ``source'' view being $v_s$.
    If $r \notin v_s.\mathtt{members()}$, $r$ would have joined by time $t'$ (by line~\ref{line:trigger_for_others_2} of \Cref{lst:reconfiguration_view_transition} executed by $r^*$).
    Hence, $r \in v_s.\mathtt{members()}$ and we reach the same point as we reached with view $v$.
    Eventually, the recursion stops since $\mathit{genesis}$ is the smallest valid view (by \Cref{lemma:genesis_smallest_valid}).
    Therefore, $r \in \mathit{genesis}.\mathtt{members()}$, which means that $r$ joins before time $t'$.
    Thus, we reach contradiction and $r \notin v.\mathtt{members()}$, which implies that $r \notin \sigma_c.\mathtt{signers}$.
\end{compactenum}
The theorem holds and commitment signing is satisfied.
\end{proof}

Next, we prove the query validity property.

\begin{theorem} [Query Validity] \label{theorem:query_validity}
Query validity is satisfied.
\end{theorem}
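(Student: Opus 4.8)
The plan is to argue by contradiction, reusing the machinery already set up for \Cref{theorem:commitment_validity}. Suppose a forever-correct client $c$ invokes $\mathtt{query}\ \mathtt{market}$ but never triggers a $\mathtt{total}\ \mathtt{money}$ indication. Then $c$ remains in state \texttt{query} forever, since that state is only left (at \Cref{lst:client}) by the rule at line~\ref{line:query_rule} that lets $c$ learn; moreover $c$ issues no further operation while this one is outstanding, so its $\mathit{query\_id}$ stabilizes at the value used for this query. Also, since $c$ is forever-correct and it gossips every $\mathtt{INSTALL}$ message it processes (\Cref{lst:client}), a view-path accompanies every view in $c$'s $\mathit{history}$ (the client analogue of \Cref{lemma:history_path}), so every such view is forever-alive by \Cref{definition:forever_alive_view}, hence $\subseteq v_{\mathit{final}}$, the greatest forever-alive view. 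Thus $c$'s $\mathit{current\_view}$ is always $\subseteq v_{\mathit{final}}$ and the only remaining task is to show $c$ eventually collects $v_{\mathit{final}}.\mathtt{quorum()}$ matching $\mathtt{QUERY\text{-}RESPONSE}$ messages.

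Next I would recall the consequences of \Cref{theorem:finality}: all correct members of $v_{\mathit{final}}$ permanently set $\mathit{current\_view}.\mathit{view} = v_{\mathit{final}}$, install $v_{\mathit{final}}$, never leave (hence are forever-correct), and never stop processing in $v_{\mathit{final}}$; together with the failure-model bound on the valid view $v_{\mathit{final}}$ this gives at least $v_{\mathit{final}}.\mathtt{quorum()}$ forever-correct members of $v_{\mathit{final}}$. Each such member eventually has $\mathit{current\_view}.\mathit{processing} = \top$ permanently: since it never stops processing in $v_{\mathit{final}}$, its $\mathit{stop\_processing\_until}$ is always $\subsetneq v_{\mathit{final}}$, so installing $v_{\mathit{final}}$ sets the flag (\Cref{lst:reconfiguration_view_transition}) and nothing resets it afterwards. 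Because $v_{\mathit{final}}$ is forever-alive, a forever-correct process gossips all messages of a view-path to $v_{\mathit{final}}$ (or $v_{\mathit{final}} = \mathit{genesis}$, already in $c$'s history); by gossip validity $c$ receives them and, firing the rule at line~\ref{line:new_view_in_history} of \Cref{lst:client} in turn, eventually has $v_{\mathit{final}} \in \mathit{history}$ and sets $\mathit{current\_view} = v_{\mathit{final}}$. While in state \texttt{query}, $c$ then broadcasts $[\mathtt{QUERY}, \mathit{query\_id}, v_{\mathit{final}}]$ to $v_{\mathit{final}}.\mathtt{members()}$ at line~\ref{line:query_1} (or at line~\ref{line:query_2} if $\mathit{current\_view}$ was already $v_{\mathit{final}}$ when the query was invoked).

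Then I would close the loop. By best-effort broadcast validity every forever-correct member of $v_{\mathit{final}}$ receives this $\mathtt{QUERY}$; it remains in $\mathit{waiting\_messages}$ until the guard of the rule at line~\ref{line:receive_query} of \Cref{lst:transaction_module} ($v = \mathit{current\_view}.\mathit{view}$, $\mathit{installed}$, $\mathit{processing}$) holds — which it eventually does permanently — so the server sends $[\mathtt{QUERY\text{-}RESPONSE},\cdot,\mathit{query\_id},v_{\mathit{final}},\mathit{view\_path}[v_{\mathit{final}}]]$ to $c$ at line~\ref{line:send_query}, with $\mathit{view\_path}[v_{\mathit{final}}]$ a view-path of destination $v_{\mathit{final}}$. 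The forever-correct client $c$ receives each such reply; since its sender lies in $v_{\mathit{final}}.\mathtt{members()}$, the attached path destination is $v_{\mathit{final}}$, and $\mathit{query\_id}$ matches, $c$ records it in $\mathit{responses}[v_{\mathit{final}}]$ and $\mathit{responses\_from}[v_{\mathit{final}}]$. With at least $v_{\mathit{final}}.\mathtt{quorum()}$ distinct such senders the rule at line~\ref{line:query_rule} of \Cref{lst:client} fires and $c$ triggers $\mathtt{total}\ \mathtt{money}\ \mathtt{median}(\mathit{responses}[v_{\mathit{final}}])$ at line~\ref{line:median}, i.e. $c$ learns — contradiction.

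The main obstacle is exactly the one \Cref{theorem:commitment_validity} already overcomes: guaranteeing that $c$ points its query at the members of $v_{\mathit{final}}$ and that those members are simultaneously ``ready'' (have installed $v_{\mathit{final}}$ and are processing in it). Both are consequences of Finality — in particular the observation that $c$'s $\mathit{current\_view}$ cannot outrun $v_{\mathit{final}}$, and the bookkeeping pinning down the $\mathit{processing}$ flag of $v_{\mathit{final}}$'s members — so the remaining content is routine state-machine bookkeeping (staying in state \texttt{query}, a stable $\mathit{query\_id}$, monotone $\mathit{history}$) and no lemma beyond those used for commitment validity and learning is needed.
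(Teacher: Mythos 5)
Your proposal is correct and follows essentially the same route as the paper's proof: a proof by contradiction in which the forever-correct client eventually learns $v_{\mathit{final}}$ and broadcasts its $\mathtt{QUERY}$ to $v_{\mathit{final}}.\mathtt{members()}$, finality guarantees a quorum of forever-correct members that have installed $v_{\mathit{final}}$ and are processing in it so they all reply with $\mathtt{QUERY\text{-}RESPONSE}$, and the client then fires the rule at line~\ref{line:query_rule} of \Cref{lst:client} and learns. The extra bookkeeping you supply (the stable $\mathit{query\_id}$, the client's $\mathit{current\_view}$ never outrunning $v_{\mathit{final}}$, the permanence of the $\mathit{processing}$ flag) only makes explicit steps the paper leaves implicit.
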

\begin{proof}
By contradiction, suppose that the query validity property is not satisfied.
Since the client is forever-correct, the client eventually learns about $v_{\mathit{final}}$ (i.e., $v_{\mathit{final}} \in \mathit{history}$ at the client) and it broadcasts the $\mathtt{QUERY}$ message to members of $v_{\mathit{final}}$ (line~\ref{line:query_1} or~\ref{line:query_2} of \Cref{lst:client}).
By the finality property (see \Cref{lst:reconfiguration_properties_new}), every correct member of $v_{\mathit{final}}$ eventually installs $v_{\mathit{final}}$, does not update its current view afterwards and does not leave.
Therefore, every correct member of $v_{\mathit{final}}$ eventually receives the $\mathtt{QUERY}$ message (at line~\ref{line:receive_query} of \Cref{lst:transaction_module}) and responds with the $\mathtt{QUERY-RESPONSE}$ message (at line~\ref{line:send_query} of \Cref{lst:transaction_module}).
Eventually, the client receives the $\mathtt{QUERY-RESPONSE}$ messages from all correct members of $v_{\mathit{final}}$ (i.e., the rule at line~\ref{line:query_rule} of \Cref{lst:client} becomes active) and the client learns the total amount of money in \sysname.
Thus, the theorem holds.
\end{proof}

The next theorem proves the query safety property.

\begin{theorem} [Query Safety] \label{theorem:query_safety}
Query safety is satisfied.
\end{theorem}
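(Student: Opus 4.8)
The plan is to trace the client's \emph{total money} output back through the servers' quasi-committed logs to the set $\mathit{mints}_\infty$ of committed minting transactions. A correct client triggers \emph{total money} $\mathit{money}$ only at line~\ref{line:median} of \Cref{lst:client}, with $\mathit{money}$ equal to the median of $\mathit{responses}[v]$, where (by the guard at line~\ref{line:query_rule}) the client has collected $v.\mathtt{quorum()}$ distinct $\mathtt{QUERY-RESPONSE}$ messages, each carrying a view-path to $v$; hence $v$ is a valid view and at most $\lfloor \frac{n-1}{3} \rfloor$ of its members are faulty, where $n = |v.\mathtt{members()}|$. Since $v.\mathtt{quorum()} - \lfloor \frac{n-1}{3} \rfloor > v.\mathtt{quorum()} / 2$, strictly more than half of the collected responses were sent by correct servers, and a standard median argument (the median of a multiset more than half of whose elements lie below a bound $B$ is itself at most $B$) bounds $\mathit{money}$ above by the largest value reported by one of those correct servers. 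It therefore suffices to bound each such correct server's reported value.

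By lines~\ref{line:take_quasi_committed}--\ref{line:send_query} of \Cref{lst:transaction_module}, a correct server answers a query with the total-money tally (\Cref{lst:log}) of its current quasi-committed log $\{\mathit{tx} \mid \mathit{quasi\_committed}[\mathit{tx}] = \top\}$, taken at the finite time of the response; that tally is the sum of $\mathit{tx}.\mathit{amount}$ over the minting transactions $\mathit{tx}$ in the log. Hence the theorem reduces to the claim: \emph{if a minting transaction $\mathit{tx}$ has $\mathit{quasi\_committed}[\mathit{tx}] = \top$ at a correct server at some time, then $\mathit{tx}$ is committed}, so $\mathit{tx} \in \mathit{mints}_\infty$. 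Granting this, every correct server's reported value is a sum over a subset of $\mathit{mints}_\infty$, hence at most $\sum_{\mathit{tx} \in \mathit{mints}_\infty} \mathit{tx}.\mathit{amount}$, and combining with the median bound finishes the proof.

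To prove the claim I would reuse the commitment-proof machinery. If $\mathit{quasi\_committed}[\mathit{tx}] = \top$ at a correct server $r$, then the guard of $\mathtt{allowed\_to\_quasi\_commit}$ shows $r$ has collected $v^*.\mathtt{quorum()}$ $\mathtt{COMMIT-CONFIRM}$ messages for $\mathit{tx}$ in some view $v^*$, so $\mathit{tx}$ is quasi-committed; let $v_{\mathit{tx}}$ be the smallest view in which $\mathit{tx}$ is quasi-committed, which is installable at time $\infty$ by \Cref{lemma:quasi_committed_installable} and, exactly as in the proof of \Cref{theorem:commitment_validity}, satisfies $v_{\mathit{tx}} \subseteq v_{\mathit{final}}$. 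If $v_{\mathit{tx}} \subset v_{\mathit{final}}$, then \Cref{lemma:committed_in_log} gives $\mathit{state}.\mathit{log}[\mathit{tx}] \neq \bot$ at every correct member of $v_{\mathit{final}}$; if $v_{\mathit{tx}} = v_{\mathit{final}}$, a correct member of $v_{\mathit{final}}$ sent a $\mathtt{COMMIT-CONFIRM}$ for $\mathit{tx}$, hence has $\mathit{state}.\mathit{log}[\mathit{tx}] \neq \bot$, and \Cref{lemma:same_logs} propagates this to all correct members of $v_{\mathit{final}}$. In either case \Cref{lemma:in_log_then_committed} quasi-commits $\mathit{tx}$ at every correct member of $v_{\mathit{final}}$, and \Cref{lemma:quasi_committed_proof} produces a commitment proof for $\mathit{tx}$, so $\mathit{tx}$ is committed.

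The median inequality is routine; the real work is the reduction claim. The main obstacle I anticipate is the subtlety also glossed in \Cref{theorem:commitment_validity}: justifying $v_{\mathit{tx}} \subseteq v_{\mathit{final}}$, i.e., that no transaction is quasi-committed in a view strictly above the greatest forever-alive view — a quasi-committing quorum of $v^*$ contains a correct server that processed in $v^*$, which by the reconfiguration finality and safety results (\Cref{theorem:finality}) cannot lie strictly above $v_{\mathit{final}}$. A secondary point needing a sentence: a minting transaction may still depend on earlier transactions by the same issuer, but the $\mathtt{allowed\_to\_quasi\_commit}$ guard forces all such dependencies to be quasi-committed as well, so the lemma chain applies to each of them unchanged. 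Finally, $\mathit{quasi\_committed}[\cdot]$ is never reset at a correct server, so the finiteness of the response time is immaterial: ``eventually committed'' is exactly what membership in $\mathit{mints}_\infty$ requires.
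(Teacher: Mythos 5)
Your proposal is correct and follows essentially the same route as the paper's proof: a quorum-intersection/median argument yields a correct server whose reported value is at least the learned one, that value is the tally of the server's quasi-committed minting transactions, and each such transaction is shown to be committed via the chain \Cref{lemma:quasi_committed_installable}, \Cref{lemma:committed_in_log}, \Cref{lemma:same_logs}, \Cref{lemma:in_log_then_committed}, and \Cref{lemma:quasi_committed_proof}. You spell out the median step and the $v_{\mathit{tx}} \subseteq v_{\mathit{final}}$ subtlety more explicitly than the paper does, but the decomposition and the key lemmas are the same.
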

\begin{proof}
Since a correct client learns the total amount of money, the client has received the $\mathtt{QUERY-RESPONSE}$ messages from $v.\mathtt{quorum()}$ of members of a valid view $v$ (by the rule at line~\ref{line:query_rule} of \Cref{lst:client}).
Let the client learn that the total amount of money is $X$.
Therefore, there exists a correct server that has sent the $\mathtt{QUERY-RESPONSE}$ message for a value $X' \geq X$.
Let that server be $r \in v.\mathtt{members()}$.

Since $r$ ``calculates'' the total amount of money by ``looking into'' all the transactions it has quasi-committed (by lines~\ref{line:take_quasi_committed} and~\ref{line:send_query} of \Cref{lst:transaction_module}), there exists a set $\mathit{mints}$ of quasi-committed minting transactions such that $X' = \sum\limits_{\mathit{tx} \in \mathit{mints}} \mathit{tx}.\mathit{amount}$.
For each transaction $\mathit{tx} \in \mathit{mints}$, let $v_{\mathit{tx}}$ be the smallest view in which $\mathit{tx}$ is quasi-committed.
We distinguish two cases:
\begin{compactitem}
    \item Let $v_{\mathit{tx}} \subset v_{\mathit{final}}$.
    Therefore, $\mathit{state}.\mathit{log}[\mathit{tx}] \neq \bot$ at every correct member of $v_{\mathit{final}}$ (by \Cref{lemma:committed_in_log}).
    
    \item Let $v_{\mathit{tx}} = v_{\mathit{final}}$.
    Hence, there exists a correct member of $v_{\mathit{final}}$ that sends the $\mathtt{COMMIT-CONFIRM}$ message for $\mathit{tx}$ associated with $v_{\mathit{final}}$.
    Thus, $\mathit{state}.\mathit{log}[\mathit{tx}] \neq \bot$ at that server (by line~\ref{line:update_log_state_representation} of \Cref{lst:transaction_module}).
    By \Cref{lemma:same_logs}, $\mathit{state}.\mathit{log}[\mathit{tx}] \neq \bot$ at every correct member of $v_{\mathit{final}}$.
\end{compactitem}
In both cases, $\mathit{state}.\mathit{log}[\mathit{tx}] \neq \bot$ at every correct member of $v_{\mathit{final}}$, for every transaction $\mathit{tx} \in \mathit{mints}$.
By \Cref{lemma:in_log_then_committed}, $\mathit{tx}$ is quasi-committed at every correct member of $v_{\mathit{final}}$, for every $\mathit{tx} \in \mathit{mints}$.
Finally, by \Cref{lemma:quasi_committed_proof}, every correct member of $v_{\mathit{final}}$ obtains a commitment proof for $\mathit{tx}$, which means that $\mathit{tx}$ is committed, for every $\mathit{tx} \in \mathit{mints}$.
Hence, the theorem holds.
\end{proof}

Finally, we prove the query liveness property.

\begin{theorem} [Query Liveness] \label{theorem:query_liveness}
Query liveness is satisfied.
\end{theorem}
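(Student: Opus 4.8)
The plan is to follow the template already used for query validity (\Cref{theorem:query_validity}) and query safety (\Cref{theorem:query_safety}), the only genuinely new ingredient being that a \emph{finite} set of committed minting transactions is eventually reflected, all at once, in the quasi-committed log of every correct member of $v_{\mathit{final}}$. First I would fix a stabilization time $T$ after which, for every $\mathit{tx} \in \mathit{mints}$, $\mathit{quasi\_committed}[\mathit{tx}] = \top$ at every correct member of $v_{\mathit{final}}$. Then, using that the client issues infinitely many queries, I would pick one whose $\mathtt{QUERY}$ messages are handled by correct members of $v_{\mathit{final}}$ only after $T$, and argue that the median of the quorum of responses it collects is at least $S := \sum_{\mathit{tx} \in \mathit{mints}} \mathit{tx}.\mathit{amount}$.

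For the stabilization step: each $\mathit{tx} \in \mathit{mints}$ is committed, hence quasi-committed (via the $\mathtt{allowed\_to\_broadcast\_committed}$ function, exactly as in \Cref{theorem:commitment_learning} and \Cref{theorem:query_safety}). Letting $v_{\mathit{tx}}$ be the smallest view in which $\mathit{tx}$ is quasi-committed, either $v_{\mathit{tx}} \subset v_{\mathit{final}}$ and \Cref{lemma:committed_in_log} gives $\mathit{state}.\mathit{log}[\mathit{tx}] \neq \bot$ at every correct member of $v_{\mathit{final}}$, or $v_{\mathit{tx}} = v_{\mathit{final}}$ and some correct member sends a $\mathtt{COMMIT-CONFIRM}$ for $\mathit{tx}$ so $\mathit{state}.\mathit{log}[\mathit{tx}] \neq \bot$ there, and \Cref{lemma:same_logs} propagates this to all correct members of $v_{\mathit{final}}$. \Cref{lemma:in_log_then_committed} then upgrades ``$\mathit{state}.\mathit{log}[\mathit{tx}] \neq \bot$'' to ``$\mathit{tx}$ is quasi-committed''. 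Since $\mathit{mints}$ is finite, there is a single time $T$ by which all its members are quasi-committed at every correct member of $v_{\mathit{final}}$; and by \Cref{lemma:always_verified} these quasi-committed logs are admissible, so after $T$ a correct member's $\mathit{quasi\_committed\_log}$ contains $\mathit{mints}$. Because $\mathit{total\_money}$ sums only minting transactions and minting amounts are non-negative, such a server's computed total is $\geq S$ (possibly larger, counting other mints --- this monotonicity is exactly what we need).

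For the liveness/query step: by the finality property (\Cref{theorem:finality}) every correct member of $v_{\mathit{final}}$ eventually installs $v_{\mathit{final}}$, never changes its current view again, never leaves, and never stops processing in $v_{\mathit{final}}$; and the forever-correct client eventually has $v_{\mathit{final}} \in \mathit{history}$ and re-broadcasts its current query to $v_{\mathit{final}}$. Since the client queries infinitely often, I pick a query all of whose $\mathtt{QUERY}$ messages are processed by correct members of $v_{\mathit{final}}$ strictly after $T$ (query ids matching: the $\mathtt{QUERY-RESPONSE}$ reply carries the same $\mathit{qid}$, and the client accepts only replies for its current $\mathit{query\_id}$). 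Each correct member of $v_{\mathit{final}}$ then replies with a value $\geq S$, and the rule at line~\ref{line:query_rule} of \Cref{lst:client} fires once the client has $v_{\mathit{final}}.\mathtt{quorum()}$ responses. By the failure model the correct members form a strict majority of any quorum of $v_{\mathit{final}}$, so a majority of the collected responses are $\geq S$, whence the median the client learns at line~\ref{line:median} is $\geq S$.

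The main obstacle I anticipate is the bookkeeping over \emph{which} query gets the good answer: one must show that, despite asynchrony and earlier view changes, some infinitely-late query has \emph{all} of its $\mathtt{QUERY}$ messages handled by correct members of $v_{\mathit{final}}$ after $T$ (the re-broadcasting on view updates plus query-id matching make this go through, but it needs care). The secondary subtlety is the median argument --- that a quorum of responses with a correct-server majority, each at least $S$, has median at least $S$ --- which relies on $v_{\mathit{final}}.\mathtt{quorum()}$ exceeding twice the failure bound. Everything else is a direct re-use of \Cref{lemma:same_logs}, \Cref{lemma:committed_in_log}, \Cref{lemma:in_log_then_committed}, \Cref{lemma:always_verified} and \Cref{theorem:finality}, already assembled for query validity and query safety.
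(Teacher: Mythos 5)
Your proposal is correct and follows essentially the same route as the paper's proof: the same case split on the smallest view in which each minting transaction is quasi-committed (via \Cref{lemma:committed_in_log}, \Cref{lemma:same_logs}, and \Cref{lemma:in_log_then_committed}), followed by choosing a sufficiently late query and observing that at most $v_{\mathit{final}}.\mathtt{plurality}()-1$ of the quorum of responses can be below the target sum, so the median is at least that sum. The extra bookkeeping you flag (query-id matching and re-broadcasting on view updates) is handled more tersely in the paper but does not change the argument.
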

\begin{proof}
Let $X = \sum\limits_{\mathit{tx} \in \mathit{mints}} \mathit{tx}.\mathit{amount}$.
Since $\mathit{tx}$ is committed, for every $\mathit{tx} \in \mathit{mints}$, $\mathit{tx}$ is quasi-committed (by the \\$\mathtt{allowed\_to\_broadcast\_committed}$ function; line~\ref{line:quasi_committed} of \Cref{lst:transaction_module}).
For each transaction $\mathit{tx} \in \mathit{mints}$, let $v_{\mathit{tx}}$ be the smallest view in which $\mathit{tx}$ is quasi-committed.
We distinguish two cases:
\begin{compactitem}
    \item Let $v_{\mathit{tx}} \subset v_{\mathit{final}}$.
    Therefore, $\mathit{state}.\mathit{log}[\mathit{tx}] \neq \bot$ at every correct member of $v_{\mathit{final}}$ (by \Cref{lemma:committed_in_log}).
    
    \item Let $v_{\mathit{tx}} = v_{\mathit{final}}$.
    Hence, there exists a correct member of $v_{\mathit{final}}$ that sends the $\mathtt{COMMIT-CONFIRM}$ message for $\mathit{tx}$ associated with $v_{\mathit{final}}$.
    Thus, $\mathit{state}.\mathit{log}[\mathit{tx}] \neq \bot$ at that server (by line~\ref{line:update_log_state_representation} of \Cref{lst:transaction_module}).
    By \Cref{lemma:same_logs}, $\mathit{state}.\mathit{log}[\mathit{tx}] \neq \bot$ at every correct member of $v_{\mathit{final}}$.
\end{compactitem}
In both cases, $\mathit{state}.\mathit{log}[\mathit{tx}] \neq \bot$ at every correct member of $v_{\mathit{final}}$, for every transaction $\mathit{tx} \in \mathit{mints}$.
By \Cref{lemma:in_log_then_committed}, $\mathit{tx}$ is quasi-committed at every correct member of $v_{\mathit{final}}$, for every $\mathit{tx} \in \mathit{mints}$.
Therefore, eventually all correct members of $v_{\mathit{final}}$ would send the $\mathtt{QUERY-RESPONSE}$ message (at line~\ref{line:send_query} of \Cref{lst:transaction_module}) for a value $X' \geq X$.

Let the client request to learn the total amount of money after this happens and after all correct members of $v_{\mathit{final}}$ have set their $\mathit{current\_view}.\mathit{view}$ variable to $v_{\mathit{final}}$ (which does happen due to the finality property).
Therefore, the client eventually receives (at most) $v_{\mathit{final}}.\mathtt{plurality}() - 1$ values (through the $\mathtt{QUERY-RESPONSE}$ messages) smaller than $X$.
Hence, the $\mathtt{median}$ function (at line~\ref{line:median} of \Cref{lst:client}) returns a value greater than or equal to $X$.
Thus, the query liveness property is satisfied.
\end{proof}

Hence, all properties of \sysname specified in \Cref{subsection:carbon_properties} are satisfied.

\begin{corollary}
All properties of \sysname specified in \Cref{subsection:carbon_properties} are satisfied.
\end{corollary}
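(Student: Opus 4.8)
The plan is to assemble the corollary directly from the thirteen named theorems proven across the preceding appendices, since each property listed in \Cref{subsection:carbon_properties} has already been established in isolation. First I would line up the operation-specific guarantees: \emph{Commitment Validity} is \Cref{theorem:commitment_validity}, \emph{Commitment Integrity} is \Cref{theorem:commitment_integrity}, \emph{Commitment Learning} is \Cref{theorem:commitment_learning}, \emph{Commitment Admissibility} is \Cref{theorem:commitment_admissibility}, and \emph{Commitment Signing} is \Cref{theorem:commitment_signing}; \emph{Query Validity}, \emph{Query Safety} and \emph{Query Liveness} are \Cref{theorem:query_validity,theorem:query_safety,theorem:query_liveness}. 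Then I would line up the reconfiguration-specific guarantees: \emph{Join Safety}, \emph{Leave Safety}, \emph{Join Liveness}, \emph{Leave Liveness} and \emph{Removal Liveness} are \Cref{theorem:join_safety,theorem:leave_safety,theorem:join_liveness,theorem:leave_liveness,theorem:removal_liveness}. The corollary is then immediate from this enumeration.

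The one thing genuinely worth spelling out is that these theorems are mutually consistent and share a single set of hypotheses, namely the standing assumptions of \Cref{appx:model} (asynchrony with unbounded but finite message delays, at most $\lfloor \frac{n - 1}{3} \rfloor$ faulty members in every valid view, finitely many servers starting, finitely many valid views, a computationally bounded adversary, and the communication primitives with their stated properties) --- no theorem above secretly strengthened these. I would also briefly recall the dependency chain that keeps the argument acyclic: the view-generator properties of \Cref{lst:view_generator_properties} (built from RLA) drive the reconfiguration-and-storage properties of \Cref{lst:reconfiguration_properties_new}, whose \emph{finality} clause (\Cref{theorem:finality}), together with \Cref{lemma:always_verified} guaranteeing that correct servers only ever gossip state passing \texttt{verify}, is the common engine behind every liveness proof in the transaction module; hence stringing the theorems together introduces no circularity.

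I do not expect a real obstacle here: the statement is a bookkeeping roll-up and its proof is essentially a one-line citation list. If there is any delicate point to flag in passing, it is that the RLA layer was proven under the ``no correct process halts'' reading of reliable broadcast, so the corollary implicitly leans on the adaptation described in \Cref{subsubsection:view_generator_implementation} that lifts those static-system guarantees to \sysname's dynamic setting; since that adaptation is argued where it is introduced, it needs no further work here.
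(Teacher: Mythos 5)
Your proposal matches the paper's proof, which is exactly the same enumeration: each of the thirteen properties is discharged by citing the corresponding theorem (\Cref{theorem:commitment_validity} through \Cref{theorem:removal_liveness}). The additional remarks you make about shared hypotheses and acyclicity of the dependency chain are sound but not needed beyond the citation list.
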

\begin{proof}
We list all the properties of \sysname and their corresponding proofs:
\begin{compactitem}
    \item Commitment validity follows from \Cref{theorem:commitment_validity}.
    
    \item Commitment integrity follows from \Cref{theorem:commitment_integrity}.
    
    \item Commitment learning follows from \Cref{theorem:commitment_learning}.
    
    \item Commitment admissibility follows from \Cref{theorem:commitment_admissibility}.
    
    \item Commitment signing follows from \Cref{theorem:commitment_signing}.
    
    \item Query validity follows from \Cref{theorem:query_validity}.
    
    \item Query safety follows from \Cref{theorem:query_safety}.
    
    \item Query liveness follows from \Cref{theorem:query_liveness}.
    
    \item Join safety follows from \Cref{theorem:join_safety}.
    
    \item Leave safety follows from \Cref{theorem:leave_safety}.
    
    \item Join liveness follows from \Cref{theorem:join_liveness}.
    
    \item Leave liveness follows from \Cref{theorem:leave_liveness}.
    
    \item Removal liveness follows from \Cref{theorem:removal_liveness}.
\end{compactitem}
\end{proof}

\section{Server's Modules: Voting Module} \label{appx:voting}

Finally, we present the last module of a server: the \emph{voting module}.
This module has a responsibility of ensuring the properties of the asynchronous stake-based voting (see \Cref{subsection:balance_based_voting}).

First, we define when $\mathtt{verify\_voting}(\mathit{mot}, \sigma_v)$ returns $\top$ (see \Cref{lst:voting_verification}), where $\sigma_v$ is a voting proof.
Then, we present the implementation of the module.
Lastly, we prove that the voting liveness and voting safety properties are satisfied.

\begin{lstlisting}[
  caption={The $\mathtt{verify\_voting}$ function},
  label={lst:voting_verification},
  escapechar=?]
?\textbf{function}? verify_voting(Motion ?$\mathit{mot}$?, Voting_Proof ?$\sigma_v$?):
    if ?$\sigma_v$? is not Set(Message):
        ?\textbf{return}? ?$\bot$?
    
    if does not exist View ?$v$? such that ?$m = [$?SUPPORT, ?$\mathit{mot}$?, ?$v$?, View_Path ?$\mathit{path}]$? and ?$v = \mathit{path}$?.destination(), for every Message ?$m \in \sigma_v$?:
        ?\textbf{return}? ?$\bot$?
        
    ?\textbf{return}? ?$|m.\mathtt{sender} \,|\, m \in \sigma_v \text{ and } m\text{.sender} \in v\text{.members()}| \geq v$?.quorum()
\end{lstlisting}
Intuitively, a voting proof is ``valid'' for a motion if and only if a quorum of members of a view claim that they ``support'' the motion to pass.

\para{Voting module - implementation}
Next, we present the implementation of the voting module of a server.

\begin{lstlisting}[
  caption={Voting module - implementation},
  label={lst:voting_module},
  escapechar = ?]
?\textbf{Voting Module:}?
    ?\textcolor{plainorange}{Implementation:}?
        upon start: // initialization of the module; executed as soon as the ?\textcolor{gray}{$\mathtt{start}$}? event is triggered
            Map(Motion, View ?$\to$? Set(Server)) ?$\mathit{support\_from} = \{\mathit{mot}, v \to \emptyset\text{, for every Motion } \mathit{mot} \text{ and every View } v\}$?
            Map(Motion, View ?$\to$? Set(Message)) ?$\mathit{supports} = \{\mathit{mot}, v \to \emptyset\text{, for every Motion } \mathit{mot} \text{ and every View } v\}$?
    
        upon ?exists? Motion ?$\mathit{mot}$? such that ?$\mathit{current\_view}.\mathit{installed} = \top$? and ?$\mathit{current\_view}.\mathit{processing} = \top$? and support(?$\mathit{log}$?, ?$\mathit{mot}$?) ?$ = \top$?: // see ?\textcolor{gray}{\Cref{subsection:balance_based_voting}}? ?\label{line:rule_support}?
            broadcast ?$[$?SUPPORT, ?$\mathit{mot}$?, ?$\mathit{current\_view}.\mathit{view}$?, ?$\mathit{view\_path}[\mathit{current\_view}.\mathit{view}]]$??\\?to ?$\mathit{current\_view}.\mathit{view}$?.members() ?\label{line:broadcast_support}?
    
        upon ?exists? Message ?$m \in \mathit{waiting\_messages}$? such that ?$m = [$?SUPPORT, Motion ?$\mathit{mot}$?, View ?$v$?, View_Path ?$\mathit{path}]$? such that ?$v = \mathit{path}$?.destination() and ?$m$?.sender ?$\in v$?.members():
            ?$\mathit{waiting\_messages} = \mathit{waiting\_messages} \setminus{\{m\}}$?
            if ?$m$?.sender ?$\notin \mathit{support\_from}[\mathit{mot}][v]$?:
                ?$\mathit{support\_from}[\mathit{mot}][v] = \mathit{support\_from}[\mathit{mot}][v] \cup \{m$?.sender?$\}$?
                ?$\mathit{supports}[\mathit{mot}][v] = \mathit{supports}[\mathit{mot}][v] \cup \{m\}$?
                
        upon ?exist? Motion ?$\mathit{mot}$? and View ?$v$? such that ?$\mathit{|support\_from}[\mathit{mot}][v]| \geq v$?.quorum(): ?\label{line:rule_motion_passes}?
            // the motion passes since the server obtains the voting proof ?\label{line:motion_passes}?
\end{lstlisting}

\para{Proof of correctness}
We start by proving the voting safety property (see \Cref{subsection:balance_based_voting}, paragraph ``Properties'').
First, we show that any transaction quasi-committed at a correct server is seen by a quorum of a correct members of the latest installable view at the time.
We say that a transaction $\mathit{tx}$ is quasi-committed at server $r$ at time $t$ if and only if $\mathit{quasi\_committed}[\mathit{tx}] = \top$ at server $r$ at time $t$ (note that the $\mathit{quasi\_committed}$ variable is never reset to $\bot$; see \Cref{lst:transaction_module}).
Moreover, the time at which a correct server $r$ updates its current view to $v$ (i.e., executes line~\ref{line:formal_update_view_1} of \Cref{lst:reconfiguration_initialization} or line~\ref{line:update_view_formal_2} of \Cref{lst:reconfiguration_view_transition}) is denoted by $\mathit{update\_time}(r, v)$.

\begin{lemma} \label{lemma:voting_time}
Consider a time $t$ and a correct server $r$.
Let $\mathit{current\_view}.\mathit{view} = v \neq \bot$ at server $r$ at time $t$.
Moreover, let $\mathit{quasi\_committed}_r(t)$ denote the set of transactions that are quasi-committed at server $r$ at time $t$ and let there exist a transaction $\mathit{tx} \in \mathit{quasi\_committed}_r(t)$ that is quasi-committed in $v$ by time $t$.

There exists a set $C_v$ of servers, where $C_v \subseteq v.\mathtt{members()}$ and $|C_v| \geq v.\mathtt{plurality()}$, such that, for every server $r_v \in C_v$, the following holds:
\begin{compactenum}
    \item $r_v$ is correct, and 
    
    \item $\mathit{quasi\_committed}_r(t) \subseteq \mathit{log}_{r_v}(t)$, where $\mathit{log}_{r_v}(t) = \{\mathit{tx} \,|\, \mathit{state}.\mathit{log}[\mathit{tx}] \neq \bot \text{ at server } r_v \text{ at time } t\}$.
\end{compactenum}
\end{lemma}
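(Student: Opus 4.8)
The plan is to build the set $C_v$ from the \texttt{COMMIT-CONFIRM} messages that witnessed the quasi-commitments performed by $r$ in view $v$. First I would partition $\mathit{quasi\_committed}_r(t)$ into $A$, the transactions whose smallest view carrying a quorum of \texttt{COMMIT-CONFIRM} messages (associated with that view) is exactly $v$, and $B = \mathit{quasi\_committed}_r(t)\setminus A$, whose smallest such view is strictly smaller than $v$. No quasi-committed transaction can have this view strictly larger than $v$: $r$'s current view only grows and does not exceed $v$ by time $t$, and $\mathtt{allowed\_to\_quasi\_commit}$ only ever uses a quorum in a view no larger than the current view; all valid views are comparable and \texttt{COMMIT-CONFIRM} messages are sent only by members of valid views, so this exhausts the cases. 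The hypothesis that some $\mathit{tx}\in\mathit{quasi\_committed}_r(t)$ is quasi-committed in $v$ forces $A\neq\emptyset$, and one checks that $r$ performed each quasi-commitment of an $A$-transaction while $\mathit{current\_view}.\mathit{view}=v$ (the smallest quorum view seen by $r$ is sandwiched between the global smallest $v$ and the then-current view $\subseteq v$), so each such transaction is inserted into $\mathit{quasi\_committed\_current\_view}$ and, as that set is emptied only inside $\mathtt{installed}$, $A$ coincides with $\mathit{quasi\_committed\_current\_view}$ at $r$ at time $t$.

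\textbf{The intersection invariant.} The core step is an invariant maintained at $r$: at all times, writing $QC$ for $\mathit{quasi\_committed\_current\_view}$, if $QC\neq\emptyset$ then $\big|\bigcap_{\mathit{tx}\in QC}\mathit{commit\_confirms\_from}[\mathit{tx}][v]\big|\ge v.\mathtt{quorum}()$. This follows by induction on the order in which transactions enter $QC$: the first enters via the branch of $\mathtt{allowed\_to\_quasi\_commit}$ guarded by $\mathit{quasi\_committed\_current\_view}=\emptyset$, which requires $v.\mathtt{quorum}()$ confirmations; each later one enters via the branch that explicitly checks $|\mathit{commit\_confirms\_from}[\mathit{tx}][v]\cap\mathit{intersection}|\ge v.\mathtt{quorum}()$. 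The bound survives later state changes because $\mathit{commit\_confirms\_from}[\cdot][\cdot]$ is monotone non-decreasing (it is only ever extended) and the intersection of a family of sets only grows when the sets grow. Instantiating this at time $t$ gives a set $I\subseteq v.\mathtt{members}()$ with $|I|\ge v.\mathtt{quorum}()$ such that every server of $I$ had, by time $t$, sent a \texttt{COMMIT-CONFIRM} associated with $v$ for every transaction of $A$. Since at most $\lfloor\frac{n-1}{3}\rfloor$ members of $v$ are faulty (with $n=|v.\mathtt{members}()|$) and $v.\mathtt{quorum}()-\lfloor\frac{n-1}{3}\rfloor\ge\lfloor\frac{n-1}{3}\rfloor+1=v.\mathtt{plurality}()$, the subset $C_v\subseteq I$ of correct servers satisfies $|C_v|\ge v.\mathtt{plurality}()$; take this as the desired $C_v$.

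\textbf{Containment in the logs.} It remains to verify $\mathit{quasi\_committed}_r(t)\subseteq\mathit{log}_{r_v}(t)$ for each $r_v\in C_v$. For $\mathit{tx}\in A$: $r_v$ is correct and sent, at some time $\le t$, a \texttt{COMMIT-CONFIRM} for $\mathit{tx}$ associated with $v$; a correct server does this only inside the \texttt{COMMIT}-message rule, which first makes $\mathit{state}.\mathit{log}[\mathit{tx}]$ non-$\bot$ (or finds it already so), hence $\mathit{state}.\mathit{log}[\mathit{tx}]\neq\bot$ at $r_v$ then, and by \Cref{lemma:log_only_grows} also at time $t$. For $\mathit{tx}\in B$: $\mathit{tx}$ is quasi-committed in some $v_0\subset v$; since $r_v$ sent a \texttt{COMMIT-CONFIRM} associated with $v$, it had current view $v$ at that instant, so at time $t$ its current view is some $w_v\supseteq v\supset v_0$ with $r_v\in w_v.\mathtt{members}()$ (a correct server updates its current view only to views it belongs to), and \Cref{lemma:committed_in_log} yields $\mathit{state}.\mathit{log}[\mathit{tx}]\neq\bot$ at $r_v$ at time $t$. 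Therefore $A\cup B=\mathit{quasi\_committed}_r(t)\subseteq\mathit{log}_{r_v}(t)$.

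\textbf{Main obstacle.} I expect the real work to be the bookkeeping around $\mathit{quasi\_committed\_current\_view}$: rigorously establishing that $A$ equals its contents at time $t$, that the invariant's two cases line up exactly with the two branches of $\mathtt{allowed\_to\_quasi\_commit}$, and in particular ruling out that $r$ quasi-committed an $A$-transaction during the window before it installed $v$ (which the reset inside $\mathtt{installed}$ would erase). This last point needs care: the quorum of \texttt{COMMIT-CONFIRM} messages $r$ collected for such a transaction were sent by members of $v$ that had already installed $v$ and were processing in $v$, so a plurality of correct members of $v$ are past installation, and one argues $r$ likewise reaches the post-installation state for $v$ before its quasi-commitments become relevant at time $t$. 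Everything beyond that is a routine combination of quorum intersection with \Cref{lemma:log_only_grows} and \Cref{lemma:committed_in_log}.
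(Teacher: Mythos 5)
Your proposal is correct and follows essentially the same route as the paper's proof: split $\mathit{quasi\_committed}_r(t)$ into the transactions quasi-committed exactly in $v$ versus those quasi-committed in smaller views, extract a common quorum of $\mathtt{COMMIT\text{-}CONFIRM}$ senders for the former via the intersection check in $\mathtt{allowed\_to\_quasi\_commit}$, keep the correct plurality, and close with \Cref{lemma:log_only_grows} and \Cref{lemma:committed_in_log}. Your explicit induction establishing the intersection invariant on $\mathit{quasi\_committed\_current\_view}$ is a more careful justification of a step the paper asserts directly from the pseudocode, but it is the same underlying mechanism.
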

\begin{proof}
Every transaction $\mathit{tx} \in \mathit{quasi\_committed}_r(t)$ is quasi-committed in a view smaller than or equal to $v$ (see \Cref{lst:transaction_module}).
Recall that there is a transaction $\mathit{tx} \in \mathit{quasi\_committed}_r(t)$ that is quasi-committed in $v$.

Let $\mathit{quasi\_committed}_r^*(t)$ denote the set of transactions that are quasi-committed at server $r$ at time $t$ and which are quasi-committed in view $v$ (and not in a smaller view); note that $\mathit{quasi\_committed}_r^*(t) \subseteq \mathit{quasi\_committed}_r(t)$.
According to the $\mathtt{allowed\_to\_quasi\_commit}$ function (line~\ref{line:allowed_to_quasi_commit} of \Cref{lst:transaction_module}), a set $R_v \subseteq v.\mathtt{members()}$ of servers exists, where $|R_v| \geq v.\mathtt{plurality()}$, every server from $R_v$ is correct and every server from the $R_v$ set has sent the $\mathtt{COMMIT-CONFIRM}$ associated with view $v$ for every transaction from $\mathit{quasi\_committed}_r^*(t)$.
Moreover, all these servers have updated their current view to $v$ before time $t$ (because of the check at line~\ref{line:commit_message_rule} of \Cref{lst:transaction_module}).
    
Consider any server $r_v \in R_v$ at time $t$.
For every transaction $\mathit{tx} \in \mathit{quasi\_committed}_r(t) \setminus{\mathit{quasi\_committed}_r^*(t)}$, $\mathit{state}.\mathit{log}[\mathit{tx}] \neq \bot$ at time $t$ at server $r_v$ (follows from \cref{lemma:committed_in_log,lemma:log_only_grows}).
Moreover, for every transaction $\mathit{tx}^* \in \mathit{quasi\_committed}_r^*(t)$, $\mathit{state}.\mathit{log}[\mathit{tx}] \neq \bot$ at time $t$ at server $r_v$ (follows from line~\ref{line:update_log_state_representation} of \Cref{lst:transaction_module} and \Cref{lemma:log_only_grows}).
Hence, the $R_v$ set satisfies the statement of the lemma and the lemma holds.
\end{proof}

Next, we prove that (at most) $v.\mathtt{plurality()}$ of correct members of a ``stale'' view $v$ send a $\mathtt{SUPPORT}$ message associated with $v$.
This lemma is similar to \Cref{lemma:no_quasi_committed_below_max}.

\begin{lemma} \label{lemma:no_support_below_max}
Let a motion $\mathit{mot}$ be proposed at time $t$.
Let $\mathcal{V}(t) = \{v \,|\, \mathit{current\_view}.\mathit{view} = v \neq \bot \text{ at a correct server at time } t\}$.
If $\mathcal{V}(t) \neq \emptyset$ and $v_{\mathit{max}}$ is the greatest view of $\mathcal{V}(t)$, then at most $v.\mathtt{plurality()} - 1$ of correct members of a view $v \subset v_{\mathit{max}}$ send the $\mathtt{SUPPORT}$ message for $\mathit{mot}$ associated with view $v$.
\end{lemma}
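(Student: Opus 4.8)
The plan is to mirror the structure of the proofs of \Cref{lemma:no_quasi_committed_below_max} and \Cref{lemma:no_committed_below_max}, exploiting the ``stop processing'' mechanism triggered during state transfer. First I would dispose of the trivial case: if no correct member of $v$ ever broadcasts a $\mathtt{SUPPORT}$ message for $\mathit{mot}$ associated with $v$, there is nothing to prove. Otherwise some correct $r_0 \in v.\mathtt{members()}$ broadcasts such a message at line~\ref{line:broadcast_support} of \Cref{lst:voting_module}, so by the guard at line~\ref{line:rule_support} the view $v$ is installed at $r_0$ with $\mathit{current\_view}.\mathit{view} = v$; arguing as in \Cref{lemma:certified_installable} (via \Cref{lemma:knows_view_system}) this yields $\alpha_{\infty}(v) = \top$ and $v$ valid (trivially so when $v = \mathit{genesis}$). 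I would also record that no correct server broadcasts a $\mathtt{SUPPORT}$ message for $\mathit{mot}$ before time $t$: such a broadcast requires $\mathtt{support}(\mathit{log}, \mathit{mot}) = \top$, hence $\mathit{log}.\mathtt{voted\_for}(\mathit{mot}) > 0$, hence a vote transaction for $\mathit{mot}$ in $\mathit{log}$, which a correct server can only hold after having obtained it (either inside a processed $\mathtt{COMMIT}$ message or inside a received state), and $\mathit{mot}$ being proposed at $t$ means no correct process obtains a vote transaction for $\mathit{mot}$ before $t$.

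The core is a counting argument over the correct members of $v$. Since $v \subsetneq v_{\mathit{max}}$ and a correct server has current view $v_{\mathit{max}}$ at time $t$, some correct server transitions past $v$ by time $t$; let $r_{\mathit{first}}$ be the first correct server to set its current view to a view $v_{\mathit{first}} \supsetneq v$, which it does at line~\ref{line:update_current_view} of \Cref{lst:reconfiguration_view_transition} (note $v_{\mathit{first}} \neq \mathit{genesis}$ by \Cref{lemma:genesis_smallest_valid}). At that moment $\mathit{reconfiguration}.\mathit{source} = v$: it cannot be strictly below $v$ by \Cref{lemma:install_skip_installable} (using $\alpha_{\infty}(v) = \top$), nor strictly above $v$ since $r_{\mathit{first}}$ is the first to go past $v$. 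Moreover $\mathit{reconfiguration}.\mathit{prepared} = \top$ was set at line~\ref{line:prepared_1} of \Cref{lst:reconfiguration_state_transfer} (line~\ref{line:prepared_free} is excluded since $r_{\mathit{first}} \in v_{\mathit{first}}.\mathtt{members()}$, and line~\ref{line:prepared_2} would contradict minimality of $r_{\mathit{first}}$). Hence $r_{\mathit{first}}$ received $\mathtt{STATE-UPDATE}$ messages from a set $S$ of $v.\mathtt{quorum()}$ members of $v$, all sent before $t$.

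I would then show every correct $r' \in S$ never broadcasts a $\mathtt{SUPPORT}$ message for $\mathit{mot}$ associated with $v$. At the time $r'$ sent its message (line~\ref{line:send_state_update} of \Cref{lst:reconfiguration_state_transfer}, before $t$), $r_{\mathit{first}}$ had not yet transitioned past $v$, so by minimality no correct server — in particular $r'$ — had current view strictly above $v$; since current views of correct servers only grow along the valid-view order, $r'$ had $\mathit{current\_view}.\mathit{view} \subseteq v$ or $\bot$, so it took the branch at line~\ref{line:check_member_source}, setting $\mathit{current\_view}.\mathit{processing} = \bot$ and $\mathit{stop\_processing\_until} \supseteq v$ before $t$. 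Since $\mathit{stop\_processing\_until}$ only grows and processing is re-enabled only when installing a view strictly greater than it, $r'$ never again has $\mathit{current\_view}.\mathit{view} = v$ with processing enabled, so it never passes the guard at line~\ref{line:rule_support} in view $v$, so it never broadcasts a $\mathtt{SUPPORT}$ message associated with $v$ after $t$; combined with the earlier observation it never does so for $\mathit{mot}$ at all. Finally, the correct members of $v$ that can broadcast such a message lie in $v.\mathtt{members()} \setminus S$, of size $n - v.\mathtt{quorum()} = v.\mathtt{plurality()} - 1$, giving the claimed bound; here I would also note that $v$ being valid bounds its faulty members by $v.\mathtt{plurality()} - 1$, which is what makes the bound tight enough to later rule out a voting proof associated with $v$.

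I expect the main obstacle to be the precise bookkeeping around ``the first correct server to transition past $v$'': pinning down that its $\mathit{reconfiguration}.\mathit{source}$ is exactly $v$ rather than some intermediate installable view, and that the $v.\mathtt{quorum()}$ $\mathtt{STATE-UPDATE}$ senders genuinely blocked processing in $v$ strictly before $t$, requires combining minimality with \Cref{lemma:install_skip_installable} and a careful reading of the state-transfer and view-transition pseudocode (in particular the interaction between $\mathit{stop\_processing\_until}$ and the processing flag). Once this is established, the counting step is routine, since $n - v.\mathtt{quorum()} = v.\mathtt{plurality()} - 1$ and the correct non-blocked members of $v$ form a subset of $v.\mathtt{members()} \setminus S$.
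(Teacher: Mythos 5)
Your proposal is correct and follows essentially the same route as the paper's proof: identifying the first correct server to transition past $v$, pinning its $\mathit{reconfiguration}.\mathit{source}$ to $v$ via \Cref{lemma:install_skip_installable}, and using the quorum of $\mathtt{STATE\text{-}UPDATE}$ senders who set $\mathit{stop\_processing\_until}$ to (at least) $v$ before time $t$, together with the observation that no correct server supports $\mathit{mot}$ before $t$. The only difference is presentational: you count directly that the correct potential senders lie in $v.\mathtt{members()} \setminus S$ of size $v.\mathtt{plurality()} - 1$, whereas the paper phrases the same fact as a contradiction via quorum--plurality intersection.
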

\begin{proof}
The proof of the lemma is similar to the proof of \Cref{lemma:no_quasi_committed_below_max}.
Note that $\mathcal{V}(t)$ might contain values of the $\mathit{current\_view}.\mathit{view}$ variable of correct servers that halted by time $t$ (note that $\mathit{current\_view}.\mathit{view}$ is not modified upon leaving; see \Cref{lst:reconfiguration_view_transition}).
We prove the lemma by contradiction.
Without loss of generality, suppose that $v.\mathtt{plurality()}$ of correct members of a view $v \subset v_{\mathit{max}}$ send the $\mathtt{SUPPORT}$ message for $\mathit{mot}$ associated with $v$.

First, note that no correct server obtains a vote transaction for $\mathit{mot}$ before time $t$ (follows from the definition ``motion proposed at time $t$''; see \Cref{subsection:balance_based_voting}, paragraph ``Motions'').
Therefore, no correct server sends the $\mathtt{SUPPORT}$ message for $\mathit{mot}$ before $t$ (due to the $\mathtt{support}$ function from \Cref{lst:log_supports_motion} and the check at line~\ref{line:rule_support} of \Cref{lst:voting_module}).
Moreover, since a correct server sends the $\mathtt{SUPPORT}$ message associated with $v$, $v$ is installable at time $\infty$ (due to the check at line~\ref{line:rule_support} of \Cref{lst:voting_module}).

Let $r_{\mathit{first}}$ be the first correct server to set its $\mathit{current\_view}.\mathit{view}$ variable to a view greater than $v$; let that view be $v_{\mathit{first}} \supset v$.
Observe that $r_{\mathit{first}}$ sets its $\mathit{current\_view}.\mathit{view}$ variable to $v_{\mathit{first}}$ by time $t$.
Since $v \subset v_{\mathit{first}}$, $v_{\mathit{first}} \neq \mathit{genesis}$ (by \Cref{lemma:genesis_smallest_valid}).
Therefore, before updating its $\mathit{current\_view}.\mathit{view}$ variable to $v_{\mathit{first}}$, $r_{\mathit{first}}$ has received $\mathtt{STATE-UPDATE}$ messages from the quorum of members of $v$ (the rule at line~\ref{line:prepared_from_source} of \Cref{lst:reconfiguration_state_transfer} becomes active and $\mathit{reconfiguration}.\mathit{source} = v$ at that time at $r_{\mathit{first}}$, by \Cref{lemma:install_skip_installable}).
Each such $\mathtt{STATE-UPDATE}$ message is sent before time $t$ at line~\ref{line:send_state_update} of \Cref{lst:reconfiguration_state_transfer}.
Therefore, before sending the $\mathtt{STATE-UPDATE}$ message (i.e., before time $t$), the $\mathit{stop\_processing\_until}$ variable is equal to (at least) $v$ at each correct server that sends the $\mathtt{STATE-UPDATE}$ message received by $r_{\mathit{first}}$.

Furthermore, by the assumption, $v.\mathtt{plurality()}$ of correct members of $v \subset v_{\mathit{max}}$ send the $\mathtt{SUPPORT}$ message for $\mathit{mot}$ associated with $v$; each such message is sent at some time greater than or equal to $t$.
Therefore, there exists a correct member of $v$ that (1) sets its $\mathit{stop\_processing\_until}$ variable to (at least) $v$ before time $t$, and (2) sends the $\mathtt{SUPPORT}$ message for $\mathit{mot}$ associated with $v$ at time $t$ (or later).
Such behavior is not a correct one (see \cref{lst:reconfiguration_view_transition,lst:voting_module}).
Hence, we reach contradiction and the lemma holds.
\end{proof}

Before we prove the voting safety property, we add the assumption that member of $v_{\mathit{final}}$ store in their $\mathit{log}$ variable every transaction ever stored in the $\mathit{log}$ variable of a correct server.

\begin{assumption} \label{assumption:final}
Consider a correct server $r$ such that $\mathit{log}.\mathit{state}[\mathit{tx}] \neq \bot$ at server $r$, for some transaction $\mathit{tx}$.
Eventually, $\mathit{log}.\mathit{state}[\mathit{tx}] \neq \bot$ at every correct member of $v_{\mathit{final}}$.
\end{assumption}

\Cref{assumption:final} can easily be implemented in the following manner.
If a correct server $r$ leaves, then $r \notin v_{\mathit{final}}.\mathtt{members()}$ (follows from \Cref{lemma:install_message_forever_alive} and the fact that $v_{\mathit{final}}$ is the greatest forever-alive view).
Hence, once a correct server leaves, it ``pushes'' all transactions that are in its $\mathit{log}$ variable until it receives a quorum of confirmations (i.e., the $\mathtt{COMMIT-CONFIRM}$ messages) with respect to a view.
This eventually happens because of the fact that $v_{\mathit{final}}$ exists.
Only once the server has ensured that all of its transactions are ``preserved'', the server halts.

Finally, we are ready to prove the voting safety property.

\begin{theorem} [Voting Safety]
Voting safety is satisfied.
\end{theorem}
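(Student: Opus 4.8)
The plan is to build $\mathit{log}_{\mathit{pass}}$ as the log of a correct server that supports $\mathit{mot}$. Since $\mathit{mot}$ passes, some process holds a voting proof $\sigma_v$ with $\mathtt{verify\_voting}(\mathit{mot},\sigma_v)=\top$, which by \Cref{lst:voting_verification} means at least $v_s.\mathtt{quorum()}$ members of some valid view $v_s$ sent a $\mathtt{SUPPORT}$ message for $\mathit{mot}$ associated with $v_s$. Intersecting those senders with the (at least $v_s.\mathtt{quorum()}$) correct members of $v_s$ leaves at least $v_s.\mathtt{plurality()}$ correct $\mathtt{SUPPORT}$-senders of $v_s$; I will pick a particular one, $r_s$ (its choice inside this set is pinned down in the last paragraph), and let $t_s$ be the time it broadcast the message. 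At $t_s$ we have $\mathit{current\_view}.\mathit{view}=v_s$ and $\mathtt{support}(\mathit{log}_s,\mathit{mot})=\top$ at $r_s$, where $\mathit{log}_s$ is $r_s$'s value of $\mathit{state}.\mathtt{extract\_log()}$ at $t_s$ (line~\ref{line:broadcast_support} of \Cref{lst:voting_module} together with \Cref{lst:log_supports_motion}). Set $\mathit{log}_{\mathit{pass}}:=\mathit{log}_s$. It is admissible because $\mathit{state}.\mathtt{verify()}=\top$ at a correct server at all times (\Cref{lemma:always_verified}), which includes $\mathtt{admissible}(\mathit{state}.\mathtt{extract\_log()})=\top$; and $\mathit{log}_s\subseteq\mathit{log}_{\infty}$ because every $\mathit{tx}$ with $\mathit{state}.\mathit{log}[\mathit{tx}]\neq\bot$ at $r_s$ eventually satisfies the same at every correct member of $v_{\mathit{final}}$ (\Cref{assumption:final}), hence is quasi-committed there (\Cref{lemma:in_log_then_committed}) and then committed (\Cref{lemma:quasi_committed_proof}). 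As $\mathtt{support}(\mathit{log}_s,\mathit{mot})=\top$ already gives $\mathit{log}_s.\mathtt{voted\_for}(\mathit{mot})\ge\mathit{log}_s.\mathtt{total\_money}()/2$, it only remains to show $\mathit{log}_s.\mathtt{total\_money}()\ge\mathit{money}$.

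Next I unpack how the client obtained $\mathit{money}$. Arguing exactly as in the first part of the proof of \Cref{theorem:query_safety}, the client reported the median of $v_q.\mathtt{quorum()}$ $\mathtt{QUERY-RESPONSE}$ messages from a valid view $v_q$, more than half of which came from correct servers, so there is a correct server $r^{*}\in v_q.\mathtt{members()}$ that, while $\mathit{current\_view}.\mathit{view}=v_q$, sent a $\mathtt{QUERY-RESPONSE}$ reporting a value $X'\ge\mathit{money}$ at some time $t^{*}\le t$; moreover $X'=\sum_{\mathit{tx}\in M^{*}}\mathit{tx}.\mathit{amount}$, where $M^{*}$ is the set of minting transactions with $\mathit{quasi\_committed}[\mathit{tx}]=\top$ at $r^{*}$ at $t^{*}$ (lines~\ref{line:take_quasi_committed} and~\ref{line:send_query} of \Cref{lst:transaction_module}). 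Also $t_s\ge t$, since $\mathtt{support}(\mathit{log}_s,\mathit{mot})=\top$ forces a vote transaction for $\mathit{mot}$ into $\mathit{log}_s$ and no correct server holds such a transaction before the proposal time $t$; hence $t^{*}\le t\le t_s$. Finally $v_q\subseteq v_s$: $\mathcal{V}(t)$ is nonempty because it contains $r^{*}$'s current view at $t$, which is $\supseteq v_q$ by monotonicity; let $v_{\mathit{max}}$ be its greatest element; the at least $v_s.\mathtt{plurality()}$ correct $\mathtt{SUPPORT}$-senders of $v_s$ exceed the bound $v_s.\mathtt{plurality()}-1$ of \Cref{lemma:no_support_below_max}, so $v_s\not\subsetneq v_{\mathit{max}}$, giving $v_q\subseteq v_{\mathit{max}}\subseteq v_s$.

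It now suffices to prove $M^{*}\subseteq\mathit{log}_s$, as then $\mathit{log}_s.\mathtt{total\_money}()\ge\sum_{\mathit{tx}\in M^{*}}\mathit{tx}.\mathit{amount}=X'\ge\mathit{money}$. Every $\mathit{tx}\in M^{*}$ is quasi-committed at $r^{*}$ by $t^{*}$ while $r^{*}$'s view is $v_q$, so its smallest quasi-commit view $v_{\mathit{tx}}$ satisfies $v_{\mathit{tx}}\subseteq v_q\subseteq v_s$. I split on two cases. (i) If $v_{\mathit{tx}}\subsetneq v_s$ for every $\mathit{tx}\in M^{*}$, take $r_s$ to be any correct $\mathtt{SUPPORT}$-sender of $v_s$; since its current view at $t_s$ is $v_s\supsetneq v_{\mathit{tx}}$, \Cref{lemma:committed_in_log} gives $\mathit{state}.\mathit{log}[\mathit{tx}]\neq\bot$ at $r_s$ at $t_s$, \ie $\mathit{tx}\in\mathit{log}_s$, for each such $\mathit{tx}$. (ii) Otherwise $v_{\mathit{tx}}=v_s$ for some $\mathit{tx}\in M^{*}$, which forces $v_q=v_s$ (as $v_{\mathit{tx}}\subseteq v_q\subseteq v_s$) and makes that $\mathit{tx}$ quasi-committed in $v_q$ by $t^{*}$; then \Cref{lemma:voting_time} applied to $r^{*}$ at $t^{*}$ yields a set $C_{v_q}$ of at least $v_q.\mathtt{plurality()}$ correct members of $v_q$ whose logs at $t^{*}$ contain all transactions quasi-committed at $r^{*}$ at $t^{*}$, in particular $M^{*}$; since the $\mathtt{SUPPORT}$-senders of $v_s$ number at least $v_s.\mathtt{quorum()}$ and $v_q.\mathtt{plurality()}+v_s.\mathtt{quorum()}>|v_s.\mathtt{members()}|$, I can take $r_s\in C_{v_q}$ that is also a $\mathtt{SUPPORT}$-sender of $v_s$, and then $M^{*}$ lies in $r_s$'s log at $t^{*}$, hence in $\mathit{log}_s$ by \Cref{lemma:log_only_grows} and $t^{*}\le t_s$. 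The main obstacle is case (ii): lining up the smallest quasi-commit view of a minting transaction with $v_s$, correctly invoking \Cref{lemma:voting_time}, and making the quorum-intersection count produce a single server that both carries $M^{*}$ and supports $\mathit{mot}$.
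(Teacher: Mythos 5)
Your proof is correct and takes essentially the same route as the paper's: it identifies a correct responder whose quasi-committed log accounts for at least $\mathit{money}$, uses \Cref{lemma:no_support_below_max} to place the support view at or above that responder's view, and then splits into the same two cases handled via \Cref{lemma:committed_in_log} and \Cref{lemma:voting_time} respectively, finishing with \Cref{assumption:final}, \Cref{lemma:in_log_then_committed} and \Cref{lemma:quasi_committed_proof} to land the log inside $\mathit{log}_{\infty}$. If anything, your quorum-intersection count in case (ii) — intersecting the plurality-sized set from \Cref{lemma:voting_time} with the full quorum of $\mathtt{SUPPORT}$-senders, so that $v.\mathtt{plurality()} + v.\mathtt{quorum()} > n$ — is stated more carefully than the paper's corresponding step.
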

\begin{proof}
Let a motion $\mathit{mot}$ be proposed at time $t$ and let $\mathit{mot}$ pass.
Since the client learns that the amount of money in \sysname is $\mathit{money}$, the client has received a quorum of $\mathtt{QUERY-RESPONSE}$ messages associated with some view $v_{\mathit{query}}$ (at line~\ref{line:query_rule} of \Cref{lst:client}).
Hence, there exists a correct server $r \in v_{\mathit{query}}.\mathtt{members()}$ such that the total amount of money in its ``log of quasi-committed'' transactions is at least $\mathit{money}$ (since quasi-committed transactions are never ``reset''; see \Cref{lst:transaction_module}).
Let $\mathit{log}_r = \{\mathit{tx} \,|\, \mathit{quasi\_committed}[r] = \top \text{ at server } r \text{ at the moment of sending the } \mathtt{QUERY-RESPONSE} \text{ message}\}$; $\mathit{log}_r.\mathtt{total\_money()} \geq \mathit{money}$.
Finally, note that every transaction $\mathit{tx} \in \mathit{log}_r$ is quasi-committed in a view smaller than or equal to $v_{\mathit{query}}$ (by \Cref{lst:transaction_module}).

Let $\mathcal{V}(t) = \{v \,|\, \mathit{current\_view}.\mathit{view} = v \neq \bot \text{ at a correct server at time } t\}$.
Since the client learns the amount of money in \sysname at time $t$, $\mathcal{V}(t) \neq \emptyset$.
Let $v_{\mathit{max}}$ be the greatest view of $\mathcal{V}(t)$; observe that $v_{\mathit{query}} \subseteq v_{\mathit{max}}$.

Since $\mathit{mot}$ passes, there exists a valid (and installable at time $\infty$) view $v_{\mathit{pass}}$ such that (at least) $v_{\mathit{pass}}.\mathtt{plurality()}$ of correct members of $v_{\mathit{pass}}$ send the $\mathtt{SUPPORT}$ message for $\mathit{mot}$ associated with $v_{\mathit{pass}}$ (by \Cref{lst:voting_verification}); let us denote this set of servers by $R_{\mathit{pass}}$.
No server from the $R_{\mathit{pass}}$ sends the $\mathtt{SUPPORT}$ message for $\mathit{mot}$ before time $t$ (follows from the definition of motion being proposed at time $t$ and \Cref{lst:log_supports_motion}).
By \Cref{lemma:no_support_below_max}, $v_{\mathit{max}} \subseteq v_{\mathit{pass}}$.
We separate two cases:
\begin{compactitem}
    \item Let $v_{\mathit{max}} \subset v_{\mathit{pass}}$.
    For each server $r_{\mathit{pass}} \in R_{\mathit{pass}}$, at the moment of updating its current view to $v_{\mathit{pass}}$, $\mathit{state}.\mathit{log}[\mathit{tx}] \neq \bot$, for every transaction $\mathit{tx} \in \mathit{log}_r$ (by \Cref{lemma:committed_in_log}).
    \\Let $\mathit{log}_{r_{\mathit{pass}}} = \{\mathit{tx} \,|\, \mathit{state}.\mathit{log}[\mathit{tx}] \neq \bot \text{ at server } r_{\mathit{pass}} \text{ at the moment of sending the } \mathtt{SUPPORT} \text{ message}\}$.
    By \Cref{lemma:log_only_grows}, $\mathit{log}_r \subseteq \mathit{log}_{r_{\mathit{pass}}}$, for every server $r_{\mathit{pass}} \in R_{\mathit{pass}}$.
    By \Cref{lemma:always_verified}, $\mathit{log}_{r_{\mathit{pass}}}$ is admissible, for every server $r_{\mathit{pass}} \in R_{\mathit{pass}}$.
    Since $\mathit{log}_r \subseteq \mathit{log}_{r_{\mathit{pass}}}$, $\mathit{money} \leq \mathit{log}_r.\mathtt{total\_money()} \leq \mathit{log}_{r_{\mathit{pass}}}.\mathtt{total\_money()}$, for every server $r_{\mathit{pass}} \in R_{\mathit{pass}}$.
    Finally, by \Cref{assumption:final}, eventually $\mathit{state}.\mathit{log}[\mathit{tx}] \neq \bot$ at a correct member of $v_{\mathit{final}}$, which implies that the correct member of $v_{\mathit{final}}$ quasi-commits $\mathit{tx}$ (by \Cref{lemma:in_log_then_committed}) and obtains a commitment proof for $\mathit{tx}$ (by \Cref{lemma:quasi_committed_proof}), for every $\mathit{tx} \in \mathit{log}_{r_{\mathit{pass}}}$ and every server $r_{\mathit{pass}} \in R_{\mathit{pass}}$.
    Hence, $\mathit{log}_{r_{\mathit{pass}}} \subseteq \mathit{log}_{\infty}$, for every server $r_{\mathit{pass}} \in R_{\mathit{pass}}$, which concludes the proof in this case.
    
    \item Let $v_{\mathit{max}} = v_{\mathit{pass}}$.
    If all transactions that belong to $\mathit{log}_r$ are quasi-committed in a view smaller than $v_{\mathit{max}} = v_{\mathit{pass}}$, the theorem holds because of the argument given in the previous case.
    Therefore, we assume the opposite in the rest of the proof.
    
    By \Cref{lemma:voting_time}, there exists a set $R_{\mathit{pass}}$ of correct members of $v_{\mathit{pass}}$, where $R_{\mathit{pass}} \subseteq v_{\mathit{pass}}.\mathtt{members()}$ and $|R_{\mathit{pass}}| \geq v_{\mathit{pass}}.\mathtt{plurality()}$, such that, for every server $r_{\mathit{pass}} \in R_{\mathit{pass}}$, the following holds:
    \begin{compactenum}
        \item $r_{\mathit{pass}}$ is correct, and 
    
        \item $\mathit{log}_r \subseteq \mathit{log}_{r_{\mathit{pass}}}(t)$, where $\mathit{log}_{r_{\mathit{pass}}}(t) = \{\mathit{tx} \,|\, \mathit{state}.\mathit{log}[\mathit{tx}] \neq \bot \text{ at server } r_{\mathit{pass}} \text{ at time } t\}$.
    \end{compactenum}
    Therefore, there exists a server $r^* \in R_{\mathit{pass}}$ that has sent the $\mathtt{SUPPORT}$ message for $\mathit{mot}$ associated with $v_{\mathit{pass}}$ (after time $t$).
    Let $\mathit{log}_{r^*} = \{\mathit{tx} \,|\, \mathit{state}.\mathit{log}[\mathit{tx}] \neq \bot \text{ at server } r^* \text{ at the moment of sending the } \mathtt{SUPPORT} \text{ message}\}$.
    By \Cref{lemma:log_only_grows}, $\mathit{log}_r \subseteq \mathit{log}_{r^*}$.
    By \Cref{lemma:always_verified}, $\mathit{log}_{r^*}$ is admissible.
    Since $\mathit{log}_r \subseteq \mathit{log}_{r^*}$, $\mathit{money} \leq \mathit{log}_r.\mathtt{total\_money()} \leq \mathit{log}_{r^*}.\mathtt{total\_money()}$.
    Finally, by \Cref{assumption:final}, eventually $\mathit{state}.\mathit{log}[\mathit{tx}] \neq \bot$ at a correct member of $v_{\mathit{final}}$, which implies that the correct member of $v_{\mathit{final}}$ quasi-commits $\mathit{tx}$ (by \Cref{lemma:in_log_then_committed}) and obtains a commitment proof for $\mathit{tx}$ (by \Cref{lemma:quasi_committed_proof}), for every $\mathit{tx} \in \mathit{log}_{r^*}$.
    Hence, $\mathit{log}_{r^*} \subseteq \mathit{log}_{\infty}$, which concludes the proof in this case.
\end{compactitem}
Since the proof is verified in both possible cases, the theorem holds.
\end{proof}

Lastly, we prove the voting liveness property.

\begin{theorem} [Voting Liveness]
Voting liveness is satisfied.
\end{theorem}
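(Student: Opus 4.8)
The plan is to show that every correct member of $v_{\mathit{final}}$ (the greatest forever-alive view) eventually broadcasts a $\mathtt{SUPPORT}$ message for $\mathit{mot}$ associated with $v_{\mathit{final}}$, so that some correct member collects a quorum of such messages and, by the rule at line~\ref{line:rule_motion_passes} of \Cref{lst:voting_module}, obtains a voting proof $\sigma_v$ with $\mathtt{verify\_voting}(\mathit{mot}, \sigma_v) = \top$; by definition this means $\mathit{mot}$ passes. By the finality property (\Cref{theorem:finality}), every correct member of $v_{\mathit{final}}$ eventually updates its current view to $v_{\mathit{final}}$, installs it, never updates it again, never leaves, and never stops processing in $v_{\mathit{final}}$; inspecting \Cref{lst:reconfiguration_view_transition} this gives that eventually, at every such server, $\mathit{current\_view}.\mathit{view} = v_{\mathit{final}}$, $\mathit{current\_view}.\mathit{installed} = \top$, and $\mathit{current\_view}.\mathit{processing} = \top$ hold permanently. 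Hence the only remaining guard on the $\mathtt{SUPPORT}$-broadcast rule (line~\ref{line:rule_support} of \Cref{lst:voting_module}) is $\mathtt{support}(\mathit{log}, \mathit{mot}) = \top$.

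The core of the argument is therefore to show that eventually $\mathtt{support}(\mathit{log}, \mathit{mot}) = \top$ holds at every correct member of $v_{\mathit{final}}$, where $\mathit{log} = \mathit{state}.\mathtt{extract\_log}()$. First, at any such server and at all times $\mathit{log}$ is admissible by \Cref{lemma:always_verified}, and every transaction $\mathit{tx}$ with $\mathit{state}.\mathit{log}[\mathit{tx}] \neq \bot$ is eventually quasi-committed at every correct member of $v_{\mathit{final}}$ (\Cref{lemma:in_log_then_committed}) and hence committed (\Cref{lemma:quasi_committed_proof}); thus $\mathit{log} \subseteq \mathit{log}_{\infty}$ at all times. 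Conversely, every committed transaction is eventually in $\mathit{state}.\mathit{log}$ at every correct member of $v_{\mathit{final}}$: a committed transaction is quasi-committed, and letting $v_{\mathit{tx}}$ be the smallest view in which it is quasi-committed (well defined by \Cref{lemma:quasi_committed_installable} and view comparability), either $v_{\mathit{tx}} \subset v_{\mathit{final}}$, so \Cref{lemma:committed_in_log} applies, or $v_{\mathit{tx}} = v_{\mathit{final}}$, in which case some correct member of $v_{\mathit{final}}$ sent a $\mathtt{COMMIT}\text{-}\mathtt{CONFIRM}$ for it associated with $v_{\mathit{final}}$ and so has it in $\mathit{state}.\mathit{log}$, whence \Cref{lemma:same_logs} propagates it to all correct members of $v_{\mathit{final}}$ (this is the same pattern used in the proofs of commitment learning and query safety). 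Applying this to the transactions of $\mathit{log}_{\mathit{pass}} \subseteq \mathit{log}_{\infty}$, there is a time after which $\mathit{log}_{\mathit{pass}} \subseteq \mathit{log}$ at every correct member of $v_{\mathit{final}}$. At such a time $\mathit{log}$ is an admissible log with $\mathit{log}_{\mathit{pass}} \subseteq \mathit{log} \subseteq \mathit{log}_{\infty}$, so the hypothesis of the theorem gives that $\mathit{log}$ supports $\mathit{mot}$, i.e.\ $\mathtt{support}(\mathit{log}, \mathit{mot}) = \top$ (see \Cref{lst:log_supports_motion}).

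Combining the two paragraphs, there is a time after which the rule at line~\ref{line:rule_support} of \Cref{lst:voting_module} is permanently active at every correct member $r$ of $v_{\mathit{final}}$, so each such $r$ broadcasts $[\mathtt{SUPPORT}, \mathit{mot}, v_{\mathit{final}}, \mathit{view\_path}[v_{\mathit{final}}]]$ to $v_{\mathit{final}}.\mathtt{members}()$ (the view-path entry is defined since $r$ installed $v_{\mathit{final}}$, which by construction puts $v_{\mathit{final}}$ into $\mathit{history}$). Since no correct member of $v_{\mathit{final}}$ leaves (finality) and at least $v_{\mathit{final}}.\mathtt{quorum}()$ of them are correct, every correct member of $v_{\mathit{final}}$ eventually receives and accepts these $\mathtt{SUPPORT}$ messages from a quorum of members of $v_{\mathit{final}}$, so the rule at line~\ref{line:rule_motion_passes} of \Cref{lst:voting_module} fires and it obtains a voting proof (\Cref{lst:voting_verification}), i.e.\ $\mathit{mot}$ passes. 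The main obstacle I expect is the middle paragraph: precisely relating the $\mathit{log}$ variable to the set $\mathit{log}_{\infty}$ of committed transactions and arguing that the theorem's hypothesis can be instantiated with the server's actual finite, time-$t$ log (this uses finiteness of $\mathit{log}_{\mathit{pass}}$, consistent with a finite money supply, together with the same-logs propagation lemmas), and the small bookkeeping point that $\mathit{current\_view}.\mathit{processing}$ is eventually $\top$ at members of $v_{\mathit{final}}$, which follows from the finality clause that no correct member of $v_{\mathit{final}}$ stops processing in $v_{\mathit{final}}$.
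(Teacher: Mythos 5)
Your proposal is correct and follows essentially the same route as the paper's proof: show that every transaction of $\mathit{log}_{\mathit{pass}}$ eventually has $\mathit{state}.\mathit{log}[\mathit{tx}] \neq \bot$ at every correct member of $v_{\mathit{final}}$ (via the two-case analysis on the smallest view in which the transaction is quasi-committed, using \Cref{lemma:committed_in_log} and \Cref{lemma:same_logs}), so that the local log becomes an admissible log sandwiched between $\mathit{log}_{\mathit{pass}}$ and $\mathit{log}_{\infty}$, the hypothesis forces $\mathtt{support}(\mathit{log}, \mathit{mot}) = \top$, all correct members broadcast $\mathtt{SUPPORT}$ in $v_{\mathit{final}}$, and finality yields a quorum and hence a voting proof. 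You are in fact slightly more explicit than the paper about the inclusion $\mathit{log} \subseteq \mathit{log}_{\infty}$ and about the $\mathit{installed}$/$\mathit{processing}$ guards, but this is a refinement of the same argument, not a different one.
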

\begin{proof}
Let $\mathit{log}_{\mathit{pass}} \subseteq \mathit{log}_{\infty}$ exist.
Every transaction $\mathit{tx} \in \mathit{log}_{\mathit{greater}}$ is quasi-committed (by the check at line~\ref{line:quasi_committed} of \Cref{lst:transaction_module}), for any admissible log $\mathit{log}_\mathit{greater}$ with $\mathit{log}_{\mathit{pass}} \subseteq \mathit{log}_{\mathit{greater}} \subseteq \mathit{log}_{\infty}$.
Let $v_{\mathit{tx}}$ be the smallest view in which $\mathit{tx} \in \mathit{log}_{\mathit{greater}}$ is quasi-committed, for every $\mathit{tx} \in \mathit{log}_{\mathit{greater}}$.
We distinguish two cases:
\begin{compactitem}
    \item Let $v_{\mathit{tx}} \subset v_{\mathit{final}}$.
    Therefore, $\mathit{state}.\mathit{log}[\mathit{tx}] \neq \bot$ at every correct member of $v_{\mathit{final}}$ (by \Cref{lemma:committed_in_log}).
    
    \item Let $v_{\mathit{tx}} = v_{\mathit{final}}$.
    Hence, there exists a correct member of $v_{\mathit{final}}$ that sends the $\mathtt{COMMIT-CONFIRM}$ message for $\mathit{tx}$ associated with $v_{\mathit{final}}$.
    Thus, $\mathit{state}.\mathit{log}[\mathit{tx}] \neq \bot$ at that server (by line~\ref{line:update_log_state_representation} of \Cref{lst:transaction_module}).
    By \Cref{lemma:same_logs}, $\mathit{state}.\mathit{log}[\mathit{tx}] \neq \bot$ at every correct member of $v_{\mathit{final}}$.
\end{compactitem}
In both cases, $\mathit{state}.\mathit{log}[\mathit{tx}] \neq \bot$ at every correct member of $v_{\mathit{final}}$, for every $\mathit{tx} \in \mathit{log}_{\mathit{greater}}$.

Eventually, every correct member of $v_{\mathit{final}}$ broadcasts the $\mathtt{SUPPORT}$ message for the motion to all members of $v_{\mathit{final}}$ (at line~\ref{line:broadcast_support} of \Cref{lst:voting_module}) since it obtains all transactions from $\mathit{log}_{\mathit{greater}}$ (and any ``greater'' log supports the motion) and all correct members of $v_{\mathit{final}}$ receive this message (by the finality property).
Therefore, the rule at line~\ref{line:rule_motion_passes} of \Cref{lst:voting_module} is eventually active at every correct member of $v_{\mathit{final}}$, which means that the motion passes (at line~\ref{line:motion_passes} of \Cref{lst:voting_module}).
\end{proof}

\end{document}